\newcommand\partitle[1]{\vspace{0.75ex}\mbox{}\noindent{\bf #1.}}
\newtheorem{theorem}{Theorem}[section]
\newtheorem{lemma}[theorem]{Lemma}
\newtheorem{corollary}[theorem]{Corollary}
\newtheorem{proposition}[theorem]{Proposition}
\newtheorem{conjecture}[theorem]{Conjecture}
\newtheorem{example}[theorem]{Example}
\newtheorem{definition}[theorem]{Definition}
\newtheorem{remark}[theorem]{Remark}
\newcommand{\bindcaptchain}{bin\-ding--cap\-tu\-ring chain}
\newcommand{\bindcaptchains}{bin\-ding--cap\-tu\-ring chains}
\newcommand\fig[1]{\includegraphics[scale=0.8]{figs/{{#1}}}}
\newcommand\hcentered[1]{\hfill{#1}\hfill}
\newcommand\vcentered[1]{\raisebox{-0.5\height}{#1}}
\newcommand\centered[1]{\hcentered{\vcentered{#1}}}
\newcommand\lambdacal\blambda
\newcommand\blambda{\ensuremath{\bs{\lambda}}} 
\newcommand{\inflambdacal}{\ensuremath{\bs{\lambda}^{\hspace*{-1pt}\bs{\infty}}}}
\newcommand{\lambdaletreccal}{\ensuremath{\bs{\lambda}_{\sletrec}}}
\newcommand{\lambdaletrecprefixcal}{\ensuremath{\bs{({\lambda}_{\sletrec})}}}
\newcommand{\lambdaprefixcal}{\ensuremath{\bs{(\lambda)}}}
\newcommand{\lambdaprefixposcal}{\ensuremath{\bs{(\lambda)_{\spos}}}}
\newcommand{\inflambdaprefixcal}{\ensuremath{\bs{(\inflambdacal)}}}
\newcommand{\lambdacalculus}{$\lambda$\nb-cal\-cu\-lus}
\newcommand{\lambdaletreccalculus}{\lambdacalculus\ with $\sletrec$}
\newcommand{\lambdaabstraction}{$\lambda$\nb-ab\-strac\-tion}
\newcommand{\lambdaabstractions}{\lambdaabstraction{s}}
\let\oldlambda\lambda
\renewcommand\lambda{\ensuremath\oldlambda}
\newcommand\inflambda{\ensuremath{\lambda^{\hspace*{-1pt}\bs\infty}}}
\newcommand\lambdaletrec{\ensuremath{\lambda_\sletrec}}
\newcommand{\lambdaterm}{$\lambda$\nb-term}
\newcommand{\lambdaterms}{\lambdaterm{s}}
\newcommand{\inflambdaterm}{$\lambda^{\infty}$\nb-term}
\newcommand{\inflambdaterms}{\inflambdaterm{s}}
\newcommand{\lambdaletrecterm}{$\lambda_{\sletrec}$\nb-term}
\newcommand{\lambdaletrecterms}{\lambdaletrecterm{s}}
\newcommand{\lambdaletrecexpressible}{$\lambda_{\sletrec}$\nb-ex\-press\-ible}
\newcommand{\slabs}{\lambda}
\newcommand{\labs}[2]{\slabs{#1}.\hspace*{0.5pt}{#2}}
\newcommand{\lapp}[2]{{#1}\,{#2}}
\newcommand\lappbreak[2]{\begin{array}{l}(#1)\\~~(#2)\end{array}}
\newcommand{\lvars}{{\cal X}}
\newcommand{\llrecvars}{{\cal R}}
\newcommand{\snlvar}{\ensuremath{\mathsf{0}}}
\newcommand{\nlvar}{\snlvar}
\newcommand{\snlvarsucc}{\ensuremath{\mathsf{S}}}
\newcommand{\snllabsfo}{\slabs}
\newcommand{\nllabsfo}{\funap{\slabs}}
\newcommand{\snllappfo}{@}
\newcommand{\nllappfo}{\bfunap{\snllappfo}}
\newcommand{\nlvarfo}{0}
\newcommand{\snlvarsuccfo}{\snlvarsucc}
\newcommand{\nlvarsuccfo}{\funap{\snlvarsuccfo}}
\newcommand{\bnllter}{t}
\newcommand{\bnllteri}{\indap{\bnllter}}
\newcommand{\sletrec}{\ensuremath{\textsf{letrec}}}
\newcommand{\letrec}[2]{\sletrec\hspace*{2pt}{#1}\hspace*{2pt}\textsf{in}\hspace*{2pt}{#2}}
\newcommand{\letrecin}[2]{\sletrec\hspace*{2pt}{#1}\hspace*{2pt}\textsf{in}\hspace*{2pt}{#2}}
\newcommand{\abindgroup}{B}
\newcommand{\bbindgroup}{C}
\newcommand{\cbindgroup}{D}
\newcommand{\abindgroupi}{\indap{\abindgroup}}
\newcommand{\sletCRS}[1]{\ensuremath{\textsf{let}_{#1}}}
\newcommand{\srecinCRS}[1]{\ensuremath{\textsf{rec-in}_{#1}}}
\newcommand{\letrecCRS}[3]{\funap{\sletCRS{#1}}{\absCRS{#2}{\funap{\srecinCRS{#1}}{#3}}}}
\newcommand{\letrecemptyCRS}[1]{\funap{\sletCRS{0}}{{\funap{\srecinCRS{0}}{#1}}}}
\newcommand{\letrecCRSbare}[3]{\funap{\sletCRS{#1}}{\absCRS{#2}{{\srecinCRS{#1}}{#3}}}}
\newcommand{\stxtletrec}{{\sf letrec}}
\newcommand{\stxtlambdaletreccal}{{\lambda_{\text{\normalfont letrec}}}}
\newcommand{\allter}{L}
\newcommand{\bllter}{P}
\newcommand{\cllter}{Q}
\newcommand{\allteri}{\indap{\allter}}
\newcommand{\bllteri}{\indap{\bllter}}
\newcommand{\alltertilde}{\tilde{L}}
\newcommand{\alltertildei}{\indap{\alltertilde}}
\newcommand{\arecvar}{f}
\newcommand{\brecvar}{g}
\newcommand{\crecvar}{h}
\newcommand{\arecvari}{\indap{\arecvar}}
\newcommand{\sareachrel}{\rightarrowtail}
\newcommand{\areachrel}{\mathrel{\sareachrel}}
\def\a#1{\reflectbox{$\m@th#1{\lambda}$}}
\newcommand{\sflabs}[1]{(\slabs{#1})}
\newcommand{\flabs}[2]{\sflabs{#1}\hspace*{0.5pt}{#2}}
\newcommand{\femptylabs}[1]{()\hspace*{0.5pt}{#1}}
\newcommand{\flabspos}[4]{\bpap{(\slabs{#1})}{#2}{#3}\hspace*{0.5pt}{#4}}
\newcommand{\femptylabspos}[3]{\bpap{()}{#1}{#2}\hspace*{0.5pt}{#3}}
\newcommand{\sflabsposCRS}[2]{{\mathsf{pre}}^{#2}_{\enumsequence{#1}}}
\newcommand{\flabsposCRS}[4]{\funap{\sflabsposCRS{#2}{#3}}{\absCRS{#1}{#4}}}
\newcommand{\avar}{x}
\newcommand{\bvar}{y}
\newcommand{\cvar}{z}
\newcommand{\dvar}{u}
\newcommand{\fvar}{w}
\newcommand{\avari}[1]{\avar_{#1}}
\newcommand{\bvari}[1]{\bvar_{#1}}
\newcommand{\avarname}{\mathsf{x}}
\newcommand{\avarnamei}{\indap{\avarname}}
\newcommand{\ater}{M}
\newcommand{\bter}{N}
\newcommand{\cter}{O}
\newcommand{\ateri}[1]{\ater_{#1}}
\newcommand{\bteri}[1]{\bter_{#1}}
\newcommand{\aiter}{M}
\newcommand{\biter}{N}
\newcommand{\citer}{P}
\newcommand{\eiter}{T}
\newcommand{\aiteri}[1]{\aiter_{#1}}
\newcommand{\biteri}[1]{\biter_{#1}}
\newcommand\sacxt{C}
\newcommand\acxt[1]{\sacxt[#1]}
\newcommand{\aflpter}{M}
\newcommand{\sann}{{:}}
\newcommand{\annexp}[2]{{{#1}\hspace*{2pt}\sann\hspace*{2pt}{#2}}}
\newcommand{\flabsann}[3]{(\slabs{#1})\hspace*{3pt}\annexp{#2}{#3}}
\newcommand{\femptylabsann}[2]{()\hspace*{2pt}\annexp{#1}{#2}}
\newcommand{\equat}[2]{{{#1}={#2}}}
\newcommand{\femptylabsequat}[2]{\equat{\femptylabs{#1}}{\femptylabs{#2}}}
\newcommand{\labflabs}[3]{\annexp{#1}{\flabs{#2}{#3}}}
\newcommand{\labfemptylabs}[2]{\annexp{#1}{\femptylabs{#2}}}
\newcommand{\alabel}{l}
\newcommand{\aannvar}{u}
\newcommand{\sTer}{{\mit Ter}}
\newcommand{\Ter}{\funap{\sTer}}
\newcommand{\siTer}{{\sTer^{\infty}}}
\newcommand{\iTer}{\funap{\siTer}}
\newcommand{\Terbot}{\funap{\sTer_\bot}}
\newcommand{\sTerlab}{\widehat{\textit{Ter}}}
\newcommand{\Terlab}{\funap{\sTerlab}}
\newcommand{\asig}{\Sigma}
\newcommand{\sametavar}{X}
\newcommand{\scmetavar}{Z}
\newcommand{\sametavari}{\indap{X}}
\newcommand{\sbmetavari}{\indap{Y}}
\newcommand{\scmetavari}{\indap{Z}}
\newcommand{\cmetavar}{\funap{\scmetavar}}
\newcommand{\ametavari}[1]{\funap{\sametavari{#1}}}
\newcommand{\bmetavari}[1]{\funap{\sbmetavari{#1}}}
\newcommand{\cmetavari}[1]{\funap{\scmetavari{#1}}}
\newcommand{\aconst}{c}
\newcommand{\aconstname}{\mathsf{c}}
\newcommand{\aconstnamei}{\subap{\aconstname}}
\newcommand{\sbt}{\,\begin{picture}(-1,1)(-1,-2)\circle*{1.5}\end{picture}\ }
\newcommand{\scirc}{\,\begin{picture}(-1,1)(-1,-2)\circle{3.5}\end{picture}\ }
\newcommand\TL{{}}
\newcommand\TR{{\sbt}}
\newcommand\BL{{\scirc}}
\newcommand\BR{{\odot}}
\newcommand{\absCRS}[2]{[{#1}]\hspace*{1pt}{#2}}
\newcommand{\fabsCRS}[2]{\{{#1}\}\hspace*{1pt}{#2}}
\newcommand{\CRS}{\ensuremath{\text{CRS}}}
\newcommand{\CRSs}{\ensuremath{\text{CRSs}}}
\newcommand{\iCRS}{\ensuremath{\text{iCRS}}}
\newcommand{\arulename}{\rho}
\newcommand{\brulename}{\sigma}
\newcommand{\rules}{R}
\newcommand{\aCRS}{{\cal C}}
\newcommand{\acxthole}{\Box}
\newcommand{\subst}[3]{{#1}[{#2}\defdby{#3}]}
\newcommand{\slabsCRS}{\mathsf{abs}}
\newcommand{\labsCRS}[2]{\funap{\slabsCRS}{\absCRS{#1}{#2}}}
\newcommand{\slappCRS}{\mathsf{app}}
\newcommand{\lappCRS}[2]{\funap{\slappCRS}{#1,#2}}
\newcommand{\sflabsCRS}[1]{\mathsf{pre}_{#1}}
\newcommand{\flabsCRS}[3]{\funap{\sflabsCRS{#1}}{\absCRS{#2}{#3}}}
\newcommand{\femptylabsCRS}[1]{\funap{\sflabsCRS{0}}{#1}}
\newcommand{\llunfCRS}{\textit{\textbf R}_{\hspace*{-0.5pt}\bs{\sunf}}}
\newcommand{\llunfARS}{\textit{R}_{\hspace*{-0.5pt}\sunf}}
\newcommand{\llunfredCRS}{R_{\subrule\sunf{\text{red}}}}
\newcommand\sRegCRS{\textit{\textbf{Reg}}}
\newcommand{\RegCRS}{\sRegCRS}
\newcommand{\stRegCRS}{\sRegCRS^{\bs{+}}}
\newcommand{\RegzeroCRS}{\sRegCRS^{\bs{-}}}
\newcommand{\RegposCRS}{\sRegCRS_{\textit{\textbf{pos}}}}
\newcommand{\stRegposCRS}{\sRegCRS_{\textit{\textbf{pos}}}^{\bs{+}}}
\newcommand{\RegposzeroCRS}{\sRegCRS_{\textit{\textbf{pos}}}^{\bs{-}}}
\newcommand{\RegARS}{\textit{Reg}}
\newcommand{\stRegARS}{\textit{Reg}^{+}}
\newcommand{\RegzeroARS}{\textit{Reg}^{-}}
\newcommand{\stRegARSlab}{\widehat{\stRegARS}} 
\newcommand{\RegposARS}{\textit{Reg}_{\textit{pos}}}
\newcommand{\stRegposARS}{\textit{Reg}_{\textit{pos}}^+}
\newcommand{\RegposzeroARS}{\textit{Reg}_{\textit{pos}}^-}
\newcommand{\stParseCRS}{\ensuremath{\textit{\textbf{Parse}}^{\bs{+}}}}
\newcommand{\stParseUnfCRS}{\ensuremath{\textit{\textbf{Parse}}_{\sunf}^{\bs{+}}}}
\newcommand{\ParseCRS}{\ensuremath{\textit{\textbf{Parse}}}}
\newcommand{\RegletrecCRS}{\sRegCRS_{\sletrec}}
\newcommand{\stRegletrecCRS}{\sRegCRS^{+}_{\sletrec}}
\newcommand{\RegletrecARS}{\textit{Reg\/}_{\sletrec}}
\newcommand{\stRegletrecARS}{\textit{Reg\/}^{+}_{\sletrec}}
\newcommand{\siglcCRS}{\asig_{\lambda}}
\newcommand{\sigllcCRS}{\asig_{\lambda_{\sletrec}}}
\newcommand{\siglpcCRS}{\asig_{(\lambda)}}
\newcommand{\siglpposcCRS}{\asig_{(\lambda)_{\spos}}}
\newcommand{\sigllpcCRS}{\asig_{(\lambda_\sletrec)}}
\newcommand{\siglpcparseCRS}{\ensuremath{\asig_{(\lambda),\sparse}}}
\newcommand{\ARS}{ARS}
\newcommand{\ARSs}{ARSs}
\newcommand{\InducedSubARS}[1]{\ensuremath{(#1\,\smred)}}
\newcommand{\InducedSubARSmred}[2]{\ensuremath{(#1\hspace*{0.5pt}#2)}}
\newcommand{\objects}{A}
\newcommand{\steps}{\Phi}
\newcommand{\ssrc}{\textsf{src}}
\newcommand{\stgt}{\textsf{tgt}}
\newcommand{\aARS}{{\cal A}}
\newcommand{\aobj}{a}
\newcommand{\bobj}{a'}
\newcommand{\astep}{\phi}
\newcommand\wDepth[2]{{#1}_{#2}}
\newcommand{\objectsi}{\indap{A}}
\newcommand{\stepsi}{\indap{\steps}}
\newcommand{\stepslabon}[1]{\indap{\stepslab}{\text{on-}{#1}}}
\newcommand{\stepslabnoton}[1]{\indap{\stepslab}{\text{not-on-}{#1}}}
\newcommand{\stepslab}{\widehat{\steps}}
\newcommand{\ssrclab}{\widehat{\ssrc}}
\newcommand{\stgtlab}{\widehat{\stgt}}
\newcommand{\ssrci}[1]{\textsf{src}_{#1}}
\newcommand{\stgti}[1]{\textsf{tgt}_{#1}}
\newcommand{\srci}[1]{\funap{\ssrci{#1}}}
\newcommand{\tgti}[1]{\funap{\stgti{#1}}}
\newcommand{\aobji}{\indap{a}}
\newcommand{\bobji}[1]{a'_{#1}}
\newcommand{\astepi}{\indap{\astep}}
\newcommand{\soutgoing}{\text{out}}
\newcommand{\sincoming}{\text{in}}
\newcommand{\soutgoingst}[1]{{#1}_{\soutgoing}}
\newcommand{\sincomingst}[1]{{#1}_{\sincoming}}
\newcommand{\outgoingst}[1]{\funap{\soutgoingst{#1}}}
\newcommand{\incomingst}[1]{\funap{\sincomingst{#1}}}
\newcommand{\aARSi}{\indap{{\cal A}}}
\newcommand{\alabelling}{\mathbb{L}}
\newcommand{\aARSbisim}{\mathbb{B}}
\newcommand{\arewlabelling}{\mathfrak{L}}
\newcommand{\ainitlabelling}{\mathfrak{l}}
\newcommand{\aocc}{o}
\newcommand{\apath}{\pi}
\newcommand{\sgST}{{\mit ST}}
\newcommand{\sgSTstrat}[1]{\indap{\sgST}{#1}}
\newcommand{\gSTstrat}[1]{\funap{\sgSTstrat{#1}}}
\newcommand{\sgSTreg}{{\mit ST}}
\newcommand{\sgSTregstrat}[1]{\indap{\sgSTreg}{#1}}
\newcommand{\gSTregstrat}[1]{\funap{\sgSTregstrat{#1}}}
\newcommand{\sgSTstregstrat}[1]{\bpap{\sgST}{#1}{+}}
\newcommand{\gSTstregstrat}[1]{\funap{\sgSTstregstrat{#1}}}
\newcommand{\sgSTregpos}{{\mit ST}^{\textit{pos}}}
\newcommand{\sgSTregposstrat}[1]{\indap{\sgSTregpos}{#1}}
\newcommand{\gSTregposstrat}[1]{\funap{\sgSTregposstrat{#1}}}
\newcommand{\sgSTstregpos}{\sgST^{+,\textit{pos}}}
\newcommand{\sgSTstregposstrat}[1]{\indap{\sgSTstregpos}{#1}}
\newcommand{\gSTstregposstrat}[1]{\funap{\sgSTstregposstrat{#1}}}
\newcommand{\sdroppos}{\textit{drop}^{\text{pos}}}
\newcommand{\droppos}{\funap{\sdroppos}}
\newcommand{\arewseq}{\tau}
\newcommand{\brewseq}{\xi}
\newcommand{\crewseq}{\pi}
\newcommand{\Checkreg}{\Check}
\newcommand{\Checkstreg}{\check}
\newcommand{\sred}{{\to}}
\newcommand{\red}{\mathrel{\to}}
\newcommand{\seqred}{\to^{=}}
\newcommand{\smred}{{\twoheadrightarrow}}
\newcommand{\mred}{\mathrel{\smred}}
\newcommand{\smorestepred}{\supap{\sred}{+}}
\newcommand\sconvred{{\leftarrow}}
\newcommand{\sconvmred}{{\twoheadleftarrow}}
\newcommand{\snfred}{\to^{\scriptstyle !}} 
\newcommand{\sredp}[1]{{\subap{\to}{#1}}}
\newcommand{\smredp}[1]{{\twoheadrightarrow^{#1}}}
\newcommand{\sredb}[1]{{\subap{\sred}{#1}}}
\newcommand{\redb}[1]{\mathrel{\sredb{#1}}}
\newcommand{\smredb}[1]{{\smred_{#1}}}
\newcommand{\sconvmredb}[1]{{\sconvmred_{#1}}}
\newcommand{\sparredp}[1]{{\subap{\hspace{1ex}||\hspace{-2.8ex}\longrightarrow}{#1}}}
\newcommand{\sreg}{\text{reg}}
\newcommand{\sstreg}{\text{reg}^+}
\newcommand{\sregzero}{\text{reg}^-}
\newcommand{\srred}{\subap{\sred}{r}}
\newcommand{\sregred}{\subap{\sred}{\reg}}
\newcommand{\sstregred}{\subap{\sred}{\streg}}
\newcommand{\sregzerored}{\subap{\sred}{\sregzero}}
\newcommand{\sregmred}{\subap{\smred}{\reg}}
\newcommand{\sstregmred}{\subap{\smred}{\streg}}
\newcommand{\sregzeromred}{\subap{\smred}{\sregzero}}
\newcommand{\sregzeroeqred}{\bpap{\sred}{\sregzero}{=}}
\newcommand{\rred}{\mathrel{\srred}}  
\newcommand{\regred}{\mathrel{\sregred}}  
\newcommand{\stregred}{\mathrel{\sstregred}}
\newcommand{\regmred}{\mathrel{\sregmred}}  
\newcommand{\stregmred}{\mathrel{\sstregmred}} 
\newcommand{\regzeromred}{\mathrel{\sregzeromred}}
\newcommand{\reg}{\ensuremath{\text{\normalfont reg}}}
\newcommand{\streg}{\ensuremath{\text{\normalfont reg}^+}}
\newcommand\unfold\bigtriangledown
\newcommand{\sunfoldred}{\indap{\sred}{\unfold}}
\newcommand{\unfoldred}{\mathrel{\sunfoldred}}
\newcommand{\sunfoldmred}{\indap{\smred}{\unfold}}
\newcommand{\sunfolddepthred}[1]{{\sred}^{#1}_{\unfold}}
\newcommand{\sunfoldconvred}{\indap{\sconvred}{\unfold}}
\newcommand{\sunfolddepthconvred}[1]{\sconvred^{#1}_{\unfold}}
\newcommand{\sunfoldinfred}{\indap{\sinfred}{\unfold}}
\newcommand{\unfoldinfred}{\mathrel{\sunfoldinfred}}
\newcommand{\sunfoldconvinfred}{\indap{\sconvinfred}{\unfold}}
\newcommand{\unfoldconvinfred}{\mathrel{\sunfoldconvinfred}}
\newcommand{\sunfoldomegared}{\indap{\somegared}{\unfold}}
\newcommand{\unfoldomegared}{\mathrel{\sunfoldomegared}}
\newcommand{\sunfoldconvomegared}{\indap{\sconvomegared}{\unfold}}
\newcommand{\unfoldconvomegared}{\mathrel{\sunfoldconvomegared}}
\newcommand{\sunfoldomeganfred}{\indap{\someganfred}{\unfold}}
\newcommand{\sunfoldconvomeganfred}{\indap{\sconvomeganfred}{\unfold}}
\newcommand{\sunfoldbotred}{\indap{\sred}{\bot}}
\newcommand{\sunfoldbotomegared}{\indap{\somegared}{\bot}}
\newcommand{\unfoldbotomegared}{\mathrel{\sunfoldbotomegared}}
\newcommand{\sunf}{\bigtriangledown\hspace{-0.3ex}}
\newcommand{\sunflabs}{\sunf.\slabs}
\newcommand{\sunflapp}{\sunf.@}
\newcommand{\sunflabsred}{\subap{\sred}{\sunflabs}}
\newcommand{\unflabsred}{\mathrel{\sunflabsred}}
\newcommand{\sunflappred}{\subap{\sred}\sunflapp}
\newcommand{\sunflabsstratred}[1]{\subap{\sred}{\sunflabs.{#1}}}
\newcommand{\sunflappstratred}[1]{\subap{\sred}{\sunflapp.{#1}}}
\newcommand\snil{\text{nil}}
\newcommand{\sunfnil}{\sunf.\snil}
\newcommand{\sunfnilred}{\subap{\sred}{\sunfnil}}
\newcommand{\sunfnilstratred}[1]{\subap{\sred}{\sunfnil.{#1}}}
\newcommand{\unfemptyred}{\mathrel{\unfemptyred}}
\newcommand\sfree{\text{free}}
\newcommand{\unffreered}{\mathrel{\unffreered}}
\newcommand\rec{\text{rec}}
\newcommand\srecsubst{\text{rec}}
\newcommand{\sunfrec}{\sunf.\srecsubst}
\newcommand{\sunfrecred}{\subap{\sred}{\sunfrec}}
\newcommand{\unfrecred}{\mathrel{\sunfrecred}}
\newcommand{\sunfrecstratred}[1]{\subap{\sred}{\sunfrec.{#1}}}
\newcommand\smergeletrec{\text{letrec}}
\newcommand{\sunfletrec}{\sunf.\text{\smergeletrec}}
\newcommand{\sunfletrecred}{\subap{\sred}{\sunfletrec}}
\newcommand{\unfletrecred}{\mathrel{\sunfletrecred}}
\newcommand{\sunfletrecstratred}[1]{\subap{\sred}{\sunfletrec.{#1}}}
\newcommand{\unfletrecstratred}[1]{\mathrel{\unfletrecstratred{#1}}}
\newcommand{\sreduce}{\text{red}}
\newcommand{\sunfreduce}{\sunf.\sreduce}
\newcommand{\sunfreducered}{\subap{\sred}{\sunfreduce}}
\newcommand{\unfreducered}{\mathrel{\sunfreducered}}
\newcommand{\sunfreducestratred}[1]{\subap{\sred}{\sunfreduce.{#1}}}
\newcommand{\unfreducestratred}[1]{\mathrel{\unfreducestratred{#1}}}
\newcommand{\sUnf}{{\cal U}}  
\newcommand{\Unf}{\funap{\sUnf}}
\newcommand{\spUnf}{{\cal U'}}  
\newcommand{\scompressregnf}{{\downarrow_{\scompressreg}}}
\newcommand{\compressregnf}[1]{{#1}\scompressregnf}
\newcommand{\scompressregnfred}{{\snfred_{\scompressreg}}}
\newcommand{\compressregnfred}{\mathrel{\scompressregnfred}}
\newcommand{\scompressstregnf}{{\downarrow_{\scompressstreg}}}
\newcommand{\compressstregnf}[1]{{#1}\scompressstregnf}
\newcommand{\scompressstregnfred}{{\snfred_{\scompressstreg}}}
\newcommand{\compressstregnfred}{\mathrel{\scompressstregnfred}}
\newcommand\thsp{-1.785ex}
\newcommand\threeheadrightarrow{\twoheadrightarrow\hspace*\thsp\twoheadrightarrow}
\newcommand\threeheadleftarrow{\twoheadleftarrow\hspace*\thsp\twoheadleftarrow}
\newcommand{\sinfred}{\threeheadrightarrow}
\newcommand{\infredb}[1]{\mathrel{\sinfred_{#1}}}
\newcommand{\sconvinfred}{\threeheadleftarrow}
\newcommand{\somegared}{\sred^{\omega}}
\newcommand{\sconvomegared}{\sconvred^{\omega}}
\newcommand{\someganfred}{\sred^{!\,\omega}}
\newcommand{\sconvomeganfred}{\sconvred^{!\,\omega}}
\newcommand{\astrat}{\mathbb{S}}
\newcommand{\astratplus}{\mathbb{S}^+}
\newcommand{\astrati}{\indap{\astrat}}
\newcommand{\sstratred}{\indap{\sred}}
\newcommand{\stratred}[1]{\mathrel{\sstratred{#1}}}
\newcommand{\sstratmred}{\indap{\smred}}
\newcommand{\stratmred}[1]{\mathrel{\sstratmred{#1}}}
\newcommand{\slabsdecomp}{\slabs}
\newcommand{\slabsdecompred}{{\subap{\sred}{\slabsdecomp}}}
\newcommand{\labsdecompred}{\mathrel{\slabsdecompred}}
\newcommand{\slabsdecompstratred}[1]{{\subap{\sred}{\slabsdecomp,{#1}}}}
\newcommand{\labsdecompstratred}[1]{\mathrel{\slabsdecompstratred{#1}}}
\newcommand{\slappdecomp}{@}
\newcommand{\slappdecompi}{\subap{@}}
\newcommand{\slappdecompired}[1]{{\subap{\sred}{\slappdecompi{#1}}}}
\newcommand{\lappdecompired}[1]{\mathrel{\slappdecompired{#1}}}
\newcommand{\slappdecompistratred}[2]{{\subap{\sred}{\slappdecompi{#1},{#2}}}}
\newcommand{\lappdecompistratred}[2]{\mathrel{\slappdecompistratred{#1}{#2}}}
\newcommand{\slappdecompiregred}[1]{\pbap{\sred}{\slappdecompi{#1}}{\reg}}
\newcommand{\scompress}{\text{\normalfont del}}
\newcommand{\scompressreg}{\scompress}
\newcommand{\scompressregred}{{\subap{\sred}{\scompress}}}
\newcommand{\compressregred}{\mathrel{\scompressregred}}
\newcommand{\scompressregstratred}[1]{{\subap{\sred}{\scompress,{#1}}}}
\newcommand{\scompressregconvred}{{\subap{\sconvred}{\scompress}}}
\newcommand{\scompressregconveqred}{{\bpap{\sconvred}{\scompress}{=}}}
\newcommand{\scompressregmred}{{\subap{\smred}{\scompress}}}
\newcommand{\compressregmred}{\mathrel{\scompressregmred}}
\newcommand{\scompressregconvmred}{{\subap{\sconvmred}{\scompress}}}
\newcommand{\scompressstreg}{\snlvarsucc} 
\newcommand{\scompressstregred}{{\subap{\sred}{\snlvarsucc}}}
\newcommand{\compressstregred}{\mathrel{\scompressstregred}}
\newcommand{\scompressstregstratred}[1]{{\subap{\sred}{\snlvarsucc,{#1}}}}
\newcommand{\scompressstregconvred}{{\subap{\sconvred}{\snlvarsucc}}}
\newcommand{\scompressstregeqred}{{\bpap{\sred}{\snlvarsucc}{=}}}
\newcommand{\scompressstregmred}{{\subap{\smred}{\snlvarsucc}}}
\newcommand{\compressstregmred}{\mathrel{\scompressstregmred}}
\newcommand{\constred}{\mathrel{\constred}}
\newcommand{\srule}{\varrho}
\newcommand{\srulep}{\supap{\srule}}
\newcommand{\srulebp}{\bpap{\srule}}
\newcommand\srulepos{\varrho_{pos}}
\newcommand\rulepos[1]{\srulepos^{#1}}
\newcommand\subrule[2]{{#1}.{#2}}
\newcommand{\lTG}{$\lambda$\nb-transition graph}
\newcommand{\agraph}{G}
\newcommand{\vertices}{V}
\newcommand{\edges}{E}
\newcommand{\avert}{v}
\newcommand{\bvert}{w}
\newcommand{\strans}{\rightarrowtail}
\newcommand{\scope}{scope}
\newcommand{\extscope}{\scope$^+$}
\newcommand{\extscopes}{\scope{s}$^+$}
\newcommand{\scopedelimiting}{\scope\nb-de\-li\-mi\-ting}
\newcommand{\Scopedelimiting}{Scope\nb-de\-li\-mi\-ting}
\newcommand{\extscopedelimiting}{\extscope-delimiting}
\newcommand{\eager}{\ensuremath{\text{\normalfont eag}}}
\newcommand{\lazy}{\ensuremath{\text{\normalfont lazy}}}
\newcommand{\ascdelstratreg}{\astrat}
\newcommand{\ascdelstratstreg}{\astrat^+}
\newcommand{\eagscdelstrat}{\subap{\astrat}{\eager}}
\newcommand{\lazyscdelstrat}{\subap{\astrat}{\lazy}}
\newcommand{\eagscdelstratreg}{\eagscdelstrat}              
\newcommand{\eagscdelstratstreg}{\bpap{\astrat}{\eager}{+}} 
\newcommand{\lazyscdelstratreg}{\lazyscdelstrat}            
\newcommand{\lazyscdelstratstreg}{\bpap{\astrat}{\lazy}{+}} 
\newcommand{\ainst}{\iota}
\newcommand{\binst}{\kappa}
\newcommand{\addrule}[2]{{#1}{+}{#2}}
\newcommand\Terinflambda{\ensuremath{\normalfont\textbf{T}^\infty_{\lambdaprefixcal}}}
\newcommand\Terlambdaletrec{\ensuremath{\normalfont\textbf{T}_{\lambdaletrecprefixcal}}}
\newcommand{\Reg}{\ensuremath{\normalfont\textbf{Reg}}}
\newcommand{\stReg}{\ensuremath{\normalfont\textbf{Reg}^{\bs{+}}}}
\newcommand{\stRegzero}{\ensuremath{\textbf{Reg}_{\bs{0}}^{\bs{+}}}}
\newcommand{\annstRegzero}{\ensuremath{\normalfont\textbf{ann-Reg}_{\bs{0}}^{\bs{+}}}}
\newcommand{\stRegeq}{\ensuremath{\normalfont\textbf{Reg}^{\bs{+}}_{\bs{=}}}}
\newcommand{\stReglab}{\ensuremath{\normalfont\widehat{\textbf{Reg}^{\bs{+}}}}}
\newcommand{\Regletrec}{\ensuremath{\normalfont\textbf{Reg}_{\sletrec}}}
\newcommand{\stRegletrec}{\ensuremath{\normalfont\textbf{Reg}^{\bs{+}}_{\sletrec}}}
\newcommand{\ugRegletrec}{\ensuremath{\normalfont\textbf{ug-Reg}_{\sletrec}}}
\newcommand{\ugstRegletrec}{\ensuremath{\normalfont\textbf{ug-Reg}^{\bs{+}}_{\sletrec}}}
\newcommand{\annstRegletrec}{\ensuremath{\normalfont\textbf{ann-Reg}^{\bs{+}}_{\sletrec}}}
\newcommand{\Vacreg}{\ensuremath{\text{\normalfont del}}} 
\newcommand{\Vacstreg}{\ensuremath{\snlvarsucc}} 
\newcommand{\annVacstreg}{\ensuremath{\snlvarsucc}} 
\newcommand\AlphaInfPreterm[1]{\ensuremath{\funap{{\normalfont\it A}^{\bs{\infty}}}{#1}}}
\newcommand{\Alpha}{\ensuremath{{\normalfont\bf A}^{\bs{\hspace{-1.5px}\infty}}}}
\newcommand{\Alphasig}{\funap{\Alpha}}
\newcommand{\EqTer}{\ensuremath{{\normalfont\bf EQ}^{\bs{\infty}}}}
\newcommand{\AlphaPreTer}{\ensuremath{{EQ}_{\alpha}^{\infty}}}
\newcommand{\salphaequiv}{{\equiv_{\alpha}}}
\newcommand{\alphaequiv}{\mathrel{\salphaequiv}}
\newcommand{\sFIX}{\text{FIX}}
\newcommand{\FIX}{(\ensuremath\sFIX)}
\newcommand{\sFIXletrec}{\ensuremath{\sFIX_{\sletrec}}}
\newcommand{\FIXletrec}{(\sFIXletrec)}
\newcommand{\sFIXletrecmin}{\ensuremath{\sFIX^{-}_{\sletrec}}}
\newcommand{\FIXletrecmin}{(\sFIXletrecmin)}
\newcommand{\labscomp}{\slabs}
\newcommand{\lappcomp}{@}
\newcommand{\bvarax}{\nlvar}
\newcommand{\sderivable}{\vdash}
\newcommand{\derivablein}[2]{\sderivable_{#1}\hspace*{1.5pt}{#2}}
\newcommand{\sinfderivable}{\mbox{}^\infty\hspace*{-3pt}\vdash}
\newcommand{\sinfderivablein}[1]{\sinfderivable_{#1}}
\newcommand{\infderivablein}[2]{\sinfderivablein{#1}\hspace*{1.5pt}{#2}}
\newcommand{\sDeriv}{{\cal D}}
\newcommand{\Deriv}{\sDeriv}
\newcommand{\Derivi}[1]{\sDeriv_{#1}}
\newcommand{\Derivlab}{\supap{\Deriv}{(\text{\normalfont lb})}}
\newcommand{\Derivann}{\hat{\Deriv}}
\newcommand{\infDeriv}{\Deriv^{\infty}}
\newcommand{\Derivanni}{\subap{\hat{\Deriv}}}
\newcommand{\depth}[1]{\left|{#1}\right|}
\newcommand{\amarker}{u}
\newcommand{\bmarker}{v}
\newcommand{\cmarker}{w}
\newcommand{\amarkeri}{\indap{\amarker}}
\newcommand{\funin}{\mathrel{:}}
\newcommand{\funap}[2]{{#1}({#2})}
\newcommand{\bfunap}[3]{{#1}({#2},\hspace*{0.5pt}{#3})}
\newcommand{\indap}[2]{#1_{#2}}
\newcommand{\sdefdby}{{:=}}
\newcommand{\defdby}{\mathrel{\sdefdby}}
\newcommand{\length}[1]{\left|{#1}\right|}
\newcommand{\nb}{\nobreakdash}
\newcommand{\nbd}{\nb-}
\newcommand{\bs}{\boldsymbol}
\newcommand\dom[1]{\funap{\textit{dom}}{#1}}
\newcommand{\ran}[1]{\funap{\textit{ran}}{#1}}
\newcommand\alignbreak{\displaybreak[0]\\}
\newcommand{\srestrictto}[2]{{#1}\!\mid_{#2}}
\newcommand{\restrictto}[2]{\funap{\srestrictto}}
\newcommand{\converse}[1]{{#1}^{\smile}}
\newcommand{\niks}{}
\newcommand{\sidon}[1]{\text{\normalfont id}_{#1}}
\newcommand{\slogand}{{\wedge}}
\newcommand{\logand}{\mathrel{\slogand}}
\newcommand{\slogor}{{\vee}}
\newcommand{\logor}{\mathrel{\slogor}}
\newcommand{\ssbinrelcomp}{\cdot}
\newcommand{\sbinrelcomp}[2]{{#1}\mathrel{\ssbinrelcomp}{#2}}
\newcommand{\binrelcomp}[2]{\mathrel{\sbinrelcomp{#1}{#2}}}
\newcommand{\subap}[2]{#1_{#2}}
\newcommand{\supap}[2]{#1^{#2}}
\newcommand{\bpap}[3]{{#1}_{#2}^{#3}}
\newcommand{\pbap}[3]{{#1}^{#2}_{#3}}
\newcommand{\slb}{\mathit{lb}}
\newcommand{\lb}{\funap{\slb}}
\newcommand{\spl}{\mathit{p}}
\newcommand{\pl}{\funap{\spl}}
\newcommand{\sst}{\mathit{st}}
\newcommand{\st}{\funap{\sst}}
\newcommand{\descsetexpmid}{\mathrel{\vert}}
\newcommand{\descsetexpbigmid}{\mathrel{\big\vert}}
\newcommand{\descsetexpBigmid}{\mathrel{\Big\vert}}
\newcommand{\descsetexp}[2]{\left\{{#1}\descsetexpmid{#2}\right\}}
\newcommand{\descsetexpnormalsize}[2]{\{{#1}\descsetexpmid{#2}\}}
\newcommand{\descsetexpbig}[2]{\bigl\{{#1}\descsetexpbigmid{#2}\bigr\}}
\newcommand{\descsetexpBig}[2]{\Bigl\{{#1}\descsetexpBigmid{#2}\Bigr\}}
\newcommand{\setexp}[1]{\left\{{#1}\right\}}
\newcommand{\spowersetof}{\powerset}
\newcommand{\powersetof}{\funap{\spowersetof}}
\newcommand{\aset}{A}
\newcommand{\bset}{B}
\newcommand{\cset}{C}
\newcommand{\arel}{R}
\newcommand{\brel}{S}
\newcommand{\sequence}[2]{\{{#1}\}_{#2}}
\newcommand{\enumsequence}[1]{\langle{#1}\rangle}
\newcommand{\redsuccs}[2]{{({#2}\,{#1}})}
\newcommand{\punc}[1]{\ensuremath{\hspace*{1pt}{#1}}}
\newcommand{\todo}[1]{}
\newcommand{\pair}[2]{\langle{#1},\hspace*{0.5pt}{#2}\rangle}
\newcommand{\triple}[3]{\langle{#1},\hspace*{0.5pt}{#2},\hspace*{0.5pt}{#3}\rangle}
\newcommand{\quadruple}[4]{\langle{#1},\hspace*{0.5pt}{#2},\hspace*{0.5pt}{#3},\hspace*{0.5pt}{#4}\rangle}
\newcommand{\tuple}[1]{\langle{#1}\rangle}
\newcommand{\nats}{\mathbb{N}}
\newcommand{\posnats}{\mathbb{N}^+}
\newcommand{\vecsub}[2]{\vec{#1}_{#2}}
\newcommand\nmvec[3]{{#1}_{#2} \dots {#1}_{#3}}
\newcommand\nvec[2]{\nmvec{#1}{1}{#2}}
\newcommand\vecOneToN[1]{\nvec{#1}{n}}
\newcommand\sep[1]{&#1&} 
\newcommand{\noopsort}[1]{}
\newcommand{\defd}[1]{{#1}{\downarrow}}
\newcommand{\undefd}[1]{{#1}{\uparrow}}
\newcommand\alphaeq{\ensuremath{=_\alpha}}
\newcommand{\sbisim}[1][]{%
    \setbox0=\hbox{\kern-.1ex{$\leftrightarrow$}\kern-.1ex}
    \setbox1=\vbox{\hbox{\raise .1ex \box0}\hrule}%
    \ensuremath{\mathrel{\hbox{\kern.1ex\box1\kern.1ex}_{#1}}}
  }
\newcommand{\bisim}{\mathrel{\sbisim}}
\newcommand\sbisimstep\rightsquigarrow
\newcommand\sconvbisimstep\leftsquigarrow
\newcommand{\abisim}{R}
\newcommand{\LTG}{LTG}
\newcommand{\slts}{{\cal L}}
\newcommand{\alts}{\lts{\aARS}}
\newcommand{\lts}{\indap{\slts}}
\newcommand{\SilentLTS}[2]{\indap{\slts}{{#1},{#2}}}
\newcommand{\sltg}{{\cal G}}
\newcommand\altg\sltg
\newcommand\ltg[2]{\sltg_{#1}(#2)}
\newcommand{\SilentLTG}[3]{\indap{\sltg}{{#1},{#2}}(#3)}
\newcommand{\lambdatg}{$\lambda$\nb-transition graph}
\newcommand{\LTS}{LTS}
\newcommand{\aLTS}{{\cal L}}
\newcommand{\aLTSi}{\indap{\aLTS}}
\newcommand{\aLTG}{G}
\newcommand{\aLTGi}{\indap{\aLTG}}
\newcommand{\states}{S}
\newcommand{\labels}{A}
\newcommand{\transitions}{\strans}
\newcommand{\statesi}{\indap{S}}
\newcommand{\transitionsi}{\subap{\sred}}
\newcommand{\ainitialstate}{x}
\newcommand{\ainitialstatei}{\indap{\ainitialstate}}
\newcommand{\slabtrans}[1]{\overset{#1}{\to}}
\newcommand{\labtrans}[3]{{#1}\mathrel{\slabtrans{#2}}{#3}}
\newcommand{\slabtransi}[2]{\overset{#2}{\to}_{#1}}
\newcommand{\labtransi}[4]{{#2}\mathrel{\slabtransi{#1}{#3}}{#4}}
\newcommand{\astate}{s}
\newcommand{\bstate}{t}
\newcommand{\cstate}{u}
\newcommand{\astatei}{\indap{\astate}}
\newcommand{\alab}{a}
\definecolor{azure}{rgb}{0.94,1.00,1.00}
\definecolor{blue}{rgb}{0,0,0.5}
\definecolor{brown}{rgb}{.75,.25,.25}
\definecolor{cyan}{rgb}{0.25,0.88,0.82}
\definecolor{chocolate}{rgb}{0.82,0.41,0.12}
\definecolor{darkcyan}{rgb}{0.5,0,1}
\definecolor{darkgreen}{rgb}{0,0.39,0}
\definecolor{darkmagenta}{rgb}{0.5,0,0.5}
\definecolor{firebrick}{RGB}{175,25,25}
\definecolor{forestgreen}{rgb}{0.13,0.55,0.13}
\definecolor{lightcyan}{rgb}{0.88,1.00,1.00}
\definecolor{lightpink}{rgb}{1.00,0.71,0.76}
\definecolor{lightyellow}{rgb}{1.00,1.00,0.88}
\definecolor{lightgoldenrod}{rgb}{0.83,0.97,0.51}
\definecolor{lightgoldenrodyellow}{rgb}{0.98,0.98,0.82}
\definecolor{lightskyblue}{rgb}{0.53,0.81,0.98}
\definecolor{moccasin}{rgb}{1.00,0.89,0.71}
\definecolor{magenta}{rgb}{1,0,1}
\definecolor{navyblue}{rgb}{0,0,0.5}
\definecolor{orange}{rgb}{1.0,0.65,0.0}
\definecolor{orangered}{rgb}{1.0,0.27,0.0}
\definecolor{palegreen}{rgb}{0.60,0.98,0.60}
\definecolor{powderblue}{rgb}{0.69,0.88,0.90}
\definecolor{purple}{rgb}{1,0.5,1}
\definecolor{royalblue}{RGB}{65,105,225}
\definecolor{mediumblue}{RGB}{0,0,205}
\definecolor{cornflowerblue}{RGB}{100,149,237}
\definecolor{springgreen}{rgb}{0.0,1.0,0.5}
\definecolor{turquoise}{rgb}{0.25,0.88,0.82}
\definecolor{snow}{rgb}{1.00,0.98,0.98}
\definecolor{tan}{rgb}{0.82,0.71,0.55}
\definecolor{red}{rgb}{1,0,0}
\newcommand{\sbinds}{{\leftspoon}}
\newcommand{\binds}{\mathrel{\sbinds}}
\newcommand{\sbindseq}{{\leftspoon^{=}}}
\newcommand{\bindseq}{\mathrel{\sbindseq}}
\newcommand{\scaptures}{{\dashleftarrow}}
\newcommand{\captures}{\mathrel{\scaptures}}
\newcommand{\siscapturedby}{{\dashrightarrow}}
\newcommand{\iscapturedby}{\mathrel{\siscapturedby}}
\newcommand{\spos}{\text{\normalfont pos}}
\newcommand{\sPositions}{{\mit Pos}}
\newcommand{\Positions}{\funap{\sPositions}}
\newcommand{\positions}{\nats^*}
\newcommand{\vecpositions}{\vec{\nats}^*}
\newcommand{\sscopeof}{\indap{\mathit{scope}}}
\newcommand{\scopeof}[1]{\funap{\sscopeof{#1}}}
\newcommand{\sextscopeof}{\indap{\mathit{scope}^+}}
\newcommand{\extscopeof}[1]{\funap{\sextscopeof{#1}}}
\newcommand{\sscopered}{\indap{\sred}{\mathit{sc}}}
\newcommand{\scopered}{\mathrel{\sscopered}}
\newcommand{\sscopemorestepred}{\indap{\smorestepred}{\mathit{sc}}}
\newcommand{\scopemorestepred}{\mathrel{\sscopemorestepred}}
\newcommand{\rootpos}{\epsilon}
\newcommand{\apos}{p}
\newcommand{\bpos}{q}
\newcommand{\cpos}{r}
\newcommand{\dpos}{s}
\newcommand{\aposi}{\indap{\apos}}
\newcommand{\bposi}{\indap{\bpos}}
\newcommand{\cposi}{\indap{\cpos}}
\newcommand{\apreter}{s}
\newcommand{\bpreter}{t}
\newcommand{\aipreter}{s}
\newcommand{\bipreter}{t}
\newcommand{\apreteri}{\indap{\apreter}}
\newcommand{\bpreteri}{\indap{\bpreter}}
\newcommand{\safun}{f}
\newcommand{\afun}{\funap{\safun}}
\newcommand{\sreadback}{\textsf{readback}}
\newcommand{\readback}{\funap{\sreadback}}
\newcommand{\sreadwriten}{\indap{\textsf{rw}}}
\newcommand{\readwriten}[1]{\bfunap{\sreadwriten{#1}}}
\newcommand{\sreadwritezero}{\indap{\textsf{rw}}{0}}
\newcommand{\readwritezero}{\funap{\sreadwritezero}}
\newcommand{\sparse}{\ensuremath{\mathsf{parse}^+}}
\newcommand{\sparsen}{\ensuremath{\indap{\sparse}}}
\newcommand{\parsen}[1]{\bfunap{\sparsen{#1}}}
\newcommand{\sparseempty}{\indap{\sparse}{0}}
\newcommand{\parseempty}{\funap{\sparseempty}}
\newcommand{\subparse}{\text{parse}^+}
\newcommand{\sparsered}{\indap{\sred}{\subparse}}
\newcommand{\sparsemred}{\indap{\smred}{\subparse}}
\newcommand{\parsemred}{\mathrel{\sparsemred}}
\newcommand{\sparseinfred}{\indap{\sinfred}{\subparse}}
\newcommand{\parseinfred}{\mathrel{\sparseinfred}}
\newcommand{\sunfoldparsered}{\subap{\sred}{\subrule\sunf\subparse}}
\newcommand{\sunfoldparseomegared}{\indap{\somegared}{\subrule\sunf\subparse}}
\newcommand{\unfoldparseomegared}{\mathrel{\sunfoldparseomegared}}
\newcommand{\sparselabsred}{\indap{\sred}{\subrule\subparse\slabs}}
\newcommand{\sparselappred}{\indap{\sred}{\subrule\subparse@}}
\newcommand{\sparsecompressstregred}{\indap{\sred}{\subrule\subparse\scompressstreg}}
\newcommand{\sparsenlvarred}{\indap{\sred}{\subrule\subparse\nlvar}}
\title[Expressibility 
                      in the Lambda Calculus with $\protect\sletrec$]%
      {Expressibility 
                      in the Lambda Calculus\\ with $\protect\sletrec$
       \thanks{This work was supported by \emph{NWO~(Nederlandse Organisatie voor Wetenschappelijk Onderzoek)}
         in the framework of the project \emph{Realising Optimal Sharing (ROS)} under the direction of Doaitse Swierstra and Vincent von Oostrom.}}
\author[Clemens Grabmayer and Jan Rochel]{
\begin{tabular}{ll}
  Clemens Grabmayer & Jan Rochel\\
  Department of Philosophy&
  Department of Information and Computing Sciences\\
  Utrecht University, The Netherlands&
  Utrecht University, The Netherlands
\end{tabular}
}
\begin{document}

\maketitle

\begin{abstract}
  We investigate the relationship between finite terms in $\lambdaletreccal$,
the lambda calculus with $\sletrec$, and the infinite lambda terms they express.
As there are easy examples of infinite \lambdaterms\ that, intuitively, are not
unfoldings of terms in $\lambdaletreccal$, we consider the question: 
How can those infinite lambda terms be characterised that are $\lambdaletreccal$\nb-expressible 
in the sense that they can be obtained as infinite unfoldings of terms in $\lambdaletreccal$?

\vspace{0.4ex}
For `observing' infinite \lambdaterms\ through repeated `experiments' carried out at the head of the term 
we introduce two rewrite systems (with rewrite relations) $\sregred$ and $\sstregred$ that decompose the term structure,
and produce `generated subterms' in two notions.
Thereby the sort of the step can be observed as well as its target, a generated subterm. 
In both systems there are four sorts of decomposition steps:
$\slabsdecompred$\nb-steps (decomposing a $\lambda$\nb-abstraction),
$\slappdecompired{0}$- and $\slappdecompired{1}$\nb-steps (decomposing an application into its function and argument),
and respectively, $\scompressregred$-steps (delimiting the scope of an abstraction, for $\sregred$),
and $\scompressstregred$ (delimiting of scopes, for $\sstregred$). 
These steps take place on infinite \lambdaterms\ furnished with a leading prefix of abstractions
for gathering previously encountered $\lambda$\nb-abstractions and keeping the generated subterms closed.
We call an infinite \lambdaterm\ `regular'\discretionary{/}{}{/}`strongly regular'
if its set of $\sregred$-reachable/$\sstregred$-reachable generated subterms
is finite.  
Furthermore, we analyse the binding structure of infinite \lambdaterms\
with the concept of `\bindcaptchain'.


\vspace{0.4ex}
Utilising these concepts, we answer the question above by providing two characterisations
of $\lambdaletreccal$\nb-expressibility. 
For all infinite \lambdaterms~$\aiter$, the following statements are equivalent:
(i):~$\aiter$~is $\lambdaletreccal$\nb-expressible;
(ii):~$\aiter$~is strongly regular;
(iii):~$\aiter$~is regular, and it only has finite \bindcaptchains.

\end{abstract}

\section{Introduction}
  \label{sec:intro}

A prevalent enrichment of the \lambda-calculus is the extension by a $\mu$
or $\sletrec$ bin\-ding-con\-struct, the latter being a generalisation of the
first. Such constructs facilitate finite representations for infinite
\lambdaterms, and for typed \lambda-calculi (and thus for most functional
programming languages) are crucial for enabling definitions by unbounded recursion 
and thereby for ensuring Turing completeness. 
A term in the \lambda{}-calculus with \sletrec{}, which we denote
by \lambdaletreccal, is usually understood as a finite representation
of its denotation in the `unfolding semantics', 
the infinite \lambdaterm\ that is obtained by completely unfolding
all occurring recursive definitions (all \sletrec-bindings).  
The unfolding semantics is practically relevant since
compiler builders frequently argue (at least in an initial phase) for the correctness of program transformations 
defined on $\stxtlambdaletreccal$\nb-ex\-press\-ions only intuitively, 
via their denotation in the unfolding semantics.%
  \footnote{%
    Note that while optimising program transformations used for compiling a functional programs
    typically change the unfolding semantics of a program, 
    they are intended to preserve its `behavioural semantics', and that of the denoted infinite \lambdaterm.}
    
It turns out, however, that not every infinite \lambdaterm\ that exhibits a regular structure
(more precisely, that has a regular syntax tree) 
can be expressed as the infinite unfolding of a term in \lambdaletreccal.
For an example with an informal explanation, see Example~\ref{ex:entangled} below.
We say that an infinite \lambdaterm\ is
`\lambdaletrecexpressible' if it has a representation as \emph{finite} term in \lambdaletreccal.
Note that infinite \lambdaterms\ typically have infinitely many different representations as
$\stxtlambdaletreccal$\nb-terms, see also Example~\ref{ex:ll-expressible} below.
Reformulating the observation above, one can say that 
the infinite \lambdaterms\ that can be expressed by terms in $\lambdaletreccal$,
and hence are the denotation of $\stxtlambdaletreccal$\nb-terms in the unfolding semantics,
form only a proper subclass of those infinite \lambdaterms\ with a regular structure. 
In this report we prove this fact, and we provide two different characterisations of the class of
\lambdaletrecexpressible\ infinite \lambdaterms, by means of the concept of `strong regularity',
and `\bindcaptchains'.


\begin{example}[not $\lambdaletrec$-expressible]\label{ex:entangled}\label{ex:not:ll-expressible}
  Consider the infinite \lambdaterm\ of the form
  $\aiter = \labs{a}{\labs{b}{\lapp{(\labs{c}{\lapp{(\labs{d}{\lapp{\dots}{c}})}{b}})}{a}}}$
  with syntax tree as shown in Figure~\ref{fig:ex:entangled} on the left. 
  While this syntax tree has a regular structure, 
  the scopes of the abstractions in it are infinitely entangled: 
  the scope of $\slabs a$ reaches into the scope of $\slabs b$, the scope of $\slabs b$ into the one of $\slabs c$,
  and so on.  
  This feature of $\aiter$ can suggest the idea that it is impossible that $\aiter$ is the result 
  of `unrolling' a \lambdaletrecterm\ step by step in a manner that respects scopes.
  Such a process would namely `tile' its result in a regular manner 
  with finite term-context tiles having a bounded scope-nesting depth 
  such that furthermore the scopes of abstractions contained in different context tiles do not overlap.
  This excludes, intuitively, the formation of the infinite entanglement of successively overlapping scopes 
  that can be observed in $\aiter$.
  The term $\aiter$ will indeed be recognised as not \lambdaletrec-expressible.
\end{example}

\begin{figure}[t]
\centering
\fig{pstricks/entangled.lgr}
\hspace{0.5cm}
\fig{cons.lgr}
\hspace{0.2cm}
\fig{conscons.lgr}
\hspace{0.2cm}
\fig{pstricks/consinf.lgr}
\caption{\label{fig:ex:entangled}\label{fig:not:ll-expressible:ll-expressible}
         Syntax tree for the infinite \protect\lambdaterm~$\aiter$ in Example~\ref{ex:entangled},
         and two term graphs, and the syntax tree for the infinite \protect\lambdaterm~$\biter$ in Example~\ref{ex:equiv-letrec-terms}.}
\end{figure}

\begin{example}[$\lambdaletrec$-expressible]\label{ex:equiv-letrec-terms}\label{ex:ll-expressible}
The infinite \lambda-term~$\biter = \labs{x}{x:x:\dots}$ can be expressed by
$\labs{x}{\letrec{r=x:r}{r}}$ as well as by
$\labs{x}{\letrec{r=x:(x:r)}{r}}$
(for the term graphs of these \lambdaletrecterms\ as well as the syntax tree of $\biter$ of their infinite unfolding,
 see Figure~\ref{fig:ex:entangled} on the right).
The colon ($:$) used here is inspired from
the functional programming language Haskell. It is a binary infix operator
constructing a list from its arguments by putting the first argument (here:
$x$) in front of its second argument. This example is an implementation of what
is known as the $repeat$ function. The \lambda-calculus we deal with here does
not explicitly include operators. Therefore in the context of this work the
colon can be viewed as a free variable or a closed \lambdaterm.

Note that, if colon is viewed as a closed \lambdaterm\ that contains \lambda-abstractions,
then, contrary to the term $\aiter$ from Example~\ref{ex:entangled}, abstraction scopes in the syntax tree of $\biter$
are organised in blocks of of bounded scope-nesting depth
(for example each occurrence of colon then is such a block), and no infinite entanglement of scopes takes place. 
\end{example}

Over a first order signature, the simplest kind of infinite terms 
are those that are regular in the sense that they possess only a finite number of different subterms.
They correspond to trees over ranked alphabets that are regular \cite{cour:1983}.
Like regular trees also regular terms can be expressed finitely by systems of recursion equations \cite{cour:1983}
or by `rational expressions' \cite[Def.\hspace*{1pt}4.5.3]{cour:1983},
which correspond to $\mu$\nb-terms (see e.g.\ \cite{endr:grab:klop:oost:2011}).
Hereby finite expressions denote infinite terms either
via a mathematical definition (a fixed-point construction, or induction on paths) 
or as the limit of a rewrite sequence consisting of unfolding steps. 

For infinite higher-order terms such as infinite \lambdaterms\
the concept of regularity is less clear-cut from the outset,
due to the presence of variable binding. 
Frequently, regularity has been used with as meaning
the existence of a first-order syntax tree with named variables that is regular
(e.g.\ in \cite{ario:klop:1997,ario:blom:1997}).
For example, the infinite \lambdaterm\ $\biter$ that is represented by the syntax tree
in Figure~\ref{fig:not:ll-expressible:ll-expressible} on the right is regular in this sense.
However, such a definition of regularity has the drawback that 
it depends on a property of a first-order representation (as syntax trees, or
correspondingly as pseudo- or pre-terms) that is not invariant under
$\alpha$\nb-con\-ver\-sion, the re\-na\-ming of bound variables.
Note that the syntax tree left in Figure~\ref{fig:not:ll-expressible:ll-expressible} of the infinite \lambdaterm~$\aiter$ 
contains infinitely many variables, and for this reason is not regular as a first-order tree.
Still, the infinite \lambdaterm~$\aiter$ viewed as higher-order term (and hence up to $\alpha$\nb-con\-ver\-sion)
that is described by this syntax tree would be considered to be regular according to the 
understanding mentioned above. This is because there is an $\alpha$\nb-variant of the syntax tree left in Figure~\ref{fig:not:ll-expressible:ll-expressible}
that uses only two variable names, which appear alternatingly.  
It is therefore desirable to obtain a definition that generalises the condition for the first-order case
by adapting the notion of subterm for \lambdaterms, 
and that pertains directly to a formulation of infinite \lambdaterms\ as higher-order terms. 

Viable notions of subterm for \lambdaterms\ in a higher-order formalisation require a stipulation 
on how to treat variable binding 
when descending into the body of a \lambdaabstraction.
For this purpose we enrich the syntax of \lambdaterms\ with a bracketed prefix of
abstractions (similar to a proof system for weak $\mu$\nb-equality in
\cite[Fig.\hspace*{1.5pt}12]{endr:grab:klop:oost:2011}).
An expression $\flabs{\avari{1}\ldots\avari{n}}{\aiter}$ 
represents a partially decomposed \lambdaterm: 
the body $\aiter$ typically contains free occurrences of variables 
which in the original \lambdaterm\ were bound by \lambdaabstractions\
that have been split off by decomposition steps.
The role of such abstractions has then been taken over by abstractions in the prefix $\sflabs{\avari{1}\ldots\avari{n}}$.
In this way expressions with abstraction prefixes can be kept closed under decomposition steps. 

On these prefixed \lambdaterms, we define two closely related rewrite systems $\RegCRS$ and $\stRegCRS$.
Rewrite sequences in $\RegCRS$ and in $\stRegCRS$ deconstruct infinite \lambdaterms\ by steps that typically decompose
applications and \lambdaabstractions, and take place just below the bracketed abstractions.
$\RegCRS$ and $\stRegCRS$ differ with respect to the steps for removing vacuous prefix bindings they facilitate:
while such bindings can always be removed by pertinent steps in $\RegCRS$,
the system $\stRegCRS$ only enables steps that remove vacuous bindings at the end of the abstraction prefix. 
For each of these systems we consider a family of strategies 
that make deterministic choices concerning the application of the steps for removing vacuous prefix bindings:
we call these strategies `\scopedelimiting\ strategies' for $\RegCRS$, and `\extscopedelimiting\ strategies' for $\stRegCRS$. 
\Scopedelimiting\ strategies $\astrat$ for $\RegCRS$ and \extscopedelimiting\ strategies $\astratplus$ for $\stRegCRS$ 
induce rewrite relations $\sstratred{\astrat}$ and $\sstratred{\astratplus}$, respectively. 
These families of rewrite strategies 
define respective notions of `generated subterm', 
and they give rise to differently strong concepts of regularity:
a \lambdaterm\ $\citer$ is called regular (strongly regular)
if there is a rewrite strategy $\astrat$ for $\RegCRS$ 
            (a rewrite strategy $\astratplus$ for $\stRegCRS$)
such that the set of from $\citer$
$\sstratred{\astrat}$\nb-reach\-able ($\sstratred{\astratplus}$\nb-reach\-able) generated subterms is finite.

\begin{figure}[tbp]
\setlength\tabcolsep{-0.5mm}
\begin{center}
\begin{tabular}{c
                 c}
\vcentered{\fig{simpleletrec-unf}}
& \hspace*{17.5ex}
  \vcentered{\fig{simpleletrec-ltg}}
\\[1ex]
{syntax tree} 
              & 
                \hspace*{18.5ex}
                \parbox[t]{30ex}{\centering
                                   $\sstratred{\eagscdelstratstreg}$-generated subterms
                                   }
\end{tabular}
\end{center}
\caption{\label{fig:ll:expressible}%
         Syntax tree 
         and $\sstratred{\eagscdelstratstreg}$\nb-ge\-ne\-ra\-ted subterms for
         the infinite \protect\lambdaterm~$\eiter$
         that is strongly regular, and expressible by the $\stxtlambdaletreccal$-term
         $\letrecin{\arecvar = \labs{\avar\bvar}{\lapp{\lapp{\arecvar}{\bvar}}{\avar}}}{\arecvar}$.}
\end{figure}

\begin{example}[$\stRegCRS$-decomposition]
To illustrate decomposition with respect to the `eager \scopedelimiting' strategy $\eagscdelstratstreg$ for $\stRegCRS$,
let us consider the infinite \lambdaterm~$\eiter$ with syntax tree in Figure~\ref{fig:ll:expressible}.
This term can be represented by the equation
$\eiter = \labs{\avar\bvar}{\lapp{\lapp{\eiter}{\bvar}}{\avar}}$
or by
$\letrecin{\arecvar = \labs{\avar\bvar}{\lapp{\lapp{\arecvar}{\bvar}}{\avar}}}{\arecvar}$
(in fact $\eiter$ is the infinite unfolding of this $\stxtlambdaletreccal$\nb-term).
Using the former description, decomposition by $\sstratred{\eagscdelstratstreg}$\nb-steps proceeds as follows,
repetitively:
\vspace{-0.5ex}
  \begin{equation*}\label{eq:decomposition:ll:expressible}  
    \femptylabs{\eiter} ~~~
    \flabs{\avar}{\labs{\bvar}{\lapp{\lapp{\eiter}{\bvar}}{\avar}}} ~~~
    \flabs{\avar\bvar}{\lapp{\lapp{\eiter}{\bvar}}{\avar}} ~~~
    \begin{array}{lll}
    \flabs{\avar\bvar}{{{\lapp{\eiter}{\bvar}}}} &
    \begin{array}{llll}
    \flabs{\avar\bvar}{\eiter} ~~~ &
    \flabs{\avar}{\eiter}  ~~~ &
    \femptylabs{\eiter} ~~~ &
    \ldots
    \\
    \flabs{\avar\bvar}{\bvar}
    \end{array}
    \\[2ex]
    \flabs{\avar\bvar}{\avar} & \;\;
    \flabs{\avar}{\avar}
    \end{array}
  \end{equation*}
For a rendering of this reduction graph that also records the sort of the applied decomposition steps, 
see Figure~\ref{fig:ll:expressible}. 
Note that removal steps for vacuous bindings take place here only at the end of the prefix,
and are applied eagerly, that is, always as soon as possible.
Since $\femptylabs{\eiter}$ has only 9 different $\sstratred{\eagscdelstratstreg}$\nb-reducts,
the term $\eiter$ is strongly regular. 
\end{example}

The generalisations of the concept of regularity to infinite \lambdaterms\ suggest the question:
do the expressibility results in \cite{cour:1983} for regular first-order trees
with respect to systems of recursion equations, rational expressions, or $\mu$\nb-terms
also generalise in an appropriate way?
We tackle only the case of strong regularity here, 
and obtain an expressibility result with respect to the $\lambdaletreccal$, 
the \lambdacalculus\ with $\stxtletrec$.
We say that a term $\allter$ in $\lambdaletreccal$ expresses an infinite \lambdaterm~$\aiter$
if $\aiter$ is the infinite unfolding of $\allter$.
We show that an infinite unfolding is unique if it exists, and it can be obtained 
as the limit of an infinite rewrite sequence of unfolding steps. 
We prove that an infinite \lambdaterm\ is \lambdaletrecexpressible, that is, expressible by a term in $\lambdaletreccal$,
if and only if it is strongly regular.
This  result settles a conjecture%
  \footnote{Confer the last sentence of Sec.~1.2.4 in \cite{blom:2001}:
    `We conjecture that the set of regular lambda-trees is precisely the set
     of lambda-trees that can be obtained as the unwinding of terms with $\stxtletrec$'.
    Mind that `regular lambda-trees' there correspond to strongly regular \lambdaterms\ in our sense,
    and that the notion of `sub-tree' of a `lambda-tree' informally defined there corresponds 
    to our notion of $\sstratred{\astratplus}$\nb-ge\-ne\-ra\-ted subterm with respect to a \extscopedelimiting\ strategy $\astratplus$ for $\stRegCRS$.}  
by Blom in \cite
                {blom:2001}. 

\vspace*{2ex}
\noindent
{\bf Overview of the paper.} 
Section~\ref{sec:prelims} is concerned with terminology and notation used for
known formalisms. In Section~\ref{sec:letrec} a rewrite system
for unfolding terms in the \lambdacalculus\ with $\sletrec$ is formulated.  
In Section~\ref{sec:regular} we introduce rewriting systems for
decomposing infinite \lambdaterms\ into `generated subterms', and
we show some properties of these systems in connection to so-called scope-delimiting strategies. 
Also in this section, we define regularity and strong regularity for infinite \lambdaterms\
employing the concepts of generated subterms and \scopedelimiting\ strategies.
In Section~\ref{sec:letrec2} we adapt the rewrite systems for decomposing
infinite \lambdaterms\ and the notions of scope-delimiting strategies to the \lambdacalculus\ with \sletrec.
In Section~\ref{sec:proofs}, we develop proof systems that are sound and complete
for the notions of regularity and strong regularity,
for equality of strongly regular infinite \lambdaterms,
and for the property of a \lambdaletrec\nb-term to unfold to an infinite \lambdaterm.
In Section~\ref{sec:chains} we examine the binding
structure of infinite \lambda-terms (\bindcaptchains) and connect to
the concepts introduced so far. 
In Section~\ref{sec:express} we establish 
the correspondence between strong regularity and \lambdaletrec-expressibility
for infinite \lambdaterms.
In Section~\ref{sec:ltgs} we introduce `\lTG{s}' of infinite \lambdaterms\ 
and of \lambdaletrecterms\ as
labelled transition graphs in which the edges carry one of the four different labels
$\slappdecompi{0}$, $\slappdecompi{1}$, $\slabsdecomp$, and $\scompressstreg$.
In Section~\ref{sec:conclusion} we summarise and provide an outlook
on possible results that are related, and on potential applications
of the concepts we introduce. 

\vspace*{2ex}
\noindent
{\bf Note.} 
  The results in this paper concern the relationship
  between infinite \lambdaterms\ and $\stxtlambdaletreccal$\nb-terms
  that is induced by the finite unfolding rewrite operation on $\stxtlambdaletreccal$\nb-terms. 
  These results and the analysis we use to prove them are `static' with respect to $\beta$\nb-reduction,
  that is, $\beta$\nb-reduction rewrite steps play no part in them at all.
  For this reason we will not even define $\beta$\nb-reduction for infinite \lambdaterms\ and
  for $\stxtlambdaletreccal$\nb-terms.
  Note that in particular
  the confluence result for the unfolding rewrite relation on $\stxtlambdaletreccal$\nb-terms
  (see Proposition~\ref{prop:unf-confluence} below)
  does not conflict
  with the fact that some rewrite calculi for cyclic \lambda-calculi such as $\lambdaletreccal$ 
  are non-confluent with respect to $\beta$\nb-reduction
  (see e.g.\ \cite{ario:klop:1997}).

\section{Preliminaries}
  \label{sec:prelims}

In this section we gather most of the basic known concepts that play a vital part in the rest of our paper.
Some central notions concerning rewriting are recapitulated from \cite{terese:2003}, while for
others we provide references. Some variations of known concepts that are tailor-made
for our purposes are formulated in definition environments. 

We let $\nats = \setexp{0,1,2,\ldots}$ and $\posnats = \nats\setminus{0}$.
For a partial function $\safun \funin A \rightharpoonup B$, and $a\in A$ 
we denote by $\defd{\afun{a}}$ the fact that $\safun$ is defined for $a$,
and by $\undefd{\afun{a}}$ that $\safun$ is not defined for $a$.
The \emph{domain} of $\safun$ is the set $\dom{\safun} \defdby \descsetexp{a\in A}{\defd{\afun{a}}} \subseteq A$,
and the \emph{range} of $\safun$ the set $\ran{\safun} \defdby \descsetexp{\afun{a}}{a\in A,\, \defd{\afun{a}}} \subseteq B$.

For relations $\arel\subseteq\aset\times\bset$ and $\brel\subseteq\bset\times\cset$
we denote by $\sbinrelcomp{\arel}{\brel}$ the \emph{composition of $\arel$ with $\brel$}
defined by 
$\sbinrelcomp{\arel}{\brel} \defdby \descsetexp{\pair{x}{z}}{(\exists y\in\bset) \pair{x}{y}\in\arel \logand \pair{y}{z}\in\brel }
 \subseteq \aset\times\cset$,
and by $\arel^*$ the reflexive and transitive closure of $\arel$ under composition,
which is defined by 
$\arel^* \defdby \bigcup_{i\in\nats} \arel^i$
where $\arel^0 \defdby \sidon{\aset} \defdby \descsetexp{\pair{x}{x}}{x\in\aset}$
and, for all $i\in\nats$, $\arel^{i+1} \defdby \sbinrelcomp{\arel}{\arel^i}$.
 
\partitle{Abstract Rewriting Systems} 
An \emph{abstract rewriting system (\ARS)} is a quadruple $\quadruple{\objects}{\steps}{\ssrc}{\stgt}$
consisting of a set $\objects$ of \emph{objects}, a set $\steps$ of \emph{steps}, 
and $\ssrc, \stgt \funin \steps\to\objects$, the \emph{source} and \emph{target} functions. 
We will always assume that $\objects\cap\steps = \emptyset$. 
For objects $\aobj\in\objects$ we denote by $\outgoingst{\steps}{\aobj}$ and by $\incomingst{\steps}{\aobj}$
the set of steps in $\steps$ that depart (are outgoing steps) from $\aobj$, and that arrive (are incoming steps) at $\aobj$, respectively.
We say that an \ARS\ is \emph{finite} if its set of steps is finite.
For \ARS{s} $\aARSi{i} = \tuple{\objectsi{i},\stepsi{i},\ssrci{i},\stgti{i}}$ for $i\in\setexp{1,2}$
we say that $\aARSi{1}$ is a \emph{sub\nb-\ARS} of $\aARSi{2}$
if $\objectsi{1} \subseteq \objectsi{2}$, $\stepsi{1} \subseteq \stepsi{2}$, and $\ssrci{1}$, $\stgti{1}$ are the restrictions
of $\ssrci{2}$ and $\stgti{2}$, respectively, to $\stepsi{1}$, which are required to be total functions (this implies that,
for all $\astep\in\stepsi{1}$, it holds that $\srci{2}{\astep} = \srci{1}{\astep} \in \objectsi{1}$,
and $\tgti{2}{\astep} = \tgti{1}{\astep} \in \objectsi{1}$.

\partitle{Induced sub-\ARS}
For an object $\aobj\in\objects$ of an \ARS\ $\aARS =
\tuple{\objects,\steps,\ssrc,\stgt}$ we denote by $\InducedSubARS{\aobj} \defdby
\tuple{\objects',\steps',\ssrc',\stgt'}$ the \emph{sub-\ARS{} of $\aARS$
induced by $\aobj$}, where $\objects'$ comprises only the objects from
$\objects$ that are reachable from $\aobj$ by an arbitrary number of steps (or
no steps) and with $\steps'$, $\ssrc'$, $\stgt'$ being the restrictions of
$\steps$, $\ssrc$, $\stgt$ to the objects in $\objects'$ and to steps between objects in $\objects'$.

\partitle{Bisimulations between \ARSs} 
Let $\aARSi{i} = \quadruple{\objectsi{i}}{\stepsi{i}}{\ssrci{i}}{\stgti{i}}$ for $i\in\setexp{1,2}$ be \ARSs.
A relation $\aARSbisim\subseteq (\objectsi{1}\times\objectsi{2}) \cup (\stepsi{1}\times\stepsi{2})$,
which relates objects with objects and steps with steps, is called an \emph{\ARS-bisimulation}
if:
\begin{itemize}
  \item if $\aobji{1} \mathrel{\aARSbisim} \aobji{2}$, then $\aARSbisim$ relates each step
    from $\aobji{1}$ to some step from $\aobji{2}$ (\emph{forth} condition),
    and each step from $\aobji{2}$ to some step from $\aobji{1}$ (\emph{back} condition);
  \item if $\astepi{1} \mathrel{\aARSbisim} \astepi{2}$ with
    $\astepi{1} \funin \aobji{1} \to \bobji{1}$ and $\astepi{2} \funin \aobji{2} \to \bobji{2}$,
    then $\aobji{1} \mathrel{\aARSbisim} \aobji{2}$ and $\bobji{1} \mathrel{\aARSbisim} \bobji{2}$.
\end{itemize}                                    
                                 
\partitle{Labellings of \ARSs} 
  Let $\aARS = \tuple{\objects,\steps,\ssrc,\stgt}$ and $\aARS' = \tuple{\objects',\steps',\ssrc',\stgt'}$ be \ARSs.                              
  \begin{enumerate}[(i)]
    \item 
      An \ARS-bisimulation~$\alabelling$ between $\aARS$ and $\aARS'$ is called a \emph{labelling of $\aARS$ to $\aARS'$},
      and $\aARS'$ \emph{the $\alabelling$\nb-labelled version of}~$\aARS$, 
      if the converse $\converse{\alabelling}$ of $\alabelling$ is a function $\converse{\alabelling} \funin \objects'\cup\steps' \to \objects\cup\steps$,
      and if additionally, for all $\aobj'\in\objects'$ and $\aobj\in\objects$ with $\aobj \mathrel{\alabelling} \aobj'$,
      the restriction 
      $\srestrictto{\converse{\alabelling}}{\funap{\soutgoingst{\steps}'}{\aobj'}} \funin 
       \funap{\soutgoingst{\steps}'}{\aobj'}
          \to
        \funap{\soutgoingst{\steps}}{\aobj}$ 
      of $\converse{\alabelling}$ to the steps departing from $a'$   
      is bijective. 
    \item     
      A \emph{rewrite labelling}~$\arewlabelling$ of $\aARS$ to $\aARS'$ 
      is a pair~$\pair{\alabelling}{\ainitlabelling}$ consisting of a labelling~$\alabelling$ of $\aARS$ to $\aARS'$
      together with an \emph{initial} labelling function~$\ainitlabelling$ mapping objects of $\aARS$ to bisimilar objects of $\aARS'$.
  \end{enumerate}

\partitle{Strategies}
A \emph{history-free strategy} for an abstract rewriting system $\aARS$ is a sub-\ARS\ of $\aARS$
that has the same objects, and the same normal forms as $\aARS$.
A \emph{history-aware strategy} for an abstract rewriting system~$\aARS$ 
is a history-free strategy for the $\alabelling$-labelled version of $\aARS$
with respect to, and together with, a rewrite labelling $\pair{\alabelling}{\ainitlabelling}$ of
$\aARS$. By a \emph{strategy} for $\aARS$ we will mean a history-free strategy
or a history-aware strategy for $\aARS$. 

\partitle{Remark}
Let $\astrat$ be a history-aware strategy of $\aARS$, and let $\aARS'$ be that
$\alabelling$-labelled version of $\aARS$ which $\astrat$ is a history-free
strategy of. Then $\astrat$ projects to a history-free strategy
$\check\astrat$ of $\aARS$. The projection is defined by $\alabelling$, which
induces a local bijective correspondence of outgoing steps of related sources
of $\aARS$ and $\aARS'$. Mind that for deterministic $\astrat$, $\check\astrat$
may become non-deterministic.
Furthermore, every rewrite sequence according to $\astrat$ in $\aARS'$
projects to a unique rewrite sequence in $\aARS$ (which is a rewrite sequence according to $\check{\astrat}$).

\vspace{0.3ex}
The last mentioned fact makes it possible to speak, for a given rewrite labelling,
of rewrite sequences of a history-aware strategy on the objects of the original \ARS. 
%
%
Let $\astrat$ be a history-aware strategy for an \ARS~$\aARS$, and $\aobj$ an object of $\aARS$.
Suppose that $\astrat$ is a sub\nb-\ARS\ of the $\alabelling$\nb-labelled version $\aARS'$ of $\aARS$
for some rewrite labelling $\pair{\alabelling}{\ainitlabelling}$ of $\aARS$. 
Then by a \emph{rewrite sequence of $\astrat$ on\/ $\aobj$} (in $\aARS$) we will
mean the projection to $\aARS$ of a rewrite sequence of $\astrat$ (in $\aARS'$) on the result $\funap{\ainitlabelling}{\aobj}$
of the initial labelling applied to $\aobj$. 


\vspace{1.25ex}
\partitle{Rewrite relations: notation and properties}
  For the single-step rewrite relation induced by an \ARS\ we use the arrow symbol $\sred$
  possibly subscripted with appropriate names.
  Let $\sred \subseteq \objects\times\objects$ be a rewrite relation. 
  We denote by $\smred$ the \emph{many-step} rewrite relation induced by $\sred$,
    by which we mean the reflexive and transitive closure of $\sred$.
  By $\smorestepred$ we denote the \emph{one-or-more-step} rewrite relation of $\sred$,
  the transitive closure of $\sred$.
  By $\seqred$ we mean the \emph{zero-or-one-step} rewrite relation of $\sred$,
  the reflexive closure of $\sred$.
  By a normal form of $\sred$ we mean an $\aobj\in\objects$ such that there is no $\bobj\in\objects$ with $\aobj \red \bobj$.
  By $\snfred$ we mean the \emph{reduction to normal form} rewrite relation induced by $\sred$,
  the restriction of $\smred$ to a relation with the normal forms of $\sred$ as codomain:
  ${\snfred} = \descsetexp{\pair{\aobj}{\bobj}}{\aobj \mred \bobj,\, \text{$\bobj$ is normal form of $\sred$}}$.

\vspace*{0.75ex}
The rewriting properties below are reformulations, and some are slight variations,
of known properties of rewrite relations.
   
\begin{definition}  
  Let $\sredb{1}$, $\sredb{2}$, $\sredb{3}$ be rewrite relations.
  The rewrite relation $\sredb{1}$ is called \emph{cofinal for} $\sredb{2}$ 
    if $\smredb{2} \subseteq {\binrelcomp{\smredb{1}}{\sconvmredb{2}}}$.
  We say that 
  $\sredb{1}$ is \emph{cofinal for $\sredb{2}$ with trailing $\sredb{3}$\nb-steps}
    if $\smredb{2} \subseteq {\binrelcomp{\smredb{1}}{\sconvmredb{3}}}$.
  Furthermore we say that 
  $\sredb{1}$ \emph{factors into} $\sredb{2}$ and $\sredb{3}$ 
    if $\sredb{1} \subseteq {\binrelcomp{\sredb{2}}{\sredb{3}}}$.
\end{definition}    

\vspace{1ex}
\noindent We will several times use the following specific version of K\H{o}nig's Lemma.

\partitle{\bf K\H{o}nig's Lemma}
Let $\agraph = \pair{\vertices}{\edges}$ be an undirected graph with set $\vertices$ of vertices and set $\edges$ of edges. 
Suppose that $\agraph$ has infinitely many vertices ($\vertices$ is infinite),
that it is connected (for all vertices $\avert,\bvert\in\vertices$ there exists a path in $\agraph$ from $\avert$ to $\bvert$)
and that every vertex has finite degree (it is adjacent to only finitely many other vertices in $\agraph$).
Then for every vertex $\avert\in\vertices$,
$\agraph$ contains an infinitely long simple path from $\avert$, that is, a path starting at $\avert$ without repetition of vertices.%
    \footnote{\label{footnote:Koenigs:Lemma}%
              This formulation corresponds to the following original formulation
              by D\'{e}nes K\H{o}nig on page~80 in \cite{koen:2001}:
              ``\emph{Satz 3: Jeder unendliche zusammenh\"{a}ngende Graph~$G$ endlichen Grades
                  besitzt einen einseitig unendlichen Weg, wobei der Anfangspunkt $P_0$ dieses
                  Weges beliebig vorgeschrieben werden kann.}''
              This in connection with the definition on page~10 in \cite{koen:2001}:
              ``Eine unendliche Menge von Kanten $P_i P_{i+1}$ ($i=0,1,2,\ldots$ \emph{in inf.}),
                bzw.\ der durch sie gebildete Graph, hei\ss{}t ein \emph{einseitig unendlicher Weg},
                falls f\"{u}r $i\neq j$ stets $P_i\neq P_j$ ist.''} 

\partitle{Combinatory Reduction Systems}
Many of the formalisations we introduce are based on the framework of
Combinatory Reduction Systems (\CRS{s}) \cite{klop:1980},
\cite{klop:oost:raam:1993} \cite[Sec.~11.3]{terese:2003}, and, in particular,
on infinitary Combinatory Reduction Systems (\iCRS{s}) \cite{kete:simo:2011}.

\partitle{Infinite \CRS{s}}
When speaking of `infinite terms' for \CRS{s} over some signature we draw on
\cite[12.4]{terese:2003} and \cite{kete:simo:2011} where meta-terms of \iCRS{s}
are defined by means of metric completion. The metric is defined on 
$\alpha$-equivalence classes of finite preterms dependent on the minimal depth 
at which two finite preterms belonging to the equivalence classes have a `conflict'.
The objects formed by the metric completion process can be represented as
equivalence classes of infinite preterms, we call them \emph{\iCRS\nb-preterms},
with respect to a notion of $\alpha$\nb-equivalence that again is based
on the notion of `conflict' (see also \cite[Def.~12.4.1]{terese:2003}).
Hereby \iCRS\nb-preterms are infinite ordered dyadic trees in which each node 
is either labelled by a variable name, and then the node does not have a successor,
or by named abstractions $\,\slabs{\avar}\,$ (with some variable name $\avar$), and then the node has a single successor node,
or by an abstraction symbol, and then the node has a right and a left successor node.

For denoting infinite preterms (and later terms) we use, as much as possible,
usual notation for dealing with finite terms. A slight exception is our use,
in certain situations, of a finite-\CRS\nb-based notation 
for infinite \lambdaterms\ that are not \lambdaletrec\nb-expressible
(e.g.\ see Example~\ref{ex:entangled-infinite-path}).

By \emph{\iCRS\nb-terms} we will mean $\alpha$\nb-equivalence classes of \iCRS\nb-preterms.
The notion of $\alpha$\nb-equivalence on \iCRS\nb-preterms based on the
absence of conflicts can be described by provability in the proof system $\AlphaInfPreterm\asig$
in Figure~\ref{fig:alpha-infpreterm:Schroer} which is a variant of a proof system due to
Schroer (see \cite{hend:oost:2003}).
 
\begin{definition}[$\alpha$\nb-equivalence for \iCRS\ preterms, Schroer-style proof system]\label{def:AlphaInfPreterm}
  The proof system~$\AlphaInfPreterm\asig$ for $\alpha$\nb-equivalence 
  on \iCRS\nb-preterms over signature~$\asig$
  consists of the axioms and rules displayed in Figure~\ref{fig:alpha-infpreterm:Schroer},
  each of which contains a rule $\safun$ for every $\safun\in\asig$. 
  Provability in $\AlphaInfPreterm{\asig}$ of an equation between preterms 
  is defined as the existence
  of a possibly infinite, \emph{completed} derivation: for example, 
  by $\infderivablein{\Alpha}{\apreter = \bpreter}$ we mean the existence
  of a possibly infinite prooftree $\infDeriv$ with conclusion $\apreter = \bpreter$
  such that maximal threads from the conclusion upwards either have length $\omega$,
  or have finite length and end at a leaf that carries an axiom $\bvarax$.  
  (We will generally use the decorated turnstyle symbol $\sinfderivable$ to indicate
   provability by completed, possibly infinite derivations.)
\end{definition}
\begin{figure}[t!]
\begin{center}
  \framebox{
\begin{minipage}{350pt}
\begin{center}
  \mbox{}
  \\[1ex]
  \mbox{
    \AxiomC{}
    \RightLabel{\text{const}}
    \UnaryInfC{$\aconst = \aconst$}
    \DisplayProof
        }
  \hspace{3.5ex}
  \mbox{
    \AxiomC{$\subst\apreter\avar\aconst =  \subst\bpreter\bvar\aconst$}
    \RightLabel{$\absCRS{\hspace*{1pt}}{\niks}$}
    \UnaryInfC{$\absCRS{\avar}{\apreter} = \absCRS{\bvar}{\bpreter}$}
    \DisplayProof
        }
  \hspace*{3.5ex}
  \mbox{
    \AxiomC{$ \apreteri{1}  = \bpreteri{1} $}
    \AxiomC{$ \ldots\ldots $}
    \AxiomC{$ \apreteri{n}  =  \bpreteri{n} $}
    \RightLabel{$\safun$}
    \TrinaryInfC{$ \afun{\apreteri{1},\ldots,\apreteri{n}}
                   =
                  \afun{\bpreteri{1},\ldots,\bpreteri{n}} $}
    \DisplayProof
    }
  \\[1ex]
  \mbox{}
\end{center}
\end{minipage}
   }
\end{center} 
  \vspace*{-1.25ex}  
  \caption{\label{fig:alpha-infpreterm:Schroer}%
  Schroer-style proof system $\AlphaInfPreterm\asig$ for $\alpha$-equivalence of \iCRS-preterms
           over signature $\asig$: for every $\safun\in\asig$ with arity $n$,
           $\AlphaInfPreterm{\asig}$ contains a rule $\safun$. 
           In instances of the rule $\absCRS{\hspace*{1pt}}{\niks}$,
           the constant $\aconst$ is chosen fresh for $\apreter$ and $\bpreter$.
           Substitution which occurs in the assumption of
           $\absCRS{\hspace*{1pt}}{\niks}$ denotes substitution by variable
           replacement on \iCRS-preterms. It needs not to be capture-avoiding
           because of the freshness of the substituant.}
\end{figure}   

However, closer to coinductive proof systems for infinite \lambdaterms\ that we develop
is the following different, but equivalent characterisation of $\alpha$\nb-equivalence for infinite \iCRS\nb-preterms,
a variant for \iCRS\nb-terms of a proof system for $\alpha$\nb-equivalence between finite \lambdaterms\
due to Kahrs (see \cite{hend:oost:2003}). 

\begin{samepage}
\begin{definition}[$\alpha$\nb-equivalence for \iCRS\ preterms, Kahrs-style proof system]\label{def:Alpha}%
  The proof system $\Alphasig{\asig}$ for $\alpha$\nb-equivalence 
  on \iCRS\nb-preterms over signature~$\asig$
  consists of the axioms and the rules in Figure~\ref{fig:alpha-infpreterms:Kahrs}
  with, for every $\safun\in\asig$, a rule $\safun$. 
  Provability of an equation between preterms in $\Alphasig{\asig}$ is defined,
  analogously as in \ref{fig:alpha-infpreterm:Schroer}, as the existence
  of a possibly infinite, completed derivation.   
\end{definition}
\end{samepage}

\begin{figure}[t!]
\begin{center}  
  \framebox{
\begin{minipage}{350pt}
\begin{center}
  \mbox{}
  \\[1.5ex]
  \mbox{ 
    \AxiomC{}
    \RightLabel{\bvarax}
    \UnaryInfC{$ \fabsCRS{\vec{\avar}\bvar}{\bvar}  = \fabsCRS{\vec{\cvar}\dvar}{\dvar}  $}
    \DisplayProof
        } 
  \hspace*{5ex}     
  \mbox{
    \AxiomC{$ \fabsCRS{\vec{\avar}}{\apreter}  =  \fabsCRS{\vec{\cvar}}{\bpreter} $}
    \RightLabel{$\Vacstreg\;\;$ \parbox[c]{105pt}{(if $\bvar$ does not occur in $\apreter$,\\[-0.5ex]
                                                  and $\fvar$ does not occur in $\bpreter$)}}
    \UnaryInfC{$ \fabsCRS{\vec{\avar}\bvar}{\apreter} = \fabsCRS{\vec{\cvar}\fvar}{\bpreter} $}
    \DisplayProof
        }     
  \\[2.5ex]    
  \mbox{
    \AxiomC{$ \fabsCRS{\vec{\avar}\bvar}{\apreter} 
                   = 
              \fabsCRS{\vec{\cvar}\dvar}{\bpreter} $}
    \RightLabel{$\absCRS{\hspace*{1pt}}{\niks}$}
    \UnaryInfC{$ \fabsCRS{\vec{\avar}}{\absCRS{\bvar}{\apreter}} 
                   = 
                 \fabsCRS{\vec{\cvar}}{\absCRS{\dvar}{\bpreter}} $}
    \DisplayProof
        }
  \hspace*{3.5ex} 
  \mbox{
    \AxiomC{$ \fabsCRS{\vec{\avar}}{\apreteri{1}}  =  \fabsCRS{\vec{\bvar}}{\bpreteri{1}} $}
    \AxiomC{$ \ldots\ldots $}
    \AxiomC{$ \fabsCRS{\vec{\avar}}{\apreteri{n}}  =  \fabsCRS{\vec{\bvar}}{\bpreteri{n}} $}
    \RightLabel{$\safun$}
    \TrinaryInfC{$ \fabsCRS{\vec{\avar}}{\afun{\apreteri{1},\ldots,\apreteri{n}}} 
                   = 
                  \fabsCRS{\vec{\bvar}}{\afun{\bpreteri{1},\ldots,\bpreteri{n}}} $}
    \DisplayProof 
    }     
  \\[1ex]
  \mbox{}  
\end{center}
\end{minipage}
   }
\end{center} 
  \vspace*{-1.25ex}  
  \caption{\label{fig:alpha-infpreterms:Kahrs}%
           Kahrs-style proof system~$\Alphasig{\asig}$ for $\alpha$\nb-equivalence on \iCRS-preterms
           over signature $\asig$: for every $\safun\in\asig$ with arity $n$,
           $\Alphasig{\asig}$ contains a rule $\safun$.
           }
\end{figure}   

This formulation of $\alpha$\nb-equivalence for infinite \lambdaterms\ will be
the key to our formulation of a `coinduction principle' for infinite \lambdaterms\
in Theorem~\ref{thm:coinduction-principle}.

\partitle{Infinite rewrite relation} 
  For an \iCRS\ with rewrite relation $\sred$, we denote by $\sinfred$
  the infinitary rewrite relation induced by strongly convergent and continuous rewrite sequences of arbitrary
  (countable) ordinal length. Hereby strong convergence means that, at every limit ordinal,
  the rewrite activity in the terms of the rewrite sequence tends to infinity.
  Continuity means that the terms of the rewrite sequence converge, in the metric space of infinite terms,
  at every limit ordinal.
  By $\somegared$ we will denote the rewrite relation induced by strongly continuous $\sred$\nb-rewrite sequences
  of length $\omega$.

\partitle{Labelled transition systems, labelled transition graphs} 
A \emph{labelled transition system (\LTS)} 
is a triple $\aLTS = \triple{\states}{\labels}{\transitions}$
consisting of a set $\states$ of \emph{states},
a set $\labels$ of \emph{labels},
and a set ${\transitions} \subseteq \states \times\labels\times\states$
of $\labels$\nb-labelled transitions.
Labelled transitions $\triple{\astatei{1}}{\alab}{\astatei{2}} \in \states\times\labels\times\states$ 
will be indicated as $\labtrans{\astatei{1}}{\alab}{\astatei{2}}$.
 
A \emph{labelled transition graph (LTG)} $\aLTG$ is a pointed \LTS, that is,
$\aLTG = \quadruple{\states}{\labels}{\ainitialstate}{\transitions}$ where
$\triple{\states}{\labels}{\transitions}$ is an \LTS,
and  $x\in\states$, which is called the \emph{initial state}.

\partitle{Bisimulation between \LTS{s}, \LTG{s}}
Let $\aLTSi{1} = \triple{\statesi{1}}{\labels}{\transitionsi{1}}$ 
and $\aLTSi{2} = \triple{\statesi{2}}{\labels}{\transitionsi{2}}$ be a \LTS{s} over a common set of labels.
A \emph{bisimulation on $\aLTS$} is a binary relation $\abisim \subseteq \statesi{1}\times\statesi{2}$
that satisfies, for all $\astate\in\statesi{1}$ and $\bstate\in\statesi{2}$:
\begin{enumerate}[(i)]
  \item if $\astate \mathrel{\abisim} \bstate$ and $\labtransi{1}{\astate}{\alab}{\astate'}$, then there exists
          $\bstate'\in\statesi{2}$ such that $\labtransi{2}{\bstate}{\alab}{\bstate'}$ and $\astate'\mathrel{\abisim}\bstate'$;
  \item if $\astate \mathrel{\abisim} \bstate$ and $\labtransi{2}{\bstate}{\alab}{\bstate'}$, then there exists
          $\astate'\in\statesi{1}$ such that $\labtransi{1}{\astate}{\alab}{\astate'}$ and $\astate'\mathrel{\abisim}\bstate'$.
\end{enumerate}
Two states $\astate\in\statesi{1}$ and $\bstate\in\statesi{2}$ are \emph{bisimilar}, denoted by $\astate \bisim \bstate$,
if there exists a bisimulation $\abisim$ such that $\astate \mathrel{\abisim} \bstate$.

Two \LTG{s} $\aLTGi{1} = \triple{\statesi{1},\labels}{\ainitialstatei{1}}{\transitionsi{1}}$ 
and $\aLTGi{1} = \triple{\statesi{2},\labels}{\ainitialstatei{2}}{\transitionsi{2}}$
are \emph{bisimilar} if there is a bisimulation on the underlying \LTS{s} that relates
the initial state $\ainitialstatei{1}$ of $\aLTGi{1}$ with the initial state $\ainitialstatei{2}$ of $\aLTGi{2}$. 

%



\section{The \lambda-calculus and the \lambdaletrec-calculus}
  \label{sec:letrec}
This section provides the definitions for terms in the \lambda-calculus and the
\lambdaletrec-calculus that we will be using throughout this work. We define CRS
signatures and a rewriting system $\llunfCRS$ for unfolding \lambdaletrec-terms
to obtain infinite \lambda-terms.

\begin{definition}[first-order representation of \lambdacal{} and \lambdaletreccal]
Let $\lvars = \{\avarnamei{0},\avarnamei{1},\avarnamei{2},\ldots\}$ be a set of variable names for $\slabs$\nb-abstrac\-tions,
and $\llrecvars$ be a set of names for recursion variables. 
We will use $\avar$, $\bvar$, $\cvar$ as syntactical variables for variable names bound by
$\slabs$-abstraction, and $\arecvar$, $\brecvar$, $\crecvar$
for recursion variable names; and similarly, we will use $\allter$, $\bllter$,
$\cllter$ for terms.
The set of \lambdaletrec{}-terms is inductively defined by the following grammar:
\begin{equation*}
\begin{array}{lllll}
(\textit{term})        & \allter    \sep{::=} \labs{\avar}{\allter}           & (\textit{abstraction}) \\
                       &            \sep{ | } \lapp{\allter}{\allter}         & (\textit{application}) \\ 
                       &            \sep{ | } \avar                           & (\textit{variable})    \\
                       &            \sep{ | } \letrec{\abindgroup}{\allter}   & (\textit{letrec})      \\
(\textit{binding group}) & \mathit{\abindgroup} \sep{::=} \arecvari{1}=\allter~\dots~\arecvari{n}=\allter   & (\textit{equations})   \\
                       &      \sep{   } (\arecvari{1},\dots,\arecvari{n} \in \llrecvars~\text{all distinct})
\end{array}
\end{equation*}
The set of \lambda-terms is defined by a reduced form of the grammar with the \textit{letrec} alternative and the \textit{binding group} rule left out.
\end{definition}

On this grammar the rules in Figure~\ref{fig:llunfCRS:informal} 
describe unfolding of \lambdaletrec{}-terms using an informal notation.
The names of the first four rules are chosen to reflect the kind of term that
resides inside of the $\textsf{in}$\nb-part of the $\sletrec$-term, which helps to see
that the rules are complete in the sense that every term of the form
$\letrec\abindgroup\allter$ is a redex.
\begin{figure}[tp]
  \[\begin{array}{rrcl}
    (\srulebp{\sunf}{@}):&
    \letrec{\abindgroup}{\lapp{\allteri{0}}{\allteri{1}}}
      &\sred&
    \lapp{(\letrec{\abindgroup}{\allteri{0}})}{(\letrec{\abindgroup}{\allteri{1}})}
    \\[1ex]
    (\srulebp{\sunf}{\slabs}):&
    \letrec{\abindgroup}{\labs{\avar}{\allteri{0}}}
      &\sred&
    \labs{\avar}{\letrec{\abindgroup}{\allteri{0}}}
    \\[1ex]
    (\srulebp{\sunf}{\smergeletrec}):&
    \letrec{\abindgroupi{0}}{\letrec{\abindgroupi{1}}{\allter}}
      &\sred&
    \letrec{\abindgroupi{0},\abindgroupi{1}}{\allter} 
    \\[1ex]
    (\srulebp{\sunf}{\rec}):&
    \letrec{\abindgroup}{\arecvari{i}}
      &\sred&
    \letrec{\abindgroup}{\allteri{i}} 
      \hspace*{2.5ex}
      \text{(if $\abindgroup$ is $\arecvari{1} = \allteri{1} \ldots \arecvari{n} = \allteri{n}$)}  
    \\[1ex]
    (\srulebp{\sunf}{\snil}):&
    \letrec{}{\allter}
      &\sred&
    \allter
    \\[1ex]
    (\srulebp{\sunf}{\sreduce}):&
    \letrec{\arecvari{1} = \allteri{1} \ldots \arecvari{n} = \allteri{n}}{\allter}
      &\sred&
    \letrec{\arecvari{j_1} = \allteri{j_1} \ldots \arecvari{j_{n'}} = \allteri{j_{n'}}}{\allter}  
    \\[0.25ex]
      & & & \hspace*{-3cm}\text{(if $\arecvari{j_1},\ldots,\arecvari{j_{n'}}$ are the recursion variables reachable from $\allter$)}
  \end{array}\]
  \caption{\label{fig:llunfCRS:informal}The rules of the \CRS~$\llunfCRS$ for unfolding \lambdaletrecterms\ in informal notation.}

  \vspace*{3ex}
  
  \begin{align*}
    (\srulebp{\sunf}{@}):\;\, &
        \letrecCRS{n}{\vec{\arecvar}}{\ametavari{1}{\vec{\arecvar}}, \ldots, \ametavari{n}{\vec{\arecvar}}, \lappCRS{\cmetavari{0}{\vec{\arecvar}}}{\cmetavari{1}{\vec{\arecvar}}}}
        \\& \hspace*{2ex}
        {} \red 
        \lappCRS{(\letrecCRS{n}{\vec{\arecvar}}{
                                                                               \ldots,\ametavari{n}{\vec{\arecvar}},\cmetavari{0}{\vec{\arecvar}}})}
                {\;(\letrecCRS{n}{\vec{\arecvar}}{
                                                                                 \ldots, \ametavari{n}{\vec{\arecvar}},\cmetavari{1}{\vec{\arecvar}}})}
    \displaybreak[0]\\[1.5ex]
    (\srulebp{\sunf}{\slabs}):\;\, &
        \letrecCRS{n}{\vec{\arecvar}}{\ametavari{1}{\vec{\arecvar}}, \ldots, \ametavari{n}{\vec{\arecvar}}, \labsCRS{\avar}{\cmetavar{\vec{\arecvar},\avar}}}
        \\& \hspace*{2ex} 
        {} \red
        \labsCRS{\avar}{\letrecCRS{n}{\vec{\arecvar}}{\ametavari{1}{\vec{\arecvar}}, \ldots, \ametavari{n}{\vec{\arecvar}}, \cmetavar{\vec{\arecvar},\avar}}}
    \displaybreak[0]\\[1.5ex]
    (\srulebp{\sunf}{\smergeletrec}):\;\, &
        \letrecCRS{n}{\vec{\arecvar}}{\ametavari{1}{\vec{\arecvar}}, \ldots, \ametavari{n}{\vec{\arecvar}}, 
                                      \letrecCRS{m}{\vec{\brecvar}}{\bmetavari{1}{\vec{\arecvar},\vec{\brecvar}}, \ldots, \bmetavari{m}{\vec{\arecvar},\vec{\brecvar}}
                                                                    \cmetavar{\vec{\arecvar},\vec{\brecvar}}}}
        \\
        & \hspace*{2ex} 
        {} \red
        \letrecCRS{n+m}{\vec{\arecvar}\vec{\brecvar}}{\ametavari{1}{\vec{\arecvar}}, \ldots, \ametavari{n}{\vec{\arecvar}},
                                                      \bmetavari{1}{\vec{\arecvar},\vec{\brecvar}}, \ldots, \bmetavari{m}{\vec{\arecvar},\vec{\brecvar}},
                                                      \cmetavar{\vec{\arecvar},\vec{\brecvar}}}
    \displaybreak[0]\\[1.5ex]
    (\srulebp{\sunf}{\rec}):\;\, &
      \letrecCRS{n}{\vec{\arecvar}}{\ametavari{1}{\vec{\arecvar}}, \ldots, \ametavari{n}{\vec{\arecvar}}, \arecvari{i}}
      {} \red
      \letrecCRS{n}{\vec{\arecvar}}{\ametavari{1}{\vec{\arecvar}}, \ldots, \ametavari{n}{\vec{\arecvar}}, \ametavari{i}{\vec{\arecvar}}}   
    \displaybreak[0]\\[1.5ex]
    (\srulebp{\sunf}{\sunfnil}):\;\, &
      \letrecemptyCRS{\scmetavar}
        {} \red
      \scmetavar
    \displaybreak[0]\\[1.5ex]
    (\srulebp{\sunf}{\sreduce}):\;\, &
        \letrecCRSbare{n}{\arecvari{1}\ldots\arecvari{n}}
                            {\bigl(  \ametavari{1}{\arecvari{i_{1,1}},\ldots,\arecvari{i_{m_1,1}}},
                                     \ldots, \ametavari{n}{\arecvari{i_{1,n}},\ldots,\arecvari{i_{m_n,n}}},}
                                     \\
        & \phantom{\letrecCRSbare{n}{\arecvari{1}\ldots\arecvari{n}}
                            {\bigr(  \ametavari{1}{\arecvari{i_{1,1}},\ldots,\arecvari{i_{1,m_1}}},\ldots,}} \,
                                     \cmetavar{\arecvari{i_{1,n+1}},\ldots,\arecvari{i_{m_{n+1},n+1}}}\bigl)
        \\[1ex]
        & \hspace*{2ex} {} \red
        \letrecCRSbare{n'}{\arecvari{j_1}\ldots\arecvari{j_{n'}}}
                            {\bigl(  \ametavari{j_1}{\allteri{1,j_1},\ldots,\allteri{m_{j_1},j_1}},
                                     \ldots}
                                     \\
        & \phantom{\letrecCRSbare{n}{\arecvari{j_1}\ldots\arecvari{j_{n'}}}{}} 
                                     \ldots                             
                                     \ametavari{j_{n'}}{\allteri{1,j_{n'}},\ldots,\allteri{m_{j_{n'}},j_{n'}}},
                                     \cmetavar{\arecvari{1,i_{n+1}},\ldots,\arecvari{i_{m_{n+1},n+1}}} \bigl)
  \end{align*}
  \begin{center}
  \vspace*{-1ex}
  
  \hspace*{10ex} 
  \begin{minipage}[c]{350pt}
    The terms that are denoted by the symbol~$\allter$ with different indices and appear
    on the right-hand side of a rule from the scheme ($\srulebp{\sunf}{\sreduce}$) are defined as:
  \begin{align*}
    \allteri{y,j_x} & \defdby \begin{cases}
                         f_{i_{y,x}}     & \text{if $i_{y,x}\in\setexp{j_1,\ldots,j_{n'}}$}  \\
                         \labs{\fvar}{\fvar}     & \text{else}
                       \end{cases}
                     &  &     
                     \text{(for all $1\le x\le n'$ and $1\le y\le m_{j_x}$),}
  \end{align*}                    
  relative to the indices $ j_1, \ldots, j_{n'}$ 
  with $1\le j_1 < \ldots < j_{n'}\le n$ that are  precisely the indices that are reachable from the indices in 
  $\setexp{i_{1,n+1},\ldots,i_{m_{n+1},n+1}}$
  (the indices $l$ of $\arecvari{l}$ that are applied to the variable $\sametavar$)
  via the
  binary reachability relation $\sareachrel$ on $\setexp{1,\ldots,n}$ that is 
  defined, for all $x,y\in\setexp{1,\ldots,n}$, by:
  $x \areachrel y$ if and only if there exist a $z\in\setexp{1,\ldots,m_x}$
  with $i_{z,x} = y$,
  formally that is:
  $ \setexp{j_1, \ldots, j_{n'}} = \descsetexpnormalsize{l}{i \areachrel^* l \text{ for some $i\in\setexp{i_{1,n+1},\ldots,i_{m_{n+1},n+1}}$}}$.
  \end{minipage}
  \end{center}
  \vspace*{2ex}
  
\caption{\label{fig:llunfCRS}The rules of the \CRS~$\llunfCRS$ for unfolding \lambdaletrecterms.}
\end{figure}

We will use higher-order notation and rules to reason about the
\lambdacalculus\ and the \lambdaletreccalculus, which immediately validates our
results for $\alpha$-equivalence classes instead of just preterms.

\begin{remark}[infinitary rewriting]
We use \CRS{s} as a rewriting framework since until now infinitary rewriting
theory has only been developed for \CRS{s} yet \cite{kete:simo:2009, kete:simo:2010, kete:simo:2011}.
\end{remark}

For formulating the above rules as a CRS, we provide \CRS\nb-signatures for \lambdacal\ and \lambdaletreccal.  

\begin{definition}[CRS signatures for $\lambdacal$ and $\lambdaletreccal$]
    \normalfont\label{def:sigs:lambdacal:lambdaletrec:CRS}  
  The \CRS\nb-signature for $\lambdacal$ consists of the set $\siglcCRS = \setexp{\slappCRS,\,\slabsCRS}$ 
  where $\slappCRS$ is a binary and $\slabsCRS$ a unary function symbol.
  The \CRS\nb-signature~$\sigllcCRS$ consists of the countably infinite set 
  $\sigllcCRS = \siglcCRS \cup \descsetexp{ \sletCRS{n}, \, \srecinCRS{n} }{ n\in\nats }$ 
  of function symbols, where, for $n\in\nats$, the symbols $\sletCRS{n}$ and $\srecinCRS{n}$ have arity~$n$. 

By $\Ter{\lambdacal}$ and $\Ter\lambdaletreccal$ we denote the set of
\emph{closed} CRS\nb-terms over $\siglcCRS$ and $\sigllcCRS$ respectively, with the
restriction that
\begin{itemize}
\item the symbols $\sletCRS{n}$ and $\srecinCRS{n}$ only occur as patterns of
the form
\\$\letrecCRS{n}{\arecvari1\dots\arecvari{n}}{\ateri1,\dots,\ateri{n},\ater}$
for some terms $\ateri1,\dots,\ateri{n},\ater \in \Ter\lambdacal$
\item and that otherwise a CRS abstraction can only occur directly beneath
an $\slabsCRS$-symbol.
\end{itemize}
$\Ter\lambdacal$ and $\Ter\lambdaletreccal$ will later be specified more
formally in Definition~\ref{def:Ter-inflambda} and Definition~\ref{def:Ter-lambdaletrec}.

  We will use $\ater$, $\bter$, $\cter$ used as syntactical variables for terms in $\Ter\lambdacal$. 
  And by $\Ter{\lambdaletreccal}$ we denote the set of CRS\nb-terms over $\siglcCRS$,
  for which we will use the symbols $\allter$, $\bllter$, $\cllter$ as syntactical variables.
\end{definition}

\begin{example}\label{ex:crs-notation:entangled}
The term in Example~\ref{ex:entangled} in CRS notation:
\\$\labsCRS{a}{\labsCRS{b}{\lappCRS{\labsCRS{c}{\lappCRS{\labsCRS{d}{\lappCRS{\dots}{c}}}{b}}}{a}}}$
\end{example}

\begin{example}\label{ex:crs-notation:equiv-letrec-terms}
The terms in Example~\ref{ex:equiv-letrec-terms} in CRS notation:
\\$\labsCRS{x}{\letrecCRS{1}{r}{\lappCRS{\lappCRS:x}{r}},r}$
\\$\labsCRS{x}{\letrecCRS{1}{r}{\lappCRS{\lappCRS:x}{\lappCRS{\lappCRS:x}{r}},r}}$
\end{example}

\begin{definition}[terms in \inflambdacal]
We denote by $\Ter{\inflambdaprefixcal}$ the set of finite and infinite
\CRS\nb-terms over the signature $\siglcCRS$. Note that the set of infinite
\lambda-terms subsume finite \lambda-terms, thus whenever we speak of an
infinite \lambda-term in fact we refer to a \emph{potentially} infinite
\lambda-term.
\end{definition}

\begin{definition}[the \CRS~$\llunfCRS$ for unfolding in \lambdaletreccal]\label{def:llunfCRS}
  The \emph{\CRS}~$\llunfCRS$ \emph{for unfolding $\lambdaletrec$\nb-terms}\/
  is the \CRS\ for terms over the signature $\sigllcCRS$ (see Definition~\ref{def:sigs:lambdacal:lambdaletrec:CRS}) 
  with the rule schemes in Figure~\ref{fig:llunfCRS} in which $n$ varies among numbers in $\posnats\,$.  

  The \ARS\ induced by the \CRS\ $\llunfCRS$ will be denoted by $\llunfARS$.
  We write $\sunfoldred$ for the rewrite relation induced by $\llunfCRS$.
  And by $\sunflappred\,$, $\sunflabsred\,$, $\sunfnilred\,$, $\sunfrecred\,$,
  $\sunfletrecred\,$, and $\sunfreducered$ we denote the rewrite relations of
  both $\llunfCRS$ and $\llunfARS$ that are induced by the rules
  $\srulebp{\sunf}{@}\,$, $\srulebp{\sunf}{\slabs}\,$,
  $\srulebp{\sunf}{\sunfnil}\,$, $\srulebp{\sunf}{\rec}\,$,
  $\srulebp{\sunf}{\smergeletrec}\,$, and
  $\srulebp{\sunf}{\sreduce}\,$, respectively.

\end{definition}

\begin{remark}[motivation of the rules $\srulebp{\sunf}{\sreduce}$ and $\srulebp{\sunf}{\snil}$ in $\llunfCRS$]\label{rem:reduce:nil:motivation}
  The purpose of taking up the rule $\srulebp{\sunf}{\sreduce}$, together with the rule $\srulebp{\sunf}{\snil}$ into the \CRS~$\llunfCRS$
  consists in preventing unbounded growth of binding groups during unfolding. 
  Consider for instance the outermost rewrite sequence
  on the term ${\letrec{f=\letrec{g={\lapp fg}}{g}}{f}}$ shown in Figure~\ref{fig:unbounded:growth:bindgroup}.
\begin{figure}
\begin{flushleft}
\hspace*{-3ex}
$  
\begin{array}{ll}
              & {\letrec{f=\letrec{g={\lapp fg}}{g}}{f}}
\\[2mm]
\sunfrecred & {\letrec{f=\letrec{g={\lapp fg}}{g}}{\letrec{g'={\lapp f{g'}}}{g'}}}
\\[2mm]
\sunfletrecred & {\letrec{
\begin{array}{l}
f=\letrec{g={\lapp fg}}{g}
\\
g'={\lapp f{g'}}
\end{array}
}{g'}}
\\[3mm]
\sunfrecred & {\letrec{
\begin{array}{l}
f=\letrec{g={\lapp fg}}{g}
\\
g'={\lapp f{g'}}
\end{array}
}{{\lapp f{g'}}}}
\\[3mm]
\sunflappred & \lapp{\Bigl(\,\letrec{
\begin{array}{l}
f=\letrec{g={\lapp fg}}{g}
\\
g'={\lapp f{g'}}
\end{array}
}{f}\,\Bigr)}
{\Bigl(\,\letrec{
\begin{array}{l}
f=\letrec{g={\lapp fg}}{g}
\\
g'={\lapp f{g'}}
\end{array}
}{g'}\,\Bigr)}
\\[3mm]
\sunfrecred &
\lapp{\Bigr(\,\letrec{\begin{array}{l}
  f=\letrec{g={\lapp fg}}{g}
  \\
  g'={\lapp f{g'}}
\end{array}}
{(\letrec{g''={\lapp f{g''}}}{g''})}\Bigr)}
{\Bigl(\,\letrec{
\begin{array}{l}
f=\letrec{g={\lapp fg}}{g}
\\
g'={\lapp f{g'}}
\end{array}
}{g'}\,\Bigr)}
\\[4mm]
\sunfletrecred&
\lapp
{\Biggl(\,\letrec{
\begin{array}{l}
f=\letrec{g={\lapp fg}}{g}
\\
g'={\lapp f{g'}}
\\
g''={\lapp f{g''}}
\end{array}
}{g''}\,\Biggr)}
{\Bigl(\,\letrec{
\begin{array}{l}
f=\letrec{g={\lapp fg}}{g}
\\
g'={\lapp f{g'}}
\end{array}
}{g'}\,\Bigr)}
\end{array}
$
\end{flushleft}
\caption{\label{fig:unbounded:growth:bindgroup}
         Unbounded growth of binding groups indicated by the initial segment of 
         an infinite $\sunfoldred$\nb-rewrite sequence 
                                                       that does not contain $\sunfreducered$\nb-steps.}
\end{figure}
    Applications of the rule $\srulebp{\sunf}{\sreduce}$ are able to 
    remove unreachable equations in binding groups. 
  
    While restricting the size of binding groups during unfolding is a sensible
    constraint on the unfolding process, it is not strictly necessary to define the unfolding of a \lambdaletrec\nb-term.
    We  could also use a rule
    $\srulebp\sunf\sfree$ in place of $\srulebp{\sunf}{\sreduce}$ and
    $\srulebp{\sunf}{\snil}$ where $\srulebp\sunf\sfree$ is (informally) defined as: 
\[
    (\srulebp{\sunf}{\sfree}):
    \letrec{\arecvari{1} = \allteri{1} \ldots \arecvari{n} = \allteri{n}}{\allter}
    \red
    \allter
    \hspace{5mm}
    \text{(if $\arecvari1,\ldots,\arecvari{n}$ do not occur in $\allter$)}
\]
  which allows steps like
  $ \labs{\avar}{\letrec{\abindgroup}{\avar}} \red \labs{\avar}{\avar}$,
  and thus allows to move bound variables out of the $\textsf{in}$\nb-part of $\sletrec$\nb-expressions. 
  (Note that in $\llunfCRS$ such a step can be simulated by $\sunfreducered$\nb-step followed by a $\sunfnilred$\nb-step.)
  We will, however, embed the unfolding rules into other rewriting
  systems of which we wish to perform unfolding in a lazy way such that the 
  number of derivable terms is bounded. the approach with the
  $\srulebp\sunf\sfree$-rule runs counter to that idea, as it easily leads to an
  unbounded growth of binding groups.
\end{remark}

\begin{remark}[shape of the rule $\srulebp{\sunf}{\sreduce}$ in $\llunfCRS$]\label{rem:reduce:shape}
  The rewrite rules $\srulebp{\sunf}{\sreduce}$ of $\llunfCRS$ are not `fully extended' as 
  the metavariables $\sametavari{i}$ occurring in the left-hand side of the rule 
  do not have to be instantiated with all recursion variables 
  $\arecvari{1}$, \ldots, $\arecvari{n}$ bound in the abstraction prefix. 
  This is due to the design of this rule scheme in which reachability of a recursion variable $\arecvari{i}$
  from the $\textsf{in}$\nb-part of the formalised $\sletrec$\nb-term
  is defined by extracting from the format of the specific instance which recursion variables $\arecvari{1}$, \ldots, $\arecvari{n}$
  occur in which of the metavariables~$\sametavari{1}$, \ldots, $\sametavari{n}$. 
  
  In applications of a rule $\srulebp{\sunf}{\sreduce}$, unreachable recursion variables are removed from the abstraction prefix
  on the right hand side. Since the format of \CRS{s} requires that corresponding metavariables
  on the left- and on the right-hand side of a rule must have the same arity,
  occurrences of unreachable recursion variables as arguments for metavariables 
  describing the binding group of a reachable recursion variable
  cannot simply disappear on the right-hand side.
  In the definition above, a specific, but arbitrarily chosen \lambdaterm, namely $\labs{\fvar}{\fvar}$,
  is substituted for such occurrences of unreachable recursion variables.
  This does not interfere with the unfolding operation defined later.
\end{remark}

Furthermore we profit from the property of normal forms w.r.t.\
$\srulebp{\sunf}{\sreduce}$ that the set of free variables of the
\lambdaletrec-term corresponds to the set of free variables of its unfolding,
which we will utilise in the mention rewriting systems later on.
  
\begin{definition}[reduced \lambdaletrecterms]\label{def:reduced:llter}
  A \lambdaletrec\nb-term~$\allter$ is called \emph{reduced} 
  if it is a normal form with respect to $\sunfnilred$ and $\sunfreducered$.
  If a \lambdaletrecterms{}~$\bllter$ reduces to a reduced term~$\allter$
  by exclusively $\sunfnilred$- and $\sunfreducered$-steps,
  then $\bllter$ is called a \emph{reduced form} of~$\allter$.
\end{definition}

\begin{proposition}[confluence of \sletrec-unfolding]\label{prop:unf-confluence}
$\llunfCRS$ is confluent.
\end{proposition}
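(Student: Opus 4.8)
The plan is to regard $\llunfCRS$ as a left-linear but \emph{non-orthogonal} higher-order rewriting system, to compute its critical pairs and show them joinable, and then to lift this local confluence to genuine confluence by a termination-free criterion. The last step is forced on us because $\llunfCRS$ is \emph{not} terminating: the rule $\srulebp{\sunf}{\rec}$ keeps unfolding recursion variables, as the infinite $\sunfoldred$-sequence in Figure~\ref{fig:unbounded:growth:bindgroup} already shows, so Newman's Lemma is unavailable.

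First I would check that every rule scheme in Figure~\ref{fig:llunfCRS} is left-linear, i.e.\ no metavariable ($\sametavar$, $\sbmetavar$, $\scmetavar$, \dots) occurs twice in a left-hand side. Then I would enumerate the critical peaks. Every redex sits at a $\sletrec$-node, and the four structural rules $\srulebp{\sunf}{@}$, $\srulebp{\sunf}{\slabs}$, $\srulebp{\sunf}{\smergeletrec}$, $\srulebp{\sunf}{\rec}$ are discriminated by the root symbol of the \textsf{in}-part, hence pairwise non-overlapping — except that $\srulebp{\sunf}{\smergeletrec}$ overlaps with itself on a triply-nested $\sletrec$. The remaining overlaps all involve the two administrative rules $\srulebp{\sunf}{\snil}$ (applicable whenever the binding group is empty) and $\srulebp{\sunf}{\sreduce}$ (applicable at every $\sletrec$-node): each overlaps at the root with the structural rules, and $\srulebp{\sunf}{\sreduce}$ additionally with $\srulebp{\sunf}{\snil}$ and with itself.

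I would then show each critical pair joinable. The $\srulebp{\sunf}{\smergeletrec}$/$\srulebp{\sunf}{\smergeletrec}$ pair reconverges after one further $\sunfletrecred$-step on each side (associativity of binding-group concatenation); the $\srulebp{\sunf}{\snil}$-peaks close by performing the structural step and then cleaning up the residual empty $\sletrec$s; and the $\srulebp{\sunf}{\sreduce}$-peaks close because reachability along $\sareachrel$ is transitive and monotone, so garbage-collecting the components of a structural reduct yields the same binding groups as garbage-collecting first and contracting afterwards. The subtle point is that $\srulebp{\sunf}{\sreduce}$ replaces occurrences of removed recursion variables by the dummy term $\labs{\fvar}{\fvar}$ (forced by the fixed-arity discipline of \CRS{s}); one must verify that these replacements coincide on the two sides of each diagram, which holds once the garbage-collection steps are carried out in matching order. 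Because these critical pairs are joinable but not closable in a single development — in the $\srulebp{\sunf}{\smergeletrec}$- and $\srulebp{\sunf}{\sreduce}$-peaks neither reduct rewrites to the other — orthogonality- and development-closedness-based criteria do not apply, and I would instead invoke the decreasing-diagrams method: label each step by its rule together with the depth of the contracted redex, order the labels well-foundedly (administrative steps below structural ones, and by depth within a rule), and check that every local peak is completed by a decreasingly labelled valley.

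An alternative organisation is Hindley--Rosen: split $\sunfoldred$ into the garbage-collection relation ${\sunfnilred} \cup {\sunfreducered}$, which is terminating (each step strictly decreases a measure counting binding equations and $\sletrec$-nodes) and locally confluent, hence confluent by Newman, and the unfolding relation ${\sunflappred} \cup {\sunflabsred} \cup {\sunfletrecred} \cup {\sunfrecred}$; one then proves that the two relations commute and concludes by the Hindley--Rosen Lemma. Either way, the hard part is the passage from local to global confluence, since $\srulebp{\sunf}{\rec}$ can regenerate nested $\sletrec$s and thus fresh merge- and reduce-redexes without bound: in the decreasing-diagrams route the obstacle is to find a labelling under which \emph{all} local diagrams — including the $\srulebp{\sunf}{\sreduce}$-diagrams with their reachability-driven reformatting and dummy $\labs{\fvar}{\fvar}$ substitutions — come out decreasing, and in the modular route the obstacle is the commutation of unfolding with garbage collection.
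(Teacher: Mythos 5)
There is a genuine gap at the heart of your decreasing-diagrams plan, and it is precisely the one the paper's authors report having run into. You propose to label \emph{single} steps by their rule together with the $\sletrec$-depth of the contracted redex, ordering administrative rules below structural ones and by depth within a rule. The paper explicitly records that this route fails: no ordering on rules, nor any lexicographic combination with depth, makes all elementary diagrams decreasing, because a structural step (e.g.\ $\srulebp{\sunf}{@}$ or $\srulebp{\sunf}{\rec}$, which copies the whole binding group) \emph{duplicates} nested redexes, so that joining the peak requires contracting several residuals carrying the \emph{same} label --- a multi-step where the decreasing-diagrams format permits at most one step of that label. You even notice that the pairs are ``joinable but not closable in a single development'', yet the method you then invoke cannot absorb the duplication either. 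The paper's resolution, which your proposal lacks, is to change the \emph{objects} of the argument: it proves confluence of the abstract rewriting system whose steps are the \emph{parallel} relations $\sparredp{\wDepth\arulename{d}}$ (simultaneous contraction of $\arulename$-redexes at $\sletrec$-depth $d$), ordered by depth alone; since $\sredp\sunf \subseteq \sredp\aARS \subseteq \smredp\sunf$, confluence transfers back. With parallel steps the problematic multi-step of duplicated residuals collapses into a single step, and every local peak fits the decreasing shape of Figure~\ref{elem_dia} (with a tiling argument sequentialising the parallel steps per context hole, and a reordering of residual steps at depths $e$ and $e-1$). Without this move, your step ``check that every local peak is completed by a decreasingly labelled valley'' is exactly where the proof breaks down.

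Two secondary points. First, your critical-pair enumeration is incomplete: besides the self-overlap of $\srulebp{\sunf}{\smergeletrec}$, that rule's left-hand side contains a nested $\sletrec$-pattern, so it overlaps \emph{non-trivially at depth one} with every other rule ($\srulebp{\sunf}{\slabs}$, $\srulebp{\sunf}{@}$, $\srulebp{\sunf}{\rec}$, $\srulebp{\sunf}{\snil}$, $\srulebp{\sunf}{\sreduce}$); these peaks form the first block of diagrams in the paper's appendix and must be joined explicitly (the $\srulebp{\sunf}{\snil}$-peak there even needs a depth drop from $e$ to $e-1$, which your uniform labelling would have to accommodate). Second, the Hindley--Rosen alternative is only named, not executed: its unfolding half $({\sunflappred}\cup{\sunflabsred}\cup{\sunfletrecred}\cup{\sunfrecred})$ is itself non-orthogonal and duplicating, so its confluence re-raises the very problem you set out to solve, and commutation with garbage collection is delicate because $\srulebp{\sunf}{\rec}$-steps can \emph{create} $\srulebp{\sunf}{\sreduce}$-redexes (substituting a definition into the $\textsf{in}$-part can make bindings unreachable); local commutation plus termination of one relation does not suffice, so you would need a strong-commutation analysis that is nowhere sketched.
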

  
\begin{proof}
We give a proof based on decreasing diagrams \cite[Sec.~14.2]{terese:2003} by
showing that \emph{parallel} steps are confluent. The proof involves a
comprehensive critical-pair analysis. It can be found in
Appendix~\ref{app:conf_proof} on page~\pageref{app:conf_proof}.
\end{proof}

Note that this confluence result concerns a rewriting system for unfolding
$\stxtlambdaletreccal$-terms, and therefore does not conflict
with non-confluence observations concerning versions of cyclic $\lambda$\nb-cal\-culi
with unfolding rules as well was with $\beta$\nb-reduction \cite{ario:klop:1997}.

\begin{example}\label{ex:llunfCRS}
$\llunfCRS$ when applied to $\letrec{\arecvar =
\labs{\avar\bvar}{\lapp{\lapp{\arecvar}{\bvar}}{\avar}}}{\arecvar}$
admits the following rewrite sequence:
\begin{align*}
& \letrec{\arecvar = \labs{\avar\bvar}{\lapp{\lapp{\arecvar}{\bvar}}{\avar}}}{\arecvar}
\alignbreak\sunfrecred ~&
\letrec{\arecvar = \labs{\avar\bvar}{\lapp{\lapp{\arecvar}{\bvar}}{\avar}}}{\labs{\avar\bvar}{\lapp{\lapp{\arecvar}{\bvar}}{\avar}}}
\alignbreak\sunflabsred ~&
\labs\avar{\letrec{\arecvar = \labs{\avar\bvar}{\lapp{\lapp{\arecvar}{\bvar}}{\avar}}}{\labs{\bvar}{\lapp{\lapp{\arecvar}{\bvar}}{\avar}}}}
\alignbreak\sunflabsred ~&
\labs{\avar\bvar}{\letrec{\arecvar = \labs{\avar\bvar}{\lapp{\lapp{\arecvar}{\bvar}}{\avar}}}{\lapp{\lapp{\arecvar}{\bvar}}{\avar}}}
\alignbreak\sunflappred ~&
\labs{\avar\bvar}{\lapp
  {(\letrec{\arecvar = \labs{\avar\bvar}{\lapp{\lapp{\arecvar}{\bvar}}{\avar}}}{\lapp{\arecvar}{\bvar}})}
  {(\letrec{\arecvar = \labs{\avar\bvar}{\lapp{\lapp{\arecvar}{\bvar}}{\avar}}}{\avar})}
}
\alignbreak\sunfreducered ~&
\labs{\avar\bvar}{\lapp
  {(\letrec{\arecvar = \labs{\avar\bvar}{\lapp{\lapp{\arecvar}{\bvar}}{\avar}}}{\lapp{\arecvar}{\bvar}})}
  {(\letrec{}{\avar})}
}
\alignbreak\sunfnilred ~&
\labs{\avar\bvar}{\lapp
  {(\letrec{\arecvar = \labs{\avar\bvar}{\lapp{\lapp{\arecvar}{\bvar}}{\avar}}}{\lapp{\arecvar}{\bvar}})}
  {\avar}
}
\alignbreak\sunflappred ~&
\labs{\avar\bvar}{\lapp
  {\lapp
      {(\letrec{\arecvar = \labs{\avar\bvar}{\lapp{\lapp{\arecvar}{\bvar}}{\avar}}}{\arecvar})}
      {(\letrec{\arecvar = \labs{\avar\bvar}{\lapp{\lapp{\arecvar}{\bvar}}{\avar}}}{\bvar})}
    }
  {\avar}
}
\alignbreak\sunfreducered ~&
\labs{\avar\bvar}{\lapp
  {\lapp
      {(\letrec{\arecvar = \labs{\avar\bvar}{\lapp{\lapp{\arecvar}{\bvar}}{\avar}}}{\arecvar})}
      {(\letrec{}{\bvar})}
    }
  {\avar}
}
\alignbreak\sunfnilred ~&
\labs{\avar\bvar}{\lapp{\lapp
      {(\letrec{\arecvar = \labs{\avar\bvar}{\lapp{\lapp{\arecvar}{\bvar}}{\avar}}}{\arecvar})}
      {\bvar}
    }
  {\avar}
}
\alignbreak\sunfrecred ~&
\labs{\avar\bvar}{\lapp{\lapp
      ~\dots~
      {\bvar}
    }
  {\avar}
}
\end{align*}
\end{example}

However, not every \lambdaletrecterm unfolds to an infinite \lambdaterm\ 
in the sense that it has an infinite \lambdaterm\ as its infinite $\sunfoldred$\nb-normal form.
For example, 
the \lambdaletrecterm~$\allter = \letrec{\arecvar = \arecvar}{\arecvar}$ admits only rewrite sequences of the form
$ \allter \unfrecred \allter \unfrecred \ldots $, and hence does not unfold to an infinite \lambdaterm.
Terms like this are unproductive in the sense that during all outermost-fair rewrite sequences 
the production of an infinite \lambdaterm\ stagnates due to an unproductive cycle. 
%
\begin{definition}[$\llunfCRS$-productivity]
Let $\allter$ be a \lambdaletrec\nb-term $\allter$.
We say that $\allter$ is \emph{$\llunfCRS$\nb-productive} 
if the following statement holds:
\begin{itemize}
\item $\allter$ does not have a $\sunfoldmred$\nb-reduct that
is the source of an infinite $\sunfoldred$\nb-rewrite sequence consisting exclusively of
outermost steps with respect to $\sunfrecred$, $\sunfnilred$, $\sunfletrecred$, or $\sunfreducered$.
\end{itemize}
\end{definition}

\begin{lemma}\label{lem:outermost-fair-sequences}
Let $\allter$ be a \lambdaletrec\nb-term of the form
$\letrec\abindgroup\bllter$.
Then exactly one of the following statements hold:
\begin{itemize}
\item All maximal outermost-fair $\llunfCRS$-rewriting sequences on $\allter$
      solely contain terms of the form $\letrec\bbindgroup\cllter$.
\item All maximal outermost-fair $\llunfCRS$-rewriting sequences on $\allter$
      only have finitely many terms of the form $\letrec\bbindgroup\cllter$.
\end{itemize}

\todo{ist das nicht einfacher?}
Alternatively:\\
Then the following statements holds either for all maximal outermost-fair
$\llunfCRS$-rewriting sequences or for none: the sequence contains only terms of the form $\letrec\bbindgroup\cllter$.
\end{lemma}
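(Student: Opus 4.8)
The plan is to reduce the dichotomy to a strategy\nobreakdash-independent property of $\allter$ and then use outermost\nobreakdash-fairness to pin down the behaviour of every maximal sequence.

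First I would record two structural observations. \textbf{Monotonicity.} Every rule of $\llunfCRS$ has an $\sletrec$ at the root of its left\nobreakdash-hand side, so only a step contracting the root can change the top symbol of a term, and that requires the root to be $\sletrec$\nobreakdash-headed; moreover a $\slabs$\nobreakdash-, $@$\nobreakdash-, or variable\nobreakdash-headed term has no root redex. Hence once a reduct is not $\sletrec$\nobreakdash-headed it stays so, and in any rewrite sequence the terms of the form $\letrec\bbindgroup\cllter$ form an initial segment. Consequently the first disjunct means ``the root stays $\sletrec$\nobreakdash-headed throughout'' and the second means ``the root becomes non\nobreakdash-$\sletrec$\nobreakdash-headed after finitely many steps''; these are mutually exclusive, and it remains to prove that which one holds is independent of the chosen sequence. \textbf{Unique outermost redex.} Since every $\sletrec$\nobreakdash-headed term is a redex (completeness of the rule set, as noted after Figure~\ref{fig:llunfCRS:informal}) and every other redex is a proper subterm of the root redex, the root is the unique outermost redex of an $\sletrec$\nobreakdash-headed term. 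Therefore outermost\nobreakdash-fairness forces the root redex to be contracted in finite time; in particular a maximal outermost\nobreakdash-fair sequence whose root stays $\sletrec$\nobreakdash-headed forever performs infinitely many root steps.

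Next I would isolate the invariant. Call $\allter$ \emph{resolvable} if $\allter \unfoldmred \bllter$ for some non\nobreakdash-$\sletrec$\nobreakdash-headed $\bllter$. I claim resolvability is invariant under $\unfoldred$ in both directions. Backwards this is immediate: if $\allter \unfoldred \allter'$ and $\allter' \unfoldmred \bllter$ with $\bllter$ non\nobreakdash-$\sletrec$\nobreakdash-headed, then $\allter \unfoldmred \bllter$. Forwards I would use confluence (Proposition~\ref{prop:unf-confluence}): from $\allter \unfoldred \allter'$ and a witness $\allter \unfoldmred \bllter$ we obtain a common reduct $\cllter$ of $\allter'$ and $\bllter$; by monotonicity $\cllter$ is non\nobreakdash-$\sletrec$\nobreakdash-headed (being a reduct of $\bllter$), so $\allter' \unfoldmred \cllter$ witnesses resolvability of $\allter'$. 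This invariance gives the easy half at once: if $\allter$ is \emph{not} resolvable, then no reduct of $\allter$ is non\nobreakdash-$\sletrec$\nobreakdash-headed, so every maximal outermost\nobreakdash-fair sequence consists solely of terms $\letrec\bbindgroup\cllter$.

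It then remains to show the hard half: if $\allter$ is resolvable, every maximal outermost\nobreakdash-fair sequence reaches a non\nobreakdash-$\sletrec$\nobreakdash-headed term. I would measure progress by a head distance $h(\allter) \in \nats \cup \setexp{\infty}$, the number of root steps in the deterministic \emph{head reduction} (a fixed policy contracting the root: a $\slabs$/$@$ step when the in\nobreakdash-part is an abstraction/application, a $\sunfrecred$ step when it is a recursion variable, a $\sunfletrecred$ merge when it is a nested $\sletrec$, and $\sunfreducered$/$\sunfnilred$ otherwise) until a non\nobreakdash-$\sletrec$\nobreakdash-headed term appears, with $h=\infty$ if this never happens. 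Two ingredients are needed: (a) resolvability implies $h(\allter)<\infty$ (\emph{head normalisation}); and (b) along any reduction, inner steps do not increase $h$, while root steps that keep the term $\sletrec$\nobreakdash-headed either strictly decrease $h$ (namely $\sunfrecred$, $\sunfletrecred$, and the edge\nobreakdash-case $\sunfnilred$) or strictly shrink the finite binding group ($\sunfreducered$). Ingredient (b) rests on \emph{head\nobreakdash-monotonicity}: by the monotonicity argument above the head symbol of every subterm can only become ``more resolved'' under $\unfoldred$, so reducing a definition or the in\nobreakdash-part can only make the head reduction expose a constructor sooner. Granting (a) and (b), suppose a maximal outermost\nobreakdash-fair sequence kept the root $\sletrec$\nobreakdash-headed forever. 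It performs infinitely many root steps with only finitely many intervening inner steps; $\sunfreducered$ steps can occur only finitely often between two decreasing steps (the binding group shrinks and is enlarged only by the decreasing step $\sunfletrecred$), so infinitely many decreasing steps occur, driving $h$ below $0$ --- impossible. Hence $h$ reaches $0$, the in\nobreakdash-part becomes abstraction\nobreakdash-, application\nobreakdash-, or free\nobreakdash-variable\nobreakdash-headed, monotonicity keeps it so, and fairness forces the corresponding root step, which resolves the root. This contradiction, together with the easy half, shows that resolvability of $\allter$ decides the dichotomy.

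The main obstacle is ingredient (a), head normalisation: that a term with \emph{some} non\nobreakdash-$\sletrec$\nobreakdash-headed reduct already has one reachable by root (head) steps alone, i.e.\ inner steps are never essential for resolving the head. I expect to prove this either by a standardisation/projection argument built on the confluence proof of Proposition~\ref{prop:unf-confluence} (projecting a witnessing reduction over a head step while controlling a suitable measure), or directly from the shape of the six rules: if the deterministic head reduction diverges, then the head\nobreakdash-dependency chain through recursion variables and nested $\sletrec$s never exposes a constructor, whence by confluence no reduct is non\nobreakdash-$\sletrec$\nobreakdash-headed and $\allter$ is not resolvable. The remaining bookkeeping --- finiteness of $\sunfreducered$ runs and the precise interaction of fairness with inner steps --- is routine once head\nobreakdash-monotonicity is in place.
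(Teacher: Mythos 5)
The paper states Lemma~\ref{lem:outermost-fair-sequences} without any proof at all (it is simply asserted, flagged with a todo-remark, and then used in the proof of Lemma~\ref{lem:unfolding}), so there is no proof of record to compare you against; judged on its own terms, your proposal goes further than the paper does, but it is not yet a proof. The parts you do prove are correct: the monotonicity observation (only $\sletrec$\nobreakdash-headed terms are root redexes and inner steps never touch the root symbol, so the $\sletrec$\nobreakdash-headed terms of any sequence form an initial segment), the fact that the root of an $\sletrec$\nobreakdash-headed term is its unique outermost redex, so that outermost-fairness forces root steps in finite time, and the easy half of the dichotomy: if $\allter$ has no non-$\sletrec$\nobreakdash-headed reduct, every sequence consists solely of terms $\letrec\bbindgroup\cllter$ --- this follows from the definition of resolvability alone (your forward invariance via Proposition~\ref{prop:unf-confluence} is correct but is not even needed for this half). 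Mutual exclusivity of the two disjuncts is also settled by your observations.

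The genuine gap is exactly where you locate it, and it is not closed by either of your sketched routes: the entire weight of the hard half rests on ingredient~(a), head normalisation (a resolvable term is resolved by root steps alone), together with the quantitative part of~(b), that inner steps and root $\sunfreducered$\nobreakdash-steps do not increase the head distance $h$. Confluence alone does not yield~(a): joining the diverging head reduction $\allter = H_0 \unfoldred H_1 \unfoldred \ldots$ with a witnessing reduction $\allter \unfoldmred \bllter$ only shows that every $H_i$ is again resolvable; it gives no bound on $h(H_i)$ and hence no contradiction. What is actually needed is a projection or standardisation lemma for $\llunfCRS$: given $\allter \unfoldmred \bllter$ with $\bllter$ non-$\sletrec$\nobreakdash-headed and a root step $\allter \unfoldred \allter'$, construct $\allter' \unfoldmred \bllter'$ with $\bllter'$ non-$\sletrec$\nobreakdash-headed and strictly fewer root steps. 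Proving this requires a case analysis of how the root step ($\sunfrecred$, $\sunfletrecred$, $\sunfnilred$, $\sunfreducered$) permutes with each step of the witnessing reduction --- in particular the delicate cases where an inner step rewrites the very definition body that $\sunfrecred$ is about to copy (duplicating that inner step), and where $\sunfreducered$ erases equations inside which the witnessing reduction was working. Your ``head-monotonicity'' for~(b) (``the head symbol can only become more resolved'') is an informal slogan that must itself be justified by the same permutation machinery, not a routine bookkeeping fact. These permutation lemmas are of comparable difficulty to the critical-pair analysis in Appendix~\ref{app:conf_proof} and could plausibly be built on it, but until they are carried out, the implication ``resolvable implies every maximal outermost-fair sequence reaches a non-$\sletrec$\nobreakdash-headed term'' --- and with it the lemma --- remains open in your write-up.
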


\begin{lemma}\label{lem:unfolding}
  For all \lambdaletrec\nb-terms~$\allter$ the following statements are equivalent:
  \begin{enumerate}[(i)]
    \item{}\label{lem:unfolding:item:i} 
      $\allter \unfoldomegared \aiter$ for some infinite \lambdaterm~$\aiter$.
    \item{}\label{lem:unfolding:item:ii} 
      $\allter$ is $\llunfCRS$\nb-productive.   
    \item{}\label{lem:unfolding:item:iii}
      Every maximal outermost-fair $\sunfoldred$\nb-rewrite sequence on $\allter$ is strongly convergent.
  \end{enumerate}
\end{lemma}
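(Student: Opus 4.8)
The plan is to establish the cycle of implications (iii)$\Rightarrow$(i)$\Rightarrow$(ii)$\Rightarrow$(iii), using two structural facts about $\llunfCRS$ throughout. First, every redex carries a $\sletrec$ at its root, so that whenever the root of a term is a $\sletrec$ it is the \emph{unique} outermost redex. Second, among the six rules only $\srulebp{\sunf}{@}$ and $\srulebp{\sunf}{\slabs}$ turn such a root $\sletrec$ into a genuine $\lambda$-constructor, while the four ``administrative'' rules $\srulebp{\sunf}{\rec}$, $\srulebp{\sunf}{\snil}$, $\srulebp{\sunf}{\smergeletrec}$, $\srulebp{\sunf}{\sreduce}$ keep the head a $\sletrec$ (or, for $\srulebp{\sunf}{\snil}$, pass to the in-part). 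The two external tools are the dichotomy in Lemma~\ref{lem:outermost-fair-sequences} and confluence of $\llunfCRS$ (Proposition~\ref{prop:unf-confluence}). For (iii)$\Rightarrow$(i): a maximal outermost-fair $\sunfoldred$-sequence on $\allter$ exists, and by (iii) it is strongly convergent. If finite, it ends in a normal form, which contains no $\sletrec$ and hence is an infinite \lambdaterm; if of length $\omega$, its limit $\aiter$ exists by strong convergence and is again a normal form, since a surviving topmost $\sletrec$-redex at some finite depth would, by strong convergence, be frozen and thus an outermost redex never contracted, contradicting fairness. In either case $\allter \unfoldomegared \aiter$ for an infinite \lambdaterm~$\aiter$.

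For (ii)$\Rightarrow$(iii) I argue by contraposition; this is the heart of the proof. Assume a maximal outermost-fair sequence $\tau \colon t_0 \unfoldred t_1 \unfoldred \cdots$ on $\allter$ fails to be strongly convergent, so that for some $d$ infinitely many steps lie at depth $\le d$; choose $d$ minimal. Then only finitely many steps lie at depth $<d$, so there is an index $N$ after which all steps have depth $\ge d$ and infinitely many have depth exactly $d$. After $N$ the prefix above depth $d$ is frozen and, by fairness, redex-free (a surviving topmost $\sletrec$ above depth $d$ would be an uncontracted outermost redex); hence $t_N$ is a $\lambda$-constructor context down to depth $d$, and its finitely many depth-$d$ positions persist. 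By pigeonhole, some depth-$d$ position $p$ sees infinitely many steps at $p$ itself, i.e.\ infinitely many root-steps of the subterm $\bllter$ at $p$. Restricting $\tau$ to $p$ is a maximal outermost-fair sequence on $\bllter$ passing through infinitely many $\sletrec$-headed terms, so by Lemma~\ref{lem:outermost-fair-sequences} $\bllter$ stays $\sletrec$-headed along every maximal outermost-fair sequence. A pure root sequence on $\bllter$ then uses administrative steps only and is infinite; lifting it into the frozen context, $t_N$, a genuine $\sunfoldmred$-reduct of $\allter$, is the source of an infinite sequence of exclusively outermost administrative steps. Thus $\allter$ is not productive.

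For (i)$\Rightarrow$(ii), again by contraposition, suppose $\allter$ is not productive, so some reduct $\bllter$ with $\allter \unfoldmred \bllter$ begins an infinite sequence of outermost administrative steps. Its outermost redexes sit at topmost $\sletrec$-positions; fixing such a position $p$, Lemma~\ref{lem:outermost-fair-sequences} shows that the subterm of $\bllter$ at $p$ stays $\sletrec$-headed along every maximal outermost-fair sequence, so (extending any finite reduction to a fair one) $\bllter$ has no reduct that is $\lambda$-headed at $p$. If (i) held, with $\allter \unfoldomegared \aiter$ and $\aiter$ an infinite \lambdaterm, then by strong convergence some reduct $\allter \unfoldmred \cllter$ is already stably $\lambda$-headed at $p$; confluence of $\llunfCRS$ (Proposition~\ref{prop:unf-confluence}) applied to $\cllter$ and $\bllter$, which aligns the $\lambda$-constructor prefixes reaching $p$, would produce a common reduct whose subterm at $p$ is both $\lambda$-headed and $\sletrec$-headed, which is impossible. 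Hence (i) fails, closing the cycle.

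The main obstacle is (ii)$\Rightarrow$(iii): converting a mere failure of strong convergence into an administrative loop at an outermost position of a genuine whole-term reduct. Three ingredients cooperate there: a K\H{o}nig-style minimal-depth and pigeonhole extraction that isolates a single stalling position $p$ from activity that is only bounded in depth; the fairness argument that everything strictly above $p$ has already been reduced to $\lambda$-constructors, so the loop really runs at outermost positions; and Lemma~\ref{lem:outermost-fair-sequences}, which upgrades ``infinitely many $\sletrec$-headed terms along the restricted sequence'' to ``$\sletrec$-headed forever, hence an administrative root loop''. It is exactly the interleaving of the root-steps at $p$ with steps below $p$ that makes this last upgrade non-trivial, and that Lemma~\ref{lem:outermost-fair-sequences} is tailored to absorb.
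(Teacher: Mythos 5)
Your overall route is the paper's own: the same cycle of implications with Lemma~\ref{lem:outermost-fair-sequences} as the pivotal external tool, and your (iii)$\Rightarrow$(i) and (ii)$\Rightarrow$(iii) arguments are faithful, considerably more detailed, elaborations of the paper's two-sentence versions (the paper argues (ii)$\Rightarrow$(iii) directly rather than by contraposition, and invokes Lemma~\ref{lem:outermost-fair-sequences} only for (i)$\Rightarrow$(ii), but that is presentational). Your use of confluence (Proposition~\ref{prop:unf-confluence}) in (i)$\Rightarrow$(ii) is an ingredient the paper does not make explicit --- the paper tacitly treats the hypothetical unfolding sequence as maximal outermost-fair, which is legitimate since a strongly convergent reduction to a normal form cannot ignore a persistently outermost redex --- and your confluence bridge is a reasonable, arguably more robust, way of making that step precise.

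There is, however, one step in your (i)$\Rightarrow$(ii) that does not hold as written. After obtaining a reduct $\bllter$ that starts an infinite sequence of outermost administrative steps, you say ``fixing such a position $p$'' --- an arbitrary topmost $\sletrec$-position of $\bllter$ --- and conclude via Lemma~\ref{lem:outermost-fair-sequences} that the subterm of $\bllter$ at $p$ stays $\sletrec$-headed along every maximal outermost-fair sequence. To exclude the lemma's second alternative you must first exhibit \emph{one} maximal outermost-fair sequence on that subterm with infinitely many $\sletrec$-headed terms, and an arbitrary topmost $\sletrec$-position need not supply this: if $\bllter$ is, say, $\lapp{(\letrec{\crecvar=\crecvar}{\crecvar})}{(\letrec{}{\allter'})}$ with $\allter'$ letrec-free, the left position cycles (as in Example~\ref{ex:non-unfoldable}) and carries the infinite administrative sequence, while for the right position, removable by a single $\srulebp{\sunf}{\snil}$-step, your conclusion is false. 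What is missing is the concentration argument you did carry out in (ii)$\Rightarrow$(iii): a pigeonhole showing that some fixed position hosts infinitely many of the administrative steps --- and here the frontier of active positions is not a fixed finite set, since $\srulebp{\sunf}{\snil}$-steps can uncover new, strictly deeper topmost $\sletrec$s, so you must additionally either rule out that the activity drifts downward forever (e.g.\ by a decreasing measure on the subterms entered at each descent) or handle the drift case separately; your sketch addresses neither. To be fair, the paper's own proof makes essentially the same leap (``a $\sletrec$ which cannot be pushed further down''), so you are at parity with it on this point; but since you supplied the pigeonhole machinery elsewhere, this is the one place where your argument, taken literally, would fail.
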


\begin{proof}
$\text{(ii)} \Rightarrow \text{(iii)}$, because if $\allter$ is
$\llunfCRS$\nb-productive then every outermost occurrence of a $\sletrec$ in
every $\sunfoldred$-reduct will be eventually pushed down to a higher position by
either a $\sunflabsred$- or a $\sunflappred$-step of any maximal outermost-fair
$\sunfoldred$-sequence. Since only $\sletrec$-terms are redexes in $\llunfCRS$ any
maximal outermost-fair rewrite sequence starting from $\allter$ converges to an
infinite normal form. (i) follows directly from (iii). $\text{(i)} \Rightarrow
\text{(ii)}$ follows from Lemma~\ref{lem:outermost-fair-sequences} by
contradiction. If $\allter$ is not $\llunfCRS$-productive then it has by
definition a $\sunfoldred$-reduct with at least one occurrence of a
\sletrec which cannot be pushed further down by any outermost application of
any $\llunfCRS$-rule. By Lemma~\ref{lem:outermost-fair-sequences} the same
holds for every other maximal outermost-fair rewrite sequence. Therefore
$\allter$ cannot unfold to an infinite \lambda-term $\ater$ because $\ater$ may
not contain any \sletrec{s}.
\end{proof}

\begin{lemma}[uniqueness of unfolding]\label{lem:unique:unfolding}  
  Unfolding normal forms of \lambdaletrecterms\ reachable in at most $\omega$ steps are unique.
  That is: if $ \aiteri{1} \unfoldconvomegared \allter \unfoldomegared \aiteri{2}$ 
    for a \lambdaletrecterm~$\allter$,
  and infinite \lambdaterms~$\aiteri{1}$ and $\aiteri{2}$, 
  then $\aiteri{1} = \aiteri{2}$. 
\end{lemma}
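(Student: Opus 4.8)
The plan is to deduce this infinitary uniqueness statement from the \emph{finitary} confluence of $\llunfCRS$ (Proposition~\ref{prop:unf-confluence}) by comparing the two limits $\aiteri{1}$ and $\aiteri{2}$ up to every finite depth. Concretely, I will show that for each $d\in\nats$ the infinite \lambdaterms\ $\aiteri{1}$ and $\aiteri{2}$ have the same truncation at depth $d$; since an infinite term is determined by its finite truncations in the metric completion, agreement at all depths yields $\aiteri{1} = \aiteri{2}$.

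First I would fix $d\in\nats$ and exploit the strong convergence of the two given $\unfoldomegared$\nb-sequences $\allter \unfoldomegared \aiteri{1}$ and $\allter \unfoldomegared \aiteri{2}$: since the depth of the contracted redexes tends to infinity, each sequence has a finite prefix $\allter \unfoldmred \biteri{1}$ and $\allter \unfoldmred \biteri{2}$ after which every remaining step takes place at depth $> d$. Hence the top $d$ levels are already frozen at $\biteri{1}$ and at $\biteri{2}$, so $\biteri{1}$ agrees with $\aiteri{1}$ up to depth $d$, and $\biteri{2}$ agrees with $\aiteri{2}$ up to depth $d$. Both $\biteri{1}$ and $\biteri{2}$ are \emph{finite} \lambdaletrecterms, being finite reducts of the finite term $\allter$, so finitary confluence is applicable to them.

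The crucial observation is that, because $\aiteri{1}$ is an infinite \lambdaterm\ and thus contains no $\sletrec$, the depth-$d$ truncation of $\biteri{1}$ contains no $\sletrec$ either; consequently every $\sletrec$\nb-symbol of $\biteri{1}$, and hence every $\llunfCRS$\nb-redex (all redexes being $\sletrec$\nb-rooted), occurs at depth $> d$. The same holds for $\biteri{2}$. Applying Proposition~\ref{prop:unf-confluence} gives a common reduct $\biteri{1} \unfoldmred \cter \unfoldconvmred \biteri{2}$. Now the property ``all $\sletrec$\nb-occurrences lie at depth $> d$'' is an invariant of $\unfoldred$\nb-steps that contract redexes at depth $> d$, since such a step only rewrites a subterm at a position of depth $> d$ and therefore leaves the structure at all positions of depth $\le d$ unchanged; by a straightforward induction every step along $\biteri{1} \unfoldmred \cter$ and along $\biteri{2} \unfoldmred \cter$ stays at depth $> d$. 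Therefore $\cter$ agrees with $\biteri{1}$, and with $\biteri{2}$, up to depth $d$, and combining this with the previous paragraph we conclude that $\aiteri{1}$ and $\aiteri{2}$ agree up to depth $d$. As $d$ was arbitrary, $\aiteri{1} = \aiteri{2}$.

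The main obstacle is precisely that finitary confluence does not by itself lift to infinitary confluence; the argument circumvents this by turning ``the shallow structure is pure $\lambda$\nb-syntax'' into the invariant ``the only redexes sit deep'', which makes the top $d$ levels immune to any finite continuation of the reduction. The remaining bookkeeping --- that strong convergence supplies the required finite prefixes, and that a single $\sletrec$\nb-contraction at depth $> d$ preserves the depth-$d$ truncation --- is routine.
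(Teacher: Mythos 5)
Your proposal is correct and follows essentially the same route as the paper's own proof sketch: take finite prefixes of the two strongly convergent sequences that are stable above depth $d$, join them by the finitary confluence of Proposition~\ref{prop:unf-confluence}, and observe that since every redex is $\sletrec$\nb-rooted and steps at depth $>d$ cannot create $\sletrec$\nb-occurrences (hence redexes) at depth $\le d$, the joining reductions leave the top $d$ levels untouched, so $\aiteri{1}$ and $\aiteri{2}$ agree at every depth. You even make explicit a point the paper leaves implicit --- that the depth-$d$ truncations of the stable reducts are $\sletrec$\nb-free because the limits are --- which is exactly what licenses the ``redexes sit deep'' invariant behind the paper's remark that redexes are not pushed upwards.
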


\begin{proof}[Proof sketch]
Let us assume that $\allter$ unfolds to $\aiteri1$ and $\aiteri2$ by the
$\sunfoldred$-reduction sequences $\arewseq_1$ and $\arewseq_2$.
Consider for $n \in \nats$ the first reduct $\bllteri1$ ($\bllteri2$) in the
sequence $\arewseq_1$ ($\arewseq_2$) that is stable above depth $n$. 
It follows from confluence of the rewrite relation
(Proposition~\ref{prop:unf-confluence}) that $\bllteri1$ and $\bllteri2$ have a
common $\sunfoldred$-reduct $\bllteri3$. 
Since the rules of $\llunfCRS$ do not create a redex at lower depth than the
occurrence of the left-hand side (`redexes are not pushed upwards') in the
rewriting sequence from $\bllteri1$ ($\bllteri2$) to $\bllteri3$, no
contractions take place above depth $n$.
In that sense $\bllteri1$ and $\bllteri2$ are `equal up to depth $n$'. Such a
notion is, however, still in need of precise formulation for CRS-terms.
The argument can be repeated for arbitrary $n \in \nats$, therefore $\aiteri1$
and $\aiteri2$ agree on arbitrarily large outermost contexts.
\end{proof}

As a consequence of the lemma above, the rewriting system $\llunfCRS$ 
defines a partial function $\sUnf$ for unfolding \lambdaletrec-terms. 

\begin{definition}[unfolding as a mapping]
We define the partial unfolding function:
\begin{align*}
\sUnf \funin \Ter{\lambdaletreccal} & {} \rightharpoonup \Ter{\inflambdacal}
\\
\allter & {} \mapsto \ater ~~~ \text{if}~\allter\unfoldomegared\ater 
\end{align*}
We say that $\allter$ \emph{expresses} $\ater$ if $\Unf\allter = \ater$.
Uniqueness of $\sUnf$ follows from Lemma~\ref{lem:unfolding}. 
\end{definition}

\begin{example}\label{ex:expresses}
The terms from Example~\ref{ex:equiv-letrec-terms} both express the same
\inflambda-term: \\
$\labs{x}{\letrec{r}{x:r}} \unfoldinfred \labs{x}{x:x:x:\dots} \unfoldconvinfred \labs{x}{\letrec{r}{x:(x:r)}}$
\end{example}

%

\begin{example}[$\llunfCRS$-unproductive \lambdaletrec-term]\label{ex:non-unfoldable}
As an example for a non-unfoldable \lambdaletrecterm, consider 
                                                               $\letrec{f=\letrec{g=f}{g}}{f}$
which is not in $\dom{\sUnf}$ and the cyclic rewriting sequence:
\[\begin{array}{ll}
               &  $\letrec{f = \letrec{g = f}{g}}{f}$ \\
\sunfrecred    &  $\letrec{f = \letrec{g = f}{g}}{\letrec{g = f}{g}}$ \\
\sunfletrecred &  $\letrec{f = \letrec{g = f}{g}; g = f}{g}$ \\
\sunfrecred    &  $\letrec{f = \letrec{g = f}{g}; g = f}{f}$ \\
\sunfreducered &  $\letrec{f = \letrec{g = f}{g}}{f}$ \\
\end{array}\]
We will revisit this example in Example~\ref{ex:stRegletrec} to illustrate the
cyclicity proof.
\end{example}

We also define an unfolding function which is complete on $\Ter{\lambdaletreccal}$
by mapping non-unfoldable subterms to $\bot$, which yields Böhm trees 
                                                                 in \inflambdacal.

\begin{definition}[partial unfolding]
$\Terbot{\inflambdacal}$ denotes infinite terms over $\siglcCRS\cup\{\bot\}$,
where $\bot$ is a constant symbol. 
\begin{align*}
\spUnf \funin \Ter{\lambdaletreccal} & {} \to \Terbot{\inflambdacal}
\\
\allter & {} \mapsto \ater ~~~ \text{if}~\allter 
                                                 \unfoldbotomegared \aiter
\end{align*}
Thereby $\sunfoldbotomegared$ is the infinitary rewrite relation induced by the
rewrite relation $\sunfoldbotred$ that extends $\sunfoldred$ by
mapping $\llunfCRS$\nb-root-active subterms to $\bot$.
\end{definition}

\section{Regular and strongly regular terms in $\inflambdacal$}
  \label{sec:regular}
\todo{Definieren unsere Systeme eine Pfadsemantik für \inflambdacal? Verweis
auf ``Paths in the \lambda-calculus}

For infinite first-order trees the concept of regularity is well-known and
well-studied \cite{cour:1983}. Regularity of a labelled tree%
  \footnote{By a `labelled tree' we here mean a finite or infinite tree
            whose nodes are labelled by function symbols from a first-order signature
            such that the arity of the function symbol in a node determines the number
            of successors of the node.} 
is defined
as the existence of only finitely many subtrees and implies the existence of a
finite graph that unfolds to that tree.
In this section we generalise the notion of regularity to trees with a binding
mechanism, the \inflambda-calculus specifically. We give a definition
for regularity which corresponds to regularity of a term when regarded as a
first-order tree, and for strong regularity, which will be shown in the following
sections to coincide with \lambdaletrec-expressibility.

We define regularity and strong regularity in terms of rewriting systems that
will be called $\RegCRS$ and $\stRegCRS$. Rewrite sequences in these systems
\emph{inspect} a given term coinductively in the sense that a rewrite sequence
corresponds to a decomposition of the term along one of its paths from the
root. Both $\RegCRS$ and $\stRegCRS$ extend a kernel system $\RegzeroCRS$
comprising three rewrite rules which denote whether the position just passed in
the tree is an abstraction or an application and in the second case whether the
application is being followed to the left or to the right.

The rewriting systems are defined
on \inflambda-terms enriched by what we call an \emph{abstraction
prefix}, by which the terms can be kept closed during the deconstruction. This
is crucial for the definition of the rewriting system as a CRS. While
intuitively it is clear that the \lambda-term $\lapp\ater\bter$ is composed of
the subterms $\ater$ and $\bter$, abstractions are more problematic. In a
first-order setting one could say that $\labs{x}\ater$ contains $\ater$ as a
subterm, but if $x$ occurs freely in $\ater$ then $\ater$ would be an open
term. That means that the scrutinisation of an abstraction would be able to go
from a closed term to an open term, which would run counter to the
interpretation of a higher-order term as an $\alpha$-equivalence class. In the
definition of the rewriting systems below this issue is resolved as follows.
When inspecting an abstraction, the binder is not left out but moved from the
scrutinised subterm into the prefix. That guarantees that the term as a whole
remains closed.

In $\labs{x}{\labs{y}{\lapp{\lapp xx}{y}}}$
for instance the path from the root to the second occurrence of $x$ then
corresponds to the rewrite sequence:
\[
\femptylabs{\labs{x}{\labs{y}{\lapp{\lapp xx}{y}}}}
\labsdecompred
\flabs{x}{\labs{y}{\lapp{\lapp xx}{y}}}
\labsdecompred
\flabs{xy}{\lapp{\lapp xx}{y}}
\lappdecompired0
\flabs{xy}{\lapp xx}
\lappdecompired1
\flabs{xy}{x}
\]

The $\stRegCRS$ and the $\RegCRS$ system extend $\RegzeroCRS$ by a
\emph{scope-delimiting} rule, which signifies that the scope of an abstraction
has ended, whereby both systems are based on different notions of scope.

$\RegCRS$ relies on what we simply call \emph{scope} of an abstraction:
the range from the abstraction up to the positions under which the bound variable does not occur anymore.

We base $\stRegCRS$ on a different notion of scope, called \extscope{}, which is
strictly nested. The \extscope{s} of an abstraction extends its scope by
encompassing all \extscope{s} that are opened within its range. As a consequence,
\extscope{s} do no overlap (see Figure~\ref{fig:extscope}). 

Precise definitions of scope and \extscope\ are given later in Definition~\ref{def:bind:capt:chain}.

When every \extscope{} is closed by the \extscope{}-delimiting rule then the sequence
of rewrite steps alone (i.e.~without the terms themselves) unambiguously
determines which abstraction a variable occurrence belongs to. The rewrite
sequence from above would then have one additional \extscope-delimiting step
asserting that the variable at the end of the path is indeed $x$ and not $y$:
\[
\femptylabs{}{\labs{x}{\labs{y}{\lapp{\lapp xx}{y}}}}
\labsdecompred
\flabs{x}{\labs{y}{\lapp{\lapp xx}{y}}}
\labsdecompred
\flabs{xy}{\lapp{\lapp xx}{y}}
\lappdecompired0
\flabs{xy}{\lapp xx}
\lappdecompired1
\flabs{xy}{x}
\compressstregred
\flabs{x}{x}
\]
\begin{figure}
\fig{scope} \hspace{2cm} \fig{extscope}
\caption{\label{fig:extscope}The difference between scope and \extscope}
\end{figure}
The abstraction prefix not only keeps the term closed but also denotes
which \extscope\ is still open, which provides the information to decide
applicability of the \extscope-delimiting rule. The last step closes the
\extscope{} of $y$, therefore that variable is removed from the prefix. The
rewrite sequence for the path to the occurrence of $y$ does not include an
\extscope-delimiting step:
\[
\femptylabs{}{\labs{x}{\labs{y}{\lapp{\lapp xx}{y}}}}
\labsdecompred
\flabs{x}{\labs{y}{\lapp{\lapp xx}{y}}}
\labsdecompred
\flabs{xy}{\lapp{\lapp xx}{y}}
\lappdecompired1
\flabs{xy}{y}
\]

Ultimately, the $\stRegCRS$ rewriting system defines nameless representations
for \inflambda-terms related to the de-Bruijn notation. Considering the
de-Bruijn representation of the above term $\lambda\,\lambda\,\lapp{\lapp{(\lapp
S0)}{(\lapp S0)}}{0}$ we find that the position of the
$\scompressstregred$-steps indeed coincides with the position of the $S$
markers. However, the rewrite system $\stRegCRS$ permits more flexibility for the
placement of $\scompressstregred$-steps, an idea also used in \cite{oost:looi:zwit:2004}.
For example the path from above to the
second occurrence of $x$ can also be witnessed by another rewrite sequence
\[
\femptylabs{}{\labs{x}{\labs{y}{\lapp{\lapp xx}{y}}}}
\labsdecompred
\flabs{x}{\labs{y}{\lapp{\lapp xx}{y}}}
\labsdecompred
\flabs{xy}{\lapp{\lapp xx}{y}}
\lappdecompired0
\flabs{xy}{\lapp xx}
\compressstregred
\flabs{x}{\lapp xx}
\lappdecompired1
\flabs{x}{x}
\]
where the \extscope{} of $y$ is closed earlier. This would
correspond to $\lambda\,\lambda\,\lapp{(\lapp{S}{(\lapp 00)})}{0}$ in de-Bruijn
notation, more precisely, in a variant of the de-Bruijn notation which permits
the \scope/\extscope\nb-delimiter $S$ to be used also within the term, before abstractions and applications.

\begin{definition}[\CRS-terms with abstraction prefixes]
  \label{def:sig:lambdaprefixcal:CRS}
  The \CRS\nb-sig\-na\-ture for $\lambdaprefixcal$, the $\lambda$\nb-calculus
  with abstraction prefixes, extends the \CRS\nb-signature~$\siglcCRS$ for
  $\lambdacal$ 
  (see Definition~\ref{def:sigs:lambdacal:lambdaletrec:CRS})
  and consists of the set 
  $\siglpcCRS = \siglcCRS 
                          \cup \descsetexp{ \sflabsCRS{n} }{ n\in\nats }$ 
  of function symbols, 
  where 
  for $n\in\nats$ the function symbols $\sflabsCRS{n}$ 
  for \emph{prefix $\lambda$\nb-abstractions} of length~$n$ are unary (have arity one). 
  Using the syntactical variables for terms in $\lambdacal$,
  \CRS\nb-terms with leading prefixes
  $\flabsCRS{n}{\avari{1}}{\ldots\absCRS{\avari{n}}{\ater}}$
  will informally be denoted by
  $\flabs{\avari{1}\ldots\avari{n}}{\ater}$,
  abbreviated as
  $\flabs{\vec{\avar}}{\ater}$.
\end{definition}   

\begin{definition}[The \CRSs~$\RegzeroCRS$, $\RegCRS$, $\stRegCRS$ for decomposing $\lambda$-terms]%
  \label{def:RegCRS:stRegCRS}
  Consider the following \CRS\nbd-rules over signature $\siglpcCRS$: 
  \begin{align*}
    (\srulep{\slappdecompi{i}}): 
      & & \hspace*{-25ex}
    \flabsCRS{n}{\avari{1}\ldots\avari{n}}{\lappCRS{\cmetavari{0}{\vec{\avar}}}{\cmetavari{1}{\vec{\avar}}}}
      & {} \red 
    \flabsCRS{n}{\avari{1}\ldots\avari{n}}{\cmetavari{i}{\vec{\avar}}}  
      \hspace*{4ex} (i\in\{0,1\})
    \\
    (\srulep{\slabsdecomp}):
      & & 
    \flabsCRS{n}{\avari{1}\ldots\avari{n}}{\labsCRS{\avari{n+1}}{\cmetavar{\vec{\avar}}}}
      & {} \red 
    \flabsCRS{n}{\avari{1}\ldots\avari{n+1}}{\cmetavar{\vec{\avar}}} 
    \displaybreak[0]\\
    (\srulep{\snlvarsucc}):
      & & 
    \flabsCRS{n+1}{\avari{1}\ldots\avari{n+1}}{\cmetavar{\avari{1},\ldots,\avari{n}}}
      & {} \red 
    \flabsCRS{n}{\avari{1}\ldots\avari{n}}{\cmetavar{\avari{1},\ldots,\avari{n}}}
    \displaybreak[0]\\  
    (\srulep{\scompress}):
      & & & \hspace*{-35ex}
      \begin{aligned}
        & 
        \flabsCRS{n+1}{\avari{1}\ldots\avari{n+1}}{\cmetavar{\avari{1},\ldots,\avari{i-1},\avari{i+1},\ldots,\avari{n+1}}}
        \red
        \\
        & \hspace*{16ex} 
        \flabsCRS{n}{\avari{1}\ldots\avari{i-1}\avari{i+1}\ldots\avari{n}}{\cmetavar{\avari{1},\ldots,\avari{i-1},\avari{i+1},\ldots,\avari{n+1}}}
      \end{aligned}
    %
  \end{align*}
  By $\RegzeroCRS$ we denote the \CRS\ with rules $\srulep{\slappdecompi{i}}$ and $\srulep{\slabsdecomp}$.
  By $\RegCRS$ (and respectively, by $\stRegCRS$) we denote the \CRS\ consisting of all of the above rules 
  \emph{except} the rule $\srulep{\snlvarsucc}$ (\emph{except} the rule $\srulep{\scompress}$).
  The rewrite relations of $\RegzeroCRS$, $\RegCRS$, and $\stRegCRS$ are denoted by $\sregzerored$, $\sregred$ and $\sstregred$, respectively.
  And by $\slappdecompired{0}$, $\slappdecompired{1}$,
  $\slabsdecompred$, $\scompressstregred$, $\scompressregred$,
  we respectively denote the rewrite relations
  induced by each of the single rules 
  $\srulep{\slappdecompi{0}}$, $\srulep{\slappdecompi{1}}$, $\srulep{\slabsdecomp}$,
  $\srulep{\snlvarsucc}$, and $\srulep{\scompress}$. 
\end{definition}

Assuming that the translation between the formal and the informal notation is
facile, for better readability we will from now on rely on the latter. Here are
the rules from above in informal notation:
  \begin{align*}
    (\srulep{\slappdecompi{i}}): 
      & &
    \flabs{\avari{1}\ldots\avari{n}}{\lapp{\ateri{0}}{\aiteri{1}}}
      & {} \red 
    \flabs{\avari{1}\ldots\avari{n}}{\aiteri{i}}  
      & & \hspace*{-0.5ex} (i\in\{0,1\})
    \\
    (\srulep{\slabsdecomp}):
      & &
    \flabs{\avari{1}\ldots\avari{n}}{\labs{\avari{n+1}}{\aiteri{0}}}
      & {} \red 
    \flabs{\avari{1}\ldots\avari{n+1}}{\aiteri{0}} 
    \displaybreak[0]\\
    (\srulep{\snlvarsucc}):
      & &
    \flabs{\avari{1}\ldots\avari{n+1}}{\aiteri{0}}
      & {} \red 
    \flabs{\avari{1}\ldots\avari{n}}{\aiteri{0}}  
      & & 
      \hspace*{-15ex}
                           (\text{if the binding $\lambda\avari{n+1}$ is vacuous}
                                                 )
    \displaybreak[0]\\  
    (\srulep{\scompress}):
      & &
    \flabs{\avari{1}\ldots\avari{n+1}}{\aiteri{0}}
      & {} \red 
    \flabs{\avari{1}\ldots\avari{i-1}\avari{i+1}\ldots\avari{n+1}}{\aiteri{0}}   
      & & 
    \\   
      & & & & &
      \hspace*{-15ex} 
                           (\text{if the binding $\lambda\avari{i}$ is vacuous})
    %
  \end{align*}

\begin{figure}
\begin{tabular}{ccc}
\fig{lxy_xxy-regmin.ltg-l} & \fig{lxy_xxy-cprgraph.ltg-l} & \fig{lxy_xxy-redgraph.ltg-l} \\
$\RegzeroCRS$ & $\RegCRS$ & $\stRegCRS$
\end{tabular}
\caption{\label{fig:three_redgraphs}
The sub-ARSs induced by $\femptylabs{\labs{x}{\labs{y}{\lapp{\lapp
xx}{y}}}}$. Note, that in the pictures the nodes do not display the entire
\inflambdaprefixcal-terms but only their prefixes.
}
\end{figure}

\begin{remark}[only $\stRegCRS$ defines nameless representations]
Considering the graphs from Figure~\ref{fig:three_redgraphs} without labels
on their nodes we see that only from the $\stRegCRS$ graph the original
term could be reconstructed unambiguously. For example, the path to the rightmost occurrence of
$x$ has the rewrite sequence in $\RegCRS\,$ \[\labsdecompred .
\labsdecompred . \lappdecompired0 . \compressregred . \lappdecompired1\]
which would also be an admissible to witness an occurrence of $y$ at the same
position. This ambiguity plays a role for the definition of \lambdatg{s}
in Section~\ref{sec:ltgs}, and is discussed in that context in Remark~\ref{rem:nameless-repr}.
\end{remark}

Note that the following relationships between rewrite relations:
$\sregzerored$ is contained in $\sregred$ and $\sstregred$,
and since the rule $\srulep{\scompress}$ generalises the rule $\srulep{\snlvarsucc}$,
$\sstregred$ is contained in $\sregred$. 
\begin{proposition}\label{prop:rewseqs:stRegCRS:2:RegCRS}
  Every rewrite sequence in $\stRegCRS$ corresponds directly to a rewrite sequence in $\RegCRS$
  by exchanging $\scompressstregred$\nb-steps with $\scompressregred$\nb-steps.
\end{proposition}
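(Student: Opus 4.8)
The plan is to exploit that $\RegCRS$ and $\stRegCRS$ differ in exactly one rule, and that the differing $\stRegCRS$\nb-rule is literally an instance of the differing $\RegCRS$\nb-rule; a step-by-step translation then results.

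First I would recall from Definition~\ref{def:RegCRS:stRegCRS} that both systems contain the three rules $\srulep{\slappdecompi{0}}$, $\srulep{\slappdecompi{1}}$, and $\srulep{\slabsdecomp}$ verbatim; the systems differ only in the rule for removing a vacuous prefix binding. Here $\stRegCRS$ uses $\srulep{\snlvarsucc}$, which deletes the final prefix binding $\lambda\avari{n+1}$ when it is vacuous, whereas $\RegCRS$ uses $\srulep{\scompress}$, which deletes a vacuous binding $\lambda\avari{i}$ at an arbitrary position $1 \le i \le n+1$.

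The key observation, already implicit in the preceding remark that $\sstregred \subseteq \sregred$, is that $\srulep{\snlvarsucc}$ is precisely the instance $i = n+1$ of $\srulep{\scompress}$. Indeed, putting $i = n+1$ makes the tail $\avari{i+1}\ldots\avari{n+1}$ in the right-hand prefix of $\srulep{\scompress}$ empty, so that both the source $\flabs{\avari{1}\ldots\avari{n+1}}{\aiteri{0}}$ and the target $\flabs{\avari{1}\ldots\avari{n}}{\aiteri{0}}$ agree with those of $\srulep{\snlvarsucc}$; moreover the side condition at $i = n+1$ (that $\lambda\avari{n+1}$ be vacuous) is exactly the side condition of $\srulep{\snlvarsucc}$. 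Hence every single $\scompressstregred$\nb-step is, on the same source and target, also a $\scompressregred$\nb-step.

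With this in hand the correspondence is immediate: given a rewrite sequence in $\stRegCRS$, I would traverse it step by step, keeping each $\slappdecompired{0}$\nb-, $\slappdecompired{1}$\nb-, and $\slabsdecompred$\nb-step unchanged (these belong to both systems), and replacing each $\scompressstregred$\nb-step by the $\scompressregred$\nb-step with $i = n+1$ identified above. Since every such replacement preserves both endpoints, the resulting sequence passes through the very same terms and is a legitimate $\RegCRS$\nb-rewrite sequence. There is no genuine obstacle here; the only point deserving a moment's care is checking that the metavariable patterns and the vacuousness side conditions of the two rule schemes align under the instantiation $i = n+1$, which is a routine inspection of the schemes in Definition~\ref{def:RegCRS:stRegCRS}.
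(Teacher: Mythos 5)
Your proposal is correct and matches the paper's own reasoning: the paper states this proposition without a separate proof, relying on exactly the observation you make, namely that $\srulep{\snlvarsucc}$ is the instance $i = n+1$ of $\srulep{\scompress}$ (so that $\sstregred \subseteq \sregred$), whence each step of an $\stRegCRS$\nb-rewrite sequence is already, on the same source and target, a step in $\RegCRS$. Your explicit check of the metavariable patterns and vacuousness side conditions under the instantiation $i = n+1$ is the routine verification the paper leaves implicit.
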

    
Our interest will focus on the subset of terms with an outermost abstraction
prefix symbol and no other occurrences of such symbols. Note that the
rules in $\RegCRS$ and $\stRegCRS$ guarantee that every reduct of a term of the
form $\flabs{\vec{\avar}}{\aiter}$ is again a term of this form. Therefore we define:
  
\begin{definition}[prefixed \inflambda-terms]\label{def:CRSterms}
By $\Ter{\inflambdaprefixcal}$ we denote the set of 
\emph{closed} \iCRS\nb-terms over $\siglpcCRS$ with the restriction that
\begin{itemize}
\item every term $\ater\in\Ter{\inflambdaprefixcal}$ has a prefix at its root
and nowhere else, or in other words: $\ater$ is of the form
$\flabsCRS{n}{\avari{1}}{\ldots\absCRS{\avari{n}}\ater}$
where $\ater$ does not contain any
occurrences of function symbols $\sflabsCRS{i}$ for $i\in\nats$
\item and that otherwise a CRS abstraction can only occur directly beneath
an $\slabsCRS$-symbol.
\end{itemize}
$\Ter{\inflambdaprefixcal}$ is more formally specified in Definition~\ref{def:Ter-lambdaletrec}.
\end{definition}

\begin{proposition}\label{prop:oneRegARS}
$\Ter{\inflambdaprefixcal}$ is closed under 
$\sregzerored$, $\sregred$, and $\sstregred$.
\end{proposition}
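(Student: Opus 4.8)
The strategy is to reduce the claim to a rule-by-rule check and to verify that each rule preserves the defining invariants of $\Ter{\inflambdaprefixcal}$. By Definition~\ref{def:RegCRS:stRegCRS} each of the relations $\sregzerored$, $\sregred$, and $\sstregred$ is the union of the single-step relations induced by a subset of the four rule schemes $\srulep{\slappdecompi{i}}$ (with $i\in\{0,1\}$), $\srulep{\slabsdecomp}$, $\srulep{\snlvarsucc}$, and $\srulep{\scompress}$. Since a set closed under each of several relations is closed under their union, it suffices to prove that applying any single one of these rules to a term in $\Ter{\inflambdaprefixcal}$ yields again a term in $\Ter{\inflambdaprefixcal}$; this gives the result for all three relations at once. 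By Definition~\ref{def:CRSterms} ``being in $\Ter{\inflambdaprefixcal}$'' amounts to three invariants: (a) closedness; (b) carrying exactly one prefix symbol, located at the root; and (c) every CRS-abstraction in the body occurring directly beneath an $\slabsCRS$-symbol.

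First I would dispatch the structural invariants (b) and (c). All four rules act just below the leading prefix: they rewrite the prefix abstraction and/or the topmost constructor of the body, and none of them introduces a prefix symbol $\sflabsCRS{m}$ on its right-hand side. For $\srulep{\slappdecompi{i}}$, $\srulep{\snlvarsucc}$, and $\srulep{\scompress}$ the body after the prefix is either a direct subterm of the former body (its function or argument part, for $\srulep{\slappdecompi{i}}$) or literally unchanged (for the two scope-delimiting rules), so it contains no prefix symbols and retains the property that its CRS-abstractions sit beneath $\slabsCRS$. For $\srulep{\slabsdecomp}$ the leading $\slabsCRS$-abstraction of the body is absorbed into the prefix and the new body is the former abstraction body, again free of prefix symbols; the one point needing care is that the variable moved into the prefix be distinct from those already present, which is arranged by $\alpha$-renaming since we work modulo $\alpha$-equivalence on \iCRS-terms.

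The substantive invariant is closedness (a), where the argument splits by rule kind. For the decomposition rules the free variables of the new body are still all bound by the prefix: in $\srulep{\slappdecompi{i}}$ the prefix is unchanged while the free variables of the retained component are among those of the whole application, and in $\srulep{\slabsdecomp}$ the only variable that can newly become free in the body, namely $\avari{n+1}$, is precisely the one appended to the prefix. For the scope-delimiting rules $\srulep{\snlvarsucc}$ and $\srulep{\scompress}$ a prefix binding is deleted only when it is vacuous, and this is already encoded in the rule schemes: on the left-hand side the metavariable $\cmetavar{\ldots}$ is applied to all prefix variables \emph{except} the one to be removed, which in the \CRS{} format forces every instance to have a body in which that variable does not occur free. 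Hence after deletion the shortened prefix still binds all free variables of the body, and closedness is preserved.

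The main obstacle is not conceptual but a matter of making the \CRS-level bookkeeping precise: one must confirm that the metavariable-application patterns of $\srulep{\snlvarsucc}$ and $\srulep{\scompress}$ in Definition~\ref{def:RegCRS:stRegCRS} genuinely capture the informal side-conditions ``$\lambda\avari{n+1}$ is vacuous'' and ``$\lambda\avari{i}$ is vacuous'', and that absorbing a binder into the prefix in $\srulep{\slabsdecomp}$ is compatible with the $\alpha$-equivalence classes underlying \iCRS-terms. With these two formal points settled, every rule preserves (a)--(c), and the proposition follows for $\sregzerored$, $\sregred$, and $\sstregred$ simultaneously.
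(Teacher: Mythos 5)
Your proof is correct, and it coincides with the paper's treatment: the paper offers no explicit proof of this proposition, relying instead on the remark immediately preceding Definition~\ref{def:CRSterms} that ``the rules in $\RegCRS$ and $\stRegCRS$ guarantee that every reduct of a term of the form $\flabs{\vec{\avar}}{\aiter}$ is again a term of this form,'' which is exactly the rule-by-rule invariant check you spell out. Your two points of care --- that vacuousness in $\srulep{\snlvarsucc}$ and $\srulep{\scompress}$ is enforced by the metavariable-application patterns of the \CRS{} format, and that absorbing a binder into the prefix in $\srulep{\slabsdecomp}$ is handled by working modulo $\alpha$-equivalence --- are precisely the details the paper leaves implicit.
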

 
\begin{definition}[the \ARSs~${\normalfont \RegzeroARS}$, ${\normalfont \RegARS}$, ${\normalfont \stRegARS}$]%
                  \label{RegARS:stRegARS:oneRegARS:onestRegARS}%
We denote by $\RegzeroARS$, $\RegARS$ and $\stRegARS$ the infinite
abstract rewriting systems (\ARS{s}) induced by the iCRSs derived from
$\RegzeroCRS$, $\RegCRS$, $\stRegCRS$, restricted to terms in
$\Ter\inflambdaprefixcal$.

The rewrite relations of $\RegzeroARS$, $\RegARS$ and $\stRegARS$ will
be denoted by the same symbols used for $\RegzeroCRS$, $\RegCRS$ and
$\stRegCRS$. Since all of our considerations will refer to the restricted set
of terms, this should create no confusion.
\end{definition}

%

\begin{proposition}\label{prop:rewprops:RegCRS:stRegCRS}
  The restrictions of the rewrite relations as defined in Definition~\ref{def:RegCRS:stRegCRS}
  to $\Ter{\inflambdaprefixcal}$, the set of objects of $\RegzeroARS$, $\RegARS$, and $\stRegARS$,
  have the following properties:%
    \footnote{Mind the restriction here to terms in $\Ter{\inflambdacal}$.}
  \begin{enumerate}[(i)]
    \item\label{prop:rewprops:RegCRS:stRegCRS:item:i}
       $\scompressregred$ is confluent, and 
                                        terminating.                                                                                                    
       %
    \item\label{prop:rewprops:RegCRS:stRegCRS:item:ii}
      $\scompressregred$ one-step commutes with $\slabsdecompred$, $\slappdecompired{0}$, $\slappdecompired{1}$, 
      and one-step sub-commutes with $\scompressstregred\,$:
      \begin{align*}
        \binrelcomp{\scompressregconvred}{\slabsdecompred} 
          & \;\subseteq\; 
        \binrelcomp{\slabsdecompred}{\scompressregconvred}
        & 
        \binrelcomp{\scompressregconvred}{\slappdecompired{i}} 
          & \;\subseteq\; 
        \binrelcomp{\slappdecompired{i}}{\scompressregconvred}
           \;\;\;\;\; (i\in\setexp{0,1})
        \\   
        \binrelcomp{\scompressregconvred}{\scompressstregred} 
          & \;\subseteq\; 
        \binrelcomp{\scompressstregeqred}{\scompressregconveqred}
      \end{align*}
    \item\label{prop:rewprops:RegCRS:stRegCRS:item:iii-0} 
      $\scompressstregred \subseteq \scompressregred$,
      and consequently, $\sstregred \subseteq \sregred$. 
      Furthermore, $\sregzerored \subsetneqq \sstregred \subsetneqq \sregred\:$.
      %
    \item\label{prop:rewprops:RegCRS:stRegCRS:item:iii} 
      $\scompressstregred$ is deterministic, hence confluent,
      and terminating.
      %
    \item\label{prop:rewprops:RegCRS:stRegCRS:item:iv}
      $\scompressstregred$ one-step commutes with 
      $\slabsdecompred$, $\slappdecompired{0}$, and $\slappdecompired{1}$:
      \begin{align*}
        \binrelcomp{\scompressstregconvred}{\slabsdecompred} 
          & \;\subseteq\; 
        \binrelcomp{\slabsdecompred}{\scompressstregconvred}
        & 
        \binrelcomp{\scompressstregconvred}{\slappdecompired{i}} 
          & \;\subseteq\; 
        \binrelcomp{\slappdecompired{i}}{\scompressstregconvred}
           \;\;\;\;\;\text{($i\in\setexp{0,1}$)}
      \end{align*}
    \item{}\label{prop:rewprops:RegCRS:stRegCRS:item:iv-1}
      $\flabs{\avar}{\avar}$ is the sole term in $\sregred$\nb-normal form.
      Every $\sstregred$\nb-normal form is of the form
      $\flabs{\avari{1}\ldots\avari{n}}{\avari{n}}$. 
    \item\label{prop:rewprops:RegCRS:stRegCRS:item:v}
      $\sregred$ and $\sstregred$ are finitely branching,
      and, on finite terms, terminating. 
  \end{enumerate}
\end{proposition}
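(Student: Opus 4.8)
The plan is to organise the seven items around a single structural dichotomy: the two scope-delimiting rules $\srulep{\scompress}$ and $\srulep{\snlvarsucc}$ (inducing $\scompressregred$ for $\RegCRS$ and $\scompressstregred$ for $\stRegCRS$) only manipulate the abstraction prefix, whereas the decomposition rules ($\slabsdecompred$, $\slappdecompired{0}$, $\slappdecompired{1}$) act on the body, with $\slabsdecompred$ additionally appending one binding to the prefix. The observation I would invoke everywhere is that a prefix binding $\lambda\avari{i}$ is vacuous exactly when $\avari{i}$ has no free occurrence in the body, and that this property is preserved by every decomposition step: $\slappdecompired{i}$ only descends into a subterm and $\slabsdecompred$ only strips a leading abstraction, so neither can create a fresh free occurrence of a prefix variable. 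This monotonicity of vacuousness is the engine behind all the commutation diagrams.

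The structural items I would dispatch as follows. Every $\scompressregred$- and every $\scompressstregred$-step strictly decreases the prefix length, a natural number, which gives the termination halves of~(i) and~(iv). Determinism of $\scompressstregred$ in~(iv) is immediate, since its only possible redex is the final prefix binding, so there is at most one outgoing step and confluence is trivial. For confluence of $\scompressregred$ in~(i) I would prove the diamond property directly: two $\scompressregred$-steps delete vacuous bindings at distinct prefix positions $i\neq j$, and, since deleting one leaves the other vacuous, both complete to a common reduct; with termination this yields confluence (or invoke Newman's Lemma). Item~(iii) is then essentially definitional, as $\srulep{\snlvarsucc}$ is the instance $i=n+1$ of $\srulep{\scompress}$, whence $\scompressstregred\subseteq\scompressregred$ and $\sstregred\subseteq\sregred$; strictness is witnessed by a term with an \emph{internal} vacuous binding that only $\scompressregred$ can delete, and by any scope-delimiting step separating $\sregzerored$ from $\sstregred$. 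The normal-form characterisations in~(vi) follow by exhausting the rules: a normal form has a body that is neither an application nor an abstraction, hence a variable $\avari{i}$; non-applicability of $\scompressregred$ then forces every prefix binding to be used, giving the unique $\flabs{\avar}{\avar}$ for $\sregred$, while for $\sstregred$ only the last binding must be used, giving $\flabs{\avari{1}\ldots\avari{n}}{\avari{n}}$. Finite branching in~(vii) is a redex count (at most two body redexes plus at most $n$ prefix redexes), and termination on finite terms comes from the lexicographic measure (body size, prefix length), which decreases on every step.

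The real work, and where I would concentrate the verification, is the commutation block~(ii) and~(v). For each claim I would take a peak consisting of one scope-delimiting step and one decomposition step out of a common term and close it to a valley by re-applying the two steps in the opposite order, justifying that the residual scope-delimiting step is still enabled via preservation of vacuousness. The commutations with $\slappdecompired{i}$ are the clean cases: an application-decomposition leaves the prefix untouched, so the targeted binding keeps both its position and its vacuousness in the selected argument, and the diamond closes with one step on each side. The interaction with $\slabsdecompred$ is the delicate one, precisely because it lengthens the prefix and thereby shifts the position at which a subsequent scope-delimiting step can act. For $\scompressregred$ this is harmless, since the vacuous binding to be deleted still lies somewhere in the longer prefix and $\scompressregred$, deleting at any position, reaches it; for $\scompressstregred$, which is confined to the \emph{last} prefix binding, one must track the index bookkeeping across the $\slabsdecompred$-step with care, and this is the point at which I would check most scrupulously that the one-step shape stated for the diagram is the correct one.

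Finally, the only genuinely non-diamond case is the sub-commutation $\binrelcomp{\scompressregconvred}{\scompressstregred}\subseteq\binrelcomp{\scompressstregeqred}{\scompressregconveqred}$ in~(ii). Here the $\scompressregred$-step deletes a vacuous binding $\avari{i}$ and the $\scompressstregred$-step deletes the vacuous last binding $\avari{n}$: when $i=n$ the two steps coincide and the peak closes with zero steps on each side, which is exactly why the reflexive closures $\scompressstregeqred$ and $\scompressregconveqred$ appear; when $i<n$ the two steps target different bindings, each remains vacuous after the other fires, and the peak closes with one step on each side. I expect the prefix-index bookkeeping in the $\slabsdecompred$-interactions to be the main obstacle, the remaining items being routine given the vacuousness-preservation lemma and the prefix-length termination measure.
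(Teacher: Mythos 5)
Your overall route is the same as the paper's: the paper's proof is a terse verification by inspection of the rules, and where it does give detail it coincides with your proposal exactly — the lexicographic measure $\pair{\text{body size}}{\text{prefix length}}$ for termination on finite terms, finite branching via the observation that among the constituent relations only $\scompressregred$ can branch with degree bounded by the prefix length, and strictness $\sstregred \subsetneqq \sregred$ witnessed by a term with a non-final vacuous binding (the paper uses $\flabs{\avar\bvar}{\bvar} \regred \flabs{\bvar}{\bvar}$, noting $\flabs{\avar\bvar}{\bvar}$ is a $\scompressstregred$-normal form). Your vacuousness-preservation lemma, the diamond for $\scompressregred$, determinism of $\scompressstregred$, the normal-form exhaustion, and the case split $i=n$ versus $i<n$ in the sub-commutation of item (ii) are all correct elaborations of what the paper dismisses as ``easy to verify''.

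The genuine gap sits exactly where you promised to ``check most scrupulously'', and your uniform plan of closing each peak ``by re-applying the two steps in the opposite order'' does fail there. Consider $\flabs{\avar}{\labs{\bvar}{\bvar}}$ with $\lambda\avar$ vacuous; it is the top of the peak
\begin{equation*}
  \femptylabs{\labs{\bvar}{\bvar}}
    \compressstregconvred
  \flabs{\avar}{\labs{\bvar}{\bvar}}
    \labsdecompred
  \flabs{\avar\bvar}{\bvar} \punc{,}
\end{equation*}
and the only $\slabsdecompred$-step from $\femptylabs{\labs{\bvar}{\bvar}}$ yields $\flabs{\bvar}{\bvar}$, yet $\flabs{\avar\bvar}{\bvar}$ is a $\scompressstregred$-normal form: once $\slabsdecompred$ has appended $\bvar$, the vacuous binding $\lambda\avar$ is no longer last, so no trailing $\scompressstregred$-step exists, and the valley can only be closed by a $\scompressregred$-step deleting the now-internal $\avar$. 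In other words, for the $\slabsdecompred$-case the diagram that actually closes is the mixed one, $\binrelcomp{\scompressstregconvred}{\slabsdecompred} \subseteq \binrelcomp{\slabsdecompred}{\scompressregconvred}$ — precisely the shape the paper itself relies on downstream in Lemma~\ref{lem:commute:unfoldomegared:stregred}, where trailing $\scompressstregred$-steps project to $\scompressregred$-steps — and the pure $\scompressstregred$-closure is recovered only in the special case that the freshly appended binding is itself vacuous, allowing two successive $\scompressstregred$-steps. So your hedge is not a cosmetic caveat: no amount of index bookkeeping makes the literal one-step $\scompressstregred$-closure go through against the example above; to complete the proof you must replace that diagram by the mixed form (and then verify that your intended uses of the commutation still work with a $\scompressregred$-residual), rather than leave the case open.
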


\begin{proof}
  Most properties, including those concerning commutation of steps, are easy to verify
  by analysing the behaviour of the rewrite rules in $\RegARS$ on terms of $\Ter{\lambdaprefixcal}$.
  
  For $\sstregred\subsetneqq\sregred$ in (\ref{prop:rewprops:RegCRS:stRegCRS:item:iii-0})
  note that, for example, $\flabs{\avar\bvar}{\bvar} \regred \flabs{\bvar}{\bvar}$ by a $\scompressregred$\nb-step, 
  but that $\flabs{\avar\bvar}{\bvar}$ is a $\scompressstregred$-normal form, and hence also a $\sstregred$\nb-normal form.
  
  In item~(\ref{prop:rewprops:RegCRS:stRegCRS:item:v}) we first argue for
  finite branchingness of $\sregred$ and $\sstregred\,$ on $\Ter{\inflambdacal}$:
  this property is entailed by the fact that, 
  on a term $\flabs{\vec{\avar}}{\aiter}$ with just one abstraction in its prefix,
  of the constituent rewrite relations
  $\slappdecompired{0}$, $\slappdecompired{1}$,
  $\slabsdecompred$, $\scompressstregred$, $\scompressregred$ of $\sregred$ and $\sstregred$
  only $\scompressregred$ can have branching degree greater than one, which in this case then
  also is bounded by the length $\length{\vec{\avar}}$ of the abstraction prefix. 
  For termination of $\sregred$ and $\sstregred\,$ on finite terms with just a leading abstraction prefix
  we can restrict to $\sregred$, due to (\ref{prop:rewprops:RegCRS:stRegCRS:item:iii-0}),
  and argue as follows:
  On finite terms in $\Ter{\inflambdaprefixcal}$,
  in every $\sregred$\nb-rewrite step
  either the size of the body of the term decreases strictly,
  or the size of the body stays the same, but the length of the prefix decreases by one.
  Hence in every rewrite step 
  the measure $\pair{\text{body size}}{\text{prefix length}}$ on terms
  decreases strictly in the (well-founded) lexicographic ordering on $\nats\times\nats$.
\end{proof}

%
%

As a consequence of Proposition~\ref{prop:rewprops:RegCRS:stRegCRS}, 
(\ref{prop:rewprops:RegCRS:stRegCRS:item:i}) and (\ref{prop:rewprops:RegCRS:stRegCRS:item:iii}),
the rewrite relations $\scompressregred$ and $\scompressstregred$ are normalizing on $\Ter{\inflambdacal}$.
For every term $\aiter\in\Ter{\inflambdaprefixcal}$ we will denote by
$\compressregnf{\aiter}$ and $\compressstregnf{\aiter}$ the normal forms of $\aiter$
with respect to $\scompressregred$ and $\scompressstregred$, respectively.
And by $\scompressregnfred$ (by $\scompressstregnfred$) 
we denote the many-step rewrite relation for $\scompressregred$ (for $\scompressstregred$)
that leads to a $\scompressregred$\nb-normal form (to a $\scompressstregred$\nb-normal form),
that is: $\aiter \compressregnfred \biter$ if $\compressregnf{\aiter} = \biter$
        (and respectively,
         $\aiter \compressstregnfred \biter$ if $\compressstregnf{\aiter} = \biter$), for all terms $\aiter$ and $\biter$.

\begin{proposition}\label{prop:compress:prefix:RegARS}
  The following statements hold:
  \begin{enumerate}[(i)]
    \item\label{prop:compress:prefix:RegARS:item:i}
      Let $\flabs{\vec{\avar}}{\aiter}$ a term in $\RegARS$
      with $\length{\vec{\avar}} = n\in\nats$.
      Then the number of terms $\flabs{\vec{\bvar}}{\biter}$ in $\RegARS$ 
      with $\flabs{\vec{\bvar}}{\biter} 
                                        \compressregmred \flabs{\vec{\avar}}{\aiter}$
      and $\length{\vec{\bvar}} = n+k\in\nats$ 
      is $\binom{n+k}{n}$.
    \protect\item\label{prop:compress:prefix:RegARS:item:ii}
      Let $T$ be a finite set of terms in $\RegARS$, and $k\in\nats$.
      Then also the set of terms in $\RegARS$ that are the form $\flabs{\vec{\bvar}}{\biter}$ 
      with $\length{\vec{\bvar}}\le k$ and that have a $\scompressregmred$\nb-reduct in $T$
      is finite.                                            
  \end{enumerate}         
\end{proposition}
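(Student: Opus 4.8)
The plan is to reduce both parts to a purely combinatorial count of how a given term can arise by inserting vacuous bindings into its prefix. The starting observation is that a single $\scompressregred$-step only ever deletes a vacuous abstraction from the prefix and leaves the body untouched; hence along any $\compressregmred$-reduction the body is invariant and only the prefix shrinks, the surviving prefix abstractions keeping their relative order. Consequently, if $\flabs{\vec{\bvar}}{\biter} \compressregmred \flabs{\vec{\avar}}{\aiter}$ with $\length{\vec{\bvar}} = n+k$ and $\length{\vec{\avar}} = n$, then $\flabs{\vec{\bvar}}{\biter}$ arises from $\flabs{\vec{\avar}}{\aiter}$ by inserting $k$ fresh, vacuous abstractions at $k$ of the $n+k$ prefix slots, with the $n$ original abstractions occupying the remaining $n$ slots in their original order. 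I would first make this ``reconstruction'' precise as a lemma: every length-$(n+k)$ ancestor of a fixed target is the image of a choice $S \subseteq \setexp{1,\dots,n+k}$ with $\length{S}=k$ of insertion positions. Since occurrence of a variable in a (possibly infinite) body is a well-defined, $\alpha$-invariant notion, the lemma is unaffected by the body being infinite.

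For part~(i) I would then exhibit the map $S \mapsto \flabs{\vec{\bvar}}{\biter}$ from the $\binom{n+k}{k}=\binom{n+k}{n}$ position-sets to the set of length-$(n+k)$ ancestors, and argue that it is onto (by the reconstruction lemma). The crux, and the step I expect to be the main obstacle, is its injectivity on $\alpha$-equivalence classes: two position-sets $S_1\neq S_2$ leave the surviving abstractions at different families of absolute indices, so the de~Bruijn-style index profile of the free variable occurrences in the body differs, whence the two terms are distinct elements of $\RegARS$. This argument goes through verbatim as long as every prefix abstraction of the target is actually referenced in $\aiter$, equivalently $\flabs{\vec{\avar}}{\aiter}$ is a $\scompressregred$-normal form: then relocating any surviving abstraction is detected by the body. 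If the target itself carries vacuous bindings, distinct $S$ may yield $\alpha$-equal terms, so the map is merely surjective and $\binom{n+k}{n}$ becomes an upper bound; I would record this refinement explicitly, since the exact equality is the generic case while the upper bound is all that part~(ii) requires.

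For part~(ii) the finiteness is then an immediate corollary. Writing $X$ for the set in question, every $s\in X$ reducing to some $t\in T$ is a $\compressregmred$-ancestor of $t$; since reduction never increases prefix length, $\length{t}\le \length{s}\le k$. Hence $X=\bigcup_{t\in T}\descsetexpnormalsize{s}{s\compressregmred t,\ \length{s}\le k}$, and for each fixed $t$ of prefix length $n_t\le k$ part~(i) bounds the number of ancestors of prefix length $\ell$ by $\binom{\ell}{n_t}$ for each $\ell$ with $n_t\le\ell\le k$. Summing this finite collection of finite quantities over the finitely many lengths $\ell$ and over the finite set $T$ shows that $X$ is finite.
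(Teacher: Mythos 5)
Your proposal follows essentially the same route as the paper's own proof: both identify the prefix-length-$(n+k)$ ancestors of a fixed target under $\scompressregmred$ with choices of $n$ surviving positions among $n+k$ (equivalently, insertions of $k$ vacuous bindings into the prefix, the body and the relative order of the original abstractions being invariant), yielding the count $\binom{n+k}{n}$, and both obtain part~(ii) by summing these finite bounds over the finitely many targets in $T$ and the finitely many admissible prefix lengths. Your injectivity caveat is moreover a correct refinement that the paper's proof glosses over: when the target itself carries a vacuous prefix binding, distinct position sets can yield $\alpha$-equal ancestors (e.g.\ for a target $\flabs{\avar}{\aiter}$ with $\avar$ not occurring in $\aiter$, both ways of inserting one fresh vacuous binding give the same term of prefix length $2$, so there is $1$ ancestor rather than $\binom{2}{1}=2$), so the exact equality in~(i) is really guaranteed only for $\scompressregred$-normal-form targets and is an upper bound in general --- which, as you observe, is all that part~(ii) needs.
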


\begin{proof}
  If 
  $\flabs{\vec{\bvar}}{\funap{\aiter}{\bvar}}
     = 
   \flabs{\bvari{1}\ldots\bvari{n+k}}{\funap{\biter}{\bvari{1},\ldots,\bvari{n+k}}} 
     \compressregmred 
   \flabs{\avari{1}\ldots\avari{n}}{\funap{\aiter}{\avari{1},\ldots,\avari{n}}}
     =
   \flabs{\vec{\avar}}{\funap{\aiter}{\vec{\avar}}}$,
  then it follows that there are
  $i_1,\ldots,i_n\in\setexp{1,\ldots,n+k}$ with $i_1 < i_2 < \ldots < i_n$
  such that the term $\flabs{\vec{\bvar}}{\funap{\aiter}{\bvar}}$ is actually of the form
  $\flabs{\bvari{1}\ldots\bvari{n+k}}{\funap{\biter}{\bvari{i_1},\ldots,\bvari{i_n}}}$
  and furthermore 
  $\flabs{\bvari{i_1}\ldots\bvari{i_n}}{\funap{\biter}{\bvari{i_1},\ldots,\bvari{i_n}}} 
     =
   \flabs{\avari{1}\ldots\avari{n}}{\funap{\aiter}{\avari{1},\ldots,\avari{n}}}$.
  Hence the number of terms
  $\flabs{\vec{\bvar}}{\funap{\aiter}{\bvar}}$
  with 
       $\scompressregmred$\nb-reduct
  $\flabs{\vec{\avar}}{\funap{\aiter}{\vec{\avar}}}$
  is equal to the number of choices $i_1,\ldots,i_n\in\setexp{1,\ldots,n+k}$ such that $i_1 < i_2 < \ldots < i_n$. 
  This establishes statement~(\ref{prop:compress:prefix:RegARS:item:i}).
  Statement~(\ref{prop:compress:prefix:RegARS:item:ii}) is an easy consequence.
\end{proof}

\begin{lemma}\label{lem:rewprops:projection:RegCRS:stRegCRS}
  On $\Ter{\inflambdaprefixcal}$, the rewrite relations $\sregred$ and $\sstregred$ have the following
  further properties with respect to $\scompressregred$, $\scompressstregred$, $\scompressregnfred$, and $\scompressstregnfred$:
  \vspace{-2.5ex}
  \begin{center}
    \begin{minipage}{189pt}
      \begin{align}
        {\binrelcomp{\scompressregconvmred}{\sregred}}
          \; & \subseteq\;
        {\binrelcomp{(\binrelcomp{\compressregnfred}{\sregzeroeqred})}{\scompressregconvmred}}
        \\
        {\binrelcomp{\scompressregconvmred}{\sregmred}}
          \; & \subseteq\;
        {\binrelcomp{(\binrelcomp{\compressregnfred}{\sregzeroeqred})^*}{\scompressregconvmred}}
      \end{align}
    \end{minipage}
    \begin{minipage}{189pt}
       \begin{align}
        {\binrelcomp{\scompressregconvmred}{\sstregred}}
          \; & \subseteq\;
        {\binrelcomp{(\binrelcomp{\compressstregnfred}{\sregzeroeqred})}{\scompressregconvmred}}
        \\
        {\binrelcomp{\scompressregconvmred}{\sstregmred}}
          \; & \subseteq\;
        {\binrelcomp{(\binrelcomp{\compressstregnfred}{\sregzeroeqred})^*}{\scompressregconvmred}}
      \end{align}  
    \end{minipage}
  \end{center}
\end{lemma}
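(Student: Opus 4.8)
The four inclusions split into two single-step statements (the upper line of each pair) and their many-step closures (the lower line). The plan is to establish the two single-step inclusions directly, by a case distinction on the $\sregred$\nb- (respectively $\sstregred$\nb-) step on the right, and then to obtain the two many-step inclusions from them by a routine induction on the length of the $\sregmred$\nb- (respectively $\sstregmred$\nb-) reduction. Throughout, the workhorses are: confluence and termination of $\scompressregred$ from Proposition~\ref{prop:rewprops:RegCRS:stRegCRS}(\ref{prop:rewprops:RegCRS:stRegCRS:item:i}), which makes $\compressregnf{\cdot}$ a well-defined unique normal form; the one-step commutation of $\scompressregred$ with $\slabsdecompred$, $\slappdecompired{0}$, $\slappdecompired{1}$ from item~(\ref{prop:rewprops:RegCRS:stRegCRS:item:ii}); and the inclusion $\scompressstregred \subseteq \scompressregred$ from item~(\ref{prop:rewprops:RegCRS:stRegCRS:item:iii-0}). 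Note that $\scompressregred$\nb- and $\scompressstregred$\nb-steps only delete prefix bindings and leave the body untouched, whereas $\sregzerored$\nb-steps alter the body; this body-invariance is used repeatedly.

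For the single-step inclusion on the left, suppose $c \compressregmred a$ and $c \regred b$. If $c \regred b$ is a $\scompressregred$\nb-step, then $a$ and $b$ are both $\scompressregred$\nb-reducts of $c$; by confluence and termination they share the unique normal form $d := \compressregnf{a} = \compressregnf{b}$, so that $a \compressregnfred d$, then $d \regzeroeqred d$ with no $\sregzerored$\nb-step, and $b \compressregmred d$, which is the required factorisation. If instead $c \regred b$ is a $\sregzerored$\nb-step, I first extend the del\nb-reduction to the normal form, $c \compressregmred a \compressregnfred \compressregnf{a}$, and then push the single $\sregzerored$\nb-step through this whole del\nb-reduction by iterating the one-step commutation of item~(\ref{prop:rewprops:RegCRS:stRegCRS:item:ii}) as a strip argument along $c \compressregmred \compressregnf{a}$. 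This yields a $d$ with $\compressregnf{a} \regzerored d$ and $b \compressregmred d$; together with $a \compressregnfred \compressregnf{a}$ this is a step in $\binrelcomp{(\binrelcomp{\compressregnfred}{\sregzeroeqred})}{\scompressregconvmred}$.

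For the single-step inclusion on the right, suppose $c \compressregmred a$ and $c \stregred b$. The case where $c \stregred b$ is a $\sregzerored$\nb-step is handled as before, except that I reduce $a$ only to its $\scompressstregred$\nb-normal form, $c \compressregmred a \compressstregnfred \compressstregnf{a}$ (legitimate since $\scompressstregred \subseteq \scompressregred$ makes $c \compressregmred \compressstregnf{a}$ a del\nb-reduction), and push the $\sregzerored$\nb-step through it, obtaining $d$ with $\compressstregnf{a} \regzerored d$ and $b \compressregmred d$. The remaining case, where $c \stregred b$ is a $\scompressstregred$\nb-step removing a trailing vacuous binding $z$, is the delicate one; here I use no $\sregzerored$\nb-step and set $d := \compressstregnf{a}$. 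Since $\scompressstregred \subseteq \scompressregred$, both $b$ and $\compressstregnf{a}$ are del\nb-reducts of $c$ with the \emph{same} body. A del\nb-reduct is determined by its subsequence of surviving prefix bindings, so $b \compressregmred \compressstregnf{a}$ holds precisely when every binding surviving in $\compressstregnf{a}$ also survives in $b$ (the bindings deleted being vacuous, hence legitimately removable). This is the case: $b$ retains all of $c$'s prefix except $z$, and $z$ cannot survive in $\compressstregnf{a}$, because if it survived in $a$ then, by order preservation, it would still be trailing and vacuous, and so be removed by $\compressstregnf{\cdot}$.

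Finally, the two many-step inclusions follow by induction on the number of $\sregred$\nb- (respectively $\sstregred$\nb-) steps in $c \regmred b$ (respectively $c \stregmred b$): the base case is immediate with $d := a$, and the inductive step splits off the last step and feeds the span coming from the induction hypothesis into the already-proven single-step inclusion, concatenating the resulting $\binrelcomp{\compressregnfred}{\sregzeroeqred}$\nb- (respectively $\binrelcomp{\compressstregnfred}{\sregzeroeqred}$\nb-) block with the trailing del\nb-reduction. I expect the main obstacle to be the $\scompressstregred$\nb-step case of the right-hand single-step inclusion: one must resist reducing $a$ all the way to $\compressregnf{a}$ and argue instead, via body-invariance and the trailing-vacuousness of $z$, that $\compressstregnf{a}$ already lies below $b$ under $\compressregmred$, since the block $\binrelcomp{\compressstregnfred}{\sregzeroeqred}$ is not permitted to erase the internal vacuous bindings that $\compressstregnf{\cdot}$ leaves in place.
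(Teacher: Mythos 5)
Your proof is correct and follows essentially the same route as the paper, whose proof of this lemma consists of the single remark that the inclusions ``can be shown by arguments with diagrams using the commutation properties in Proposition~\ref{prop:rewprops:RegCRS:stRegCS}''---more precisely, Proposition~\ref{prop:rewprops:RegCRS:stRegCRS}: your strip arguments pushing the single $\sregzerored$\nb-step through the $\scompressregred$\nb-reduction, your appeal to confluence and termination of $\scompressregred$ in the del-step case, and the length induction for the many-step versions are exactly the intended diagram tiling, carried out explicitly. The only local deviation is in the $\scompressstregred$\nb-step case, where you argue directly via body-invariance and the subsequence of surviving prefix bindings rather than tiling with the sub-commutation property $\binrelcomp{\scompressregconvred}{\scompressstregred} \subseteq \binrelcomp{\scompressstregeqred}{\scompressregconveqred}$ from Proposition~\ref{prop:rewprops:RegCRS:stRegCRS}, (\ref{prop:rewprops:RegCRS:stRegCRS:item:ii}); both settle that case correctly, and your version has the merit of making explicit why the block $\binrelcomp{\scompressstregnfred}{\sregzeroeqred}$ needs no $\sregzerored$\nb-step there.
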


\begin{proof}
  These commutation properties, which can be viewed as projection properties, can be shown by arguments with diagrams
  using the commutation properties in Proposition~\ref{prop:rewprops:RegCRS:stRegCRS}. 
\end{proof}

\begin{remark}
  The commutation properties in Lemma~\ref{lem:rewprops:projection:RegCRS:stRegCRS}
  can be refined to state that $\slabsdecompred$\nb-steps project to $\slabsdecompred$\nb-steps,
  and $\slappdecompired{0}$- and $\slappdecompired{1}$\nb-steps project to
  $\slappdecompired{0}$- and $\slappdecompired{1}$\nb-steps, accordingly.
\end{remark}

As an immediate consequence of Proposition~\ref{lem:rewprops:projection:RegCRS:stRegCRS} we obtain the following lemma,
which formulates a connection via projection between rewrite sequences in $\RegARS$ (in $\stRegARS$) 
and \emph{$\scompressregred$\nb-eager} (\emph{$\scompressstregred$\nb-eager}) rewrite sequences in $\RegARS$ (in $\stRegARS$)
that do not contain $\slabsdecompred$-, $\slappdecompired{0}$, or $\slappdecompired{1}$\nb-steps 
on terms that allow $\scompressregred$\nb-steps ($\scompressstregred$\nb-steps).

\begin{lemma}\label{lem:projection:RegCRS:stRegCRS}
  The following statements hold:
  \begin{enumerate}[(i)] 
    \item\label{lem:projection:RegCRS:stRegCRS:item:reg}
      Every (finite or infinite) rewrite sequence in $\RegARS$ of the form:
      \begin{equation*}
        \arewseq \funin
          \flabs{\vec{\avar}_0}{\aiteri{0}}
            \regred
         \flabs{\vec{\avar}_1}{\aiteri{1}}
            \regred
         \ldots
           \regred  
        \flabs{\vec{\avar}_k}{\aiteri{k}}
           \regred
        \ldots
      \end{equation*} 
      projects over a rewrite sequence
      $\crewseq \funin \flabs{\vec{\avar}_0}{\aiteri{0}} \compressregmred \flabs{\vec{\avar}'_0}{\aiteri{0}}$
      to a $\scompressregred$\nb-eager rewrite sequence in $\RegARS\,$ of the form:
      \begin{multline*}
       \Checkstreg{\arewseq} \funin
         \flabs{\vec{\avar}'_0}{\aiteri{0}}
           \binrelcomp{\compressregnfred}{\sregzeroeqred}
         \flabs{\vec{\avar}'_1}{\aiteri{1}}
           \binrelcomp{\compressregnfred}{\sregzeroeqred}
         \;\:\ldots\;\:
         \\
         \;\:\ldots\;\:
           \binrelcomp{\compressregnfred}{\sregzeroeqred}
         \flabs{\vec{\avar}'_k}{\aiteri{k}}
           \binrelcomp{\compressregnfred}{\sregzeroeqred}
         \ldots 
      \end{multline*}
      in the sense that
      $\flabs{\vec{\avar}_i}{\aiteri{i}} \compressregmred \flabs{\vec{\avar}'_i}{\aiteri{i}}\,$
      for all $i\in\nats$ less or equal to the length of $\arewseq$.
    \item\label{lem:projection:RegCRS:stRegCRS:item:streg}
      Every (finite or infinite) rewrite sequence in $\stRegARS$ of the form:
      \begin{equation*}
        \arewseq \funin
          \flabs{\vec{\bvar}_0}{\biteri{0}}
            \stregred
         \flabs{\vec{\bvar}_1}{\biteri{1}}
            \stregred
         \ldots
           \stregred  
         \flabs{\vec{\bvar}_k}{\biteri{k}}
           \stregred
         \ldots  
      \end{equation*}  
      projects over a rewrite sequence
      $\crewseq \funin \flabs{\vec{\bvar}_0}{\biteri{0}} \compressregmred \flabs{\vec{\bvar}'_0}{\biteri{0}}$
      to a $\scompressstregred$\nb-eager rewrite sequence in $\stRegARS\,$:
     \begin{multline*}
       \Checkstreg{\arewseq} \funin
         \flabs{\vec{\bvar}'_0}{\biteri{0}}
           \binrelcomp{\compressstregnfred}{\sregzeroeqred}
         \flabs{\vec{\bvar}'_1}{\biteri{1}}
           \binrelcomp{\compressstregnfred}{\sregzeroeqred}
         \;\:\ldots\;\:
         \\
         \;\:\ldots\;\:
         \binrelcomp{\compressstregnfred}{\sregzeroeqred}
         \flabs{\vec{\bvar}'_k}{\biteri{k}}
           \binrelcomp{\compressregnfred}{\sregzeroeqred}
         \ldots  
      \end{multline*} 
      in the sense that
      $\flabs{\vec{\bvar}_i}{\biteri{i}} \compressstregmred \flabs{\vec{\bvar}'_i}{\biteri{i}}\,$
      for all $i\in\nats$ less or equal to the length of $\arewseq$.
  \end{enumerate}      
\end{lemma}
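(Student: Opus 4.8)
The plan is to obtain both statements by a routine induction along the given rewrite sequence, using the single-step projection inclusions of Lemma~\ref{lem:rewprops:projection:RegCRS:stRegCRS} as the inductive engine; no limit reasoning is required, since the projected sequence is assembled index by index. I describe statement~(\ref{lem:projection:RegCRS:stRegCRS:item:reg}) in detail, and statement~(\ref{lem:projection:RegCRS:stRegCRS:item:streg}) is obtained in the same way after replacing the $\scompressregred$\nb-eager macro-steps by $\scompressstregred$\nb-eager ones and invoking the streg-versions of the inclusions. Throughout I maintain, as an invariant at each index $i$, exactly the projection relation claimed in the conclusion, that is, $\flabs{\vec{\avar}_i}{\aiteri{i}} \compressregmred \flabs{\vec{\avar}'_i}{\aiteri{i}}$, equivalently $\flabs{\vec{\avar}'_i}{\aiteri{i}} \mathrel{\scompressregconvmred} \flabs{\vec{\avar}_i}{\aiteri{i}}$. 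For the base case I may take $\flabs{\vec{\avar}'_0}{\aiteri{0}} \defdby \flabs{\vec{\avar}_0}{\aiteri{0}}$, so that $\crewseq$ is the empty compression and the invariant holds trivially for $i = 0$; the leading eager compression to $\scompressregred$\nb-normal form is then carried out as the first phase of the first macro-step. (By Proposition~\ref{prop:rewprops:RegCRS:stRegCRS}(\ref{prop:rewprops:RegCRS:stRegCRS:item:i}) that normal form exists and is unique, so one could equally well start from it.)

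For the inductive step, assume the invariant at index $i$ and take the next step $\flabs{\vec{\avar}_i}{\aiteri{i}} \regred \flabs{\vec{\avar}_{i+1}}{\aiteri{i+1}}$ of $\arewseq$. Prepending the invariant yields
\[
  \flabs{\vec{\avar}'_i}{\aiteri{i}} \binrelcomp{\scompressregconvmred}{\sregred} \flabs{\vec{\avar}_{i+1}}{\aiteri{i+1}},
\]
to which the inclusion $\binrelcomp{\scompressregconvmred}{\sregred} \subseteq \binrelcomp{(\binrelcomp{\compressregnfred}{\sregzeroeqred})}{\scompressregconvmred}$ of Lemma~\ref{lem:rewprops:projection:RegCRS:stRegCRS} applies. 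It furnishes a term, which I name $\flabs{\vec{\avar}'_{i+1}}{\aiteri{i+1}}$, with
\[
  \flabs{\vec{\avar}'_i}{\aiteri{i}} \binrelcomp{\compressregnfred}{\sregzeroeqred} \flabs{\vec{\avar}'_{i+1}}{\aiteri{i+1}}
  \qquad\text{and}\qquad
  \flabs{\vec{\avar}_{i+1}}{\aiteri{i+1}} \compressregmred \flabs{\vec{\avar}'_{i+1}}{\aiteri{i+1}}.
\]
The first conjunct is the next macro-step of the eager sequence $\Checkstreg{\arewseq}$, and the second is precisely the invariant at index $i+1$. Iterating over every index $i$ below the length of $\arewseq$ produces the finite or infinite $\scompressregred$\nb-eager sequence together with the asserted projection relations.

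Two observations complete the argument, and the second is the only point needing care. First, the body notation is automatically consistent: since $\scompressregred$\nb-steps never change the body, the relation $\flabs{\vec{\avar}_{i+1}}{\aiteri{i+1}} \compressregmred \flabs{\vec{\avar}'_{i+1}}{\aiteri{i+1}}$ forces the target of the macro-step to carry body $\aiteri{i+1}$, so the eager sequence genuinely tracks the same bodies as $\arewseq$. When the step of $\arewseq$ happens to be a $\scompressregred$\nb-step, one has $\aiteri{i+1} = \aiteri{i}$, the $\sregzeroeqred$\nb-phase is empty, and the macro-step is a pure (re)normalisation, consistent with confluence and termination of $\scompressregred$ (Proposition~\ref{prop:rewprops:RegCRS:stRegCRS}(\ref{prop:rewprops:RegCRS:stRegCRS:item:i})). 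Second, and this is the main obstacle, is the bookkeeping that makes ``$\scompressregred$\nb-eager'' precise and guarantees that the structural phase reproduces the \emph{same kind} of decomposition step as the original one; this is exactly the content of the refinement of Lemma~\ref{lem:rewprops:projection:RegCRS:stRegCRS} recorded in the Remark immediately following it, namely that $\slabsdecompred$\nb-steps project to $\slabsdecompred$\nb-steps and $\slappdecompired{0}$-, $\slappdecompired{1}$\nb-steps to steps of the same kind. Statement~(\ref{lem:projection:RegCRS:stRegCRS:item:streg}) then follows mutatis mutandis from the $\scompressstregred$\nb-versions of these inclusions.
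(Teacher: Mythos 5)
Your proposal is correct and matches the paper's intended argument: the paper presents this lemma as an immediate consequence of Lemma~\ref{lem:rewprops:projection:RegCRS:stRegCRS}, and your index-by-index induction with the invariant $\flabs{\vec{\avar}_i}{\aiteri{i}} \compressregmred \flabs{\vec{\avar}'_i}{\aiteri{i}}$ is exactly the unfolding of that remark, using the single-step inclusion $\binrelcomp{\scompressregconvmred}{\sregred} \subseteq \binrelcomp{(\binrelcomp{\compressregnfred}{\sregzeroeqred})}{\scompressregconvmred}$ as the inductive engine. Your two closing observations (bodies are preserved because $\scompressregred$-steps leave them untouched, and the kind-preservation refinement from the Remark after Lemma~\ref{lem:rewprops:projection:RegCRS:stRegCRS}) are the right points to make explicit.
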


\begin{remark}[non-determinism of ${\normalfont \sregred}$ and ${\normalfont \sstregred}$]
On terms in $\Ter{\inflambdaprefixcal}$, which have just one prefix at the top of the term, 
there are two different causes for non-determinism of the rewrite relations in $\Reg$ and $\stReg$:
First, since the left-hand sides of the rules $\srulep{\slappdecompi{0}}$ and $\srulep{\slappdecompi{1}}$
coincide, these rules enable different steps on the same term, producing the left- and respectively the right subterm
of the application immediately below the prefix. 
Second, the rules $\srulep{\scompress}$ and $\srulep{\snlvarsucc}$ can be applicable
in situations where also one of the rules $\srulep{\slappdecompi{0}}$, $\srulep{\slappdecompi{1}}$,
or $\srulep{\slabsdecomp}$ is applicable. 
Whereas the first kind of non-determinism is due to the `observer' having to observe the two different subterms
of an application in a $\lambda$\nb-term, the second is due to a freedom of the observer as to
when to attest the end of a scope (of some kind) in the analysed $\lambda$\nb-term.
\end{remark}

In the definition below we define strategies on $\RegARS$ and $\stRegARS$ that
resolves the second source of non-determinism while leaving the first kind
intact. 
As a result the sub-ARS induced by some term $\ater$ with respect to
$\stRegARS$ correspond structurally to the term graph of $\ater$.
%

\begin{definition}[scope/\extscopedelimiting\ strategy]%
  \label{def:scope:delimiting:strat:Reg:stReg}
  We call a strategy $\astrat$ for $\RegARS$ (for $\stRegARS$) 
  a \emph{\scopedelimiting\ strategy} (\emph{a \extscopedelimiting\ strategy})
  if the source of a step is non-deterministic (that is, it is the source of more than one step)
  if and only if it is the source of precisely a $\slappdecompired{0}$\nb-step and a $\slappdecompired{1}$\nb-step.
  
  For every such strategy~$\astrat$, we denote by
  $\slappdecompistratred{0}{\astrat}\,$, $\slappdecompistratred{1}{\astrat}\,$, $\slabsdecompstratred{\astrat}\,$,
  and  $\scompressregstratred{\astrat}$ ($\scompressstregstratred{\astrat}\,$), 
  the rewrite relations that are induced by those steps according to $\astrat$
  that are induced by applications of the rules
  $\srulep{\slappdecompi{0}}$, $\srulep{\slappdecompi{1}}$, $\srulep{\slabsdecomp}$,
  and $\srulep{\scompress}$ ($\srulep{\snlvarsucc}$), 
  respectively.
\end{definition}

\begin{remark}
  Note the following more verbose formulation of the condition for a strategy $\astrat$ for $\RegARS$ (for $\stRegARS$) 
  to be called a \scopedelimiting\ (\extscopedelimiting) strategy:
  \begin{itemize}
    \item every source of a step is one of three kinds: 
      the source of a
      $\slabsdecompred$-step, the source of a $\scompressregred$-step (a
      $\scompressstregred$\nb-step), or the source of both a $\slappdecompired{0}$\nb-step and a
    $\slappdecompired{1}$\nb-step with the restriction that (in all three
    cases) it is not the source of any other step.
  \end{itemize} 
  Mindful of the fact that sources of $\slabsdecompred$\nb-steps are not sources of $\slappdecompired{i}$\nb-steps, or vice versa,
  in $\RegARS$ (in $\stRegARS$),
  this condition can be relaxed to the equivalent formulation:  
  \begin{itemize}
    \item no source of a $\labsdecompred$\nb-step or a $\lappdecompired{i}$\nb-step for $i\in\setexp{0,1}$
      is also the source of a $\scompressregred$\nb-step (a $\scompressstregred$\nb-step),
      and every source of a $\lappdecompired{i}$\nb-step for $i\in\setexp{0,1}$
      is the source of both a $\lappdecompired{0}$- and a $\lappdecompired{1}$\nb-step, but not the source of any other step.
  \end{itemize}
\end{remark}

\begin{definition}[eager and lazy scope/\extscopedelimiting\ strategies]
  \label{def:eager:lazy:scopedelimiting:strategy:Reg:stReg}
  The \emph{eager \scopedelimiting\ strategy\/~$\eagscdelstratreg$}
      (\emph{lazy \scopedelimiting\ strategy~$\lazyscdelstratreg$})
  \emph{for $\RegARS$} 
  is defined as the restriction of rewrite steps in $\RegARS$ 
  to eager (and respectively, to lazy) application of the rule $\srulep{\scompress}$:
  on a term $\aflpter\in\Ter{\lambdaprefixcal}$, 
  applications of 
  other rules
  are only allowed if $\srulep{\scompress}$ is not applicable
  (applications of $\srulep{\scompress}$ are only allowed when other rules are not applicable). 
  %
  Analogously, 
  the \emph{eager \extscopedelimiting\ strategy\/~$\eagscdelstratstreg$}
              (the \emph{lazy  \extscopedelimiting\ strategy $\lazyscdelstratstreg$})
  \emph{for $\stRegARS$}    
  is defined as the restriction of rewrite steps in $\stRegARS$ 
  to eager (respectively, to lazy) application of the rule~$\srulep{\snlvarsucc}$. 
\end{definition}

\begin{figure}
\begin{tabular}{cccc}
\fig{lxy_xxy-cpreager.ltg-l}&
\fig{lxy_xxy-cprlazy.ltg-l}&
\fig{lxy_xxy-eager.ltg-l}&
\fig{lxy_xxy-lazy.ltg-l}
\\
$\eagscdelstratreg$&
$\lazyscdelstratreg$&
$\eagscdelstratstreg$&
$\lazyscdelstratstreg$
\end{tabular}
\caption{\label{fig:four-strategies}
sub-ARSs induced by $\labs{x}{\labs{y}{\lapp{\lapp xx}{y}}}$ with respect to
different $\RegARS$ and $\stRegARS$ strategies; compare:
Figure~\ref{fig:three_redgraphs}. Again note, that the labels do only show the
prefixes associated with each term.}
\end{figure}

\begin{samepage}
\begin{remark}[history-aware versus history-free \scopedelimiting\ strategies]
The history-free strategy obtained by projection from a history-aware
\scopedelimiting\ strategy is not in general a \scopedelimiting\ strategy.
This is due to the non-determinism which may be introduced by the projection.
Consider for example the term $\lapp MM$ with $M=\labs{x}{\labs{y}{\lapp{\lapp
xx}{y}}}$ and the history-aware strategy $\astrat$ constructed by using
$\eagscdelstratreg$ on the left component and $\lazyscdelstratreg$ on the right
component of $\lapp MM$. The sub-ARS induced by $MM$ then corresponds to the
graphs of the sub-ARSs for $\eagscdelstratreg$ and $\lazyscdelstratreg$ as
depicted in Figure~\ref{fig:four-strategies} placed side by side with an
additional connecting node at the top (also some leafs are merged). The
induced sub-ARSs of the history-free strategy obtained by projection, however,
resembles the graph from Figure~\ref{fig:three_redgraphs} (with one
additional application node at the top). That graph however bears the
non-determinism which is not permitted for a scope-delimiting strategy, in the
form of the existence of a source of both a $\scompressstregred$ and a
$\slappdecompiregred{i}$ step.
\end{remark}
\end{samepage}

The following proposition formulates a property of the eager \scopedelimiting\ (\extscopedelimiting) strategy
in $\RegARS$ (in $\stRegARS$) that assigns it a special status:
the target of every rewrite sequence with respect to $\eagscdelstratreg$ (with respect to $\eagscdelstratstreg$)
can be reached, modulo some final $\scompressregred$\nb-steps ($\scompressstregred$\nbd-steps),
also by a rewrite sequence with respect to an arbitrary \scopedelimiting\ (\extscopedelimiting) strategy.
Furthermore, rewrite sequences with respect to $\eagscdelstratreg$ (with respect to $\eagscdelstratstreg$)
are able to mimic rewrite sequences with respect to an arbitrary \scopedelimiting\ (\extscopedelimiting) strategy,
up to trailing $\scompressregred$\nb-steps ($\scompressstregred$\nbd-steps) applied to the latter.

\begin{proposition}\label{prop:eagscdelstrat}
  For all \scopedelimiting\ strategies $\astrat$ on $\RegARS$,
  and \extscopedelimiting\ strategies $\astratplus$ on $\stRegARS$ the following statements hold:
  \begin{enumerate}[(i)]
    \item\label{prop:eagscdelstrat:item:i}
      $\sstratmred{\eagscdelstratreg}$ factors into $\sstratmred{\astrat}$ and $\scompressregmred\,$, and
      $\sstratmred{\eagscdelstratstreg}$ factors into $\sstratmred{\astratplus}$ and $\scompressstregmred$. 
    \item\label{prop:eagscdelstrat:item:ii} 
      $\sstratred{\eagscdelstratreg}$ is cofinal for $\sstratred{\astrat}$ with trailing $\scompressregred$\nb-steps, and
      $\sstratred{\eagscdelstratstreg}$ is cofinal for $\sstratred{\astratplus}$ with trailing $\scompressstregred$\nb-steps.
  \end{enumerate}
\end{proposition}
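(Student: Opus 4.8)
The plan is to derive both parts, for $\RegARS$ and for $\stRegARS$ in parallel, from the projection and commutation results already available. The guiding observation is that the eager strategy and an arbitrary \scopedelimiting\ strategy $\astrat$ (\extscopedelimiting\ strategy $\astratplus$) agree on the decomposition steps $\slabsdecompred$, $\slappdecompired{0}$, $\slappdecompired{1}$ and differ only in \emph{when} the vacuous-binding removals $\scompressregred$ ($\scompressstregred$) are performed, the eager strategy performing them maximally early. Both claims then say that eager reductions and $\astrat$-reductions coincide up to how far compression has been pushed, read in the two opposite directions.

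For statement~(\ref{prop:eagscdelstrat:item:ii}) I would appeal directly to Lemma~\ref{lem:projection:RegCRS:stRegCRS}. Given a reduction $\arewseq \funin t_0 \sstratmred{\astrat} t_n$, which in particular is an $\RegARS$-reduction, the lemma yields, after an initial compression $t_0 \compressregmred t_0'$, a $\scompressregred$\nb-eager reduction $\Checkstreg{\arewseq}$ whose $i$\nb-th term $t_i'$ satisfies $t_i \compressregmred t_i'$. The macro-steps $\binrelcomp{\compressregnfred}{\sregzeroeqred}$ of $\Checkstreg{\arewseq}$ are exactly the steps admitted by $\eagscdelstratreg$, and the initial compression is itself an $\eagscdelstratreg$\nb-reduction; concatenating them gives $t_0 \sstratmred{\eagscdelstratreg} t_n'$ together with $t_n \compressregmred t_n'$. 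As the latter is the same as $t_n' \scompressregconvmred t_n$, this is precisely cofinality of $\sstratred{\eagscdelstratreg}$ for $\sstratred{\astrat}$ with trailing $\scompressregred$\nb-steps. The $\stRegARS$ variant is obtained verbatim from part~(\ref{lem:projection:RegCRS:stRegCRS:item:streg}) of the same lemma.

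For statement~(\ref{prop:eagscdelstrat:item:i}), the factoring, I would start from an eager reduction $t_0 \sstratmred{\eagscdelstratreg} t_n$ and build a matching $\astrat$\nb-reduction that follows the \emph{same} decomposition skeleton: by the \scopedelimiting\ condition (Definition~\ref{def:scope:delimiting:strat:Reg:stReg}) the only genuine freedom is the $\slappdecompired{0}$/$\slappdecompired{1}$ choice at application splits, and I let $\astrat$ make the identical choices. Along this common root\nb-path the body of the prefixed term is fixed by the path alone, so the eager and the $\astrat$ terms can differ only in their prefixes. The invariant to maintain is that the $\astrat$\nb-prefix always extends the eager prefix by bindings that are vacuous for the current body; the one-step commutations $\binrelcomp{\scompressregconvred}{\slabsdecompred} \subseteq \binrelcomp{\slabsdecompred}{\scompressregconvred}$ and $\binrelcomp{\scompressregconvred}{\slappdecompired{i}} \subseteq \binrelcomp{\slappdecompired{i}}{\scompressregconvred}$ from Proposition~\ref{prop:rewprops:RegCRS:stRegCRS}(\ref{prop:rewprops:RegCRS:stRegCRS:item:ii}) ensure that these extra bindings stay removable, so that after reaching the body of $t_n$ the trailing $\scompressregred$\nb-steps recover $t_n$. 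This gives $\sstratmred{\eagscdelstratreg} \subseteq \binrelcomp{\sstratmred{\astrat}}{\scompressregmred}$; the commutation properties of Lemma~\ref{lem:rewprops:projection:RegCRS:stRegCRS} can be used to package this postponement of compression, so that the prefixes need not be tracked by hand.

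The main obstacle is the $\stRegARS$ instance of~(\ref{prop:eagscdelstrat:item:i}), where $\scompressstregred$ may delete only the \emph{last} prefix binding. There the invariant ``the surplus vacuous bindings remain removable'' is no longer automatic, since a binding already closed by $\eagscdelstratstreg$ could sit below a later, non-vacuous binding retained by $\astratplus$, and then no $\scompressstregred$\nb-step can reach it. The resolution must exploit that \extscope{s} are strictly nested (Definition~\ref{def:bind:capt:chain}): the still-open bindings then form a properly nested stack, the bindings that eager has already closed are exactly the topmost ones, and the deferred $\scompressstregred$\nb-steps become available in the correct LIFO order. Establishing that every \extscopedelimiting\ strategy preserves this stack discipline --- and hence that the trailing steps can indeed be taken as $\scompressstregred$\nb-steps, with the commutations of $\scompressstregred$ with the decomposition steps from Proposition~\ref{prop:rewprops:RegCRS:stRegCRS}(\ref{prop:rewprops:RegCRS:stRegCRS:item:iv}) serving as the diagram\nb-completion tool --- is the crux, and the argument for~(\ref{prop:eagscdelstrat:item:i}) is best organised as an induction on the length of the eager reduction with a single-step diagram chase at each stage.
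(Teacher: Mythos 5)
You have located the weak point of the stReg half exactly, but your proposed repair does not work, and in fact the statement you are trying to prove there is false as written: with trailing $\scompressstregred$\nb-steps, both the factorisation in~(i) and the cofinality in~(ii) fail for $\stRegARS$, so no diagram chase can rescue them. Take the finite term $\labs{x}{\labs{y}{\lapp{y}{y}}}$ and let $\astratplus$ be the lazy strategy $\lazyscdelstratstreg$. The eager strategy reduces $\femptylabs{\labs{x}{\labs{y}{\lapp{y}{y}}}} \labsdecompred \flabs{x}{\labs{y}{\lapp{y}{y}}} \compressstregred \femptylabs{\labs{y}{\lapp{y}{y}}} \labsdecompred \flabs{y}{\lapp{y}{y}}$, whereas the lazy strategy postpones the $\scompressstregred$\nb-step at $\flabs{x}{\labs{y}{\lapp{y}{y}}}$ in favour of the $\slabsdecompred$\nb-step and reaches only $\flabs{xy}{\lapp{y}{y}}$ and then $\flabs{xy}{y}$; neither admits any $\scompressstregred$\nb-step, since $y$ occurs in the body and $\slabs x$ is not the last prefix binding. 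Because $\scompressstregred$\nb-steps preserve the body, the only $\lazyscdelstratstreg$\nb-reduct that could compress to $\flabs{y}{\lapp{y}{y}}$ is $\flabs{xy}{\lapp{y}{y}}$, which is an $\scompressstregred$\nb-normal form; hence~(i) fails for this $\astratplus$. Dually, $\flabs{xy}{\lapp{y}{y}}$ is a $\lazyscdelstratstreg$\nb-reduct and an $\scompressstregred$\nb-normal form that is not $\eagscdelstratstreg$\nb-reachable, so~(ii) fails as well. This also refutes the crux of your plan: it is not true that the bindings eager has already closed are exactly the topmost ones of the $\astratplus$\nb-prefix. Eager closes $\slabs x$ while $\astratplus$ still keeps it open, $\astratplus$ then pushes the non-vacuous $\slabs y$ on top of it, and $\slabs x$ is trapped permanently; the strict nesting of extended scopes does not prevent this, because the two strategies close scopes at different moments.

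The statements become true, and your arguments go through, once the connecting compression in the stReg halves is weakened from $\scompressstregred$ to $\scompressregred$: in the example, $\flabs{xy}{\lapp{y}{y}} \compressregmred \flabs{y}{\lapp{y}{y}}$. This is also all that the paper's own machinery delivers: the projection properties for $\sstregred$ in Lemma~\ref{lem:rewprops:projection:RegCRS:stRegCRS} have $\scompressregconvmred$, not an $\scompressstregred$\nb-converse, as the relation connecting the original with the projected sequence, and your ``verbatim'' appeal to Lemma~\ref{lem:projection:RegCRS:stRegCRS}, (\ref{lem:projection:RegCRS:stRegCRS:item:streg}), silently relies on the $\compressstregmred$\nb-correspondence asserted there, which overstates those properties and is refuted by the same example. (The paper states the proposition without any proof, so there is no official argument to compare against; your route for the $\RegARS$ halves is sound --- part~(ii) directly via the projection lemma, part~(i) by induction along the eager reduction using termination and confluence of $\scompressregred$ together with the one-step commutations of Proposition~\ref{prop:rewprops:RegCRS:stRegCRS} --- precisely because $\scompressregred$ may delete a vacuous binding at \emph{any} position of the prefix, so the invariant ``the $\astrat$\nb-term $\scompressregred$\nb-reduces to the eager term'' is stable under all four kinds of steps.) Note finally that the weakened, del-based formulation suffices for the later applications, e.g.\ Proposition~\ref{prop:eager:strat:in:def:reg:streg} and Theorem~\ref{thm:streg:fin:bind:capt:chains}, where finiteness is transferred via Proposition~\ref{prop:compress:prefix:RegARS}, so nothing downstream requires the trailing steps to be $\scompressstregred$\nb-steps.
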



\begin{figure}
\begin{tabular}{ccc}
\hspace{0.7cm}
\fig{entangled-cpreager.ltg-l} &
\hspace{0.7cm}
\fig{pstricks/entangled.ltg-l} &
\hspace{0.7cm}
\fig{cons.ltg-l} \\
(1) & (2) & (3)
\end{tabular}
\caption{\label{fig:entangled-cons-ltg}%
Induced sub-ARS as graphs (only with prefixes as node labels) for:\protect\\
(1)~Example~\ref{ex:entangled} with the eager $\eagscdelstrat$ scope-delimiting strategy for $\RegARS$.
    The dotted line denotes node equality. The two connected nodes represent
    identical ($\alpha$-equivalent) terms. They are represented by two separate
    nodes instead of a single one to avoid confusion as to which variable (it is
    $a$) is deleted from the prefix by the adjacent $\scompress$-edge.\protect\\
(2)~Example~\ref{ex:entangled} with an arbitrary \extscope\nb-delimiting strategy for
    $\stRegARS$. The vertical dots denote an infinite growth of the graph.\protect\\
(3)~Example~\ref{ex:expresses} with an arbitrary \extscopedelimiting{}
    strategy for $\stRegARS$. The horizontal dots denote an unspecified continuation
    of the graph, which cannot be made explicit as the nature of the `$:$'-operator
    has been left open.}
\end{figure}

\begin{definition}[generated subterms of \inflambdaterms]\label{def:STred:Reg:stReg}
  Let $\astrat$ be a \scopedelimiting\ strategy for $\RegARS$ with rewrite relation~$\stratred{\astrat}$
  (let $\astratplus$ be a \scopedelimiting\ strategy for $\stRegARS$ with rewrite relation~$\stratred{\astratplus}$).  
  For every $\aiter\in\Ter{\inflambdacal}$, 
  the set $\gSTregstrat{\astrat}{\aiter}$ of \emph{generated subterms} of $\aiter$ with respect to~$\RegARS$ and~$\astrat$ 
  (the set $\gSTstregstrat{\astrat}{\aiter}$ with respect to~$\stRegARS$ and~$\astrat$) 
  is defined 
  as the set of $\stratmred{\astrat}$\nb-reducts
    (the set of $\stratmred{\astratplus}$\nb-reducts) of $\femptylabs{\aiter}$ via the mappings:
  \begin{align*}
    \sgSTregstrat{\astrat} \funin \Ter{\inflambdacal} & {} \longrightarrow \powersetof{\Ter{\inflambdaprefixcal}}
     &
    \hspace*{-2ex} 
    \sgSTstregstrat{\astratplus} \funin \Ter{\inflambdacal} & {} \longrightarrow \powersetof{\Ter{\inflambdaprefixcal}} 
    \\
    \aiter & {} \longmapsto \gSTregstrat{\astrat}{{\aiter}} 
                               \defdby
                            \redsuccs{\stratmred{\astrat}}{\femptylabs{\aiter}}   
    &
    \aiter & {} \longmapsto \gSTstregstrat{\astratplus}{{\aiter}} 
                               \defdby
                            \redsuccs{\stratmred{\astratplus}}{\femptylabs{\aiter}}   
  \end{align*}
\end{definition}

\begin{definition}\label{def:regularity:ARS}
  Let $\aARS$ be an abstract rewriting system, and $\astrat$ a strategy for $\aARS$.
  We say that an object $\aobj$ in $\aARS$ is \emph{$\astrat$\nb-regular in $\aARS$} 
  if the \ARS\ $\redsuccs{\stratmred{\astrat}}{\aobj}$ induced by $\aobj$ 
  (and consequently the set of $\sstratmred{\astrat}$\nb-reducts of $\aobj$ in $\aARS$) is finite. 
\end{definition}

\begin{definition}[regular and strongly regular infinite \lambdaterms]\label{def:reg:streg}
  An infinite \lambdaterm~$\aiter$ is called \emph{regular} (\emph{strongly regular})
  if there exists a \scopedelimiting\ strategy $\ascdelstratreg$ for $\RegARS$ 
                 (a \extscopedelimiting\ strategy $\ascdelstratstreg$ for $\stRegARS$)
  such that $\aiter$ is $\astrat$\nb-regular (is $\ascdelstratstreg$\nb-regular).
\end{definition}

  Note that an infinite \lambdaterm~$\aiter$ is regular (strongly regular) 
  if and only if
  the set $\gSTregstrat{\ascdelstratreg}{\aiter}$ (the set $\gSTstregstrat{\ascdelstratstreg}{\aiter}$) 
  of generated subterms of $\aiter$ with respect to some \scopedelimiting\ strategy~$\ascdelstratreg$ 
  (with respect to some \extscopedelimiting\ strategy~$\ascdelstratstreg$) 
  is finite. 

\begin{samepage}
\begin{proposition}\label{prop:def:reg:streg}
  The following statements hold:
  \begin{enumerate}[(i)] 
    \item\label{prop:def:reg:streg:item:i}
      Every strongly regular infinite \lambdaterm\ is also regular. 
    \item\protect\label{prop:def:reg:streg:item:ii}
      Finite $\lambda$\nb-terms are both regular and strongly regular.   
  \end{enumerate}    
\end{proposition}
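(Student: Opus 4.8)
The plan is to prove part~(\ref{prop:def:reg:streg:item:i}) by directly converting a given \extscopedelimiting\ strategy into a \scopedelimiting\ one, and to reduce part~(\ref{prop:def:reg:streg:item:ii}) to a K\H{o}nig's Lemma argument that also (optionally) feeds into~(\ref{prop:def:reg:streg:item:i}). The guiding observation for~(\ref{prop:def:reg:streg:item:i}) is that, by Proposition~\ref{prop:rewprops:RegCRS:stRegCRS}(\ref{prop:rewprops:RegCRS:stRegCRS:item:iii-0}), every $\scompressstregred$\nb-step is a $\scompressregred$\nb-step while the remaining step kinds coincide, so the steps taken by an \extscopedelimiting\ strategy are all legitimate $\RegARS$\nb-steps (cf.\ Proposition~\ref{prop:rewseqs:stRegCRS:2:RegCRS}). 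The sole obstruction to reading such a strategy directly as a strategy for $\RegARS$ is the mismatch of normal forms: by Proposition~\ref{prop:rewprops:RegCRS:stRegCRS}(\ref{prop:rewprops:RegCRS:stRegCRS:item:iv-1}) the $\sstregred$\nb-normal forms are exactly the terms $\flabs{\avari{1}\ldots\avari{n}}{\avari{n}}$, whereas $\flabs{\avar}{\avar}$ is the only $\sregred$\nb-normal form.

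Concretely, for~(\ref{prop:def:reg:streg:item:i}) let $\aiter$ be strongly regular, witnessed by a \extscopedelimiting\ strategy $\astratplus$ for $\stRegARS$ with finite set $S \defdby \gSTstregstrat{\astratplus}{\aiter}$. I would define a finite set $R$ consisting of $S$ together with, for each $\sstregred$\nb-normal form $\flabs{\avari{1}\ldots\avari{n}}{\avari{n}} \in S$, the finitely many terms $\flabs{\avari{i}\ldots\avari{n}}{\avari{n}}$ for $2 \le i \le n$ obtained by deleting the vacuous leading bindings; $R$ is finite since $S$ is finite and each of its members has a finite abstraction prefix. I would then let $\astrat$ be the strategy for $\RegARS$ that, on $t \in R$, performs the $\astratplus$\nb-step of $t$ when $t$ is not $\sstregred$\nb-normal (now read as a $\RegARS$\nb-step, the $\scompressstregred$\nb-step becoming the $\scompressregred$\nb-step deleting the last prefix binding), performs the $\scompressregred$\nb-step deleting the first prefix binding when $t = \flabs{\avari{1}\ldots\avari{n}}{\avari{n}}$ with $n \ge 2$, and on $\Ter{\inflambdaprefixcal}\setminus R$ coincides with the eager \scopedelimiting\ strategy $\eagscdelstratreg$. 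Since $R$ is closed under the prescribed steps and contains $\femptylabs{\aiter}$, the two clauses have disjoint domains and $\astrat$ is a well\nb-defined sub\nb-\ARS\ of $\RegARS$; its only terminal term in $R$ is $\flabs{\avar}{\avar}$, so it has the $\RegARS$\nb-normal forms. It also meets the \scopedelimiting\ condition of Definition~\ref{def:scope:delimiting:strat:Reg:stReg}: on non\nb-normal terms of $R$ this is inherited from $\astratplus$, the formerly stuck terms carry a single deterministic $\scompressregred$\nb-step, and off $R$ it is $\eagscdelstratreg$. Finally $\gSTregstrat{\astrat}{\aiter} \subseteq R$ is finite, so $\aiter$ is regular.

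For~(\ref{prop:def:reg:streg:item:ii}), let $\aiter$ be a finite $\lambda$\nb-term. Then $\femptylabs{\aiter}$ is a finite term in $\Ter{\inflambdaprefixcal}$, and every $\sregred$- and $\sstregred$\nb-reduct of it is again finite, since each rule only descends to a constituent subterm or shortens the prefix. By Proposition~\ref{prop:rewprops:RegCRS:stRegCRS}(\ref{prop:rewprops:RegCRS:stRegCRS:item:v}) both $\sregred$ and $\sstregred$ are finitely branching and terminating on finite terms; hence the reduction tree of $\femptylabs{\aiter}$ is finitely branching and has no infinite path, so by K\H{o}nig's Lemma it is finite, and the sets of $\sregred$- and of $\sstregred$\nb-reducts of $\femptylabs{\aiter}$ are finite. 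Choosing any \scopedelimiting\ strategy $\ascdelstratreg$ for $\RegARS$ and any \extscopedelimiting\ strategy $\ascdelstratstreg$ for $\stRegARS$ (for instance the eager ones), the sets $\gSTregstrat{\ascdelstratreg}{\aiter}$ and $\gSTstregstrat{\ascdelstratstreg}{\aiter}$ are contained in these finite reduct sets and hence finite, so $\aiter$ is both regular and strongly regular. Alternatively, strong regularity follows from the K\H{o}nig argument for $\sstregred$ alone, and regularity then from part~(\ref{prop:def:reg:streg:item:i}).

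The step I expect to be the main obstacle is the normal\nb-form mismatch in~(\ref{prop:def:reg:streg:item:i}): an \extscopedelimiting\ strategy leaves prefixes $\flabs{\avari{1}\ldots\avari{n}}{\avari{n}}$ uncompressed, and these are not $\RegARS$\nb-normal forms, so one must both extend the strategy to drive them to $\flabs{\avar}{\avar}$ and verify that the extension stays within the finite set $R$ and respects the \scopedelimiting\ determinism condition. The history\nb-aware case is handled analogously, by reinterpreting the rewrite labelling of $\stRegARS$ over a corresponding labelling of $\RegARS$; only the local bijective correspondence of outgoing steps at related sources is needed, and it is preserved because a single $\scompressstregred$\nb-step at a source maps to a single $\scompressregred$\nb-step.
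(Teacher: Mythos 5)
Your proof is correct, and for both items it follows essentially the same route as the paper: item~(i) is obtained by re-reading the steps of a witnessing \extscopedelimiting\ strategy as $\RegARS$-steps (via $\sstregred \subseteq \sregred$ and Proposition~\ref{prop:rewseqs:stRegCRS:2:RegCRS}), and item~(ii) is exactly the paper's argument combining Proposition~\ref{prop:rewprops:RegCRS:stRegCRS}, (\ref{prop:rewprops:RegCRS:stRegCRS:item:v}), with K\H{o}nig's Lemma.

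The one place where you genuinely go beyond the paper is the normal-form repair in~(i), and it is worth noting that this addresses a point the paper's own proof elides. The paper simply asserts that exchanging $\scompressstregred$-steps with $\scompressregred$-steps turns the given strategy $\astratplus$ into a \scopedelimiting\ strategy $\astrat'$ for $\RegARS$ with $\gSTregstrat{\astrat'}{\aiter} = \gSTstregstrat{\astratplus}{\aiter}$. Read against the paper's definition of a strategy (a sub\nb-\ARS\ with the same objects \emph{and the same normal forms} as the ambient \ARS), this is slightly too quick: the converted sub\nb-\ARS\ leaves the $\sstregred$\nb-normal forms $\flabs{\avari{1}\ldots\avari{n}}{\avari{n}}$ with $n \ge 2$ stuck, and these are not $\sregred$\nb-normal forms, so strictly speaking the converted system is not a strategy for $\RegARS$. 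Your finite closure $R$, together with the extra deterministic $\scompressregred$\nb-steps on these terms and the eager strategy off $R$, repairs exactly this mismatch, at the harmless cost of concluding only $\gSTregstrat{\astrat}{\aiter} \subseteq R$ rather than equality of the generated-subterm sets --- finiteness, which is all that regularity requires, is preserved either way. Your treatment of the history-aware case is only sketched, but the paper's proof is no more detailed on that point, so nothing essential is missing.
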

\end{samepage}

\begin{proof}  
  For statement (\ref{prop:def:reg:streg:item:i}), 
  let $\aiter$ be an infinite \lambdaterm, 
  and let $\astrat$ be a \scopedelimiting\ strategy for $\stRegARS$ such that $\gSTstregstrat{\astrat}{\aiter}$ is finite.
  Due to Proposition~\ref{prop:rewseqs:stRegCRS:2:RegCRS}, 
  $\astrat$ can be modified with the result of a \scopedelimiting\ strategy $\astrat'$ for $\RegARS$
  by exchanging $\scompressstregred$\nb-steps with $\scompressregred$\nb-steps.  
  Then there is a stepwise correspondence between $\sstratred{\astrat}$-rewrite sequences
  and $\sstratred{\astrat'}$\nb-rewrite sequences that pass through the same terms. 
  Consequently, the sets of $\sstratmred{\astrat}$- and $\sstratmred{\astrat'}$\nb-reducts
  of $\femptylabs{\aiter}$ coincide: 
  $\gSTstrat{\astrat'}{\aiter} = \gSTstrat{\astrat}{\aiter}$. It follows that $\gSTstrat{\astrat'}{\aiter}$ is finite. 
   
  For statement (\ref{prop:def:reg:streg:item:ii}), note that
  by Proposition~\ref{prop:rewprops:RegCRS:stRegCRS}, (\ref{prop:rewprops:RegCRS:stRegCRS:item:v}),
  and K\H{o}nig's Lemma,
  every finite term in $\Ter{\lambdaprefixcal}$ 
  has only finitely many reducts with respect to $\sregmred$, or $\sstregmred$.
  It follows that for every finite \lambdaterm~$\aiter$ and every \scopedelimiting\ strategy $\astrat$
  on $\RegARS$, or on $\stRegARS$,
  the number of $\sstratmred{\astrat}$\nb-reducts of $\femptylabs{\aiter}$ 
  is finite, too. 
\end{proof}

The \scopedelimiting\ strategies in the above definition could be fixed to the respective eager versions
without changing the notions of regular and strongly regular infinite $\lambda$\nb-term.

\begin{proposition}\label{prop:eager:strat:in:def:reg:streg}
  For all infinite \lambdaterm{s} $\aiter$ the following statements hold:
  \begin{enumerate}[(i)]
    \item\label{prop:eager:strat:in:def:reg:streg:item:i}
      $\aiter$ is regular if and only if 
                                         $\aiter$ is $\eagscdelstratreg$\nb-regular. 
    \item\label{prop:eager:strat:in:def:reg:streg:item:ii} 
      $\aiter$ is strongly regular if and only if 
                                                  $\aiter$ is $\eagscdelstratstreg$\nb-regular.
  \end{enumerate}
\end{proposition}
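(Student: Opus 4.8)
The plan is to prove both biconditionals in parallel, since they have identical shape, and to notice that almost all of the work has already been done in Proposition~\ref{prop:eagscdelstrat}. In each case the backward implication is immediate: $\eagscdelstratreg$ is itself a particular \scopedelimiting\ strategy for $\RegARS$, and $\eagscdelstratstreg$ a particular \extscopedelimiting\ strategy for $\stRegARS$, so by Definition~\ref{def:reg:streg} an $\eagscdelstratreg$\nb-regular (resp.\ $\eagscdelstratstreg$\nb-regular) term is in particular regular (resp.\ strongly regular). All the content therefore lies in the forward implications, which I would derive uniformly from the factorisation in Proposition~\ref{prop:eagscdelstrat}, (\ref{prop:eagscdelstrat:item:i}).

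For statement~(\ref{prop:eager:strat:in:def:reg:streg:item:i}), I would assume $\aiter$ regular, witnessed by a \scopedelimiting\ strategy $\astrat$ for $\RegARS$ with $T \defdby \gSTregstrat{\astrat}{\aiter}$ finite. By Proposition~\ref{prop:eagscdelstrat}, (\ref{prop:eagscdelstrat:item:i}), the relation $\sstratmred{\eagscdelstratreg}$ factors into $\sstratmred{\astrat}$ and $\scompressregmred$, that is, ${\sstratmred{\eagscdelstratreg}} \subseteq {\binrelcomp{\sstratmred{\astrat}}{\scompressregmred}}$. Hence every generated subterm in $\gSTregstrat{\eagscdelstratreg}{\aiter} = \redsuccs{\stratmred{\eagscdelstratreg}}{\femptylabs{\aiter}}$ is a $\scompressregmred$\nb-reduct of some $\sstratmred{\astrat}$\nb-reduct of $\femptylabs{\aiter}$, and thus a $\scompressregmred$\nb-reduct of some element of the finite set $T$.

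The only remaining point is that the set of $\scompressregmred$\nb-reducts of a finite set of terms is again finite, and this is where I expect the (mild) bookkeeping to sit, though it is elementary: a $\scompressregred$\nb-step merely deletes one vacuous binding from the abstraction prefix and leaves the body unchanged, so every $\scompressregmred$\nb-reduct of a term $\flabs{\vec{\avar}}{\aiter'}$ has body $\aiter'$ and a prefix that is a subsequence of $\vec{\avar}$, of which there are at most $2^{\length{\vec{\avar}}}$. (Alternatively one invokes termination and finite branchingness of $\scompressregred$ from Proposition~\ref{prop:rewprops:RegCRS:stRegCRS}, (\ref{prop:rewprops:RegCRS:stRegCRS:item:i}) and (\ref{prop:rewprops:RegCRS:stRegCRS:item:v}), together with K\H{o}nig's Lemma, exactly as in the proof of Proposition~\ref{prop:def:reg:streg}.) Therefore $\gSTregstrat{\eagscdelstratreg}{\aiter}$ lies in a finite union of finite sets, so it is finite and $\aiter$ is $\eagscdelstratreg$\nb-regular.

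Statement~(\ref{prop:eager:strat:in:def:reg:streg:item:ii}) I would handle the same way, using the second half of Proposition~\ref{prop:eagscdelstrat}, (\ref{prop:eagscdelstrat:item:i}): for a \extscopedelimiting\ strategy $\astratplus$ witnessing strong regularity one has ${\sstratmred{\eagscdelstratstreg}} \subseteq {\binrelcomp{\sstratmred{\astratplus}}{\scompressstregmred}}$, so $\gSTstregstrat{\eagscdelstratstreg}{\aiter}$ consists of $\scompressstregmred$\nb-reducts of elements of the finite set $\gSTstregstrat{\astratplus}{\aiter}$. Here the finiteness of the reduct set is even cleaner, because $\scompressstregred$ is deterministic and terminating (Proposition~\ref{prop:rewprops:RegCRS:stRegCRS}, (\ref{prop:rewprops:RegCRS:stRegCRS:item:iii})): the $\scompressstregmred$\nb-reducts of a single term form a finite chain. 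Hence $\gSTstregstrat{\eagscdelstratstreg}{\aiter}$ is finite and $\aiter$ is $\eagscdelstratstreg$\nb-regular. The genuine difficulty is thus discharged entirely by Proposition~\ref{prop:eagscdelstrat}; I note in passing that its cofinality part~(\ref{prop:eagscdelstrat:item:ii}) is not needed for this argument.
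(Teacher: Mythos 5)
Your proposal is correct and follows essentially the same route as the paper: the backward direction is immediate from the definition, and the forward direction uses the factorisation of Proposition~\ref{prop:eagscdelstrat}, (\ref{prop:eagscdelstrat:item:i}), to see that every $\sstratmred{\eagscdelstratreg}$\nb-reduct ($\sstratmred{\eagscdelstratstreg}$\nb-reduct) of $\femptylabs{\aiter}$ is a $\scompressregmred$\nb-reduct ($\scompressstregmred$\nb-reduct) of a term in the finite set of generated subterms for the witnessing strategy. The only difference is cosmetic: the paper asserts without argument that every term has only finitely many $\scompressregmred$\nb-reducts and treats item~(\ref{prop:eager:strat:in:def:reg:streg:item:ii}) as analogous, whereas you spell out the (correct) subsequence bound $2^{\length{\vec{\avar}}}$ and the determinism of $\scompressstregred$ explicitly.
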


\begin{proof}
  We only prove (\ref{prop:eager:strat:in:def:reg:streg:item:i}), because
  (\ref{prop:eager:strat:in:def:reg:streg:item:ii}) can be established analogously.
  The implication ``$\Leftarrow$'' in (\ref{prop:eager:strat:in:def:reg:streg:item:i})
  follows directly from the definition of regularity.  
  For showing ``$\Rightarrow$'', let $\aiter$ be an infinite \lambdaterm\ that is regular. 
  Then there exists a \scopedelimiting\ strategy~$\astrat$ so that $\gSTstrat{\astrat}{\aiter}$ is finite.
  Since by Proposition~\ref{prop:eagscdelstrat}, (\ref{prop:eagscdelstrat:item:i}),
  every $\sstratmred{\eagscdelstrat}$\nb-rewrite sequence factors into an 
  $(\binrelcomp{\sstratmred{\astrat}}{\scompressregmred})$-rewrite sequence, it follows that every term
  in $\gSTstrat{\eagscdelstratreg}{\aiter}$ is the $\scompressregmred$\nb-reduct of a term in $\gSTstrat{\astrat}{\aiter}$.
  As every term in $\Ter{\inflambdaprefixcal}$ has only finitely many $\scompressregmred$\nb-reducts,
  it follows that also $\gSTstrat{\eagscdelstratreg}{\aiter}$ is finite.
\end{proof}

\begin{example}[regular and strongly regular terms]\label{ex:reg-and-streg-terms}
The following examples demonstrate the connection between (strong) regularity
and, as illustrated in Figure \ref{fig:entangled-cons-ltg}, the finiteness of
the \ARS{s} induced by $\RegARS$ ($\stRegARS$) strategies.
\begin{itemize}
\item Example~\ref{ex:entangled} is regular but not strongly regular.
\item Example~\ref{ex:expresses} is strongly regular.
\end{itemize}
\end{example}

For further illustration of the statements made in
Example~\ref{ex:reg-and-streg-terms} let us consider various $\RegARS$ and
$\stRegARS$ rewrite sequences corresponding to infinite paths through the
terms.  

\begin{example}\label{ex:entangled-infinite-path}
For the term $\ater$ from Example~\ref{ex:entangled}, however we first
introduce a finite CRS-based notation, as a `higher-order recursive program scheme'. 
We can represent $\ater$ by
$\labs{a}{\funap{rec_M}{a}}$ together with the \CRS-rule
$\funap{rec_M}{\sametavar} \redb{rec_M}
\labs{\avar}{\lapp{\funap{rec_M}{\avar}}{\sametavar}}$. It holds that
$\labs{a}{\funap{rec_M}{a}} \infredb{rec_M} \ater$.
Using this notation we can finitely describe the infinite path down the spine
of Example~\ref{ex:entangled} by the cyclic $\eagscdelstrat$ rewrite sequence:
\[\begin{array}{ll}
                    & \femptylabs{\labs{a}{\funap{rec_M}{a}}}
\\\slabsdecompred   & \flabs{a}{\funap{rec_M}{a}}
\\\sredp{rec_M}     & \flabs{a}{\labs{b}{\lapp{\funap{rec_M}{b}}a}}
\\\slabsdecompred   & \flabs{ab}{\lapp{\funap{rec_M}{b}}a}
\\\slappdecompired0 & \flabs{ab}{\funap{rec_M}{b}}
\\\scompressregred  & \flabs{b}{\funap{rec_M}{b}}
\\=                 & \flabs{a}{\funap{rec_M}{a}}
\end{array}\]
In $\stRegARS$ the rewriting sequence for the same path is invariant over all
\extscope-delimiting strategies and necessarily infinite:
\[\begin{array}{llllll}
\femptylabs{\labs{a}{\funap{rec_M}{a}}}
& \slabsdecompred   & \flabs{a}{\funap{rec_M}{a}}
& \sredp{rec_M}    \\ \flabs{a}{\labs{b}{\lapp{\funap{rec_M}{b}}a}}
& \slabsdecompred   & \flabs{ab}{\lapp{\funap{rec_M}{b}}a}
& \slappdecompired0 & \flabs{ab}{\funap{rec_M}{b}}
& \sredp{rec_M}    \\ \flabs{ab}{\labs{c}{\lapp{\funap{rec_M}{c}}b}}
& \slabsdecompred   & \flabs{abc}{\lapp{\funap{rec_M}{c}}b}
& \slappdecompired0 & \flabs{abc}{\funap{rec_M}{c}}
& \sredp{rec_M}    \\ \flabs{abc}{\labs{d}{\lapp{\funap{rec_M}{d}}c}}
& \slabsdecompred   & \flabs{abcd}{\lapp{\funap{rec_M}{d}}c}
& \slappdecompired0 & \flabs{abcd}{\funap{rec_M}{d}}
& \sredp{rec_M}    \\ \flabs{abcd}{\labs{e}{\lapp{\funap{rec_M}{e}}d}}
& \slabsdecompred   & \flabs{abcde}{\lapp{\funap{rec_M}{e}}d}
& \dots
\end{array}\]
\end{example}

\begin{example}\label{ex:simpleletrec-infinite-path}
As an illustration of a regular term we study the $\ater$ defined as the
unfolding of $\letrec{\arecvar =
\labs{\avar\bvar}{\lapp{\lapp{\arecvar}{\bvar}}{\avar}}}{\arecvar}$ from
Example~\ref{ex:llunfCRS}.
It is strongly regular since the infinite path through the term can be witnessed by this
cyclic $\stRegARS$ rewriting sequence:

\[\begin{array}{ll}
                      \femptylabs\ater
&=                  \\ \femptylabs{\labs{\avar\bvar}{\lapp{\lapp\ater\bvar}{\avar}}}
&\slabsdecompred    \\ \flabs{\avar}{\labs{\bvar}{\lapp{\lapp\ater\bvar}{\avar}}}
&\slabsdecompred    \\ \flabs{\avar\bvar}{\lapp{\lapp\ater{\bvar}}{\avar}}
&\slappdecompired0  \\ \flabs{\avar\bvar}{\lapp\ater{\bvar}}
&\slappdecompired0  \\ \flabs{\avar\bvar}\ater
&\scompressstregred \\ \flabs{\avar}\ater
&\scompressstregred \\ \femptylabs\ater
&\dots
\end{array}\]
See also Figure~\ref{fig:distance} for a graphical illustration of the induced sub-ARS.
\end{example}

\begin{figure}
\begin{tabular}{cc}
\hspace{1cm}\fig{simpleletrec-eager.ltg-l}\hspace{1cm} & \hspace{1cm}\fig{pstricks/simpleletrec-lazy.ltg-l}\hspace{1cm} \\
$\eagscdelstratstreg$ & $\lazyscdelstratstreg$
\end{tabular}
\caption{
\label{fig:distance}
Two $\stRegCRS$ sub-ARSs induced by the infinite unfolding of
$\letrec{\arecvar = \labs{\avar\bvar}{\lapp{\lapp{\arecvar}{\bvar}}{\avar}}}{\arecvar}$.}
\end{figure}

\begin{remark}
The restriction of Proposition~\ref{prop:eager:strat:in:def:reg:streg} to the
eager \scopedelimiting{} strategy cannot be relaxed to arbitrary
\scopedelimiting{} strategies. The term in
Example~\ref{ex:simpleletrec-infinite-path} for instance is
$\eagscdelstratstreg$\nb-regular but not $\lazyscdelstratstreg$\nb-regular
(Figure~\ref{fig:distance}).
\end{remark}

\begin{definition}[grounded cycles in $\RegARS$, $\stRegARS$]\label{def:grounded:cycle}
Let 
$ \arewseq \funin
  \flabs{\vecsub{\avar}{0}}{\aiteri{0}}
    \red
  \flabs{\vecsub{\avar}{1}}{\aiteri{1}}
    \red
  \ldots $
be a finite or infinite rewrite sequence with respect to $\sregred$ or $\sstregred$. 
By a \emph{grounded cycle} in $\arewseq$ we mean
a cycle 
$ \flabs{\vecsub{\avar}{i}}{\aiteri{i}}
    \red
  \flabs{\vecsub{\avar}{i+1}}{\aiteri{i+1}}
    \red
  \ldots
     \red
  \flabs{\vecsub{\avar}{i+k}}{\aiteri{i+k}} = \flabs{\vecsub{\avar}{i}}{\aiteri{i}} $,
in $\arewseq$, where $i\in\nats$ and $k\ge 1$, 
with the additional property that $\length{\vecsub{\avar}{i+j}} \ge \length{\vecsub{\avar}{i}}$ for all $j\in\setexp{0,1,\ldots,k}$
(i.e.\ the lengths of the abstraction prefixes in the terms of the cycle is greater or equal to
 the length of the abstraction prefix at the first and final term of the cycle).
\end{definition}

\begin{proposition}\label{prop:grounded:cycle}
  Let $\aiter$ be an infinite \lambdaterm\ that is $\astrat$\nb-regular ($\astratplus$\nb-regular) for some 
  \scopedelimiting\ strategy~$\astrat$ (\extscopedelimiting\ strategy $\astratplus$). 
  Then every infinite rewrite sequence with respect to $\astrat$ (with respect to $\astratplus$) contains a grounded cycle.
\end{proposition}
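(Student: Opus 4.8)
The plan is to treat this as a purely combinatorial pigeonhole argument built on $\astrat$\nb-regularity. Since $\aiter$ is $\astrat$\nb-regular, the set of $\sstratmred{\astrat}$\nb-reducts of $\femptylabs{\aiter}$, namely $\gSTregstrat{\astrat}{\aiter}$ (respectively $\gSTstregstrat{\astratplus}{\aiter}$ in the $\stRegARS$ case), is finite. Hence every term occurring in the given infinite rewrite sequence $\arewseq\funin \flabs{\vecsub{\avar}{0}}{\aiteri{0}} \red \flabs{\vecsub{\avar}{1}}{\aiteri{1}} \red \cdots$ on $\aiter$ is drawn from this finite set, so the set $S$ of terms that occur \emph{infinitely often} in $\arewseq$ is nonempty and finite. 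The guiding idea is that among all infinitely recurring terms one should select one whose abstraction prefix is as short as possible, and then take the cycle between two of its occurrences: the minimality of that prefix length is exactly what forces the grounded condition of Definition~\ref{def:grounded:cycle}.

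First I would set $\ell_i \defdby \length{\vecsub{\avar}{i}}$ and let $m$ be the least prefix length attained by any term in $S$, fixing a term $t_0 \in S$ whose prefix has length $m$. Next, since $S$ is finite and every term outside $S$ occurs only finitely often, there is an index $N$ with $\flabs{\vecsub{\avar}{i}}{\aiteri{i}} \in S$ for all $i \ge N$; by minimality of $m$ this gives $\ell_i \ge m$ for all $i \ge N$. Then, because $t_0$ recurs infinitely often, it occurs at two indices $i < i'$ with $i \ge N$; writing $k \defdby i' - i \ge 1$, the segment $\flabs{\vecsub{\avar}{i}}{\aiteri{i}} \red \cdots \red \flabs{\vecsub{\avar}{i+k}}{\aiteri{i+k}}$ of $\arewseq$ is a cycle, since both endpoints equal $t_0$. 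Finally I would check the grounded condition: every intermediate index $i+j$ with $j \in \{0,\ldots,k\}$ satisfies $i+j \ge N$, whence $\ell_{i+j} \ge m = \ell_i$, i.e.\ $\length{\vecsub{\avar}{i+j}} \ge \length{\vecsub{\avar}{i}}$, as required. The $\stRegARS$ case with an \extscopedelimiting\ strategy $\astratplus$ is verbatim the same, using $\gSTstregstrat{\astratplus}{\aiter}$ in place of $\gSTregstrat{\astrat}{\aiter}$.

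Notably, this argument uses none of the fine structure of the decomposition rules; in particular it is irrelevant that $\slabsdecompred$-, $\slappdecompired{i}$-, and $\scompressregred$/$\scompressstregred$\nb-steps change the prefix length by $+1$, $0$, and $-1$ respectively. The one point requiring care, which I regard as the crux, is the choice of $t_0$ of \emph{minimal} prefix length among the terms recurring infinitely often: it is this minimality, together with the threshold $N$ beyond which only such terms appear, that simultaneously guarantees that the two chosen occurrences close up into a genuine cycle \emph{and} that no term strictly inside the cycle dips below the prefix length of its endpoints. Choosing merely an arbitrary recurring term, or the globally shortest prefix appearing anywhere in $\arewseq$, would fail to secure the grounded condition.
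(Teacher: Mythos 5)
Your proof is correct and takes essentially the same route as the paper's: the paper sets $l \defdby \liminf$ of the abstraction-prefix lengths along the sequence (which coincides with your $m$, since only finitely many distinct terms occur), restricts to a tail where all prefix lengths are $\ge l$, and applies pigeonhole among the finitely many recurring terms of prefix length $l$ there --- exactly your $t_0$-and-$N$ argument. The only cosmetic difference is that the paper pigeonholes within the subsequence of all terms of minimal prefix length rather than fixing one such term in advance.
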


\begin{proof}
  Since the argument is analogous in both cases, we only treat strongly regular terms. 
  Let $\aiter$ be an infinite \lambdaterm\ that is $\astratplus$\nb-regular for some \extscopedelimiting\ strategy~$\astratplus$, 
  and let 
  $ \arewseq \funin
  \aiter 
    = 
  \flabs{\vecsub{\avar}{0}}{\aiteri{0}}
    \stratred{\astratplus}
  \flabs{\vecsub{\avar}{1}}{\aiteri{1}}
    \stratred{\astratplus}
  \ldots $
  be an infinite rewrite sequence.
  As $\aiter$ is $\astratplus$\nb-regular, 
  the sequence $\sequence{\flabs{\vecsub{\avar}{i}}{\aiteri{i}}}{i\in\nats}$ of terms on $\arewseq$ contains only finitely many different terms.  
  Let $ l \defdby \liminf \sequence{\length{\vecsub{\avar}{i}}}{i\in\nats}$, that is, the minimum of abstraction prefix lengths 
  that appears infinitely often on $\arewseq$. Let $\sequence{\flabs{\vecsub{\avar}{i_j}}{\aiteri{i_j}}}{j\in\nats}$
  be the subsequence of $\sequence{\flabs{\vecsub{\avar}{i}}{\aiteri{i}}}{i\in\nats}$
  consisting of terms with prefix length $l$, and such that, for all $k\ge i_0$,
  $\length{\vecsub{\avar}{k}} \ge l$.
  Since also this subsequence 
  contains only finitely many terms, there exist $j_1,j_2\in\nats$, $j_1 < j_2$ such that 
  $ \flabs{\vecsub{\avar}{i_{j_1}}}{\aiteri{i_{j_1}}} = \flabs{\vecsub{\avar}{i_{j_2}}}{\aiteri{i_{j_2}}} $.
  By the choice of the subsequence it follows that
  $ \flabs{\vecsub{\avar}{i_{j_1}}}{\aiteri{i_{j_1}}} 
      \stratred{\astratplus}
    \ldots
      \stratred{\astratplus}
    \flabs{\vecsub{\avar}{i_{j_2}}}{\aiteri{i_{j_2}}} $
  is a grounded cycle in $\arewseq$. 
\end{proof}

Rounding off this section, we provide a motivation for the system~$\stRegARS$
in terms of an operation `parse' that at the same time 
(i)~decomposes an infinite \lambdaterm\ into its generated subterms, and 
(ii)~recombines the generated subterms encountered in the decomposition analysis
     with, in the limit, the original term as the result. 
With this purpose in mind, we define a \CRS~$\stParseCRS$.

\begin{definition}[$\stParseCRS$]\label{def:stParseCRS}
  Let $\siglpcparseCRS = \siglpcCRS \cup \descsetexp{ \sparsen{n} }{ n\in\nats }$ 
  be the extension of the \CRS\nb-sig\-na\-ture for $\lambdaprefixcal$, where 
  for $n\in\nats$, the symbols $\sparsen{n}$ have arity~$n$.
  By $\stParseCRS$ we denote the \CRS\ with the following rules:
\begin{align*}
    (\srulebp{\sparse}{@}):
      & \hspace*{1ex}
      \begin{aligned}[t]
        \parsen{n}{\vecsub{\sametavar}{n}}{\flabsCRS{n}{\vecsub{\avar}{n}}{\lappCRS{\cmetavari{0}{\vecsub{\avar}{n}}}{\cmetavari{1}{\vecsub{\avar}{n}}}}}  
          & {} \;\red\;
        \\[-0.5ex]
        & \hspace*{-35ex}  
        \lappCRS{\parsen{n}{\vecsub{\sametavar}{n}}{\flabsCRS{n}{\vecsub{\avar}{n}}{\cmetavari{0}{\vecsub{\avar}{n}}}}}{\parsen{n}{\vecsub{\sametavar}{n}}{\flabsCRS{n}{\vecsub{\avar}{n}}{\cmetavari{1}{\vecsub{\avar}{n}}}}}  
      \end{aligned}
    \\[0.5ex]
    (\srulebp{\sparse}{\slabs}):
      & \hspace*{1ex}
      \begin{aligned}[t]
        \parsen{n}{\vecsub{\sametavar}{n}}{\flabsCRS{n}{\vecsub{\avar}{n}}{\labsCRS{\bvar}{\cmetavar{\vecsub{\avar}{n},\bvar}}}}
          & {} \;\red\;
        \\[-0.5ex]
        & \hspace*{-13ex}  
        \labsCRS{\bvar}{\parsen{n+1}{\vecsub{\sametavar}{n},\bvar}{\flabsCRS{n+1}{\vecsub{\avar}{n}}{\absCRS{\bvar}{\cmetavar{\vecsub{\avar}{n},\bvar}}}}}
      \end{aligned}
    \\[0.5ex]
    (\srulebp{\sparse}{\scompressstreg}):
      & \hspace*{1ex}
      \begin{aligned}[t]
        \parsen{n+1}{\sametavari{1},\ldots,\sametavari{n+1}}{\funap{\sflabsCRS{n+1}}{\absCRS{\vecsub{\avar}{n+1}}{\cmetavar{\vecsub{\avar}{n}}}}}
        & {} \;\red\;
        \\[-0.5ex]
        & \hspace*{-5.9ex}
       \parsen{n}{\sametavari{1},\ldots,\sametavari{n}}{\funap{\sflabsCRS{n}}{\absCRS{\vecsub{\avar}{n}}{\cmetavar{\vecsub{\avar}{n}}}}}
      \end{aligned}
    \\[0.5ex]  
    (\srulebp{\sparse}{\nlvar}):
      & \hspace*{1ex}
      \parsen{n}{\sametavari{1},\ldots,\sametavari{n}}{\flabsCRS{n}{\vecsub{\avar}{n}}{\avari{n}}}
        \;\red\;
      \sametavari{n}
\end{align*}
  We denote by $\sparsered$ the rewrite relation induced by this \CRS.
\end{definition}

\begin{remark}
Observe that the rules ($\srulep{\slappdecompi{i}}$) for $i\in\setexp{0,1}$, 
                           ($\srulep{\slabsdecomp}$), and ($\srulep{\scompressstreg}$)
of the \CRS~$\stRegCRS$ are contained within the rules 
                           ($\srulebp{\sparse}{@}$), ($\srulebp{\sparse}{\slabs}$), ($\srulebp{\sparse}{\Vacstreg}$), respectively,
of the \CRS~$\stParseCRS$,
in the sense that applications of the latter rules include applications of the former.

This has as a consequence that repeated $\sparsered$\nb-steps on a term $\femptylabs{\aiter}$  
lead to terms that contain generated subterms of $\aiter$ as closed subexpressions. Furthermore 
$\sparsered$\nb-rewrite sequences on $\femptylabs{\aiter}$ are possible that move redexes simultaneously
deeper and deeper, analysing ever larger parts of $\aiter$, and at the same time recreating a larger and larger  
\lambdaterm\ parts (stable prefix contexts) of $\aiter$, the original term.   
\end{remark}

\begin{proposition}\label{prop:stParseCRS}
   For every term $\aiter\in\Ter{\lambdacal}$ it holds:
  \begin{enumerate}[(i)]
    \item{}\label{prop:stParseCRS:item:i} 
      $\stParseCRS$ analyses $\aiter$ into its generated subterms:
      If $\parseempty{\femptylabs{\aiter}} \parsemred \aiter'$, then all subexpressions starting with $\sflabsCRS{n}$ (for some $n\in\nats$) in $\aiter'$
      are generated subterms of $\aiter$.
      Moreover, for every generated subterm $\flabs{\vec{\bvar}}{\biter}$ of $\aiter$, there exists a 
      $\parsemred$\nb-reduction $\aiter''$ of $\aiter$ such that $\flabs{\vec{\bvar}}{\biter}$ is contained in $\aiter''$.
    \item{}\label{prop:stParseCRS:item:ii} 
      $\stParseCRS$ reconstructs $\aiter$:
      $\parseempty{\femptylabs{\aiter}} \parseinfred \aiter$. 
  
  \end{enumerate}
\end{proposition}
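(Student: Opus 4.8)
The plan is to exploit the observation recorded in the Remark preceding the proposition: the decomposition rules of $\stRegCRS$ are embedded in the rules of $\stParseCRS$, so that a $\sparsered$\nb-reduction on $\parseempty{\femptylabs{\aiter}}$ simultaneously performs, on each of its active $\sflabsCRS{n}$\nb-headed subexpressions, a $\sstregred$\nb-reduction, while additionally recording the already analysed structure as stable $\slabsCRS$- and $\slappCRS$\nb-context and resolving variable occurrences through the slot arguments of the $\parsen{n}$\nb-symbols. A preliminary structural fact, which I would establish by a trivial induction on the reduction, is that in any $\sparsered$\nb-reduct of $\parseempty{\femptylabs{\aiter}}$ every occurrence of a symbol $\sflabsCRS{n}$ sits inside a subexpression of the shape $\parsen{n}{\vec{\sametavar}}{\flabs{\vec{\bvar}}{\biter}}$, since each parse rule either leaves such a subexpression intact with its body decomposed by one step (rules $(\srulebp{\sparse}{@})$, $(\srulebp{\sparse}{\slabs})$, $(\srulebp{\sparse}{\scompressstreg})$) or, via $(\srulebp{\sparse}{\nlvar})$, removes it entirely in favour of a slot variable.

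For statement~\ref{prop:stParseCRS:item:i}, I would prove by induction on the length of a reduction $\parseempty{\femptylabs{\aiter}} \parsemred \aiter'$ the invariant that for every subexpression $\parsen{n}{\vec{\sametavar}}{\flabs{\vec{\bvar}}{\biter}}$ of $\aiter'$ the prefixed term $\flabs{\vec{\bvar}}{\biter}$ is an $\sstregmred$\nb-reduct of $\femptylabs{\aiter}$, hence a generated subterm of $\aiter$ in the sense of Definition~\ref{def:STred:Reg:stReg}. The base case is immediate, and the induction step uses the rule-by-rule correspondence: a step by $(\srulebp{\sparse}{@})$ replaces the active body by the two terms obtained by a $\slappdecompired{0}$- and a $\slappdecompired{1}$\nb-step, a step by $(\srulebp{\sparse}{\slabs})$ by the $\slabsdecompred$\nb-reduct, and a step by $(\srulebp{\sparse}{\scompressstreg})$ by the $\compressstregred$\nb-reduct; in each case the new active body is reachable from the old one in a single $\sstregred$\nb-step. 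For the ``moreover'' part I would argue by the converse simulation: given a generated subterm $\flabs{\vec{\bvar}}{\biter}$, I fix a witnessing path $\femptylabs{\aiter} \stregred \cdots \stregred \flabs{\vec{\bvar}}{\biter}$ in $\stRegARS$ and mimic it in $\stParseCRS$, applying at each stage the parse rule containing the corresponding decomposition step and, at every $(\srulebp{\sparse}{@})$\nb-step, continuing into the branch dictated by the $\slappdecompired{i}$\nb-step; the resulting reduct $\aiter''$ then contains $\flabs{\vec{\bvar}}{\biter}$ inside its $\parsen{n}$\nb-wrapper.

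For statement~\ref{prop:stParseCRS:item:ii} I would construct an explicit $\sparsered$\nb-reduction of length $\omega$ and show it is strongly convergent with limit $\aiter$. The schedule is breadth\nb-first in the syntax tree of $\aiter$: having made the reduct stable up to depth $d$, I process each of the finitely many $\parsen{n}$\nb-redexes at depth $d$ in turn. At such a redex $\parsen{n}{\vec{\sametavar}}{\flabs{\vec{\avar}}{\biter}}$ one first applies finitely many (at most $n$) $(\srulebp{\sparse}{\scompressstreg})$\nb-steps, each removing the trailing prefix binder, which is vacuous whenever $\biter$ is a variable bound by an earlier binder than $\avari{n}$; then the unique structural or variable rule applicable to $\biter$ fires. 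A step by $(\srulebp{\sparse}{@})$ or $(\srulebp{\sparse}{\slabs})$ installs a stable $\slappCRS$- or $\slabsCRS$\nb-node that matches the node of $\aiter$ at that position and pushes the new $\parsen{}$\nb-redexes to depth $d+1$, while $(\srulebp{\sparse}{\nlvar})$ installs the correct bound-variable leaf through the slot $\sametavari{n}$. Since $\aiter$ is at most binary branching, depth $d$ contains finitely many positions, so each phase consists of finitely many steps, after which the reduct agrees with $\aiter$ up to depth $d$ and all remaining redexes lie at depth $\ge d+1$.

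Because the created $\slabsCRS$-, $\slappCRS$\nb-nodes and variable leaves are never again redexes, the stable part grows monotonically and the depth of rewrite activity tends to infinity; hence the constructed $\omega$\nb-length reduction is strongly convergent and its limit coincides with $\aiter$ at every depth, yielding $\parseempty{\femptylabs{\aiter}} \parseinfred \aiter$. I expect the main obstacle to be precisely the strong-convergence bookkeeping in part~\ref{prop:stParseCRS:item:ii}: one must verify that the interleaved $(\srulebp{\sparse}{\scompressstreg})$\nb-steps, which shorten the prefix without deepening the redex, cannot accumulate so as to stall convergence, and that no $(\srulebp{\sparse}{\scompressstreg})$- or $(\srulebp{\sparse}{\nlvar})$\nb-step disturbs the already-fixed structure above it. Both facts follow from the single observation that every parse step acts strictly inside one $\parsen{n}$\nb-subexpression and that the number of trailing vacuous binders available at a redex is bounded by its current prefix length, so that only finitely many scope\nb-delimiting steps can precede each structure\nb-building step.
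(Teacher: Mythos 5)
Your proposal is correct, and it follows exactly the route the paper intends: the paper states Proposition~\ref{prop:stParseCRS} without an explicit proof, relying on the preceding remark that the rules of $\stRegCRS$ are embedded in those of $\stParseCRS$ (which yields your rule-by-rule simulation invariant for part~(i) and its converse for the ``moreover'' clause) together with the observation that $\sparsered$\nb-sequences can push redexes ever deeper while recreating stable \lambdaterm\ contexts (your breadth-first strongly convergent $\omega$\nb-reduction for part~(ii)). Your additional bookkeeping --- that consecutive $(\srulebp{\sparse}{\scompressstreg})$\nb-steps are bounded by the current prefix length, so scope-delimiting steps cannot stall convergence --- correctly discharges the one point the paper leaves implicit.
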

 
\begin{example}\label{ex:stParseCRS}
Let $\ater$ be the infinite unfolding of $\letrec{\arecvar =
\labs{\avar\bvar}{\lapp{\lapp{\arecvar}{\bvar}}{\avar}}}{\arecvar}$ for which
we use as a finite representation the equation
$\ater = \labs{\avar\bvar}{\lapp{\lapp{\ater}{\bvar}}{\avar}}$.
In $\stParseCRS$, $\ater$ is decomposed, and composed again, by the infinite rewrite sequence (see also Example~\ref{ex:llunfCRS}):
\begin{align*} &
\parseempty{\femptylabs{\ater}}
\alignbreak = &
\parseempty{\femptylabs{\labs\avar{\labs\bvar{\lapp{\lapp{\ater}{\bvar}}{\avar}}}}}
\alignbreak\sparselabsred &
\labs{\avar'}{\parsen1{\avar'}{\flabs{\avar'}{\labs\bvar{\lapp{\lapp{\ater}{\bvar}}{\avar'}}}}}
\alignbreak\sparselabsred &
\labs{\avar'}{\labs{\bvar'}{\parsen2{\avar',\bvar'}{\flabs{\avar'\bvar'}{\lapp{\lapp{\ater}{\bvar'}}{\avar'}}}}}
\alignbreak\sparselappred &
\labs{\avar'}{\labs{\bvar'}{\lapp
  {\parsen2{\avar',\bvar'}{\flabs{\avar'\bvar'}{\lapp{\ater}{\bvar'}}}}
  {\parsen2{\avar',\bvar'}{\flabs{\avar'\bvar'}{\avar'}}}}}
\alignbreak\sparsecompressstregred &
\labs{\avar'}{\labs{\bvar'}{\lapp
  {\parsen2{\avar',\bvar'}{\flabs{\avar'\bvar'}{\lapp{\ater}{\bvar'}}}}
  {\parsen1{\avar'}{\flabs{\avar'}{\avar'}}}}}
\alignbreak\sparsenlvarred &
\labs{\avar'}{\labs{\bvar'}{\lapp
  {\parsen2{\avar',\bvar'}{\flabs{\avar'\bvar'}{\lapp{\ater}{\bvar'}}}}
  {\avar'}}}
\alignbreak\sparselappred &
\labs{\avar'}{\labs{\bvar'}{\lapp
  {\lapp
    {\parsen2{\avar',\bvar'}{\flabs{\avar'\bvar'}{\ater}}}
    {\parsen2{\avar',\bvar'}{\flabs{\avar'\bvar'}{\bvar'}}}}
  {\avar'}}}
\alignbreak\sparsenlvarred &
\labs{\avar'}{\labs{\bvar'}{\lapp
  {\lapp
    {\parsen2{\avar',\bvar'}{\flabs{\avar'\bvar'}{\ater}}}
    {\bvar'}}
  {\avar'}}}
\alignbreak\sparsecompressstregred &
\labs{\avar'}{\labs{\bvar'}{\lapp
  {\lapp
    {\parsen1{\avar'}{\flabs{\avar'}{\ater}}}
    {\bvar'}}
  {\avar'}}}
\alignbreak\sparsecompressstregred &
\labs{\avar'}{\labs{\bvar'}{\lapp
  {\lapp
    {\parseempty{\femptylabs{\ater}}}
    {\bvar'}}
  {\avar'}}}
\alignbreak= &
\labs{\avar'}{\labs{\bvar'}{\lapp
  {\lapp
    {\parseempty{\femptylabs{\labs\avar{\labs\bvar{\lapp{\lapp{\ater}{\bvar}}{\avar}}}}}}
    {\bvar'}}
  {\avar'}}}
\alignbreak\sparselabsred &
\ldots  
\end{align*}
Note that the generated subterms of $\aiter$ appear as the last arguments of the 
functions $\sparsen{i}$ in this rewrite sequence.
\end{example}

\section{Observing \lambdaletrec-terms by their generated subterms}
  \label{sec:letrec2}
In this section we adapt the concepts developed so far for the infinitary \lambda-calculus
to \lambdaletrec{}. By combining the rules of
$\llunfCRS$ with those of $\RegCRS$ and $\stRegCRS$, respectively, we
obtain the \CRS{s} $\RegletrecCRS$ and $\stRegletrecCRS$ for the deconstruction of
\lambdaletrec-terms furnished with an abstraction prefix.
We define scope-delimiting and \extscope-delimiting strategies for
$\RegletrecCRS$ and $\stRegletrecCRS$ as before by 
                                                   excluding all non-determinism
except for sources of $\slappdecompired{0}$- and $\slappdecompired{1}$\nb-steps.

\begin{definition}[The \CRS{s} $\RegCRS$ and $\stRegCRS$ for decomposing $\lambdaletrec$-terms]\label{def:RegletrecCRS:stRegletrecCRS}
We extend $\sigllcCRS$ (see
Definition~\ref{def:sigs:lambdacal:lambdaletrec:CRS}) by function symbols
$\sflabsCRS{n}$ with arity one to obtain the signature $\sigllpcCRS =
\sigllcCRS \cup \descsetexp{\sflabsCRS{n}}{n\in\nats}$ on which we define the
\lambdaletrec-calculus with abstraction prefix. We denote the induced
set of (finite) terms by $\Ter{\lambdaletrecprefixcal}$ and we adopt the same informal
notation for \lambdaletrecprefixcal{} as for \inflambdaprefixcal{} (see
Definition~\ref{def:sig:lambdaprefixcal:CRS}).

On the signature $\sigllcCRS$ we define the \CRS{} $\RegletrecCRS$ (the
\CRS~$\stRegletrecCRS$) with the rules as the union of the rules of
$\llunfCRS$ and $\RegCRS$ (the union of the rules of $\llunfCRS$ and
$\stRegCRS$).
\end{definition}

\begin{definition}[The \ARSs~${\normalfont \RegletrecARS}$, ${\normalfont \stRegletrecARS}$]%
          \label{RegletrecARS:stRegletrecARS:oneRegletrecARS:onestRegletrecARS}%
We denote by $\RegletrecARS$ and
$\stRegletrecARS$ the \emph{abstract rewriting systems (\ARS{s})} induced by
the \CRS{s} $\RegletrecCRS$ and $\stRegletrecCRS$,
respectively. By $\RegletrecARS$, and
$\stRegletrecARS$ we denote the \ARS{s} that result by restricting the
\ARS{s}~$\RegletrecARS$, and $\stRegletrecARS$,
respectively, to the subset $\Ter{\lambdaletreccal}$ of terms.
  
Like in Definition~\ref{RegARS:stRegARS:oneRegARS:onestRegARS} the
symbols used for the rewrite relations of $\RegletrecCRS$ and $\stRegletrecCRS$
are overloaded; we use the same symbols to denote the restrictions of said
rewrite relations to $\Ter{\lambdaletrecprefixcal}$.
\end{definition}

The lemma below formulates a number of simple rewrite properties for
$\RegletrecARS$ and $\stRegletrecARS$, mainly concerning the interplay
between unfolding and decomposition steps.

\begin{lemma}\label{prop:rewprops:RegletrecCRS:stRegletrecCRS}
  On $\Ter{\lambdaletrecprefixcal}$, the rewrite relations in $\RegletrecCRS$ and $\stRegletrecCRS$
  have the following properties: 
  \begin{enumerate}[(i)]
    \item{}\label{prop:rewprops:RegletrecCRS:stRegletrecCRS:item:i}
      $\sunfoldred$ one-step commutes with each of
      $\slabsdecompred$, $\slappdecompired{0}$, $\slappdecompired{1}$, $\scompressstregred$, and $\scompressregred$:
      \begin{align*}
        \binrelcomp{\sunfoldconvred}{\slabsdecompred} 
          & \;\subseteq\; 
        \binrelcomp{\slabsdecompred}{\sunfoldconvred}
        & 
        \binrelcomp{\sunfoldconvred}{\slappdecompired{i}} 
          & \;\subseteq\; 
        \binrelcomp{\slappdecompired{i}}{\sunfoldconvred}
           \;\;\;\;\; (i\in\setexp{0,1})
        \\   
        \binrelcomp{\sunfoldconvred}{\scompressstregred} 
          & \;\subseteq\; 
        \binrelcomp{\scompressstregred}{\sunfoldconvred}
        &   
        \binrelcomp{\sunfoldconvred}{\scompressregred} 
          & \;\subseteq\; 
        \binrelcomp{\scompressregred}{\sunfoldconvred}
      \end{align*}
    \item\label{prop:rewprops:RegletrecCRS:stRegletrecCRS:item:ii}
      $\sregred$ in $\RegletrecARS$ and $\stregred$ in $\stRegletrecARS$
      have the same normal forms as $\sregred$ in $\RegARS$ and $\stregred$ in $\stRegARS$, respectively:
      $\flabs{\avar}{\avar}$ is the single term of $\lambdaletrecprefixcal$ in $\sregred$\nb-normal form.
      Every $\sstregred$\nb-normal form a term in $\lambdaletrecprefixcal$ is of the form
      $\flabs{\avari{1}\ldots\avari{n}}{\avari{n}}$.
  \end{enumerate}
\end{lemma}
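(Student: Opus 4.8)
The plan is to treat the two parts separately, reducing each to the letrec‑free results of Proposition~\ref{prop:rewprops:RegCRS:stRegCRS}. For part~(\ref{prop:rewprops:RegletrecCRS:stRegletrecCRS:item:i}) I would verify each commutation by a case analysis on the position of the contracted $\unfoldred$‑redex relative to the locus of the decomposition step, which always sits at the root of the body, directly beneath the abstraction prefix. The key structural observation is this: whenever a term $\flabs{\vec\avar}{\allter}$ admits a $\labsdecompred$‑ or an $\lappdecompired{0}$/$\lappdecompired{1}$‑step, the root symbol of the body $\allter$ is a $\lambda$‑abstraction or an application, hence \emph{not} a $\sletrec$; therefore every enabled $\unfoldred$‑redex lies strictly below the root, i.e.\ inside the body of that abstraction, or inside one of the two arguments of that application. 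In each such case the two steps act at disjoint positions and the peak closes with the residuals of the same two steps: a $\labsdecompred$‑step merely moves the head $\lambda$ into the prefix and leaves the interior $\unfoldred$‑redex untouched, and an $\lappdecompired{i}$‑step selects one argument. The only degenerate subcase is when the $\unfoldred$‑redex sits in the argument \emph{discarded} by an $\lappdecompired{i}$‑step; there the residual unfolding on the decomposed side collapses to zero steps and the peak closes immediately.

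For the two compression steps I would argue that $\compressregred$ and $\compressstregred$ act solely on the prefix, whereas $\unfoldred$ rewrites only inside the body, so the two redexes occupy disjoint regions. The one point needing care is that applicability of a compression step depends on a prefix binding being \emph{vacuous} (not occurring free in the body), and unfolding changes the body. Here I would use that a single $\unfoldred$‑step never introduces a new free occurrence of a variable in the body: the rules $\srulebp{\sunf}{@}$, $\srulebp{\sunf}{\slabs}$, $\srulebp{\sunf}{\smergeletrec}$, and $\srulebp{\sunf}{\rec}$ preserve the set of free variables, while $\srulebp{\sunf}{\sreduce}$ and $\srulebp{\sunf}{\snil}$ can only delete them (the closed term $\labs{\fvar}{\fvar}$ substituted for discarded arguments contributes none). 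Consequently a prefix binding that is vacuous before an unfolding step remains vacuous afterwards, so the compression redex is preserved along $\unfoldred$; conversely the body‑level $\unfoldred$‑redex is unaffected by deleting a prefix binding. Both diagrams then close with single steps, yielding the two displayed inclusions.

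For part~(\ref{prop:rewprops:RegletrecCRS:stRegletrecCRS:item:ii}) I would first note that every $\sletrec$‑subterm is a redex of $\llunfCRS$ (the unfolding rules are complete on $\sletrec$‑terms), so no $\sregred$‑ or $\sstregred$‑normal form in $\Ter{\lambdaletrecprefixcal}$ can contain a $\sletrec$. Such a normal form is therefore a finite prefixed pure $\lambda$‑term, i.e.\ an element of $\Ter{\inflambdaprefixcal}$ with finite body. On this letrec‑free fragment the relations $\sregred$ of $\RegletrecCRS$ and $\sstregred$ of $\stRegletrecCRS$ restrict precisely to those of $\RegCRS$ and $\stRegCRS$, so the characterisation of normal forms transfers verbatim from Proposition~\ref{prop:rewprops:RegCRS:stRegCRS},~(\ref{prop:rewprops:RegCRS:stRegCRS:item:iv-1}): $\flabs{\avar}{\avar}$ is the unique $\sregred$‑normal form, and the $\sstregred$‑normal forms are exactly those of the form $\flabs{\avari{1}\ldots\avari{n}}{\avari{n}}$.

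The main obstacle I anticipate is the bookkeeping in part~(\ref{prop:rewprops:RegletrecCRS:stRegletrecCRS:item:i}) for the compression steps, where the interaction between unfolding and the vacuousness side‑condition must be controlled uniformly. Establishing that a single unfolding step is free‑variable non‑increasing on the body — in particular across $\srulebp{\sunf}{\sreduce}$‑steps, which drop unreachable equations and may remove free variables while never creating any — is the crux that makes every compression diagram close with matching single steps, and it is worth isolating this property as a small auxiliary observation before carrying out the diagram chase.
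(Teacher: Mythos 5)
Your proposal is correct and takes essentially the same route as the paper, whose proof of part~(\ref{prop:rewprops:RegletrecCRS:stRegletrecCRS:item:i}) is likewise a case analysis of how an $\sunfoldred$\nb-redex (necessarily strictly inside the body when a decomposition step applies, and disjoint from the prefix for the compression steps) interacts with each rule, and whose proof of part~(\ref{prop:rewprops:RegletrecCRS:stRegletrecCRS:item:ii}) likewise observes that every occurrence of $\sletrec$ gives rise to a $\sunfoldred$\nb-redex and then transfers the normal-form characterisation from Proposition~\ref{prop:rewprops:RegCRS:stRegCRS}. Your explicit isolation of the free-variable non-increase of single unfolding steps (which preserves vacuousness of prefix bindings) and of the degenerate discarded-argument case, where the closing unfolding step is empty, merely spells out what the paper dismisses as ``easy to verify''.
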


\begin{proof}
  The commutation properties in (\ref{prop:rewprops:RegletrecCRS:stRegletrecCRS:item:i})
  are easy to verify by analysing the behaviour of the rewrite rules in $\RegletrecARS$ and in $\stRegletrecARS$ 
  on the terms of $\Ter{\lambdaletrecprefixcal}$.
  
  The statement in (\ref{prop:rewprops:RegletrecCRS:stRegletrecCRS:item:ii}) follows from
  Proposition~\ref{prop:rewprops:RegCRS:stRegCRS}, (\ref{prop:rewprops:RegCRS:stRegCRS:item:i}),
  and (\ref{prop:rewprops:RegCRS:stRegCRS:item:v}):
  Normal forms with respect to $\sregred$ in $\RegletrecARS$ and $\stregred$ in $\stRegletrecARS$
  can only be \lambdaterms\ without occurrences of $\sletrec$,
  since every occurrence of $\sletrec$ in a \lambdaletrecterm\ gives rise to a $\sunfoldred$\nb-redex.
\end{proof}

\begin{lemma}\label{lem:commute:unfoldomegared:stregred}
  The rewrite relation $\sunfoldomeganfred$ (to infinite $\sunfoldred$\nb-normal form in at most $\omega$ steps) 
  one-step commutes with $\slabsdecompred$, $\slappdecompired{0}$, $\slappdecompired{1}$, $\scompressstregred$, and $\scompressregred$:
      \begin{align*}
        \binrelcomp{\sunfoldconvomeganfred}{\slabsdecompred} 
          & \;\subseteq\; 
        \binrelcomp{\slabsdecompred}{\sunfoldconvomeganfred}
        & 
        \binrelcomp{\sunfoldconvomeganfred}{\slappdecompired{i}} 
          & \;\subseteq\; 
        \binrelcomp{\slappdecompired{i}}{\sunfoldconvomeganfred}
           \;\;\;\;\; (i\in\setexp{0,1})
        \\   
        \binrelcomp{\sunfoldconvomeganfred}{\scompressstregred} 
          & \;\subseteq\; 
        \binrelcomp{\scompressregred}{\sunfoldconvomeganfred}
        &   
        \binrelcomp{\sunfoldconvomeganfred}{\scompressregred} 
          & \;\subseteq\; 
        \binrelcomp{\scompressregred}{\sunfoldconvomeganfred}
      \end{align*}
      This implies, for prefixed terms that have unfoldings that:
      \begin{gather*}
        \begin{aligned}
          \Unf{\flabs{\vec{\avar}}{\labs{\bvar}{\allteri{0}}}}
             & {} \labsdecompred
          \Unf{\flabs{\vec{\avar}\bvar}{\allteri{0}}}
            & \hspace*{5ex}   
          \Unf{\flabs{\vec{\avar}}{\lapp{\allteri{0}}{\allteri{1}}}}
             & {} \lappdecompired{i}
          {\Unf{\flabs{\vec{\avar}}{\allteri{i}}}}
          \;\;\;\;\; (i\in\setexp{0,1})
        \end{aligned}
        \displaybreak[0]\\
        \flabs{\avari{1}\ldots\avari{n+1}}{\aiter} 
          \compressstregred
        \flabs{\avari{1}\ldots\avari{n}}{\aiter}
          \;\;\Rightarrow\;\;
        \Unf{\flabs{\avari{1}\ldots\avari{n+1}}{\aiter}} 
          \compressstregred
        \Unf{\flabs{\avari{1}\ldots\avari{n}}{\aiter}}
        \displaybreak[0]\\
        \flabs{\vec{\avar}}{\aiter} 
          \compressregred
        \flabs{\vec{\avar}'}{\aiter}
          \;\;\Rightarrow\;\;
        \Unf{\flabs{\vec{\avar}}{\aiter}} 
          \compressregred
        \Unf{\flabs{\vec{\avar}'}{\aiter}}
      \end{gather*} 
  Furthermore it holds:      
      $
        \binrelcomp{\sunfoldconvomeganfred}{\sunfoldred} 
          \;\subseteq\; 
        \sunfoldconvomeganfred
      $
      .
\end{lemma}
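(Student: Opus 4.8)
The plan is to derive the four commutation inclusions by \emph{lifting} the single-step commutation properties of Lemma~\ref{prop:rewprops:RegletrecCRS:stRegletrecCRS}~(\ref{prop:rewprops:RegletrecCRS:stRegletrecCRS:item:i}) from $\sunfoldred$ to the ``unfold-to-infinite-normal-form'' relation $\sunfoldomeganfred$, and then to read off the displayed equations by instantiating the inclusions at terms with an exposed top constructor. Throughout I would fix, for a term $b$ with $b \unfoldomeganfred a$, a strongly convergent rewrite sequence $\arewseq\colon b = \aiteri{0} \unfoldred \aiteri{1} \unfoldred \cdots$ of length at most $\omega$ witnessing $b \unfoldomeganfred a$, so that $a$ is the infinite \lambdaterm\ that is its limit, unique by Lemma~\ref{lem:unique:unfolding}.

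I would treat the two compression cases first, as they are the easiest: a $\scompressregred$- or $\scompressstregred$-step acts only on the abstraction prefix, sending $b = \flabs{\vec{\avar}}{\allter}$ to $c = \flabs{\vec{\avar}'}{\allter}$ with the \emph{same} body $\allter$ and a deleted vacuous binding. Since $b$ and $c$ share their body, $c$ is $\llunfCRS$-productive iff $b$ is (Lemma~\ref{lem:unfolding}), and unfolding — which never fires on the prefix symbol but only on $\sletrec$-occurrences inside $\allter$ — leaves both with the common body-unfolding. Because the free variables of a \lambdaletrec-term and of its unfolding coincide, the deleted binding remains vacuous after unfolding, whence $a \compressstregred \Unf{c}$ (in particular $a \compressregred \Unf{c}$, using $\scompressstregred \subseteq \scompressregred$ from Proposition~\ref{prop:rewprops:RegCRS:stRegCRS}~(\ref{prop:rewprops:RegCRS:stRegCRS:item:iii-0})); together with $c \unfoldomeganfred \Unf{c}$ this gives the two inclusions involving compression.

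For the $\slabsdecompred$- and $\slappdecompired{i}$-cases I would run a staircase (strip) argument. The key structural observation is that no unfolding step ever fires at the top constructor below the prefix: in $b = \flabs{\vec{\avar}}{\labs{\bvar}{\allteri{0}}}$ (resp.\ $b = \flabs{\vec{\avar}}{\lapp{\allteri{0}}{\allteri{1}}}$) every $\sletrec$-redex lies strictly inside the body, so each $\aiteri{i}$ on $\arewseq$ retains the same top constructor, and the decomposition step stays applicable, including at the limit $a$. Applying the single-step commutation of Lemma~\ref{prop:rewprops:RegletrecCRS:stRegletrecCRS}~(\ref{prop:rewprops:RegletrecCRS:stRegletrecCRS:item:i}) repeatedly along $\arewseq$, with $c_{0} = c$, yields decomposition steps $\aiteri{i} \labsdecompred c_{i}$ (resp.\ $\lappdecompired{i}$) together with unfolding steps $c_{i} \unfoldred c_{i+1}$, i.e.\ a new unfolding sequence $\crewseq\colon c = c_{0} \unfoldred c_{1} \unfoldred \cdots$. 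Here I would use the commutation in its zero-or-one-step form, to account for an unfolding redex that is discarded by a $\slappdecompired{i}$-step (when it sits in the argument not descended into, the corresponding $c$-side step is empty).

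The main obstacle is to show that $\crewseq$ is again strongly convergent and that its limit $d$ satisfies $a \to d$ for the pertinent decomposition step, giving $c \unfoldomeganfred d$. For strong convergence I would note that each decomposition step shifts the depth of every retained body position by exactly one (removing the top $\slabs$, resp.\ the top $@$ together with the discarded argument); hence the depths of the redexes contracted along $\crewseq$ differ from those along $\arewseq$ by at most one and therefore still tend to infinity. Continuity of the depth-one-diminishing decomposition map at the limit then identifies $d$: from $\aiteri{i} \to a$ in the metric one obtains $c_{i} \to d$ with $a \labsdecompred d$ (resp.\ $a \lappdecompired{i} d$). This establishes the two remaining inclusions. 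The displayed consequences follow by instantiation, e.g.\ taking $b = \flabs{\vec{\avar}}{\labs{\bvar}{\allteri{0}}}$ with $b \labsdecompred \flabs{\vec{\avar}\bvar}{\allteri{0}}$ turns the $\slabsdecompred$-inclusion into $\Unf{\flabs{\vec{\avar}}{\labs{\bvar}{\allteri{0}}}} \labsdecompred \Unf{\flabs{\vec{\avar}\bvar}{\allteri{0}}}$, and analogously for the application and the compression equations. Finally, the inclusion $\binrelcomp{\sunfoldconvomeganfred}{\sunfoldred} \subseteq \sunfoldconvomeganfred$ is immediate from confluence (Proposition~\ref{prop:unf-confluence}) and uniqueness of the $\mathord{\le}\,\omega$-step unfolding normal form (Lemma~\ref{lem:unique:unfolding}): if $b \unfoldomeganfred a$ and $b \unfoldred c$, then $c$ is a reduct of the productive term $b$, hence itself productive, and by confluence its unique unfolding normal form equals $a$, so $c \unfoldomeganfred a$.
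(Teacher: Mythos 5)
Your proposal is correct and, for the four commutation inclusions, takes essentially the same route as the paper's (sketched) proof: both arguments lift the single-step commutation diagrams of Lemma~\ref{prop:rewprops:RegletrecCRS:stRegletrecCRS}, (\ref{prop:rewprops:RegletrecCRS:stRegletrecCRS:item:i}), stepwise along a strongly convergent $\sunfoldred$\nb-rewrite sequence of length at most $\omega$, with depth bookkeeping to see that the projected sequence is again strongly convergent. The paper packages the bookkeeping as depth-indexed refinements such as $\binrelcomp{\sunfolddepthconvred{\ge n+1}}{\slappdecompired{i}} \subseteq \binrelcomp{\slappdecompired{i}}{\sunfolddepthconvred{\ge n}}$, whereas you argue the uniform depth shift (``at most one'') directly and make the limit step explicit via continuity of the decomposition map; you are in fact more careful than the paper's wording in using the zero-or-one-step form of the commutation for $\slappdecompired{i}$\nb-steps that discard the unfolding redex. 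Two local deviations are worth noting. First, you settle the compression cases by a direct syntactic argument (same body, prefix-only action) rather than by projection along the sequence; this is sound, but your justification that ``the free variables of a \lambdaletrecterm\ and of its unfolding coincide'' holds in general only for $\sunfreducered$\nb-normal forms (a variable can occur solely in an unreachable binding group). What you actually need is only the trivial direction --- unfolding cannot create occurrences of a variable absent from the body --- since applicability of the compression rule on the \lambdaletrecterm\ side already presupposes non-occurrence in the body, so no gap arises. Second, for the final inclusion $\binrelcomp{\sunfoldconvomeganfred}{\sunfoldred} \subseteq \sunfoldconvomeganfred$ the paper invokes depth-refined elementary diagrams from the confluence proof, while you use closure of $\llunfCRS$\nb-productivity under $\sunfoldred$\nb-reducts (immediate, since the defining condition quantifies over all reducts), Lemma~\ref{lem:unfolding} to obtain $c \unfoldomegared \aiter'$, and then Lemma~\ref{lem:unique:unfolding} after prepending the step $\allter \unfoldred c$ to that $\le\omega$\nb-sequence to conclude $\aiter' = \aiter$; this is simpler and more modular than the paper's diagram-based route, at the price of leaning on the equivalences of Lemma~\ref{lem:unfolding} instead of a self-contained projection argument.
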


\begin{proof}
  The commutation properties with the rewrite relation $\sunfoldomeganfred$ 
  can be shown by using refined versions of the commutation properties
  in Lemma~\ref{prop:rewprops:RegletrecCRS:stRegletrecCRS}, (\ref{prop:rewprops:RegletrecCRS:stRegletrecCRS:item:i}),
  in which the minimal depth of unfolding steps is taken account of.
  When denoting by $\sunfolddepthred{\ge n}$ the rewrite relation that is generated by $\sunfoldred$\nb-steps
  of depth $\ge n$, then the following properties hold:
  \begin{align*}  
        \binrelcomp{\sunfolddepthconvred{\ge n}}{\slabsdecompred} 
          & \;\subseteq\; 
        \binrelcomp{\slabsdecompred}{\sunfolddepthconvred{\ge n}}
        & 
        \binrelcomp{\sunfolddepthconvred{\ge n+1}}{\slappdecompired{i}} 
          & \;\subseteq\; 
        \binrelcomp{\slappdecompired{i}}{\sunfolddepthconvred{\ge n}}
        \\   
        \binrelcomp{\sunfolddepthconvred{\ge n+1}}{\scompressstregred} 
          & \;\subseteq\; 
        \binrelcomp{\scompressstregred}{\sunfolddepthconvred{\ge n}}
        &   
        \binrelcomp{\sunfolddepthconvred{\ge n+1}}{\scompressregred} 
          & \;\subseteq\; 
        \binrelcomp{\scompressregred}{\sunfolddepthconvred{\ge n}}
  \end{align*}     
  Using these properties, strongly convergent $\sunfoldred$\nb-rewrite sequences 
  can be shown to project, over $\slabsdecompred$-, $\slappdecompired{i}$-, $\scompressregred$-, and $\scompressstregred$-steps,
  to again strongly convergent $\sunfoldred$\nb-rewrite sequences.
       
  The property     
      $
      \binrelcomp{\sunfoldconvomeganfred}{\sunfoldred} 
        \;\subseteq\; 
      \sunfoldconvomeganfred
      $
  can be shown by using refined versions of the elementary diagrams from the confluence proof that
  take the the minimal depths of steps into account. 
\end{proof}

As for $\RegARS$ ($\stRegARS$) we require of scope-delimiting
strategies to have deterministic $\scompressregred$-steps
($\scompressstregred$\nb-steps). 
As was the case for \scopedelimiting\ and \extscopedelimiting\ strategies for $\RegARS$ and $\stRegARS$,
these strategies will also here fix all
non-determinism except for the choice between $\slappdecompi0$ and $\slappdecompi1$.

\begin{definition}[scope/\extscopedelimiting\ strategy for 
                                                           \lambdaletrecterms]%
  \label{def:scope:delimiting:strat:Regletrec:stRegletrec}
  A strategy $\astrat$ for $\RegletrecARS\,$(for $\stRegletrecARS$) will be called
  a \emph{\scopedelimiting} (\emph{\extscopedelimiting}) \emph{strategy} if:
\begin{itemize}
  \item $\astrat$ is deterministic for sources of $\slabsdecompred$-steps,
  $\scompressregred$-steps ($\scompressstregred$\nb-steps), and all
  \sletrec-unfolding steps (i.e.\ all $\srred$-steps for every rule
  $\srulebp{\sunf}{r}$ of $\llunfCRS$).
 \item $\astrat$ enforces eager application of $\srulebp{\sunf}{\sreduce}$:
   every source of a step in $\astrat$ according to a rule different from $\srulebp{\sunf}{\sreduce}$
   is not the source
   of a $\sredp\sreduce$-step.
\end{itemize}

%

  We say that such a strategy~$\astrat$ is a \emph{lazy-unfolding} \scopedelimiting\ strategy 
  (a \emph{lazy-unfolding} \extscopedelimiting\ strategy)
  if furthermore:
  \begin{itemize}                 
    \item $\astrat$ applies the rules of $\llunfredCRS$ only at the root of the
    term, i.e.\ directly beneath the abstraction prefix except for
    $\srulebp{\sunf}{\sreduce}$.
    \item $\astrat$ uses the rules of $\llunfCRS$ other than 
    reduction rule $\srulebp{\sunf}{\sreduce}$ in a lazy way: for every source
    $s$ of a step in $\astrat$ with respect to a rule of $\llunfCRS$ other than
    $\srulebp{\sunf}{\sreduce}$ it holds that $s$ is not also the source of a
    step, in the \ARS\ underlying the strategy,
    with respect to one of the rules of $\RegARS$ (of $\stRegARS)$ 
  \end{itemize}    
  
  For every \scopedelimiting~strategy~$\astrat$ on $\RegletrecARS$ (on $\stRegletrecARS$), we denote by
  $\slappdecompistratred{0}{\astrat}\,$, $\slappdecompistratred{1}{\astrat}\,$, $\slabsdecompstratred{\astrat}\,$,
  $\scompressstregstratred{\astrat}\,$, $\scompressregstratred{\astrat}\,$,
  and by $\sunflappstratred{\astrat}\,$, $\sunflabsstratred{\astrat}\,$, 
  $\sunfnilstratred{\astrat}\,$, $\sunfrecstratred{\astrat}\,$, $\sunfletrecstratred{\astrat}\,$, and $\sunfreducestratred{\astrat}$
  the rewrite relations that are induced by those steps according to $\astrat$
  that result by applications of the rules 
  $\srulep{\slappdecompi{0}}$, $\srulep{\slappdecompi{1}}$, $\srulep{\slabsdecomp}$,
  $\srulep{\snlvarsucc}$, $\srulep{\scompress}$ of $\RegCRS$ and $\stRegCRS$,
  and the rules  
  $\srulebp{\sunf}{@}\,$, $\srulebp{\sunf}{\slabs}\,$, $\srulebp{\sunf}{\sunfnil}\,$,
  $\srulebp{\sunf}{\text{rec}}\,$, $\srulebp{\sunf}{\smergeletrec}\,$, and $\srulebp{\sunf}{\sreduce}\,$
  of $\llunfCRS$, respectively.
\end{definition}

\begin{remark}
We need to permit the application of $\srulebp{\sunf}{\sreduce}$ anywhere
inside the term to handle terms that contain inaccessible bindings in binding groups. 
Otherwise the possibility to apply $\srulep\scompress$ ($\srulep\snlvarsucc$) 
may be blocked in the case of a prefixed \lambdaletrecterm~$\flabs{\vec{\avar}}{\allter}$
in which a variable from the abstraction prefix is
vacuous with respect to the infinite unfolding of $\flabs{\vec{\avar}}{\allter}$
but is bound by a term in an inaccessible binding of some binding group in $\allter$.   

We can however restrict applicability of $\srulebp{\sunf}{\sreduce}$ to 
outermost redexes if we start unfolding on reduced terms, that is, $\sunfreducered$\nb-normal forms, 
because then $\srulebp{\sunf}{\sreduce}$-redexes can only
arise at outermost positions (by application of the $\srulebp{\sunf}{@}$-rule).
\end{remark}

\begin{remark}[non-deterministic unfolding]\label{rem:nondet:unfolding}
Note that in Definition~\ref{def:scope:delimiting:strat:Regletrec:stRegletrec} we do not
only require of a strategy to eliminate the non-determinism with respect to
$\srulep{\scompress}$-transitions ($\srulep\snlvarsucc$-transition) but all
non-determinism except for $\srulep{\slappdecompi0}$/$\srulep{\slappdecompi1}$
non-determinism. 
This restriction will play a role later for the definition of \lambdatg{s}
in Section~\ref{sec:ltgs}, and here below for the definition of the
projection of \scopedelimiting\ (\extscopedelimiting) strategies on \lambdaletrecterms\
to \scopedelimiting\ (\extscopedelimiting) strategies on infinite \lambdaterms.
\end{remark}

\begin{remark}[eager application of $\srulebp{\sunf}{\sreduce}$]
By requiring scope/\extscope{} delimiting strategies to apply
$\srulebp{\sunf}{\sreduce}$ eagerly we can exploit a useful property with
respect to free variables of a term: if $\flabs{\vec\avar}{\allter\in\Ter\lambdaletreccal}$ is in
$\sunfreducered$\nb-normal form, 
then the free variables
occurring in $M$ correspond to the free variables of $\Unf \allter$.
\end{remark}

Note that here we have applied $\sUnf$ to a prefixed term which requires an
extension of $\sUnf$ to terms in $\Ter\lambdaletrecprefixcal$.

\begin{definition}[unfolding of prefixed terms] We redefine $\sUnf$ as a
partial function over the domain $\Ter\lambdaletrecprefixcal$:
\begin{align*}
\sUnf \funin \Ter{\lambdaletrecprefixcal} & {} \rightharpoonup \Ter{\inflambdacal}
\\
\allter & {} \mapsto \ater ~~~ \text{if}~\allter\unfoldomegared\ater 
\end{align*}
Uniqueness of $\sUnf$ follows from Lemma~\ref{lem:unfolding}.
\end{definition}

\begin{definition}[productive terms w.r.t.\ scope-/scope$^{\bs{+}}$-del.\ strat.]%
  \label{def:productive:scdelstrats}
  Let $\allter$ be a \lambdaletrecterm, and
  $\astrat$ a \scopedelimiting~strategy for $\RegletrecARS$,
  or a \extscopedelimiting~strategy for $\stRegletrecARS$.
  We say that $\allter$ is \emph{$\astrat$\nb-productive}
  if every infinite rewrite sequence on $\allter$ with respect to $\astrat$ 
  contains infinitely many steps according to
  $\slappdecompistratred{0}{\astrat}\,$, $\slappdecompistratred{1}{\astrat}\,$,  or $\slabsdecompstratred{\astrat}\,$.
\end{definition}

\begin{definition}[generated subterms of \lambdaletrecterms]\label{def:STred:Regletrec:stRegletrec}
  Let $\astrat$ be a \scopedelimiting\ strategy for $\RegletrecARS$ (for $\stRegletrecARS$). 
  For every $\allter\in\Ter{\lambdaletreccal}$, 
  the set $\gSTregstrat{\astrat}{\allter}$ (the set $\gSTstregstrat{\astrat}{\allter}$)
  of \emph{generated subterms} of $\allter$ with respect to $\RegletrecARS$ and $\astrat$ 
                                           (with respect to $\stRegletrecARS$ and $\astrat$) 
  is defined 
  as the set of $\stratred{\astrat}$\nb-many-step reducts 
     (the set of $\stratred{\astrat}$\nb-many-step reducts)
  of $\femptylabs{\allter}$
  via the mappings:
  \begin{align*}
    \sgSTregstrat{\astrat} \funin \Ter{\lambdaletreccal} & {} \rightarrow \powersetof{\Ter{\lambdaletrecprefixcal}}
     \hspace*{2ex} &
    \sgSTstregstrat{\astratplus} \funin \Ter{\lambdaletreccal} & {} \rightarrow \powersetof{\Ter{\lambdaletrecprefixcal}} 
    \\
    \allter & {} \mapsto \gSTregstrat{\astrat}{\allter} 
                               \defdby
                            \redsuccs{\stratmred{\astrat}}{\femptylabs{\allter}}   
    &
    \allter & {} \mapsto \gSTstregstrat{\astratplus}{\allter} 
                               \defdby
                            \redsuccs{\stratmred{\astratplus}}{\femptylabs{\allter}}   
  \end{align*}

\end{definition}

The following lemma states that every \extscopedelimiting\ strategy for $\stRegletrecARS$ (on \lambdaletrecterms), 
when restricted to the reducts of a \lambdaletrecterm~$\allter$ that expresses an infinite \lambdaterm~$\aiter$,
projects to the restriction of a \extscopedelimiting\ strategy for $\stRegARS$ (on infinite \lambdaterms)  
to reducts of $\aiter$. And it asserts a similar statement for \scopedelimiting\ strategies.
The projection hereby makes use of the commutation properties described in Lemma~\ref{lem:commute:unfoldomegared:stregred}
between the infinite unfolding $\sunfoldomeganfred$ and the decomposition rewrite relations
$\slabsdecompred$, $\slappdecompired{0}$, $\slappdecompired{1}$, $\scompressstregred$, and $\scompressregred$.

\begin{lemma}\label{lem:proj:scdelstrat:letrec:lambda}
  Let $\astrat$ be a \extscopedelimiting~strategy $\astrat$ for $\RegletrecARS$ (for $\stRegletrecARS$),
  and let $\allter$ be a \lambdaletrecterm\ that is $\astrat$\nb-productive.  
  Then there exists a (history-aware)
  \scopedelimiting\ strategy (\extscopedelimiting~strategy) $\Check{\astrat}$ for $\RegARS$ (for $\stRegARS$)
  such that the induced sub-\ARS\ $\InducedSubARSmred{\Unf{\allter}}{\sstratmred{\Check{\astrat}}}$ of $\Unf{\allter}$
  is the projection (under the unfolding mapping $\sUnf$) of the induced sub\nb-\ARS\ $\InducedSubARSmred{\allter}{\sstratmred{\astrat}}$ 
  of $\allter$, in the sense that
  for all $\allter'$ in $\InducedSubARSmred{\allter}{\sstratmred{\astrat}}$ it holds:
  \begin{gather*}
    \allter' \binrelcomp{\sstratmred{\sunf,\astrat}}{\sstratred{\slabsdecomp/\slappdecompi{i}/\scompressstreg/\scompressreg,\astrat}} \allter''
      \;\;\Longrightarrow\;\;
    \Unf{\allter'} \stratred{\Check{\astrat}} \Unf{\allter''} 
    \\
    \Unf{\allter'} \stratred{\Check{\astrat}} \aiter''
      \;\;\Longrightarrow\;\;
    (\exists \allter'')\, [\,
    \allter' \binrelcomp{\sstratmred{\sunf,\astrat}}{\sstratred{\slabsdecomp/\slappdecompi{i}/\scompressstreg/\scompressreg,\astrat}} \allter''
      \;\logand\;
    \aiter'' = \Unf{\allter''}  \,] 
  \end{gather*}
  As a consequence, $\Unf{\allter}$ is $\Check{\astrat}$\nb-regular if $\allter$ is $\astrat$\nb-regular.
\end{lemma}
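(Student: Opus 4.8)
The plan is to construct the projected strategy $\Check{\astrat}$ explicitly as a history-aware strategy on $\RegARS$ (on $\stRegARS$) by transporting the rewrite structure of $\InducedSubARSmred{\allter}{\sstratmred{\astrat}}$ through the unfolding map $\sUnf$. The guiding principle is Lemma~\ref{lem:commute:unfoldomegared:stregred}, which tells us that the infinite unfolding $\sunfoldomeganfred$ one-step commutes with each decomposition rewrite relation $\slabsdecompred$, $\slappdecompired{0}$, $\slappdecompired{1}$, $\scompressstregred$, and $\scompressregred$; this is exactly what is needed to push a decomposition step through the unfolding. First I would verify that the unfolding $\Unf{\allter'}$ is well-defined for every $\allter'$ reachable in $\InducedSubARSmred{\allter}{\sstratmred{\astrat}}$: by $\astrat$\nb-productivity, every infinite rewrite sequence contains infinitely many $\slabsdecompred$-, $\slappdecompired{0}$-, or $\slappdecompired{1}$\nb-steps, so between any two consecutive such decomposition steps only finitely many unfolding and scope-delimiting steps occur; hence each $\allter'$ reaches, after finitely many $\sunfoldred$\nb-steps applied at the root, a term of the form $\flabs{\vec{\avar}}{\labs{\bvar}{\ldots}}$, $\flabs{\vec{\avar}}{\lapp{\ldots}{\ldots}}$, or a variable, and Lemma~\ref{lem:unfolding} guarantees that $\Unf{\allter'}$ exists as an infinite \lambdaterm.

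Next I would define $\Check{\astrat}$ on the generated subterms $\Unf{\allter'}$. The forward direction ``$\Longrightarrow$'' is the constructive heart: given a step $\allter' \binrelcomp{\sstratmred{\sunf,\astrat}}{\sstratred{\slabsdecomp/\slappdecompi{i}/\scompressstreg/\scompressreg,\astrat}} \allter''$ in $\astrat$, consisting of a block of unfolding steps followed by one genuine decomposition or scope-delimiting step, I would apply the relevant commutation property from Lemma~\ref{lem:commute:unfoldomegared:stregred} to slide $\Unf{}$ past the decomposition step, yielding $\Unf{\allter'} \stratred{\Check{\astrat}} \Unf{\allter''}$ of the matching sort (the displayed equations in Lemma~\ref{lem:commute:unfoldomegared:stregred} give precisely $\labsdecompred$, $\lappdecompired{i}$, $\compressstregred$, and $\compressregred$ on the unfoldings). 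The unfolding steps themselves are absorbed because $\sunfoldconvomeganfred \binrelcomp{}{} \sunfoldred \subseteq \sunfoldconvomeganfred$, also supplied by that lemma, so prepending root unfolding steps does not change the unfolded target. Declaring $\Check{\astrat}$ to admit exactly these projected steps (and using the history provided by the labelling of $\astrat$ to disambiguate the non-$\slappdecompi{i}$ choices) yields a history-aware strategy; I would then check that it is indeed \scopedelimiting\ (\extscopedelimiting) in the sense of Definition~\ref{def:scope:delimiting:strat:Reg:stReg}, i.e.\ that its only non-determinism is the simultaneous availability of a $\slappdecompired{0}$- and a $\slappdecompired{1}$\nb-step, which is inherited from the corresponding property of $\astrat$ on $\RegletrecARS$ (on $\stRegletrecARS$).

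For the backward direction ``$\Longleftarrow$'', given $\Unf{\allter'} \stratred{\Check{\astrat}} \aiter''$, I would observe that by construction every step of $\Check{\astrat}$ arises as a projection, so there must be a corresponding $\astrat$\nb-step from $\allter'$; the existential $\allter''$ is witnessed by the source of that step, and $\aiter'' = \Unf{\allter''}$ follows by the forward commutation just established together with uniqueness of unfolding (Lemma~\ref{lem:unique:unfolding}). The final regularity consequence is then immediate: if $\allter$ is $\astrat$\nb-regular, then $\InducedSubARSmred{\allter}{\sstratmred{\astrat}}$ is finite, and since $\sUnf$ maps this induced sub\nb-\ARS\ surjectively onto $\InducedSubARSmred{\Unf{\allter}}{\sstratmred{\Check{\astrat}}}$, the latter is finite too, so $\Unf{\allter}$ is $\Check{\astrat}$\nb-regular.

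\textbf{The main obstacle} I expect is the well-definedness and single-valuedness of the projection around the scope-delimiting steps. The subtlety is that a $\scompressstregred$- or $\scompressregred$\nb-step on the \lambdaletrecterm\ side deletes a prefix binding that is declared vacuous with respect to the infinite unfolding, yet the vacuousness is only apparent after unfolding; the eager application of $\srulebp{\sunf}{\sreduce}$ demanded in Definition~\ref{def:scope:delimiting:strat:Regletrec:stRegletrec} is exactly what aligns the free variables of the prefixed \lambdaletrecterm\ with those of its unfolding, so I would need to invoke that alignment carefully to ensure the scope-delimiting step is applicable on the unfolded side with the same prefix position. A related delicacy is that several distinct prefixed \lambdaletrecterms\ in $\InducedSubARSmred{\allter}{\sstratmred{\astrat}}$ may share the same unfolding, so $\Check{\astrat}$ must be defined on $\alpha$\nb-equivalence classes of unfoldings rather than on representatives; this is why $\Check{\astrat}$ is history-aware, and I would use the rewrite labelling of $\astrat$ to keep the projected steps coherent across such collisions.
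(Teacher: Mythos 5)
Your proposal is correct and follows essentially the same route as the paper's (much terser) proof sketch: the paper likewise uses Lemma~\ref{lem:commute:unfoldomegared:stregred} to turn the two displayed implications into commuting diagrams that determine $\Check{\astrat}$ on all terms of $\InducedSubARSmred{\allter}{\sstratmred{\astrat}}$, with the second implication holding by construction of $\Check{\astrat}$ from projected steps, exactly as in your argument (your productivity, history-awareness, and $\binrelcomp{\sunfoldconvomeganfred}{\sunfoldred} \subseteq \sunfoldconvomeganfred$ absorption points just make explicit what the sketch leaves implicit). The only blemish is a slip of the pen in your backward direction: the witness $\allter''$ is the \emph{target}, not the source, of the corresponding composite $\astrat$-step from $\allter'$.
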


\begin{proof}[Proof sketch]
We can utilise Lemma~\ref{lem:commute:unfoldomegared:stregred} to make
commuting diagrams out of the two formulas above (for any given $\allter'$),
which allows us to determine $\Check\astrat$ with respect to all terms in
$\InducedSubARSmred{\allter}{\sstratmred{\astrat}}$.
This freedom in the definition of $\Check{\astrat}$ also guarantees the property in the second implication in the lemma.
\end{proof}

\begin{definition}[$\stParseUnfCRS$]\label{def:stParseUnfCRS}
By $\stParseUnfCRS$ we denote the CRS comprising the rules of $\stParseCRS$ as well
as the rules from $\llunfCRS$. 
\end{definition}

\begin{example}\label{ex:stParseUnfCRS}
When applied to
$\letrec{\arecvar = \labs{\avar\bvar}{\lapp{\lapp{\arecvar}{\bvar}}{\avar}}}{\arecvar}$,
the rewrite relation in
$\stParseUnfCRS$ unfolds and decomposes, but at the same time recreates the
corresponding infinite \lambda-term (see also Example~\ref{ex:llunfCRS} and Example~\ref{ex:stParseCRS}):
\begin{align*}
& \parseempty{\femptylabs{\letrec{\arecvar = \labs{\avar\bvar}{\lapp{\lapp{\arecvar}{\bvar}}{\avar}}}{\arecvar}}}
\alignbreak\sunfrecred ~&
\parseempty{\femptylabs{\letrec{\arecvar = \labs{\avar\bvar}{\lapp{\lapp{\arecvar}{\bvar}}{\avar}}}{\labs{\avar\bvar}{\lapp{\lapp{\arecvar}{\bvar}}{\avar}}}}}
\alignbreak\sunflabsred ~&
\parseempty{\femptylabs{\labs\avar{\letrec{\arecvar = \labs{\avar\bvar}{\lapp{\lapp{\arecvar}{\bvar}}{\avar}}}{\labs{\bvar}{\lapp{\lapp{\arecvar}{\bvar}}{\avar}}}}}}
\alignbreak\sparselabsred ~&
\labs\avar{\parsen1{\avar}{\flabs{\avar}{\letrec{\arecvar = \labs{\avar\bvar}{\lapp{\lapp{\arecvar}{\bvar}}{\avar}}}{\labs{\bvar}{\lapp{\lapp{\arecvar}{\bvar}}{\avar}}}}}}
\alignbreak\sunflabsred ~&
\labs\avar{\parsen1{\avar}{\flabs{\avar}{\labs{\bvar}{\letrec{\arecvar = \labs{\avar\bvar}{\lapp{\lapp{\arecvar}{\bvar}}{\avar}}}{\lapp{\lapp{\arecvar}{\bvar}}{\avar}}}}}}
\alignbreak\sparselabsred ~&
\labs{\avar\bvar}{\parsen2{\avar,\bvar}{\flabs{\avar\bvar}{\letrec{\arecvar = \labs{\avar\bvar}{\lapp{\lapp{\arecvar}{\bvar}}{\avar}}}{\lapp{\lapp{\arecvar}{\bvar}}{\avar}}}}}
\alignbreak\sunflappred ~&
\labs{\avar\bvar}{\parsen2{\avar,\bvar}{\flabs{\avar\bvar}{\lapp
  {(\letrec{\arecvar = \labs{\avar\bvar}{\lapp{\lapp{\arecvar}{\bvar}}{\avar}}}{\lapp{\arecvar}{\bvar}})}
  {(\letrec{\arecvar = \labs{\avar\bvar}{\lapp{\lapp{\arecvar}{\bvar}}{\avar}}}{\avar})}
}}}
\alignbreak\sunfreducered ~&
\labs{\avar\bvar}{\parsen2{\avar,\bvar}{\flabs{\avar\bvar}{\lapp
  {(\letrec{\arecvar = \labs{\avar\bvar}{\lapp{\lapp{\arecvar}{\bvar}}{\avar}}}{\lapp{\arecvar}{\bvar}})}
  {(\letrec{}{\avar})}
}}}
\alignbreak\sunfnilred ~&
\labs{\avar\bvar}{\parsen2{\avar,\bvar}{\flabs{\avar\bvar}{\lapp
  {(\letrec{\arecvar = \labs{\avar\bvar}{\lapp{\lapp{\arecvar}{\bvar}}{\avar}}}{\lapp{\arecvar}{\bvar}})}
  {\avar}
}}}
\alignbreak\sparselappred ~&
\labs{\avar\bvar}{\lapp
  {(\parsen2{\avar,\bvar}{\flabs{\avar\bvar}{\letrec{\arecvar = \labs{\avar\bvar}{\lapp{\lapp{\arecvar}{\bvar}}{\avar}}}{\lapp{\arecvar}{\bvar}}}})}
  {(\parsen2{\avar,\bvar}{\flabs{\avar\bvar}{\avar}})}
}
\alignbreak\sparsecompressstregred ~&
\labs{\avar\bvar}{\lapp
  {(\parsen2{\avar,\bvar}{\flabs{\avar\bvar}{\letrec{\arecvar = \labs{\avar\bvar}{\lapp{\lapp{\arecvar}{\bvar}}{\avar}}}{\lapp{\arecvar}{\bvar}}}})}
  {(\parsen1{\avar}{\flabs{\avar}{\avar}})}
}
\alignbreak\sparsenlvarred ~&
\labs{\avar\bvar}{\lapp
  {(\parsen2{\avar,\bvar}{\flabs{\avar\bvar}{\letrec{\arecvar = \labs{\avar\bvar}{\lapp{\lapp{\arecvar}{\bvar}}{\avar}}}{\lapp{\arecvar}{\bvar}}}})}
  {\avar}
}
\alignbreak\sunflappred ~&
\labs{\avar\bvar}{\lapp
  {(\parsen2{\avar,\bvar}{\flabs{\avar\bvar}{\lapp
      {(\letrec{\arecvar = \labs{\avar\bvar}{\lapp{\lapp{\arecvar}{\bvar}}{\avar}}}{\arecvar})}
      {(\letrec{\arecvar = \labs{\avar\bvar}{\lapp{\lapp{\arecvar}{\bvar}}{\avar}}}{\bvar})}
    }})}
  {\avar}
}
\alignbreak\sunfreducered ~&
\labs{\avar\bvar}{\lapp
  {(\parsen2{\avar,\bvar}{\flabs{\avar\bvar}{\lapp
      {(\letrec{\arecvar = \labs{\avar\bvar}{\lapp{\lapp{\arecvar}{\bvar}}{\avar}}}{\arecvar})}
      {(\letrec{}{\bvar})}
    }})}
  {\avar}
}
\alignbreak\sunfnilred ~&
\labs{\avar\bvar}{\lapp
  {(\parsen2{\avar,\bvar}{\flabs{\avar\bvar}{\lapp
      {(\letrec{\arecvar = \labs{\avar\bvar}{\lapp{\lapp{\arecvar}{\bvar}}{\avar}}}{\arecvar})}
      {\bvar}
    }})}
  {\avar}
}
\alignbreak\sparselappred ~&
\labs{\avar\bvar}{\lapp{\lapp
      {(\parsen2{\avar,\bvar}{\flabs{\avar\bvar}{\letrec{\arecvar = \labs{\avar\bvar}{\lapp{\lapp{\arecvar}{\bvar}}{\avar}}}{\arecvar}}})}
      {(\parsen2{\avar,\bvar}{\flabs{\avar\bvar}{\bvar}})}
    }
  {\avar}
}
\alignbreak\sparsenlvarred ~&
\labs{\avar\bvar}{\lapp{\lapp
      {(\parsen2{\avar,\bvar}{\flabs{\avar\bvar}{\letrec{\arecvar = \labs{\avar\bvar}{\lapp{\lapp{\arecvar}{\bvar}}{\avar}}}{\arecvar}}})}
      {\bvar}
    }
  {\avar}
}
\alignbreak\sparsecompressstregred ~&
\labs{\avar\bvar}{\lapp{\lapp
      {(\parsen1{\avar}{\flabs{\avar}{\letrec{\arecvar = \labs{\avar\bvar}{\lapp{\lapp{\arecvar}{\bvar}}{\avar}}}{\arecvar}}})}
      {\bvar}
    }
  {\avar}
}
\alignbreak\sparsecompressstregred ~&
\labs{\avar\bvar}{\lapp{\lapp
      {(\parseempty{\femptylabs{\letrec{\arecvar = \labs{\avar\bvar}{\lapp{\lapp{\arecvar}{\bvar}}{\avar}}}{\arecvar}}})}
      {\bvar}
    }
  {\avar}
}
\alignbreak\sunfrecred ~&
\labs{\avar\bvar}{\lapp{\lapp
      ~\dots~
      {\bvar}
    }
  {\avar}
}
\end{align*}
\end{example}

\begin{lemma}\label{lem:unfolding:versus:scdelstrats}
  For all closed $\allter\in\Ter{\lambdaletreccal}$ 
  the following statements are equivalent:
  \begin{enumerate}[(i)]
    \item{}\label{lem:unfolding:versus:scdelstrats:item:i}
      $\allter$ expresses an infinite \lambdaterm\ $\aiter$, that is, $\allter \unfoldomegared \aiter$.
    \item{}\label{lem:unfolding:versus:scdelstrats:item:ii}
      $\parseempty{\femptylabs{\allter}} \unfoldparseomegared \aiter$, for some infinite \lambdaterm\ $\aiter$. 
    \item{}\label{lem:unfolding:versus:scdelstrats:item:iii}
      $\allter$ is $\astratplus$\nb-productive for some \extscopedelimiting\ strategy~$\astratplus$. 
    \item{}\label{lem:unfolding:versus:scdelstrats:item:iv}
      $\allter$ is $\astratplus$\nb-productive for every \extscopedelimiting\ strategy~$\astratplus$. 
  \end{enumerate}
\end{lemma}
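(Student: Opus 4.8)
The plan is to prove the cycle $(\ref{lem:unfolding:versus:scdelstrats:item:i})\Rightarrow(\ref{lem:unfolding:versus:scdelstrats:item:iv})\Rightarrow(\ref{lem:unfolding:versus:scdelstrats:item:iii})\Rightarrow(\ref{lem:unfolding:versus:scdelstrats:item:i})$ together with the separate equivalence $(\ref{lem:unfolding:versus:scdelstrats:item:i})\Leftrightarrow(\ref{lem:unfolding:versus:scdelstrats:item:ii})$, leaning throughout on the characterisation of unfolding via $\llunfCRS$-productivity in Lemma~\ref{lem:unfolding} and on the commutation of $\sunfoldomeganfred$ with the decomposition steps established in Lemma~\ref{lem:commute:unfoldomegared:stregred}. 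The implication $(\ref{lem:unfolding:versus:scdelstrats:item:iv})\Rightarrow(\ref{lem:unfolding:versus:scdelstrats:item:iii})$ is immediate once one notes that $\extscope$-delimiting strategies for $\stRegletrecARS$ exist, so productivity with respect to every such strategy entails productivity with respect to some one.

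For $(\ref{lem:unfolding:versus:scdelstrats:item:i})\Rightarrow(\ref{lem:unfolding:versus:scdelstrats:item:iv})$, I would assume $\allter\unfoldomegared\aiter$, so that $\allter$ is $\llunfCRS$-productive by Lemma~\ref{lem:unfolding}. Fixing an $\extscope$-delimiting strategy $\astratplus$, suppose towards a contradiction that some infinite $\astratplus$-rewrite sequence on $\femptylabs{\allter}$ carries only finitely many $\slabsdecompred$- and $\slappdecompired{i}$-steps. Past the last such step the abstraction prefix can no longer grow, whence only finitely many $\scompressstregred$-steps can follow, and the tail is eventually an infinite sequence of pure $\sunfoldred$-steps acting below a fixed prefix. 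I would then derive a contradiction from $\llunfCRS$-productivity: a tail using $\srulebp{\sunf}{@}$ or $\srulebp{\sunf}{\slabs}$ infinitely often repeatedly exposes a head application or abstraction just below the prefix, which—by the source-determinism clause of Definition~\ref{def:scope:delimiting:strat:Regletrec:stRegletrec}, which forbids a source that is not an application-branch point to carry more than one step—forces a $\slabsdecompred$- or $\slappdecompired{i}$-step in the tail; whereas a tail using eventually only the internal rules $\srulebp{\sunf}{\rec}$, $\srulebp{\sunf}{\smergeletrec}$, $\srulebp{\sunf}{\sreduce}$, $\srulebp{\sunf}{\snil}$ is precisely the stagnating reduct excluded by $\llunfCRS$-productivity.

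For $(\ref{lem:unfolding:versus:scdelstrats:item:iii})\Rightarrow(\ref{lem:unfolding:versus:scdelstrats:item:i})$ I would argue contrapositively. If $\allter$ does not express an infinite \lambdaterm, then by Lemma~\ref{lem:unfolding} it is not $\llunfCRS$-productive, so it has a $\sunfoldmred$-reduct whose root is root-active, never stabilising to a $\lambda$ or to an application. Any $\extscope$-delimiting strategy, when driven to normalise the spine position carrying this reduct, produces an infinite rewrite sequence built from $\sunfoldred$- and $\scompressstregred$-steps alone, hence with only finitely many decomposition steps; thus \emph{no} $\extscope$-delimiting strategy is productive, which is $\neg(\ref{lem:unfolding:versus:scdelstrats:item:iii})$. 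The projection Lemma~\ref{lem:proj:scdelstrat:letrec:lambda} can be used here to make the correspondence between spine positions in $\allter$ and in its intended unfolding precise.

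Finally, for $(\ref{lem:unfolding:versus:scdelstrats:item:i})\Leftrightarrow(\ref{lem:unfolding:versus:scdelstrats:item:ii})$ I would work in the combined system $\stParseUnfCRS$ of Definition~\ref{def:stParseUnfCRS} together with Proposition~\ref{prop:stParseCRS}. The $\sparsered$-rules merely re-assemble the generated subterms exposed by the embedded $\stRegCRS$-decomposition, while the $\sunfoldred$-rules supply the content; interleaving them as in Example~\ref{ex:stParseUnfCRS} shows that $\parseempty{\femptylabs{\allter}}$ strongly converges under $\unfoldparseomegared$ exactly to the infinite \lambdaterm\ to which $\allter$ unfolds, and conversely that stripping the $\sparsen{n}$-contexts from such a converging sequence yields a strongly convergent $\sunfoldred$-sequence on $\allter$. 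The main obstacle I expect lies in the $(\ref{lem:unfolding:versus:scdelstrats:item:i})\Rightarrow(\ref{lem:unfolding:versus:scdelstrats:item:iv})$ step for arbitrary, not merely lazy-unfolding, $\extscope$-delimiting strategies: one must exclude that a strategy indefinitely postpones a decomposition step by unfolding deep inside an already-exposed head abstraction or application. The source-determinism condition together with the minimal-depth refinement of the commutation diagrams from the proof of Lemma~\ref{lem:commute:unfoldomegared:stregred} should rule this out, but keeping the bookkeeping of step-depths watertight is the delicate part.
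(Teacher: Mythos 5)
Your restructuring of the equivalence into the cycle $(\text{\ref{lem:unfolding:versus:scdelstrats:item:i}})\Rightarrow(\text{\ref{lem:unfolding:versus:scdelstrats:item:iv}})\Rightarrow(\text{\ref{lem:unfolding:versus:scdelstrats:item:iii}})\Rightarrow(\text{\ref{lem:unfolding:versus:scdelstrats:item:i}})$ with a separate $(\text{\ref{lem:unfolding:versus:scdelstrats:item:i}})\Leftrightarrow(\text{\ref{lem:unfolding:versus:scdelstrats:item:ii}})$ is legitimate in principle, but your direct argument for $(\text{\ref{lem:unfolding:versus:scdelstrats:item:i}})\Rightarrow(\text{\ref{lem:unfolding:versus:scdelstrats:item:iv}})$ fails at exactly the point you flag at the end, and the fix you gesture at does not close it. The determinism clause of Definition~\ref{def:scope:delimiting:strat:Regletrec:stRegletrec} (see Remark~\ref{rem:nondet:unfolding}) only bounds the \emph{number} of steps issuing from a source; it says nothing about \emph{which} rule the strategy selects there. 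A strategy may, on a term such as $\femptylabs{\labs{\avar}{\lapp{\avar}{(\letrec{\arecvar = \labs{\bvar}{\lapp{\arecvar}{\bvar}}}{\arecvar})}}}$, always select an internal unfolding redex in preference to the enabled $\slabsdecompred$-step; since the inner $\sletrec$ unfolds productively, internal $\sunfoldred$-redexes never dry up, and one gets an infinite strategy sequence containing no decomposition step at all. So your claim that a tail with infinitely many $\sunflabsred$-/$\sunflappred$-steps ``forces'' a $\slabsdecompred$- or $\slappdecompired{i}$-step is not a consequence of source-determinism, and the minimal-depth commutation diagrams behind Lemma~\ref{lem:commute:unfoldomegared:stregred} cannot supply the missing forcing either: they let you project unfolding steps past decomposition steps that are present, not conjure decomposition steps that the strategy never takes.

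The paper avoids this direct analysis entirely. Its proof of $(\text{\ref{lem:unfolding:versus:scdelstrats:item:i}})\Rightarrow(\text{\ref{lem:unfolding:versus:scdelstrats:item:iv}})$ is indirect and routes through item~$(\text{\ref{lem:unfolding:versus:scdelstrats:item:ii}})$: from a putatively non-productive strategy it constructs, in $\stParseUnfCRS$ (Definition~\ref{def:stParseUnfCRS}), a \emph{maximal outermost-fair} $\sunfoldparsered$-sequence on $\parseempty{\femptylabs{\allter}}$ --- the interleaving of steps at parallel positions is precisely what injects the fairness your tail analysis lacks --- observes that non-productivity means the parse symbols stagnate above an outermost occurrence of $\sletrec$, strips the $\sparsered$-steps, and contradicts Lemma~\ref{lem:unfolding}, by which every maximal outermost-fair $\sunfoldred$-sequence on an unfolding term is strongly convergent. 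If you want to salvage your direct route, you must build this outermost-fairness in by hand (show that any infinite strategy sequence without decomposition steps projects to a stagnating outermost-fair unfolding sequence, then invoke Lemma~\ref{lem:unfolding}), which essentially reconstructs the paper's detour. Your other implications are sound and broadly parallel the paper: $(\text{\ref{lem:unfolding:versus:scdelstrats:item:iv}})\Rightarrow(\text{\ref{lem:unfolding:versus:scdelstrats:item:iii}})$ by exhibiting the lazy-unfolding eager strategy, the contrapositive of $(\text{\ref{lem:unfolding:versus:scdelstrats:item:iii}})\Rightarrow(\text{\ref{lem:unfolding:versus:scdelstrats:item:i}})$ via a reduct with a stagnating cycle (the paper instead factors this through $(\text{\ref{lem:unfolding:versus:scdelstrats:item:ii}})$), and the backward half of $(\text{\ref{lem:unfolding:versus:scdelstrats:item:i}})\Leftrightarrow(\text{\ref{lem:unfolding:versus:scdelstrats:item:ii}})$ by parse-step stripping, which is exactly the paper's $(\text{\ref{lem:unfolding:versus:scdelstrats:item:ii}})\Rightarrow(\text{\ref{lem:unfolding:versus:scdelstrats:item:i}})$ argument.
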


\begin{proof}
  Let $\allter\in\Ter{\lambdaletreccal}$. We show the lemma by establishing the implications 
  in the following order:
       ``$(\text{\ref{lem:unfolding:versus:scdelstrats:item:iv}})
            \Rightarrow 
          (\text{\ref{lem:unfolding:versus:scdelstrats:item:iii}})
            \Rightarrow 
          (\text{\ref{lem:unfolding:versus:scdelstrats:item:ii}})
            \Rightarrow 
          (\text{\ref{lem:unfolding:versus:scdelstrats:item:i}})
            \Rightarrow
          (\text{\ref{lem:unfolding:versus:scdelstrats:item:iv}}) 
          $''.
         
  The implication         
      ``$(\text{\ref{lem:unfolding:versus:scdelstrats:item:iv}})
            \Rightarrow 
         (\text{\ref{lem:unfolding:versus:scdelstrats:item:iii}})$''
  is clear: (\text{\ref{lem:unfolding:versus:scdelstrats:item:iv}})
  implies that $\allter$ is productive for e.g.\ the lazy-unfolding, eager \extscopedelimiting\ strategy on $\stRegletrecARS$.
          
  For showing  the implication                  
      ``$(\text{\ref{lem:unfolding:versus:scdelstrats:item:iii}})
            \Rightarrow 
         (\text{\ref{lem:unfolding:versus:scdelstrats:item:ii}})$'',
  let $\astratplus$ be a \extscopedelimiting~strategy for $\stRegARS$ such that $\allter$ is $\astratplus$\nb-productive.
  Then the strategy $\astratplus$ defines a $\sunfoldparsered$\nb-rewrite sequence $\arewseq$ on $\parseempty{\femptylabs{\allter}}$ 
  by using $\astratplus$ to define next steps on subexpressions that are of the form 
  $\parsen{n}{\ldots}{\flabsCRS{n}{\avari{1},\ldots,\avari{n}}{\bllter}}$ in already obtained reducts:
  if on a term $\flabs{\avari{1},\ldots,\avari{n}}{\bllter}$ the strategy $\astratplus$ 
  prescribes a $\sunfoldred$\nb-step, then this step is taken over in $\arewseq\,$;
  if $\astratplus$ prescribes a $\labsdecompstratred{\astratplus}$-step, 
  then $\arewseq$ can continue with a $\indap{\sred}{\subparse,\slabs}$\nb-step;
  if $\astratplus$ prescribes a $\lappdecompistratred{0}{\astratplus}$- and a $\lappdecompistratred{1}{\astratplus}$-step, 
  then $\arewseq$ can continue with a $\indap{\sred}{\subparse,@}$\nb-step.
  For the construction of $\arewseq$, possible steps in subexpressions 
  $\parsen{n}{\ldots}{\flabsCRS{n}{\avari{1},\ldots,\avari{n}}{\bllter}}$ at parallel positions
  have to be interleaved to ensure that the reduction work is done in an outer\-most-fair way.
  Productivity of $\astratplus$ on $\allter$ ensures that always after finitely many steps
  inside a subexpression $\parsen{n}{\ldots}{\flabsCRS{n}{\avari{1},\ldots,\avari{n}}{\bllter}}$
  the function symbol $\sparsen{n}$ disappears at this position (either entirely, or it is moved deeper over
  a $\lambda$\nb-abstraction or an application).
  In the terms of the rewrite sequence $\arewseq$ larger and larger \lambdaterm\ contexts appear at the head.
  Hence $\arewseq$ is strongly convergent, and it obtains, in the limit, an infinite \lambdaterm;
  thus it witnesses $\arewseq \funin \parseempty{\femptylabs{\allter}} \unfoldparseomegared \aiter$. 
         
  For the implication                  
      ``$(\text{\ref{lem:unfolding:versus:scdelstrats:item:ii}})
            \Rightarrow 
         (\text{\ref{lem:unfolding:versus:scdelstrats:item:i}})$'',
  suppose that $\arewseq$ is a rewrite sequence that witnesses  
  $\parseempty{\femptylabs{\allter}} \unfoldparseomegared \aiter$ for some infinite \lambdaterm\ $\aiter$.    
  Since the $\sparsered$\nb-steps require already unfolded parts of the term,
  they have to `shadow' unfolding steps.
  All $\sunfoldred$\nb-steps in $\arewseq$ take place beneath symbols $\sparsen{n}$.
  So the possibility of $\sparsered$\nb-steps during $\arewseq$ depends
  on the unfolding steps during $\arewseq$, but not vice versa.
  Hence a rewrite sequence $\arewseq'$ on $\parseempty{\femptylabs{\allter}}$
  can be constructed that only keeps the $\sunfoldred$\nb-steps from $\arewseq$. 
  Since $\arewseq$ is strongly convergent and converges to $\aiter$,
  $\arewseq'$ witnesses $\parseempty{\femptylabs{\allter}} \unfoldomegared \parseempty{\femptylabs{\aiter}}$.  
  By dropping the `non-participant' prefix context $\parseempty{\femptylabs{\acxthole}}$ 
  from all terms in $\arewseq'$, and adapting the steps accordingly, 
  a rewrite sequence $\arewseq''$ is obtained that witnesses $\arewseq'' \funin \allter \unfoldomegared \aiter$.  
  
  We show the implication 
      ``$(\text{\ref{lem:unfolding:versus:scdelstrats:item:i}})
            \Rightarrow 
         (\text{\ref{lem:unfolding:versus:scdelstrats:item:iv}})$''
  indirectly. So we assume that there is a \extscopedelimiting~strategy $\astratplus$
  such that $\allter$ is not $\astratplus$\nb-productive.
  As in the proof above of                
      ``$(\text{\ref{lem:unfolding:versus:scdelstrats:item:iii}})
            \Rightarrow 
         (\text{\ref{lem:unfolding:versus:scdelstrats:item:ii}})$'',
  $\astratplus$ defines an outermost-fair, maximal $\sunfoldparsered$\nb-rewrite sequence $\arewseq$ on $\parseempty{\femptylabs{\allter}}$.
  But since $\astratplus$ here is a strategy that is not productive for $\allter$,
  it follows that, due to its construction, $\arewseq$ does not succeed in `pushing' all function symbols $\sletrec$ to deeper and deeper depth,
  and thereby building up an infinite \lambdaterm. 
  Instead, this outermost-fair $\sunfoldparsered$\nb-rewrite sequence contains infinitely many steps 
  at the position of an outermost occurrence of $\sletrec$. 
  Since, other than the $\sunfoldred$\nb-steps, the $\sparsered$\nb-steps (which always take place above
  outermost occurrences of $\sletrec$\nb-symbols) cannot be the reason for this,
  the same stagnation of an outermost-fair unfolding process takes place if the $\sparsered$\nb-steps
  are postponed, that is dropped from $\arewseq$. In this way, 
  by again dropping the `non-participant' prefix context $\parseempty{\femptylabs{\acxthole}}$ from the terms of $\arewseq$, and adapting the steps accordingly, 
  we obtain an outermost-fair $\sunfoldred$\nb-rewrite sequence starting on $\femptylabs{\allter}$ 
  that does not converge to an infinite \lambdaterm. 
  But then Lemma~\ref{lem:unfolding} 
  implies that $\allter$ does not unfold to an infinite \lambdaterm.     
\end{proof}

\section{Proving regularity and strong regularity}
  \label{sec:proofs}

In this section we introduce proof systems that are 
                                                    sound and complete
for the notions of regular, and strongly regular, infinite \lambdaterm{s}.
In order to prove soundness and completeness,  we establish, as auxiliary results,
a correspondence between \scopedelimiting/\extscopedelimiting\ strategies for $\RegARS$/$\stRegARS$
and closed derivations in the corresponding proof systems.
Then we introduce a proof system that is sound and complete for 
equality between strongly regular infinite \lambdaterms.
Furthermore, we give two proof systems that are sound and complete for the property of \lambdaletrec\nb-terms to
unfold to infinite \lambdaterms. And finally, we show the following part of our characterisation result:
infinite \lambdaterms\ that are unfoldings of \lambdaletrec\nb-terms are strongly regular.    

We start with a more formal definition of \lambda-terms and \lambdaletrec-terms
than in Definition~\ref{def:sigs:lambdacal:lambdaletrec:CRS},
by means of derivability in proof system that formalises term decomposition.

\begin{figure}[t!]
\begin{center}\framebox{\begin{minipage}{330pt}\begin{center}
  \mbox{}
  \\[1.5ex]
  \mbox{ 
    \AxiomC{}
    \RightLabel{\bvarax}
    \UnaryInfC{$\flabs{\vec{\avar}\bvar}{\bvar}$}
    \DisplayProof
  } 
  \hspace*{3.5ex}    
  \mbox{
    \AxiomC{$ \flabs{\vec{x}\bvar}{\aiteri{0}} $}
    \RightLabel{$\labscomp$}
    \UnaryInfC{$ \flabs{\vec{\avar}}{\labs{\bvar}{\aiteri{0}}} $}
    \DisplayProof
        }
  \hspace*{3.5ex}
  \mbox{
    \AxiomC{$ \flabs{\vec{\avar}}{\aiteri{0}}$}
    \AxiomC{$ \flabs{\vec{\avar}}{\aiteri{1}}$}
    \RightLabel{$\lappcomp$}
    \BinaryInfC{$ \flabs{\vec{\avar}}{\lapp{\aiteri{0}}{\aiteri{1}}} $}
    \DisplayProof 
    }     
   \\[3.5ex]
  \mbox{
    \AxiomC{$ \flabs{\avari{1}\ldots\avari{n-1}}{\aiter} $}
    \RightLabel{\Vacstreg\ $\;$
                \parbox[c]{80pt}{\small (if the binding\\$\slabs\avari{n}$ is vacuous)
                                 }%
                }
    \UnaryInfC{$ \flabs{\avari{1}\ldots\avari{n}}{\aiter} $}
    \DisplayProof
        }  
  \\[1ex]
  \mbox{}  
\end{center}\end{minipage}}\end{center} 
  \vspace*{-1.25ex}  
  \caption{\label{fig:Terinflambda}%
           Proof system $\Terinflambda$ for defining the set of infinite \lambda-terms.}
\end{figure}   

\begin{definition}[infinite \lambda-terms]
\label{def:Ter-inflambda}
We define the set of prefixed infinite \lambda-terms as those terms in
$\Ter\siglpcCRS$ for which there exists a possibly infinite, completed (see Definition~\ref{def:Alpha}) derivation in the
proof system $\Terinflambda$ with axioms and rules as shown in Figure~\ref{fig:Terinflambda}:
\[\Ter\inflambdaprefixcal \defdby \descsetexpbig{\ater\in\Ter\siglpcCRS}{\infderivablein{\Terinflambda}\ater}\]
The set of plain infinite \lambda-terms are those terms that comply with the
previous definition when equipped with an empty prefix:
\[\Ter\inflambdacal \defdby \descsetexpbig{\ater\in\Ter\siglcCRS}{\infderivablein{\Terinflambda}{\femptylabs\ater\in\Ter\inflambdaprefixcal}}\]
\end{definition}

\begin{figure}[t!]
\begin{center}\framebox{\begin{minipage}{310pt}\begin{center}
  \mbox{}
  \\[1.5ex]
  \mbox{
    \AxiomC{$\flabs{\vec\avar\arecvari{1}\dots\arecvari{n}}{\ateri{1}}$}
    \AxiomC{$\dots$}
    \AxiomC{$\flabs{\vec\avar\arecvari{1}\dots\arecvari{n}}{\ateri{n}}$}
    \AxiomC{$\flabs{\vec\avar\arecvari{1}\dots\arecvari{n}}{\ater}$}
    \RightLabel{$\sletrec$}
    \QuaternaryInfC{$\flabs{\vec{\avar}}{\letrec{\arecvari1=\ateri1,\dots,\arecvari{n}=\ateri{n}}{\ater}}$}
    \DisplayProof
  }
  \\[1ex]
  \mbox{}  
\end{center}\end{minipage}}\end{center} 
  \vspace*{-1.25ex}  
  \caption{\label{fig:Terlambdaletrec}%
  Proof system $\Terlambdaletrec$ defined as an extension of $\Terinflambda$ by
  an additional rule to define the set of \lambdaletrec-terms}
\end{figure}   

\begin{definition}[\lambdaletrec-terms]
\label{def:Ter-lambdaletrec}
The set of prefixed \lambdaletrec-terms comprises those terms out of $\Ter{\sigllpcCRS}$
for which there exists a finite derivation in the proof system
$\Terlambdaletrec$ (Figure~\ref{fig:Terlambdaletrec}):
\[\Ter\lambdaletrecprefixcal \defdby \descsetexpbig{\ater\in\Ter\sigllpcCRS}{\derivablein{\Terlambdaletrec}{\ater}}\]
The set of plain \lambdaletrec-terms are those terms that comply with the
previous definition when equipped with an empty prefix:
\[\Ter\lambdaletreccal \defdby \descsetexpbig{\ater\in\Ter\sigllcCRS}{\derivablein{\Terlambdaletrec}{\femptylabs\ater\in\Ter{\lambdaletrecprefixcal}}}\]
\end{definition}

Building on rules already used in the proof systems for term formation in
\inflambdacal\ and \inflambdaprefixcal\ from the definition above, 
we now introduce proof systems for regularity and strong regularity of infinite \lambdaterms\ in $\inflambdacal$.

\begin{definition}[proof systems $\Reg$, and $\stReg$, $\stRegzero$]%
  \label{def:Reg:stReg:stRegzero}
  The natural-deduction style proof system $\stReg$ for recognising strongly regular, infinite $\lambda$\nb-terms
  contains the axioms and rules as shown in Figure~\ref{fig:stReg:stRegzero}.
  In particular, the rule \FIX\ is a natural-deduction style derivation rule
  in which marked assumptions from the top of the proof tree can be discharged.
  Instances of this rule carry the side-condition that the depth $\depth{\Derivi{0}}$ 
  of the immediate subderivation $\Derivi{0}$ of its premise is greater or equal to 1
  (hence this subderivation contains at least one rule instance,
   and, importantly, for a topmost occurrence of \FIX, $\Derivi{0}$ must have a 
   bottommost instance of one of the rules ($\labscomp$), ($\lappcomp$), or ($\Vacstreg$)). 
   
  The variant $\stRegzero$ of $\Reg$ contains the same axioms and rules as $\stReg$,
  but in it instances of \FIX\ are subject to the additional side-condition:
  for all $\flabs{\vec{y}}{\bter}$ on threads in $\Derivi{0}$ from
  open marked assumptions $(\flabs{\vec{\avar}}{\ater})^u$ downwards
  it holds that $\length{\vec{y}} \ge \length{\vec{x}}$.
    
  The natural-deduction style proof system $\Reg$ for recognising regular infinite $\lambda$\nb-terms
  differs from $\stReg$ by the absence of the rule ($\Vacstreg$), and the presence instead
  of the rule ($\Vacreg$) in Figure~\ref{fig:Reg}, and by the restriction of the axiom scheme ($\bvarax$)
  to the more restricted version displayed in Figure~\ref{fig:Reg}. 
  
  Provability of a term in $\inflambdaprefixcal$ in one of these proof systems is defined as the existence
  of a \emph{closed} derivation: for $\bs{R}\in\setexp{\Reg,\, \stReg,\, \stRegzero}$ we denote
  by $\derivablein{\bs{R}}{\flabs{\vec{\avar}}{\aiter}}$ the existence
  of a prooftree $\Deriv$ with conclusion $\aiter$ and with rule instances of $\bs{R}$
  such that all marked assumptions at the top of the $\Deriv$ are discharged 
  at some instance of the rule \FIX.
\end{definition}

\begin{figure}[t!]
\begin{center}\framebox{\begin{minipage}{330pt}\begin{center}
  \mbox{}
  \\[1ex]
  \mbox{ 
    \AxiomC{}
    \RightLabel{\bvarax}
    \UnaryInfC{$\flabs{\vec{\avar}\bvar}{\bvar}$}
    \DisplayProof
  } 
  \hspace*{3.5ex}    
  \mbox{
    \AxiomC{$ \flabs{\vec{x}\bvar}{\aiteri{0}} $}
    \RightLabel{$\labscomp$}
    \UnaryInfC{$ \flabs{\vec{\avar}}{\labs{\bvar}{\aiteri{0}}} $}
    \DisplayProof
        }
  \hspace*{3.5ex}
  \mbox{
    \AxiomC{$ \flabs{\vec{\avar}}{\aiteri{0}}$}
    \AxiomC{$ \flabs{\vec{\avar}}{\aiteri{1}}$}
    \RightLabel{$\lappcomp$}
    \BinaryInfC{$ \flabs{\vec{\avar}}{\lapp{\aiteri{0}}{\aiteri{1}}} $}
    \DisplayProof 
    }     
   \\[3.5ex]
  \mbox{
    \AxiomC{$ \flabs{\avari{1}\ldots\avari{n-1}}{\aiter} $}
    \RightLabel{\Vacstreg\ $\;$
                \parbox[c]{80pt}{\small (if the binding\\$\slabs\avari{n}$ is vacuous)
                                 }%
                }
    \UnaryInfC{$ \flabs{\avari{1}\ldots\avari{n}}{\aiter} $}
    \DisplayProof
        }  
  \hspace*{0.5ex}
  \mbox{
    \AxiomC{$ [\flabs{\vec{\avar}}{\aiter}]^u$}
    \noLine
    \UnaryInfC{$\Derivi{0}$}
    \noLine
    \UnaryInfC{$ \flabs{\vec{\avar}}{\aiter} $}
    \RightLabel{$\sFIX,u$ $\,$
                \parbox{49pt}{\small (if $\depth{\Derivi{0}} \ge 1$)}
                }                   
    \UnaryInfC{$ \flabs{\vec{\avar}}{\aiter} $}
    \DisplayProof
    }
  \\[1ex]
  \mbox{}  
\end{center}\end{minipage}}\end{center} 
  \vspace*{-1.25ex}  
  \caption{\label{fig:stReg:stRegzero}%
           The natural-deduction style proof system $\stReg$ for strongly regular infinite $\lambda$\nb-terms,
           which is an extension of $\Terinflambda$ by one additional rule \FIX.
           In the variant system $\stRegzero$, instances of \FIX\ are subject to the following side-condition:
           for all $\flabs{\vec{y}}{\bter}$ on threads in $\Derivi{0}$ from
           open marked assumptions $(\flabs{\vec{\avar}}{\ater})^u$ downwards
           it holds that $\length{\vec{y}} \ge \length{\vec{x}}$.
           }
\end{figure}

\begin{figure}[t!]
\begin{center}  
  \framebox{
\begin{minipage}{330pt}
\begin{center}
  \mbox{}
  \\[0.5ex]
  \mbox{ 
    \AxiomC{}
    \RightLabel{\bvarax}
    \UnaryInfC{$\flabs{\bvar}{\bvar}$}
    \DisplayProof
        } 
  \hspace*{3.5ex}      
  \mbox{
    \AxiomC{$ \flabs{\avari{1}\ldots\avari{i-1}\avari{i+1}\ldots\avari{n}}{\aiter} $} 
    \RightLabel{\Vacreg\ $\;$
                \parbox{65pt}{\small (if the binding\\[-0.5ex]\hspace*{\fill} $\lambda\avari{i}$ is vacuous)}
                }
    \UnaryInfC{$ \flabs{\avari{1}\ldots\avari{n}}{\aiter} $} 
    \DisplayProof
        }  
  \\[0.5ex]
  \mbox{}
\end{center}
\end{minipage}
            }
\end{center} 
  \vspace*{-1.25ex}  
  \caption{\label{fig:Reg}%
           The natural-deduction style 
           proof system $\Reg$ for regular infinite $\lambda$\nb-terms 
           arises from the proof system $\stReg$ 
           by replacing the rule ($\Vacstreg$) with the rule ($\Vacreg$)
           for the introduction of vacuous bindings in the $\lambda$\protect\nb-abstraction prefixes,
           and by replacing the axiom scheme ($\bvarax$) of $\stReg$ by the more restricted version here.
           }
\end{figure}

\begin{remark}
  The proof system $\stReg$ is related to
  a proof system for nameless, finite terms in the \lambdacalculus\
  that is used in \cite[sec.~2]{oost:looi:zwit:2004} 
  as part of a translation of \lambdaterms\ into `Lambdascope' interaction nets,
  which are used for optimal evaluation (in the sense of L\'{e}vy) of \lambdaterms.
\end{remark}

The proposition below explains that the side-condition on instances of \FIX\ from the proof systems above
to have immediate subderivations $\Derivi{0}$ with $\depth{\Derivi{0}}\ge 1$
entails a `guardedness' property for threads from such instances upwards to discharged instances. 

\begin{proposition}\label{prop:Reg:stReg}
  Let $\Deriv$ be a derivation in $\Reg$, in $\stReg$ or in $\stRegzero$ possibly with open marked assumptions.
  Then for all instances $\ainst$ of the rule \FIX\ in $\Deriv$ it holds:
  every thread from $\ainst$ upwards to a marked assumption that is discharged at $\ainst$
  passes at least one instance of a rule ($\labscomp$) or ($\lappcomp$).
\end{proposition}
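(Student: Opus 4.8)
The plan is to argue by contradiction, exploiting the fact that among the rules of the three systems only ($\labscomp$) and ($\lappcomp$) alter the \emph{body} of a prefixed term, whereas ($\Vacstreg$), ($\Vacreg$), and \FIX\ leave the body untouched and only manipulate or preserve the abstraction prefix. Fix an instance $\ainst$ of \FIX\ with conclusion and premise $\flabs{\vec{\avar}}{\aiter}$, discharging marked assumptions $[\flabs{\vec{\avar}}{\aiter}]^u$, and let $T$ be a thread running from $\ainst$ upwards to such a discharged assumption $m$. Suppose, towards a contradiction, that $T$ passes through no instance of ($\labscomp$) or ($\lappcomp$).

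First I would track the \emph{length} of the abstraction prefix along $T$. Reading each rule from conclusion to premise (i.e.\ upwards), ($\labscomp$) increases the prefix length by one, ($\Vacstreg$) and ($\Vacreg$) decrease it by one, while ($\lappcomp$) and \FIX\ leave it unchanged. Since by assumption no ($\labscomp$)-instance lies on $T$, the prefix length is non-increasing along $T$. But both endpoints of $T$ carry the same formula $\flabs{\vec{\avar}}{\aiter}$ — the premise of $\ainst$ and the discharged assumption $m$ are literally equal, since \FIX\ discharges exactly the copies of its conclusion — hence they have equal prefix length $\length{\vec{\avar}}$. A non-increasing integer sequence with equal endpoints is constant, so no strict decrease occurs; therefore no ($\Vacstreg$)- or ($\Vacreg$)-instance lies on $T$ either. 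As ($\bvarax$) is premise-free and can thus only sit at the very top of a thread, whereas the top of $T$ is the assumption $m$, I conclude that every rule instance on $T$ is an instance of \FIX.

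The final step extracts the contradiction from the side-condition $\depth{\Derivi{0}} \ge 1$ on \FIX, which (as noted after Definition~\ref{def:Reg:stReg:stRegzero}) forces the immediate subderivation to contain at least one rule instance. Let $\ainst'$ be the topmost instance of \FIX\ occurring on $T$ (counting $\ainst$ itself if it is the only rule on $T$); it exists because, as just shown, all rules on $T$ are \FIX-instances. By the topmost choice of $\ainst'$ together with the absence of non-\FIX\ rules on $T$, no rule instance of $T$ lies above the premise of $\ainst'$, so that premise — the conclusion of the immediate subderivation of $\ainst'$ — coincides with the discharged leaf $m$. Consequently the immediate subderivation of $\ainst'$ is the single-node derivation consisting of $m$ alone, which has depth $0$, contradicting the side-condition $\depth{\cdot} \ge 1$ attached to $\ainst'$. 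This contradiction shows that $T$ must pass through at least one instance of ($\labscomp$) or ($\lappcomp$). The argument applies verbatim to $\Reg$, $\stReg$, and $\stRegzero$, since it uses only the prefix-length behaviour of the rules (identical in all three, with ($\Vacreg$) playing the role of ($\Vacstreg$) in $\Reg$) and the common side-condition on \FIX.

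The main obstacle is less a difficulty than a point of care: verifying that no rule other than ($\labscomp$) can \emph{increase} the prefix length upwards, so that equal endpoints genuinely force the absence of the scope-delimiting rules, and reading $\depth{\Derivi{0}} \ge 1$ correctly as ``at least one rule instance'', so that a thread consisting solely of \FIX-instances collapses the topmost subderivation to a depth-$0$ leaf. Once these are in place the contradiction is immediate.
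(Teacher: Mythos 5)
Your proof is correct and rests on exactly the same two ingredients as the paper's: the prefix-length bookkeeping (only ($\labscomp$) increases the prefix upwards, ($\Vacreg$)/($\Vacstreg$) decrease it, and the endpoints of the thread carry equal prefixes) and the side-condition $\depth{\Derivi{0}} \ge 1$ on \FIX. The only difference is organisational — you argue by contradiction, first eliminating the vacuous-binding rules via constancy and then collapsing a \FIX-only thread against the side-condition, whereas the paper argues directly from the topmost \FIX\ instance — so this is essentially the paper's proof.
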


\begin{proof}
  Since for $\stReg$ and $\stRegzero$ the argument is analogous, we only consider derivations in $\Reg$. 
  So, let $\Deriv$ be a derivation in $\Reg$. Furthermore,  
  let $\ainst$ be an instance of the rule \FIX\ in $\Deriv$ with conclusion $\flabs{\vec{\avar}}{\aiter}$,
  and let $\apath$ be a thread from the conclusion of $\ainst$ upwards to a marked assumption $(\flabs{\vec{\avar}}{\aiter})^{\amarker}$. 
  Let $\binst$ be the topmost instance of \FIX\ in $\Deriv$ that is passed on $\apath$.
  By its side-condition, the immediate subderivation of $\binst$ has depth greater or equal to 1,
  and hence there is at least one instance of a rule ($\labscomp$), ($\lappcomp$), or ($\Vacreg$) passed on $\apath$
  above $\binst$. If there is an instance of ($\labscomp$) or ($\lappcomp$) on this part of $\apath$, we are done.
  Otherwise only rules ($\Vacreg$) are passed on $\apath$ above $\binst$. But since the rule ($\Vacreg$)
  decreases the length of the abstraction prefix in the term occurrences in a pass from the conclusion to the premise,
  and since the length of the abstraction prefix at the start of $\apath$ is the same as at the end of $\apath$, namely $\length{\vec{\avar}}$,
  it follows that at least one occurrence of a rule that increases the length of the abstraction prefix 
  must have been passed on $\apath$, too, on the part from $\ainst$ to $\binst$. Since the only
  rule of $\Reg$ that increases the length of an abstraction prefix in a pass from conclusion to a premise
  is the rule ($\labscomp$), we have also in this case found a desired rule instance on $\apath$. 
\end{proof}

\begin{example}\label{example:stReg}
  Let $\ater$ be the infinite unfolding of $\letrec{\arecvar =
  \labs{\avar\bvar}{\lapp{\lapp{\arecvar}{\bvar}}{\avar}}}{\arecvar}$ for which
  we use as a finite representation the equation $\ater =
  \labs{\avar\bvar}{\lapp{\lapp{\ater}{\bvar}}{\avar}}$. This term admits the
  following derivations in $\stReg$, where as opposed to the left one the right one
  has some redundancy:
  \begin{gather*}
     \begin{aligned}[c]
     \scalebox{0.92}{
      \AxiomC{$ (\femptylabs{\ater})^{\amarker} $}
      \RightLabel{$\Vacstreg$}
      \UnaryInfC{$ \flabs{\avar}{\ater} $}
      \RightLabel{$\Vacstreg$}
      \UnaryInfC{$ \flabs{\avar\bvar}{\ater} $}
      \AxiomC{\mbox{}}
      \RightLabel{$\bvarax$}
      \UnaryInfC{$ \flabs{\avar\bvar}{\bvar} $}
      \RightLabel{$\lappcomp$}
      \BinaryInfC{$ \flabs{\avar\bvar}{\lapp{\ater}{\bvar}} $}
      \AxiomC{\mbox{}}
      \RightLabel{$\bvarax$}
      \UnaryInfC{$ \flabs{\avar}{\avar} $}
      \RightLabel{$\Vacstreg$}
      \UnaryInfC{$ \flabs{\avar\bvar}{\avar} $}
      \RightLabel{$\lappcomp$}
      \BinaryInfC{$ \flabs{\avar\bvar}{\lapp{\lapp{\ater}{\bvar}}{\avar}} $}
      \RightLabel{$\labscomp$}
      \UnaryInfC{$ \flabs{\avar}{\labs{\bvar}{\lapp{\lapp{\ater}{\bvar}}{\avar}}} $}
      \RightLabel{$\labscomp$}
      \UnaryInfC{$ \femptylabs{}{\labs{\avar\bvar}{\lapp{\lapp{\ater}{\bvar}}{\avar}}} $}
      \RightLabel{$\sFIX,\amarker$}
      \UnaryInfC{$ \femptylabs{}{\ater} $}
      \DisplayProof
       }
    \end{aligned}
    \hspace*{-2ex}
    \begin{aligned}[c]
     \scalebox{0.92}{
      \AxiomC{$ (\flabs{\avar}{\labs{\bvar}{\lapp{\lapp{\ater}{\bvar}}{\avar}}})^{\amarker} $}
      \RightLabel{$\labscomp$}
      \UnaryInfC{$ \femptylabs{\ater}) $}
      \RightLabel{$\Vacstreg$}
      \UnaryInfC{$ \flabs{\avar}{\ater} $}
      \RightLabel{$\Vacstreg$}
      \UnaryInfC{$ \flabs{\avar\bvar}{\ater} $}
      \AxiomC{\mbox{}}
      \RightLabel{$\bvarax$}
      \UnaryInfC{$ \flabs{\avar\bvar}{\bvar} $}
      \RightLabel{$\lappcomp$}
      \BinaryInfC{$ \flabs{\avar\bvar}{\lapp{\ater}{\bvar}} $}
      \AxiomC{\mbox{}}
      \RightLabel{$\bvarax$}
      \UnaryInfC{$ \flabs{\avar}{\avar} $}
      \RightLabel{$\Vacstreg$}
      \UnaryInfC{$ \flabs{\avar\bvar}{\avar} $}
      \RightLabel{$\lappcomp$}
      \BinaryInfC{$ \flabs{\avar\bvar}{\lapp{\lapp{\ater}{\bvar}}{\avar}} $}
      \RightLabel{$\labscomp$}
      \UnaryInfC{$ \flabs{\avar}{\labs{\bvar}{\lapp{\lapp{\ater}{\bvar}}{\avar}}} $}
      \RightLabel{$\sFIX,\amarker$}
      \UnaryInfC{$ \flabs{\avar}{\labs{\bvar}{\lapp{\lapp{\ater}{\bvar}}{\avar}}} $}
      \RightLabel{$\labscomp$}
      \UnaryInfC{$ \femptylabs{}{\ater} $}
      \DisplayProof
        }
    \end{aligned}
  \end{gather*}
  Note that the derivation on the left is also a derivation in $\stRegzero$,
  but that this is not the case for the derivation on the right. There,
  for the occurrence of \FIX\ the side-condition in the system $\stRegzero$ is violated:
  on the path from the marked assumption $(\flabs{\avar}{\labs{\bvar}{\lapp{\lapp{\ater}{\bvar}}{\avar}}})^{\amarker}$ 
  down to the instance of $\FIX$
  there is the occurrence $\femptylabs{}{\ater}$ of a term with shorter prefix than the term in the assumption and in the conclusion. 
  See Example~\ref{example:annstRegzero} for a comparison of annotated versions of the two derivations above.
 
  See Example~\ref{ex:simpleletrec-infinite-path} for a rewriting
  sequence in $\stRegARS$ corresponding to the leftmost path in both derivations, and
  also Figure~\ref{fig:distance} for the corresponding transition graph. \todo{LTG}
\end{example}

\begin{remark}[$\stReg$ versus $\stRegzero$]
  While it will be established in Proposition~\ref{prop:stReg:stRegzero} below that
  provability in $\stReg$ and $\stRegzero$ coincides, 
  the difference between these systems will come to the fore in annotated versions that are purpose-built
  for the extraction of \lambdaletrecterms\ that express infinite \lambdaterms.  
  This will be explained and illustrated later in Example~\ref{example:annstRegzero},
  using annotated versions of the two derivations in Example~\ref{example:stReg} above. 
\end{remark}

\begin{example}
The infinite \lambdaterm\ from Example~\ref{fig:ex:entangled}
with the \lTG\ shown in Figure~\ref{fig:entangled-cons-ltg}
is derivable in $\Reg$ by the following closed derivation using the
notation from Example~\ref{ex:entangled-infinite-path}:
  \begin{equation*}
    \mbox{
      \AxiomC{$  (\overbrace{\flabs{b}{\funap{rec_{\aiter}}{b}}}^{\flabs{a}{\funap{rec_{\aiter}}{a}}})^{\amarker}  $}
      \RightLabel{$\Vacreg$}
      \UnaryInfC{$ \flabs{ab}{\funap{rec_{\aiter}}{b}} $}
      \AxiomC{\mbox{}}
      \RightLabel{$\bvarax$}
      \UnaryInfC{$ \flabs{a}{a} $}
      \RightLabel{$\Vacreg$}
      \UnaryInfC{$ \flabs{ab}{a}$}
      \RightLabel{$\lappcomp$}
      \BinaryInfC{$ \flabs{ab}{\lapp{\funap{rec_{\aiter}}{b}}{a}} $}
      \RightLabel{$\labscomp$}
      \UnaryInfC{$\flabs{a}{\labs{b}{\lapp{\funap{rec_{\aiter}}{b}}{a}}}$}
      \RightLabel{\sFIX, $\amarker$}
      \UnaryInfC{$ \flabs{a}{\funap{rec_{\aiter}}{a}} $}
      \RightLabel{$\labscomp$}
      \UnaryInfC{$ \femptylabs{\labs{a}{\funap{rec_{\aiter}}{a}}} $}
      \DisplayProof
          }
  \end{equation*}
  When trying to construct a derivation for this term in $\stReg$ from the bottom upwards,
  the rules of $\stReg$ apart from (in first instance) \FIX\ offer only deterministic choices,
  with as outcome an infinite prooftree of the form: 
  \begin{equation*}
    \mbox{
      \AxiomC{$\vdots $}
      \noLine
      \UnaryInfC{$ \flabs{abcde}{\funap{rec_{\aiter}}{e}} $}
      \RightLabel{$\labscomp$}
      \UnaryInfC{$ \flabs{abcd}{\labs{e}{\funap{rec_{\aiter}}{e}}} $}
      \AxiomC{\mbox{}}
      \RightLabel{$\bvarax$}
      \UnaryInfC{$ \flabs{abc}{c} $}
      \RightLabel{$\Vacstreg$}
      \UnaryInfC{$ \flabs{abcd}{c}$}  
      \RightLabel{$\lappcomp$}
      \BinaryInfC{$ \flabs{abcd}{\lapp{\funap{rec_{\aiter}}{d}}{c}} $}
      \RightLabel{$\labscomp$}
      \UnaryInfC{$ \flabs{abc}{\labs{d}{\lapp{\funap{rec_{\aiter}}{d}}{c}}} $}
      \AxiomC{\mbox{}}
      \RightLabel{$\bvarax$}
      \UnaryInfC{$ \flabs{ab}{b} $}
      \RightLabel{$\Vacstreg$}
      \UnaryInfC{$ \flabs{abc}{b}$}  
      \RightLabel{$\lappcomp$}
      \BinaryInfC{$ \flabs{abc}{\lapp{\funap{rec_{\aiter}}{c}}{b}} $}
      \RightLabel{$\labscomp$}
      \UnaryInfC{$ \flabs{ab}{ \labs{c}{\lapp{\funap{rec_{\aiter}}{c}}{b}} } $}
      \AxiomC{\mbox{}}
      \RightLabel{$\bvarax$}
      \UnaryInfC{$ \flabs{a}{a} $}
      \RightLabel{$\Vacstreg$}
      \UnaryInfC{$ \flabs{ab}{a}$}
      \RightLabel{$\lappcomp$}
      \BinaryInfC{$ \flabs{ab}{\lapp{\funap{rec_{\aiter}}{b}}{a}} $}
      \RightLabel{$\labscomp$}
      \UnaryInfC{$ \flabs{a}{\labs{b}{\lapp{\funap{rec_{\aiter}}{b}}{a}}}$}
      \RightLabel{$\labscomp$}
      \UnaryInfC{$ \femptylabs{\labs{a}{\funap{rec_{\aiter}}{a}}}$}
      \DisplayProof
          }
  \end{equation*}
  But then, since this prooftree does not contain repetitions, also use of the rule \FIX\
  in order to discharge assumptions is impossible. Consequently, the term is
  not derivable in $\stReg$.

  For the $\RegARS$ and $\stRegARS$ rewriting sequences corresponding to the
  leftmost paths through the two proofs above, see
  Example~\ref{ex:entangled-infinite-path}. The corresponding transition graphs are
  displayed in Figure~\ref{fig:entangled-cons-ltg}. \todo{LTG}
\end{example}

%

%

\begin{proposition}\label{prop:derivationpaths:2:rewritesequences:Reg:stReg}
  Let $\Deriv$ be a derivation in $\Reg$ (in $\stReg$) with conclusion $\flabs{\vec{\avar}}{\aiter}$.
  
  Then every path in $\Deriv$ from the conclusion upwards corresponds to a
  $\sregred$-rewrite sequence (a $\stregred$-rewrite sequence) from $\flabs{\vec{\avar}}{\ater}$:
  while passes over instances of \FIX\ correspond to empty steps, 
  passes over instances of the rule ($\lappcomp$) to the left and to the right correspond to       
  $\slappdecompired{0}$- and $\slappdecompired{1}$-steps, respectively;
  passes over instances of the rules ($\labscomp$) and ($\Vacreg$) (the rule ($\Vacstreg$))
  correspond to $\slabsdecompred$-, and $\scompressregred$\nb-steps ($\scompressstregred$\nb-steps). 
  %
  The same holds for (finite or infinite) cyclic paths in $\Deriv$ that return, possibly repeatedly, 
  from a marked assumption at the top down to the conclusion of the instance of
  \FIX\ at which the respective assumption is discharged.
\end{proposition}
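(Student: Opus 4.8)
The plan is to read off the derivation as an upside-down decomposition: the proof systems $\Reg$ and $\stReg$ are built so that traversing any rule instance from its conclusion to a premise transforms the prefixed term exactly as one of the decomposition rules of $\RegARS$ (of $\stRegARS$) does. First I would record this rule-by-rule dictionary. An instance of ($\labscomp$) with conclusion $\flabs{\vec{\avar}}{\labs{\bvar}{\aiteri{0}}}$ has premise $\flabs{\vec{\avar}\bvar}{\aiteri{0}}$, so the upward move is a $\slabsdecompred$-step. An instance of ($\lappcomp$) with conclusion $\flabs{\vec{\avar}}{\lapp{\aiteri{0}}{\aiteri{1}}}$ has left premise $\flabs{\vec{\avar}}{\aiteri{0}}$ and right premise $\flabs{\vec{\avar}}{\aiteri{1}}$, so entering the left premise is a $\slappdecompired{0}$-step and entering the right premise a $\slappdecompired{1}$-step. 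An instance of ($\Vacreg$) (resp.\ ($\Vacstreg$)) deletes an arbitrary (resp.\ the last) vacuous prefix binding when passing upward, and its vacuousness side-condition coincides with the applicability condition of $\scompressregred$ (resp.\ $\scompressstregred$); hence the upward move is a $\scompressregred$-step (resp.\ a $\scompressstregred$-step). Finally, for an instance of \FIX\ the premise and the conclusion carry the \emph{same} prefixed term, so its traversal corresponds to an empty step.

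With this dictionary the claim for arbitrary paths from the conclusion upward follows by induction on the path length. The conclusion $\flabs{\vec{\avar}}{\aiter}$ is a term in $\Ter{\inflambdaprefixcal}$, and since this set is closed under the rewrite relations (Proposition~\ref{prop:oneRegARS}), every sequent met along a path is again a genuine term of $\RegARS$ (of $\stRegARS$); thus each individual upward move is a legitimate rewrite step of the kind identified above, with the left/right alternative for ($\lappcomp$) fixed by which premise the path enters, and passes over \FIX\ contributing no step. A path that ends at an axiom ($\bvarax$) ends at a term of shape $\flabs{\bvar}{\bvar}$ (resp.\ $\flabs{\vec{\avar}\bvar}{\bvar}$), which by Proposition~\ref{prop:rewprops:RegCRS:stRegCRS}, item~(\ref{prop:rewprops:RegCRS:stRegCRS:item:iv-1}), is exactly the $\sregred$-normal form (resp.\ a $\sstregred$-normal form); so such paths correspond to reductions to normal form.

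For the cyclic paths, let $\ainst$ be an instance of \FIX\ with conclusion $\flabs{\vec{\avar}}{\aiter}$ and immediate subderivation $\Derivi{0}$, and let $(\flabs{\vec{\avar}}{\aiter})^{\amarker}$ be a marked assumption discharged at $\ainst$. Reading a thread in $\Derivi{0}$ from the conclusion of $\ainst$ upward to this assumption yields, by the finite case, a rewrite sequence $\flabs{\vec{\avar}}{\aiter} \mred \flabs{\vec{\avar}}{\aiter}$; because the discharged assumption carries literally the same prefixed term as the conclusion of $\ainst$, its source and target coincide and the sequence is a genuine cycle. Following such threads repeatedly --- which is possible precisely because the discharge identifies the top occurrence with the bottom one --- then produces the finite or infinite cyclic rewrite sequences asserted.

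The verification is essentially bookkeeping once the dictionary is fixed, so the only point that genuinely needs attention --- and the main (if modest) obstacle --- is that the cyclic rewrite sequences be non-degenerate, i.e.\ that they actually contain steps and are not collapsed to the identity. This is secured by the side-condition $\depth{\Derivi{0}}\ge 1$ on \FIX, and more sharply by Proposition~\ref{prop:Reg:stReg}: every thread from $\ainst$ up to a discharged assumption passes at least one instance of ($\labscomp$) or ($\lappcomp$), so each corresponding cycle contains at least one $\slabsdecompred$- or $\slappdecompired{i}$-step rather than being built from empty and scope-delimiting steps alone. This non-degeneracy is exactly the feature that makes these cycles usable as the cyclic rewrite sequences (and, under the stronger side-condition of $\stRegzero$, the grounded cycles of Definition~\ref{def:grounded:cycle}) appearing later in the development.
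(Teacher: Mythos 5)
Your proposal is correct and follows essentially the same route as the paper's own proof, which likewise consists of exactly the rule-by-rule dictionary you set up (passes over ($\labscomp$), ($\Vacreg$)/($\Vacstreg$), and the two premises of ($\lappcomp$) corresponding to $\slabsdecompred$-, $\scompressregred$-/$\scompressstregred$-, and $\slappdecompired{0}$-/$\slappdecompired{1}$-steps, with \FIX\ contributing empty steps). Your additional bookkeeping --- the induction on path length, the appeal to closure of $\Ter{\inflambdaprefixcal}$ under the rewrite relations, the axiom/normal-form match, and the non-degeneracy of cycles via Proposition~\ref{prop:Reg:stReg} --- merely makes explicit what the paper leaves as ``an easy consequence''.
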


\begin{proof}
  The proposition is an easy consequence of the following facts:
  passes from a term in the conclusion of an instance $\ainst$ 
  of one of the rules ($\labscomp$), ($\Vacreg$), ($\Vacstreg$) to the term the premise of $\ainst$ 
  correspond to $\slabsdecompred$-, $\scompressregred$-, and $\scompressstregred$\nb-steps, respectively;
  passes from a term in the conclusion of an instance of ($\lappcomp$) to the left and the right premise 
  correspond to $\slappdecompired{0}$- and $\slappdecompired{1}$\nb-steps, respectively.
\end{proof}

Observe that, for the derivation $\Deriv$ in Example~\ref{example:stReg},
the $\sstregred$\nb-rewrite sequences that correspond to paths in $\Deriv$ as described in Proposition~\ref{prop:derivationpaths:2:rewritesequences:Reg:stReg}
are actually rewrite sequences with respect to the eager \extscopedelimiting\ strategy~$\eagscdelstratstreg$ for $\stRegARS$.
This illustrates the general situation, formulated by the lemma below:
paths in a derivation~$\Deriv$ in $\stReg$ (or in $\Reg$) from the conclusion upwards correspond
to rewrite sequences according to some, usually history-aware, \extscopedelimiting\ (\scopedelimiting) strategy~$\astrat$
for $\stReg$ (for $\Reg$),
which can be extracted from $\Deriv$.

\begin{lemma}[from $\Reg$/$\stReg$-derivations to scope/scope$^+$-delim.\ strategies
                                                                                    ]%
  \label{lem:derivations:Reg:stReg:2:scdelstrategies}
  Let $\aiter\in\Ter{\inflambdacal}$, 
  and let $\Deriv$ be a closed derivation in $\Reg$ (in $\stReg$) 
  with conclusion $\femptylabs{\aiter}$.
  Then there exists an, in general history-aware, \scopedelimiting\ strategy~$\astrati{\Deriv}$ for $\RegARS$
                                     (\extscopedelimiting\ strategy~$\astrati{\Deriv}$ for $\stRegARS$)
  with the following properties:
  \begin{enumerate}[(i)]
    \item{}\label{lem:derivations:Reg:stReg:2:scdelstrategies:item:i} 
      Every (possibly cyclic) path in $\Deriv$ from the conclusion upwards corresponds
      to a rewrite sequence with respect to $\astrati{\Deriv}$ starting on $\femptylabs{\aiter}$
      in the sense of Proposition~\ref{prop:derivationpaths:2:rewritesequences:Reg:stReg}
      where passes over instances of the rules ($\lappcomp$) to the left and to the right correspond
      to $\slappdecompistratred{0}{\astrati{\Deriv}}$- and $\slappdecompistratred{1}{\astrati{\Deriv}}$-steps, respectively,
      and passes over instances of ($\labscomp$) and of ($\Vacreg$) (of ($\Vacstreg$))
      correspond to $\slabsdecompstratred{\astrati\Deriv}$-, and $\scompressregstratred{\astrati\Deriv}$\nb-steps ($\scompressstregstratred{\astrati\Deriv}$\nb-steps).
    \item{}\label{lem:derivations:Reg:stReg:2:scdelstrategies:item:ii}
      Every 
            rewrite sequence that starts on $\femptylabs{\aiter}$ and proceeds according to $\astrati\Deriv$  
      corresponds to a (possibly cyclic) path in $\Deriv$ starting at the conclusion
      in upwards direction: thereby a $\slappdecompistratred{0}{\astrati\Deriv}$- and $\slappdecompistratred{1}{\astrati\Deriv}$\nb-step
      corresponds to a pass over (possibly successive \FIX-instances, or from a marked assumption to the instance of \FIX\ that binds it, followed by) 
      an instance of ($\lappcomp$) in direction left and right, respectively;
      a $\slabsdecompstratred{\astrati\Deriv}$- or $\scompressregstratred{\astrati\Deriv}$\nb-step ($\scompressstregstratred{\astrati\Deriv}$\nb-step)
      corresponds to a pass over (possibly \FIX-instances and assumption bindings to \FIX-instances) 
      an instance of ($\labscomp$) or ($\Vacreg$) (of ($\Vacstreg$)), respectively. 
      %
    \item{}\label{lem:derivations:Reg:stReg:2:scdelstrategies:item:iii}
      $\gSTstrat{\astrati\Deriv}{\aiter} 
         =
       \descsetexp{ \flabs{\vec{\bvar}}{\biter} }
                  {\text{the term $\flabs{\vec{\bvar}}{\biter}$ occurs in $\Deriv$}}$.
  \end{enumerate}  
\end{lemma}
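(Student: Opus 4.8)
The plan is to read off the desired strategy directly from the shape of $\Deriv$, using the step-for-rule dictionary already established in Proposition~\ref{prop:derivationpaths:2:rewritesequences:Reg:stReg} ($\FIX$ as an empty step, $\lappcomp$ to the left/right as $\slappdecompired{0}$/$\slappdecompired{1}$, $\labscomp$ as $\slabsdecompred$, $\Vacreg$ as $\scompressregred$, $\Vacstreg$ as $\scompressstregred$). The essential point is that the same prefixed term may occur at several nodes of $\Deriv$ that are conclusions of \emph{different} rules — for instance a vacuous-binding removal at one occurrence but a body-decomposition at another — so the extracted strategy must in general be history-aware: its objects are occurrences of terms in $\Deriv$ rather than bare terms.

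First I would turn $\Deriv$ into a rooted, possibly cyclic graph $G_\Deriv$ by folding it at $\FIX$. Because $\Deriv$ is closed, each marked assumption $(\flabs{\vec{x}}{\aiter})^u$ is discharged at a unique instance of $\FIX$ whose premise and conclusion both carry the identical term $\flabs{\vec{x}}{\aiter}$; I identify that assumption leaf with the conclusion node of the binding $\FIX$ instance. Contracting the term-preserving (hence ``empty-step'') $\FIX$ edges, the nodes of $G_\Deriv$ become the occurrences in $\Deriv$ of conclusions of the genuine rules $\bvarax$, $\labscomp$, $\lappcomp$, and $\Vacreg$ or $\Vacstreg$, and each such node carries one outgoing edge per premise of its rule, decorated by the corresponding step type.

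Next I would package $G_\Deriv$ as a history-aware strategy in the precise sense of the preliminaries. I take an $\alabelling$-labelled version $\aARS'$ of $\RegARS$ (resp.\ $\stRegARS$) rich enough to carry a distinct labelled copy of the underlying term for every node of $G_\Deriv$ — each labelled copy retaining \emph{all} $\RegARS$-steps of its term, so that the bijective-outgoing-steps condition of a labelling holds automatically — with $\converse{\alabelling}$ the forgetful map to terms and $\ainitlabelling$ sending $\femptylabs{\aiter}$ to the labelled copy of the root node. The strategy $\astrati{\Deriv}$ is then the history-free sub-ARS of $\aARS'$ that, at the labelled copy of each node of $G_\Deriv$, selects exactly the steps prescribed by the rule applied at that node, and that coincides with the eager strategy on all remaining labelled objects so as to be total and to retain the correct normal forms (the $\bvarax$-leaves project to the $\sregred$- and $\sstregred$-normal forms identified in Proposition~\ref{prop:rewprops:RegCRS:stRegCRS}, (\ref{prop:rewprops:RegCRS:stRegCRS:item:iv-1})). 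That $\astrati{\Deriv}$ is a \scopedelimiting\ (\extscopedelimiting) strategy follows because each node is the conclusion of a unique rule: the step is deterministic except at $\lappcomp$-nodes, which offer precisely a $\slappdecompired{0}$- and a $\slappdecompired{1}$-step, and no node is simultaneously a source of a decomposition step and of a scope-delimiting step.

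Finally, properties (i)--(iii) follow. Property (i) is Proposition~\ref{prop:derivationpaths:2:rewritesequences:Reg:stReg} together with the observation that, by construction, every edge traversed in $G_\Deriv$ is a step of $\astrati{\Deriv}$, so that the associated rewrite sequence proceeds according to $\astrati{\Deriv}$; cyclic paths through $\FIX$ are handled by the folding. For the converse (ii), a rewrite sequence on $\femptylabs{\aiter}$ according to $\astrati{\Deriv}$ stays, by the choice of $\ainitlabelling$ and the definition of the reachable part, among the labelled copies of nodes of $G_\Deriv$; unfolding the $\FIX$-contractions it traces a path upward from the conclusion of $\Deriv$, read off by the same dictionary. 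Property (iii) is then immediate: by Definition~\ref{def:STred:Reg:stReg} the set $\gSTstrat{\astrati{\Deriv}}{\aiter}$ consists of the $\sstratmred{\astrati{\Deriv}}$-reducts of $\femptylabs{\aiter}$, which by (i) and (ii) are exactly the underlying terms of the reachable nodes of $G_\Deriv$, i.e.\ the terms occurring in $\Deriv$. I expect the main obstacle to be the bookkeeping around $\FIX$ — verifying that folding at discharge points yields a well-defined labelled ARS meeting the bijective-outgoing-steps requirement, and that cyclic derivation paths correspond correctly to rewrite sequences under the history-aware labelling — rather than any single hard estimate.
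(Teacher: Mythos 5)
Your proposal is correct and follows essentially the same route as the paper's proof: the paper likewise lifts $\RegARS$/$\stRegARS$ to a labelled version whose objects distinguish occurrences (there via position labels in $\setexp{0,1}^*$ assigned by a decorated proof system $\stReglab$, where you fold $\Deriv$ at \FIX\ into an occurrence graph), defines $\astrati{\Deriv}$ by the derivation's rule instances on the labelled copies and by an arbitrary history-free \scopedelimiting\ (\extscopedelimiting) strategy elsewhere (your eager choice is one instance), and verifies strategy-hood via the normal forms from Proposition~\ref{prop:rewprops:RegCRS:stRegCRS}. The \FIX-bookkeeping you flag is discharged exactly as you anticipate: the side-condition $\depth{\Derivi{0}} \ge 1$ excludes threads consisting solely of \FIX-instances between a marked assumption and the conclusion of its discharging instance, so the folding creates no pure-\FIX\ cycles and every folded non-axiom node is the source of a genuine rule step.
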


\begin{proof}\label{prf:lem:derivations:Reg:stReg:2:scdelstrategies}
The proof defines a history-aware strategy $\astrati\Deriv$ for
$\stRegARS$ as a modification of an arbitrary (history-free)
strategy for $\stRegARS$ lifted to a labelled version of
$\stRegARS$. 
Thereby the modification is performed according to a given derivation $\Deriv$,
and the construction will guarantee that 
(\ref{lem:derivations:Reg:stReg:2:scdelstrategies:item:i}),
(\ref{lem:derivations:Reg:stReg:2:scdelstrategies:item:ii}),
and
(\ref{lem:derivations:Reg:stReg:2:scdelstrategies:item:iii})
hold.
\begin{figure}[t!]
\begin{center}  
  \framebox{
\begin{minipage}{350pt}
\begin{center}
  \mbox{}
  \\[1.5ex]
  \mbox{ 
    \AxiomC{}
    \RightLabel{\bvarax}
    \UnaryInfC{$\labflabs{\alabel}{\vec{\avar}\bvar}{\bvar}$}
    \DisplayProof
        }
  \hspace*{1.5ex}    
  \mbox{
    \AxiomC{$ \labflabs{\alabel}{\vec{x}\bvar}{\aiteri{0}} $}
    \RightLabel{$\labscomp$}
    \UnaryInfC{$ \labflabs{\alabel}{\vec{\avar}}{\labs{\bvar}{\aiteri{0}}} $}
    \DisplayProof
        }
  \hspace*{1.5ex}
  \mbox{
    \AxiomC{$ \labflabs{\alabel\, 0}{\vec{\avar}}{\aiteri{0}}$}
    \AxiomC{$ \labflabs{\alabel\, 1}{\vec{\avar}}{\aiteri{1}}$}
    \RightLabel{$\lappcomp$}
    \BinaryInfC{$ \labflabs{\alabel}{\vec{\avar}}{\lapp{\aiteri{0}}{\aiteri{1}}} $}
    \DisplayProof 
    }    
   \\[3.5ex]
  \mbox{
    \AxiomC{$ \labflabs{\alabel}{\avari{1}\ldots\avari{n-1}}{\aiter} $}
    \RightLabel{\Vacstreg\ $\;$
                \parbox[c]{65pt}{\small (if $\avari{n}$ does not\\[-0.5ex]\mbox{}$\;$ occur in ${\aiter}$)}
      }
    \UnaryInfC{$ \labflabs{\alabel}{\avari{1}\ldots\avari{n}}{\aiter} $}
    \DisplayProof
        }  
  \mbox{
    \AxiomC{$ [\labflabs{\alabel}{\vec{\avar}}{\aiter}]^u$}
    \noLine
    \UnaryInfC{$\Derivi{0}$}
    \noLine
    \UnaryInfC{$ \labflabs{\alabel}{\vec{\avar}}{\aiter} $}
    \RightLabel{$\sFIX,u$ $\,$
                \parbox{49pt}{\small (if $\depth{\Derivi{0}} \ge 1$)}
                }                   
    \UnaryInfC{$ \labflabs{\alabel}{\vec{\avar}}{\aiter} $}
    \DisplayProof
    }
  \\[1ex]
  \mbox{}  
\end{center}
\end{minipage}
   }
\end{center} 
  \caption{\label{fig:stReglab}%
           Proof system $\stReglab$ 
           for decorating $\stReg$-derivations with labels in $\setexp{0,1}^*$.
           }
\end{figure}   
  We establish the lemma only for the case of derivations in $\stReg$, since the case of derivations in $\Reg$
  can be treated analogously.
  So, let $\Deriv$ be a derivation in $\stReg$ with conclusion~$\femptylabs{\aiter}$.
  
  In a first step we decorate $\Deriv$ with position labels such that a derivation~$\Derivlab$
  with conclusion $\labfemptylabs{\rootpos}{\aiter}$ 
  in the variant proof system~$\stReglab$ in Figure~\ref{fig:stReglab} is obtained. 
  Note that the decoration process can be carried out in a bottom-up manner,
  where the label in the conclusion of a rule instance determines the label in the premise(s)
        if that is not an already labelled term,
  and where in the case of instances of the rule of \FIX\ also the labels in marked assumptions
  are determined. 
  
  In a second step we use the decorated version $\Derivlab$ of $\Deriv$ to define a history-aware
  strategy $\astrati{\Deriv}$ according to which the term $\femptylabs{\aiter}$ can
  be reduced as `prescribed' by $\Derivlab$. 
  Since the derivations can only determine the strategy $\astrati{\Deriv}$ on terms occurring in $\Deriv$,
  we also have to define $\astrati{\Deriv}$ on other terms.
  This will be done by choosing an arbitrary (but here: history-free) \extscopedelimiting~strategy 
  $\astrat$ for $\RegARS$, and basing the definition of $\astrati{\Deriv}$ on it.
  
  We start by defining a labelling of $\stRegARS$ as the \ARS\ for which
  $\astrati{\Deriv}$ will be defined as a history-free strategy,
  which together with an initial labelling~$\ainitlabelling$ 
  then yields a history-aware strategy for $\stRegARS$.  
  Assuming 
  $\stRegARS = \tuple{\Ter{\inflambdaprefixcal}, \,\steps, \,\ssrc, \,\stgt}$
  as the formal representation of $\stRegARS$, 
  we define the \ARS\ 
  $
    \stRegARSlab
      \defdby 
    \tuple{\Terlab{\inflambdaprefixcal},\, 
          \stepslab,\, 
          \ssrclab,\, 
          \stgtlab}
  $ 
  where
  \begin{align}
    &
    \Terlab{\inflambdaprefixcal} 
      \defdby
    \descsetexp{\labflabs{\alabel}{\vec{\bvar}}{\biter}}{\flabs{\vec{\bvar}}{\biter}\in\Ter{\inflambdaprefixcal},\, \alabel\in\setexp{0,1}^*}
    \displaybreak[0]
    \notag\\
    &  
    \stepslab 
      \defdby 
    \descsetexp{\triple{\labflabs{\alabel}{\vec{\bvar}}{\biter}}{\astep}{\labflabs{\alabel'}{\vec{\bvar}'}{\biter'}}}
               {\text{\eqref{eq1:prf:lem:derivations:Reg:stReg:2:scdelstrategies} holds}}   
    \notag\\
    & \hspace*{6ex}
    \left.    
      \parbox[c]{295pt}{there is an instance of ($\labscomp$), ($\lappcomp$), or ($\Vacstreg$) in $\stReglab$
                        with $\labflabs{\alabel}{\vec{\bvar}}{\biter}$ in the conclusion and 
                        the term $\labflabs{\alabel'}{\vec{\bvar}'}{\biter'}$ in the premise,
                        and with $\astep \funin \flabs{\vec{\bvar}}{\biter} \stregred \flabs{\vec{\bvar}'}{\biter'}$
                        (one of) the corresponding step(s) in $\stRegARS$}
    \right\}
    \label{eq1:prf:lem:derivations:Reg:stReg:2:scdelstrategies}
  \end{align}
  and where $\ssrclab,\, \stgtlab \funin \stepslab \to \Terlab{\inflambdaprefixcal}$
  are defined as projections on the first, and respectively, the third component of the triples that constitute steps in $\stepslab$. 
  Then the relation
  \begin{align*}
    \alabelling 
      \defdby &
    \descsetexp{\pair{\flabs{\bvar}{\biter}}{\labflabs{\alabel}{\bvar}{\biter}}}
               {\flabs{\bvar}{\biter}\in\Ter{\inflambdaprefixcal},\, \alabel\in\setexp{0,1}^*}
    \\
              & 
    {} \cup           
    \descsetexp{\pair{\astep}{\triple{\flabs{\bvar}{\biter}}{\astep}{\flabs{\bvar}{\biter}}}}
               {\triple{\flabs{\bvar}{\biter}}{\astep}{\flabs{\bvar}{\biter}}\in\stepslab}   
  \end{align*}
  is a labelling of $\stRegARS$ to $\stRegARSlab$.
  As initial labelling we choose the function $\ainitlabelling$ that is defined by
  $\ainitlabelling \funin \Ter{\inflambdaprefixcal} \to \Terlab{\inflambdaprefixcal},\,
    \flabs{\vec{\avar}}{\aiter} \mapsto \labflabs{\rootpos}{\vec{\avar}}{\aiter}$.
  and which adds the label `$\rootpos$'. 
  
Now we define the strategy
$ \astrati\Deriv = \tuple{\Terlab{\inflambdaprefixcal}, \,
                   \stepslabon{\Derivlab} \cup \stepslabnoton{\Derivlab}, \,
                   \ssrclab', \,
                   \stgtlab'} $  
with
\begin{align}
    &
    \stepslabon{\Derivlab}
      \defdby 
    \descsetexp{\triple{\labflabs{\alabel}{\vec{\bvar}}{\biter}}{\astep}{\labflabs{\alabel'}{\vec{\bvar}'}{\biter'}} \in \stepslab}
               {\text{\eqref{eq3:prf:lem:derivations:Reg:stReg:2:scdelstrategies} holds}}   
    \notag\\
    & \hspace*{6ex}
    \left.           
      \parbox[c]{295pt}{there is an instance of ($\labscomp$), ($\lappcomp$), or ($\Vacstreg$) in $\Derivlab$
                        with $\labflabs{\alabel}{\vec{\bvar}}{\biter}$ in the conclusion and 
                        the term $\labflabs{\alabel'}{\vec{\bvar}'}{\biter'}$ in the premise,
                        and with $\astep \funin \flabs{\vec{\bvar}}{\biter} \stregred \flabs{\vec{\bvar}'}{\biter'}$
                        (one of) the corresponding step(s) in $\stRegARS$}
    \right\}
    \label{eq3:prf:lem:derivations:Reg:stReg:2:scdelstrategies}
    \displaybreak[0]\\
    &
    \stepslabnoton{\Derivlab}
      \defdby 
    \descsetexp{\triple{\labflabs{\alabel}{\vec{\bvar}}{\biter}}{\astep}{\labflabs{\alabel'}{\vec{\bvar}'}{\biter'}} \in \stepslab}
               {\text{\eqref{eq4:prf:lem:derivations:Reg:stReg:2:scdelstrategies} holds}}   
    \notag\\
    & \hspace*{6ex}
    \left.           
      \parbox[c]{295pt}{$\labflabs{\alabel}{\vec{\bvar}}{\biter}$ does not occur in $\Derivlab$,
                        $\astep$ is a step according to $\astrat$}
    \right\}
    \label{eq4:prf:lem:derivations:Reg:stReg:2:scdelstrategies}
\end{align}  
where $\ssrclab'$, $\stgtlab'$ are the appropriate restrictions of $\ssrclab$ and $\stgtlab$.
  
  Note that, by its definition, $\astrati\Deriv$ is a sub\nb-\ARS\ of $\stRegARSlab$,
  Now for showing that $\astrati\Deriv$ is a (history-aware) strategy for $\stRegARSlab$,
  it has to be established that $\astrati\Deriv$ is a history-free strategy for the lifted version $\stRegARSlab$ of $\stRegARS$.
  For this it remains to show that every normal form of $\astrati\Deriv$ is also a normal form of $\stRegARSlab$.
  So, let $\labflabs{\alabel}{\vec{\bvar}}{\biter} \in \Ter{\inflambdaprefixcal}$ 
  be such that it is not a normal form of $\stRegARSlab$. 
  Then also $\flabs{\vec{\bvar}}{\biter}$ is not a normal form of $\stRegARS$.
  We will distinguish the cases that $\labflabs{\alabel}{\vec{\bvar}}{\biter}$
  occurs on $\Derivlab$ or not 
  for showing that there is a step in $\astrati\Deriv$ with this labelled term as a source.
  
  For the first case, we suppose that $\labflabs{\alabel}{\vec{\bvar}}{\biter}$ does not occur in $\Derivlab$.
  Then there is a step 
  $\astep \funin (\flabs{\vec{\bvar}}{\biter}) \stratred{\astrat} (\flabs{\vec{\bvar}'}{\biter'})$
  in the \scopedelimiting~strategy~$\astrat$ in $\stRegARS$, 
  which gives rise to the step 
  $\astep \funin (\labflabs{\alabel}{\vec{\bvar}}{\biter}) \red (\labflabs{\alabel'}{\vec{\bvar}'}{\biter'})$
  in $\stRegARSlab$ and in $\astrati\Deriv$. 
   
  For the second case, we suppose that $\labflabs{\alabel}{\vec{\bvar}}{\biter}$ occurs in $\Derivlab$,
  and we fix an occurrence $o$. 
  Since by assumption $\labflabs{\alabel}{\vec{\bvar}}{\biter}$ is not a normal form of $\stRegARSlab$,
  $o$ cannot be the occurrence of an axiom (\bvarax), and hence it is either an occurrence as the conclusion
  of an instance of one of the rules $(\labscomp)$, $(\lappcomp)$, $(\Vacstreg)$ in $\Derivlab$, or as a marked assumption in $\Derivlab$. 
  If $o$ is the conclusion of an instance $\iota$ of $(\labscomp)$, $(\lappcomp)$, or $(\Vacstreg)$,
  then $\iota$ defines a step on $\labflabs{\alabel}{\vec{\bvar}}{\biter}$ which also is a step in $\astrati\Deriv$.
  If $o$ is the conclusion of an instance of \FIX\ in $\Deriv$, then we consider an arbitrary path $\apath$ in $\Derivlab$
  from $o$ upwards towards a leaf of $\Derivlab$. Since, due to the side-condition of the rule \FIX, immediate subderivations
  of instances of \FIX\ consist of at least one rule application, $\apath$ cannot consist merely of applications of \FIX.
  Hence by following $\apath$ from $o$ upwards, after a number of successive instances of \FIX, each of which have
  $\labflabs{\alabel}{\vec{\bvar}}{\biter}$ as conclusion and premise, an instance of one of the rules $(\labscomp)$, $(\lappcomp)$, $(\Vacstreg)$ 
  follows, which witnesses a step with source $(\labscomp)$, $(\lappcomp)$, $(\Vacstreg)$ in in $\stRegARSlab$ and in $\astrati\Deriv$.
  Finally, if $o$ is an occurrence in a marked assumption at the top of the prooftree $\Derivlab$,
  then, since $\Derivlab$ is a closed derivation and due to the form of instances of the assumption-discharging rule \FIX, 
  there is also an occurrence $o'$ of $\labflabs{\alabel}{\vec{\bvar}}{\biter}$ as the conclusion of an instance of \FIX\ in $\Derivlab$.
  Now the argument above can be applied to the occurrence $o'$ to obtain a step of $\astrati\Deriv$ on $\labflabs{\alabel}{\vec{\bvar}}{\biter}$. 

By construction $\astrati\Deriv$ conforms to
(\ref{lem:derivations:Reg:stReg:2:scdelstrategies:item:i})
and
(\ref{lem:derivations:Reg:stReg:2:scdelstrategies:item:ii})
because of the inclusion of $\stepslabon{\Derivlab}$ and $\stepslabnoton{\Derivlab}$ respectively;
(\ref{lem:derivations:Reg:stReg:2:scdelstrategies:item:iii}) follows from
(\ref{lem:derivations:Reg:stReg:2:scdelstrategies:item:ii}).
\end{proof}

\begin{lemma}[from scope/scope$^+$-delim.\ strategies 
                                                   to $\Reg$/$\stReg$-derivations]%
  \label{lem:scdelstrategies:2:derivations:Reg:stReg}
  Let $\aiter\in\Ter{\inflambdacal}$, 
  and let $\astrat$ be a \scopedelimiting\ strategy for $\RegARS$ (a \extscopedelimiting\ strategy for $\stRegARS$)
  such that $\gSTstrat{\astrat}{\aiter}$ is finite. 
  Then there exists a closed derivation $\Deriv$ in $\Reg$ (in $\stRegzero$, and hence in $\stReg$) 
  with conclusion $\femptylabs{\aiter}$ 
  such that the following properties hold
  (note the minor differences with the items~(\ref{lem:derivations:Reg:stReg:2:scdelstrategies:item:i}), 
                                             (\ref{lem:derivations:Reg:stReg:2:scdelstrategies:item:ii}), 
                                         and (\ref{lem:derivations:Reg:stReg:2:scdelstrategies:item:iii})
  in Lemma~\ref{lem:derivations:Reg:stReg:2:scdelstrategies}):
  \begin{enumerate}[(i)]
    \item\label{lem:scdelstrategies:2:derivations:Reg:stReg:item:i} 
      Every (non-cyclic) path in $\Deriv$ from the conclusion upwards to a leaf of the prooftree~$\Deriv$ corresponds
      to a $\stratred{\astrat}$\nb-rewrite sequence 
                                                    starting on $\femptylabs{\aiter}$
      where passes over instances of the rules ($\lappcomp$) to the left and to the right correspond
      to $\slappdecompistratred{0}{\astrat}$- and $\slappdecompistratred{1}{\astrat}$-steps, respectively,
      and passes over instances of ($\labscomp$) and of ($\Vacreg$) (of ($\Vacstreg$))
      correspond to $\slabsdecompstratred{\astrat}$-, and $\scompressregstratred{\astrat}$\nb-steps ($\scompressstregstratred{\astrat}$\nb-steps);
      passes from the conclusion to the premise of instances of \FIX\ correspond to empty steps. 
    \item\label{lem:scdelstrategies:2:derivations:Reg:stReg:item:ii}
      Every sufficiently long $\stratred{\astrat}$\nb-rewrite sequence on $\flabs{\vec{\avar}}{\aiter}$ 
      has an initial segment that corresponds to a (non-cyclic) path in $\Deriv$ from the conclusion upwards
      to a leaf of the prooftree: 
      thereby a $\slappdecompistratred{0}{\astrat}$- or $\slappdecompistratred{1}{\astrat}$\nb-step
      corresponds to a pass over (possibly some \FIX-instances followed by) an instance of ($\lappcomp$) in direction left and right, respectively;
      a $\slabsdecompstratred{\astrat}$- or $\scompressregstratred{\astrat}$\nb-step ($\scompressstregstratred{\astrat}$\nb-step)
      corresponds to a pass over (possibly some \FIX-instances followed by) an instance of ($\labscomp$) or ($\Vacreg$) (of ($\Vacstreg$)), respectively. 
      %
    \item\label{lem:scdelstrategies:2:derivations:Reg:stReg:item:iii}
      $\gSTstrat{\astrat}{\aiter} 
         \subseteq 
       \descsetexp{ \flabs{\vec{\bvar}}{\biter} }
                  {\text{the term $\flabs{\vec{\bvar}}{\biter}$ occurs in $\Deriv$}}$.
  \end{enumerate}  
\end{lemma}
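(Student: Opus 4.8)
The plan is to construct the desired derivation $\Deriv$ by unfolding the finite reduction graph of $\astrat$ into a finite prooftree, closing loops with the rule \FIX. First I would observe that, because $\gSTstrat{\astrat}{\aiter}$ is finite by assumption, the induced sub-\ARS{} $\redsuccs{\stratmred{\astrat}}{\femptylabs{\aiter}}$ is a finite, finitely-branching rewrite graph whose nodes are exactly the generated subterms of $\aiter$. On this graph the scope-delimiting (\extscopedelimiting) property of $\astrat$ guarantees that at every non-normal-form node it prescribes \emph{either} a single $\slabsdecompred$-step, \emph{or} a single $\scompressregred$-step (a $\scompressstregred$-step), \emph{or} precisely a $\slappdecompired{0}$- and a $\slappdecompired{1}$-step; by Proposition~\ref{prop:rewprops:RegCRS:stRegCRS}, (\ref{prop:rewprops:RegCRS:stRegCRS:item:iv-1}), the normal forms $\flabs{\avar}{\avar}$ (respectively $\flabs{\avari{1}\ldots\avari{n}}{\avari{n}}$) will supply the axiom leaves.

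The core of the construction is a depth-first traversal of this graph starting from $\femptylabs{\aiter}$, reading each visited term as the conclusion of the rule instance determined by the step(s) that $\astrat$ prescribes at it, dual to Proposition~\ref{prop:derivationpaths:2:rewritesequences:Reg:stReg}: a $\slabsdecompred$-source becomes the conclusion of $(\labscomp)$, a $\scompressregred$-source (a $\scompressstregred$-source) the conclusion of $(\Vacreg)$ (of $(\Vacstreg)$), a branching application-node the conclusion of $(\lappcomp)$ with its left and right premises supplied by the $\slappdecompired{0}$- and $\slappdecompired{1}$-successors, and each normal form a leaf carrying the axiom $(\bvarax)$. Whenever the traversal reaches along the current branch a term that already occurs as a proper ancestor on that branch, I would close the loop: the ancestor occurrence is turned into the conclusion of an instance of \FIX{} that discharges a marked copy of the term, placed at the current position as a leaf. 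Since every step of $\RegARS$ and $\stRegARS$ strictly changes the term, the intervening segment is nonempty, so the side-condition $\depth{\Derivi{0}}\ge 1$ of \FIX{} is automatically met.

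To guarantee that $\Deriv$ is a \emph{finite} prooftree, I would argue by K\H{o}nig's Lemma: the tree is finitely branching, so if it were infinite it would contain an infinite branch, which is an infinite $\stratred{\astrat}$-rewrite sequence; but the graph has only finitely many nodes, so such a branch must repeat a term, and would already have been cut. For the systems $\Reg$ and $\stReg$ this closing-at-any-repeat is admissible directly, yielding the claimed $\Reg$-derivation. For $\stRegzero$, however, the \FIX-instances must additionally satisfy the grounded-cycle side-condition, namely that all prefixes on the discharged loop are at least as long as the prefix at the loop point. There I would instead close a branch only once it exhibits a \emph{grounded} cycle in the sense of Definition~\ref{def:grounded:cycle}; Proposition~\ref{prop:grounded:cycle} supplies exactly what is needed, that every infinite $\astratplus$-rewrite sequence contains a grounded cycle, so no infinite branch can avoid presenting such a loop, and K\H{o}nig's Lemma again forces finiteness. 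I expect this adaptation to the grounded side-condition of $\stRegzero$ to be the main obstacle, since it is the only point where the naive loop-closing fails and where Proposition~\ref{prop:grounded:cycle} is indispensable.

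Finally I would read off the three stated properties directly from the construction. Property~(\ref{lem:scdelstrategies:2:derivations:Reg:stReg:item:i}) is immediate from the rule/step correspondence imposed at each node, with passes from the conclusion to the premise of \FIX{} acting as empty steps. Property~(\ref{lem:scdelstrategies:2:derivations:Reg:stReg:item:ii}) follows because a sufficiently long $\stratred{\astrat}$-rewrite sequence must descend to a leaf of the finite tree, and its steps match the rules passed on the way, modulo intervening \FIX-instances. Property~(\ref{lem:scdelstrategies:2:derivations:Reg:stReg:item:iii}) holds since the depth-first traversal expands every node reachable from $\femptylabs{\aiter}$ at least once: any term reached only at a cut-off marked assumption equals an already-expanded ancestor, from which all its successors are reachable, so every element of $\gSTstrat{\astrat}{\aiter}$ occurs in $\Deriv$.
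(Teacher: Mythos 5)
Your construction is correct and essentially the paper's own proof: both grow the prooftree by following $\astrat$ along each branch, cut a branch precisely when a grounded ($\stRegzero$\nb-admissible) repetition appears beneath it, obtain termination from Proposition~\ref{prop:grounded:cycle} combined with K\H{o}nig's Lemma applied to the finitely branching tree, and finally close the open marked assumptions by inserting instances of \FIX\ at the repeated occurrences. The one blemish is your justification of the side-condition $\depth{\Derivi{0}}\ge 1$: it is not true that every step strictly changes the term (for an infinite term with $\aiter = \lapp{\aiter}{\biter}$, a $\lappdecompired{0}$\nb-step is a self-loop), but the condition holds anyway because the repeated term occurs at a \emph{proper} ancestor, so at least one rule instance always separates the \FIX\nb-conclusion from the assumption it discharges.
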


\begin{proof}{\label{prf:lem:scdelstrategies:2:derivations:Reg:stReg}}
  We will argue only for the part of the statement of the lemma concerning a \extscopedelimiting\ strategy for $\stRegARS$,
  since the case with a \scopedelimiting\ strategy for $\RegARS$ can be established analogously.
  
  Let $\aiter$ be an infinite \lambdaterm, and let $\astrat$ be a \extscopedelimiting\ strategy for $\RegARS$
  such that $\gSTstrat{\astrat}{\aiter}$ is finite.
  Now let $\Derivi{0}$ be the (trivial) derivation with conclusion $\femptylabs{\aiter}$,
  which, in case that this is not an axiom of $\stRegzero$ (and $\stReg$), is also an assumption, and then
  is of the form $(\femptylabs{\aiter})^u$, carrying an assumption marker $u$. 
  If $\Derivi{0}$ is an axiom, then it is easy to verify that the
  statements~(\ref{lem:scdelstrategies:2:derivations:Reg:stReg:item:i}), 
             (\ref{lem:scdelstrategies:2:derivations:Reg:stReg:item:ii}), 
             and (\ref{lem:scdelstrategies:2:derivations:Reg:stReg:item:iii}) hold.
  
  Otherwise we construct a sequence $\Derivi{1}$, $\Derivi{2}$, \ldots\
  of derivations  
  where each $\Derivi{n}$ satisfies the properties~(\ref{lem:scdelstrategies:2:derivations:Reg:stReg:item:i}), 
                                               (\ref{lem:scdelstrategies:2:derivations:Reg:stReg:item:ii}),
                                               and (\ref{lem:scdelstrategies:2:derivations:Reg:stReg:item:iii}),
  terms in marked assumptions are not also terms in axioms $\bvarax$,                                            
  and where $\Derivi{n+1}$ extends $\Derivi{n}$ by one additional rule instance
  above a marked assumption in $\Derivi{n}$:
  For the extension step on a derivation $\Derivi{n}$, a marked assumption 
  $(\flabs{\vec{\bvar}}{\biter})^{\amarker}$ in $\Derivi{n}$ is picked
  with the property that the term $\flabs{\vec{\bvar}}{\biter}$ does not appear in the thread
  down to the conclusion of $\Derivi{n}$.
  
  Suppose that the $\stratred{\astrat}$\nb-rewrite sequence
  from the conclusion of $\Derivi{n}$ up to the marked assumption is of the form:
  \begin{equation*}
    \arewseq \funin
    \femptylabs{\aiter} = \flabs{\vec{\avar}_0}{\aiteri{0}} 
      \stratred{\astrat}
    \flabs{\vec{\avar}_1}{\aiteri{1}}
      \stratred{\astrat}
    \ldots
      \stratred{\astrat}
    \flabs{\vec{\avar}_m}{\aiteri{m}} = \flabs{\vec{\bvar}}{\biter} 
  \end{equation*}
  Note that, since by assumption $\flabs{\vec{\bvar}}{\biter}$ is not a term in an axiom $\bvarax$ of $\stRegzero$, 
  it follows by Proposition~\ref{prop:rewprops:RegCRS:stRegCRS}, (\ref{prop:rewprops:RegCRS:stRegCRS:item:v}), 
  that it is not a $\sstregred$\nb-normal form. 
  Then depending on whether the possible next step(s) in an $\stratred{\astrat}$\nb-rewrite
  that extends $\arewseq$ 
  by one step
  is a $\slabsdecompstratred{\astrat}$-, $\scompressregstratred{\astrat}$\nb-step,
  or either a $\slappdecompistratred{0}{\astrat}$- or a $\slappdecompistratred{1}{\astrat}$\nb-steps,
  the derivation $\Derivi{n}$ is extended above the marked assumption $(\flabs{\vec{\bvar}}{\biter})^{\amarker}$
  by an application of $\labscomp$, $\Vacstreg$, or $\lappcomp$, respectively.   
  For example in the case that $\arewseq$ extends by one additional step to either of the two rewrite sequences:
  \begin{equation*}
    \arewseq_i \funin
    \femptylabs{\aiter} = \flabs{\vec{\avar}_0}{\aiteri{0}} 
      \stratred{\astrat}
    \ldots
      \stratred{\astrat}
    \flabs{\vec{\avar}_m}{\aiteri{m}} = \flabs{\vec{\avar}_m}{\lapp{\aiteri{m,0}}{\aiteri{m,1}}}
      \lappdecompistratred{i}{\astrat}
    \flabs{\vec{\avar}_m}{\aiteri{m,i}}
  \end{equation*}
  with $i\in\setexp{0,1}$, 
  the derivation $\Derivi{n}$ of the form: 
  \begin{gather*}
    \hspace*{-2.5ex}
    \begin{gathered}[c]
      \mbox{
        \AxiomC{$ \langle (\flabs{\vec{\avar}_m}{\lapp{\aiteri{m,0}}{\aiteri{m,1}}})^{\amarker} \rangle $}
        \noLine
        \UnaryInfC{$\Derivi{n}$}
        \noLine
        \UnaryInfC{$ \femptylabs{\aiter} $}
        \DisplayProof
            }
    \end{gathered}
    \text{is extended to $\Derivi{n+1}$:}\hspace*{-1.5ex}
    \begin{gathered}[c]
      \mbox{
        \AxiomC{$ (\flabs{\vec{\avar}_{m}}{\aiteri{m,0}})^{\amarkeri{0}} $}
        \AxiomC{$ (\flabs{\vec{\avar}_{m}}{\aiteri{m,1}})^{\amarkeri{1}} $}
        \insertBetweenHyps{\hspace*{1em}}
        \RightLabel{$\lappcomp$}
        \BinaryInfC{$ \langle \flabs{\vec{\avar}_m}{\lapp{\aiteri{m,0}}{\aiteri{m,1}}} \rangle $}
        \noLine
        \UnaryInfC{$\Derivi{n}$}
        \noLine
        \UnaryInfC{$ \femptylabs{\aiter} $}
        \DisplayProof
            }
    \end{gathered}
  \end{gather*}
  for two fresh assumption markers $\amarkeri{0}$ and $\amarkeri{1}$
  (the angle brackets $\langle\ldots\rangle$ are used here to indicate just a single formula occurrence at the top of the prooftree $\Derivi{n}$). 
  If either of $\flabs{\vec{\avar}_{m}}{\aiteri{m,0}}$ or $\flabs{\vec{\avar}_{m}}{\aiteri{m,1}}$ is an axiom,
  then the assumption marker is removed and the formula is marked as an axiom $\bvarax$, accordingly. 
  Note that, if 
  the statements~(\ref{lem:scdelstrategies:2:derivations:Reg:stReg:item:i}),
                 (\ref{lem:scdelstrategies:2:derivations:Reg:stReg:item:ii}), and
                 (\ref{lem:scdelstrategies:2:derivations:Reg:stReg:item:iii}) 
  are satisfied for $\Deriv = \Derivi{n}$,
  then this is also the case for $\Deriv = \Derivi{n+1}$.
  Furthermore, terms in marked assumptions are not terms in axioms of $\stRegzero$. 
  
  The extension process continues as long as $\Derivi{n}$ contains a marked assumption 
  $(\flabs{\vec{\bvar}}{\biter})^{\bmarker}$ without a `$\stRegzero$-admissible repetition' beneath it,
  by which we mean the occurrence $o$ of the term $\flabs{\vec{\bvar}}{\biter}$ on the
  thread down to the conclusion in $\Deriv$, but strictly beneath the marked assumption,
  such that furthermore all terms on the part of the thread down to $o$ have
  an abstraction prefix of length greater or equal to $\length{\vec{\bvar}}$. 
  (Note the connection to the side-condition on instances of the rule \FIX\ in $\stRegzero$,
   and, in particular, that marked assumptions with an $\stRegzero$-admissible repetition beneath it
   could be discharged by an appropriately introduced instance of \FIX\ in $\stRegzero$.)
%

  That the extension process terminates can be seen as follows:
  Suppose that, to the contrary, it continues indefinitely. 
  Then, since the derivation size increases strictly in every step,
  an infinite prooftree $\Deriv^{\infty}$ is obtained in the limit,
  which due to finite branchingness of the prooftree and K\H{o}nig's Lemma 
  possesses an infinite path $\apath$ starting at the conclusion.
  Now note that due to (\ref{lem:scdelstrategies:2:derivations:Reg:stReg:item:i}),
  $\apath$ corresponds to an infinite $\stratred{\astrat}$\nb-rewrite sequence.
  Due to Proposition~\ref{prop:grounded:cycle}, this infinite rewrite sequence
  must contain a grounded cycle. 
  However, the existence of such grounded cycle contradicts the
  termination condition of the extension process, because every grounded cycle
  provides an $\stRegzero$\nb-admissible repetition. 
                      
  Let $\Derivi{N}$, for some $N\in\nats$, be the derivation that is reached when no further extension step, as described, is possible. 
  By the construction
  the statements~(\ref{lem:scdelstrategies:2:derivations:Reg:stReg:item:i}),
                 (\ref{lem:scdelstrategies:2:derivations:Reg:stReg:item:ii}), and
                 (\ref{lem:scdelstrategies:2:derivations:Reg:stReg:item:iii}) 
  are satisfied for $\Deriv = \Derivi{N}$.
  Furthermore, $\Derivi{N}$ is a derivation in $\stReg$ and $\stRegzero$ 
  in which every leaf at the top is either an axiom $\bvarax$ 
  or an assumption $(\flabs{\vec{\bvar}}{\biter})^{\amarker}$ marked with a unique marker $\amarker$,
  and for every such marked assumption in $\Derivi{N}$,
  there is a $\stRegzero$\nb-admissible repetition strictly beneath it.
  This fact enables us to modify $\Derivi{N}$
  into a closed derivation in $\stRegzero$ by closing all open assumptions by newly introduced applications
  of the rule \FIX. More precisely, steps of the following kind are carried out repeatedly.
  A derivation with occurrences of a number of marked assumptions $(\flabs{\vec{\bvar}}{\biter})^{\amarkeri{i}}$
  highlighted together with a single occurrence of the term $\flabs{\vec{\bvar}}{\biter}$ in its interior 
  that indicates the $\stRegzero$\nb-admissible repetition for the displayed marked assumptions:
  \begin{gather*}
    \hspace*{-2.5ex}
    \begin{gathered}[c]
      \mbox{
        \AxiomC{$\langle (\flabs{\vec{\bvar}}{\biter})^{\amarkeri{1}} \rangle$}
        \insertBetweenHyps{\ldots}
        \AxiomC{$\langle (\flabs{\vec{\bvar}}{\biter})^{\amarkeri{k}} \rangle$}
        \noLine
        \BinaryInfC{$\Derivi{000}$}
        \noLine
        \UnaryInfC{$\langle \flabs{\vec{\bvar}}{\biter} \rangle$}
        \noLine
        \UnaryInfC{$\Derivi{00}$}
        \noLine
        \UnaryInfC{$\femptylabs{\aiter}$}
        \DisplayProof
            }
    \end{gathered}
    \text{is modified into:}\hspace*{-1.5ex}
    \begin{gathered}[c]
      \mbox{
        \AxiomC{$\langle (\flabs{\vec{\bvar}}{\biter})^{\cmarker} \rangle$}
        \insertBetweenHyps{\ldots}
        \AxiomC{$\langle (\flabs{\vec{\bvar}}{\biter})^{\cmarker} \rangle$}
        \noLine
        \BinaryInfC{$\Derivi{000}$}
        \noLine
        \UnaryInfC{$ \flabs{\vec{\bvar}}{\biter} $}
        \RightLabel{\sFIX, $\cmarker$}
        \UnaryInfC{$\langle \flabs{\vec{\bvar}}{\biter} \rangle$}
        \noLine
        \UnaryInfC{$\Derivi{00}$}
        \noLine
        \UnaryInfC{$\femptylabs{\aiter}$}
        \DisplayProof
            }
    \end{gathered}
  \end{gather*}
  where $\cmarker$ is a fresh assumption marker.
  In every such transformation step the number of open assumptions is strictly decreased,
  but
  the properties~(\ref{lem:scdelstrategies:2:derivations:Reg:stReg:item:i}),
                 (\ref{lem:scdelstrategies:2:derivations:Reg:stReg:item:ii}), and
                 (\ref{lem:scdelstrategies:2:derivations:Reg:stReg:item:iii}) 
  (for $\Deriv$ the resulting derivation of such a step) is preserved.               
  Hence after finitely many such transformation steps a derivation $\Deriv$ in $\stRegzero$
  without open assumptions and with 
  the properties~(\ref{lem:scdelstrategies:2:derivations:Reg:stReg:item:i}),
                 (\ref{lem:scdelstrategies:2:derivations:Reg:stReg:item:ii}), and
                 (\ref{lem:scdelstrategies:2:derivations:Reg:stReg:item:iii}) 
  is reached, and obtained as the result of this construction.
\end{proof}

As a consequence of the two lemmas above, derivability in $\stReg$ and in $\stRegzero$ coincides.

\begin{proposition}\label{prop:stReg:stRegzero}
  For all infinite \lambdaterms~$\aiter$: \mbox{} 
  $\derivablein{\stReg}{\femptylabs{\aiter}}$
    if and only if 
  $\derivablein{\stRegzero}{\femptylabs{\aiter}}$.
\end{proposition}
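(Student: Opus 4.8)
The plan is to read off this equivalence as a corollary of the two preceding lemmas, which together establish that closed $\stReg$-derivations, extended scope-delimiting strategies with finitely many generated subterms, and closed $\stRegzero$-derivations can all be converted into one another. One direction of the biconditional is immediate: since $\stRegzero$ has \emph{exactly} the axioms and rules of $\stReg$ and merely imposes an \emph{additional} side-condition on instances of \FIX, every closed derivation in $\stRegzero$ is, by definition, already a closed derivation in $\stReg$. Hence $\derivablein{\stRegzero}{\femptylabs{\aiter}}$ implies $\derivablein{\stReg}{\femptylabs{\aiter}}$ with no further argument; this settles the ``if'' part.

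For the nontrivial ``only if'' part, suppose $\derivablein{\stReg}{\femptylabs{\aiter}}$, witnessed by a closed $\stReg$-derivation $\Deriv$ with conclusion $\femptylabs{\aiter}$. First I would apply Lemma~\ref{lem:derivations:Reg:stReg:2:scdelstrategies} to extract from $\Deriv$ an (in general history-aware) \extscopedelimiting\ strategy $\astrati{\Deriv}$ for $\stRegARS$. By item~(\ref{lem:derivations:Reg:stReg:2:scdelstrategies:item:iii}) of that lemma, the set $\gSTstrat{\astrati{\Deriv}}{\aiter}$ of generated subterms of $\aiter$ with respect to $\astrati{\Deriv}$ equals the set of prefixed terms occurring in $\Deriv$. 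As $\Deriv$ is a finite prooftree, this set is finite, so by Definition~\ref{def:reg:streg} the term $\aiter$ is strongly regular, with $\astrati{\Deriv}$ as a witnessing \extscopedelimiting\ strategy.

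It then remains to feed this strategy back into the second lemma. Applying Lemma~\ref{lem:scdelstrategies:2:derivations:Reg:stReg} to the \extscopedelimiting\ strategy $\astrati{\Deriv}$, whose generated-subterm set we have just seen to be finite, produces a closed derivation with conclusion $\femptylabs{\aiter}$ \emph{in $\stRegzero$}; recall that this lemma is deliberately phrased to yield a derivation in the more restrictive system $\stRegzero$ (and only afterwards observes that it is a fortiori a $\stReg$-derivation). Hence $\derivablein{\stRegzero}{\femptylabs{\aiter}}$, completing the equivalence. Schematically, the argument is the round trip
\[
\derivablein{\stReg}{\femptylabs{\aiter}}
\;\Longrightarrow\;
\gSTstrat{\astrati{\Deriv}}{\aiter}\text{ finite}
\;\Longrightarrow\;
\derivablein{\stRegzero}{\femptylabs{\aiter}}
\;\Longrightarrow\;
\derivablein{\stReg}{\femptylabs{\aiter}}.
\]

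I do not expect any genuine obstacle here, since all the real work has already been discharged in the two lemmas. The one point deserving explicit comment is \emph{why} the round trip buys anything: the starting $\stReg$-derivation may contain \FIX-instances whose cycles violate the prefix-length condition of $\stRegzero$, but by passing through the finite generated-subterm set we re-obtain a derivation whose cycles are grounded, exactly as the construction in Lemma~\ref{lem:scdelstrategies:2:derivations:Reg:stReg} arranges (via its appeal to grounded cycles, Proposition~\ref{prop:grounded:cycle}, and $\stRegzero$-admissible repetitions). The only care needed is to check that the strategy extracted by Lemma~\ref{lem:derivations:Reg:stReg:2:scdelstrategies} is of the precise kind consumed as input by Lemma~\ref{lem:scdelstrategies:2:derivations:Reg:stReg} — namely an \extscopedelimiting\ strategy for $\stRegARS$ with finite generated-subterm set — which it is.
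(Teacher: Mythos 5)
Your proposal is correct and follows essentially the same route as the paper's own proof: the ``$\Leftarrow$'' direction by the observation that every $\stRegzero$-derivation is an $\stReg$-derivation, and the ``$\Rightarrow$'' direction by extracting via Lemma~\ref{lem:derivations:Reg:stReg:2:scdelstrategies} a \extscopedelimiting\ strategy whose generated-subterm set is finite and then feeding it into Lemma~\ref{lem:scdelstrategies:2:derivations:Reg:stReg} to obtain a closed $\stRegzero$-derivation. Your extra remarks (finiteness of the prooftree justifying finiteness of $\gSTstrat{\astrati{\Deriv}}{\aiter}$, and the role of grounded cycles and $\stRegzero$-admissible repetitions) merely make explicit details that the paper's terser proof leaves implicit.
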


\begin{proof}
  The direction ``$\Leftarrow$'' follows by the fact that every derivation in $\stRegzero$
  is also a derivation in $\stReg$.
  For the direction ``$\Rightarrow$'', let $\aiter$ be an infinite term such that $\derivablein{\stReg}{\femptylabs{\aiter}}$.   
  By Lemma~\ref{lem:derivations:Reg:stReg:2:scdelstrategies} there exists a \extscopedelimiting~strategy~$\astratplus$
  such that $\gSTstrat{\astratplus}{\aiter}$ is finite.  
  But then it follows by Lemma~\ref{lem:scdelstrategies:2:derivations:Reg:stReg}
  that there is also a closed derivation in $\stRegzero$ with conclusion $\femptylabs{\aiter}$,
  and hence that $\derivablein{\stRegzero}{\femptylabs{\aiter}}$.
\end{proof}

Now we have assembled all auxiliary statements that we use for proving a theorem that 
tightly links derivability in
the proof system $\Reg$ with regularity, and derivability in $\stReg$ and in $\stRegzero$
with strong regularity, of infinite \lambdaterms. 

\begin{theorem}\label{lem:Reg:stReg}
  The following statements hold for the proof systems 
  $\Reg$, $\stReg$, $\stRegzero$:
  \begin{enumerate}[(i)]
    \item\label{lem:Reg:stReg:item:i} 
      $\Reg$ is sound and complete for regular infinite \lambdaterms. That is, for all $\aiter\in\Ter{\inflambdacal}$:
      \begin{equation*}
        \derivablein{\Reg}{\femptylabs{\aiter}}
          \hspace*{5ex}\text{if and only if}\hspace*{5ex}
        \text{$\aiter$ is regular.}
      \end{equation*}
    \item\label{lem:Reg:stReg:item:ii} 
       $\stReg$ and $\stRegzero$ are sound and complete for strongly regular infinite \lambdaterms. That is,
       for all $\aiter\in\Ter{\inflambdacal}$ the following assertions are equivalent:
       \begin{enumerate}[(a)]
         \item 
           $\aiter$ is strongly regular.
         \item 
           $\derivablein{\stReg}{\femptylabs{\aiter}}$.
         \item 
          $\derivablein{\stRegzero}{\femptylabs{\aiter}}$.    
       \end{enumerate}
  \end{enumerate}
\end{theorem}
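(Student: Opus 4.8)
The plan is to obtain the theorem by directly assembling the two transfer lemmas proved above, namely Lemma~\ref{lem:derivations:Reg:stReg:2:scdelstrategies} (from derivations to strategies) and Lemma~\ref{lem:scdelstrategies:2:derivations:Reg:stReg} (from strategies to derivations), together with the characterisation of (strong) regularity via finiteness of the generated-subterm set recorded in the remark following Definition~\ref{def:reg:streg}. Since both systems are handled by the very same pair of lemmas, I would argue items~(\ref{lem:Reg:stReg:item:i}) and~(\ref{lem:Reg:stReg:item:ii}) in parallel, treating $\Reg$/$\RegARS$/\scopedelimiting\ strategies and $\stReg$/$\stRegARS$/\extscopedelimiting\ strategies side by side, so that essentially one argument covers the regular and the strongly regular case at once.

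For the soundness directions (from $\derivablein{\Reg}{\femptylabs{\aiter}}$ to regularity of $\aiter$, and from $\derivablein{\stReg}{\femptylabs{\aiter}}$ to strong regularity of $\aiter$) I would first take a closed derivation $\Deriv$ with conclusion $\femptylabs{\aiter}$ and feed it to Lemma~\ref{lem:derivations:Reg:stReg:2:scdelstrategies}, obtaining a (generally history-aware) \scopedelimiting\ (\extscopedelimiting) strategy $\astrati{\Deriv}$ whose item~(\ref{lem:derivations:Reg:stReg:2:scdelstrategies:item:iii}) states $\gSTstrat{\astrati{\Deriv}}{\aiter} = \descsetexp{\flabs{\vec{\bvar}}{\biter}}{\text{$\flabs{\vec{\bvar}}{\biter}$ occurs in $\Deriv$}}$. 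The crucial observation is then that a \emph{closed} derivation is a finite prooftree, and hence contains only finitely many distinct term occurrences; consequently $\gSTstrat{\astrati{\Deriv}}{\aiter}$ is finite, so $\aiter$ is $\astrati{\Deriv}$\nb-regular, that is, regular (strongly regular) by Definition~\ref{def:reg:streg}.

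For the completeness directions I would go the other way. Regularity (strong regularity) of $\aiter$ supplies, by the remark after Definition~\ref{def:reg:streg}, a \scopedelimiting\ (\extscopedelimiting) strategy $\astrat$ with $\gSTstrat{\astrat}{\aiter}$ finite; feeding this into Lemma~\ref{lem:scdelstrategies:2:derivations:Reg:stReg} yields a closed derivation with conclusion $\femptylabs{\aiter}$ in $\Reg$ (in $\stRegzero$, and hence in $\stReg$), which is exactly the required derivability. This already settles item~(\ref{lem:Reg:stReg:item:i}). For item~(\ref{lem:Reg:stReg:item:ii}) the three assertions can be closed into a cycle without any extra ingredient: (a)$\Rightarrow$(c) is the completeness lemma (which delivers a $\stRegzero$\nb-derivation outright), (c)$\Rightarrow$(b) is the trivial inclusion of $\stRegzero$\nb-derivations among $\stReg$\nb-derivations, and (b)$\Rightarrow$(a) is the soundness direction via Lemma~\ref{lem:derivations:Reg:stReg:2:scdelstrategies}; alternatively one may invoke Proposition~\ref{prop:stReg:stRegzero} for (b)$\Leftrightarrow$(c) directly.

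I expect no genuine obstacle here: all the real work — in particular the termination of the extension process via K\H{o}nig's Lemma and the grounded-cycle argument inside Lemma~\ref{lem:scdelstrategies:2:derivations:Reg:stReg} — has already been discharged. The single point demanding care is making explicit that closedness of a derivation forces the prooftree, and thus its set of occurring terms, to be finite; this is the one place where the definitional phrasing of \emph{closed derivation} (Definition~\ref{def:Reg:stReg:stRegzero}) must be invoked rather than the transfer lemmas, and it is what powers every soundness direction.
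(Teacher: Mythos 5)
Your proposal is correct and follows essentially the same route as the paper: both directions are obtained from exactly the same pair of transfer lemmas (Lemma~\ref{lem:scdelstrategies:2:derivations:Reg:stReg} for completeness, and Lemma~\ref{lem:derivations:Reg:stReg:2:scdelstrategies}, in particular its item~(\ref{lem:derivations:Reg:stReg:2:scdelstrategies:item:iii}), for soundness), with item~(\ref{lem:Reg:stReg:item:ii}) handled analogously while invoking Proposition~\ref{prop:stReg:stRegzero}, just as the paper does. One hair to split: the finiteness of the prooftree underlying a derivation witnessing $\sderivable$-provability is the definitional convention separating $\sderivable$ from $\sinfderivable$ (cf.\ Definition~\ref{def:AlphaInfPreterm}), not a consequence of closedness as such, but your use of that finiteness is precisely what the paper's soundness direction also relies on.
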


\begin{proof}
  Since the proof of statement~(\ref{lem:Reg:stReg:item:ii}) of the theorem can be carried out analogously
  (taking into account Proposition~\ref{prop:stReg:stRegzero}), 
  we 
  argue here only for statement~(\ref{lem:Reg:stReg:item:i}).
  
  For ``$\Rightarrow$'' in (\ref{lem:Reg:stReg:item:i}),
  let $\aiter$ be an infinite \lambdaterm\ that is regular.
  Then there exists a \scopedelimiting\ strategy~$\astrat$ on $\RegARS$ such that
  $\gSTstrat{\astrat}{\aiter}$ is finite. 
  By Lemma~\ref{lem:scdelstrategies:2:derivations:Reg:stReg} it follows
  that there exists a closed derivation~$\Deriv$ in $\Reg$ with conclusion~$\femptylabs{\aiter}$.
  This derivation witnesses $\derivablein{\Reg}{\femptylabs{\aiter}}$.  
  For ``$\Leftarrow$'' in (\ref{lem:Reg:stReg:item:i}), suppose that $\derivablein{\Reg}{\femptylabs{\aiter}}$.
  Then there exists a closed derivation $\Deriv$ in $\Reg$ with conclusion~$\femptylabs{\aiter}$.
  Now Lemma~\ref{lem:derivations:Reg:stReg:2:scdelstrategies} entails the existence of
  a \scopedelimiting\ strategy~$\astrat$ in~$\stRegARS$ such that, 
  in particular, $\gSTstrat{\astrat}{\aiter}$ is finite.
  This fact implies that $\aiter$ is regular. 
\end{proof}

\begin{figure}[t!]
\begin{center}\framebox{\begin{minipage}{350pt}\begin{center}
  \mbox{}
\\[0.5ex]
  \mbox{ 
    \AxiomC{}
    \RightLabel{\bvarax}
    \UnaryInfC{$ \flabsCRS{n+1}{\vecOneToN{\avar}\bvar}{\bvar}  = \flabsCRS{n+1}{\vecOneToN{\cvar}\dvar}{\dvar}  $}
    \DisplayProof
        } 
\\[2ex]
  \mbox{
    \AxiomC{$ \flabsCRS{n}{\vecOneToN{\avar}}{\apreter}  =  \flabsCRS{n}{\vecOneToN{\cvar}}{\bpreter} $}
    \RightLabel{$\Vacstreg\;\;$ \parbox[c]{125pt}{(if $\bvar$ does not occur free in $\apreter$,\\[-0.5ex]
                                                  and $\fvar$ does not occur free in $\bpreter$)}}
    \UnaryInfC{$ \flabsCRS{n+1}{\vecOneToN{\avar}\bvar}{\apreter} = \flabsCRS{n+1}{\vecOneToN{\cvar}\fvar}{\bpreter} $}
    \DisplayProof
        }     
\\[2ex]
  \mbox{
    \AxiomC{$ \flabsCRS{n+1}{\vecOneToN{\avar}\bvar}{\apreter} 
                   = 
              \flabsCRS{n+1}{\vecOneToN{\cvar}\dvar}{\bpreter} $}
    \RightLabel{$\labscomp$}
    \UnaryInfC{$ \flabsCRS{n}{\vecOneToN{\avar}}{\labsCRS{\bvar}{\apreter}} 
                   = 
                 \flabsCRS{n}{\vecOneToN{\cvar}}{\labsCRS{\dvar}{\bpreter}} $}
    \DisplayProof
        }
\\[2ex]
  \mbox{
    \AxiomC{$ \flabsCRS{n}{\vecOneToN{\avar}}{\apreteri0}  =  \flabsCRS{n}{\vecOneToN{\bvar}}{\bpreteri0} $\hspace{-2.5ex}}
    \AxiomC{$ \flabsCRS{n}{\vecOneToN{\avar}}{\apreteri1}  =  \flabsCRS{n}{\vecOneToN{\bvar}}{\bpreteri1} $}
    \RightLabel{$\lappcomp$}
    \BinaryInfC{$ \flabsCRS{n}{\vecOneToN{\avar}}{\lapp{\apreteri0}{\apreteri1}} 
                   = 
                  \flabsCRS{n}{\vecOneToN{\bvar}}{\lapp{\bpreteri0}{\bpreteri1}} $}
    \DisplayProof 
    }     
\\[0.5ex]
  \mbox{}  
\end{center}\end{minipage}}\end{center} 
\caption{\label{fig:proof-equiv-preter}\label{fig:AlphaPreTer}
  Proof system $\AlphaPreTer$ for equality of preterms in 
                                                          $\inflambdaprefixcal$ modulo $\alphaequiv$.
}
\end{figure}   

\begin{figure}[t!]
\begin{center}\framebox{\begin{minipage}{350pt}\begin{center}
  \mbox{}
  \\[1.5ex]
  \mbox{ 
    \AxiomC{}
    \RightLabel{\bvarax}
    \UnaryInfC{$ \flabs{\vec{\avar}\bvar}{\bvar}  = \flabs{\vec{\cvar}\dvar}{\dvar}  $}
    \DisplayProof
        } 
  \hspace*{1ex}     
  \mbox{
    \AxiomC{$ \flabs{\vec{\avar}}{\ater}  =  \flabs{\vec{\cvar}}{\bter} $}
    \RightLabel{$\Vacstreg\;\;$ \parbox[c]{110pt}{(if $\bvar$ does not occur in $\ater$,
                                                  and $\fvar$ does not occur in $\bter$)}}
    \UnaryInfC{$ \flabs{\vec{\avar}\bvar}{\ater} = \flabs{\vec{\cvar}\fvar}{\bter} $}
    \DisplayProof
        }     
  \\[2.5ex]    
  \mbox{
    \AxiomC{$ \flabs{\vec{\avar}\bvar}{\ater} 
                   = 
              \flabs{\vec{\cvar}\dvar}{\bter} $}
    \RightLabel{$\labscomp$}
    \UnaryInfC{$ \flabs{\vec{\avar}}{\labs{\bvar}{\ater}} 
                   = 
                 \flabs{\vec{\cvar}}{\labs{\dvar}{\bter}} $}
    \DisplayProof
        }
  \hspace*{3.5ex} 
  \mbox{
    \AxiomC{$ \flabs{\vec{\avar}}{\ateri0}  =  \flabs{\vec{\bvar}}{\bteri0} $}
    \AxiomC{$ \flabs{\vec{\avar}}{\ateri1}  =  \flabs{\vec{\bvar}}{\bteri1} $}
    \RightLabel{$\lappcomp$}
    \BinaryInfC{$ \flabs{\vec{\avar}}{\lapp{\ateri0}{\ateri1}} 
                   = 
                  \flabs{\vec{\bvar}}{\lapp{\bteri0}{\bteri1}} $}
    \DisplayProof 
    }     
  \\[1ex]
  \mbox{}  
\end{center}\end{minipage}}\end{center}
\caption{\label{fig:proof-equiv-ter}\label{fig:EqTer}
  Proof system $\EqTer$ for equality of terms in 
                                                 $\inflambdaprefixcal$ in informal notation.}
\end{figure}

For defining, shortly, of a proof system for equality of strongly regular infinite \lambdaterms,
we first give a specialised version of the proof system $\Alpha$ for $\alpha$\nb-equivalence of \iCRS-preterms
for preterms, and a corresponding system on terms, in \inflambdaprefixcal.  

\begin{definition}[proof systems $\AlphaPreTer$, $\EqTer$]\label{def:equiv-preter}\label{def:AlphaPreTer:EqTer}
  The proof system $\AlphaPreTer$ for $\alpha$-equi\-va\-lence of infinite preterms in $\inflambdaprefixcal$
  consists of the rules displayed in Figure~\ref{fig:AlphaPreTer}. 
  The proof system~$\EqTer$ for equality of infinite terms in $\inflambdaprefixcal$
  consists of the rules displayed in Figure~\ref{fig:EqTer}. 
  Provability in $\AlphaPreTer$ and in $\EqTer$ is defined, analogously to the proof system $\Alpha$ from Definition~\ref{def:Alpha}, 
  as the existence of a completed (possibly infinite) derivation, and will, 
  as was first done so for $\Alpha$ in Definition~\ref{def:AlphaInfPreterm},
  be indicated using the symbol $\sinfderivable$.
\end{definition}

\begin{proposition}\label{prop:AlphaPreTer:EqTer}
  The following 
                statements hold for the proof systems $\AlphaPreTer$ and $\EqTer$.
  \begin{enumerate}[(i)]
    \item{}\label{prop:AlphaPreTer:EqTer:item:i}
      $\AlphaPreTer$ is sound and complete for $\salphaequiv$ on $\inflambdaprefixcal$\nb-preterms.
      That is:
      \begin{equation*}
        \infderivablein{\AlphaPreTer}{\femptylabsCRS{\apreter} = \femptylabsCRS{\bpreter}}
          \hspace*{5ex}\text{if and only if}\hspace*{5ex}
          \aipreter \alphaeq \bipreter \punc{.}
      \end{equation*}
      holds for all closed preterms $\apreter$ and $\bpreter$ in $\inflambdaprefixcal$. 
    \item{}\label{prop:AlphaPreTer:EqTer:item:ii}
      $\EqTer$ is sound and complete for equality between $\inflambdaprefixcal$\nb-terms.
      That is: 
      \begin{equation*}
        \infderivablein{\EqTer}{\aiter = \biter}
          \hspace*{5ex}\text{if and only if}\hspace*{5ex}
        \aiter = \biter \punc{.}
      \end{equation*}
      holds for all terms $\aiter$ and $\biter$ in $\inflambdaprefixcal$.
  \end{enumerate}
\end{proposition}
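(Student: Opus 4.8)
The plan is to prove~(\ref{prop:AlphaPreTer:EqTer:item:i}) first and then to obtain~(\ref{prop:AlphaPreTer:EqTer:item:ii}) from it, using that the rules of $\EqTer$ are exactly the rules of $\AlphaPreTer$ read on $\salphaequiv$\nb-classes. For~(\ref{prop:AlphaPreTer:EqTer:item:i}), I would first observe that $\AlphaPreTer$ is the instance $\Alphasig{\siglcCRS}$ of the Kahrs-style system $\Alpha$ of Definition~\ref{def:Alpha} specialised to the $\lambda$\nb-calculus signature $\siglcCRS = \setexp{\slappCRS, \slabsCRS}$: the binary symbol $\slappCRS$ gives the rule $\lappcomp$, the composite of the generic abstraction rule of $\Alpha$ with the unary-symbol rule for $\slabsCRS$ gives the fused rule $\labscomp$, and the axiom $\bvarax$ together with the rule $\Vacstreg$ are carried over verbatim. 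Hence soundness and completeness of $\AlphaPreTer$ for $\salphaequiv$ follow from those of $\Alpha$, that is, from the equivalence of $\Alpha$ with the Schroer-style system $\AlphaInfPreterm{\siglcCRS}$ of Definition~\ref{def:AlphaInfPreterm}, which by construction decides $\salphaequiv$ as absence of conflicts (cf.~\cite{hend:oost:2003}). Closedness of $\apreter$ and $\bpreter$ ensures that the empty prefix in $\femptylabsCRS{\apreter} = \femptylabsCRS{\bpreter}$ carries no free-variable obligations.

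If a self-contained argument is preferred, (\ref{prop:AlphaPreTer:EqTer:item:i}) can be shown directly by a coinductive reading of completed derivations (in the sense of Definition~\ref{def:AlphaInfPreterm}). For soundness I would use that the abstraction prefixes threaded through a derivation induce, at each node, a length-preserving correspondence between the binders already passed on the two sides. Reading a completed $\AlphaPreTer$\nb-derivation of $\femptylabsCRS{\apreter} = \femptylabsCRS{\bpreter}$ off at any finite depth then exhibits agreement of $\apreter$ and $\bpreter$ up to that depth: applications are matched with applications by $\lappcomp$, abstractions with abstractions by $\labscomp$, a variable occurrence with its partner under the prefix bijection by $\bvarax$, while $\Vacstreg$ accounts for vacuous binders. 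As this holds at every finite depth, there is no conflict, so $\apreter \alphaeq \bpreter$. For completeness, given $\apreter \alphaeq \bpreter$, I would construct a completed derivation by corecursion along the common (up to $\alpha$) syntax tree: conflict\nb-freeness forces the two preterms to share their outermost constructor, so the matching rule applies and one recurses on the immediate subpreterms with the prefixes extended accordingly, interposing $\Vacstreg$\nb-steps wherever a binder is vacuous; threads ending at a variable close with $\bvarax$, yielding a completed, possibly infinite derivation.

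For~(\ref{prop:AlphaPreTer:EqTer:item:ii}) I would reduce equality of $\inflambdaprefixcal$\nb-terms to $\salphaequiv$ of representatives and then appeal to~(\ref{prop:AlphaPreTer:EqTer:item:i}). Since $\inflambdaprefixcal$\nb-terms are $\salphaequiv$\nb-classes of preterms, $\aiter = \biter$ as terms holds precisely when chosen representative preterms $\apreter$, $\bpreter$ satisfy $\apreter \alphaeq \bpreter$. Because the rules of $\EqTer$ (Figure~\ref{fig:EqTer}) are those of $\AlphaPreTer$ (Figure~\ref{fig:AlphaPreTer}) transported to $\salphaequiv$\nb-classes, a completed $\EqTer$\nb-derivation of $\aiter = \biter$ lifts to a completed $\AlphaPreTer$\nb-derivation of $\femptylabsCRS{\apreter} = \femptylabsCRS{\bpreter}$ for suitable representatives---the freedom to pick fresh variables in instances of $\Vacstreg$ being exactly the freedom to choose representatives---and conversely every $\AlphaPreTer$\nb-derivation collapses to an $\EqTer$\nb-derivation by passing to $\salphaequiv$\nb-classes. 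Thus $\infderivablein{\EqTer}{\aiter = \biter}$ iff $\infderivablein{\AlphaPreTer}{\femptylabsCRS{\apreter} = \femptylabsCRS{\bpreter}}$, which by~(\ref{prop:AlphaPreTer:EqTer:item:i}) is equivalent to $\apreter \alphaeq \bpreter$, i.e.\ to $\aiter = \biter$. The main obstacle throughout is the binder bookkeeping via prefixes in the infinitary, coinductive setting: one must verify that the prefix correspondence read off a completed derivation coincides, at every finite depth, with conflict\nb-freeness, and that the representative\nb-lifting in~(\ref{prop:AlphaPreTer:EqTer:item:ii}) respects the side\nb-conditions of $\Vacstreg$. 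Once this correspondence is secured, both soundness and completeness are routine.
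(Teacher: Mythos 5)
Your proposal is correct, and for part~(\ref{prop:AlphaPreTer:EqTer:item:i}) as well as for the soundness direction of part~(\ref{prop:AlphaPreTer:EqTer:item:ii}) it coincides with the paper's proof: the paper also establishes~(\ref{prop:AlphaPreTer:EqTer:item:i}) by a two-way syntactic translation between $\AlphaPreTer$-derivations and derivations in the Kahrs-style system $\Alpha$ (your remark that $\labscomp$ fuses the generic abstraction rule with the unary-symbol rule for $\slabsCRS$ is in fact slightly more careful than the paper's description, which maps $\labscomp$ to the abstraction rule alone), and it also proves ``$\Rightarrow$'' of~(\ref{prop:AlphaPreTer:EqTer:item:ii}) by lifting a completed $\EqTer$-derivation to a completed $\AlphaPreTer$-derivation on chosen preterm representatives, using invertibility of the $\AlphaPreTer$-rules, exactly as you do. Where you genuinely diverge is the completeness direction ``$\Leftarrow$'' of~(\ref{prop:AlphaPreTer:EqTer:item:ii}): you obtain it by applying completeness from~(\ref{prop:AlphaPreTer:EqTer:item:i}) to representatives and then projecting the resulting $\AlphaPreTer$-derivation to $\salphaequiv$-classes, whereas the paper constructs the $\EqTer$-derivation directly at the term level, by fairly extending non-axiom leaves of finite derivations with conclusion $\femptylabs{\aiter} = \femptylabs{\aiter}$ along an arbitrary \extscopedelimiting\ strategy for $\stRegARS$, obtaining a completed derivation in the limit. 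Your route is more uniform and economical --- both halves of~(\ref{prop:AlphaPreTer:EqTer:item:ii}) reduce to~(\ref{prop:AlphaPreTer:EqTer:item:i}), and the projection is sound since vacuousness of a prefix binding (the side-condition of $\Vacstreg$) is $\alpha$-invariant, a point you rightly flag --- but the paper's construction buys something too: it avoids any choice of representatives, and more importantly it ties threads of $\EqTer$-derivations to $\sstregred$-rewrite sequences, a correspondence that is reused verbatim later (in the proof of Theorem~\ref{thm:stRegeq}, where strong regularity is used to cut infinite threads, and in Theorem~\ref{thm:coinduction-principle}, where a fair-extension construction along two strategies turns bisimilar \lTG{s} into an $\EqTer$-derivation). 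Both arguments are valid; only the minor bookkeeping point that~(\ref{prop:AlphaPreTer:EqTer:item:i}) is stated for empty prefixes while $\EqTer$-formulas carry arbitrary prefixes would need a word in your write-up, but the paper's own proof of~(\ref{prop:AlphaPreTer:EqTer:item:ii}) works with empty-prefix conclusions in the same way, so this is not a gap specific to your approach.
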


\begin{proof}
  For statement~(\ref{prop:AlphaPreTer:EqTer:item:i}) it suffices to show
  that, for an equation $\femptylabsCRS{\apreter} = \femptylabsCRS{\bpreter}$ between preterms of $\inflambdaprefixcal$, 
  derivability in $\AlphaPreTer$
    coincides with 
  derivability of this equation in the general proof system $\Alpha$ for $\alpha$\nb-equivalence 
  between \iCRS\nb-preterms in Definition~\ref{def:Alpha}.
  Given a derivation $\infDeriv$ in $\AlphaPreTer$, a derivation $\infDeriv_1$ in $\Alpha$
  results by replacing each formula occurrence 
  $\flabsCRS{n}{\vecOneToN{\avar}}{\apreter}  =  \flabsCRS{n}{\vecOneToN{\bvar}}{\bpreter}$
  by the formula occurrence
  $\fabsCRS{\vecOneToN{\avar}}{\apreter}  =  \fabsCRS{\vecOneToN{\bvar}}{\bpreter}$
  and adding an instance of the rule for the function symbol $\sflabsCRS{0}$ at the bottom.
  Then instances of the axioms and rules ($\bvarax$), ($@$), ($\labscomp$), and ($\Vacstreg$) in $\infDeriv$
  correspond to instances of axioms and rules ($\bvarax$), ($\slappCRS$), ($\absCRS{\hspace*{1pt}}{\niks}$), and ($\Vacstreg$) in $\infDeriv_1$,
  respectively.
  This proof transformation also has an inverse.
  
  For ``$\Leftarrow$'' in statement~(\ref{prop:AlphaPreTer:EqTer:item:ii})
  it suffices to note: Every \extscopedelimiting~strategy on $\stRegARS$
  can be used to stepwise extend finite derivations in $\EqTer$ 
  with conclusion $\femptylabs{\aiter} = \femptylabs{\aiter}$
  by one additional rule application above a leaf containing a formula
  $\flabs{\bvar}{\biter} = \flabs{\bvar}{\biter}$ that is not an axiom $\bvarax$,
  which implies that $\flabs{\bvar}{\biter}$ is not a normal form of $\sstregred$. 
  If these extensions are carried out
  in a fair manner by extending all non-axiom leafs at depth $n$ in the prooftree
  before proceeding with leafs at depth $>n$, then in the limit a completed
  derivation in $\EqTer$ is obtained.
  
  For ``$\Rightarrow$'' in statement~(\ref{prop:AlphaPreTer:EqTer:item:ii}), 
  suppose that $\infDeriv$ is a completed derivation in $\EqTer$
  with conclusion $\femptylabs{\aiter} = \femptylabs{\biter}$. 
  Let $\femptylabsCRS{\aipreter}$ and $\femptylabsCRS{\bipreter}$ be preterm representatives of $\femptylabs{\aiter}$ and $\femptylabs{\biter}$, respectively. 
  Now a completed derivation $\infDeriv_{\text{pter}}$ in $\AlphaPreTer$ can be found by developing it step by step from the conclusion 
  $\femptylabsCRS{\aipreter} = \femptylabsCRS{\bipreter}$ upwards, parallel to $\infDeriv$, and following the rules of $\AlphaPreTer$, which are invertible
  (that is, the premise/s of a rule instance is/are uniquely determined by the conclusion).  
  Then $\infDeriv_{\text{pter}}$ is a preterm representative version of $\infDeriv$. 
  By using (\ref{prop:AlphaPreTer:EqTer:item:i}), it follows that $\aipreter \alphaequiv \bipreter$.
  Since $\aipreter$ and $\bipreter$ are preterm representatives of $\aiter$ and $\biter$, respectively,
  $\aiter = \biter$ follows. 
\end{proof}

\begin{definition}[the proof system $\stRegeq$]\label{def:stRegeq}
  The natural-deduction-style proof system $\stRegeq$ for equality of strongly regular, infinite $\lambda$\nb-terms
  has all the rules of the proof system $\EqTer$ from Definition~\ref{def:AlphaPreTer:EqTer} and Figure~\ref{fig:EqTer},
  and additionally, the rule \FIX\ in Figure~\ref{fig:stRegeq}.
  But contrary to the definition in $\EqTer$, provability of an equation $\flabs{\vec{\avar}}{\aiter} = \flabs{\vec{\avar}}{\biter}$
  in $\stRegeq$ is defined as the existence of a \underline{finite} closed derivation
  with conclusion $\flabs{\vec{\avar}}{\aiter} = \flabs{\vec{\avar}}{\biter}$.
\end{definition}
\begin{figure}[t!]
\begin{center}\framebox{\begin{minipage}{200pt}\begin{center}
\mbox{}
\\[1.5ex]
\mbox{
    \AxiomC{$ [\flabs{\vec{\avar}}{\aiter} = \flabs{\vec{\bvar}}{\biter}]^u$}
    \noLine
    \UnaryInfC{$\Derivi{0}$}
    \noLine
    \UnaryInfC{$ \flabs{\vec{\avar}}{\aiter} = \flabs{\vec{\bvar}}{\biter} $}
    \RightLabel{$\sFIX,u$ $\,$
                \parbox{49pt}{\small (if $\depth{\Derivi{0}} \ge 1$)}
                }                   
    \UnaryInfC{$ \flabs{\vec{\avar}}{\aiter} = \flabs{\vec{\bvar}}{\biter} $}
    \DisplayProof
    }
  \\[1ex]
  \mbox{}  
\end{center}\end{minipage}}\end{center} 
\caption{\label{fig:stRegeq}
  The rule \FIX, which is added to the rules of $\EqTer$ from Figure~\ref{fig:EqTer}
  in order to obtain the proof system $\stRegeq$ for equality of strongly regular infinite $\lambda$\nb-terms.  
  }
\end{figure}

\begin{theorem}\label{thm:stRegeq}
  $\stRegeq$ is sound and complete for equality between strongly regular, infinite \lambdaterms.
  That is,
  for all strongly regular, infinite \lambdaterms\ $\aiter$ and $\biter$ it holds:
\[
\derivablein{\stRegeq}{\aiter = \biter}
\hspace{1cm}\text{if and only if}\hspace{1cm}
   \aiter = \biter \punc{.}
\]
\end{theorem}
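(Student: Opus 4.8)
The plan is to prove the two directions—soundness (``$\Rightarrow$'') and completeness (``$\Leftarrow$'')—separately, in both cases reducing to the already-established adequacy of the infinitary system $\EqTer$ via Proposition~\ref{prop:AlphaPreTer:EqTer}, (\ref{prop:AlphaPreTer:EqTer:item:ii}), which states that $\infderivablein{\EqTer}{\aiter = \biter}$ holds if and only if $\aiter = \biter$. The whole argument mirrors the proof of Theorem~\ref{lem:Reg:stReg} for $\stReg$/$\stRegzero$, but lifted from single terms to equations; the genuinely new observations concern the synchronised bookkeeping of the two abstraction prefixes of an equation.

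For soundness, suppose $\derivablein{\stRegeq}{\aiter = \biter}$ via a finite closed derivation $\Deriv$. First I would record a guardedness property analogous to Proposition~\ref{prop:Reg:stReg}: since in $\EqTer$ the rule ($\Vacstreg$) strictly decreases the common prefix length of an equation from conclusion to premise, while ($\labscomp$) increases it and ($\lappcomp$) leaves it unchanged, and since the side-condition $\depth{\Derivi{0}}\ge 1$ forces at least one rule above each topmost \FIX, every thread from a \FIX-instance up to a discharged assumption must pass at least one instance of ($\labscomp$) or ($\lappcomp$). I would then unfold $\Deriv$ by repeatedly substituting, for each \FIX-instance, a fresh copy of its immediate subderivation in place of every assumption it discharges. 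The \FIX-nodes vanish (their premise and conclusion coincide), so the limit tree uses only the rules of $\EqTer$; by guardedness each infinite thread passes infinitely many ($\labscomp$)/($\lappcomp$) steps, hence analyses the two sides of the equation to unbounded depth, and every finite thread ends at an axiom ($\bvarax$). Thus the unfolding is a completed $\EqTer$-derivation of $\aiter = \biter$, and Proposition~\ref{prop:AlphaPreTer:EqTer}, (\ref{prop:AlphaPreTer:EqTer:item:ii}), yields $\aiter = \biter$.

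For completeness, suppose $\aiter = \biter$ with both terms strongly regular. Because $\aiter$ and $\biter$ denote the same infinite \lambdaterm, I would build a canonical completed $\EqTer$-derivation by decomposing both sides simultaneously according to the eager \extscopedelimiting\ strategy $\eagscdelstratstreg$: at an equation $\flabs{\vec{x}}{\aiter'} = \flabs{\vec{y}}{\biter'}$ apply ($\Vacstreg$) whenever the last prefix binding is vacuous, and otherwise ($\labscomp$), ($\lappcomp$), or ($\bvarax$) as dictated by the common head symbol of $\aiter' = \biter'$. Since the two sides are the same term, their eager scope closures coincide, so the rules apply in lockstep and the common prefix length $\length{\vec{x}} = \length{\vec{y}}$ is maintained at every equation. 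The left components of the equations then range over $\gSTstregstrat{\eagscdelstratstreg}{\aiter}$ and the right components over $\gSTstregstrat{\eagscdelstratstreg}{\biter}$, both finite by Proposition~\ref{prop:eager:strat:in:def:reg:streg}, (\ref{prop:eager:strat:in:def:reg:streg:item:ii}); hence only finitely many distinct equations occur in the derivation.

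The remaining and most delicate step is to fold this infinite derivation into a finite closed $\stRegeq$-derivation, exactly as in Lemma~\ref{lem:scdelstrategies:2:derivations:Reg:stReg} but for equations. I would develop the derivation from the conclusion upwards, and along every thread use finiteness of the equation set together with the grounded-cycle argument of Proposition~\ref{prop:grounded:cycle}—applied to the common prefix lengths via the $\liminf$ construction—to locate an admissible repetition of an equation $\flabs{\vec{x}}{\aiter'} = \flabs{\vec{y}}{\biter'}$ whose intermediate prefix lengths never drop below $\length{\vec{x}}$. K\H{o}nig's Lemma guarantees that the development terminates, since an infinite development would produce an infinite thread, i.e.\ an infinite $\eagscdelstratstreg$-rewrite sequence, which by Proposition~\ref{prop:grounded:cycle} contains a grounded cycle and hence an admissible repetition, contradicting non-termination. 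Closing all such repetitions with the rule \FIX—whose side-condition $\depth{\Derivi{0}} \ge 1$ is met precisely because each grounded cycle passes a ($\labscomp$) or ($\lappcomp$) step—produces the required finite closed derivation, establishing $\derivablein{\stRegeq}{\aiter = \biter}$. The main obstacle, and the place where strong regularity is genuinely used, is this folding step: guaranteeing simultaneously that only finitely many equations arise (via the eager, synchronised decomposition of the two equal terms) and that the located repetitions satisfy the \FIX\ side-condition.
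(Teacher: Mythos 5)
Your proposal is correct, and its skeleton coincides with the paper's proof: both directions are reduced, via Proposition~\ref{prop:AlphaPreTer:EqTer}, (\ref{prop:AlphaPreTer:EqTer:item:ii}), to the equivalence between finite closed derivability in $\stRegeq$ and completed derivability in $\EqTer$; soundness is obtained by fairly unfolding the instances of \FIX\ into an infinite $\EqTer$-derivation, with productivity secured by the side-condition $\depth{\Derivi{0}}\ge 1$, and completeness by folding an infinite $\EqTer$-derivation at equation repetitions, which exist because strong regularity makes the relevant sets of generated subterms finite.

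Two differences deserve mention, and they mostly favour your version. First, in the completeness direction the paper's sketch cuts the \emph{given} completed $\EqTer$-derivation, asserting that on every infinite thread equation repetitions occur; taken literally this needs care, since an arbitrary $\EqTer$-derivation may postpone ($\Vacstreg$) indefinitely along an infinite thread (compare the lazy strategy in Figure~\ref{fig:distance}), so that the prefixes grow unboundedly and infinitely many distinct equations occur there, while strong regularity only bounds the reducts of suitable (e.g.\ eager) strategies. Your move of first constructing the canonical derivation that decomposes both sides in lockstep according to $\eagscdelstratstreg$ --- exactly as in the ``$\Leftarrow$'' half of the proof of Proposition~\ref{prop:AlphaPreTer:EqTer}, (\ref{prop:AlphaPreTer:EqTer:item:ii}) --- makes the finiteness of the occurring equations immediate, and is the right way to make the sketch precise. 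Second, your insistence on repetitions arising from \emph{grounded cycles}, with intermediate prefix lengths never dropping below that of the repeated equation, is superfluous here: unlike the rule \FIX\ of $\stRegzero$, the rule \FIX\ of $\stRegeq$ carries only the depth condition $\depth{\Derivi{0}}\ge 1$, which any repetition along a thread automatically satisfies, so plain pigeonhole on the finite set of equations suffices and K\H{o}nig's Lemma then bounds the development; the extra condition does no harm, but it imports machinery (Proposition~\ref{prop:grounded:cycle}) that is only needed for the annotated extraction in Section~\ref{sec:express}. Finally, your guardedness observation in the soundness direction --- every thread from a \FIX-instance to a discharged assumption passes an instance of ($\labscomp$) or ($\lappcomp$), by the prefix-length bookkeeping as in Proposition~\ref{prop:Reg:stReg} --- is a correct strengthening of what the paper invokes (at least one non-\FIX\ rule on such threads), and either suffices for productivity of the unfolding.
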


\begin{proof}[Sketch of the Proof.]\label{prf:lem:stRegeq}
   Let $\aiter$ and $\biter$ be strongly regular, infinite \lambdaterms.
   In view of Proposition~\ref{prop:AlphaPreTer:EqTer}, (\ref{prop:AlphaPreTer:EqTer:item:ii}), 
   it suffices to show: 
   \begin{equation}\label{eq:prf:lem:stRegeq}
     \derivablein{\stRegeq}{\aiter = \biter}
       \hspace{1cm}\text{if and only if}\hspace{1cm}
     \infderivablein{\EqTer}{\aiter = \biter} \punc{.}
   \end{equation}
   
   For showing ``$\Leftarrow$'' in \eqref{eq:prf:lem:stRegeq}, 
   let $\infDeriv$ be a derivation in $\EqTer$ with conclusion $\femptylabs{\aiter} = \femptylabs{\biter}$.
   Since paths in $\infDeriv$ correspond to $\stregred$\nb-rewrite sequences, and since the number of
   generated subterms of both $\aiter$ and $\biter$ are finite (as a consequence of their strong regularity),
   on every infinite thread equation repetitions occur. These repetitions can be used to cut all infinite
   threads by appropriate introductions of instances of \FIX\ in order to obtain a finite and closed derivation in $\stRegeq$
   with the same conclusion. 
   
   For showing ``$\Rightarrow$'' in \eqref{eq:prf:lem:stRegeq}, 
   let $\Deriv$ be a closed derivation in $\stRegeq$ with conclusion 
   $\femptylabs{\aiter} = \femptylabs{\biter}$. Now $\Deriv$ can be unfolded into an infinite derivation $\infDeriv$ in $\EqTer$
   by repeatedly removing a bottommost instance of \FIX\ and inserting its immediate subderivation above each of the marked assumptions
   the instance discharges. If this process is organised in a fair manner with respect to bottommost instances of \FIX,
   then in the limit an infinite completed prooftree with conclusion $\femptylabs{\aiter} = \femptylabs{\biter}$ in $\EqTer$ is obtained. 
   For productivity of this process it is decisive that the side-condition on every instance $\ainst$ of the rule \FIX\ guarantees that on threads from
   the conclusion of $\ainst$ to a marked assumption that is discharged by $\ainst$ at least one instance of a rule different from
   \FIX\ is passed. 
\end{proof}

\begin{figure}[t!]
\begin{center}  
  \framebox{
\begin{minipage}{365pt}
\begin{center}
  \mbox{}
  \\[0.75ex]
  \mbox{
    \AxiomC{$ \{\: [ \flabs{\vec{\avar}}{\aconstnamei{\arecvari{i}}} ]^{\amarkeri{i}} \:\}_{i=1,\ldots,n} $}
    \noLine
    \UnaryInfC{$\Derivi{j}$}
    \noLine
    \UnaryInfC{$ \{ \:\ldots\ldots \;\; \flabs{\vec{\avar}}{\subst{\allteri{j}}{\vec{\arecvar}}{\vec{\aconstname}_{\vec{\arecvar}}}} \;\;\ldots\ldots\: \}_{j=1,\ldots,n} $}
    \AxiomC{$ \{\: [ \flabs{\vec{\avar}}{\aconstnamei{\arecvari{i}}} ]^{\amarkeri{i}} \:\}_{i=1,\ldots,n} $}
    \noLine
    \UnaryInfC{$\Derivi{n+1}$}
    \noLine
    \UnaryInfC{$ \flabs{\vec{\avar}}{\subst{\allteri{n+1}}{\vec{\arecvar}}{\vec{\aconstname}_{\vec{\arecvar}}}} $}
    \RightLabel{\sFIXletrec, $\amarkeri{1},\ldots,\amarkeri{n}$}   
    \BinaryInfC{$ \flabs{\vec{\avar}}{\,\letrec{\arecvari{1} = \allteri{1} \ldots \arecvari{n} = \allteri{n}}{\allteri{n+1}}} $}
    \DisplayProof
        }  
  \\[2.5ex]
  \parbox{330pt}
    {\small\centering
      where
      $\aconstnamei{\arecvari{1}}, \ldots,\aconstnamei{\arecvari{n}}$ are distinct constants fresh for $\allteri{1},\ldots,\allteri{n+1}$, 
      and 
      substitutions $\subst{\allteri{l}}{\vec{\arecvar}}{\vec{\aconstname}_{\vec{\arecvar}}}$
      stands short for 
      $\allteri{l} [\arecvari{1} \defdby \aconstnamei{\arecvari{1}}, \ldots, \arecvari{n} \defdby \aconstnamei{\arecvari{n}}] $.
      \\[0.75ex]
      \emph{side-conditions}:
        $\length{\vec{y}} \ge \length{\vec{x}}$ holds for the prefix length of
        every $\flabs{\vec{y}}{\bter}$ on a thread\\ in $\Derivi{j}$ for 
                                                                          $1\le j\le n+1$
        from an open assumptions $( \flabs{\vec{\avar}}{\aconstnamei{\arecvari{i}}} )^{\amarkeri{i}}$ downwards;\\
        for bottommost instances: 
        the arising derivation is guarded on access path cycles 
      } 
  \\[0.75ex]    
  \mbox{}
\end{center}
\end{minipage}
            }
\end{center}
  \caption{\label{fig:Regletrec:stRegletrec}
           The rule \FIXletrec\ for the natural-deduction style proof systems \protect\Regletrec\ and \protect\stRegletrec\ 
           on $\protect\lambdaletrec$\protect\nb-terms.
           }
\end{figure}

For the purpose of the following definition and the respective proof system in
Figure~\ref{fig:Regletrec:stRegletrec} we extend the signature $\sigllcCRS$ of
$\Regletrec$ and $\stRegletrec$ by an infinite set of constants for which we use 
the symbol $\aconstname$ as syntactical variables which frequently carry index subscripts. 

\todo{Argue that this proof system is not able to define the eager \scopedelimiting\ strategy}

\begin{definition}[proof systems \Regletrec, \stRegletrec, and \ugRegletrec, \ugstRegletrec]%
   \label{def:Regletrec:stRegletrec}
   The proof systems \stRegletrec\ and \Regletrec\
   for $\protect\lambdaletrec$\nb-terms
   arise from the proof systems \stReg\ and \protect\Reg\ 
   (see Definition~\ref{def:Reg:stReg:stRegzero}, Figure~\ref{fig:stReg:stRegzero} and Figure~\ref{fig:Reg}), respectively,
   by replacing the terms in the axioms ($\bvarax$), and rules ($\labscomp$), ($\lappcomp$), ($\Vacstreg$) and ($\Vacreg$)
   through \lambdaletrecterm{s} with abstraction prefixes accordingly, and 
   by replacing the rule \FIX\ with the rule \FIXletrec\ in Figure~\ref{fig:Regletrec:stRegletrec}. 
   The side-condition concerning access path cycles on the derivation arising by an instance of \FIXletrec\ 
   pertains only to bottommost occurrences of this rule, and 
   is explained below.  
   By \FIXletrecmin\ we mean the variant of the rule \FIXletrec\ in which the side-condition 
   concerning guardedness of the arising derivation on access path cycles has been dropped. 
   By \ugRegletrec/\ugstRegletrec\ we denote the variants of \Regletrec/\stRegletrec, respectively,
   in which the rule \FIXletrec\ is replaced by the rule \FIXletrecmin.
   
   Let $\Deriv$ be a derivation in one of these proof systems.
   By an \emph{access path} of $\Deriv$ we mean a (possibly cyclic) path $\apath$ in $\Deriv$ such that:
   \begin{enumerate}[(a)]
     \item{}\label{access:path:item:i}
       $\apath$ starts at the conclusion and can proceed in upwards direction;
     \item{}\label{access:path:item:ii}
       at instances of ($\lappcomp$), $\apath$ can step from the conclusion to one of the premises;  
     \item{}\label{access:path:item:iii}
       at instances of \FIXletrec, $\apath$ can step from the conclusion to the rightmost premise
       (which corresponds to the $\textsf{in}$\nb-part of the $\sletrec$\nb-term that is parsed by this instance);
     \item{}\label{access:path:item:iv}   
       when arriving at a marked assumption $(\flabs{\vec{\avar}}{\aconstnamei{\arecvari{i}}})^{\amarkeri{i}}$
       that is discharged at an application of \FIXletrec\ of the form as displayed in Figure~\ref{fig:Regletrec:stRegletrec},
       $\apath$ can step over to the conclusion
       $\flabs{\vec{\avar}}{\subst{\allteri{i}}{\vec{\arecvar}}{\vec{\aconstname}_{\vec{\arecvar}}}}$
       of the subderivation $\Derivi{i}$ of that application of \FIXletrec, and proceed from there, again in upwards direction.
   \end{enumerate}
   For every formula occurrence $o$ in $\Deriv$, by a \emph{relative access path} from $o$ we mean a path with
   the properties (\ref{access:path:item:ii})--(\ref{access:path:item:iv}) that starts at $o$ and proceeds in upwards direction. 
   An access path (or relative access path) in $\Deriv$ is \emph{cyclic} if there is a formula occurrence in $\Deriv$ that is visited more than once.   
   
   Now we say that $\Deriv$ is \emph{guarded on access path cycles} 
   if every cyclic access path contains, on each of its cycles, at least one \emph{guard}, that is, an instance of a rule ($\labscomp$) or ($\lappcomp$). 
   We say that $\Deriv$ is \emph{guarded} if every relative access path contains a guard on each of its cycles.
\end{definition}

\begin{example}
  The \lambdaletrecterm~$\allteri{1} = \letrec{\arecvar = \arecvar,\, \brecvar = \labs{\avar}{\avar}}{\labs{\bvar}{\brecvar}}$
  admits the following closed derivation $\Derivi{1}$ in $\Regletrec$/$\stRegletrec$:
  \begin{center}
    \mbox{
      \AxiomC{$ (\femptylabs{\aconstnamei{\arecvar}})^{\amarkeri{1}} $}
      \AxiomC{\mbox{}}
      \RightLabel{\bvarax}
      \UnaryInfC{$ \flabs{\avar}{\avar} $}
      \RightLabel{$\labscomp$}
      \UnaryInfC{$ \femptylabs{\labs{\avar}{\avar}}$ }
      \AxiomC{$ (\femptylabs{\aconstnamei{\brecvar}})^{\amarkeri{2}} $}
      \RightLabel{\Vacreg/\Vacstreg}
      \UnaryInfC{$ \flabs{\bvar}{\aconstnamei{\brecvar}} $}
      \RightLabel{$\labscomp$}
      \UnaryInfC{$ \femptylabs{\labs{\bvar}{\aconstnamei{\brecvar}}} $} 
      \RightLabel{$\sFIXletrec, {\amarkeri{1}, \amarkeri{2}} $}
      \insertBetweenHyps{\hspace*{6ex}}
      \TrinaryInfC{$ \femptylabs{\letrec{\arecvar = \arecvar,\, \brecvar = \labs{\avar}{\avar}}{\labs{\bvar}{\brecvar}}} $}
      \DisplayProof 
          }
  \end{center}  
  This derivation can be built in a straightforward way, from the bottom upwards. 
  Note that $\Deriv$ is guarded on access path cycles, and hence that the instance of \FIXletrec\ at the bottom is a valid one,
  because: $\Deriv$ does not possess any cyclic access paths. In particular, there is no access path in $\Derivi{1}$
  that reaches the first premise of the instance of \FIXletrec: this premise is the starting point of an 
  unguarded relative access path, which entails that $\Derivi{1}$ itself is not guarded. 
  
  Now consider the 
  \lambdaletrecterm~$\allteri{2} = \letrec{\arecvar = \arecvar,\, \brecvar = \labs{\avar}{\avar}}{\labs{\bvar}{\lapp{\arecvar}{\brecvar}}}$.
  When trying to construct a derivation in $\Regletrec$/$\stRegletrec$ for this term in a bottom-up manner,
  one arrives~at the closed derivation $\Derivi{2}$ in \ugRegletrec/\ugstRegletrec: 
  \begin{center}
    \mbox{
      \AxiomC{$ (\femptylabs{\aconstnamei{\arecvar}})^{\amarkeri{1}} $}
      \AxiomC{\mbox{}}
      \RightLabel{\bvarax}
      \UnaryInfC{$ \flabs{\avar}{\avar} $}
      \RightLabel{$\labscomp$}
      \UnaryInfC{$ \femptylabs{\labs{\avar}{\avar}}$ }
      \AxiomC{$ (\femptylabs{\aconstnamei{\arecvar}})^{\amarkeri{1}} $}
      \RightLabel{\Vacreg/\Vacstreg}
      \UnaryInfC{$ \flabs{\bvar}{\aconstnamei{\arecvar}} $}
      \AxiomC{$ (\femptylabs{\aconstnamei{\brecvar}})^{\amarkeri{2}} $}
      \RightLabel{\Vacreg/\Vacstreg}
      \UnaryInfC{$ \flabs{\bvar}{\aconstnamei{\brecvar}} $}
      \RightLabel{$\lappcomp$}
      \BinaryInfC{$ \flabs{\bvar}{\lapp{\aconstnamei{\arecvar}}{\aconstnamei{\brecvar}}} $} 
      \RightLabel{$\labscomp$}
      \UnaryInfC{$ \femptylabs{\labs{\bvar}{\lapp{\aconstnamei{\arecvar}}{\aconstnamei{\brecvar}}}} $} 
      \RightLabel{{$\sFIXletrecmin, \amarkeri{1}, \amarkeri{2}$} (but not {\FIXletrec}!)}
      \TrinaryInfC{$ \femptylabs{\letrec{\arecvar = \arecvar,\, \brecvar = \labs{\avar}{\avar}}{\labs{\bvar}{\lapp{\arecvar}{\brecvar}}}}  $}
      \DisplayProof 
          }
  \end{center}  
  However, $\Derivi{2}$ is not a valid derivation in $\Regletrec$/$\stRegletrec$, as the inference step at the bottom is
  an instance of \FIXletrecmin, but not of \FIXletrec, 
  because the side-condition on the arising derivation to be guarded on access path cycles is not satisfied:
  now there is an access path that reaches the first premise of the derivation and that continues looping on 
  this an unguarded cycle. 
  Since the bottom-up search procedure for derivations is deterministic in this case, 
  it follows that $\femptylabs{\allteri{2}}$ is not derivable in $\Reg$ nor in  $\stReg$. 
\end{example}

Similar as the correspondence, stated by Proposition~\ref{prop:derivationpaths:2:rewritesequences:Reg:stReg},
between (possibly cyclic) paths in a derivation in $\Reg$ and $\stReg$ starting at the conclusion
and rewrite sequences with respect to $\regred$ and $\stregred$ on the infinite term in the conclusion,
there is also the following correspondence between access paths in a derivation in $\Regletrec$ and $\stRegletrec$,
and rewrite sequences with respect to $\regred$ and $\stregred$ on the \lambdaletrecterm\ in the conclusion.

\begin{proposition}\label{prop:Regletrec:stRegletrec}
  Let $\Deriv$ be a derivation in $\Regletrec$ or $\ugRegletrec$ (in $\stRegletrec$ or in $\ugstRegletrec$) with conclusion $\flabs{\vec{\avar}}{\allter}$.
  \begin{enumerate}[(i)]
    \item\label{prop:Regletrec:stRegletrec:item:i} 
      Then every access path in $\Deriv$ to an occurrence $\aocc$ of a term $\flabs{\vec{\bvar}}{\bllter}$
      corresponds to a $\sregred$\nb-rewrite sequence 
      $\flabs{\vec{\avar}}{\allter} \regmred \flabs{\vec{\bvar}}{\letrec{\abindgroup}{\tilde{\bllter}}}$
      (to a $\sstregred$\nb-rewrite sequence 
       $\flabs{\vec{\avar}}{\allter} \stregmred \flabs{\vec{\bvar}}{\letrec{\abindgroup}{\tilde{\bllter}}}$),
      where $\abindgroup$ 
      arises as the union of all outermost binding groups in conclusions of instances of \FIXletrec\ below $\aocc$,
      and $\flabs{\vec{\bvar}}{\bllter} = \flabs{\vec{\bvar}}{\subst{\tilde{\bllter}}{\vec{\arecvar}}{\vecsub{\aconstname}{\vec{\arecvar}}}}$ 
      where $\vec{\arecvar}$ is comprised of the recursion variables occurring in $\abindgroup$ and $\vecsub{\aconstname}{\vec{\arecvar}}$
      distinct constants for $\vec{\arecvar}$ as chosen by~$\Deriv$. 
      More precisely:
      \begin{enumerate}[(a)]
        \item a pass over an instance of \FIXletrec\ corresponds to an empty or $\sunfletrecred$\nb-step,
          dependent on whether the instance is the bottommost \FIXletrec\nb-instance or not;
        \item a pass over an instance of the rule ($\lappcomp$) to the left/to the right
          corresponds to a $\slappdecompired{0}$\nb-step/$\slappdecompired{1}$\nb-step,
          which, if the application is somewhere above an instance of \FIXletrec, has to be preceded 
          by a $\sunflappred$\nb-step;
        \item a pass over an instance of the rule ($\labscomp$) corresponds to a $\slabsdecompred$\nb-step
          which, if the application is above an instance of \FIXletrec, has to be preceded 
          by a $\sunflabsred$\nb-step; 
        \item a pass over an instance of the rule $\Vacreg$ ($\Vacstreg$) 
          corresponds to a $\scompressregred$\nb-step ($\scompressstregred$\nb-step),
          possibly preceded by an application of $\sunfreducered$.
        \item a step from a marked assumption 
                                              to a premise of a \FIXletrec\nb-instances,
          a step as described in item~(\ref{access:path:item:iii}) of the definition of access paths,
          corresponds to an $\sunfrecred$\nb-step followed by a $\sunfreducered$\nb-step. 
      \end{enumerate}
    \item\label{prop:Regletrec:stRegletrec:item:ii}  
      $\allter$ is reduced if every formula occurrence in $\Deriv$ can be reached by an access path.
  \end{enumerate}
\end{proposition}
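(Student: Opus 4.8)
The plan is to establish part~(\ref{prop:Regletrec:stRegletrec:item:i}) by induction on the length of the access path running from the conclusion of $\Deriv$ up to the occurrence~$\aocc$, carrying along an invariant that ties the \emph{parsed} form of a subterm in the derivation --- in which the recursion variables of the already-encountered binding groups have been replaced by the fresh constants $\vecsub{\aconstname}{\vec{\arecvar}}$ --- to its \emph{explicit} $\sletrec$-form reached by the rewrite sequence. The invariant I would maintain is: an access-path prefix ending at an occurrence of $\flabs{\vec{\bvar}}{\bllter}$ corresponds to a rewrite sequence $\flabs{\vec{\avar}}{\allter} \regmred \flabs{\vec{\bvar}}{\letrec{\abindgroup}{\tilde{\bllter}}}$ (respectively a $\stregmred$-sequence) in which $\abindgroup$ is the union of the outermost binding groups of the \FIXletrec-instances passed so far, kept reduced by the interspersed $\sunfreducered$-steps, $\vec{\arecvar}$ lists the recursion variables bound in $\abindgroup$, and $\bllter = \subst{\tilde{\bllter}}{\vec{\arecvar}}{\vecsub{\aconstname}{\vec{\arecvar}}}$, so that the constants occurring in the derivation term are exactly the placeholders of the variables bound in~$\abindgroup$. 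The base case is the trivial access path at the conclusion, where no \FIXletrec-instance has been passed, $\abindgroup$ is empty, and the rewrite sequence is empty.

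For the inductive step I would carry out a case distinction on the rule whose conclusion sits at the current occurrence, checking that the steps listed in (a)--(e) realise the transition and re-establish the invariant. A pass over the bottommost \FIXletrec exposes the outermost binding group as $\abindgroup$ through an empty step; a pass over any higher \FIXletrec meets a nested $\sletrec$ sitting inside the already-exposed $\letrec{\abindgroup}{\cdot}$, which a $\sunfletrecred$-step merges into~$\abindgroup$. A pass over ($\labscomp$) or ($\lappcomp$) is matched by a $\slabsdecompred$-step, respectively an $\slappdecompired{0}$- or $\slappdecompired{1}$-step; when the rule instance sits above an exposed $\sletrec$, the binder or application must first be floated out by a $\sunflabsred$- respectively $\sunflappred$-step, which is licensed by the one-step commutation of $\sunfoldred$ with the decomposition relations established in Lemma~\ref{prop:rewprops:RegletrecCRS:stRegletrecCRS},~(\ref{prop:rewprops:RegletrecCRS:stRegletrecCRS:item:i}). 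A pass over ($\Vacreg$)/($\Vacstreg$) corresponds to a $\scompressregred$/$\scompressstregred$-step, possibly after an $\sunfreducered$-step, and the jump from a discharged marked assumption $\flabs{\vec{\bvar}}{\aconstnamei{\arecvari{i}}}$ to the conclusion of the binding subderivation $\Derivi{i}$ corresponds to a $\sunfrecred$-step, unfolding $\arecvari{i}$ into $\allteri{i}$, followed by a $\sunfreducered$-step.

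The main obstacle will be the bookkeeping of $\abindgroup$ together with the constant-to-recursion-variable dictionary, and in particular justifying the role of the $\sunfreducered$-steps in (d) and (e). After a $\sunfrecred$-step rewrites $\flabs{\vec{\bvar}}{\letrec{\abindgroup}{\arecvari{i}}}$ to $\flabs{\vec{\bvar}}{\letrec{\abindgroup}{\allteri{i}}}$, the binding group must be trimmed to the bindings reachable from $\allteri{i}$, so that the recursion variables it still binds coincide exactly with the constants that the parsed body $\subst{\allteri{i}}{\vec{\arecvar}}{\vecsub{\aconstname}{\vec{\arecvar}}}$ mentions in the derivation; only then is the identity $\bllter = \subst{\tilde{\bllter}}{\vec{\arecvar}}{\vecsub{\aconstname}{\vec{\arecvar}}}$ of the invariant restored. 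This is precisely where the specific shape of the reduce rule $\srulebp{\sunf}{\sreduce}$ --- its definition through the reachability relation $\sareachrel$ on recursion variables --- has to be matched against which marked assumptions are accessible, and I expect most of the careful argument to live here. Since the claim concerns reaching a \emph{given} occurrence, only finite access-path prefixes are involved, so the induction suffices even for cyclic or infinite access paths.

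Finally, part~(\ref{prop:Regletrec:stRegletrec:item:ii}) I would obtain from this correspondence by contraposition. If $\allter$ is not reduced (Definition~\ref{def:reduced:llter}), then somewhere in it a binding group is either empty or contains a binding $\arecvari{i} = \allteri{i}$ whose variable $\arecvari{i}$ is not reachable, via $\sareachrel^{*}$, from the in-part of its $\sletrec$. In the latter case the marked assumption $\flabs{\vec{\bvar}}{\aconstnamei{\arecvari{i}}}$ is never arrived at along any access path, by the accessibility analysis underlying~(\ref{prop:Regletrec:stRegletrec:item:i}), so the subderivation $\Derivi{i}$ --- and hence every formula occurrence inside it --- lies off every access path; the empty-binding-group case yields an unreachable occurrence analogously. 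Thus not every occurrence of $\Deriv$ is reachable, which is the contrapositive of the claim.
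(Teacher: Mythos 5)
The paper offers no proof of this proposition at all: it is presented as evident by direct inspection, with the ``more precisely'' items (a)--(e) themselves serving as the case analysis, in parallel to the one-line proof given for Proposition~\ref{prop:derivationpaths:2:rewritesequences:Reg:stReg}. Your induction on access-path prefixes with the substitution invariant is exactly the right way to make that implicit argument explicit, and your case analysis matches (a)--(e). Two local inaccuracies, neither fatal: the appeal to Lemma~\ref{prop:rewprops:RegletrecCRS:stRegletrecCRS} is superfluous, since you are not commuting any steps but simply constructing the sequence directly (first the $\sunflappred$- or $\sunflabsred$-step, then the decomposition step); and after a $\sunfreducered$-step the recursion variables bound in the trimmed binding group need not ``coincide exactly'' with the constants mentioned in the parsed body, because the rule $\srulebp{\sunf}{\sreduce}$ retains all transitively $\sareachrel^{*}$-reachable bindings, not only those whose variables occur directly in the in-part. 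Only the inclusion (every constant in the body is covered by the dictionary) is needed for the identity $\bllter = \subst{\tilde{\bllter}}{\vec{\arecvar}}{\vecsub{\aconstname}{\vec{\arecvar}}}$, so the induction still goes through.

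The genuine gap is in your treatment of part~(\ref{prop:Regletrec:stRegletrec:item:ii}): the empty-binding-group case is \emph{not} ``analogous''. If an instance of \FIXletrec\ with $n = 0$ is admitted, it has the in-part subderivation as its only premise, and that premise is reachable; so for a term such as $\letrec{}{\labs{\avar}{\avar}}$ one obtains a derivation in which every formula occurrence lies on an access path, although the term is a $\sunfnilred$-redex and hence not reduced --- there is no unreachable occurrence to exhibit, and your contrapositive argument breaks down. To close this case you must either observe that \FIXletrec\ as displayed in Figure~\ref{fig:Regletrec:stRegletrec} discharges at least one assumption class, i.e.\ requires $n \ge 1$, so that terms containing empty binding groups admit no derivation in these systems and the claim holds vacuously for them, or treat the case by a separate argument. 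As written, your contraposition soundly covers only the $\sunfreducered$-redex case, where the reachability analysis is correct: if the index $i$ is not $\sareachrel^{*}$-reachable from the in-part, then $\aconstnamei{\arecvari{i}}$ occurs in no reachable subderivation (else $i$ would be reachable), so $\Derivi{i}$ lies off every access path.
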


\begin{example}\label{ex:stRegletrec}
  The rewrite cycle in Example~\ref{ex:non-unfoldable} that witnesses that 
  the \lambdaletrecterm\ considered there, $\letrec{\arecvar = \letrec{\brecvar = \arecvar}{\brecvar}}{\arecvar}$,
  is not unfoldable can also be recognised, using the statement of Proposition~\ref{prop:Regletrec:stRegletrec},
  from the following derivation in $\ugstRegletrec$:
  \begin{equation*}
    \mbox{
      \AxiomC{$ (\aconstnamei{\arecvar})^{\amarker}$}
      \AxiomC{$ (\aconstnamei{\brecvar})^{\bmarker} $}
      \insertBetweenHyps{\hspace*{5ex}}
      \RightLabel{\sFIXletrec, $\bmarker$}
      \BinaryInfC{$ \femptylabs{\letrec{\brecvar = \arecvar}{\brecvar}} $}
      \AxiomC{$ (\aconstnamei{\arecvar})^{\amarker} $}
      \insertBetweenHyps{\hspace*{5ex}}
      \RightLabel{\sFIXletrecmin, $\amarker$}
      \BinaryInfC{$ \femptylabs{\letrec{\arecvar = \letrec{\brecvar = \arecvar}{\brecvar}}{\arecvar}}  $}
      \DisplayProof
          }
  \end{equation*}
  Note that the instance of \FIXletrecmin\ at the bottom is not an instance of \FIXletrec,
  since it is not guarded (has an unguarded cyclic access path that reaches and cycles on the left premise
  of the instance of $\FIXletrec$.
\end{example}

\begin{lemma}\label{lem:Regletrec:stRegletrec}
  Let $\Deriv$ be a closed derivation in $\ugRegletrec$ (in $\ugstRegletrec$) with conclusion $\femptylabs{\allter}$.
  Then there exists a \scopedelimiting\ (\extscopedelimiting) strategy~$\astrati{\Deriv}$ for $\RegletrecARS$ (for $\stRegletrecARS$) 
  with the following properties:
  \begin{enumerate}[(i)]
    \item{}\label{lem:Regletrec:stRegletrec:item:i}
      Every access path in $\Deriv$ corresponds
      to a rewrite sequence with respect to $\astrati{\Deriv}$ starting on $\femptylabs{\allter}$
      in the sense of Proposition~\ref{prop:Regletrec:stRegletrec}. 
    \item{}\label{lem:Regletrec:stRegletrec:item:ii}
      Every 
            rewrite sequence that starts on $\femptylabs{\allter}$ and proceeds according to $\astrati{\Deriv}$  
      corresponds to an access path in $\Deriv$ with correspondences as described 
      in Proposition~\ref{prop:Regletrec:stRegletrec}, (\ref{prop:Regletrec:stRegletrec:item:i}).
    \item{}\label{lem:Regletrec:stRegletrec:item:iii}
      $\gSTstrat{\astrati{\Deriv}}{\allter} 
         =
       \descsetexpBig{ \flabs{\vec{\bvar}}{\letrec{\abindgroup}{\tilde{\bllter}}} }
                  { \parbox{220pt}{
                      $\flabs{\vec{\bvar}}{\letrec{\abindgroup}{\tilde{\bllter}}}$
                      arises from an occ.\ of $\flabs{\vec{\bvar}}{\bllter}$ on\\[-0.5ex] an access path of $\Deriv$
                      as described in Prop.~\ref{prop:Regletrec:stRegletrec}, (\ref{prop:Regletrec:stRegletrec:item:i})
                                   } }
      $.              
      As a consequence of that $\Deriv$ is finite, $\allter$ is $\astrati{\Deriv}$\nb-regular.
    \item{}\label{lem:Regletrec:stRegletrec:item:iv}
      $\text{$\allter$ is $\astrati{\Deriv}$\nb-productive}
         \;\Leftrightarrow\;
       \text{$\Deriv$ is guarded (i.e.\ $\Deriv$ derivation in $\Regletrec$ ($\stRegletrec$))}        
      $.
  \end{enumerate}  
\end{lemma}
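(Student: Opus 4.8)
The plan is to follow the construction in the proof of Lemma~\ref{lem:derivations:Reg:stReg:2:scdelstrategies}, now adapted to the \lambdaletrec\ setting, reading off the step-by-step dictionary between access paths and rewrite sequences from Proposition~\ref{prop:Regletrec:stRegletrec} instead of from Proposition~\ref{prop:derivationpaths:2:rewritesequences:Reg:stReg}. As there, I would treat only $\ugstRegletrec$ and $\stRegletrecARS$; the case of $\ugRegletrec$ and $\RegletrecARS$ is entirely analogous, with $\scompressstregred$-steps replaced by $\scompressregred$-steps and the rule ($\Vacstreg$) by ($\Vacreg$). First I would decorate $\Deriv$ with labels recording, for each formula occurrence, its location on the access paths of $\Deriv$ (playing the role of the position labels in $\setexp{0,1}^*$ from the system $\stReglab$, but now additionally tracking the marked-assumption jumps of clause~(\ref{access:path:item:iv}) in the definition of access path), obtaining a labelled derivation $\Derivlab$.

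On the basis of $\Derivlab$ I would define $\astrati{\Deriv}$ as a history-aware strategy for $\stRegletrecARS$: on a labelled term corresponding to an occurrence reached by an access path of $\Deriv$, the strategy $\astrati{\Deriv}$ prescribes exactly the unfolding-, decomposition-, and scope-delimiting steps dictated by $\Deriv$ through Proposition~\ref{prop:Regletrec:stRegletrec}, while on all other terms it follows a fixed, history-free lazy-unfolding eager \extscopedelimiting\ strategy $\astratplus$. I would then check that $\astrati{\Deriv}$ meets the requirements of Definition~\ref{def:scope:delimiting:strat:Regletrec:stRegletrec}: it is deterministic at sources of $\slabsdecompred$-, $\scompressstregred$-, and all unfolding steps, it applies $\srulebp{\sunf}{\sreduce}$ eagerly, and its sole non-determinism is the choice between $\slappdecompired{0}$ and $\slappdecompired{1}$. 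The verification that $\astrati{\Deriv}$ is a genuine strategy (that its normal forms are those of $\stRegletrecARS$) runs as in Lemma~\ref{lem:derivations:Reg:stReg:2:scdelstrategies}, distinguishing whether a non-normal-form term occurs in $\Derivlab$ or not; here one additionally uses that eager application of $\srulebp{\sunf}{\sreduce}$ keeps free variables in step with the unfolding, so that the applicability of $\scompressstregred$ along $\Deriv$ matches that prescribed by the access-path structure.

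With $\astrati{\Deriv}$ in place, properties~(\ref{lem:Regletrec:stRegletrec:item:i}) and~(\ref{lem:Regletrec:stRegletrec:item:ii}) are immediate from the construction together with Proposition~\ref{prop:Regletrec:stRegletrec}: passes over \FIXletrec, over ($\lappcomp$), ($\labscomp$), ($\Vacstreg$), and the marked-assumption jumps translate respectively into (possibly empty) $\sunfletrecred$-, $\slappdecompired{i}$-, $\slabsdecompred$-, and $\scompressstregred$-steps and into $\sunfrecred$-then-$\sunfreducered$ steps, each possibly preceded by the shadowing unfolding steps listed in that proposition. Property~(\ref{lem:Regletrec:stRegletrec:item:iii}) then follows because, by~(\ref{lem:Regletrec:stRegletrec:item:ii}), every $\stratmred{\astrati{\Deriv}}$-reduct of $\femptylabs{\allter}$ arises from an occurrence on an access path in the stated manner; since $\Deriv$ is a finite prooftree there are only finitely many such occurrences, hence finitely many generated subterms, so $\allter$ is $\astrati{\Deriv}$-regular.

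The main work, and the principal obstacle, is property~(\ref{lem:Regletrec:stRegletrec:item:iv}). The key observation is that the \emph{productive} rewrite steps counted in Definition~\ref{def:productive:scdelstrats}, namely the $\slabsdecompred$- and $\slappdecompired{i}$-steps, correspond under the dictionary of Proposition~\ref{prop:Regletrec:stRegletrec} precisely to passes over the \emph{guards} ($\labscomp$) and ($\lappcomp$), whereas all other access-path moves correspond to unfolding and scope-delimiting steps. For ``$\Leftarrow$'' I would take an arbitrary infinite $\astrati{\Deriv}$-rewrite sequence; by~(\ref{lem:Regletrec:stRegletrec:item:ii}) it corresponds to an infinite access path $\apath$ in the finite derivation $\Deriv$, so $\apath$ revisits some occurrence infinitely often and thereby traverses infinitely many cycles. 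Were there only finitely many guards on $\apath$, its guard-free tail would, by finiteness of $\Deriv$, close a guard-free access-path cycle, contradicting guardedness on access path cycles; hence guards, and so productive steps, occur infinitely often, which is productivity. For ``$\Rightarrow$'' I would argue contrapositively: if $\Deriv$ is not guarded, an access path from the conclusion reaches a guard-free cycle, and looping on it forever yields, by~(\ref{lem:Regletrec:stRegletrec:item:i}), an infinite $\astrati{\Deriv}$-rewrite sequence with only finitely many productive steps, so $\allter$ is not $\astrati{\Deriv}$-productive. Finally I would record that for $\Deriv\in\ugstRegletrec$ this guardedness-on-access-path-cycles condition is exactly the side-condition distinguishing a derivation in $\stRegletrec$ from one merely in $\ugstRegletrec$, yielding the parenthetical identification in the statement. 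The delicate point throughout is that the cycle-finiteness argument must be run on access paths \emph{from the conclusion} rather than on arbitrary relative access paths, and that the eager $\srulebp{\sunf}{\sreduce}$-discipline is what guarantees that the recursion variables reachable during unfolding coincide with exactly those premises of \FIXletrec\ that the access path is permitted to visit.
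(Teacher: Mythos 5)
Your proposal is correct and follows essentially the same route as the paper: the paper's (very brief) proof likewise extracts $\astrati{\Deriv}$ by adapting the labelled-lifting construction from Lemma~\ref{lem:derivations:Reg:stReg:2:scdelstrategies} and derives item~(\ref{lem:Regletrec:stRegletrec:item:iv}) from the access-path/rewrite-sequence correspondence of Proposition~\ref{prop:Regletrec:stRegletrec}. Your elaboration of~(\ref{lem:Regletrec:stRegletrec:item:iv}) via the pigeonhole argument on the finite prooftree, and your remark that guardedness must be checked on access paths from the conclusion (matching the side-condition on bottommost \FIXletrec\nb-instances) rather than on arbitrary relative access paths, correctly supply details the paper leaves implicit.
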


\begin{proof}
  Given a closed derivation $\Deriv$ with conclusion $\femptylabs{\allter}$ (for example) in $\ugstRegletrec$, 
  a \extscopedelimiting~strategy $\astrati{\Deriv}$ for $\stRegletrecARS$
  such that (\ref{lem:Regletrec:stRegletrec:item:i})--(\ref{lem:Regletrec:stRegletrec:item:iv}) hold 
  can be extracted from $\Deriv$ 
  similar as in the proof of 
  a \extscopedelimiting~strategy $\astrati{\Deriv}$ in $\stRegARS$
  was extracted from a closed derivation in $\stReg$.
  That the extracted strategy $\astrati{\Deriv}$ is productive/not productive for $\allter$
  if $\Deriv$ is guarded/not guarded can be seen by the fact that
  $\astrati{\Deriv}$\nb-rewrite sequences correspond to access paths of $\Deriv$
  in the sense as stated by Proposition~\ref{prop:Regletrec:stRegletrec}.
\end{proof}

Now we will prove that 
derivability in $\ugRegletrec$/$\ugstRegletrec$ is guaranteed for all \lambdaletrec\nb-terms,
and that derivability in $\Regletrec$/$\stRegletrec$ is a property of a \lambdaletrec\nb-term
that is decidable by an easy parsing process. 

\begin{proposition}\label{prop:derivability:Regletrec:stRegletrec}
  The following statements hold:
  \begin{enumerate}[(i)]
    \item{}\label{prop:derivability:Regletrec:stRegletrec:item:i}
      For every  \lambdaletrecterm~$\allter$, $\femptylabs{\allter}$ is derivable both in $\ugRegletrec$ and in $\ugstRegletrec$.
    \item{}\label{prop:derivability:Regletrec:stRegletrec:item:ii} 
      For every \lambdaletrecterm~$\allter$, derivability of $\femptylabs{\allter}$ in $\stRegletrec$ 
  is decidable in at most quadratic time in the size of $\allter$.   
  \end{enumerate}
\end{proposition}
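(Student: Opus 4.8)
The plan is to prove the two statements separately, exploiting that derivations in these systems can be built by a bottom-up parsing process that follows the structure of the $\lambdaletrec$-term along its access paths. For statement (\ref{prop:derivability:Regletrec:stRegletrec:item:i}), I would proceed by constructing a derivation of $\femptylabs{\allter}$ in $\ugstRegletrec$ (the case of $\ugRegletrec$ being entirely analogous, using $\Vacreg$ in place of $\Vacstreg$) in a bottom-up manner, driven by the syntactic structure of $\allter$. First I would describe the parsing process: starting from the conclusion $\femptylabs{\allter}$, one repeatedly applies the rule dictated by the outermost constructor of the current term occurrence under consideration---($\labscomp$) for an abstraction, ($\lappcomp$) for an application, ($\Vacstreg$) for a vacuous prefix binding, \FIXletrecmin\ for a $\sletrec$-expression, and the axiom ($\bvarax$) or a marked assumption at a variable. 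The crucial point is that in $\ugstRegletrec$ the side-condition on access-path guardedness has been \emph{dropped}, so the rule \FIXletrecmin\ may be applied at every $\sletrec$-occurrence without restriction, and marked assumptions $(\flabs{\vec{\avar}}{\aconstnamei{\arecvari{i}}})^{\amarkeri{i}}$ may always be discharged.

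The main obstacle for (\ref{prop:derivability:Regletrec:stRegletrec:item:i}) is establishing \emph{termination} of this bottom-up construction, i.e.\ that it yields a \emph{finite} closed derivation. I would argue this by assigning a well-founded measure to the formula occurrences that still require expansion (open marked assumptions or not-yet-processed term occurrences). Each occurrence of a recursion variable $\arecvari{i}$ under its governing binding group is replaced, upon applying \FIXletrecmin, by a fresh constant $\aconstnamei{\arecvari{i}}$, and every such constant occurrence at a leaf is immediately an axiom ($\bvarax$) or is discharged as a marked assumption by the relevant instance of \FIXletrecmin\ already present below it. Hence the construction never reintroduces an unprocessed $\sletrec$: the number of distinct syntactic subterm-shapes of $\allter$ (modulo the prefix) is finite, and every application of a rule strictly decreases a lexicographic measure built from $\pair{\text{number of unprocessed }\sletrec\text{-occurrences}}{\text{size of the current body}}$. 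Since there are only finitely many binding groups in $\allter$ and each contributes finitely many constants, the process closes after finitely many steps, discharging all assumptions, which establishes $\derivablein{\ugstRegletrec}{\femptylabs{\allter}}$ (and likewise for $\ugRegletrec$).

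For statement (\ref{prop:derivability:Regletrec:stRegletrec:item:ii}), the plan is to note that by (\ref{prop:derivability:Regletrec:stRegletrec:item:i}) there \emph{always} exists a derivation $\Deriv$ of $\femptylabs{\allter}$ in $\ugstRegletrec$, and that this derivation is \emph{uniquely determined} by the bottom-up parsing process, since every rule choice is forced by the outermost constructor and the only genuine branching (at ($\lappcomp$)) simply recurses into both subterms. Derivability in $\stRegletrec$ then holds if and only if this canonical $\Deriv$ is guarded on access path cycles, i.e.\ the single bottommost \FIXletrecmin\ can legitimately be an instance of \FIXletrec. The hard part will be bounding the cost of \emph{checking guardedness}. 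I would reduce this to a reachability/cycle-detection problem on the finite graph whose vertices are the relevant formula occurrences of $\Deriv$ (of which there are linearly many in the size of $\allter$, because each subterm and each binding equation contributes a bounded number of occurrences) and whose edges are the admissible access-path steps from items (\ref{access:path:item:ii})--(\ref{access:path:item:iv}) of Definition~\ref{def:Regletrec:stRegletrec}. Guardedness on access path cycles amounts to the absence of a cycle in this graph that avoids all \emph{guard} edges (instances of ($\labscomp$) or ($\lappcomp$)); equivalently, after contracting or removing guard edges, the residual graph must be acyclic. This acyclicity test is linear in the number of edges, and since the number of access-path edges is at most quadratic in $\length{\allter}$ (each recursion variable may link back to its defining binding group, giving at most $\length{\allter}$ targets per occurrence), the whole decision procedure runs in at most quadratic time, as claimed.
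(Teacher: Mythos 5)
Your overall architecture matches the paper's (bottom-up canonical derivation for (i); reduction of the guardedness check to cycle exploration for (ii)), but there is a genuine gap in your proof of (i). You assert that in $\ugstRegletrec$ ``the rule \FIXletrecmin\ may be applied at every $\sletrec$-occurrence without restriction, and marked assumptions may always be discharged''. This is not what \FIXletrecmin\ is: by Definition~\ref{def:Regletrec:stRegletrec} it drops only the \emph{second} side-condition of \FIXletrec\ (guardedness on access path cycles) and retains the first, namely that every $\flabs{\vec{y}}{\bter}$ on a thread from an open assumption $(\flabs{\vec{\avar}}{\aconstnamei{\arecvari{i}}})^{\amarkeri{i}}$ downwards satisfies $\length{\vec{y}} \ge \length{\vec{\avar}}$, and that discharged assumptions carry \emph{exactly} the prefix $\vec{\avar}$ of the conclusion. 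When the constant $\aconstnamei{\arecvari{i}}$ is reached deep inside a body, the prefix there has the form $\vec{\avar}\vec{\bvar}$, so a chain of ($\Vacstreg$)-instances is needed to shorten it back to $\vec{\avar}$ before the assumption can be formed; and if ($\Vacstreg$) were instead applied eagerly somewhere in between (removing a vacuous variable belonging to $\vec{\avar}$ itself), the prefix length would dip below $\length{\vec{\avar}}$ and the \FIXletrecmin-instance would be invalid. Your construction never fixes a placement discipline for ($\Vacstreg$) (vacuousness is not an ``outermost constructor''), so as written it does not establish that the side-condition is met. This is precisely the point the paper's proof of (i) takes care of, by restricting ($\Vacstreg$) to lazy instances placed immediately below marked assumptions. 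By contrast, the termination issue you devote a paragraph to is unproblematic: once recursion variables are replaced by fresh constants, the construction is plain structural recursion on the finite term, and your lexicographic measure is unnecessary.

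The same oversight propagates into (ii): your claim that the derivation $\Deriv$ is ``uniquely determined'' by the parsing process is false, since the placement of ($\Vacstreg$)-instances is a genuine degree of freedom; what is needed (and what the paper supplies) is the observation that this placement does not affect the existence of guards on access path cycles, so that checking one canonical derivation suffices. Your graph reduction itself (delete guard edges, test acyclicity of the access-path-reachable residual) is sound, but the size accounting is off in both directions: the derivation does \emph{not} have linearly many formula occurrences, because each marked assumption may sit above a ($\Vacstreg$)-chain of length up to the current prefix length, which is the actual source of the quadratic bound; whereas the back-edges from assumptions to the discharging \FIXletrec-instances number only one per assumption occurrence, not ``$\length{\allter}$ targets per occurrence''. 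Your final quadratic bound happens to coincide with the paper's, but for the wrong reasons.
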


\begin{proof}
  For (\ref{prop:derivability:Regletrec:stRegletrec:item:i}) note that 
  for every \lambdaletrecterm~$\allter$, a closed derivation $\Derivi{\allter}$ 
  with conclusion $\femptylabs{\allter}$ in $\ugstRegletrec$ 
  can be produced by a bottom-up construction following the term structure of~$\allter$.
  Hereby use of the rules ($\Vacstreg$) can be restricted to instances immediately below marked assumptions 
  such that, viewed from a (non-cyclic) path $\apath$ from the conclusion upwards to a marked assumption,
  these ($\Vacstreg$)\nb-instances are only introduced to shorten the frozen abstraction prefixes by all $\lambda$\nb-abstractions
  that have become frozen on $\apath$ (in order to conform to the side-condition on \FIXletrecmin\nb-instances 
  to have the same frozen abstraction prefix lengths in the discharged marked assumptions as in the conclusion and in the premises). 
  
  Now for (\ref{prop:derivability:Regletrec:stRegletrec:item:ii}) in order to decide derivability of $\femptylabs{\allter}$
  in $\stRegletrec$, it suffices to decide whether
  the derivation $\Derivi{\allter}$ in $\ugstRegletrec$ obtained as described above, 
  or its bottommost instance of \FIXletrecmin\ if there is any, is guarded on all of its access path cycles.
  (Note that in the construction of $\Derivi{\allter}$ only the freedom in placing instances of ($\Vacstreg$)
   has been used in a certain, namely lazy, way. The specific placement of instances of these rules does 
   not interfere with the existence or non-existence of guards, that is instances of $\labscomp$ or $\lappcomp$
   on cycles of access paths.)
  For this it remains to check whether every cycle on an access path in $\Derivi{\allter}$ has a guard. 
  This can be done by exploring the prooftree of $\Derivi{\allter}$ according to all possible access paths 
  (until for the first time a cycle is concluded) and checking for the existence of guards on cycles.
  
\end{proof}

We now can prove soundness and completeness of the proof system $\stRegletrec$ 
for the property of \lambdaletrecterms\ to unfold to infinite \lambdaterms.

\begin{theorem}\label{thm:Regletrec:stRegletrec}
  $\stRegletrec$ is sound and complete for the property of \lambdaletrec\nb-terms to unfold to an infinite \lambdaterm.
  That is,
  for every term $\allter\in\Ter{\lambdaletreccal}$ 
                                                           the following statements are equivalent:
  \begin{enumerate}[(i)]
    \item\label{thm:Regletrec:stRegletrec:item:i}
      $\allter$ expresses an infinite \lambdaterm.
    \item\label{thm:Regletrec:stRegletrec:item:iii}    
      $\derivablein{\stRegletrec}{\femptylabs{\allter}}$.
  \end{enumerate}
\end{theorem}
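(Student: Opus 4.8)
The plan is to assemble the equivalence from the two bridges established earlier, using \extscopedelimiting\ strategies as the common intermediary. Lemma~\ref{lem:unfolding:versus:scdelstrats} relates expressibility of an infinite \lambdaterm\ to \emph{productivity} of $\allter$ under \extscopedelimiting\ strategies, while Lemma~\ref{lem:Regletrec:stRegletrec} extracts from any closed derivation a strategy whose productivity mirrors a syntactic \emph{guardedness} property of the derivation. The glue between the two is item~(\ref{lem:Regletrec:stRegletrec:item:iv}) of Lemma~\ref{lem:Regletrec:stRegletrec}, which states that $\allter$ is $\astrati{\Deriv}$\nb-productive if and only if $\Deriv$ is guarded, i.e.\ is a genuine derivation in $\stRegletrec$ rather than merely in $\ugstRegletrec$. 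So the whole argument reduces to threading a single fixed strategy through these equivalences.

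First I would prove the direction from derivability to expressibility. Assume $\derivablein{\stRegletrec}{\femptylabs{\allter}}$ and fix a closed derivation $\Deriv$ in $\stRegletrec$ with conclusion $\femptylabs{\allter}$. Since $\stRegletrec$ arises from $\ugstRegletrec$ by strengthening \FIXletrecmin\ to the guarded rule \FIXletrec, $\Deriv$ is in particular a closed, guarded derivation in $\ugstRegletrec$. Lemma~\ref{lem:Regletrec:stRegletrec} then supplies an \extscopedelimiting\ strategy $\astrati{\Deriv}$ for $\stRegletrecARS$, and its guardedness yields, via item~(\ref{lem:Regletrec:stRegletrec:item:iv}), that $\allter$ is $\astrati{\Deriv}$\nb-productive. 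The implication from productivity to expressibility in Lemma~\ref{lem:unfolding:versus:scdelstrats} now shows that $\allter$ expresses an infinite \lambdaterm.

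For the converse I would run the same chain backwards, except that the initial derivation is now handed to us rather than assumed: by Proposition~\ref{prop:derivability:Regletrec:stRegletrec}, (\ref{prop:derivability:Regletrec:stRegletrec:item:i}), every \lambdaletrecterm\ admits a closed derivation $\Derivi{\allter}$ in $\ugstRegletrec$ with conclusion $\femptylabs{\allter}$. Assuming $\allter$ expresses an infinite \lambdaterm, the implication of Lemma~\ref{lem:unfolding:versus:scdelstrats} in its \emph{for every strategy} form gives $\astratplus$\nb-productivity of $\allter$ for all \extscopedelimiting\ strategies, in particular for the strategy $\astrati{\Derivi{\allter}}$ extracted from $\Derivi{\allter}$ through Lemma~\ref{lem:Regletrec:stRegletrec}. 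Applying item~(\ref{lem:Regletrec:stRegletrec:item:iv}) once more, this productivity forces $\Derivi{\allter}$ to be guarded, hence to be a derivation in $\stRegletrec$ (its bottommost \FIXletrecmin\nb-instance becomes a legitimate \FIXletrec\nb-instance), which witnesses $\derivablein{\stRegletrec}{\femptylabs{\allter}}$.

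The delicate point, and the one I expect to require the most care, is to keep \emph{the same} strategy on both sides of the equivalence in item~(\ref{lem:Regletrec:stRegletrec:item:iv}): Lemma~\ref{lem:Regletrec:stRegletrec} associates one fixed $\astrati{\Deriv}$ with each derivation, and so Lemma~\ref{lem:unfolding:versus:scdelstrats} must be invoked precisely in its universally-quantified form, so that it applies to that particular strategy and not to some unrelated one. It is also worth verifying explicitly that guardedness as the side-condition on the bottommost \FIXletrec\nb-instance coincides with the notion of guardedness used in item~(\ref{lem:Regletrec:stRegletrec:item:iv}), so that a guarded $\ugstRegletrec$\nb-derivation really is a $\stRegletrec$\nb-derivation; this is exactly the content of Definition~\ref{def:Regletrec:stRegletrec} together with the observation, recorded in the proof of Proposition~\ref{prop:derivability:Regletrec:stRegletrec}, that the lazy placement of $(\Vacstreg)$\nb-instances in $\Derivi{\allter}$ does not affect the presence of guards on access-path cycles.
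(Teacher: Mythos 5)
Your proposal is correct and follows essentially the same route as the paper's proof: both directions are assembled from Proposition~\ref{prop:derivability:Regletrec:stRegletrec},~(\ref{prop:derivability:Regletrec:stRegletrec:item:i}), the productivity--guardedness equivalence in Lemma~\ref{lem:Regletrec:stRegletrec},~(\ref{lem:Regletrec:stRegletrec:item:iv}), and Lemma~\ref{lem:unfolding:versus:scdelstrats} invoked in its universally quantified form for the extracted strategy $\astrati{\Derivi{\allter}}$. The only difference is presentational: the paper proves the implication from expressibility to derivability by contraposition (an unguarded $\ugstRegletrec$\nb-derivation yields a non-productive strategy, hence non-expressibility), whereas you run the same chain directly, which is a logically equivalent reformulation.
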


\begin{proof}
  For the proof of both directions of the equivalence, let $\allter\in\Ter{\lambdaletreccal}$. 
  
  For showing the implication 
  $\text{(\ref{thm:Regletrec:stRegletrec:item:i})}
     \Rightarrow
   \text{(\ref{thm:Regletrec:stRegletrec:item:iii})}$,
  we argue indirectly,
  and therefore assume that $\femptylabs{\allter}$ is not derivable in $\stRegletrec$.
  Then, while $\femptylabs{\allter}$ is not derivable in $\stRegletrec$,
  there is, by Proposition~\ref{prop:derivability:Regletrec:stRegletrec}, (\ref{prop:derivability:Regletrec:stRegletrec:item:i}),
  a derivation $\Deriv$ in $\ugstRegletrec$ that is not guarded.
  It follows by Lemma~\ref{lem:Regletrec:stRegletrec}, and in particular due to its item (\ref{lem:Regletrec:stRegletrec:item:iv}),
  that there is a \extscopedelimiting\ strategy $\astrati{\Deriv}$ for $\stRegARS$ 
  such that $\allter$ is not $\astrati{\Deriv}$\nb-productive. 
  Then it follows by Lemma~\ref{lem:unfolding:versus:scdelstrats}, using
  $(\text{\ref{lem:unfolding:versus:scdelstrats:item:i}}) \Rightarrow (\text{\ref{lem:unfolding:versus:scdelstrats:item:iv}})$ there,
  that $\allter$ does not unfold to an infinite \lambdaterm. 
  
  For showing the implication
      $\text{(\ref{thm:Regletrec:stRegletrec:item:iii})}
         \Rightarrow
       \text{(\ref{thm:Regletrec:stRegletrec:item:i})}$,
  let $\Deriv$ be a closed derivation in $\stRegletrec$ with conclusion $\femptylabs{\allter}$.        
  It follows by Lemma~\ref{lem:Regletrec:stRegletrec}
  that there is a \extscopedelimiting~strategy $\astrat$ for $\RegARS$ such that
  $\allter$ is $\astrat$\nb-productive.  
  Then Lemma~\ref{lem:unfolding:versus:scdelstrats} implies that $\allter$ unfolds to an infinite \lambdaterm.
\end{proof}

\begin{remark}[soundness and completeness for $\Regletrec$]
  Also the proof system $\Regletrec$ can be shown to be sound and complete for the property
  of \lambdaletrecterms\ to unfold to infinite \lambdaterms.
  To establish this in analogy with the route of proof we pursued here,
  a \CRS~$\ParseCRS$ similar to $\stParseCRS$ (see Definition~\ref{def:stParseCRS})
  could be defined by replacing the rule $(\srulebp{\sparse}{\scompressstreg})$
  by a rule $(\srulebp{\sparse}{\scompressreg})$ that can compress more abstraction prefixes,
  similar as the rule $(\srulebp{\sreg}{\scompressreg})$ of $\RegCRS$ can compress
  more abstraction prefixes than the rule $(\srulebp{\sstreg}{\scompressstreg})$ of $\stRegCRS$.
  Then furthermore also a lemma analogous to Lemma~\ref{lem:unfolding:versus:scdelstrats} 
  can be formulated, proved, and used in a similar way. 
\end{remark}

We now arrive at a theorem that states one direction of our main characterisation result
(Theorem~\ref{thm:ll-expressible:streg} in Section~\ref{sec:express}) 
that will link $\lambdaletrec$\nb-expressibility to strong regularity of infinite \lambdaterms.

\begin{theorem}\label{thm:ll-expressible:2:streg}
  Every \lambdaletrec\nb-expressible, infinite \lambdaterm\ is strongly regular.  
\end{theorem}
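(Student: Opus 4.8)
The plan is to chain together three results established above: the completeness of $\stRegletrec$ for unfolding (Theorem~\ref{thm:Regletrec:stRegletrec}), the extraction of a regular \emph{and} productive \extscopedelimiting\ strategy from a closed $\stRegletrec$\nb-derivation (Lemma~\ref{lem:Regletrec:stRegletrec}), and the projection of such a strategy, under the unfolding map, to a \extscopedelimiting\ strategy on the infinite \lambdaterm\ that preserves regularity (Lemma~\ref{lem:proj:scdelstrat:letrec:lambda}). So let $\aiter$ be \lambdaletrecexpressible, say $\aiter = \Unf{\allter}$ for some $\allter\in\Ter{\lambdaletreccal}$, that is, $\allter \unfoldomegared \aiter$.

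First I would invoke Theorem~\ref{thm:Regletrec:stRegletrec}, in the direction $(\text{\ref{thm:Regletrec:stRegletrec:item:i}}) \Rightarrow (\text{\ref{thm:Regletrec:stRegletrec:item:iii}})$: since $\allter$ expresses the infinite \lambdaterm~$\aiter$, there is a closed derivation~$\Deriv$ in $\stRegletrec$ with conclusion $\femptylabs{\allter}$. Being a derivation in $\stRegletrec$ rather than merely in $\ugstRegletrec$, $\Deriv$ is guarded on access path cycles, and it is in particular finite.

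Next I would apply Lemma~\ref{lem:Regletrec:stRegletrec} to $\Deriv$, obtaining a \extscopedelimiting\ strategy~$\astrati{\Deriv}$ for $\stRegletrecARS$. Two of its properties are decisive. By item~(\ref{lem:Regletrec:stRegletrec:item:iii}), finiteness of $\Deriv$ yields finiteness of $\gSTstregstrat{\astrati{\Deriv}}{\allter}$, so that $\allter$ is $\astrati{\Deriv}$\nb-regular; and by item~(\ref{lem:Regletrec:stRegletrec:item:iv}), guardedness of $\Deriv$ yields that $\allter$ is $\astrati{\Deriv}$\nb-productive. Then I would feed $\astrati{\Deriv}$ into Lemma~\ref{lem:proj:scdelstrat:letrec:lambda}: productivity of $\allter$ with respect to $\astrati{\Deriv}$ provides a \extscopedelimiting\ strategy~$\Check{\astrati{\Deriv}}$ for $\stRegARS$ under which $\Unf{\allter} = \aiter$ is $\Check{\astrati{\Deriv}}$\nb-regular. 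Since strong regularity of $\aiter$ is, by Definition~\ref{def:reg:streg}, exactly the existence of a \extscopedelimiting\ strategy for $\stRegARS$ making the set of generated subterms of $\aiter$ finite, the strategy $\Check{\astrati{\Deriv}}$ witnesses that $\aiter$ is strongly regular, which completes the proof.

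The main obstacle is entirely shouldered by Lemma~\ref{lem:proj:scdelstrat:letrec:lambda}: the delicate point is that the projection of the strategy along the infinitary unfolding $\sunfoldomeganfred$ is again a legitimate \extscopedelimiting\ strategy and, crucially, carries the finite set of $\astrati{\Deriv}$\nb-generated subterms of $\allter$ to a finite set of $\Check{\astrati{\Deriv}}$\nb-generated subterms of $\aiter$. This hinges on productivity, which ensures that the unfolding steps interspersed in the $\stRegletrecARS$\nb-rewrite sequences do not block the decomposition steps and that each generated subterm of $\allter$ has a well-defined unfolding, and on the commutation of $\sunfoldomeganfred$ with $\slabsdecompred$, $\slappdecompired{0}$, $\slappdecompired{1}$, and $\scompressstregred$ recorded in Lemma~\ref{lem:commute:unfoldomegared:stregred}, which supplies the stepwise correspondence between decomposition in $\stRegletrecARS$ and in $\stRegARS$. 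Once that correspondence is in place, the transfer of finiteness is immediate.
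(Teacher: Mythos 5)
Your proof is correct and follows exactly the paper's own route: chaining Theorem~\ref{thm:Regletrec:stRegletrec} (closed $\stRegletrec$\nb-derivation from expressibility), Lemma~\ref{lem:Regletrec:stRegletrec} (extracted \extscopedelimiting\ strategy under which $\allter$ is regular), and Lemma~\ref{lem:proj:scdelstrat:letrec:lambda} (projection to a strategy witnessing regularity of $\Unf{\allter}$ in $\stRegARS$). If anything, you are slightly more careful than the paper's proof, which tacitly uses the productivity hypothesis of Lemma~\ref{lem:proj:scdelstrat:letrec:lambda}, whereas you explicitly discharge it via guardedness and item~(\ref{lem:Regletrec:stRegletrec:item:iv}) of Lemma~\ref{lem:Regletrec:stRegletrec}.
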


\begin{proof}
  Let $\aiter$ be an infinite \lambdaterm\ that is expressible by a \lambdaletrec\nb-term $\allter$,
  that is, $\allter \unfoldomegared \aiter$ holds.
  Then by Theorem~\ref{thm:Regletrec:stRegletrec} there exists a closed derivation $\Deriv$ in $\stRegletrec$
  with conclusion $\femptylabs{\allter}$. 
  Now Lemma~\ref{lem:Regletrec:stRegletrec} guarantees a \extscopedelimiting~strategy $\astrati{\Deriv}$ for $\stRegletrecARS$
  such that $\allter$ is $\astrati{\Deriv}$\nb-regular. 
  Then Lemma~\ref{lem:proj:scdelstrat:letrec:lambda} gives an \extscopedelimiting~strategy $\Check{\astrat}_{\Deriv}$ for $\stRegletrecARS$
  such that $\aiter = \Unf{\allter}$ is $\Check{\astrat}_{\Deriv}$\nb-regular.
  It follows that $\aiter$ is strongly regular. 
\end{proof}

\section{Binding--Capturing Chains}
  \label{sec:chains}
In this section we develop a characterisation for strong regularity of
an infinite \lambdaterm\ by means of a property of the `\bindcaptchains'
occurring in the term.
This concept is related to the notions of scope and
\extscope\ as explained informally at the start of Section~\ref{sec:regular}. 
Binding--capturing chains occur whenever scopes overlap, and they are contained within \extscopes. 
First we give definitions for the concepts involved: binding,
capturing, and \bindcaptchains. Then we show that strong regularity of regular infinite \lambdaterms\
is equivalent to the absence of infinite \bindcaptchains.

We will define binding and capturing as relations on the positions of a \lambdaterm.
Binding relates an abstraction with the occurrences of the variable it binds.
If $\apos$ is the position of an abstraction ($\labs{x}{\dots}$) that
abstracts over $x$ and $\bpos$ is the position of an occurrence of $x$ that is bound by the abstraction,
then we will write $\apos \binds \bpos$ and say that $\apos$ `binds' $\bpos$.
Capturing relates an abstraction with variable occurrences that are free in it.
If $\apos$ is the position of an abstraction, and $\bpos > \apos$ is the position of a variable
that is free in the entire subterm at position $\apos$, 
then we will write $\apos \captures \bpos$ and say that $\apos$ `captures' $\bpos$.  
See Figure~\ref{fig:entangled_bc} for an illustration of these concepts.

\begin{figure}
\fig{entangled-scopes}
\hspace{2cm}
\fig{entangled-extscopes}
\hspace{2cm}
\fig{entangled-chains}
\caption{
\label{fig:entangled_bc}
  The term graph from Example~\ref{ex:entangled}
  with its overlapping \scope{s} (left), its nested \extscope{s} (middle), and with indicated binding~$\sbinds$ and capturing~$\siscapturedby$ links (right).}
\end{figure}

When we speak of positions in \lambda-terms (and thus iCRS-terms) we act on the
assumption that positions on iCRS-terms are an established concept as for
example in \cite{kete:simo:2011}. Note, however, that we deviate slightly from
the scheme there in addressing the arguments of an $\slappCRS$ by $0$ and $1$ instead of $1$ and $2$.

Binding--capturing chains have been used in \cite{endr:grab:klop:oost:2011}
to study $\alpha$\nb-avoiding rewrite sequences in a rewrite calculus for $\mu$\nb-unfolding.
They originate from the notion of `gripping' due to \cite{mell:96}, and
from techniques developed in \cite{oost:97} concerning the notion of `holding' of redexes
(which is shown there as being `parting' for \CRS{s}, that is, never relating two residuals of the same redex).

We now define `binding' and `capturing' formally as binary relations on the set of
positions of infinite \lambdaterms.

\begin{definition}[binding, capturing]\label{def:bind:iscapturedby}
  For every $\aiter\in\Ter{\inflambdacal}$ we define the binary relations $\sbinds$ and $\siscapturedby$ 
  on the set $\Positions{\aiter}$ of positions of $\aiter$:
  \begin{enumerate}[(i)]
    \item\label{def:bind:iscapturedby:item:bind}
      We say that a binder at position $\apos$ \emph{binds} a variable occurrence at position $\bpos$,
      symbolically $\apos \binds \bpos$,
      if $\apos$ is a binder position, and $\bpos$ a variable position in $\aiter$,
      and the binder at position $\apos$ binds the variable occurrence at position $\bpos$.
    \item\label{def:bind:iscapturedby:item:iscapturedby}
      We say that a variable occurrence at position $\bpos$ \emph{is captured by} a binder at position $\apos$, symbolically $\bpos \iscapturedby \apos$
      (and that a binder at position $\apos$ \emph{captures} a variable occurrence at position $\bpos$, symbolically $\apos \captures \bpos$),
      if $\bpos$ is a variable position and $\apos < \bpos$ a binder position in $\aiter$,
      and there is no binder position $\bposi{0}$ in $\aiter$ with $\apos \le \bposi{0}$ and $\bposi{0} \binds \bpos$.
  \end{enumerate}
\end{definition}

\begin{definition}[\bindcaptchain]\label{def:bind:capt:chain}
  Let $\aiter$ be an infinite \lambdaterm. 
  A finite or infinite sequence $\enumsequence{\aposi{0},\aposi{1},\aposi{2},\ldots}$
  in $\Positions{\aiter}$ is called a \emph{\bindcaptchain\ in $\aiter$} 
  if $\aposi{0}$ is the position of an abstraction in $\aiter$,
  and the positions in the sequence are alternatingly linked via binding and capturing, starting with a binding:
  $\aposi{0} \binds \aposi{1} \iscapturedby \aposi{2} \binds \ldots$.
\end{definition}

Binding--capturing chains are closely related to the notion of \scope\ and \extscope.
In order to establish this, we first give precise definitions of the notions of \scope\ and \extscope\
in terms of an `in-scope' rewrite relation on the positions of a \lambdaterm:
While the \scope\ of a binder position $\apos$ is the set of positions between $\apos$ and variable positions bound at $\apos$
(the positions directly reachable by a single `in-scope' step),
the \extscope\ of $\apos$ is the set of positions reachable by a finite number of successive `in-scope' steps. 

\begin{definition}[\scope\ and \extscope]\label{def:scope:extscope}
  Let $\aiter$ be an infinite \lambdaterm.
  On the set $\Positions{\aiter}$ of $\aiter$, the \emph{in-scope} rewrite relation $\sscopered$ (for $\aiter$) is defined by:
  \begin{align*}
    \apos \scopered \bpos   
      \;\;& \Longleftrightarrow\;\;
      \left\{\,
    \begin{aligned}
      & 
      \text{$\apos$ a binder position}
      \\[-0.5ex]
      & \logand
      (\exists \apos'\in\Positions{\aiter}) \;\: \apos \bindseq \apos'  \:\logand\: \apos \le q \le p' 
    \end{aligned}  
    \,\right\}
    & & 
    \text{(for all $\apos,\bpos\in\Positions{\aiter}$)}
  \end{align*}
  where $\sbindseq$ denotes the reflexive closure of the binding relation $\sbinds$ (for $\aiter$).
  For every position $\apos\in\Positions{\aiter}$, the \emph{\scope\ of $\apos$ in $\aiter$} and the \emph{\extscope of $\apos$ in $\aiter$}
  are defined as the following sets of positions in $\aiter$:
  \begin{align*}
    \scopeof{\aiter}{\apos} 
      & \defdby 
    \descsetexp{\bpos\in\Positions{\aiter}}{\apos \scopered \bpos} 
    &
    \extscopeof{\aiter}{\apos} 
      & \defdby 
    \descsetexp{\bpos\in\Positions{\aiter}}{\apos \scopemorestepred \bpos}
  \end{align*}
  (Note that the \scope{s} and \extscope{s} of non-binder positions are empty sets of positions.)
\end{definition}

Now the following proposition establishes that \bindcaptchains\ starting at a binder position $\apos$ 
span the space of positions of the \extscope\ of $\apos$.

\begin{proposition}
  Let $\aiter$ be an infinite \lambdaterm. Then for all positions $\apos, \bpos\in\Positions{\aiter}$ 
  the following statements hold:
  \begin{enumerate}[(i)]
    \item
      $ \apos \scopered \bpos \,\logand\, \text{$\bpos$ is a binder position}
          \;\Longleftrightarrow\;
        (\text{$\apos = \bpos$ binder position}) 
          \logor
            \apos \binrelcomp{\sbinds}{\siscapturedby} \bpos
        $.  
    \item  
      $ \apos \scopemorestepred \bpos 
          \;\Longleftrightarrow\;
        (\text{$\apos$ binder pos.})
          \logand
          (\exists \apos'\in\Positions{\aiter}) \,
            \apos \binrelcomp{(\binrelcomp{\sbinds}{\siscapturedby})^*}{\sbindseq} \apos'
              \logand
            \apos \le \bpos \le \apos'$.
    \item 
      $
      \scopeof{\aiter}{\apos} 
        \defdby 
      \descsetexp{\bpos\in\Positions{\aiter}}{(\text{$\apos$ binder pos.})
                                              \logand
                                              (\exists \apos'\in\Positions{\aiter}) \: \apos \bindseq \apos'  \:\logand\: \apos \le q \le p'} 
      $.
    \item
      $   
      \extscopeof{\aiter}{\apos} 
        \defdby 
      \descsetexpBig{\bpos\in\Positions{\aiter}}%
               {\parbox{200pt}{$
                 (\text{$\apos$ binder position}) \logand
                 \\[-0.25ex]
                 (\exists \apos'\in\Positions{\aiter}) \: 
                   \apos \binrelcomp{(\sbinrelcomp{\sbinds}{\siscapturedby)^*}}{\bindseq} \apos'  \:\logand\: \apos \le q \le p'
                               $}} 
      $.
  \end{enumerate}
\end{proposition}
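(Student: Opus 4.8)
The plan is to dispatch statements~(iii) and~(iv) almost for free and to concentrate the work on~(i) and its transitive-closure version~(ii). Statement~(iii) is just the definition of $\scopeof{\aiter}{\apos} = \descsetexp{\bpos}{\apos \scopered \bpos}$ with the defining clause of $\sscopered$ substituted in, so it needs no argument. Statement~(iv) likewise follows immediately once~(ii) is in hand, since $\extscopeof{\aiter}{\apos} = \descsetexp{\bpos}{\apos \scopemorestepred \bpos}$ and~(ii) rewrites the condition $\apos \scopemorestepred \bpos$ into exactly the set-builder clause of~(iv). Throughout I would lean on one elementary fact about positions read as strings: if $\apos$ and $\bpos$ are both prefixes of a common position $\cpos$, then they are comparable (one is a prefix of the other). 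This comparability is the bridge between the ``freeness in a subterm'' formulation of $\siscapturedby$ and the purely positional inequalities appearing in the definition of $\sscopered$.

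For statement~(i) I would argue both directions by unfolding $\sbinds$, $\siscapturedby$, and $\sscopered$. For ``$\Rightarrow$'', assuming $\apos \scopered \bpos$ with $\bpos$ a binder, there is a witness $\apos'$ with $\apos \bindseq \apos'$ and $\apos \le \bpos \le \apos'$; if $\apos = \bpos$ the first disjunct holds, and otherwise $\apos < \bpos$ forces $\apos \binds \apos'$ (the reflexive case would give $\bpos = \apos$), so that $\apos'$ is a variable occurrence bound by $\apos$ with $\bpos < \apos'$. Since the unique binder $\apos$ of $\apos'$ lies strictly above $\bpos$, the occurrence $\apos'$ is free in the subterm at $\bpos$, giving $\apos' \iscapturedby \bpos$ and hence $\apos \binds \apos' \iscapturedby \bpos$. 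For ``$\Leftarrow$'', the case $\apos = \bpos$ uses the reflexive witness $\apos' = \apos$, while from $\apos \binds \cpos \iscapturedby \bpos$ I would take $\apos' = \cpos$: the conditions $\bpos < \cpos$ and ``$\cpos$ free in the subterm at $\bpos$'', combined with $\apos$ being the unique binder of $\cpos$ and the comparability fact, force $\apos < \bpos$, so $\apos \le \bpos \le \cpos$ and $\apos \bindseq \cpos$ witness $\apos \scopered \bpos$. A by-product worth isolating is that $\apos \binrelcomp{\sbinds}{\siscapturedby} \bpos$ always entails $\apos < \bpos$, which I reuse in~(ii).

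For statement~(ii) the forward direction goes by induction on the length $m \ge 1$ of a chain $\apos = \apos^{(0)} \scopered \cdots \scopered \apos^{(m)} = \bpos$. The base case $m = 1$ is the definition, recorded with zero $\binrelcomp{\sbinds}{\siscapturedby}$-links. In the step, the second position $\apos^{(1)}$ is the source of a further $\sscopered$-step, hence a binder, so~(i) applies to $\apos \scopered \apos^{(1)}$: either $\apos = \apos^{(1)}$, in which case the induction hypothesis transfers verbatim, or $\apos \binrelcomp{\sbinds}{\siscapturedby} \apos^{(1)}$, which I prepend to the chain from the hypothesis, using $\apos < \apos^{(1)}$ to keep $\apos \le \bpos$. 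For the converse I would unwind $\apos \binrelcomp{(\binrelcomp{\sbinds}{\siscapturedby})^*}{\sbindseq} \apos'$ into a strictly increasing chain of binders $\apos = \cpos_0 \binrelcomp{\sbinds}{\siscapturedby} \cdots \binrelcomp{\sbinds}{\siscapturedby} \cpos_k \bindseq \apos'$ (strictness from the by-product of~(i)); then all $\cpos_i$ and $\bpos$ lie on the single prefix-path from $\apos$ to $\apos'$. Picking the largest $i$ with $\cpos_i \le \bpos$, I obtain $\cpos_i \scopered \bpos$ directly: if $i = k$ the witness $\apos'$ serves, and if $i < k$ then $\bpos < \cpos_{i+1} < \dpos$ for the intermediate variable $\dpos$ with $\cpos_i \binds \dpos$, so $\dpos$ serves as witness. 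Composing the first $i$ links (each an $\sscopered$-step by~(i)) with this final step yields $\apos \scopemorestepred \bpos$.

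The main obstacle I anticipate is not the induction but the repeated translation between the two guises of capturing: the definitional ``no binder $\bposi{0}$ with $\apos \le \bposi{0}$ and $\bposi{0} \binds \bpos$'' versus the geometric reading ``the binder of the occurrence lies strictly above the capturing abstraction''. Getting the strict-versus-non-strict inequalities right at each appeal to prefix-comparability, and in particular ruling out the degenerate coincidences $\apos = \bpos$ and $\apos = \cpos$, is where the argument must be watched most closely; once those positional lemmas are pinned down, statements~(i)--(iv) follow along the lines above.
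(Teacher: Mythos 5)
Your proposal is correct: all four statements follow exactly as you argue, with (iii) a literal unfolding of the definition of $\sscopered$, (iv) an immediate consequence of (ii), the two directions of (i) obtained by translating between the definitional clause of $\siscapturedby$ (no binder $\bposi{0}$ with $\apos \le \bposi{0}$ binding the occurrence) and prefix-comparability of positions below a common position, and (ii) by induction on chain length forwards and by your ``largest $i$ with $\cpos_i \le \bpos$'' selection backwards, where the strictness by-product $\apos \binrelcomp{\sbinds}{\siscapturedby} \bpos \Rightarrow \apos < \bpos$ does the needed bookkeeping. The paper states this proposition without any proof, treating it as a routine verification, and your argument is precisely the definitional unfolding it implicitly relies on.
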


Conversely, positions between a binder position $\aposi{0}$ and a position $\aposi{n}$ on a \bindcaptchain\ starting at $\aposi{n}$
are in the \extscope\ of $\aposi{0}$.

\begin{proposition}\label{prop:bindcaptchains}
  Let $\enumsequence{\aposi{0},\aposi{1},\aposi{2},\ldots}$ be
  a binding--capt.\ chain in an infinite \lambdaterm~$\aiter$.
  Then it holds that $\aposi{0} < \aposi{2} < \ldots$, and $\aposi{0} < \aposi{1}$, $\aposi{2} < \aposi{3}$, \ldots.
  Furthermore, for all $\bpos$ such that $\aposi{0} \le \bpos \le \aposi{n}$ for some $n\in\nats$ with $\aposi{n}$ a position on the chain
  it holds that $\bpos\in\extscopeof{\aiter}{\aposi{0}}$.
\end{proposition}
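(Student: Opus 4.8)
The plan is to separate the two assertions: the ordering relations among the chain positions, which I would prove directly from the definitions of $\sbinds$ and $\siscapturedby$, and the extended-scope membership, which I intend to reduce to item~(iv) of the preceding proposition.

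For the orderings I would rely on two immediate facts about positions in $\aiter$: if $\apos \binds \bpos$ then $\apos < \bpos$ (a binder is a proper ancestor of every occurrence it binds), and if $\bpos \iscapturedby \apos$ then $\apos < \bpos$ (the latter being part of the very definition of capturing). Reading these off the chain $\aposi{0} \binds \aposi{1} \iscapturedby \aposi{2} \binds \aposi{3} \iscapturedby \cdots$ gives $\aposi{2k} < \aposi{2k+1}$ and $\aposi{2k+2} < \aposi{2k+1}$ for every $k$; the first family is already one of the asserted inequalities. Since $\aposi{2k}$ and $\aposi{2k+2}$ are then both proper prefixes of $\aposi{2k+1}$, they are comparable as positions, so it remains only to exclude $\aposi{2k+2} \le \aposi{2k}$. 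Here I would invoke the defining clause of $\aposi{2k+1} \iscapturedby \aposi{2k+2}$, namely that no binder position $\bposi{0}$ with $\aposi{2k+2} \le \bposi{0}$ binds $\aposi{2k+1}$: as $\aposi{2k}$ is a binder with $\aposi{2k} \binds \aposi{2k+1}$, the inequality $\aposi{2k+2} \le \aposi{2k}$ would make $\aposi{2k}$ such a forbidden binder. Hence $\aposi{2k} < \aposi{2k+2}$, and iterating from $k=0$ yields $\aposi{0} < \aposi{2} < \aposi{4} < \cdots$. This strictness step, extracting $\aposi{2k} < \aposi{2k+2}$ from the capturing condition, is the only genuinely delicate point; the rest is an unwinding of definitions.

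For the extended-scope part I would apply item~(iv) of the preceding proposition, which says that $\bpos \in \extscopeof{\aiter}{\aposi{0}}$ exactly when $\aposi{0}$ is a binder position and some $\apos'$ satisfies both $\aposi{0} \binrelcomp{(\binrelcomp{\sbinds}{\siscapturedby})^*}{\sbindseq} \apos'$ and $\aposi{0} \le \bpos \le \apos'$. The position $\aposi{0}$ is a binder by the definition of a \bindcaptchain, so the task is to verify that every chain position $\aposi{n}$ is reachable from $\aposi{0}$ along $(\binrelcomp{\sbinds}{\siscapturedby})^*\sbindseq$ and then to take $\apos' := \aposi{n}$. Reachability follows by composing chain links: for even $n = 2k$ one applies $\binrelcomp{\sbinds}{\siscapturedby}$ a total of $k$ times to reach $\aposi{2k}$ and closes with the reflexive step of $\sbindseq$, whereas for odd $n = 2k+1$ one reaches $\aposi{2k}$ as before and then uses the link $\aposi{2k} \binds \aposi{2k+1}$, which lies in $\sbindseq$. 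Thus $\aposi{0} \binrelcomp{(\binrelcomp{\sbinds}{\siscapturedby})^*}{\sbindseq} \aposi{n}$ for every $n$. Given any $\bpos$ with $\aposi{0} \le \bpos \le \aposi{n}$ for a chain position $\aposi{n}$, I would then take $\apos' := \aposi{n}$ in item~(iv): the reachability just shown supplies the existential witness, and $\aposi{0} \le \bpos \le \aposi{n}$ holds by hypothesis, so $\bpos \in \extscopeof{\aiter}{\aposi{0}}$, as required. If one prefers to avoid the appeal to item~(iv), the same conclusion can be reached directly with the in-scope relation: from $\aposi{2k} \binds \aposi{2k+1}$ one gets $\aposi{2k} \scopered \bpos$ for every $\bpos$ with $\aposi{2k} \le \bpos \le \aposi{2k+1}$, and chaining these single in-scope steps along the increasing sequence $\aposi{0} < \aposi{2} < \cdots$ exhibits $\aposi{0} \scopemorestepred \bpos$ for each $\bpos$ in the stated interval.
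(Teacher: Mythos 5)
Your proof is correct; the paper states Proposition~\ref{prop:bindcaptchains} without proof (treating it, like the preceding characterisation proposition, as a direct consequence of Definitions~\ref{def:bind:iscapturedby}, \ref{def:bind:capt:chain} and \ref{def:scope:extscope}), and your argument is precisely the intended unwinding of those definitions. You correctly isolate the one genuinely delicate step --- obtaining $\aposi{2k} < \aposi{2k+2}$ by playing the no-intervening-binder clause of capturing against $\aposi{2k} \binds \aposi{2k+1}$, after noting that $\aposi{2k}$ and $\aposi{2k+2}$ are comparable as common prefixes of $\aposi{2k+1}$ --- and both of your routes to the \extscope{} membership claim (via item~(iv) of the preceding proposition with witness $\apos' = \aposi{n}$, or directly by chaining $\sscopered$-steps through the even positions) go through.
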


In order to study the relationship between rewrite sequences in $\stRegARS$ and
\bindcaptchains{} we first introduce a position-annotated variant of
$\stRegCRS$.

Here the idea is that, when a prefixed term $\flabs{\bvari{1}\ldots\bvari{n}}{\biter}$ is obtained 
as a generated subterm of an infinite \lambdaterm\ $\aiter$ by a $\sregred$ or $\sstregred$ rewrite sequence $\arewseq$ on $\femptylabs{\aiter}$,
then in the position-annotated rewrite system a prefixed term 
$\flabspos{\bvari{1},\ldots,\bvari{n}}{\aposi{1}\ldots\aposi{n}}{\bpos}{\biter}$
is obtained by an annotated version $\arewseq^{\spos}$ of the rewrite sequence $\arewseq$ 
such that: the positions $\aposi{1},\ldots,\aposi{n}$ are the positions in (the original \lambdaterm) $\aiter$
from which the bindings $\slabs\bvari{1}$, \ldots, $\slabs\bvari{n}$ in the abstraction prefix descend,
and $\bpos$ is the position in $\aiter$ of the body $\biter$ of the subterm generated by $\arewseq$. 

On $\Ter{\inflambdaprefixcal}$ we consider the following rewrite rules in informal notation:
  \begin{align*}
    (\rulepos{\slappdecompi{i}}): 
      & &
    \flabspos{\avari{1}\ldots\avari{n}}{\aposi{1},\ldots,\aposi{n}}{\bpos}{\lapp{\aiteri{0}}{\aiteri{1}}}  
      & {} \red 
    \flabspos{\avari{1}\ldots\avari{n}}{\aposi{1},\ldots,\aposi{n}}{\bpos i}{\aiteri{i}}
      & & \hspace*{-10ex} (i\in\{0,1\})
    \\
    (\rulepos{\slabsdecomp}):
      & &
    \flabspos{\avari{1}\ldots\avari{n}}{\aposi{1},\ldots,\aposi{n}}{\bpos}{\labs{\bvar}{\aiteri{0}}}  
      & {} \red 
    \flabspos{\avari{1}\ldots\avari{n}\bvar}{\aposi{1},\ldots,\aposi{n},\bpos}{\bpos 00}{\aiteri{0}}  
    \displaybreak[0]\\
    (\rulepos{\snlvarsucc}):
      & & 
    \flabspos{\avari{1}\ldots\avari{n+1}}{\aposi{1},\ldots,\aposi{n+1}}{\bpos}{\aiteri{0}}   
      & {} \red 
    \flabspos{\avari{1}\ldots\avari{n}}{\aposi{1},\ldots,\aposi{n}}{\bpos}{\aiteri{0}} 
    \\  
      & & & & &
      \hspace*{-35ex} 
                           (\text{if the binding $\slabs\avari{n+1}$ is vacuous})
    \displaybreak[0]\\  
    (\rulepos{\scompress}):
      & &
    \flabspos{\avari{1}\ldots\avari{n+1}}{\aposi{1},\ldots,\aposi{n+1}}{\bpos}{\aiteri{0}}   
      & {} \red 
    \flabspos{\avari{1}\ldots\avari{i-1}\avari{i+1}\ldots\avari{n+1}}{\aposi{1},\ldots\aposi{i-1},\aposi{i+1}\ldots\aposi{n+1}}{\bpos}{\aiteri{0}}   
    \\   
      & & & & &
      \hspace*{-35ex} 
                           (\text{if the binding $\slabs\avari{i}$ is vacuous})
  \end{align*}
  Note that the change of the term-body position in a $\lambda$\nb-decomposition step is
  motivated by the underlying \CRS\nb-notation for terms in $\inflambdaprefixcal$: when a 
  term $\labsCRS{\bvar}{\aiteri{0}}$ representing a $\lambda$\nb-abstraction starts at position $\bpos$,
  then its binding is declared at position $\bpos 0$, and its  body $\aiteri{0}$ starts at position $\bpos 00$. 
   
\begin{definition}[position-annotated variants $\RegposCRS$ and $\stRegposCRS$]\label{def:RegposCRS:stRegposCRS}
  The \CRS-sig\-na\-ture for $\lambdaprefixposcal$, the \lambdacalculus\ with position-annotated abstraction prefixes is given by
  $\siglpposcCRS = \siglcCRS 
                  \cup 
                \descsetexpnormalsize{ \sflabsposCRS{\aposi{1},\ldots,\aposi{n}}{\bpos} }{ \aposi{1},\ldots,\aposi{n},\bpos\in\setexp{0,1}^* }$         
  where all of the function symbols $\sflabsposCRS{\aposi{1},\ldots,\aposi{n}}{\bpos}$ are unary.  
  We consider the following \CRS-rules over $\siglpposcCRS\,$:
  \begin{align*}
    (\rulepos{\slappdecompi{i}}): 
      & &
    \flabsposCRS{\avari{1}\ldots\avari{n}}{\aposi{1},\ldots,\aposi{n}}{\bpos}{\lappCRS{\aiteri{0}}{\aiteri{1}}} 
      & {} \;\red\;   
    \flabsposCRS{\avari{1}\ldots\avari{n}}{\aposi{1},\ldots,\aposi{n}}{\bpos i}{\aiteri{i}}
    \\
    (\rulepos{\slabsdecomp}):
      & &
    \flabsposCRS{\avari{1}\ldots\avari{n}}{\aposi{1},\ldots,\aposi{n}}{\bpos}{\labsCRS{\bvar}{\aiteri{0}}} 
      & {} \;\red\; 
    \flabsposCRS{\avari{1}\ldots\avari{n}\bvar}{\aposi{1},\ldots,\aposi{n},\bpos}{\bpos 00}{\aiteri{0}}  
    \displaybreak[0]\\
    (\rulepos{\snlvarsucc}):
      & & 
    \flabsposCRS{\avari{1}\ldots\avari{n+1}}{\aposi{1},\ldots,\aposi{n+1}}{\bpos}{\aiteri{0}}  
      & {} \;\red\;   
    \flabsposCRS{\avari{1}\ldots\avari{n}}{\aposi{1},\ldots,\aposi{n}}{\bpos}{\aiteri{0}} 
    \displaybreak[0]\\  
    (\rulepos{\scompress}):
      & &
    \flabsposCRS{\avari{1}\ldots\avari{n+1}}{\aposi{1},\ldots,\aposi{n+1}}{\bpos}{\aiteri{0}}  
      & {} \;\red\; 
    \\[-0.5ex]
      & & & \hspace*{-16ex}  
    \flabsposCRS{\avari{1}\ldots\avari{i-1}\avari{i+1}\ldots\avari{n+1}}{\aposi{1},\ldots\aposi{i-1},\aposi{i+1}\ldots\aposi{n+1}}{\bpos}{\aiteri{0}} 
  \end{align*}
  By $\RegposzeroCRS$ we denote the \CRS{} with the rules
  $\rulepos{\slappdecompi{i}}$ and $\rulepos\slabsdecomp$.
  By $\RegposCRS$ ($\stRegposCRS$) we denote the CRS consisting of all the above rules except the rule
  $\rulepos\scompress$ ($\rulepos\snlvarsucc$).
  
  By $\RegposzeroARS$, $\RegposARS$ and $\stRegposARS$ we denote the infinite
  abstract rewriting systems (\ARS{s}) induced by the iCRSs derived from
  $\RegposzeroCRS$, $\RegposCRS$, $\stRegCRS$, restricted to position-annotated terms in
  $\Ter\inflambdaprefixcal$.
  
  By $\sdroppos$ we denote an operation that drops the position annotations in \CRS\nb-terms. 
\end{definition}

\begin{proposition}\label{prop:position:lifting:projecting}
  The following two statements hold:
  \begin{enumerate}[(i)]
    \item\label{prop:position:lifting:projecting:item:i}
      Every rewrite sequence: 
      \begin{equation}\label{eq1:prop:position:lifting:projecting}
        \arewseq \;\funin\;\; 
          \flabs{\vec{\avar}_0}{\aiteri{0}}
            \rred
          \flabs{\vec{\avar}_1}{\aiteri{1}}
            \rred
          \ldots
            \rred
          \flabs{\vec{\avar}_n}{\aiteri{n}}  
      \end{equation}
      (with $r\in\setexp{\sregzero,\, \sreg,\, \sstreg}$) in $\RegzeroARS$, $\RegARS$, or $\stRegARS$
      can be transformed (lifted) step by step,
      for given $\bposi{0}\in\positions$ and $\vec{\apos}_0\in\vecpositions$ 
      with $\length{\vec{\apos}_0} = \length{\vec{\avar}_0}$,
      by adding these and appropriate further position annotations 
      $\bposi{1}, \ldots, \bposi{n}\in\positions$ 
      and $\vec{\apos}_1, \ldots, \vec{\apos}_n\in\vec{\positions}$,
      to a rewrite sequence:
      \begin{equation}\label{eq2:prop:position:lifting:projecting}
        \arewseq^{\spos} \;\funin\;\; 
          \flabspos{\vec{\avar}_0}{\vec{\apos}_0}{\bpos_0}{\aiteri{0}}
            \rred
          \flabspos{\vec{\avar}_1}{\vec{\apos}_1}{\bpos_1}{\aiteri{1}}
            \rred
          \ldots
            \rred
          \flabspos{\vec{\avar}_n}{\vec{\apos}_n}{\bpos_n}{\aiteri{n}}  
      \end{equation}
      (with $r\in\setexp{\sregzero,\, \sreg,\, \sstreg}$) in $\RegposzeroARS$, $\RegposARS$, or $\stRegposARS$,
      accordingly, such that the result of dropping the position annotations in the prefix of
      $\Hat{\arewseq}$ is again~$\arewseq$. 
    \item\label{prop:position:lifting:projecting:item:ii}
      Conversely, every rewrite sequence $\brewseq$ in $\RegposzeroARS$, $\RegposARS$, or $\stRegposARS$
      of the form \eqref{eq2:prop:position:lifting:projecting} (with $r\in\setexp{\sregzero,\, \sreg,\, \sstreg}$)
      can be transformed step by step, by dropping the position annotations in the prefix, to a rewrite sequence~$\Check{\brewseq}$
      of the form
      \eqref{eq2:prop:position:lifting:projecting} (with $r\in\setexp{\sregzero,\, \sreg,\, \sstreg}$)
      in $\RegzeroARS$, $\RegARS$, or $\stRegARS$, respectively.    
  \end{enumerate}
  The transformations in (\ref{prop:position:lifting:projecting:item:i}) and (\ref{prop:position:lifting:projecting:item:ii})
  preserve eagerness/laziness of rewrite sequences.
\end{proposition}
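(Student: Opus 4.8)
The plan is to prove both directions by a straightforward step-by-step induction on the length of the rewrite sequence, exploiting the fact that the position-annotated rules in Definition~\ref{def:RegposCRS:stRegposCRS} are obtained from the plain rules in Definition~\ref{def:RegCRS:stRegCRS} simply by decorating the prefix symbols $\sflabsCRS{n}$ with positions, and that the rewrite behaviour on the \emph{underlying} (unannotated) term is literally unchanged. Concretely, I would first observe that $\sdroppos$ is a bijective correspondence between the function symbols $\sflabsposCRS{\aposi{1},\ldots,\aposi{n}}{\bpos}$ and the plain prefix symbols $\sflabsCRS{n}$, and that each annotated rule $\rulepos{r}$ projects under $\sdroppos$ exactly to the corresponding plain rule $\srulep{r}$. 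This makes both lifting and projection essentially a matter of tracking how the annotations are updated, while the matching of redexes and the non-prefix part of the term are identical on both sides.

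For the projection direction (\ref{prop:position:lifting:projecting:item:ii}) I would proceed first, since it is the easier one: given an annotated sequence $\brewseq$ of the form \eqref{eq2:prop:position:lifting:projecting}, applying $\sdroppos$ to every term yields a sequence of plain terms, and I would verify by a case distinction on the four rule schemes that dropping annotations turns each annotated step into exactly one plain step of the same sort. The only content here is checking that the side-conditions match: the vacuousness conditions on $\rulepos{\snlvarsucc}$ and $\rulepos{\scompress}$ refer only to the body $\aiteri{0}$ and the bound variable, which are untouched by $\sdroppos$, so the conditions transfer verbatim. This gives $\Check{\brewseq}$ with the desired shape.

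The lifting direction (\ref{prop:position:lifting:projecting:item:i}) is where the actual work lies, and I expect the main obstacle to be bookkeeping rather than any conceptual difficulty: I must show that, \emph{given} an initial annotation $\bposi{0}$ and $\vec{\apos}_0$ of the correct length, each plain step determines a \emph{unique} way to annotate the target so that the result is a valid annotated step. I would do this by induction on $n$, maintaining as an invariant that $\length{\vec{\apos}_i} = \length{\vec{\avar}_i}$. For the inductive step, a case distinction on the rule applied: an $\rulepos{\slappdecompi{i}}$ step extends the body position $\bpos$ to $\bpos i$ and leaves the prefix annotations fixed; a $\rulepos{\slabsdecomp}$ step appends the \emph{current} body position $\bpos$ to the prefix-position list and moves the body position to $\bpos 00$ (matching the underlying \CRS-convention, as noted after the informal rules); and the two scope-delimiting steps $\rulepos{\snlvarsucc}$ and $\rulepos{\scompress}$ simply delete the position annotation corresponding to the removed binding, preserving the length invariant. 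In each case the annotated step is uniquely forced, so $\arewseq^{\spos}$ is well-defined and, by the matching of rules already established, projects back to $\arewseq$ under $\sdroppos$.

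Finally, the preservation of eagerness/laziness is immediate once both directions are in place: whether a scope-delimiting step is applicable, and hence whether the eager (respectively lazy) restriction of Definition~\ref{def:eager:lazy:scopedelimiting:strategy:Reg:stReg} forbids or forces it, depends only on the underlying unannotated term and the vacuousness of a prefix binding, none of which is altered by adding or dropping position annotations. Thus a step is an eager-strategy (lazy-strategy) step in the annotated system if and only if its $\sdroppos$-image is, and the lifting/projection correspondence carries the strategy-conformance of every step along with it.
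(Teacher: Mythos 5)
Your proof is correct, and it coincides with what the paper itself relies on: the paper states Proposition~\ref{prop:position:lifting:projecting} without any explicit proof, treating it as evident from a rule-by-rule comparison of Definition~\ref{def:RegCRS:stRegCRS} with Definition~\ref{def:RegposCRS:stRegposCRS}, which is exactly the step-by-step induction (uniquely forced annotation updates for lifting, side-condition transfer under $\sdroppos$ for projection, and annotation-independence of applicability for eagerness/laziness) that you spell out.
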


As a direct consequence we obtain, 
for the eager and lazy \scopedelimiting\ (\extscopedelimiting) strategies, 
the following direct correspondence between generated subterms
in the \ARS~$\RegARS$ ($\stRegARS$) and in the position-annotated version $\RegposARS$ ($\stRegposARS$).

\begin{proposition}\label{prop:gST:gSTpos}
  For all infinite \lambdaterm{s}~$\aiter$ it holds:
  \begin{enumerate}[(i)]
    \item\label{prop:gST:gSTpos:item:reg}
      $\gSTregstrat{\astrat}{\aiter} =  \droppos{\gSTregposstrat{\astrat}{\ater}} $
      for the strategies $\astrat \in \setexp{\eagscdelstratreg,\,\lazyscdelstratreg}$ 
      on $\RegARS$, $\RegposARS$. 
    \item\label{prop:gST:gSTpos:item:streg} 
      $\gSTstregstrat{\astrat}{\aiter} =  \droppos{\gSTstregposstrat{\astrat}{\ater}} $
      for the strategies $\astrat \in \setexp{\eagscdelstratstreg,\,\lazyscdelstratstreg}$
      on $\RegARS$, $\RegposARS$. 
  \end{enumerate}
\end{proposition}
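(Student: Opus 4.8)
The plan is to derive the statement directly from Proposition~\ref{prop:position:lifting:projecting}, which already provides a step-by-step correspondence between rewrite sequences in $\RegARS$, $\stRegARS$ and in their position-annotated counterparts $\RegposARS$, $\stRegposARS$, and, crucially, records that this correspondence preserves eagerness and laziness. By Definition~\ref{def:STred:Reg:stReg} the generated subterms $\gSTstregstrat{\astrat}{\aiter}$ are precisely the $\stratmred{\astrat}$-reducts of $\femptylabs{\aiter}$, while $\gSTstregposstrat{\astrat}{\ater}$ consists of the $\stratmred{\astrat}$-reducts of the position-annotated initial term $\femptylabspos{\niks}{\rootpos}{\aiter}$ (empty abstraction prefix, body position $\rootpos$). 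Hence it suffices to match these two reduct sets up to $\sdroppos$. I would carry out only the strongly regular item~(\ref{prop:gST:gSTpos:item:streg}); item~(\ref{prop:gST:gSTpos:item:reg}) is entirely analogous, replacing $\stRegARS$, $\stRegposARS$ and $\eagscdelstratstreg,\lazyscdelstratstreg$ by $\RegARS$, $\RegposARS$ and $\eagscdelstratreg,\lazyscdelstratreg$.

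First I would record the one genuinely new observation, namely that position annotations are inert with respect to rule applicability: whether a given $\srulep{\slappdecompi{i}}$-, $\srulep{\slabsdecomp}$-, $\srulep{\scompress}$- or $\srulep{\snlvarsucc}$-step is available on a prefixed term, and of which kind it is, depends only on the body and on the variable names in the abstraction prefix (in particular, the vacuousness side-conditions speak only about occurrences of prefix variables in the body), and none of this data is touched by adding or dropping annotations. Consequently the eager and lazy scope$^{+}$-delimiting strategies, which by Definition~\ref{def:eager:lazy:scopedelimiting:strategy:Reg:stReg} are defined purely by the timing of applications of $\srulep{\snlvarsucc}$, single out corresponding steps on $\stRegposARS$ and on $\stRegARS$ under the lift/project maps. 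This is exactly the content of the closing sentence of Proposition~\ref{prop:position:lifting:projecting}, so beyond citing it no separate verification of strategy-compatibility is needed.

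With this in hand the two inclusions follow mechanically. For $\gSTstregstrat{\eagscdelstratstreg}{\aiter} \subseteq \droppos{\gSTstregposstrat{\eagscdelstratstreg}{\ater}}$, I take a generated subterm, that is, the final term of an $\eagscdelstratstreg$-rewrite sequence $\arewseq$ from $\femptylabs{\aiter}$, and lift $\arewseq$ by Proposition~\ref{prop:position:lifting:projecting}(\ref{prop:position:lifting:projecting:item:i}), starting from the annotation $\femptylabspos{\niks}{\rootpos}{\aiter}$, to an $\eagscdelstratstreg$-sequence $\arewseq^{\spos}$ in $\stRegposARS$ whose final term drops back onto the subterm; that final term witnesses membership in $\gSTstregposstrat{\eagscdelstratstreg}{\ater}$, and applying $\sdroppos$ returns the subterm. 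For the reverse inclusion I take an element of $\gSTstregposstrat{\eagscdelstratstreg}{\ater}$, namely the final term of an $\eagscdelstratstreg$-sequence in $\stRegposARS$ from $\femptylabspos{\niks}{\rootpos}{\aiter}$, and project it by Proposition~\ref{prop:position:lifting:projecting}(\ref{prop:position:lifting:projecting:item:ii}) to an $\eagscdelstratstreg$-sequence in $\stRegARS$ from $\femptylabs{\aiter}$, whose final term is the $\sdroppos$-image and lies in $\gSTstregstrat{\eagscdelstratstreg}{\aiter}$.

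I expect the only delicate point, and hence the main obstacle, to be the bookkeeping that the two reduct-set definitions are indexed by the same strategy \emph{name} on two \emph{different} abstract rewriting systems, together with the matching of the initial objects $\femptylabs{\aiter}$ and $\femptylabspos{\niks}{\rootpos}{\aiter}$ under $\sdroppos$. Once the inertness of annotations under rule applicability is noted, both facts reduce to invoking Proposition~\ref{prop:position:lifting:projecting} in its two directions, including its eagerness/laziness clause, and no residual combinatorial work remains.
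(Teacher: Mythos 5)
Your proof is correct and follows essentially the same route as the paper, which states Proposition~\ref{prop:gST:gSTpos} as a direct consequence of Proposition~\ref{prop:position:lifting:projecting}, relying precisely on its two directions (lifting and projecting of rewrite sequences) together with its eagerness/laziness-preservation clause. Your explicit unfolding of the two inclusions via Definition~\ref{def:STred:Reg:stReg}, and the observation that position annotations do not affect rule applicability, merely makes visible the bookkeeping the paper leaves implicit.
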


The proposition below characterises the binding relation $\sbinds$ and the capturing relation $\iscapturedby$ on the positions of an
infinite term $\aiter$ with the help of rewrite sequences with respect to $\sregzerored$
on $\femptylabspos{\niks}{\rootpos}{\aiter}$ in $\RegposzeroCRS$
down to `variable occurrences' $\flabspos{\vec{\avar}}{\vec{\apos}}{\bpos}{\avari{i}}$ in $\aiter$.

\begin{proposition}\label{prop:bind:iscapturedby}
  For all $\aiter\in\Ter{\inflambdacal}$ and positions $\apos,\bpos\in\Positions{\aiter}$ it holds:
  \begin{align*}
    \apos \binds \bpos 
       \;\;&\Longleftrightarrow\;\;
    \parbox[t]{284pt}{there is a rewrite sequence
                      $\femptylabspos{\tuple{\niks}}{\rootpos}{\aiter} 
                          \regzeromred 
                       \flabspos{\avari{1}\ldots\avari{n}}{\aposi{1},\ldots,\aposi{n}}{\bpos}{\avari{i}}$\\[0.25ex]
                      with $\avari{1}\ldots\avari{n}$ distinct, $i\in\setexp{1,\ldots,n}$, and such that $\apos = \aposi{i}$}
    \\[0.25ex]
    \bpos \iscapturedby \apos 
       \;\;&\Longleftrightarrow\;\;
    \parbox[t]{290pt}{there is a rewrite sequence
                      $\femptylabspos{\tuple{\niks}}{\rootpos}{\aiter}
                         \regzeromred 
                       \flabspos{\avari{1}\ldots\avari{n}}{\aposi{1},\ldots,\aposi{n}}{\bpos}{\avari{i}}$\\[0.25ex]
                      with $\avari{1}\ldots\avari{n}$ distinct, $i\in\setexp{1,\ldots,n}$, and
                      such that $\apos\in\setexp{\aposi{i+1},\ldots,\aposi{n}}$}                  
  \end{align*}
\end{proposition}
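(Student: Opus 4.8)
The plan is to prove both biconditionals by direct appeal to the semantics of the position-annotated rewrite system $\RegposzeroCRS$, reading off the binding and capturing relations from the way the abstraction prefix is populated along a $\sregzerored$-rewrite sequence descending to a variable occurrence. Recall that the rules of $\RegposzeroCRS$ are precisely the decomposition rules $\rulepos{\slappdecompi{i}}$ and $\rulepos{\slabsdecomp}$, with no prefix-compression steps at all. The key observation, established by Proposition~\ref{prop:position:lifting:projecting}, is that every such rewrite sequence lifts an ordinary $\sregzerored$-sequence together with the bookkeeping: when a $\rulepos{\slabsdecomp}$-step is applied to $\flabspos{\avari{1}\ldots\avari{n}}{\aposi{1},\ldots,\aposi{n}}{\bpos}{\labs{\bvar}{\aiteri{0}}}$, the new variable $\bvar$ and its \emph{originating position} $\bpos$ are appended to the prefix, while $\rulepos{\slappdecompi{i}}$-steps leave the prefix untouched. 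Since $\RegposzeroCRS$ has no compression rules, \emph{no} prefix binding is ever removed. Hence, along the unique path of decomposition steps from the root $\femptylabspos{\tuple{\niks}}{\rootpos}{\aiter}$ to a body of the form $\avari{i}$ at position $\bpos$, the prefix $\avari{1}\ldots\avari{n}$ with annotations $\aposi{1},\ldots,\aposi{n}$ records \emph{exactly} the sequence of abstraction positions strictly above $\bpos$ on the path from the root, in order of increasing depth.

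First I would make this invariant precise as a lemma: for every $\sregzerored$-reduct $\flabspos{\avari{1}\ldots\avari{n}}{\aposi{1},\ldots,\aposi{n}}{\bpos}{\biter}$ of $\femptylabspos{\tuple{\niks}}{\rootpos}{\aiter}$, the annotations $\aposi{1} < \aposi{2} < \ldots < \aposi{n} \le \bpos$ are precisely the positions of $\lambda$-abstractions on the root path to $\bpos$ in $\aiter$ (with $\aposi{j}$ the position of the binder whose body the path enters at the $j$-th $\rulepos{\slabsdecomp}$-step), and $\avari{j}$ is the name bound by that abstraction. This follows by a straightforward induction on the length of the rewrite sequence, using that $\rulepos{\slabsdecomp}$ sends the body-position $\bpos$ to $\bpos 00$ and appends $\pair{\bvar}{\bpos}$, while $\rulepos{\slappdecompi{i}}$ sends $\bpos$ to $\bpos i$ and preserves the prefix. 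Because the decomposition into subterms along a fixed path is deterministic up to the left/right choice at applications, for any given variable-occurrence position $\bpos$ in $\aiter$ there is a unique such rewrite sequence, and its final prefix is canonically determined.

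With the invariant in hand, both equivalences are immediate from Definition~\ref{def:bind:iscapturedby}. For binding: $\apos \binds \bpos$ holds iff $\bpos$ is a variable occurrence bound by the abstraction at $\apos$; by the invariant this abstraction must appear in the prefix as some $\aposi{i}$, and the variable name $\avari{i}$ at $\bpos$ agrees with the name bound at $\apos$, which is exactly the condition that the body reached is $\avari{i}$ with $\apos = \aposi{i}$. Conversely, any rewrite sequence ending in body $\avari{i}$ with $\apos = \aposi{i}$ witnesses that the variable at $\bpos$ is the one declared by the prefix entry coming from the abstraction at $\apos$, i.e.\ $\apos \binds \bpos$. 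For capturing: by Definition~\ref{def:bind:iscapturedby}(\ref{def:bind:iscapturedby:item:iscapturedby}), $\bpos \iscapturedby \apos$ means $\apos < \bpos$ is a binder position and no binder position $\bposi{0}$ with $\apos \le \bposi{0}$ binds $\bpos$; since the body at $\bpos$ is the variable $\avari{i}$ bound by $\aposi{i}$, the binders $\aposi{i+1},\ldots,\aposi{n}$ are exactly those binder positions $\ge$ the position $\aposi{i}$ binding $\bpos$ but lying on the root path above $\bpos$ and not binding it (their bound variable differs from $\avari{i}$). Thus $\apos$ captures $\bpos$ iff $\apos \in \setexp{\aposi{i+1},\ldots,\aposi{n}}$, matching the stated condition.

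The main obstacle I anticipate is the careful verification that the annotation invariant correctly tracks the \emph{no-capture} condition for the capturing case: one must confirm that the prefix entries $\aposi{i+1},\ldots,\aposi{n}$ are precisely the binder positions on the root path that lie strictly between $\aposi{i}$ and $\bpos$ and whose bound variable is \emph{not} $\avari{i}$, and that these are exactly the positions $\apos$ with $\aposi{i} \le \apos$ for which no tighter binder position binds $\bpos$. This requires ruling out the possibility that some $\aposi{j}$ with $j > i$ rebinds $\avari{i}$ (shadowing), which would make $\aposi{j}$, not $\aposi{i}$, the actual binder of $\bpos$; but that is precluded because the invariant identifies $\avari{i}$ as \emph{the} name bound at $\aposi{i}$ and the body reached is literally $\avari{i}$, so $\aposi{i}$ is by construction the innermost binder of that occurrence. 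Once this shadowing point is dispatched, the remaining reasoning is purely a matter of matching the set-membership conditions in the two definitions, which I expect to be routine.
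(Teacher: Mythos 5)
Your proof is correct, and in fact the paper states Proposition~\ref{prop:bind:iscapturedby} without any proof at all, treating it as a direct consequence of the rules of $\RegposzeroCRS$ together with Definition~\ref{def:bind:iscapturedby}. Your explicit prefix-annotation invariant (proved by induction on the length of the $\sregzerored$-sequence, using that $\RegposzeroCRS$ has no compression rule, that decomposition along a fixed path to $\bpos$ is deterministic, and that the distinctness of the prefix variables disposes of the shadowing concern so that $\aposi{i}$ is the unique binder of $\bpos$, whence the capturing condition reduces to $\apos\in\setexp{\aposi{i+1},\ldots,\aposi{n}}$) is precisely the justification the paper leaves implicit.
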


The following lemmas describe the close relationship between, on the one hand,
\bindcaptchains\ in an infinite \lambdaterm~$\aiter$, and on the other hand,
$\stregred$\nb-rewrite sequences on $\femptylabspos{\niks}{\rootpos}{\aiter}$
in $\stRegposCRS$ that are guided by the eager \extscopedelimiting\ strategy.

\begin{lemma}[binding--capturing chains]\label{lem:bind:capt:chains:stRegpos} 
  For all $\aiter\in\iTer{\lambdacal}$ 
                                       it holds:
  \begin{enumerate}[(i)]
    \item\label{lem:bind:capt:chains:stRegpos:item:i}
      If $\femptylabspos{\niks}{\rootpos}{\ater}
            \stratmred{\eagscdelstratstreg}
          \flabspos{\avari{0}\ldots\avari{n_1}}{\aposi{0},\ldots,\aposi{n_1}}{\bpos}{\!\bter}
            \stratmred{\eagscdelstratstreg}
          \flabspos{\avari{0}\ldots\avari{n_1}\ldots\avari{n_2}}{\aposi{0},\ldots,\aposi{n_1},\ldots,\aposi{n_2}}{\bpos'}{\!\cter}$,
      then there exist $\bposi{n_1 +1},\ldots,\bposi{n_2}\in\Positions{\ater}$ such that
      $\aposi{n_1} \binds \bposi{n_1 +1} \iscapturedby \aposi{n_1 +1} \binds \ldots \binds \bposi{n_2} \iscapturedby \aposi{n_2}$.
    \item\label{lem:bind:capt:chains:stRegpos:item:ii} 
      If $\aposi{0} \binds \bposi{1} \iscapturedby \aposi{1} \binds \ldots \binds \bposi{n} \iscapturedby \aposi{n}$
      is a \bindcaptchain\ in $\aiter$,
      then there exist  $\cposi{0},\ldots,\cposi{m},\dpos\in\Positions{\ater}$ with $m\ge n$ such that
      $\femptylabspos{\niks}{\rootpos}{\ater} 
         \stratmred{\eagscdelstratstreg} 
       \flabspos{\avari{0}\ldots\avari{m}}{\cposi{0},\ldots,\cposi{m}}{\dpos}{\biter}$
      and furthermore
      $\aposi{0},\ldots,\aposi{n} \in \setexp{\cposi{0},\ldots,\cposi{m}}$
      such that $\aposi{0} < \aposi{1} < \ldots < \aposi{n} = \cposi{m}$.
  \end{enumerate}
\end{lemma}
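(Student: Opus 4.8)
The statement to prove, Lemma~\ref{lem:bind:capt:chains:stRegpos}, establishes a two-way correspondence between binding--capturing chains in an infinite $\lambda$-term $\aiter$ and eager $\sstregred$-rewrite sequences in the position-annotated system $\stRegposCRS$. Let me sketch how I would prove each direction.

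\textbf{Setup and guiding intuition.} The plan is to exploit Proposition~\ref{prop:bind:iscapturedby}, which characterises the binding relation $\sbinds$ and the capturing relation $\siscapturedby$ in terms of $\sregzerored$-rewrite sequences in $\RegposzeroCRS$ down to variable occurrences $\flabspos{\avari{1}\ldots\avari{n}}{\aposi{1},\ldots,\aposi{n}}{\bpos}{\avari{i}}$. The key observation connecting this with the eager $\scompressstregred$-strategy is that in $\stRegposCRS$, the position annotations in the abstraction prefix record exactly the binder positions from which the frozen abstractions descend, while the body position $\bpos$ tracks the current location in $\aiter$. When the eager $\eagscdelstratstreg$ strategy applies $\rulepos{\snlvarsucc}$ as soon as a trailing prefix binding becomes vacuous, the prefix at any reachable term retains precisely those binders whose $\extscope$ still encompasses the current body position. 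This is the bridge to Proposition~\ref{prop:bindcaptchains}, which already tells us that positions between $\aposi{0}$ and $\aposi{n}$ on a chain lie in $\extscopeof{\aiter}{\aposi{0}}$.

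\textbf{Direction (i): from rewrite sequences to chains.} First I would analyse the segment of the eager rewrite sequence from $\flabspos{\avari{0}\ldots\avari{n_1}}{\ldots}{\bpos}{\bter}$ to $\flabspos{\avari{0}\ldots\avari{n_1}\ldots\avari{n_2}}{\ldots}{\bpos'}{\cter}$. Since the prefix grows from length $n_1$ to (at least) $n_2$ while retaining $\aposi{0},\ldots,\aposi{n_1}$ as a stable initial segment, no $\scompressstregred$-step removes any of $\aposi{n_1},\ldots,\aposi{n_2}$ along this segment --- under the eager strategy $\scompressstregred$ only deletes trailing vacuous bindings. I would argue that each newly added prefix position $\aposi{n_1+j}$ arises from a $\slabsdecompred$-step at a binder position whose binding reaches down into the region where the next abstraction $\aposi{n_1+j+1}$ is split off. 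Concretely, using Proposition~\ref{prop:bind:iscapturedby} one locates intermediate variable positions $\bposi{n_1+j}$ such that $\aposi{n_1+j-1} \binds \bposi{n_1+j} \iscapturedby \aposi{n_1+j}$: the binding link holds because $\aposi{n_1+j-1}$ remains in the prefix (hence its scope has not yet closed), and the capturing link holds because $\aposi{n_1+j}$ is the next binder down whose abstraction has been freshly entered. Assembling these links for $j = 1,\ldots,n_2-n_1$ yields the alternating chain $\aposi{n_1} \binds \bposi{n_1+1} \iscapturedby \aposi{n_1+1} \binds \ldots \iscapturedby \aposi{n_2}$.

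\textbf{Direction (ii): from chains to rewrite sequences.} For the converse, given a chain $\aposi{0} \binds \bposi{1} \iscapturedby \aposi{1} \binds \ldots \iscapturedby \aposi{n}$, I would build the eager rewrite sequence incrementally. By Proposition~\ref{prop:bindcaptchains} we already know $\aposi{0} < \aposi{1} < \ldots < \aposi{n}$ and that all intermediate positions lie in $\extscopeof{\aiter}{\aposi{0}}$. The plan is to descend from $\femptylabspos{\niks}{\rootpos}{\ater}$ along the path in $\aiter$ toward $\aposi{n}$, applying $\slabsdecompred$ at each binder encountered (which pushes its position into the prefix) and $\slappdecompired{i}$ to navigate applications. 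The crucial point is that under the eager strategy, none of the binder positions $\aposi{0},\ldots,\aposi{n}$ gets prematurely removed by $\scompressstregred$: each $\aposi{k}$ still captures the variable at $\bposi{k+1}$ lying deeper on the path, so its binding remains non-vacuous with respect to the remaining descent, and eager $\scompressstregred$ only fires on genuinely trailing vacuous bindings. This guarantees that when we reach a term with body position $\dpos$ just past $\aposi{n}$, the prefix positions $\cposi{0},\ldots,\cposi{m}$ include $\aposi{0},\ldots,\aposi{n}$ in increasing order with $\aposi{n} = \cposi{m}$. The inequality $m \ge n$ accommodates additional captured binders interleaved along the path that are not themselves chain links.

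\textbf{Main obstacle.} I expect the delicate point to be the precise bookkeeping in direction (ii) showing that no chain position is deleted by an eager $\scompressstregred$-step before the descent reaches $\aposi{n}$. One must verify that for each $k < n$, the binder $\aposi{k}$ genuinely captures a variable occurrence strictly below the point at which the next chain binder $\aposi{k+1}$ is entered --- otherwise eagerness would close $\aposi{k}$'s $\extscope$ too soon and drop it from the prefix. This requires carefully relating the capturing link $\bposi{k+1} \iscapturedby \aposi{k}$ (which places $\bposi{k+1}$ in $\extscopeof{\aiter}{\aposi{k}}$, via Proposition~\ref{prop:bindcaptchains}) to the eager deletion discipline, and checking via Proposition~\ref{prop:bind:iscapturedby} that the relevant prefix binding is non-vacuous at every intermediate term. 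The commutation and projection properties from Proposition~\ref{prop:position:lifting:projecting} will be used throughout to transfer freely between the annotated and unannotated systems and to justify that the constructed annotated sequence projects correctly.
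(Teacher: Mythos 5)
The paper states this lemma without a proof, so there is nothing to match your proposal against line by line; judged on its own merits, your overall architecture is the natural one, and your direction~(i) is essentially correct: under the eager strategy, at the moment a $\slabsdecompred$-step pushes $\aposi{j+1}$ onto the prefix, the then-topmost entry $\aposi{j}$ must be non-vacuous in the current body, which at that moment is precisely the abstraction at $\aposi{j+1}$; this yields an occurrence $\bposi{j+1} > \aposi{j+1}$ with $\aposi{j} \binds \bposi{j+1} \iscapturedby \aposi{j+1}$, and since $(\rulepos{\snlvarsucc})$ pops only the \emph{last} prefix entry while body positions strictly advance (so a popped position can never be re-pushed), every entry of the final prefix was pushed with exactly the preceding final entry on top, which assembles the chain.

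In direction~(ii), however, your justification of the crucial step rests on a false claim. You assert that each chain binder $\aposi{k}$ ``remains non-vacuous with respect to the remaining descent'', and propose to check ``that the relevant prefix binding is non-vacuous at every intermediate term''. That is not so: the witnessing occurrence $\bposi{k+1}$ lies below $\aposi{k+1}$ but need not lie on the path to $\aposi{n}$; as soon as the descent has passed $\aposi{k+1}$ and, at some application, takes the branch not containing $\bposi{k+1}$, the binding $\slabs\avari{k}$ may well be vacuous in the current body. What protects $\aposi{k}$ from that point on is not non-vacuousness but the LIFO discipline of $(\rulepos{\snlvarsucc})$: only the topmost prefix entry can be removed, so $\aposi{k}$ can only be popped after $\aposi{k+1}$ has been popped. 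The repair is a minimal-counterexample argument: consider the \emph{earliest} pop of a chain position $\aposi{k}$, $k < n$, occurring before $\aposi{n}$ is pushed. If the body position is still $\le \aposi{k+1}$, the current body contains $\bposi{k+1}$, which is bound at $\aposi{k}$ strictly above the body position, contradicting vacuousness; if it is $> \aposi{k+1}$, then $\aposi{k+1}$ was pushed and must have been popped earlier, contradicting minimality (and for $k = n-1$ this case cannot arise before the push of $\aposi{n}$). Your sketch does mention that $\scompressstregred$ ``only fires on genuinely trailing vacuous bindings'', but you use this alongside the pointwise non-vacuousness claim rather than as the load-bearing mechanism, so the verification you propose would fail at exactly the point you yourself flag as the main obstacle.
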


%

\begin{lemma}\label{lem:fin:bind:capt:chains}
  Let $\aiter$ be an infinite \lambdaterm\ such that
  $\femptylabs{\aiter} \stratmred{\eagscdelstratstreg} \flabs{\avari{0}\ldots\avari{n}}{\biter}$. 
  Then $\aiter$ contains a \bindcaptchain\ of length $n$.
\end{lemma}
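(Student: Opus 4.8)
The plan is to reduce the statement to Lemma~\ref{lem:bind:capt:chains:stRegpos}, (\ref{lem:bind:capt:chains:stRegpos:item:i}), which already extracts a \bindcaptchain\ from a position-annotated $\eagscdelstratstreg$-rewrite sequence. The only genuine work is to manufacture such an annotated sequence from the given unannotated one, and to pick out the two intermediate terms to which the lemma applies.

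First I would lift the given rewrite sequence to the position-annotated setting. By Proposition~\ref{prop:position:lifting:projecting}, (\ref{prop:position:lifting:projecting:item:i}), applied with the empty prefix annotation and the root body position $\rootpos$, the sequence $\femptylabs{\aiter} \stratmred{\eagscdelstratstreg} \flabs{\avari{0}\ldots\avari{n}}{\biter}$ lifts, step by step, to a rewrite sequence $\femptylabspos{\niks}{\rootpos}{\aiter} \stratmred{\eagscdelstratstreg} \flabspos{\avari{0}\ldots\avari{n}}{\aposi{0},\ldots,\aposi{n}}{\bpos}{\biter}$ in $\stRegposARS$, for suitable positions $\aposi{0},\ldots,\aposi{n},\bpos \in \Positions{\aiter}$. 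Since the proposition preserves eagerness, this is again a rewrite sequence guided by $\eagscdelstratstreg$. Moreover, by the form of the annotated rule $(\rulepos{\slabsdecomp})$, each $\aposi{i}$ is the position in $\aiter$ of the abstraction whose binder was moved into the prefix, hence in particular a binder position.

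Next I would locate the intermediate term whose prefix has length one. The abstraction prefix behaves as a stack: in $\stRegposARS$ the only prefix-changing rules are $(\rulepos{\slabsdecomp})$, which pushes a binder onto the top, and $(\rulepos{\snlvarsucc})$, which pops the topmost binder (the application rules $(\rulepos{\slappdecompi{i}})$ leave the prefix untouched). Since $\avari{0}$ occupies the bottom of the final prefix $\avari{0}\ldots\avari{n}$, it is never popped; hence at the moment it is pushed the stack below it must have been empty, so directly after the corresponding $(\rulepos{\slabsdecomp})$-step the sequence passes through a term $\flabspos{\avari{0}}{\aposi{0}}{\bpos''}{\bter}$ for some body $\bter$ and position $\bpos''$. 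This splits the lifted sequence as $\femptylabspos{\niks}{\rootpos}{\aiter} \stratmred{\eagscdelstratstreg} \flabspos{\avari{0}}{\aposi{0}}{\bpos''}{\bter} \stratmred{\eagscdelstratstreg} \flabspos{\avari{0}\ldots\avari{n}}{\aposi{0},\ldots,\aposi{n}}{\bpos}{\biter}$, which is exactly the shape required by Lemma~\ref{lem:bind:capt:chains:stRegpos}, (\ref{lem:bind:capt:chains:stRegpos:item:i}), with $n_1 = 0$ and $n_2 = n$ (the annotation $\aposi{0}$ being preserved along the second segment, so the first prefix is an initial segment of the second, as the lemma demands).

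Finally, that lemma yields positions $\bposi{1},\ldots,\bposi{n}\in\Positions{\aiter}$ with $\aposi{0} \binds \bposi{1} \iscapturedby \aposi{1} \binds \ldots \binds \bposi{n} \iscapturedby \aposi{n}$. As $\aposi{0}$ is a binder position and the links alternate binding/capturing starting with a binding, this sequence is a \bindcaptchain\ in $\aiter$ in the sense of Definition~\ref{def:bind:capt:chain}, connecting the $n+1$ successive binder positions $\aposi{0},\ldots,\aposi{n}$ and therefore of length $n$. I expect the only real obstacle to be the stack argument that pins down the length-one intermediate term together with the bookkeeping that $\aposi{0}$ survives along the second segment; everything else is a direct invocation of the cited results, and the step that requires a moment's care is matching the convention that ``length $n$'' counts the $n$ binding links (equivalently, the $n+1$ abstraction positions $\aposi{0},\ldots,\aposi{n}$) of the extracted chain.
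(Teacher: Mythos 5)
Your proof is correct and follows essentially the same route as the paper's own proof: lift the given sequence to $\stRegposARS$ by Proposition~\ref{prop:position:lifting:projecting}, (\ref{prop:position:lifting:projecting:item:i}), and then invoke Lemma~\ref{lem:bind:capt:chains:stRegpos}, (\ref{lem:bind:capt:chains:stRegpos:item:i}). The only difference is that you spell out, via the stack behaviour of the prefix under $(\srulep{\slabsdecomp})$ and $(\srulep{\snlvarsucc})$, the intermediate term whose prefix consists of just $\avari{0}$ (the instantiation $n_1 = 0$, $n_2 = n$, with the annotation $\aposi{0}$ preserved along the second segment), a detail the paper's two-line proof leaves implicit.
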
  

\begin{proof}
  By Proposition~\ref{prop:position:lifting:projecting}, (\ref{prop:position:lifting:projecting:item:i}), 
  the assumed rewrite sequence
  $\femptylabs{\aiter} \stratmred{\eagscdelstratstreg} \flabs{\avari{0}\ldots\avari{n}}{\biter}$
  in $\stRegARS$
  can be lifted
  to a rewrite sequence 
  $\femptylabspos{\niks}{\rootpos}{\aiter} \stratred{\eagscdelstratstreg} \flabspos{\avari{0}\ldots\avari{n}}{\aposi{0},\ldots,\aposi{n}}{\bpos}{\biter}$
  in $\stRegposARS$.
  Then by Lemma~\ref{lem:bind:capt:chains:stRegpos}, (\ref{lem:bind:capt:chains:stRegpos:item:i}), 
  there exists a \bindcaptchain\ of length~$n$. 
\end{proof}

The notion of scope and \extscope\ helps to understand the relationship between
\bindcaptchains{} and rewrite sequences in $\stRegCRS$. A \bindcaptchain\
corresponds to the overlap of scopes, or in other words the nesting of
\extscope{s}.
 An infinite \bindcaptchain\ thus corresponds to a infinitely
deep nesting of \extscope{s} and therefore to an unrestricted growth of the
prefix in certain rewriting sequences in $\stRegCRS$.

\begin{lemma}[infinite \bindcaptchains]\label{lem:inf:bind:capt:chains}
  Let $\aiter$ be an infinite \lambdaterm, and let $\arewseq$ be an infinite rewrite sequence
  in $\RegARS$ w.r.t.\ the eager \extscopedelimiting\ strategy~$\eagscdelstratstreg\,$:
  \begin{equation}\label{eq:lem:inf:bind:capt:chains}
    \arewseq \;\funin\;\; 
       \femptylabs{\aiter} = \flabs{\vec{\avar}_0}{\aiteri{0}}
         \stratred{\eagscdelstratstreg}
       \flabs{\vec{\avar}_1}{\aiteri{1}}
         \stratred{\eagscdelstratstreg}
       \ldots
         \stratred{\eagscdelstratstreg}
       \flabs{\vec{\avar}_i}{\aiteri{i}}  
         \stratred{\eagscdelstratstreg}
       \ldots
  \end{equation}
  Furthermore suppose that 
  for $\spl \funin \nats\to\nats$, $i\mapsto\pl{i}\defdby \length{\vec{\avar}_i}$,
  the prefix length function associated with $\arewseq$,
  there exists a lower bound $\slb \funin\nats\to\nats$ such
  that $\slb$ is non-decreasing, and 
  $\lim_{n\to\infty} \lb{n} = \infty$. 
  Then there exists an infinite \bindcaptchain\ in $\aiter$.
\end{lemma}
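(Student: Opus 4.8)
The strategy is to use the infinite rewrite sequence $\arewseq$ to construct an infinite binding--capturing chain by means of K\H{o}nig's Lemma. The hypothesis that the prefix length $\pl{i} = \length{\vec{\avar}_i}$ has a lower bound $\lb$ that is non-decreasing and tends to infinity guarantees that ever-longer prefixes are eventually reached and, crucially, never again shortened below the relevant threshold. By Lemma~\ref{lem:fin:bind:capt:chains}, every term $\flabs{\avari{0}\ldots\avari{n}}{\biter}$ reached by $\eagscdelstratstreg$ from $\femptylabs{\aiter}$ witnesses a binding--capturing chain of length $n$. Since $\pl{i}$ (being bounded below by $\lb$) is unbounded along $\arewseq$, we obtain chains of arbitrarily large length; the goal is to amalgamate these into a single infinite chain.

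\textbf{First steps.} First I would lift $\arewseq$, using Proposition~\ref{prop:position:lifting:projecting}, (\ref{prop:position:lifting:projecting:item:i}), to a position-annotated rewrite sequence
\[
  \arewseq^{\spos} \;\funin\;\;
    \femptylabspos{\niks}{\rootpos}{\aiter}
      \stratred{\eagscdelstratstreg}
    \flabspos{\vec{\avar}_1}{\vec{\apos}_1}{\bpos_1}{\aiteri{1}}
      \stratred{\eagscdelstratstreg}
    \ldots
\]
in $\stRegposARS$, so that each term carries the genuine positions $\vec{\apos}_i$ in $\aiter$ from which its prefix abstractions descend. Then I would apply Lemma~\ref{lem:bind:capt:chains:stRegpos}, (\ref{lem:bind:capt:chains:stRegpos:item:i}): for any two terms on $\arewseq^{\spos}$ where the second extends the prefix of the first, say from $\aposi{0},\ldots,\aposi{n_1}$ to $\aposi{0},\ldots,\aposi{n_1},\ldots,\aposi{n_2}$, there is a binding--capturing segment $\aposi{n_1} \binds \bposi{n_1+1} \iscapturedby \aposi{n_1+1} \binds \ldots \iscapturedby \aposi{n_2}$ linking the newly added prefix positions. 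The key observation is that in the eager $\eagscdelstratstreg$ strategy, prefix abstractions are removed only from the end (via $\scompressstreg$-steps), so the positions annotating the prefix behave like a stack: the position sequence $\vec{\apos}_i$ is always a prefix-closed initial segment of a growing structure on the common prefixes of the positions encountered.

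\textbf{The main obstacle and K\H{o}nig's Lemma.} The hard part is organising these finite segments into a single infinite chain, since different stretches of $\arewseq$ produce segments that may not obviously concatenate. The plan is to build a finitely branching tree whose vertices are the binder positions appearing as prefix annotations on $\arewseq^{\spos}$, with an edge from $\aposi{k}$ to $\aposi{k+1}$ whenever consecutive prefix positions are linked by a $\binrelcomp{\sbinds}{\siscapturedby}$-step as furnished by Lemma~\ref{lem:bind:capt:chains:stRegpos}. Because $\lb$ is non-decreasing and diverges, for each threshold $l$ there is an index beyond which every term on $\arewseq$ has prefix length $\ge l$; hence the annotation at each stack-level $l$ below some position stabilises once passed, and the positions at level $l$ form finitely many distinct values (as $\aiter$ has, at each level, finitely many binder positions reachable). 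This gives a connected, infinite, finitely branching tree. By K\H{o}nig's Lemma (in the form stated in the Preliminaries) there is an infinite simple path from the root, which by Proposition~\ref{prop:bindcaptchains} and Lemma~\ref{lem:bind:capt:chains:stRegpos} reads off as an infinite binding--capturing chain $\aposi{0} \binds \aposi{1} \iscapturedby \aposi{2} \binds \ldots$ in $\aiter$. The delicate point to get right is showing that the levels stabilise and that the branching is finite --- this is exactly where the divergence-to-infinity of $\lb$ (rather than mere unboundedness of $\pl$) is used, ensuring that prefix positions, once established at a given depth, are not repeatedly deleted and replaced, so that the tree is well-defined and its vertices at each depth are finite in number.
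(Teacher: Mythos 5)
Your proposal assembles the same two key ingredients as the paper's proof --- the lift of $\arewseq$ to a position-annotated sequence $\arewseq^{\spos}$ via Proposition~\ref{prop:position:lifting:projecting},~(\ref{prop:position:lifting:projecting:item:i}), and the chain-segment Lemma~\ref{lem:bind:capt:chains:stRegpos},~(\ref{lem:bind:capt:chains:stRegpos:item:i}) --- and you correctly identify that divergence of $\slb$ (not mere unboundedness of $\spl$) is what makes prefix positions stabilise. Where you genuinely diverge is the final amalgamation step, and there your route is both heavier and under-specified, while the paper's is direct: the paper defines the stabilisation function $\sst \funin \nats\to\nats$, $\st{l} \defdby \min \descsetexpnormalsize{i}{\lb{i}\ge l}$ (well-defined since $\lim_{n\to\infty}\lb{n}=\infty$, non-decreasing, unbounded), observes that any step in $\RegposARS$ not shortening the prefix preserves the position annotations (your ``stack'' remark, made precise), and concludes that the annotation vector at time $\st{i}$ has the form $\tuple{\aposi{1,\st{1}},\aposi{2,\st{2}},\ldots,\aposi{i,\st{i}},\ldots}$, so each level $l$ carries a \emph{unique} eventual position $\aposi{l,\st{l}}$; a single application of Lemma~\ref{lem:bind:capt:chains:stRegpos},~(\ref{lem:bind:capt:chains:stRegpos:item:i}) to the segments between successive stabilisation times then yields the infinite chain $\aposi{1,\st{1}} \binds \bposi{2} \iscapturedby \aposi{2,\st{2}} \binds \ldots$ outright. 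Once your stabilisation remark is made precise in this way, your tree has exactly one relevant vertex per level from some index on, so K\H{o}nig's Lemma resolves no genuine branching and is dead weight.

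Two smaller points you would still have to repair if you insisted on the tree. First, the tree as described is not obviously connected: there is no $\binrelcomp{\sbinds}{\siscapturedby}$ edge into level-$1$ positions, so you need an artificial root and an argument that pre-stabilisation excursions do not create disconnected components; and you should note explicitly that $\apos \binrelcomp{\sbinds}{\siscapturedby} \apos'$ is a property of the fixed term $\aiter$ alone, so that edges witnessed at different stages of $\arewseq^{\spos}$ compose into one chain. Second, your opening appeal to Lemma~\ref{lem:fin:bind:capt:chains} --- ``chains of arbitrarily large length, to be amalgamated'' --- is a red herring: arbitrarily long finite \bindcaptchains\ need not share a common infinite extension, and the paper never argues via chain lengths; it is the positional stabilisation that carries the proof, exactly as in your final paragraph.
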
  

\begin{proof}
  Let $\aiter$, $\arewseq$, $\spl$, $\slb$ as in the assumption of the lemma. 
  We first note that 
  by Proposition~\ref{prop:position:lifting:projecting}, (\ref{prop:position:lifting:projecting:item:i}),
  the rewrite sequence~$\arewseq$ can be lifted to one with position annotations: 
  \begin{equation}\label{eq1:prf:lem:inf:bind:capt:chains}
    \arewseq^{\spos} \;\funin\;\; 
       \femptylabspos{\niks}{\rootpos}{\aiter} = \flabspos{\vec{\avar}_0}{\niks}{\rootpos}{\aiteri{0}}
         \stratred{\eagscdelstratreg}
       \flabspos{\vec{\avar}_1}{\vec{\apos}_1}{\bpos_1}{\aiteri{1}}
         \stratred{\eagscdelstratreg}
       \ldots
         \stratred{\eagscdelstratreg}
       \flabspos{\vec{\avar}_i}{\vec{\apos}_i}{\bpos_i}{\aiteri{i}}  
         \stratred{\eagscdelstratreg}
       \ldots
  \end{equation}
  where, for all $i\in\nats$,
  $\bposi{i}$ are positions and $\vec{\apos}_i = \tuple{\aposi{1},\ldots,\aposi{m_i}}$ vectors of positions,
  with $m_i\in\nats$.
  
  Next we define the function:
  \begin{equation*}
    \sst \funin \nats\to\nats, \;\;\;
                l \mapsto \st{l} \defdby \min \descsetexp{i}{\lb{i}\ge l}   
  \end{equation*}
  which is well-defined, since $\lim_{n\to\infty}{\lb{n} = \infty}$.
  It describes a prefix stabilisation property:
  for every $l\in\nats$, it gives
  the first index $i = \st{l}$ with the property that 
  the prefix of $\flabs{\vec{\avar}_i}{\aiteri{i}}$ contains more than $l$ abstractions,
  and (since $\slb$ is non-decreasing, and a lower bound for $\spl$) 
  that from $i$ onward
  the $l$\nb-th abstraction never disappears again, for $j\ge i$,
  in terms $\flabs{\vec{\avar}_j}{\aiteri{j}}$ that follow in $\arewseq$
  as well as in $\arewseq^{\spos}$.
  Furthermore, $\sst$ is non-decreasing, as an easy consequence of its definition,
  and unbounded:
  if $\sst$ were bounded by $M\in\nats$, then 
  \mbox{$\forall l\in\nats \, \exists i\in\nats. \,i\le M \logand \lb{i}\ge l$} would follow,
  which cannot be the case since $\setexp{\lb{0},\ldots,\lb{M}}$ is a finite set.
  By non-decreasingness and unboundedness it also follows that $\lim_{n\to\infty}{\st{n} = \infty}$. 
   
  So when the rewrite sequence $\arewseq^{\spos}$ is split into segments indicated in: 
  \begin{equation*}
       \femptylabspos{\niks}{\rootpos}{\aiter} 
         \stratmred{\eagscdelstratreg}
         \ldots
         \stratmred{\eagscdelstratreg}
       \flabspos{\vec{\avar}_{\st{i}}}{\vec{\apos}_{\st{i}}}{\bpos_{\st{i}}}{\aiteri{\st{i}}}
         \stratmred{\eagscdelstratreg}
       \flabspos{\vec{\avar}_{\st{i+1}}}{\vec{\apos}_{\st{i+1}}}{\bpos_{\st{i+1}}}{\aiteri{\st{i+1}}}  
         \stratmred{\eagscdelstratreg}
       \ldots
  \end{equation*}
  then it follows that all terms of the sequence after 
  $\flabspos{\vec{\avar}_{\st{i}}}{\vec{\apos}_{\st{i}}}{\bpos_{\st{i}}}{\aiteri{\st{i}}}$
  have an abstraction prefix of length greater or equal to $i$, for all $i\in\nats$.
  
  Now note that in a step 
  $\flabspos{\vec{\avar}}{\tuple{\aposi{1},\ldots,\aposi{n}}}{\bpos}{\citer} 
     \red 
   \flabspos{\vec{\avar}'}{\tuple{\apos'_{1},\ldots,\apos'_{n'}}}{\bpos'}{\citer'}$
  in $\RegposARS$
  that does not shorten the abstraction prefix it holds that $n\le n'$ and
  $\apos'_{1} = \apos_{1}$, \ldots, $\apos'_{n} = \apos_{n}$,
  that is, positions in the vector in the subscript of the abstraction prefix are preserved. 
  As a consequence it follows for the rewrite sequence $\arewseq^{\spos}$
  that, for all $i\in\nats$ and $j > i$, the position vector $\vec{\apos}_{\st{j}}$ in the
  term $\flabspos{\vec{\avar}_{\st{j}}}{\vec{\apos}_{\st{j}}}{\bpos_{\st{j}}}{\aiteri{\st{j}}}$ 
  is of the following form:
  \begin{align*}
    \vec{\apos}_{\st{j}} 
      =
    \tuple{\aposi{1,\st{i}},\ldots,\aposi{i,\st{i}},\aposi{j,\st{j}},\ldots,\aposi{j,m_j}}   
  \end{align*}
  This implies furthermore that for all $i\in\nats$:
  \begin{align*}
    \vec{\apos}_{\st{i}} 
      =
    \tuple{\aposi{1,\st{1}},\aposi{2,\st{2}},\ldots,\aposi{i,\st{i}},\ldots,\aposi{i,m_i}}   
  \end{align*}
  Then Lemma~\ref{lem:bind:capt:chains:stRegpos}, (\ref{lem:bind:capt:chains:stRegpos:item:i}), 
  implies the existence of positions $\bposi{2},\bposi{3},\ldots$ such that:
  \begin{equation*}
    \aposi{1,\st{1}} 
      \binds \bposi{2} \iscapturedby 
    \aposi{2,\st{2}}
       \binds \bposi{3} \iscapturedby  
    \ldots
       \iscapturedby
    \aposi{i,\st{i}}
       \binds \bposi{i+1} \iscapturedby
    \aposi{i+1,\st{i+1}} 
       \binds  
    \ldots \;
      \punc{.}
  \end{equation*}
  and thereby, an infinite \bindcaptchain\ in $\aiter$.
\end{proof}
  
Now we formulate and prove the main theorem of this section,
which applies the concept of \bindcaptchain\
to pin down, among all infinite \lambdaterms\ that are regular, 
those that are strongly regular.

\begin{theorem}\label{thm:streg:fin:bind:capt:chains}
  A regular infinite \lambdaterm\ is strongly regular 
    if and only if
  it contains only finite \bindcaptchains.
\end{theorem}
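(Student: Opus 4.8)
The plan is to prove the two implications separately, throughout replacing the existential quantification over (\extscope)delimiting strategies in the definitions of regularity and strong regularity by the \emph{eager} strategies $\eagscdelstratreg$ and $\eagscdelstratstreg$, which is legitimate by Proposition~\ref{prop:eager:strat:in:def:reg:streg}. Thus ``regular'' is read as: $\gSTregstrat{\eagscdelstratreg}{\aiter}$ is finite, and ``strongly regular'' as: $\gSTstregstrat{\eagscdelstratstreg}{\aiter}$ is finite. For the implication from strong regularity to finiteness of all \bindcaptchains\ I would argue by contraposition; this direction does not even use the regularity hypothesis. Suppose $\aiter$ contained an infinite \bindcaptchain. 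Then each finite initial segment of length $n$ is itself a \bindcaptchain, so Lemma~\ref{lem:bind:capt:chains:stRegpos}, (\ref{lem:bind:capt:chains:stRegpos:item:ii}), yields for every $n$ an $\eagscdelstratstreg$-rewrite sequence in $\stRegposARS$ from $\femptylabspos{\niks}{\rootpos}{\aiter}$ to a term whose abstraction prefix has length at least $n+1$ (the chain positions $\aposi0 < \cdots < \aposi n$ all lie in the prefix). Dropping the position annotations (Proposition~\ref{prop:position:lifting:projecting}, (\ref{prop:position:lifting:projecting:item:ii})) gives $\eagscdelstratstreg$-reducts of $\femptylabs{\aiter}$ of unbounded prefix length, so $\gSTstregstrat{\eagscdelstratstreg}{\aiter}$ is infinite, which by Proposition~\ref{prop:eager:strat:in:def:reg:streg}, (\ref{prop:eager:strat:in:def:reg:streg:item:ii}), contradicts strong regularity.

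For the converse I would again argue by contraposition: assuming that $\aiter$ is regular but \emph{not} strongly regular, I would construct an infinite \bindcaptchain. The set $\gSTstregstrat{\eagscdelstratstreg}{\aiter}$ is then infinite, and the induced sub-ARS is finitely branching by Proposition~\ref{prop:rewprops:RegCRS:stRegCRS}, (\ref{prop:rewprops:RegCRS:stRegCRS:item:v}). The decisive use of regularity is to bound the \emph{levels}: every $\eagscdelstratstreg$-generated subterm $s = \flabs{\vec{\bvar}}{\biter}$ satisfies $\compressregnf{s}\in\gSTregstrat{\eagscdelstratreg}{\aiter}$. Indeed, by $\sstregred\subseteq\sregred$ the reduct $s$ of $\femptylabs{\aiter}$ lies on a $\sregmred$-sequence, which by Lemma~\ref{lem:projection:RegCRS:stRegCRS}, (\ref{lem:projection:RegCRS:stRegCRS:item:reg}), projects over $\scompressregred$ onto a $\scompressregred$-eager (i.e.\ $\eagscdelstratreg$-) sequence whose corresponding term is exactly the unique $\scompressregred$-normal form $\compressregnf{s}$. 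Since $\gSTregstrat{\eagscdelstratreg}{\aiter}$ is finite by regularity, Proposition~\ref{prop:compress:prefix:RegARS}, (\ref{prop:compress:prefix:RegARS:item:ii}), then shows that for every $k$ only finitely many generated subterms have prefix length at most $k$.

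With finite out-degree, finitely many generated subterms at each prefix length, and infinitely many in total, I would extract an infinite $\eagscdelstratstreg$-rewrite sequence along which the prefix length tends to infinity. Since each vertex has finite out-degree, the set of generated subterms reachable within $d$ steps from $\femptylabs{\aiter}$ is finite for every $d$; as there are infinitely many of them, there exist \emph{simple} directed reduction paths of arbitrary length. The tree of finite simple $\eagscdelstratstreg$-reduction paths from $\femptylabs{\aiter}$ is therefore an infinite, finitely branching tree, so by K\H{o}nig's Lemma it has an infinite branch $\arewseq$, an infinite $\eagscdelstratstreg$-rewrite sequence through pairwise distinct terms. Because each prefix-length level is finite, $\arewseq$ meets every level only finitely often, so its prefix-length function diverges to $\infty$ and admits a non-decreasing lower bound tending to $\infty$. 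Lemma~\ref{lem:inf:bind:capt:chains} then produces an infinite \bindcaptchain\ in $\aiter$, contradicting the assumption, and completing the contrapositive.

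The main obstacle is this last combinatorial extraction. A naive application of K\H{o}nig's Lemma to the unfolding tree of \emph{all} reduction paths fails, because such a branch may loop forever around a grounded cycle of bounded prefix length (and cycles genuinely occur, by Proposition~\ref{prop:grounded:cycle}); restricting to \emph{simple} paths is precisely what forces the prefix length to diverge, and this restriction is sound only because finite out-degree guarantees arbitrarily long simple paths. The second delicate point is the identification $\compressregnf{s}\in\gSTregstrat{\eagscdelstratreg}{\aiter}$, which amounts to reading the projection Lemma~\ref{lem:projection:RegCRS:stRegCRS} as asserting that the $\scompressregred$-eager projection of a $\sregmred$-sequence passes exactly through the $\scompressregred$-normal forms of its terms. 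Once these two points and the resulting finiteness of levels are secured, the remaining steps are routine.
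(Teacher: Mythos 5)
Your proposal is correct and follows essentially the same route as the paper's own proof: the direction ``$\Rightarrow$'' rests on the chain-to-prefix-length correspondence for the eager strategy in the position-annotated system (Lemma~\ref{lem:bind:capt:chains:stRegpos}, (\ref{lem:bind:capt:chains:stRegpos:item:ii})), and the direction ``$\Leftarrow$'' extracts via K\H{o}nig's Lemma an infinite $\sstratred{\eagscdelstratstreg}$\nb-rewrite sequence through pairwise distinct terms, forces its prefix lengths to diverge by combining regularity with Lemma~\ref{lem:projection:RegCRS:stRegCRS} and Proposition~\ref{prop:compress:prefix:RegARS}, (\ref{prop:compress:prefix:RegARS:item:ii}), and concludes with Lemma~\ref{lem:inf:bind:capt:chains}, exactly as the paper does (your globalised ``level-finiteness'' claim and the restriction to simple paths are only mild repackagings of the paper's sequence-specific argument). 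One small repair: the terms of the projected $\scompressregred$\nb-eager sequence in Lemma~\ref{lem:projection:RegCRS:stRegCRS} need not themselves be $\scompressregred$\nb-normal forms (the trailing $\sregzerored$\nb-step can create fresh vacuous bindings), so your phrase ``passes exactly through the $\scompressregred$-normal forms'' is not literally accurate --- but this is harmless, since Proposition~\ref{prop:compress:prefix:RegARS}, (\ref{prop:compress:prefix:RegARS:item:ii}), only requires each generated subterm of bounded prefix length to have \emph{some} $\scompressregmred$\nb-reduct in the finite set $\gSTregstrat{\eagscdelstratreg}{\aiter}$, which is precisely what the projection lemma provides (and $\compressregnf{s}\in\gSTregstrat{\eagscdelstratreg}{\aiter}$ does in fact hold, as the eager strategy continues with $\scompressregred$\nb-steps from the projected term).
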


By adding the statement of Proposition~\ref{prop:def:reg:streg}, (\ref{prop:def:reg:streg:item:i}),
we obtain the following accentuation. 

\begin{corollary}\label{cor:thm:streg:fin:bind:capt:chains}
A infinite \lambdaterm\ is strongly regular if and only if it is regular and contains only finite \bindcaptchains.
\end{corollary}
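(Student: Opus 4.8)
The plan is to derive the corollary directly from its two immediate ingredients, namely Theorem~\ref{thm:streg:fin:bind:capt:chains} (which, under a standing regularity assumption, equates strong regularity with the absence of infinite \bindcaptchains) and Proposition~\ref{prop:def:reg:streg}, (\ref{prop:def:reg:streg:item:i}) (every strongly regular infinite \lambdaterm\ is also regular). No new combinatorial work is required; the task is purely to splice these statements together so that the regularity hypothesis appearing as a side-condition in Theorem~\ref{thm:streg:fin:bind:capt:chains} is absorbed into an unconditional biconditional.

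For the direction ``$\Rightarrow$'', I would let $\aiter$ be a strongly regular infinite \lambdaterm. First, Proposition~\ref{prop:def:reg:streg}, (\ref{prop:def:reg:streg:item:i}), yields that $\aiter$ is regular. Having thereby secured the regularity hypothesis of Theorem~\ref{thm:streg:fin:bind:capt:chains}, I can apply the ``only if'' half of that theorem to $\aiter$ (which is both regular and, by assumption, strongly regular) to conclude that $\aiter$ contains only finite \bindcaptchains. Taken together, this establishes that $\aiter$ is regular and free of infinite \bindcaptchains, as required.

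For the direction ``$\Leftarrow$'', I would assume that $\aiter$ is regular and contains only finite \bindcaptchains. Here the regularity hypothesis of Theorem~\ref{thm:streg:fin:bind:capt:chains} holds outright, so the ``if'' half of that theorem applies directly and gives that $\aiter$ is strongly regular.

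The only point needing (minimal) attention is the logical bookkeeping: Theorem~\ref{thm:streg:fin:bind:capt:chains} is a biconditional carrying a regularity side-hypothesis, whereas the corollary states an unconditional biconditional with regularity moved inside. The ``$\Rightarrow$'' direction is where the extra input of Proposition~\ref{prop:def:reg:streg}, (\ref{prop:def:reg:streg:item:i}), is genuinely used, since strong regularity on its own does not assert regularity; the ``$\Leftarrow$'' direction needs no such supplement, as regularity is assumed explicitly. I anticipate no real obstacle, since both invoked results are already established in the excerpt and the argument reduces to a two-line deduction in each direction.
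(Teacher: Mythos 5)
Your proposal is correct and is exactly the paper's argument: the corollary is obtained by combining Theorem~\ref{thm:streg:fin:bind:capt:chains} with Proposition~\ref{prop:def:reg:streg}, item~(\ref{prop:def:reg:streg:item:i}), using the latter precisely to supply the regularity side-hypothesis in the ``$\Rightarrow$'' direction. Your bookkeeping of where the proposition is genuinely needed (only in that direction) matches the paper's intent.
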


\begin{proof}[Proof of Theorem~\ref{thm:streg:fin:bind:capt:chains}.]
  Let $\aiter$ be an infinite \lambdaterm\ that is regular. 
 
  For showing the implication ``$\Rightarrow$'', we assume that $\aiter$ is also strongly regular.
  Then there exists a \extscopedelimiting\ strategy $\astrat$ such that
  $\gSTstregstrat{\astrat}{\aiter}$ is finite. 
  By Proposition~\ref{prop:eager:strat:in:def:reg:streg}, (\ref{prop:eager:strat:in:def:reg:streg:item:i}) 
  it follows that then also $\gSTstregstrat{\eagscdelstratstreg}{\aiter}$ is finite
  for the eager \extscopedelimiting\ strategy $\eagscdelstratstreg$ in $\stRegARS$.
  Now let $n$ be the longest abstraction prefix of a term in $\gSTstregstrat{\eagscdelstratstreg}{\aiter}$.
  Then it follows by Lemma~\ref{lem:fin:bind:capt:chains} 
  that the length of
  every \bindcaptchain\ in $\aiter$ is bounded by $n-1$. 
  Hence $\aiter$ only contains finite \bindcaptchains.
  
  In the rest of this proof, we establish the implication ``$\Leftarrow$'' in the statement of the theorem.
  For this we argue indirectly: assuming that $\aiter$ is not strongly regular, we show the existence of 
  an infinite \bindcaptchain\  in $\aiter$.
  
  So suppose that $\aiter$ is not strongly regular. 
  Then for all \extscopedelimiting\ strategies~$\astrat$ in $\stRegARS$ it holds
  that $\gSTstregstrat{\astrat}{\aiter}$ is infinite.
  This means that in particular $\gSTstregstrat{\eagscdelstratstreg}{\aiter}$ is infinite
  for the eager \scopedelimiting\ strategy $\eagscdelstratstreg$ on $\stRegARS$.
  It follows that the number of 
                          $\sstratmred{\eagscdelstratstreg}$\nb-reducts,  
  and hence the induced sub\nb-ARS $\InducedSubARSmred{\femptylabs{\aiter}}{\sstratmred{\eagscdelstratstreg}}$ of $\femptylabs{\aiter}$ in $\stRegARS$
  is infinite. 
  Since $\sstratmred{\eagscdelstratstreg}$ on $\stRegARS$ has branching degree~$\le 2$ 
  (branching actually only happens at sources of $\slappdecompired{i}$\nb-steps), 
  it follows by K\H{o}nig's Lemma that there exists an infinite rewrite sequence:
  \begin{equation*}
    \arewseq \;\funin\;\; 
       \femptylabs{\aiter} = \flabs{\vec{\avar}_0}{\aiteri{0}}
         \stratred{\eagscdelstratstreg}
       \flabs{\vec{\avar}_1}{\aiteri{1}}
         \stratred{\eagscdelstratstreg}
       \ldots
         \stratred{\eagscdelstratstreg}
       \flabs{\vec{\avar}_i}{\aiteri{i}}  
         \stratred{\eagscdelstratstreg}
       \ldots
  \end{equation*}
  in $\stRegARS$ that passes through distinct terms. 
  By Lemma~\ref{lem:projection:RegCRS:stRegCRS}, (\ref{lem:projection:RegCRS:stRegCRS:item:reg}),
  this rewrite sequence projects to a rewrite sequence:
  \begin{equation}\label{eq2:prf:thm:streg:fin:bind:capt:chains}
    \Checkreg{\arewseq} \;\funin\;\; 
       \femptylabs{\aiter} = \flabs{\vec{\avar}'_0}{\aiteri{0}}
         \stratmred{\eagscdelstratreg}
       \flabs{\vec{\avar}'_1}{\aiteri{1}}
         \stratmred{\eagscdelstratreg}
       \ldots
         \stratmred{\eagscdelstratreg}
       \flabs{\vec{\avar}'_i}{\aiteri{i}}  
         \stratmred{\eagscdelstratreg}
       \ldots
  \end{equation}
  in $\RegARS$
  in the sense that:
  \begin{equation}\label{eq3:prf:thm:streg:fin:bind:capt:chains}
     \flabs{\vec{\avar}_i}{\aiteri{i}} \compressregmred \flabs{\vec{\avar}'_i}{\aiteri{i}}
       \;\;\;\;\;\; \text{(for all $i\in\nats$)} \punc{;}
  \end{equation}    
  note that, in the terms, the projection merely shortens the length of the abstraction prefix.
  Since $\aiter$ is regular, $\gSTregstrat{\eagscdelstratreg}{\aiter}$ is finite
  by Proposition~\ref{prop:eager:strat:in:def:reg:streg}, (\ref{prop:eager:strat:in:def:reg:streg:item:i}),
  and hence it follows that only finitely many terms occur in $\Checkreg{\arewseq}$.
  
  Now we will use this contrast with $\arewseq$, and the fact that the terms of $\arewseq$
  project to terms in $\Checkreg{\arewseq}$ via $\scompressregmred$\nb-prefix compression rewrite sequences,
  to show that the prefix lengths in terms of $\arewseq$
  are unbounded, and stronger still, that these lengths actually tend to infinity. More precisely,
  we show the following: 
  \begin{equation}\label{eq4:prf:thm:streg:fin:bind:capt:chains}
    (\forall l\in\nats)(\exists i_0\in\nats)\, (\forall i\ge i_0) \, [\, \length{\vec{\avar}_i} \ge l \,] \punc{.} 
  \end{equation}
  Suppose that this statement does not hold.
  Then there exists $l_0\in\nats$ such that $\length{\vec{\avar}_i} < l_0$ for infinitely many $i\in\nats$.
  This means that there is an increasing sequence $i_0 < i_1 < i_2 < i_3 < \ldots$ in $\nats$
  such that: 
  \begin{gather}
    S \defdby \descsetexp{\flabs{\vec{\avar}_{i_j}}{\aiteri{i_j}}}{j\in\nats}
      \text{ is infinite} 
    \label{eq6:prf:thm:streg:fin:bind:capt:chains}  
    \\
    \text{for all $\flabs{\vec{\avar}_{i_j}}{\aiteri{i_j}} \in S\,$: \; $\length{\vec{\avar}_{i_j}} < l_0$ }  
    \label{eq6a:prf:thm:streg:fin:bind:capt:chains}
  \end{gather}
  ($S$ is infinite since the terms on $\arewseq$ are distinct).
  On the other hand we have:
  \begin{gather}\label{eq7:prf:thm:streg:fin:bind:capt:chains}
    T \defdby
    \descsetexp{\flabs{\vec{\avar}'_{i_j}}{\aiteri{i_j}}}{j\in\nats} 
      \; \subseteq \;
    \gSTregstrat{\eagscdelstratreg}{\aiter} \text{ is finite}   
  \end{gather} 
  because $\aiter$ is regular. 
  However, since every term in $S$ has a $\scompressregmred$\nb-reduct in $T$ due to \eqref{eq3:prf:thm:streg:fin:bind:capt:chains},
  as well as an abstraction prefix of a length bounded by $l_0$,
  it follows by Proposition~\ref{prop:compress:prefix:RegARS}, (\ref{prop:compress:prefix:RegARS:item:ii}),
  that $S$ also has to be finite, conflicting with \eqref{eq6:prf:thm:streg:fin:bind:capt:chains}.
  We have reached a contradiction, and thereby established \eqref{eq4:prf:thm:streg:fin:bind:capt:chains}. 
  
  Now we are able to define a lower bound on the lengths of the prefixes in $\arewseq$
  that fulfils the requirements of Lemma~\ref{lem:inf:bind:capt:chains}.
  We define the function:
  \begin{equation*}
    \slb \funin \nats \to \nats, 
      \;\;\;
         n \mapsto \lb{n} \defdby \min \descsetexp{\length{\vec{\avar}_{n'}}}{n'\ge n}
  \end{equation*}
  Its definition guarantees that $\slb$ is a lower bound on the prefix lengths in $\arewseq$,
  and that $\slb$ is non-decreasing. 
  Furthermore also $\lim_{n\to\infty} \lb{n} = \infty$ follows by non-decreasingness, in addition to unboundedness of $\slb$:
  for arbitrary $l\in\nats$, by \eqref{eq4:prf:thm:streg:fin:bind:capt:chains}
  there exists $n_0\in\nats$ such that $\length{\vec{\avar}_n}\ge l$ holds for all $n\in\nats$, $n\ge n_0$;
  this entails $\lb{n_0} \ge l$.  
  
  Now since $\ater$, $\arewseq$, together with $\slb$ as defined above, satisfy the assumptions of
  Lemma~\ref{lem:inf:bind:capt:chains},
  this lemma can be applied, yielding an infinite \bindcaptchain\ in $\aiter$. 
\end{proof}

\begin{example}\label{ex:bindcaptchain}
  The infinite \lambdaterm\ from Example~\ref{ex:entangled} 
  with a representation as a higher-order recursive program scheme in Example~\ref{ex:entangled-infinite-path},
  which was recognised there to be regular but not strongly regular, 
  possesses an infinite \bindcaptchain\ as indicated on the right in Figure~\ref{fig:entangled_bc}.
\end{example}

\section{Expressibility by terms of the $\protect\lambda$-calculus with $\protect\sletrec$}
  \label{sec:express}

In this section we finish the proof of our main characterisation result:
we prove that every strongly regular \lambdaterm\ is \lambdaletrec\nb-expressible.
For this purpose we introduce an annotated variant of one of the proof systems for
strongly regular infinite \lambdaterms.
We show that every closed derivation in $\stRegzero$ with conclusion $\femptylabs{\aiter}$,
which witnesses that $\aiter$ is strongly regular, can be annotated, 
by adding appropriate \lambdaletrec\nb-terms to each prefixed term in the derivation,
into a derivation in the annotated system with conclusion $\femptylabsann{\allter}{\aiter}$
such that the \lambdaletrec\nb-term annotation $\allter$ expresses the infinite \lambdaterm~$\aiter$. 
We show the correctness of this construction by transforming the derivation in the annotated proof system
into a derivation in the proof system $\stRegeq$ 
with conclusion $\femptylabs{\Unf{\allter}} = \femptylabs{\aiter}$, and then
drawing upon  the soundness of $\stRegeq$ with respect to equality of strongly regular infinite \lambdaterms.

\begin{figure}[t!]
\begin{center}  
  \framebox{
\begin{minipage}{360pt}
\begin{center}
  \mbox{}
  \\[1ex]
  \mbox{ 
    \AxiomC{}
    \RightLabel{\bvarax}
    \UnaryInfC{$\flabsann{\vec{\avar}\bvar}{\bvar}{\bvar}$}
    \DisplayProof
        } 
  \hspace*{3ex}       
  \mbox{
    \AxiomC{$ \flabsann{\vec{\avar}\bvar}{\allter}{\ater} $}
    \RightLabel{$\labscomp$}
    \UnaryInfC{$ \flabsann{\vec{\avar}}{\labs{\bvar}{\allter}}{\labs{\bvar}{\ater}} $}
    \DisplayProof
        }  
  \hspace*{3ex}
  \mbox{
    \AxiomC{$ \flabsann{\vec{\avar}}{\allteri{0}}{\ateri{0}}$}
    \AxiomC{$ \flabsann{\vec{\avar}}{\allteri{1}}{\ateri{1}}$}
    \RightLabel{$\lappcomp$}
    \BinaryInfC{$ \flabsann{\vec{\avar}}{\lapp{\allteri{0}}{\allteri{1}}}{\lapp{\ateri{0}}{\ateri{1}}} $}
    \DisplayProof 
    }     
  \\[2ex]
  \mbox{
    \AxiomC{$ \flabsann{\avari{1}\ldots\avari{n-1}}{\allter}{\aiter} $}
    \RightLabel{\annVacstreg\
                \small (if the binding $\slabs\avari{n}$ is vacuous)
                       }
    \UnaryInfC{$ \flabsann{\avari{1}\ldots\avari{n}}{\allter}{\aiter} $}
    \DisplayProof
        }
  \\[2ex]
  \mbox{
    \AxiomC{$ [\flabsann{\vec{\avar}}{\aconstnamei{\aannvar}}{\ater}]^{\aannvar} $}
    \noLine
    \UnaryInfC{$\Derivi{0}$}
    \noLine
    \UnaryInfC{$ \flabsann{\vec{\avar}}{\subst{\allter}{\aannvar}{\aconstnamei{\aannvar}}}{\ater} $}
    \RightLabel{$\sFIX,u$ $\;$
                \parbox{195pt}{\small 
                               (if $\depth{\Derivi{0}} \ge 1$, and $\length{\vec{y}} \ge \length{\vec{x}}$ for all
                                $\flabs{\vec{y}}{\bter}$ on threads 
                                from open assumptions $(\flabsann{\vec{\avar}}{\aannvar}{\ater})^u$ down)}
                }                   
    \UnaryInfC{$ \flabsann{\vec{\avar}}{(\letrec{\aannvar = \allter}{\aannvar})}{\ater} $}
    \DisplayProof
    }
  \\[1.5ex]
  \mbox{}  
\end{center}
\end{minipage}
   }
\end{center} 
  \vspace*{-1.25ex}  
  \caption{\label{fig:annstRegzero}%
           Annotated natural-deduction style 
           proof system $\annstRegzero$ for strongly regular infinite $\lambda$\nb-terms,
           a version of $\stRegzero$ with 
           $\lambdaletrec$-terms as annotations.}
\end{figure}

We start by introducing a variant of the proof system~$\stRegzero$ in which the
formulas are closed, prefixed, \lambdaletrecterm\nb-annotated, infinite \lambdaterms.

\begin{definition}[the proof system $\annstRegzero$]
  The formulas of the proof system $\annstRegzero$ are closed expressions of the form $\flabsann{\vec{\avar}}{\allter}{\aiter}$
  with $\vec{\avar}$ a variable prefix vector, $\labs{\vec{\avar}}{\allter}$ a $\lambdaletrec$\nb-term,
  and $\labs{\vec{\avar}}{\aiter}$ a \lambdaterm. 
  The axioms and rules of \annstRegzero\
    are annotated versions of the axioms and rules of the proof system $\stRegzero$
    from Definition~\ref{def:Reg:stReg:stRegzero} and Figure~\ref{fig:stReg:stRegzero},
  and are displayed in Figure~\ref{fig:annstRegzero}.
\end{definition}

\begin{remark}
  As for an example that illustrates why we have chosen to formulate 
  an annotated version only of the proof system $\stRegzero$, but not of $\stReg$,
  please see Example~\ref{example:annstRegzero}. 
\end{remark}

The following proposition is a statement that is entirely analogous to Proposition~\ref{prop:Reg:stReg}. 

\begin{proposition}\label{prop:annstRegzero:sidecondition}
  For all for all instances $\ainst$ of the rule \FIX\ in a derivation $\Deriv$ 
  (possibly with open assumptions) in $\annstRegzero$ it holds:
  every thread from $\ainst$ upwards to a marked assumption that is discharged at $\ainst$
  passes at least one instance of a rule ($\labscomp$) or ($\lappcomp$).
\end{proposition}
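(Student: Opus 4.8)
The plan is to mirror exactly the argument given for Proposition~\ref{prop:Reg:stReg}, adapting it to the annotated system $\annstRegzero$. The key observation is that the side-condition attached to instances of \FIX\ in $\annstRegzero$ (displayed in Figure~\ref{fig:annstRegzero}) imposes the same two requirements that drive the proof of Proposition~\ref{prop:Reg:stReg}: first, that the immediate subderivation $\Derivi{0}$ has depth $\depth{\Derivi{0}} \ge 1$, and second, the $\stRegzero$-style length condition $\length{\vec{y}} \ge \length{\vec{x}}$ for all $\flabs{\vec{y}}{\bter}$ on threads from the discharged assumptions down to the premise. Since the annotations (the $\lambdaletrec$-terms sitting to the left of the colon) play no role in the lengths of the abstraction prefixes, the argument is purely about the prefix-length behaviour of the rules, which is identical to that in $\stRegzero$.

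First I would fix an instance $\ainst$ of \FIX\ in $\Deriv$ with conclusion (and discharged assumption) carrying a prefix $\vec{\avar}$, and I would take an arbitrary thread $\apath$ from the conclusion of $\ainst$ upwards to a marked assumption $(\flabsann{\vec{\avar}}{\aannvar}{\ater})^{\aannvar}$ that $\ainst$ discharges. By the depth side-condition $\depth{\Derivi{0}}\ge 1$, the subderivation above $\ainst$ contains at least one rule instance, so at least one instance of a rule ($\labscomp$), ($\lappcomp$), or ($\annVacstreg$) is passed on $\apath$. If any instance of ($\labscomp$) or ($\lappcomp$) occurs on $\apath$, we are done immediately. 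Otherwise only instances of ($\annVacstreg$) are passed on $\apath$ above $\ainst$.

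The crux, then, is to rule out the pure-($\annVacstreg$) case, and here I would invoke the length side-condition together with a counting argument on prefix lengths, exactly as in the proof of Proposition~\ref{prop:Reg:stReg}. The rule ($\annVacstreg$) \emph{increases} the prefix length by one in a pass from premise to conclusion (equivalently, it \emph{decreases} the length by one when read upwards from conclusion to premise), while ($\labscomp$) is the only rule that \emph{decreases} the prefix length upward-to-downward, i.e.\ increases it going from conclusion to premise. The start and end of $\apath$ both carry prefix $\vec{\avar}$, hence have the same prefix length $\length{\vec{\avar}}$. If $\apath$ passed only ($\annVacstreg$)-instances above $\ainst$, then reading upward the prefix length would strictly decrease at each step and never recover, so the assumption at the top would have a strictly shorter prefix than $\length{\vec{\avar}}$, contradicting that both endpoints carry the same prefix $\vec{\avar}$. (This is where the length condition $\length{\vec{y}}\ge\length{\vec{x}}$ is used to ensure the threads considered are indeed those relevant to the discharge, so that the endpoints genuinely share the prefix.) Therefore at least one rule that increases the prefix length from conclusion to premise—namely ($\labscomp$)—must be passed on $\apath$, giving the required guard.

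The main obstacle, such as it is, is essentially bookkeeping rather than any genuine difficulty: one must confirm that in $\annstRegzero$ the rules ($\labscomp$), ($\lappcomp$), and ($\annVacstreg$) act on the abstraction prefix precisely as their unannotated counterparts in $\stRegzero$ do, and that the \FIX-instance's conclusion and discharged assumptions truly share the same prefix $\vec{\avar}$ so that the prefix-length balance argument applies verbatim. Since the annotation component is carried along passively and does not affect prefix lengths, this verification is routine, and the proof of Proposition~\ref{prop:Reg:stReg} transfers with only notational changes.
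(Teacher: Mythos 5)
Your proposal is correct and takes the same route as the paper, which gives no separate proof and merely notes that the statement is ``entirely analogous to Proposition~\ref{prop:Reg:stReg}''; you reproduce that proof's prefix-length counting argument faithfully, modulo the small precision (present in the proof you mirror) that the depth condition must be invoked at the \emph{topmost} \FIX-instance on the thread, since the instances passed above $\ainst$ could themselves be \FIX-instances. One misattribution worth flagging: the equality of the prefix $\vec{\avar}$ at both endpoints of the thread is built into the form of the rule \FIX\ itself (a discharged assumption carries the same prefix and term as the conclusion) and does not rely on the side-condition $\length{\vec{y}} \ge \length{\vec{x}}$, which plays no role in this proposition --- as witnessed by the fact that Proposition~\ref{prop:Reg:stReg} holds also for \stReg, where that condition is absent.
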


The lemma below states a straightforward connection between derivations in
$\stRegzero$ and derivations in its annotated version $\annstRegzero$.

\begin{lemma}[from $\stRegzero$- to $\annstRegzero$-derivations, and back]%
  \label{lem:stRegzero:2:annstRegzero}
  The following transformations are possible between derivations in \stRegzero\ and derivations in \annstRegzero:
  \begin{enumerate}[(i)]
    \item{}\label{lem:stRegzero:2:annstRegzero:item:i} 
      Every derivation $\Deriv$ in \stRegzero\ with conclusion $\flabs{\vec{\avar}}{\aiter}$
      can be transformed into 
      a derivation $\Derivann$ in \annstRegzero\ with conclusion $\flabsann{\vec{\avar}}{\allter}{\aiter}$
      such that there is a bijective correspondence between marked assumptions $(\flabs{\vec{\bvar}}{\aiter})^{\amarker}$ in $\Deriv$
      and marked assumptions $(\flabsann{\vec{\bvar}}{\amarker}{\aiter})^{\amarker}$ in $\Derivann$. 
      (As a consequence, $\Derivann$ is a closed derivation if $\Deriv$ is closed.)
      More precisely, $\Derivann$ can be obtained from $\Deriv$ 
      by replacing every term occurrence $\flabs{\vec{\bvar}}{\biter}$ by
      an occurrence of $\flabsann{\vec{\bvar}}{\bllter}{\biter}$ for a prefixed \lambdaletrecterm~$\flabs{\vec{\bvar}}{\bllter}$ 
      with the property that every prefix variable $\bvari{i}$ bound in $\bllter$ is also bound in $\biter$.
      Thereby occurrences of marked assumptions and axioms $\bvarax$ in $\Deriv$ 
      give rise to 
      occurrences of marked assumptions and axioms $\bvarax$ in $\Derivann$, respectively;
      instances of the $\stRegzero$\nb-rules $\labscomp$, $\lappcomp$, $\Vacstreg$, and \FIX\ in $\Deriv$
      give rise to
      instances of $\annstRegzero$\nb-rules $\labscomp$, $\lappcomp$, $\Vacstreg$, and \FIX\ in $\Derivann$, respectively.
    \item{}\label{lem:stRegzero:2:annstRegzero:item:ii}
      From every closed derivation $\Deriv$ in \annstRegzero\ with conclusion $\flabsann{\vec{\avar}}{\allter}{\aiter}$
      a closed derivation $\check{\Deriv}$ in \stRegzero\ with conclusion $\flabs{\vec{\avar}}{\aiter}$
      can be obtained by dropping the annotations with \lambdaletrecterms. 
  \end{enumerate}
\end{lemma}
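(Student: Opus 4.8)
The plan is to obtain $\Derivann$ from $\Deriv$, and conversely $\check{\Deriv}$ from a derivation in $\annstRegzero$, by a purely local decoration/erasure procedure that leaves the underlying prefixed $\lambda$-term components and the applied rules fixed. Under such a procedure a derivation and its image stand in bijective correspondence on formula occurrences and on rule instances, which at once yields the asserted correspondence of marked assumptions. Part~(\ref{lem:stRegzero:2:annstRegzero:item:ii}) is the easy direction and I would treat it first: given a closed derivation in $\annstRegzero$ with conclusion $\flabsann{\vec{\avar}}{\allter}{\aiter}$, erase from every formula occurrence $\flabsann{\vec{\bvar}}{\bllter}{\biter}$ its $\lambdaletrec$-annotation, leaving $\flabs{\vec{\bvar}}{\biter}$. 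Comparing Figure~\ref{fig:annstRegzero} with Figure~\ref{fig:stReg:stRegzero} rule by rule, each annotated axiom or inference projects exactly onto the corresponding $\stRegzero$ one; in particular a marked assumption $(\flabsann{\vec{\avar}}{\aconstnamei{\amarker}}{\aiter})^{\amarker}$ becomes $(\flabs{\vec{\avar}}{\aiter})^{\amarker}$, and an instance of \FIX\ with conclusion $\flabsann{\vec{\avar}}{\letrec{\amarker=\allter}{\amarker}}{\aiter}$ becomes the $\stRegzero$-instance of \FIX\ with conclusion $\flabs{\vec{\avar}}{\aiter}$. Since the side-conditions on \FIX\ (that $\depth{\Derivi{0}}\ge 1$, and the prefix-length condition of $\stRegzero$) mention only the prefixed $\lambda$-term components, they transfer verbatim, so the erasure $\check{\Deriv}$ is a closed $\stRegzero$-derivation.

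For part~(\ref{lem:stRegzero:2:annstRegzero:item:i}) I would build $\Derivann$ by attaching a $\lambdaletrec$-annotation $\bllter$ to each formula occurrence $\flabs{\vec{\bvar}}{\biter}$ of $\Deriv$, defined by recursion from the leaves of $\Deriv$ downward toward its conclusion. First fix, for each assumption marker $\amarker$ occurring in $\Deriv$, the recursion variable $\amarker$ itself together with a constant $\aconstnamei{\amarker}$, the constants chosen pairwise distinct and fresh for $\aiter$. An axiom $\bvarax$ for $\flabs{\vec{\avar}\bvar}{\bvar}$ is annotated by the prefix variable $\bvar$; a marked assumption with marker $\amarker$ by the constant $\aconstnamei{\amarker}$; the conclusion of an instance of $\labscomp$, $\lappcomp$, or $\Vacstreg$ by $\labs{\bvar}{\allter}$, $\lapp{\allteri{0}}{\allteri{1}}$, or $\allter$, where $\allter,\allteri{0},\allteri{1}$ are the annotations already assigned to its premise(s); and at an instance of \FIX\ that discharges $\amarker$ and whose premise carries annotation $P$, I set $\allter \defdby \subst{P}{\aconstnamei{\amarker}}{\amarker}$ and annotate the conclusion by $\letrec{\amarker = \allter}{\amarker}$. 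By construction every decorated step is an instance of the matching annotated rule of Figure~\ref{fig:annstRegzero}, and because no occurrence is added or removed, the marked assumptions of $\Derivann$ correspond bijectively to those of $\Deriv$.

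The step that needs genuine care --- and which I expect to be the main obstacle --- is the \FIX\ case of this construction, where I must check that the premise annotation $P$ really has the form $\subst{\allter}{\amarker}{\aconstnamei{\amarker}}$ required by the annotated \FIX-rule. This holds because the constant $\aconstnamei{\amarker}$ occurs in $P$ exactly at the marked assumptions discharged at this instance, whereas the recursion variable $\amarker$, being fresh, does not occur in $P$ at all; hence re-substituting the constant for $\amarker$ in $\allter = \subst{P}{\aconstnamei{\amarker}}{\amarker}$ returns $P$. One subtlety to record is that $P$ may still contain constants $\aconstnamei{\bmarker}$ belonging to markers $\bmarker$ whose discharging \FIX\ lies strictly below the current one; these are deliberately left in place and are eliminated only when the recursion reaches $\bmarker$'s instance, so that every annotation is a well-formed $\lambdaletrec$-term over the constant-extended signature throughout. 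In parallel I would verify the structural invariant asserted in the statement, that every prefix variable bound in $\bllter$ is also bound in $\biter$: this is an easy induction, using that $\labscomp$ introduces the same bound variable $\bvar$ in both components, that $\Vacstreg$ leaves the annotation untouched, and that \FIX\ binds only the recursion variable $\amarker$, which never appears as a prefix variable. Finally, as the decoration changes neither the prefixed $\lambda$-term components nor the rules applied, the two side-conditions on \FIX\ are inherited from $\Deriv$; hence $\Derivann$ is a valid $\annstRegzero$-derivation, and closed whenever $\Deriv$ is.
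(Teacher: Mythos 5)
Your proof is correct and takes essentially the same route as the paper's: part~(ii) by rule-wise erasure of the annotations, and part~(i) by the paper's induction on derivation depth, here recast as a leaves-to-conclusion decoration in which the annotated rules of $\annstRegzero$ uniquely determine each conclusion annotation, with the same invariant (prefix variables occurring in the $\lambdaletrec$-annotation also occur in the annotated infinite term) used to validate the $\Vacstreg$ instances. Your explicit check of the \FIX\ case --- that the premise annotation has the required form $\subst{\allter}{\amarker}{\aconstnamei{\amarker}}$ because the fresh recursion variable $\amarker$ does not occur in it, while constants of not-yet-discharged markers are left in place --- merely spells out what the paper subsumes under the remark that the annotations are uniquely determined by the premises and the discharged assumption markers.
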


\begin{proof}
  Statement~(\ref{lem:stRegzero:2:annstRegzero:item:i}) of the lemma can be established
  through a proof by induction on the depth~$\depth{\Deriv}$ of a derivation $\Deriv$ in $\stRegzero$ 
  with possibly open assumptions. 
  In the base case, axioms ($\bvarax$) of $\stRegzero$ are annotated to axioms ($\bvarax$) of $\annstRegzero$,
  and marked assumptions $(\flabs{\vec{\bvar}}{\biter})^{\amarker}$ in $\stRegzero$ 
  to marked assumptions $(\flabsann{\vec{\bvar}}{\aconstnamei{\amarker}}{\biter})^{\amarker}$.
  In the induction step it has to be shown that a derivation $\Deriv$ in $\stRegzero$
  with immediate subderivation $\Derivi{0}$ can be annotated to a derivation $\Derivann$ in $\annstRegzero$,
  using the induction hypothesis which guarantees that an annotated version $\Derivanni{0}$ of $\Derivi{0}$ has already been obtained.
  Then for obtaining $\Derivann$ from $\Derivanni{0}$ the fact is used that
  the rules in $\annstRegzero$ uniquely determine the annotation in the conclusion of an instance once the annotation(s) in the premise(s)
  (and in the case of \FIX\ additionally the annotation markers used in the assumptions that are discharged) are given. 
  In order to establish that instances of $\Vacstreg$ in $\Deriv$ give rise to corresponding instances of $\Vacstreg$ in $\Derivann$,
  the part of the induction hypothesis is used which guarantees that the \lambdaletrecterm\ annotation in the premise
  contains not more variable bindings than the infinite \lambdaterm\ it annotates.
  
  Statement~(\ref{lem:stRegzero:2:annstRegzero:item:ii}) of the lemma
  is a consequence of the fact that, by dropping the \lambdaletrecterm-annotations,
  every instance of a rule of $\annstRegzero$ give rise to an instance of the corresponding rule in $\stRegzero$.
  Formally the statement can again be established by induction on the depth of derivations in $\annstRegzero$.  
\end{proof}

\begin{example}\label{example:annstRegzero}
  The derivation $\Derivi{l}$ in $\stRegzero$ from Example~\ref{example:stReg} on the left can be annotated,
  as described by Lemma~\ref{lem:stRegzero:2:annstRegzero}, (\ref{lem:stRegzero:2:annstRegzero:item:ii}), 
  to obtain the following derivation $\Hat{\Deriv}_{l}$ in $\annstRegzero\,$:
  \begin{center}
    \AxiomC{$ (\femptylabsann{\aconstnamei{\amarker}}{\aiter})^{\amarker} $}
    \RightLabel{$\Vacstreg$}
    \UnaryInfC{$ \flabsann{\avar}{\aconstnamei{\amarker}}{\ater} $}
    \RightLabel{$\Vacstreg$}
    \UnaryInfC{$ \flabsann{\avar\bvar}{\aconstnamei{\amarker}}{\ater} $}
    \AxiomC{\mbox{}}
    \RightLabel{$\bvarax$}
    \UnaryInfC{$ \flabsann{\avar\bvar}{\bvar}{\bvar} $}
    \RightLabel{$\lappcomp$}
    \BinaryInfC{$ \flabsann{\avar\bvar}{\lapp{\aconstnamei{\amarker}}{\bvar}}{\lapp{\ater}{\bvar}} $}
    \AxiomC{\mbox{}}
    \RightLabel{$\bvarax$}
    \UnaryInfC{$ \flabsann{\avar}{\avar}{\avar} $}
    \RightLabel{$\Vacstreg$}
    \UnaryInfC{$ \flabsann{\avar\bvar}{\avar}{\avar} $}
    \RightLabel{$\lappcomp$}
    \BinaryInfC{$ \flabsann{\avar\bvar}{\lapp{\lapp{\aconstnamei{\amarker}}{\bvar}}{\avar}}{\lapp{\lapp{\ater}{\bvar}}{\avar}} $}
    \RightLabel{$\labscomp$}
    \UnaryInfC{$ \flabsann{\avar}{\labs{\bvar}{\lapp{\lapp{\aconstnamei{\amarker}}{\bvar}}{\avar}}}{\lapp{\lapp{\ater}{\bvar}}{\avar}} $}
    \RightLabel{$\labscomp$}
    \UnaryInfC{$ \femptylabsann{\labs{\avar\bvar}{\lapp{\lapp{\aconstnamei{\amarker}}{\bvar}}{\avar}}}{\labs{\avar\bvar}{\lapp{\lapp{\ater}{\bvar}}{\avar}}} $}
    \RightLabel{$\sFIX, u$}
    \UnaryInfC{$ \femptylabsann{(\letrec{\amarker = \labs{\avar\bvar}{\lapp{\lapp{\amarker}{\bvar}}{\avar}}}{\arecvar})}{\ater} $}
    \DisplayProof
  \end{center}
  Note that the term in the conclusion, which has been extracted by the annotation procedure, 
  is actually the same as the \lambdaletrec\nb-term
  $\letrec{\arecvar = \labs{\avar\bvar}{\lapp{\lapp{\arecvar}{\bvar}}{\avar}}}{\arecvar}$
  which was used in Example~\ref{example:stReg} to define $\aiter$ as its infinite unfolding.
  
  Furthermore note that,
  in a variant of $\annstRegzero$ in which the `$\stRegzero$\nb-addition' 
  (concerning abstraction prefix lengths) to the side-condition of \FIX\ is dropped,
  the derivation $\Derivi{r}$ in Example~\ref{example:stReg} on the right
  could be annotated to obtain the following prooftree $\Hat{\Deriv}_r$:
\begin{equation*}
  \mbox{
    \AxiomC{$ (\flabsann{\avar}{\aconstnamei{\amarker}}{\labs{\bvar}{\lapp{\lapp{\ater}{\bvar}}{\avar}}})^{\amarker} $}
    \RightLabel{$\labscomp$}
    \UnaryInfC{$ \femptylabsann{\labs{\avar}{\aconstnamei{\amarker}}}{\ater} $}
    \RightLabel{$\Vacstreg$}
    \UnaryInfC{$ \flabsann{\avar}{\labs{\avar}{\aconstnamei{\amarker}}}{\ater} $}
    \RightLabel{$\Vacstreg$}
    \UnaryInfC{$ \flabsann{\avar\bvar}{\labs{\avar}{\aconstnamei{\amarker}}}{\ater} $}
    \AxiomC{\mbox{}}
    \RightLabel{$\bvarax$}
    \UnaryInfC{$ \flabsann{\avar\bvar}{\bvar}{\bvar} $}
    \RightLabel{$\lappcomp$}
    \BinaryInfC{$ \flabsann{\avar\bvar}{\lapp{(\labs{\avar}{\aconstnamei{\amarker}})}{\bvar}}{\lapp{\ater}{\bvar}} $}
    \AxiomC{\mbox{}}
    \RightLabel{$\bvarax$}
    \UnaryInfC{$ \flabsann{\avar}{\avar}{\avar} $}
    \RightLabel{$\Vacstreg$}
    \UnaryInfC{$ \flabsann{\avar\bvar}{\avar}{\avar} $}
    \RightLabel{$\lappcomp$}
    \BinaryInfC{$ \flabsann{\avar\bvar}{\lapp{(\lapp{\labs{\avar}{\aconstnamei{\amarker}})}{\bvar}}{\avar}}{\lapp{\lapp{\ater}{\bvar}}{\avar}} $}
    \RightLabel{$\labscomp$}
    \UnaryInfC{$ \flabsann{\avar}{\labs{\bvar}{\lapp{\lapp{(\labs{\avar}{\aconstnamei{\amarker}})}{\bvar}}{\avar}}}{\labs{\bvar}{\lapp{\lapp{\ater}{\bvar}}{\avar}}} $}
    \RightLabel{$\ainst$, $u$ \parbox{90pt}{(no instance of \FIX\\[-0.75ex] \hspace*{\fill}in $\annstRegzero\,$ !)}}
    \UnaryInfC{$ \flabsann{\avar}{\letrec{\amarker = \labs{\bvar}{\lapp{\lapp{\amarker}{\bvar}}{\avar}}}{\amarker}}{\labs{\bvar}{\lapp{\lapp{\ater}{\bvar}}{\avar}}} $}
    \RightLabel{$\labscomp$}
    \UnaryInfC{$ \femptylabsann{\labs{\avar}{\letrec{\amarker = \labs{\bvar}{\lapp{\lapp{\amarker}{\bvar}}{\avar}}}{\amarker}}}{\ater} $}
    \DisplayProof
        }
\end{equation*}
  Observe that, equally as was the case for $\Derivi{r}$, 
  also in $\Hat{\Deriv}_r$ there occurs,
  on the thread between the marked assumption at the top and the rule instance $\ainst$ at which this assumption is discharged,
  a formula, namely $\femptylabsann{\amarker}{\ater}$, 
  that has a shorter abstraction prefix than the formula in the premise and conclusion of $\ainst$ as well as in the assumption.
  Thus $\ainst$ is not an instance of the rule \FIX\ in $\annstRegzero$.

  Furthermore note that the \lambdaletrecterm\ extracted by $\Hat{\Deriv}_r$ does not
  unfold to $\aiter$, and hence does not express $\aiter$.
  This example shows that the side-condition on instances of \FIX\ in $\annstRegzero$ cannot be
  weakened to the form used for the rule \FIX\ in $\stReg$ 
  when the aim is to extract a \lambdaletrecterm\ that unfolds to the infinite \lambdaterm\ in the conclusion. 
\end{example}

The central property of the proof system $\annstRegzero$ still remains to be shown:
that the \lambdaletrecterms\ in the conclusion of a derivation in this system
does actually unfold to the infinite \lambdaterm\ in the conclusion.
This will be established below in Lemma~\ref{lem:annstRegzero:2:stRegeq} and Theorem~\ref{thm:annstRegzero}.
But as an intermediary proof system that will allow us to use results about the
proof system $\stRegletrec$ from Section~\ref{sec:proofs}, we also introduce an annotated
version of the rule $\sletrec$ in $\stRegletrec$, and an according annotated proof system.

\begin{figure}[t!]
\begin{center}  
  \framebox{
\begin{minipage}{365pt}
\begin{center}
  \mbox{}
  \\[0.5ex]
  \mbox{
    \AxiomC{$ \{\: [ \flabsann{\vec{\avar}}{\aconstnamei{\arecvari{i}}}{\aiteri{i}} ] \:\}_{i=1,\ldots,n} $}
    \noLine
    \UnaryInfC{$\Derivi{j}$}
    \noLine
    \UnaryInfC{$ \{\:\ldots\ldots \;\; \flabsann{\vec{\avar}}{\subst{\allteri{j}}{\vec{\arecvar}}{\vec{\aconstname}_{\vec{\arecvar}}}}{\aiteri{j}} \;\;\ldots\ldots\:\}_{j=1,\ldots,n} $}
    \AxiomC{$ \{\: [ \flabsann{\vec{\avar}}{\aconstnamei{\arecvari{i}}}{}\aiteri{i} ] \:\}_{i=1,\ldots,n} $}
    \noLine
    \UnaryInfC{$\Derivi{n+1}$}
    \noLine
    \UnaryInfC{$ \flabsann{\vec{\avar}}{\subst{\allteri{n+1}}{\vec{\arecvar}}{\vec{\aconstname}_{\vec{\arecvar}}}}{\aiteri{n+1}} $}
    \RightLabel{\sFIXletrec}
    \BinaryInfC{$ \flabsann{\vec{\avar}}
                           {\,(\letrec{\arecvari{1} = \allteri{1} \ldots \arecvari{n} = \allteri{n}}{\allteri{n+1}})}
                           {\aiteri{n+1}} 
                 $}          
    \DisplayProof
         }   
  \\[2.5ex]
  \parbox{340pt}{\small\centering
      where
      $\aconstnamei{\arecvari{1}}, \ldots,\aconstnamei{\arecvari{n}}$ are distinct constants fresh for $\allteri{1},\ldots,\allteri{n+1}$, 
      and 
      substitutions $\subst{\allteri{l}}{\vec{\arecvar}}{\vec{\aconstname}_{\vec{\arecvar}}}$
      stands short for 
      $\allteri{l} [\arecvari{1} \defdby \aconstnamei{\arecvari{1}}, \ldots, \arecvari{n} \defdby \aconstnamei{\arecvari{n}}]$.
      \\[0.75ex]
      \emph{side-conditions}: 
        $\length{\vec{y}} \ge \length{\vec{x}}$ holds for the prefix length of
        every $\flabs{\vec{y}}{\bter}$ on a thread\\ in $\Derivi{j}$ for 
                                                                          $1\le j\le n+1$
        from an open assumptions $( \flabs{\vec{\avar}}{\aconstnamei{\arecvari{i}}} )^{\amarkeri{i}}$ downwards;\\
        for bottommost instances: 
        the arising derivation is guarded on access path cycles.
      } 
  \mbox{}
\end{center}
\end{minipage}
            }
\end{center} 
  \vspace*{-1.25ex}
  \caption{\label{fig:annstRegletrec}
           The proof system \protect\annstRegletrec\ for \lambdaletrecterms\
           arises from the proof system \protect\annstRegzero\ (see
           Figure~\ref{fig:annstRegzero}) by replacing the rule \FIX\ with the
           rule \FIXletrec.}
\end{figure}

\begin{definition}[the proof system $\annstRegletrec$]
  The proof system $\annstRegletrec$ arises from $\stRegzero$ by replacing
  the rule \FIX\ by the rule \FIXletrec\ in Figure~\ref{fig:annstRegletrec},
  an annotated version of the rule \FIXletrec\ from
  Definition~\ref{def:Regletrec:stRegletrec} and
  Figure~\ref{fig:Regletrec:stRegletrec}. The side-condition on bottommost
  instances of \FIXletrec\ to be guarded on access path cycles is analogous as
  explained in Definition~\ref{def:Regletrec:stRegletrec}.
\end{definition}

\begin{proposition}[from $\annstRegletrec$- to $\stRegletrec$-derivations]\label{prop:annstRegletrec:2:stRegletrec}    
  Let $\Deriv$ be a closed derivation in \annstRegletrec\ with conclusion $\femptylabsann{\allter}{\aiter}$.
  Then a closed derivation $\check{\Deriv}$ in \stRegletrec\ with conclusion $\femptylabs{\allter}$
  can be obtained by removing the infinite \lambdaterms\ in $\Deriv$ while keeping the \lambdaletrecterm-annotations.
\end{proposition}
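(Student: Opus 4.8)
The plan is to define the projected derivation $\check{\Deriv}$ by forgetting, in every formula occurrence $\flabsann{\vec\bvar}{\bllter}{\biter}$ of $\Deriv$, the infinite-$\lambda$-term component $\biter$ while retaining the abstraction prefix $\vec\bvar$ and the $\lambdaletrec$-term annotation $\bllter$, so that $\flabsann{\vec\bvar}{\bllter}{\biter}$ becomes $\flabs{\vec\bvar}{\bllter}$. This is the exact mirror image of the projection used in Lemma~\ref{lem:stRegzero:2:annstRegzero}, (\ref{lem:stRegzero:2:annstRegzero:item:ii}), which instead forgets the $\lambdaletrec$-term; as there, I would argue by induction on the depth of $\Deriv$, equivalently by a local check at each node. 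Since the map acts formula-by-formula and leaves untouched the shape of the prooftree, the rule applied at each node, and the assumption-discharge links, it suffices to show that each rule instance of $\annstRegletrec$ projects to a valid instance of the corresponding rule of $\stRegletrec$, and that closedness is preserved.

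For the axiom $\bvarax$ and the rules ($\labscomp$), ($\lappcomp$) this is immediate: deleting the infinite-term component of the annotated axiom and rules in Figure~\ref{fig:annstRegzero} yields exactly the corresponding $\stRegletrec$-instances, since the prefix and $\lambdaletrec$-term components already coincide with those of $\stRegletrec$'s axioms and rules. For the annotated rule \FIXletrec\ in Figure~\ref{fig:annstRegletrec}, deleting the infinite terms $\aiteri{1},\ldots,\aiteri{n+1}$ reproduces verbatim the $\lambdaletrec$-term shape of $\stRegletrec$'s rule \FIXletrec\ in Figure~\ref{fig:Regletrec:stRegletrec}, with the same fresh constants $\aconstnamei{\arecvari{1}},\ldots,\aconstnamei{\arecvari{n}}$ and the same discharged marked assumptions. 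The only genuinely new point is the vacuousness side-condition of an instance of ($\annVacstreg$) (premise $\flabsann{\avari{1}\ldots\avari{n-1}}{\allter}{\aiter}$, conclusion $\flabsann{\avari{1}\ldots\avari{n}}{\allter}{\aiter}$): the projected ($\Vacstreg$)-instance of $\stRegletrec$ requires instead that $\slabs\avari{n}$ be vacuous in the $\lambdaletrec$-term $\allter$. I would show that this transfers automatically, and without appealing to any relation between $\allter$ and $\aiter$: all formulas of $\annstRegletrec$ are closed, so the premise $\labs{\avari{1}\ldots\avari{n-1}}{\allter}$ is a closed $\lambdaletrec$-term, whence $\avari{n}$ — being distinct from $\avari{1},\ldots,\avari{n-1}$ — does not occur in $\allter$ at all. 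A fortiori $\slabs\avari{n}$ is vacuous in $\stRegletrec$'s sense, whether this is read as non-occurrence in $\allter$ or as non-occurrence in $\Unf\allter$, since unfolding introduces no new free $\lambda$-variables and hence $\avari{n}$ does not occur in $\Unf\allter$ either.

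It then remains to transfer the two side-conditions of \FIXletrec. The prefix-length condition $\length{\vec y}\ge\length{\vec x}$ on threads refers only to the abstraction prefixes occurring along them, which are untouched by the projection, so it is inherited. The delicate condition — and the step I expect to cost the most care — is guardedness on access path cycles for bottommost instances of \FIXletrec. The hard part here is only apparent: the key observation is that guardedness is a purely structural property, since by Definition~\ref{def:Regletrec:stRegletrec} access paths, their cycles, and their guards (instances of ($\labscomp$) or ($\lappcomp$)) are determined solely by the tree shape, the rule labels, and the assumption-discharge links of the derivation, all of which the projection preserves bijectively. Consequently $\check{\Deriv}$ is guarded on access path cycles precisely when $\Deriv$ is, so the condition carries over. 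Finally, since the projection preserves the discharge structure, every assumption discharged in $\Deriv$ remains discharged in $\check{\Deriv}$, so $\check{\Deriv}$ is closed, and its conclusion is the projection $\femptylabs{\allter}$ of $\femptylabsann{\allter}{\aiter}$. Beyond the structural preservation of guardedness, no real obstacle arises.
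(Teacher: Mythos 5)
Your proof is correct and coincides with the argument the paper leaves implicit: the proposition is stated without proof as the evident counterpart of Lemma~\ref{lem:stRegzero:2:annstRegzero}, (\ref{lem:stRegzero:2:annstRegzero:item:ii}), namely a formula-wise projection (here dropping the infinite \lambdaterms\ instead of the annotations) under which every rule instance of \annstRegletrec\ becomes the corresponding instance in \stRegletrec. The two points you verify explicitly — that vacuousness of the binding $\slabs\avari{n}$ in $\allter$ follows from closedness of the premise formula, and that guardedness on access path cycles is a purely structural property of the prooftree (rule labels and discharge links) and hence preserved — are precisely what the paper tacitly relies on.
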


\begin{proposition}[from $\annstRegzero$- to $\annstRegletrec$-derivations]%
  \label{prop:annstRegzero:2:annstRegletrec}
  Every derivation $\Deriv$ in $\annstRegzero$ can be transformed into a derivation $\Deriv'$ in $\annstRegletrec$
  with the same conclusion and with the same open assumption classes.
\end{proposition}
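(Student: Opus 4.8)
The plan is to exploit that the proof systems $\annstRegzero$ and $\annstRegletrec$ have identical axioms and rules except for their fixed-point rules: $\annstRegzero$ uses \FIX, whereas $\annstRegletrec$ uses \FIXletrec\ (Figure~\ref{fig:annstRegletrec}). I would therefore keep every instance of the shared rules ($\bvarax$, $\labscomp$, $\lappcomp$, $\annVacstreg$) unchanged, and transform $\Deriv$ by replacing each instance of \FIX\ with an instance of \FIXletrec\ that has a singleton binding group, i.e.\ the case $n = 1$. Concretely, an instance of \FIX\ whose immediate subderivation $\Derivi{0}$ concludes $\flabsann{\vec{\avar}}{\subst{\allter}{\aannvar}{\aconstnamei{\aannvar}}}{\ater}$ from marked assumptions $(\flabsann{\vec{\avar}}{\aconstnamei{\aannvar}}{\ater})^{\aannvar}$, and whose conclusion is $\flabsann{\vec{\avar}}{(\letrec{\aannvar = \allter}{\aannvar})}{\ater}$, is replaced by the \FIXletrec-instance with $n = 1$, recursion variable $\arecvari{1} \defdby \aannvar$, binding $\allteri{1} \defdby \allter$, and $\textsf{in}$-part $\allteri{2} \defdby \aannvar$. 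Its left subderivation $\Derivi{1}$ is taken to be $\Derivi{0}$ (whose conclusion is exactly the first premise demanded by \FIXletrec), and its right subderivation $\Derivi{2} = \Derivi{n+1}$ is the trivial one-node derivation consisting of the marked assumption $(\flabsann{\vec{\avar}}{\aconstnamei{\aannvar}}{\ater})^{\aannvar}$, whose conclusion coincides with $\flabsann{\vec{\avar}}{\subst{\aannvar}{\aannvar}{\aconstnamei{\aannvar}}}{\ater}$ as required. The resulting conclusion $\flabsann{\vec{\avar}}{(\letrec{\aannvar = \allter}{\aannvar})}{\ater}$ agrees with that of the original \FIX-instance, and the discharged marked assumptions are the same, so that both the overall conclusion and all open assumption classes are preserved.

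Checking that each such configuration is a genuine instance of \FIXletrec\ is routine: the freshness of $\aconstnamei{\aannvar}$ is inherited from the original \FIX-instance; the substitution on the right collapses correctly because $\allteri{2} = \arecvari{1}$; and the prefix-length side-condition holds in $\Derivi{1} = \Derivi{0}$ by the $\stRegzero$-part of the side-condition of \FIX, while in the trivial $\Derivi{2}$ there is no thread other than the assumption itself, so it holds vacuously.

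The hard part, and the only nontrivial verification, concerns the side-condition attached to bottommost instances of \FIXletrec, namely that the arising derivation be guarded on access path cycles (Definition~\ref{def:Regletrec:stRegletrec}). Here I would appeal to Proposition~\ref{prop:annstRegzero:sidecondition}, which states that for every instance of \FIX\ in the original $\annstRegzero$-derivation, every thread running upwards to a marked assumption discharged at that instance passes at least one guard (an instance of $\labscomp$ or $\lappcomp$). Because every introduced \FIXletrec-instance has $n = 1$ with a trivial $\textsf{in}$-subderivation, the recursion variable of a discharged assumption uniquely identifies the instance discharging it, and every step-over on an access path returns to the conclusion of the $\Derivi{1}$ of that instance, which is the subderivation in which the assumption lies. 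Consequently any cycle of an access path contains a segment that ascends inside some $\Derivi{1} = \Derivi{0}$ from its conclusion to a marked assumption discharged at the matching instance, i.e.\ exactly a thread of the kind covered by Proposition~\ref{prop:annstRegzero:sidecondition}; hence every such cycle contains a guard. Thus the arising derivation is guarded on access path cycles, so the transformation yields a valid $\annstRegletrec$-derivation. A straightforward induction on the structure of $\Deriv$, applying this replacement at each \FIX-instance, then delivers the derivation $\Deriv'$ with the same conclusion and the same open assumption classes, completing the argument.
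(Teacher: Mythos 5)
Your proposal is correct and follows essentially the same route as the paper's own proof: replacing each \FIX-instance by an $n=1$ instance of \FIXletrec\ whose binding subderivation is $\Derivi{0}$ and whose rightmost premise is the trivial derivation consisting of the marked assumption, and then invoking Proposition~\ref{prop:annstRegzero:sidecondition} to conclude guardedness on access path cycles. The only differences are cosmetic (you additionally spell out the freshness, substitution, and prefix-length checks, while the paper also remarks that every formula occurrence of $\Deriv$ is reachable on an access path of $\Deriv'$), and your gloss of the cycle-guardedness argument is at the same level of detail as the paper's.
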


\begin{proof}
  First note that the 
                      $\annstRegzero$ and $\annstRegletrec$ differ only by the specific version of 
  assumption-discharging rule in the system, \FIX\ in $\annstRegzero$ and \FIXletrec\ in $\annstRegletrec$. 
  For showing the proposition,
  let $\Deriv$ be a derivation in $\annstRegzero$. 
  
  We define a prooftree $\Deriv'$,
  (intended 
            to be a derivation in $\annstRegletrec$)
  by repeatedly replacing topmost occurrences of \FIX\ at the bottom of subderivations 
  of the form as depicted in Figure~\ref{fig:annstRegzero},
  by simulating subderivations of the form: 
  \begin{equation*}
    \begin{aligned}[c]
      \mbox{  
        \AxiomC{$ [\flabsann{\vec{\avar}}{\aconstnamei{\aannvar}}{\ater}]^{\aannvar} $}
        \noLine
        \UnaryInfC{$\Derivi{0}$}
        \noLine
        \UnaryInfC{$ \flabsann{\vec{\avar}}{\subst{\allter}{\aannvar}{\aconstnamei{\aannvar}}}{\ater} $}
        \AxiomC{$ (\flabsann{\vec{\avar}}{\aconstnamei{\aannvar}}{\ater})^{\aannvar} $}
        \RightLabel{$\sFIXletrec,u$}                   
        \BinaryInfC{$ \flabsann{\vec{\avar}}{(\letrec{\aannvar = \allter}{\aannvar})}{\ater} $}
        \DisplayProof
        }
    \end{aligned}
  \end{equation*}  
  until all occurrences of instances of \FIX\ have been replaced by instances of \FIXletrec. 
  The result is a prooftree with axioms and rules of $\addrule{\annstRegletrec}{\FIXletrecmin}$,
  with the same conclusion and the same classes of open assumptions as $\Deriv$, but
  in which rule instances carrying the label $\FIXletrec$ might actually be instances of $\FIXletrecmin$,
  unless actually proven (as will be done below) to be instances of $\FIXletrec$. 
  
  Now first note that, due to the form of the introduced instances of $\FIXletrec$, every formula occurrence
  in $\Deriv$ is reachable on an access path of $\Deriv'$. 
  Second, note that relative access paths $\apath'$ in $\Deriv'$ starting at the conclusion of an instance $\ainst'$ 
  of $\FIXletrec$ up to a marked assumption that is discharged at $\ainst'$
  descend from a thread $\apath$ in $\Deriv$ from the conclusion of an application $\ainst$ of $\FIX$ up to a marked
  assumption that is discharged at $\ainst$. Since by Proposition~\ref{prop:annstRegzero:sidecondition} the thread
  $\apath'$ passes at least one instance of a rule ($\labscomp$) or ($\lappcomp$), this is also the case for $\apath$.
  As a consequence, all cycles on relative access paths are guarded. Thus $\Deriv$ is guarded.
  Hence all occurrences of rule names $\FIXletrec$ in $\Deriv'$ rightly label occurrences of this rule,
  and $\Deriv'$ is a derivation in $\annstRegletrec$, which moreover is guarded.  
%
\end{proof}

\begin{example}\label{example:annstRegletrec}
  The closed derivation $\Hat{\Deriv}_{l}$ in Example~\ref{example:annstRegzero} 
  can be transformed into the following closed derivation in $\annstRegletrec\,$:
  \begin{center}
    \AxiomC{$ (\femptylabsann{\aconstnamei{\aannvar}}{\aiter})^{\amarker} $}
    \RightLabel{$\Vacstreg$}
    \UnaryInfC{$ \flabsann{\avar}{\aconstnamei{\aannvar}}{\ater} $}
    \AxiomC{\mbox{}}
    \RightLabel{$\bvarax$}
    \UnaryInfC{$ \flabsann{\avar}{\avar}{\avar} $}
    \RightLabel{$\lappcomp$}
    \BinaryInfC{$ \flabsann{\avar}{\lapp{\aconstnamei{\aannvar}}{\avar}}{\lapp{\ater}{\avar}} $}
    \RightLabel{$\Vacstreg$}
    \UnaryInfC{$ \flabsann{\avar\bvar}{\lapp{\aconstnamei{\aannvar}}{\avar}}{\lapp{\ater}{\avar}} $}
    \AxiomC{\mbox{}}
    \RightLabel{$\bvarax$}
    \UnaryInfC{$ \flabsann{\avar\bvar}{\bvar}{\bvar} $}
    \RightLabel{$\lappcomp$}
    \BinaryInfC{$ \flabsann{\avar\bvar}{\lapp{\lapp{\aconstnamei{\aannvar}}{\avar}}{\bvar}}{\lapp{\lapp{\ater}{\avar}}{\bvar}} $}
    \RightLabel{$\labscomp$}
    \UnaryInfC{$ \flabsann{\avar}{\labs{\bvar}{\lapp{\lapp{\aconstnamei{\aannvar}}{\avar}}{\bvar}}}{\lapp{\lapp{\ater}{\avar}}{\bvar}} $}
    \RightLabel{$\labscomp$}
    \UnaryInfC{$ \femptylabsann{\labs{\avar\bvar}{\lapp{\lapp{\aconstnamei{\aannvar}}{\avar}}{\bvar}}}{\labs{\avar\bvar}{\lapp{\lapp{\ater}{\avar}}{\bvar}}} $}
    \AxiomC{$ (\femptylabsann{\aconstnamei{\amarker}}{\aiter})^{\amarker} $}
    \RightLabel{\sFIXletrec, $u$}
    \BinaryInfC{$ \femptylabsann{(\letrec{\aannvar = \labs{\avar\bvar}{\lapp{\lapp{\aannvar}{\avar}}{\bvar}}}{\aannvar})}{\ater} $}
    \DisplayProof
  \end{center}
\end{example}

Now we concentrate on the remaining matter of proving that the \lambdaletrecterm\
obtained by the annotation process from a closed derivation in $\stRegzero$ to one in $\annstRegzero$
does indeed unfold to the infinite \lambdaterm\ it annotates. 
For this, we establish a proof-theoretic transformation from derivations in $\annstRegzero$
to derivations in $\stRegeq$.

\begin{lemma}[from $\annstRegzero$- to $\stRegeq$\nb-derivations]%
  \label{lem:annstRegzero:2:stRegeq}
  Let $\Deriv$ be a closed derivation in $\annstRegzero$ with conclusion $\femptylabsann{\allter}{\aiter}$.
  Then $\defd{\Unf{\allter}}$, and $\Deriv$ can be transformed into a closed derivation $\Deriv'$ in $\stRegeq$ 
  with conclusion $\femptylabs{\Unf{\allter}} = \femptylabs{\aiter}$ by:
  \begin{itemize}
    \item replacing each formula occurrence $\aocc$ of $\flabsann{\vec{\bvar}}{\bllter}{\biter}$ in $\Deriv$
      by an occurrence of the formula
      $ \Unf{\flabs{\vec{\bvar}}{\letrec{\abindgroup}{\tilde{\bllter}}}}
        =
       \flabs{\vec{\bvar}}{\biter}$ in $\Deriv'$,
      where 
      $\abindgroup$ 
      arises as the union of all outermost binding groups in conclusions of instances of \FIX\ at or below $\aocc$,
      and where
      $\flabs{\vec{\bvar}}{\bllter} = \flabs{\vec{\bvar}}{\subst{\tilde{\bllter}}{\vec{\arecvar}}{\vecsub{\aconstname}{\vec{\arecvar}}}}$ 
      and $\vec{\arecvar}$ is comprised of the recursion variables occurring in $\abindgroup$ and $\vecsub{\aconstname}{\vec{\arecvar}}$
      are distinct constants for $\vec{\arecvar}$ as chosen by~$\Deriv$;
      the unfoldings involved here are always defined.  
  \end{itemize}
\end{lemma}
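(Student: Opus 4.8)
The plan is to prove the two assertions of the lemma---definedness of $\Unf{\allter}$ (and of all auxiliary unfoldings) and the existence of the transformed $\stRegeq$-derivation $\Deriv'$---largely separately: the first by routing through the $\sletrec$-systems of Section~\ref{sec:proofs}, the second by an induction on $\Deriv$ that rewrites it occurrence by occurrence.

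For definedness I would first apply Proposition~\ref{prop:annstRegzero:2:annstRegletrec} to turn $\Deriv$ into a derivation in $\annstRegletrec$ with the same conclusion $\femptylabsann{\allter}{\aiter}$, then Proposition~\ref{prop:annstRegletrec:2:stRegletrec} to drop the infinite-$\lambda$-term annotations and obtain a closed $\stRegletrec$-derivation of $\femptylabs{\allter}$. By Theorem~\ref{thm:Regletrec:stRegletrec} this already gives $\defd{\Unf{\allter}}$. To secure definedness of every $\Unf{\flabs{\vec{\bvar}}{\letrec{\abindgroup}{\tilde{\bllter}}}}$ arising in the construction, I would observe, via Lemma~\ref{lem:Regletrec:stRegletrec}, that these closed prefixed $\sletrec$-terms are exactly the generated subterms of $\allter$ with respect to the strategy $\astrati{\Deriv}$ extracted from the $\stRegletrec$-derivation, and that guardedness of that derivation makes $\allter$ $\astrati{\Deriv}$-productive; Lemma~\ref{lem:unfolding:versus:scdelstrats} then forces each such generated subterm to unfold, so all unfoldings named in the statement are defined.

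For the transformation I would proceed by induction on the structure of $\Deriv$, replacing each occurrence $\aocc$ of $\flabsann{\vec{\bvar}}{\bllter}{\biter}$ by $\Unf{\flabs{\vec{\bvar}}{\letrec{\abindgroup}{\tilde{\bllter}}}} = \flabs{\vec{\bvar}}{\biter}$ and maintaining the invariant that $\flabs{\vec{\bvar}}{\letrec{\abindgroup}{\tilde{\bllter}}}$ is a closed prefixed $\sletrec$-term whose unfolding equals $\flabs{\vec{\bvar}}{\biter}$. The non-\FIX\ cases reduce to checking that a rule of $\annstRegzero$ maps to the homonymous rule of $\EqTer$ (hence of $\stRegeq$): an $\bvarax$-occurrence becomes an $\bvarax$-axiom, while for ($\labscomp$), ($\lappcomp$) and ($\annVacstreg$) the collected binding group $\abindgroup$ is unchanged between premise(s) and conclusion, and Lemma~\ref{lem:commute:unfoldomegared:stregred} supplies the needed commutations, e.g.
\[
  \Unf{\flabs{\vec{\bvar}}{\letrec{\abindgroup}{\labs{\cvar}{\tilde{\bllter}}}}}
    \;\labsdecompred\;
  \Unf{\flabs{\vec{\bvar}\cvar}{\letrec{\abindgroup}{\tilde{\bllter}}}}\punc{,}
\]
so that the $\lambda$- and application-decomposition structure on the unfolded side matches exactly the premise/conclusion shape required by ($\labscomp$)/($\lappcomp$); for ($\annVacstreg$) vacuousness transfers because, by Lemma~\ref{lem:stRegzero:2:annstRegzero}, the $\sletrec$-annotation binds no more prefix variables than the infinite $\lambda$-term it annotates.

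The crux, and the step I expect to be the main obstacle, is the case of (\FIX). Here the binding group introduced by the instance, $\aannvar = \allter$, is absorbed into the collected $\abindgroup$, so that the equation $\Unf{\flabs{\vec{\avar}}{\letrec{\abindgroup}{\aannvar}}} = \flabs{\vec{\avar}}{\aiter}$ at the conclusion must be shown to coincide with the equation obtained at each assumption $(\flabsann{\vec{\avar}}{\aconstnamei{\aannvar}}{\aiter})^{\aannvar}$ discharged at that instance; this coincidence is precisely what lets the $\stRegeq$-rule (\FIX) fire and discharge. Establishing it rests on confluence of $\sletrec$-unfolding (Proposition~\ref{prop:unf-confluence}) together with the unfolding rules $\srulebp{\sunf}{\smergeletrec}$ and $\srulebp{\sunf}{\rec}$, which yield identities of the form $\Unf{\letrec{\abindgroup}{\aannvar}} = \Unf{\letrec{\abindgroup}{\letrec{\aannvar = \allter}{\aannvar}}}$. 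The delicate bookkeeping lies in tracking which constants $\aconstnamei{\cdot}$ are open in $\bllter$ and how $\abindgroup$ re-closes exactly those while avoiding double-counting the binding group of a \FIX-instance whose conclusion is $\aocc$ itself; I would control this by an explicit invariant on open constants, so that at the (closed) root $\abindgroup$ is empty and the conclusion reads $\femptylabs{\Unf{\allter}} = \femptylabs{\aiter}$. Finally, the side-condition of (\FIX) in $\stRegeq$---that each discharged thread passes a genuine $\EqTer$-rule---transfers directly from Proposition~\ref{prop:annstRegzero:sidecondition}, so the resulting $\Deriv'$ is a legitimate, finite, closed $\stRegeq$-derivation.
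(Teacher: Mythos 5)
Your overall architecture is the paper's own: you establish $\defd{\Unf{\allter}}$ via exactly the chain Proposition~\ref{prop:annstRegzero:2:annstRegletrec} $\to$ Proposition~\ref{prop:annstRegletrec:2:stRegletrec} $\to$ Theorem~\ref{thm:Regletrec:stRegletrec}, and you then perform the same occurrence-by-occurrence replacement with the same formula assignment, checking rule instances locally. Two of your deviations are harmless variants: you secure definedness of the intermediate unfoldings globally, by identifying the prefixed $\sletrec$\nb-terms with generated subterms of the strategy $\astrati{\Deriv}$ and invoking productivity together with Lemma~\ref{lem:unfolding:versus:scdelstrats}, whereas the paper propagates definedness locally, bottom-up, alongside the exhibited $\sunfoldred$\nb-steps (note that Lemma~\ref{lem:unfolding:versus:scdelstrats} is stated for unprefixed closed terms, so your route quietly relies on the prefixed extension of $\sUnf$ and on productivity being inherited by reducts); and where the paper writes out $\unflabsred$- and $\unfletrecred$\nb-steps explicitly and concludes by Lemma~\ref{lem:unique:unfolding}, you cite the packaged commutations of Lemma~\ref{lem:commute:unfoldomegared:stregred} --- the same argument.

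The genuine gap is in your \FIX\ case. The identities you propose to derive from $\srulebp{\sunf}{\smergeletrec}$ and $\srulebp{\sunf}{\rec}$ (plus confluence) only handle matching the premise of a \FIX\nb-instance with its conclusion, via $\letrec{\abindgroupi{0},\,\aannvar=\alltertildei{0}}{\aannvar} \unfrecred \letrec{\abindgroupi{0},\,\aannvar=\alltertildei{0}}{\alltertildei{0}}$. But for the $\stRegeq$\nb-rule \FIX\ to discharge, the equation at each marked assumption must coincide \emph{syntactically} with the equation at the conclusion, and at an assumption the collected binding group is $\abindgroupi{0},\,\aannvar=\alltertildei{0},\,\abindgroup'$, where $\abindgroup'$ accumulates the binding groups of \FIX\nb-instances strictly between the given instance and the assumption. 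Identifying the two equations requires \emph{deleting} $\abindgroup'$, and neither $\srulebp{\sunf}{\smergeletrec}$ nor $\srulebp{\sunf}{\rec}$ --- indeed no rule of $\llunfCRS$ other than $\srulebp{\sunf}{\sreduce}$ and $\srulebp{\sunf}{\snil}$ --- ever removes bindings, so confluence has nothing to join here. The paper closes exactly this point with a single $\sunfreducered$\nb-step, which is legitimate because the variables of $\abindgroup'$ are unreachable from the in-part $\aannvar$: the \FIX\nb-rule of $\annstRegzero$ re-binds its constant in the conclusion annotation, so constants belonging to \FIX\nb-instances above the given one never occur in $\alltertildei{0}$. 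Your ``invariant on open constants'' is the right bookkeeping to make that unreachability precise, but it must be supplemented by the $\sunfreducered$\nb-step and an appeal to Lemma~\ref{lem:unique:unfolding}; with that single ingredient added, your proposal goes through.
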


\begin{proof}
  Let $\Deriv$ be a closed derivation in $\annstRegzero$ with conclusion $\femptylabsann{\allter}{\aiter}$.
  
  By Proposition~\ref{prop:annstRegzero:2:annstRegletrec}, $\Deriv$
  can be transformed into a closed derivation $\Deriv_1$ in $\annstRegletrec$
  with the same conclusion. 
  Due to Proposition~\ref{prop:annstRegletrec:2:stRegletrec},
  by dropping the infinite terms in $\Deriv_1$, a derivation $\Deriv_2$ in $\stRegletrec$
  with conclusion $\femptylabs{\allter}$ can be obtained.  
  Then it follows from Theorem~\ref{thm:Regletrec:stRegletrec} 
  that $\defd{\Unf{\allter}}$, that is, that $\allter$ unfolds to an infinite \lambdaterm. 
  
  We have to show that the transformation of $\Deriv$ into $\Deriv'$
  as described in the statement of the lemma
  is, on the one hand, possible (that is, the unfolding of each prefixed \lambdaletrec\nb-term is indeed defined),
  and on the other hand, that the prooftree $\Deriv'$ obtained by these replacements
  is indeed a valid derivation in $\stRegeq$.   
  
  We argue for the possibility of these replacements and for their correctness 
  locally, that is by carrying out the replacements from the bottom of $\Deriv$ upwards,
  thereby recognising for every replacement step that it is possible, and that it
  indeed produces a valid inference in $\stRegzero$. 
  
  As a typical example of the arguments necessary to establish this fact, we consider
  a derivation $\Deriv$ in $\annstRegzero$ with in it an instance of ($\labscomp$)
  that immediately succeeds an instance of \FIX: 
  \begin{equation*}
    \mbox{
      \AxiomC{$ [\flabsann{\vec{\avar}\bvar}{\aconstnamei{\amarker}}{\aiteri{0}}]^{\amarker} $} 
      \noLine
      \UnaryInfC{$ \Derivi{0} $}
      \noLine
      \UnaryInfC{$ \flabsann{\vec{\avar}\bvar}{\subst{\allteri{0}}{\amarker}{\aconstnamei{\amarker}}}{\aiteri{0}} $}
      \RightLabel{$\sFIX, \amarker$}
      \UnaryInfC{$ \flabsann{\vec{\avar}\bvar}{\letrec{\amarker = \allteri{0}}{\amarker}}{\aiteri{0}} $}
      \RightLabel{$\labscomp$}
      \UnaryInfC{$ \flabsann{\vec{\avar}}{(\labs{\bvar}{\letrec{\amarker = \allteri{0}}{\amarker}})}{\labs{\bvar}{\aiteri{0}}} $} 
      \noLine
      \UnaryInfC{$ \Derivi{00}' $}
      \noLine
      \UnaryInfC{$ \femptylabsann{\allter}{\aiter} $}
      \DisplayProof
          }
  \end{equation*}
  According to the statement of the lemma, $\Deriv$ is transformed into the following
  $\stRegeq$\nb-prooftree:
  \begin{equation*}
    \mbox{
      \AxiomC{$ \ldots 
                (\Unf{\flabs{\vec{\avar}\bvar}{\letrec{\abindgroupi{0},\,\amarker = \alltertildei{0},\,\abindgroup'}{\amarker}}} 
                   = 
                 \flabs{\vec{\avar}\bvar}{\aiteri{0}})^{\amarker} \ldots $} 
      \noLine
      \UnaryInfC{$ \Derivi{0}' $}
      \noLine
      \UnaryInfC{$ \Unf{\flabs{\vec{\avar}\bvar}{\letrec{\abindgroupi{0},\,\amarker = \alltertildei{0}}{\alltertildei{0}}}}
                     = 
                   \flabs{\vec{\avar}\bvar}{\aiteri{0}} $}
      \RightLabel{$\sFIX, \amarker$}
      \UnaryInfC{$ \Unf{\flabs{\vec{\avar}\bvar}{\letrec{\abindgroupi{0},\, \amarker = \alltertildei{0}}{\amarker}}} = \flabs{\vec{\avar}\bvar}{\aiteri{0}} $}
      \RightLabel{$\labscomp$}
      \UnaryInfC{$ \Unf{\flabs{\vec{\avar}}{\letrec{\abindgroupi{0}}{\labs{\bvar}{\letrec{\amarker = \alltertildei{0}}{\amarker}}}}}
                     = 
                   \flabs{\vec{\avar}}{\labs{\bvar}{\aiteri{0}}} $} 
      \noLine
      \UnaryInfC{$ \Derivi{00}' $}
      \noLine
      \UnaryInfC{$ \Unf{\femptylabs{\letrec{}\allter}} = \femptylabs{\aiter} $}
      \DisplayProof
          }
  \end{equation*} 
  where $\abindgroupi{0}$ 
  arises as the union of all outermost binding groups in conclusions of instances of \FIX\ 
  strictly below the visible instance of \FIX,
  and $\abindgroup'$ is the union of all outermost binding groups in conclusions of instances of \FIX\
  strictly above the visible instance of \FIX\ and below the indicated marked assumptions
  (this binding group differs for different marked assumptions of this assumption class),
  and where $\alltertildei{0}$ is the result of replacing in $\allteri{0}$ all occurrences of
  constants $\aconstnamei{\arecvar}$ by the recursion variable $\arecvar$ from which it originates.
  
  Now assuming that the unfolding in the conclusion of the visible
  instance of ($\lappcomp$) has been shown to exist, we want to recognise that this instance and the instance of \FIX\ above
  are valid instances in $\stRegzero$.
  For the instance of ($\labscomp$) we have to show:
  \begin{multline*}
    \defd{\Unf{\flabs{\vec{\avar}}{\letrec{\abindgroupi{0}}{\labs{\bvar}{\letrec{\amarker = \alltertildei{0}}{\amarker}}}}}}
    \\
    \Longrightarrow\;\;
      \text{for a term $\labs{\vec{\avar}\bvar}{\biteri{0}}$: } \;\,
      \begin{aligned}[t]
        &
        \defd{\Unf{\flabs{\vec{\avar}\bvar}{\letrec{\abindgroupi{0},\,\amarker = \alltertildei{0}}{\alltertildei{0}}}}}
          =
        \flabs{\vec{\avar}\bvar}{\biteri{0}}
        \\
          & \logand
        \Unf{\flabs{\vec{\avar}}{\letrec{\abindgroupi{0}}{\labs{\bvar}{\letrec{\amarker = \alltertildei{0}}{\amarker}}}}}  
          = 
        \flabs{\vec{\avar}}{\labs{\bvar}{\biteri{0}}}
      \end{aligned}  
  \end{multline*}
  This, however, is an easy consequence of the following $\unfoldred$\nb-rewrite steps:
  \begin{align*}
    \flabs{\vec{\avar}}{\letrec{\abindgroupi{0}}{\labs{\bvar}{\letrec{\amarker = \alltertildei{0}}{\amarker}}}}
      \;\unflabsred\; {} &
    \flabs{\vec{\avar}}{\labs{\bvar}{\letrec{\abindgroupi{0}}{\letrec{\amarker = \alltertildei{0}}{\amarker}}}}
    \\
      \;\unfletrecred\; {} & 
    \flabs{\vec{\avar}}{\labs{\bvar}{\letrec{\abindgroupi{0},\amarker = \alltertildei{0}}{\amarker}}}
  \end{align*}
  in view of the fact that, by Lemma~\ref{lem:unique:unfolding},
  unfolding is unique normalising (in at most $\omega$ steps).\
  And for the instance of \FIX\ we have to show:
  \begin{multline*}
    \defd{\underbrace{\Unf{\flabs{\vec{\avar}\bvar}{\letrec{\abindgroupi{0},\, \amarker = \alltertildei{0}}{\amarker}}}}_{= (\star)}}
    \Longrightarrow
      \begin{aligned}[t]
        &
        \defd{\Unf{\flabs{\vec{\avar}\bvar}{\letrec{\abindgroupi{0},\,\amarker = \alltertildei{0}}{\alltertildei{0}}}}}
           = (\star)
        \\
        &  \slogand \;\:
        \defd{\Unf{\flabs{\vec{\avar}\bvar}{\letrec{\abindgroupi{0},\,\amarker = \alltertildei{0},\,\abindgroup'}{\amarker}}}} 
           = (\star)
      \end{aligned}
  \end{multline*}  
  (actually the statement as in the second line has to be shown for every binding-group $\abindgroup'$
   that occurs for marked assumptions discharged at the instance of \FIX). 
  This implication is a consequence of the $\unfoldred$\nb-rewrite steps:
  \begin{gather*}
    \flabs{\vec{\avar}\bvar}{\letrec{\abindgroupi{0},\, \amarker = \alltertildei{0}}{\amarker}}
      \unfrecred
    \flabs{\vec{\avar}\bvar}{\letrec{\abindgroupi{0},\, \amarker = \alltertildei{0}}{\alltertildei{0}}}  
    \\
    \flabs{\vec{\avar}\bvar}{\letrec{\abindgroupi{0},\,\amarker = \alltertildei{0},\,\abindgroup'}{\amarker}}
      \unfreducered
    \flabs{\vec{\avar}\bvar}{\letrec{\abindgroupi{0},\, \amarker = \alltertildei{0}}{\amarker}}  
  \end{gather*}
  again in view of the statement of Lemma~\ref{lem:unique:unfolding}.
  
  The arguments used here are typical, and can be carried out similarly also for
  showing that axioms ($\bvarax$), and instances of rules ($\lappcomp$) and ($\Vacstreg$) in $\annstRegzero$\nb-derivations
  give rise to, under the transformation described in the statement of the lemma,
  valid instances of axioms ($\bvarax$), and instances of ($\lappcomp$) and ($\Vacstreg$), respectively, in $\stRegeq$\nb-derivations. 
\end{proof}

\begin{theorem}\label{thm:annstRegzero}
  If $\,\derivablein{\annstRegzero}{\femptylabsann{\allter}{\aiter}}$
    holds for a \lambdaletrec\nb-term $\allter$ and an infinite \lambdaterm~$\aiter$,
  then $\allter$ unfolds to, and hence expresses, $\aiter$. 
\end{theorem}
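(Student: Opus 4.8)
The plan is to obtain the theorem as an almost immediate corollary of the proof-theoretic transformation in Lemma~\ref{lem:annstRegzero:2:stRegeq} combined with the soundness half of Theorem~\ref{thm:stRegeq}. Concretely, suppose $\derivablein{\annstRegzero}{\femptylabsann{\allter}{\aiter}}$; then by definition there is a closed derivation $\Deriv$ in $\annstRegzero$ with conclusion $\femptylabsann{\allter}{\aiter}$. The first step I would take is to invoke Lemma~\ref{lem:annstRegzero:2:stRegeq}, which accomplishes two things simultaneously: it guarantees $\defd{\Unf{\allter}}$, so that $\allter$ indeed unfolds to an infinite \lambdaterm~$\Unf{\allter}$, and it produces from $\Deriv$ a closed derivation $\Deriv'$ in $\stRegeq$ with conclusion $\femptylabs{\Unf{\allter}} = \femptylabs{\aiter}$. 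Thus we have $\derivablein{\stRegeq}{\femptylabs{\Unf{\allter}} = \femptylabs{\aiter}}$.

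It then remains to convert this derivable equation into an actual identity, for which I would appeal to the soundness direction of Theorem~\ref{thm:stRegeq}. Since that theorem is stated only for strongly regular infinite \lambdaterms, the second step is to check that both $\aiter$ and $\Unf{\allter}$ are strongly regular. For $\aiter$ I would drop the \lambdaletrecterm\ annotations from $\Deriv$ by Lemma~\ref{lem:stRegzero:2:annstRegzero}, (\ref{lem:stRegzero:2:annstRegzero:item:ii}), obtaining a closed derivation in $\stRegzero$ with conclusion $\femptylabs{\aiter}$; by the soundness-and-completeness statement Theorem~\ref{lem:Reg:stReg}, (\ref{lem:Reg:stReg:item:ii}), this witnesses that $\aiter$ is strongly regular. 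For $\Unf{\allter}$, I would observe that, since $\allter \unfoldomegared \Unf{\allter}$, the term $\Unf{\allter}$ is by definition \lambdaletrecexpressible, so Theorem~\ref{thm:ll-expressible:2:streg} yields that $\Unf{\allter}$ is strongly regular as well.

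With both sides now known to be strongly regular, Theorem~\ref{thm:stRegeq} applies to $\Deriv'$ and gives $\Unf{\allter} = \aiter$. Together with $\defd{\Unf{\allter}}$ this says precisely that $\allter$ unfolds to $\aiter$, that is, that $\allter$ expresses $\aiter$, which is the desired conclusion.

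I do not expect a genuine obstacle at the level of this theorem, since the heavy lifting has already been carried out inside Lemma~\ref{lem:annstRegzero:2:stRegeq} (the bottom-up replacement of annotated terms by unfolding equations, and the verification that the resulting inferences are valid in $\stRegeq$). The only point requiring care is the bookkeeping around strong regularity: the soundness half of Theorem~\ref{thm:stRegeq} carries strong regularity of \emph{both} equated terms as a hypothesis, so the argument must not silently read off $\Unf{\allter} = \aiter$ from $\Deriv'$ without first discharging these hypotheses. The plan therefore routes explicitly through Lemma~\ref{lem:stRegzero:2:annstRegzero} and Theorem~\ref{lem:Reg:stReg} for $\aiter$, and through the unfolding-implies-strong-regularity result Theorem~\ref{thm:ll-expressible:2:streg} for $\Unf{\allter}$, thereby closing the one gap a too-hasty corollary argument might leave open.
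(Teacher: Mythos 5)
Your proposal follows essentially the same route as the paper's own proof: invoke Lemma~\ref{lem:annstRegzero:2:stRegeq} to obtain $\defd{\Unf{\allter}}$ together with a closed $\stRegeq$\nb-derivation of $\femptylabs{\Unf{\allter}} = \femptylabs{\aiter}$, and then conclude $\Unf{\allter} = \aiter$ by the soundness direction of Theorem~\ref{thm:stRegeq}. Your additional step of explicitly discharging the strong-regularity hypotheses of Theorem~\ref{thm:stRegeq} (via Lemma~\ref{lem:stRegzero:2:annstRegzero}, (\ref{lem:stRegzero:2:annstRegzero:item:ii}), and Theorem~\ref{lem:Reg:stReg}, (\ref{lem:Reg:stReg:item:ii}), for $\aiter$, and via Theorem~\ref{thm:ll-expressible:2:streg} for $\Unf{\allter}$, with no circularity since that theorem precedes this one) is correct and, if anything, more careful than the paper, whose proof applies Theorem~\ref{thm:stRegeq} without verifying those hypotheses.
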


\begin{proof}
  Suppose that $\Deriv$ is a closed derivation in $\annstRegzero$ with conclusion $\femptylabsann{\allter}{\aiter}$.
  Lemma~\ref{lem:annstRegzero:2:stRegeq} entails  
  that $\allter$ unfolds to an infinite \lambdaterm, and moreover,
  that $\Deriv$ can be transformed into a closed derivation $\Deriv'$ in $\stRegeq$
  with conclusion $\femptylabsequat{\Unf{\allter}}{\aiter}$. 
  Then it follows by Theorem~\ref{thm:stRegeq} (applying soundness of $\stRegzero$ with respect
  to the property of \lambdaletrecterms\ to unfold to an infinite \lambdaterm\ that $\Unf{\allter} = \aiter$,
  and hence that $\allter \unfoldomegared \aiter$. 
  In this way we have found a \lambdaletrecterm~$\allter$ that expresses $\aiter$.
\end{proof}

We now arrive at our main characterisation result.

\begin{theorem}\label{thm:ll-expressible:streg}
  An infinite \lambdaterm\ is $\lambdaletrec$\nb-expressible 
    if and only if 
  it is strongly regular.
\end{theorem}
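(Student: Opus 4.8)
The plan is to prove the two implications separately, assembling results that are already in place. The forward direction---that every \lambdaletrecexpressible\ infinite \lambdaterm\ is strongly regular---is precisely the content of Theorem~\ref{thm:ll-expressible:2:streg}, so I would simply invoke it and be done with that half.

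For the converse, suppose $\aiter$ is a strongly regular infinite \lambdaterm. First I would apply Theorem~\ref{lem:Reg:stReg},~(\ref{lem:Reg:stReg:item:ii}): strong regularity of $\aiter$ yields $\derivablein{\stRegzero}{\femptylabs{\aiter}}$, that is, a closed derivation $\Deriv$ in $\stRegzero$ with conclusion $\femptylabs{\aiter}$. Next I would annotate this derivation with \lambdaletrecterms: by Lemma~\ref{lem:stRegzero:2:annstRegzero},~(\ref{lem:stRegzero:2:annstRegzero:item:i}), $\Deriv$ can be transformed into a closed derivation $\Derivann$ in the annotated system $\annstRegzero$ whose conclusion has the shape $\femptylabsann{\allter}{\aiter}$ for some \lambdaletrecterm~$\allter$ read off by the annotation process (the construction is available because every prefix variable bound in the annotation is also bound in the infinite term it annotates, so instances of $\Vacstreg$ transport correctly). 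Finally, Theorem~\ref{thm:annstRegzero} guarantees that this $\allter$ unfolds to $\aiter$, i.e.\ $\allter \unfoldomegared \aiter$, which is exactly $\lambdaletrec$-expressibility of~$\aiter$.

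The whole argument is thus a three-step pipeline, strongly regular $\Rightarrow$ $\stRegzero$-derivable $\Rightarrow$ $\annstRegzero$-derivable $\Rightarrow$ expressible, dual to the reverse chain that constitutes Theorem~\ref{thm:ll-expressible:2:streg}. The place where the genuine content sits---and hence the step I expect to be the crux---is the last one, Theorem~\ref{thm:annstRegzero}: verifying that the \lambdaletrecterm\ extracted from the annotated derivation really does have $\aiter$ as its infinite unfolding. This is not immediate from the syntactic annotation; it rests on Lemma~\ref{lem:annstRegzero:2:stRegeq}, which converts an $\annstRegzero$-derivation into a closed derivation in $\stRegeq$ with conclusion $\femptylabs{\Unf{\allter}} = \femptylabs{\aiter}$, and then on soundness of $\stRegeq$ for equality of strongly regular infinite \lambdaterms\ (Theorem~\ref{thm:stRegeq}).

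The subtlety to flag---already handled in the excerpt---is that the correctness of the extraction hinges on annotating $\stRegzero$ rather than $\stReg$: only the $\stRegzero$-specific side-condition on instances of \FIX\ (requiring that abstraction-prefix lengths on threads from a discharged assumption down to the instance do not drop below that of the conclusion) makes the recursive-binding structure of $\allter$ unfold back to $\aiter$. As Example~\ref{example:annstRegzero} shows, the weaker $\stReg$ side-condition would permit annotations (such as the divergent $\Hat{\Deriv}_r$ there) whose extracted \lambdaletrecterm\ fails to unfold to the target term. With Theorem~\ref{thm:annstRegzero} in hand, however, no further work is needed, and combining both directions gives the stated equivalence.

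\begin{proof}
  For the implication ``$\Rightarrow$'', assume that the infinite \lambdaterm~$\aiter$ is \lambdaletrecexpressible. Then $\aiter$ is strongly regular by Theorem~\ref{thm:ll-expressible:2:streg}.

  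For the implication ``$\Leftarrow$'', assume that $\aiter$ is strongly regular. By Theorem~\ref{lem:Reg:stReg},~(\ref{lem:Reg:stReg:item:ii}), there is a closed derivation $\Deriv$ in $\stRegzero$ with conclusion $\femptylabs{\aiter}$. By Lemma~\ref{lem:stRegzero:2:annstRegzero},~(\ref{lem:stRegzero:2:annstRegzero:item:i}), $\Deriv$ can be annotated into a closed derivation $\Derivann$ in $\annstRegzero$ with conclusion $\femptylabsann{\allter}{\aiter}$ for a suitable \lambdaletrecterm~$\allter$, so that $\derivablein{\annstRegzero}{\femptylabsann{\allter}{\aiter}}$. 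Then Theorem~\ref{thm:annstRegzero} yields $\allter \unfoldomegared \aiter$, hence $\allter$ expresses $\aiter$, and therefore $\aiter$ is \lambdaletrecexpressible.
\end{proof}
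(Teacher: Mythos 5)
Your proposal is correct and follows essentially the same route as the paper's own proof: forward direction by Theorem~\ref{thm:ll-expressible:2:streg}, and for the converse the pipeline strong regularity $\Rightarrow$ closed derivation via Theorem~\ref{lem:Reg:stReg},~(\ref{lem:Reg:stReg:item:ii}) $\Rightarrow$ annotation via Lemma~\ref{lem:stRegzero:2:annstRegzero},~(\ref{lem:stRegzero:2:annstRegzero:item:i}) $\Rightarrow$ unfolding via Theorem~\ref{thm:annstRegzero}. If anything, your version is marginally tidier than the paper's, which takes the closed derivation in $\stReg$ before applying the annotation lemma (stated for $\stRegzero$), whereas you invoke $\stRegzero$-derivability directly, exactly matching the lemma's hypothesis.
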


\begin{proof}
  Let $\aiter$ be an infinite \lambdaterm. 
  The direction ``$\Rightarrow$'' is the statement of Theorem~\ref{thm:ll-expressible:2:streg}. 
  %
  %
  For showing the direction ``$\Leftarrow$'' in the statement of the theorem,
  we assume that $\aiter$ is strongly regular. 
  
  Then by Lemma~\ref{lem:Reg:stReg}, (\ref{lem:Reg:stReg:item:ii}),
  there exists a closed derivation $\Deriv$ in $\stReg$ with conclusion $\femptylabs{\aiter}$.
  Due to Lemma~\ref{lem:stRegzero:2:annstRegzero}, (\ref{lem:stRegzero:2:annstRegzero:item:i}),
  $\Deriv$ can be transformed into a derivation $\Derivann$ in $\annstRegzero$ with conclusion $\femptylabsann{\allter}{\aiter}$,
  for some \lambdaletrecterm~$\allter$.
  Then it follows by Theorem~\ref{thm:annstRegzero} that the \lambdaletrecterm~$\allter$ expresses $\aiter$.
\end{proof}

As an immediate consequence of Theorem~\ref{thm:ll-expressible:streg} 
and of 
       Corollary~\ref{cor:thm:streg:fin:bind:capt:chains}
we obtain the following theorem, a summary of our main results. 

\begin{theorem}
  For all infinite \lambdaterms~$\aiter$ the following statements are equivalent:
  \begin{enumerate}[(i)]
    \item $\aiter$ is $\lambdaletreccal$\nb-expressible. 
    \item $\aiter$ is strongly regular.
    \item $\aiter$ is regular, and it only contains finite \bindcaptchains.
  \end{enumerate}
\end{theorem}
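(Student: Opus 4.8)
The plan is to obtain the three-way equivalence by chaining two biconditionals that are already available as standalone results, so no new argument is needed beyond invoking them in the right order. Concretely, I would first appeal to Theorem~\ref{thm:ll-expressible:streg}, which states precisely that an infinite \lambdaterm\ is $\lambdaletreccal$\nb-expressible if and only if it is strongly regular; this is exactly $\text{(i)}\Leftrightarrow\text{(ii)}$. I would then appeal to Corollary~\ref{cor:thm:streg:fin:bind:capt:chains}, which states that an infinite \lambdaterm\ is strongly regular if and only if it is regular and contains only finite \bindcaptchains; this is exactly $\text{(ii)}\Leftrightarrow\text{(iii)}$. Composing these two equivalences yields that all three statements are mutually equivalent for every $\aiter\in\Ter{\inflambdacal}$, which is the assertion of the theorem.

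To make transparent where the actual content of the statement resides, I would unpack what the two cited results rest upon. For $\text{(i)}\Leftrightarrow\text{(ii)}$, the direction ``expressible $\Rightarrow$ strongly regular'' is Theorem~\ref{thm:ll-expressible:2:streg}, proved by extracting a \extscopedelimiting\ strategy from a closed derivation in $\stRegletrec$ and projecting it under the unfolding map $\sUnf$; the converse, ``strongly regular $\Rightarrow$ expressible'', passes through the annotated proof system $\annstRegzero$ and culminates in Theorem~\ref{thm:annstRegzero}, drawing on soundness of $\stRegeq$. For $\text{(ii)}\Leftrightarrow\text{(iii)}$, Corollary~\ref{cor:thm:streg:fin:bind:capt:chains} is itself the combination of Theorem~\ref{thm:streg:fin:bind:capt:chains} (a \emph{regular} term is strongly regular iff it has only finite \bindcaptchains) with Proposition~\ref{prop:def:reg:streg}, (\ref{prop:def:reg:streg:item:i}) (strong regularity already entails regularity), which is what lets the ``regular and finite chains'' formulation of~(iii) drop the explicit regularity hypothesis present in Theorem~\ref{thm:streg:fin:bind:capt:chains}.

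The one point deserving a word of care is that all three predicates are stated over the same class $\Ter{\inflambdacal}$, so that the two biconditionals compose without any side conditions and the quantifier ``for all infinite \lambdaterms~$\aiter$'' ranges over their common domain; this holds by construction. Consequently, for this summary statement there is essentially no obstacle: the proof is pure bookkeeping, and I would present it in a single line observing that $\text{(i)}\Leftrightarrow\text{(ii)}$ by Theorem~\ref{thm:ll-expressible:streg} and $\text{(ii)}\Leftrightarrow\text{(iii)}$ by Corollary~\ref{cor:thm:streg:fin:bind:capt:chains}. The genuine difficulty of the development is concentrated entirely in the prerequisites just named --- in particular in the strategy-to-derivation and derivation-to-strategy correspondences of Section~\ref{sec:proofs} and in the prefix-length argument powering Lemma~\ref{lem:inf:bind:capt:chains} --- none of which needs to be revisited at this stage.
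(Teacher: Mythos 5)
Your proposal is correct and coincides exactly with the paper's proof: the theorem is stated there as an immediate consequence of Theorem~\ref{thm:ll-expressible:streg} (giving $\text{(i)}\Leftrightarrow\text{(ii)}$) and Corollary~\ref{cor:thm:streg:fin:bind:capt:chains} (giving $\text{(ii)}\Leftrightarrow\text{(iii)}$). Your additional unpacking of the prerequisites is accurate but, as you yourself note, not needed for this summary statement.
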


%
\section{\lambda-transition graphs}
\label{sec:ltgs}
In this section we introduce the concept of \lambda-transition graphs.
A \lambda-transition graph $\altg$ of a term $\ater$ can be understood as a
nameless graphical representation closely related to the term graph of $\ater$
in de-Bruijn notation. It is a graph that corresponds to the sub-ARS that is
induced by $\ater$ with respect to some \extscope-delimiting strategy for
$\stRegCRS$, but where no information can be extracted from the
objects. Consider, for example the sub-ARSs displayed in
Figure~\ref{fig:entangled-cons-ltg} and Figure~\ref{fig:distance} but ignore
the prefixes by which the nodes are annotated. To capture the notion of
`forgetting' the term associated with each object of the ARS we use the
formalism of labelled transition systems, in which only transitions are
observable (see Section~\ref{sec:prelims}).

We will show a coinduction principle for infinite \lambdaterms: two
\lambdaterms\ are equal if and only if they have bisimilar \lambdatg{s}.

\begin{definition}[transition systems induced by \CRS{s}]
Let $\aARS = \tuple{\objects,\steps,\ssrc,\stgt}$ 
be a sub\nb-\ARS, or a sub\nb-\ARS\ of a labelled version, of
an \ARS\ that is induced by a \CRS~$\aCRS$ with rules $\rules$ 
(see \cite[\mbox{11.2.24}]{terese:2003} for a definition of induced \ARSs).
In particular, every step in $\steps$ carries information according to from which rule of $\aCRS$ it stems from. 
By the \emph{\LTS\ induced by $\aARS$} we mean the \LTS~$\alts = \triple{\objects}{\rules}{\transitions}$
with transitions
\begin{equation*}
  {\transitions} 
    \defdby 
  \descsetexp{\triple{\aobj}{\arulename}{\aobj'}}
             {(\exists \astep\in\steps) \, \text{$\astep \funin \aobj \to \aobj'$ a step that stems from rule $\arulename$}}
\end{equation*}
in which the steps in $\aARS$ according to rule $\arulename$ are interpreted as transitions with label~$\arulename$.
And for a subset $\rules_0$ of $\rules$,
by \emph{the \LTS\ induced by $\aARS$ with silent $\rules_0$\nb-steps} 
we mean the \LTS~$\SilentLTS{\aARS}{\rules_0} = \triple{\objects}{\rules}{\transitions'}$
with
\begin{equation*}
  {\transitions'} 
    \defdby 
  \descsetexpBig{\triple{\aobj}{\arulename}{\aobj'}}
             {(\exists \astep\in\steps) \,
                \parbox[c]{208pt}
                  {$ \astep \funin \aobj \binrelcomp{\smredb{\rules_0}}{\sredb{\arulename}} \aobj'$
                   where $\smredb{\rules_0}$ are steps w.r.t.\ rules in $\rules_0$,
                   and $\sredb{\arulename}$ is a step w.r.t\ rule $\arulename\in\rules\setminus\rules_0$}}
\end{equation*}
in which the steps in $\aARS$ according to rules in $\rules_0$ are interpreted as silent transitions,
and the remaining rules as transitions according to their name.
\end{definition}

\begin{definition}[transition graph of an object]
If $\aobj$ is an object of the \ARS{} $\aARS$ that is induced by a \CRS~$\aCRS$
with rules $\rules$, and $\triple{\objects}{\rules}{\transitions} =
\lts{\InducedSubARS{\aobj}}$ the \LTS{} induced by $\InducedSubARS{\aobj}$, then we call
$\ltg\aARS\aobj \defdby \quadruple{\states}{\labels}{\aobj}{\transitions}$ the
transition graph of $\aobj$.

For an \LTS{} $\SilentLTS{\InducedSubARS{\aobj}}{\rules_0} =
\triple{\objects}{\rules}{\transitions}$ with silent $\rules_0$-steps, we call
$\SilentLTG{\aARS}{\rules_0}{\aobj} \defdby
\quadruple{\states}{\labels}{\aobj}{\transitions}$ the \emph{transition graph
of $\aobj$ with silent $\rules_0$-steps}.
\end{definition}

\begin{definition}[\lambda-transition graph]\label{def:ltg}
We call a labelled transition graph $\altg =
\quadruple{\states}{\labels}{\ainitialstate}{\transitions}$ a
\emph{\lambda-transition graph} if:
\begin{itemize}
\item it is connected
\item the labels are $\labels = \{\slabsdecomp,\snlvarsucc,\slappdecompi0,\slappdecompi1\}$
\item there are no infinite paths in $\altg$ consisting solely of $\snlvarsucc$-transitions
\item every state belongs to one of the following kinds: $\slabsdecomp$-states,
      $\snlvarsucc$-states, and $\slappdecomp$-states, where
      \begin{itemize}
      \item a $\slabsdecomp$-state $\astate$ is the source of precisely one $\slabsdecomp$-transition,
            and no other transitions:
            $\{\pair\alabel\bstate ~|~ \triple\astate\alabel\bstate \in
            {\transitions}\} = \{\pair\slabsdecomp\bstate\}$ for some $\bstate\in\states$.
      \item a $\snlvarsucc$-state $\astate$ is the source of precisely one $\snlvarsucc$-transition,
            and no other transitions:
            $\{\pair\alabel\bstate ~|~ \triple\astate\alabel\bstate \in
            {\transitions}\} = \{\pair\snlvarsucc\bstate\}$ for some $\bstate\in\states$.
      \item a $\slappdecomp$-state $\astate$ is the source of precisely one
            $\slappdecompi0$-transition and one $\slappdecompi1$-transition,
            but no other transitions:
            $\{\pair\alabel\bstate ~|~ \triple\astate\alabel\bstate \in
            {\transitions}\} = \{\pair{\slappdecompi0}\bstate,\pair{\slappdecompi1}\cstate\}$
            for some $\bstate,\cstate\in\states$.
      \end{itemize}
\end{itemize}
\end{definition}

\begin{proposition}\label{prop:s-is-a-lambda-ltg}
Let $\astratplus$ be a \extscope-delimiting strategy of $\stRegARS$.
For every term $\ater\in\Ter\inflambdaprefixcal$ the transition graph
$\ltg\astratplus\ater$ of $\ater$ is a \lambda-transition graph.
\end{proposition}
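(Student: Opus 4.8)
The plan is to verify, one by one, the four defining conditions of a $\lambda$-transition graph in Definition~\ref{def:ltg} for the pointed \LTS\ $\ltg{\astratplus}{\ater}$. Recall that $\ltg{\astratplus}{\ater}$ is the \LTS\ induced by the sub\nb-\ARS\ $\InducedSubARS{\ater}$ that $\ater$ induces in $\astratplus$, pointed at the initial state $\ater$, and that each step of $\astratplus$ is read there as a transition whose label is the rule of $\stRegCRS$ from which the step stems. Since $\stRegCRS$ consists of exactly the four rules $\srulep{\slabsdecomp}$, $\srulep{\slappdecompi{0}}$, $\srulep{\slappdecompi{1}}$, and $\srulep{\snlvarsucc}$, the label set of $\ltg{\astratplus}{\ater}$ is exactly $\setexp{\slabsdecomp,\snlvarsucc,\slappdecompi{0},\slappdecompi{1}}$, which already settles the label condition.

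First I would dispatch the two global conditions. \emph{Connectedness} is immediate from the definition of the induced sub\nb-\ARS: every object of $\InducedSubARS{\ater}$ is reachable from $\ater$, so in the underlying undirected graph every vertex is joined by a path to the initial state, whence any two vertices are connected. For the \emph{absence of infinite $\snlvarsucc$-paths}, note that every $\snlvarsucc$-transition stems from an application of $\srulep{\snlvarsucc}$, which strictly shortens the abstraction prefix by one; as prefix lengths are natural numbers --- equivalently, as $\scompressstregred$ is terminating by Proposition~\ref{prop:rewprops:RegCRS:stRegCRS}, (\ref{prop:rewprops:RegCRS:stRegCRS:item:iii}) --- no infinite chain of $\snlvarsucc$-transitions can exist.

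The core of the proof is the \emph{state-kind} condition, which I would read off directly from the assumption that $\astratplus$ is a \extscopedelimiting\ strategy. By Definition~\ref{def:scope:delimiting:strat:Reg:stReg}, together with the verbose reformulation in the following Remark, every source of a step in $\astratplus$ is of exactly one of three kinds: the source of a single $\slabsdecompred$-step and of no other step; the source of a single $\scompressstregred$-step and of no other step; or the source of both an $\slappdecompired{0}$- and an $\slappdecompired{1}$-step and of no other step. Passing to $\InducedSubARS{\ater}$ preserves this local branching structure, since the induced sub\nb-\ARS\ retains \emph{all} steps issuing from each reachable source (their targets being reachable as well). Translating steps into transitions by their rule labels, this says precisely that every source-state of $\ltg{\astratplus}{\ater}$ is a $\slabsdecomp$-state, a $\snlvarsucc$-state, or a $\slappdecomp$-state in the sense of Definition~\ref{def:ltg}. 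One point that needs care here: for a history-aware $\astratplus$ the objects are labelled terms and $\ater$ must be read through the initial labelling, but since a labelling sets up a bijective correspondence of outgoing steps at related sources, the classification carries over unchanged.

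The delicate point, and the one I expect to be the main obstacle, is the status of the states that are \emph{not} sources of any transition. By Proposition~\ref{prop:rewprops:RegCRS:stRegCRS}, (\ref{prop:rewprops:RegCRS:stRegCRS:item:iv-1}), these are exactly the $\stregred$\nb-normal forms, namely the variable readings $\flabs{\avari{1}\ldots\avari{n}}{\avari{n}}$ (the de~Bruijn index $0$). Under the strict wording of Definition~\ref{def:ltg} each of the three listed kinds is a source, so these normal-form leaves must be accommodated as variable/leaf states; here I would either invoke this as the intended reading of the definition or, equivalently, record explicitly that the only states escaping the three-fold classification are precisely these variable readings, whose lack of outgoing transitions is consistent with the remaining clauses. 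Once this is settled the verification is complete: every state is either such a variable reading or, being a source, is classified by the \extscopedelimiting\ condition into exactly one of the three kinds, while connectedness, the label set, and the finiteness of $\snlvarsucc$-runs have all been checked. The real work is therefore not any computation but the careful matching of the strategy's source-classification with the state-kind clause of Definition~\ref{def:ltg}, plus the bookkeeping needed to pass from steps of a possibly history-aware strategy to labelled transitions of the induced \LTS.
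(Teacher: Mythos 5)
Your proof is correct and, on the only point the paper actually argues, identical to it: the paper's entire proof of this proposition consists of the observation that infinite runs of $\snlvarsucc$\nb-transitions are impossible because $\scompressstregred$ is terminating (Proposition~\ref{prop:rewprops:RegCRS:stRegCRS}, (\ref{prop:rewprops:RegCRS:stRegCRS:item:iii})), which is exactly your argument for that condition. The remaining conditions --- connectedness, the label set, and the classification of transition sources into $\slabsdecomp$-, $\snlvarsucc$-, and $\slappdecomp$\nb-states via the defining property of a \extscopedelimiting\ strategy --- are left implicit in the paper, and your verification of them is the routine reading-off you describe (including the correct remark that a labelling induces a bijection on outgoing steps, so the classification survives the passage from a history-aware strategy to the induced \LTS). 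Your ``delicate point'' is well spotted, but it is a defect of Definition~\ref{def:ltg} rather than of either proof: the $\sstregred$\nb-normal forms $\flabs{\avari{1}\ldots\avari{n}}{\avari{n}}$ are reachable states with no outgoing transitions, so they satisfy none of the three state-kind clauses as literally stated; the paper silently tolerates such terminal (variable) states, and your explicit accommodation of them is the honest reading of the intended definition.
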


\begin{proof}
  In transition graphs $\ltg\astratplus\ater$, infinitely many successive $\snlvarsucc$-transitions
  are not possible because in the \ARS\ that induces $\ltg{\astratplus}{\ater}$,
  the rewrite relation $\scompressstreg$ is terminating, due to Proposition~\ref{prop:rewprops:RegCRS:stRegCRS}, (\ref{prop:rewprops:RegCRS:stRegCRS:item:iii}). 
\end{proof}

Along the lines of Proposition~\ref{prop:s-is-a-lambda-ltg} we can also view
transition graphs of \lambdaletrec-terms as \lambda-transition graphs, but only
when treating unfolding steps as silent transitions.
As hinted before in Remark~\ref{rem:nondet:unfolding},
here the restriction of \scopedelimiting\ (and \extscopedelimiting) strategies
to ones that prevent non-determinism in the application of unfolding rules is relevant.

\begin{proposition}
Let $\astratplus$ be a \extscope-delimiting strategy of $\stRegletrecARS$. For every
term $\allter\in\Ter\lambdaletrecprefixcal$, the transition graph
$\SilentLTG\astratplus\llunfCRS\allter$ of $\allter$ is a \lambda-transition graph.
\end{proposition}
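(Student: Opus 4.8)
The plan is to verify, condition by condition, the four requirements of a \lambda-transition graph from Definition~\ref{def:ltg} for the graph $\SilentLTG{\astratplus}{\llunfCRS}{\allter}$, in close analogy with the proof of Proposition~\ref{prop:s-is-a-lambda-ltg} for plain terms; the additional work consists entirely in controlling the silent $\llunfCRS$\nb-steps that now precede each visible decomposition step. Two of the conditions are immediate from the construction. Connectedness holds because the underlying \LTS\ is induced by the sub\nb-\ARS\ $\InducedSubARS{\allter}$, whose objects are by definition exactly the states reachable from $\allter$. The label condition holds because the visible labels of the silent \LTS\ are precisely the rules of $\stRegletrecCRS$ that are not made silent, i.e.\ the rules of $\stRegCRS$; since $\stRegCRS$ omits the rule $\srulep{\scompress}$, these are exactly $\srulep{\slabsdecomp}$, $\srulep{\snlvarsucc}$, $\srulep{\slappdecompi{0}}$, and $\srulep{\slappdecompi{1}}$, so that $\labels = \setexp{\slabsdecomp,\snlvarsucc,\slappdecompi{0},\slappdecompi{1}}$.

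For the absence of infinite paths built solely from $\snlvarsucc$\nb-transitions I would argue as in Proposition~\ref{prop:s-is-a-lambda-ltg}, now accounting for the interleaved silent unfolding. Each $\snlvarsucc$\nb-transition of the silent \LTS\ has the form $\aobj \binrelcomp{\smredb{\llunfCRS}}{\sredb{\snlvarsucc}} \aobj'$, in which the silent $\sunfoldred$\nb-steps leave the length of the abstraction prefix unchanged while the final $\snlvarsucc$\nb-step decreases it by exactly one. Hence along any path consisting only of $\snlvarsucc$\nb-transitions the prefix length strictly decreases, so such a path has length at most the prefix length of its first state and cannot be infinite. Equivalently, and more in the spirit of the template proof, by Lemma~\ref{prop:rewprops:RegletrecCRS:stRegletrecCRS}, (\ref{prop:rewprops:RegletrecCRS:stRegletrecCRS:item:i}), the $\scompressstregred$\nb-steps commute past the silent $\sunfoldred$\nb-steps, so an infinite $\snlvarsucc$\nb-path would project to an infinite $\scompressstregred$\nb-rewrite sequence, contradicting termination of $\scompressstregred$ from Proposition~\ref{prop:rewprops:RegCRS:stRegCRS}, (\ref{prop:rewprops:RegCRS:stRegCRS:item:iii}).

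The remaining and main condition is that every state is of exactly one of the three kinds with the prescribed outgoing transitions, and this is where the \extscopedelimiting\ property of $\astratplus$ enters. By Definition~\ref{def:scope:delimiting:strat:Regletrec:stRegletrec}, together with Remark~\ref{rem:nondet:unfolding}, the strategy $\astratplus$ is deterministic on all sources of $\sunfoldred$-, $\slabsdecompred$-, and $\scompressstregred$\nb-steps, and the only non-determinism it permits is the pairing of an $\slappdecompired{0}$- with an $\slappdecompired{1}$\nb-step; in particular no state is simultaneously a source of a silent step and of a visible step of another kind. Consequently, from a state $\aobj$ the silent $\llunfCRS$\nb-prefix of any visible transition is uniquely determined and leads deterministically to a first state $\aobj^{\ast}$ at which a visible step is enabled; at $\aobj^{\ast}$ the head symbol below the prefix has been exposed as an abstraction, an application, or a prefix-bound variable giving a vacuous binding, and by determinism the enabled visible step is then either a single $\slabsdecompred$\nb-step, a single $\scompressstregred$\nb-step, or precisely the pair of $\slappdecompired{0}$- and $\slappdecompired{1}$\nb-steps, with no two of these possible at the same $\aobj^{\ast}$. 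This makes $\aobj$ a $\slabsdecomp$-, a $\snlvarsucc$-, or a $\slappdecomp$-state, respectively. I expect the delicate point to be the termination of the silent unfolding prefix: the argument presupposes that $\aobj^{\ast}$ exists for every reachable $\aobj$, which is exactly $\astratplus$\nb-productivity of $\allter$ in the sense of Definition~\ref{def:productive:scdelstrats}, equivalently that $\allter$ unfolds to an infinite \lambdaterm\ (Lemma~\ref{lem:unfolding:versus:scdelstrats}); I would therefore either restrict attention to $\astratplus$\nb-productive terms or invoke Lemma~\ref{lem:unfolding:versus:scdelstrats} to secure the existence of $\aobj^{\ast}$ at each reachable state, noting that the eagerness of $\srulebp{\sunf}{\sreduce}$ built into $\astratplus$ additionally guarantees that an exposed head variable is genuinely prefix-bound, so that $\snlvarsucc$\nb-applicability is detected correctly.
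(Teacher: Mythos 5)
Your proposal is sound, and it supplies more than the paper itself does: this proposition is stated without proof, the only argument in its vicinity being the one-line proof of Proposition~\ref{prop:s-is-a-lambda-ltg}, which addresses just the no-infinite-$\snlvarsucc$-path condition (via termination of $\scompressstregred$) and is tacitly assumed to carry over to the \sletrec{} case. Your primary argument for that condition is correct and, in the silent setting, cleaner than porting the termination argument: none of the $\llunfCRS$\nb-rules touches the abstraction prefix, so every $\snlvarsucc$\nb-transition of the silent \LTS\ strictly decreases the prefix length, bounding $\snlvarsucc$\nb-only paths by the prefix length of their first state. Your handling of the state-kind condition via the determinism clauses of Definition~\ref{def:scope:delimiting:strat:Regletrec:stRegletrec}, read together with Remark~\ref{rem:nondet:unfolding} (all non-determinism except the $\slappdecompi{0}$/$\slappdecompi{1}$ pairing is eliminated, so in particular no state is simultaneously the source of a silent and a visible step), is exactly what the paper leaves implicit. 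Your productivity caveat is a genuine observation: for a non-productive $\allter$ a reachable state has an endless silent prefix and hence no outgoing labelled transitions at all; the paper copes with this only implicitly, by building $\astratplus$\nb-productivity into Definition~\ref{def:ltg-of-a-term} rather than into this proposition. Note, though, that Definition~\ref{def:ltg} must in any case be read as constraining only states that are sources of transitions, since already in the pure case the $\sstregred$\nb-normal forms $\flabs{\avari{1}\ldots\avari{n}}{\avari{n}}$ are terminal; under that reading your non-productive states are unproblematic and no restriction is needed.

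One aside in your proof is wrong in direction, though harmless since you offer it only as an alternative. Lemma~\ref{prop:rewprops:RegletrecCRS:stRegletrecCRS}, (\ref{prop:rewprops:RegletrecCRS:stRegletrecCRS:item:i}), states $\binrelcomp{\sunfoldconvred}{\scompressstregred} \subseteq \binrelcomp{\scompressstregred}{\sunfoldconvred}$, i.e.\ it joins a \emph{peak} of co-initial $\sunfoldred$- and $\scompressstregred$\nb-steps into a valley; it does not let you commute a $\scompressstregred$\nb-step backwards over \emph{preceding} silent $\sunfoldred$\nb-steps, which is what projecting an infinite $\snlvarsucc$\nb-path of the silent \LTS\ onto an infinite pure $\scompressstregred$\nb-sequence would require. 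That swap in fact fails: on $\flabs{\avar\bvar}{\letrec{\arecvar = \bvar}{\avar}}$ no $\scompressstregred$\nb-step is possible, because $\bvar$ still occurs in the unreachable binding; only after the silent $\sunfreducered$- and $\sunfnilred$\nb-steps, yielding $\flabs{\avar\bvar}{\avar}$, does the $\snlvarsucc$\nb-step removing $\bvar$ become enabled. So silent steps can create $\snlvarsucc$\nb-redexes, and the projection argument is unavailable; your prefix-length argument, however, is valid and suffices on its own.
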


\begin{definition}[\lambda-transition graph of a term]\label{def:ltg-of-a-term}
  \begin{enumerate}[(i)]
    \item 
      Let $\aiter\in\Ter\inflambdacal$. For a \extscope-delimiting strategy $\astratplus$ of $\stRegletrecARS$
      we call the transition graph $\ltg\astratplus\ater$ 
      \emph{the \lambda-transition graph of $\allter$ with respect to $\astratplus$}.
      And more generally, by a \emph{\lTG\ of} $\aiter$ we mean
      a transition graph that is bisimilar to the transition graph of $\aiter$ with respect to
      a \extscopedelimiting~strategy~$\astratplus$.
    \item 
      Let $\allter\in\Ter\lambdaletrecprefixcal$ be a (prefixed) $\astratplus$-productive \lambdaletrec-term.
      For a \extscope-delimiting strategy $\astratplus$ of $\stRegletrecARS$
      such that $\allter$ is $\astratplus$-productive,
      we call the transition graph $\SilentLTG\astratplus\llunfCRS\allter$
      \emph{the \lambda-transition graph of $\allter$ with respect to $\astratplus$}.
      And more generally, by a \emph{\lTG\ of} $\allter$ we mean
      a transition graph that is bisimilar to the transition graph of $\allter$ with respect to
      a \extscopedelimiting~strategy~$\astratplus$ with the property that $\allter$ is $\astratplus$\nb-productive. 
  \end{enumerate}
  For prefixed \lambda-terms in $\Ter\inflambdaprefixcal$ and in $\Ter\lambdaletreccal$ we use the terms
  `\lambda-transition graph' and `transition graph' synonymously.
  We also speak of \lambda-transition graphs of terms
  $\allter\in\Ter\lambdaletreccal$ or $\ater\in\Ter\inflambdacal$ by which we
  refer to the \lambda-transition graphs of $\femptylabs\allter$ and
  $\femptylabs\ater$, respectively.
\end{definition}

\begin{theorem}[coinduction principle for \inflambdacal]\label{thm:coinduction-principle}
  For all infinite \lambdaterms\ $\aiter$ and $\biter$ the following statements are equivalent:
  \begin{enumerate}[(i)]
    \item{}\label{thm:coinduction-principle:item:i}
      $\aiter = \biter$.
    \item{}\label{thm:coinduction-principle:item:ii}
      $\derivablein{\EqTer}{\ater = \biter}$.
    \item{}\label{thm:coinduction-principle:item:iii} 
      $\aiter$ and $\biter$ have bisimilar \lambdatg{s}.
  \end{enumerate}
\end{theorem}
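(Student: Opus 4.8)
The plan is to regard the equivalence (\ref{thm:coinduction-principle:item:i}) $\Leftrightarrow$ (\ref{thm:coinduction-principle:item:ii}) as already settled, and to establish the genuinely new content, the link with bisimilarity of $\lambda$\nb-transition graphs, by proving (\ref{thm:coinduction-principle:item:ii}) $\Rightarrow$ (\ref{thm:coinduction-principle:item:iii}) and (\ref{thm:coinduction-principle:item:iii}) $\Rightarrow$ (\ref{thm:coinduction-principle:item:ii}). For (\ref{thm:coinduction-principle:item:i}) $\Leftrightarrow$ (\ref{thm:coinduction-principle:item:ii}) I would simply invoke Proposition~\ref{prop:AlphaPreTer:EqTer}, (\ref{prop:AlphaPreTer:EqTer:item:ii}), which states that $\EqTer$ is sound and complete for equality of $\inflambdaprefixcal$\nb-terms, so that $\infderivablein{\EqTer}{\femptylabs{\aiter} = \femptylabs{\biter}}$ (via a completed, possibly infinite derivation) holds precisely when $\aiter = \biter$. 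The two remaining implications rest on the observation already used in Proposition~\ref{prop:derivationpaths:2:rewritesequences:Reg:stReg} that the rules of $\EqTer$ are exactly the two\nb-sided versions of the $\stRegCRS$\nb-rules: instances of ($\labscomp$), ($\lappcomp$) and ($\Vacstreg$), and the axiom ($\bvarax$), correspond respectively to $\slabsdecompred$\nb-steps, to $\slappdecompired{0}$\nb- and $\slappdecompired{1}$\nb-steps, to $\scompressstregred$\nb-steps, and to $\sstregred$\nb-normal forms, with the same transition label arising on both sides of each equation.

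For (\ref{thm:coinduction-principle:item:ii}) $\Rightarrow$ (\ref{thm:coinduction-principle:item:iii}) I would take a completed derivation $\infDeriv$ in $\EqTer$ with conclusion $\femptylabs{\aiter} = \femptylabs{\biter}$ and read off from its two sides a pair of \extscopedelimiting\ strategies $\astratplus_1$ for $\aiter$ and $\astratplus_2$ for $\biter$, exactly as a one\nb-sided derivation yields a strategy in Lemma~\ref{lem:derivations:Reg:stReg:2:scdelstrategies}; the left components $\flabs{\vec{\avar}}{\aiter}$ and right components $\flabs{\vec{\bvar}}{\biter}$ of the equations occurring in $\infDeriv$ are then precisely the generated subterms, hence the states of the graphs $\ltg{\astratplus_1}{\aiter}$ and $\ltg{\astratplus_2}{\biter}$. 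I would then show that the relation pairing the left with the right component of every equation occurring in $\infDeriv$ is a bisimulation between these graphs: a non\nb-leaf equation is the conclusion of an instance of ($\labscomp$), ($\lappcomp$) or ($\Vacstreg$), and by the rule correspondence its premise(s) supply, on both sides simultaneously, the successor states reached by transitions of the matching label, which gives the forth and back conditions, while matching leaves correspond to the axiom ($\bvarax$). Since both transition graphs are $\lambda$\nb-transition graphs by Proposition~\ref{prop:s-is-a-lambda-ltg}, this witnesses (\ref{thm:coinduction-principle:item:iii}).

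For (\ref{thm:coinduction-principle:item:iii}) $\Rightarrow$ (\ref{thm:coinduction-principle:item:ii}) I would first use transitivity of $\bisim$ together with Definition~\ref{def:ltg-of-a-term} to reduce to two strategy\nb-graphs $\ltg{\astratplus_1}{\aiter} \bisim \ltg{\astratplus_2}{\biter}$ with a bisimulation $\abisim$ relating $\femptylabs{\aiter}$ and $\femptylabs{\biter}$. I would then build a completed $\EqTer$\nb-derivation by unfolding $\abisim$ into a tree: at a node with related states $\flabs{\vec{\avar}}{\aiter} \mathrel{\abisim} \flabs{\vec{\bvar}}{\biter}$, the \extscopedelimiting\ discipline forces the outgoing label set to be one of $\setexp{\slabsdecomp}$, $\setexp{\snlvarsucc}$, or $\setexp{\slappdecompi{0},\slappdecompi{1}}$ (cf.\ Definition~\ref{def:ltg}); the bisimulation provides matching transitions on the other side, and I apply the corresponding $\EqTer$\nb-rule ($\labscomp$), ($\Vacstreg$) or ($\lappcomp$) with the related successor pairs as premises, closing a branch at a pair of leaves by an instance of ($\bvarax$). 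Expanding fairly (breadth\nb-first), together with the prohibition of infinite $\snlvarsucc$\nb-paths guaranteed by Definition~\ref{def:ltg}, ensures that in the limit every infinite thread has order type $\omega$ and every finite thread ends in an axiom, so the tree is a completed derivation; by Proposition~\ref{prop:AlphaPreTer:EqTer} it yields $\aiter = \biter$.

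The main obstacle is the soundness of the leaf step in the last implication: because a $\lambda$\nb-transition graph deliberately forgets the prefixed terms, their variable names \emph{and} their prefix lengths, one must argue that a pair of bisimilar leaves really is a legitimate instance of ($\bvarax$), that is, that the two $\sstregred$\nb-normal forms carry abstraction prefixes of equal length. The key is that under an \extscopedelimiting\ strategy the prefix length at a state equals the number of $\slabsdecompred$\nb-steps minus the number of $\scompressstregred$\nb-steps along the path from the root, so a label sequence already determines the prefix length, whence states reached by label\nb-matched paths in the tree unfolding carry prefixes of the same length. Combined with Proposition~\ref{prop:rewprops:RegCRS:stRegCRS}, (\ref{prop:rewprops:RegCRS:stRegCRS:item:iv-1}), which forces every $\sstregred$\nb-normal form to have the shape $\flabs{\avari{1}\ldots\avari{n}}{\avari{n}}$, this shows that all binding information is encoded in the path labels in a de\nb-Bruijn\nb-like fashion, so that bisimilar graphs indeed force equal terms. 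Making this path\nb-to\nb-prefix bookkeeping precise, and checking that it is insensitive to the choice of \extscopedelimiting\ strategy (normalising to the eager strategy $\eagscdelstratstreg$ by means of Proposition~\ref{prop:eagscdelstrat} if convenient), is the technical heart of the argument.
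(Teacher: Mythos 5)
Your proposal is correct and matches the paper's proof in its essentials: the equivalence (i)$\Leftrightarrow$(ii) via Proposition~\ref{prop:AlphaPreTer:EqTer}, (\ref{prop:AlphaPreTer:EqTer:item:ii}), and the hard direction (iii)$\Rightarrow$(ii) by unfolding a bisimulation between strategy-induced transition graphs into a depth-fair (breadth-first) construction whose limit is a completed $\EqTer$-derivation with conclusion $\femptylabs{\aiter} = \femptylabs{\biter}$. The only divergences are minor: the paper closes the cycle of implications with the trivial observation that (i)$\Rightarrow$(iii) ``clearly holds'' (identical terms have identical, hence bisimilar, transition graphs under the same strategy), whereas you prove the heavier (ii)$\Rightarrow$(iii) by extracting strategies from the derivation in the style of Lemma~\ref{lem:derivations:Reg:stReg:2:scdelstrategies}; and your explicit path-label-to-prefix-length bookkeeping at the leaves — needed because the axiom ($\bvarax$) of $\EqTer$ requires equally long abstraction prefixes on both sides, together with the normal-form shape from Proposition~\ref{prop:rewprops:RegCRS:stRegCRS}, (\ref{prop:rewprops:RegCRS:stRegCRS:item:iv-1}) — makes precise a point that the paper's proof sketch leaves implicit.
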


\begin{proof}
  In view of Proposition~\ref{prop:AlphaPreTer:EqTer}, (\ref{prop:AlphaPreTer:EqTer:item:ii}), the logical equivalence 
  between (\ref{prop:AlphaPreTer:EqTer:item:i}) and (\ref{prop:AlphaPreTer:EqTer:item:ii}), 
  and the fact that $\text{(\ref{thm:coinduction-principle:item:i})} \Rightarrow \text{(\ref{thm:coinduction-principle:item:iii})}$ clearly holds,
  it suffices to show that $\text{(\ref{thm:coinduction-principle:item:iii})} \Rightarrow \text{(\ref{thm:coinduction-principle:item:ii})}$ holds.
  
  For this, suppose that $\ltg{\astratplus_1}{\aiter}$ and $\ltg{\astratplus_2}{\biter}$
  are bisimilar for some \extscopedelimiting~stra\-te\-gies $\astratplus_1$ and $\astratplus_2$ for $\RegARS$.
  But now bisimilarity of these transition graphs guarantees that 
  a derivation $\Deriv$ in $\EqTer$ with conclusion $\femptylabs{\aiter} = \femptylabs{\biter}$ 
  can be constructed such that all threads in $\Deriv$ correspond to 
  $\stratred{\astratplus_1}$\nb-rewrite sequences on $\aiter$ 
  and to  
  $\stratred{\astratplus_2}$\nb-rewrite sequences on $\aiter$,
  respectively. If the construction process is organised in a depth-fair manner (for example, all non-axiom leafs at depth~$n$
  are extended by appropriate rule instances, before extensions are carried out at depth greater than $n$),
  then in the limit a completed derivation $\infDeriv$ with conclusion $\femptylabs{\aiter} = \femptylabs{\biter}$ is obtained. 
  This establishes $\derivablein{\EqTer}{\ater = \biter}$.
\end{proof}

\begin{conjecture}[coinduction principle for \lambdaletreccal]
  For all $\allteri{1},\allteri{2}\in\Ter{\lambdaletreccal}$ it holds that
  $\allteri{1} = \allteri{2}$ if and only if $\allteri{1}$ and $\allteri{2}$ have bisimilar \lambdatg{s}.
\end{conjecture}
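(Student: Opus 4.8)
The plan is to reduce the coinduction principle for $\lambdaletreccal$ to the already-established coinduction principle for $\inflambdacal$ (Theorem~\ref{thm:coinduction-principle}) through the unfolding map $\sUnf$. Throughout I would read the equality $\allteri{1} = \allteri{2}$ in the unfolding semantics, that is, as $\Unf{\allteri{1}} = \Unf{\allteri{2}}$, since two syntactically different \lambdaletrecterms\ routinely express the same infinite \lambdaterm\ (Example~\ref{ex:expresses}). Moreover, because Definition~\ref{def:ltg-of-a-term} assigns a \lTG\ to a \lambdaletrecterm\ only when it is $\astratplus$\nb-productive, I would first prove the biconditional for the case in which both $\allteri{1}$ and $\allteri{2}$ are productive, and treat non-productivity separately.

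The key auxiliary result I would establish is the following: \emph{for every $\astratplus$-productive \lambdaletrecterm~$\allter$ and every \extscopedelimiting\ strategy~$\astratplus$ for $\stRegletrecARS$ for which $\allter$ is productive, the \lambda-transition graph $\SilentLTG{\astratplus}{\llunfCRS}{\allter}$ with silent unfolding steps is bisimilar to the \lambda-transition graph $\ltg{\Check{\astratplus}}{\Unf{\allter}}$ of the infinite \lambdaterm~$\Unf{\allter}$}, where $\Check{\astratplus}$ is the projected \extscopedelimiting\ strategy supplied by Lemma~\ref{lem:proj:scdelstrat:letrec:lambda}. The witnessing bisimulation relates each $\sstratmred{\astratplus}$\nb-reduct $\allter'$ of $\allter$ with its unfolding $\Unf{\allter'}$, and the forth and back conditions are precisely the two implications of Lemma~\ref{lem:proj:scdelstrat:letrec:lambda}. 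Indeed, an observable transition $\allter' \to \allter''$ in $\SilentLTG{\astratplus}{\llunfCRS}{\allter}$ is by definition a run of silent $\sunfoldred$\nb-steps followed by one decomposition step, hence of the form $\allter' \binrelcomp{\sstratmred{\sunf,\astratplus}}{\sstratred{\slabsdecomp/\slappdecompi{i}/\scompressstreg,\astratplus}} \allter''$, and projects under $\sUnf$ to a single step $\Unf{\allter'} \stregred \Unf{\allter''}$ of the \emph{same} sort; conversely every $\Check{\astratplus}$\nb-step out of $\Unf{\allter'}$ is the $\sUnf$\nb-image of such a composite step out of $\allter'$. Productivity guarantees that each silent prefix is finite, so observable transitions exist exactly where the projected graph has them, and (via Proposition~\ref{prop:rewprops:RegCRS:stRegCRS}, (\ref{prop:rewprops:RegCRS:stRegCRS:item:iii})) that both objects are genuine \lambda-transition graphs in the sense of Definition~\ref{def:ltg}.

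With this lemma in hand the conjecture, in the productive case, follows by chaining bisimilarities, using symmetry and transitivity of $\bisim$. For ``$\Rightarrow$'', if $\Unf{\allteri{1}} = \Unf{\allteri{2}}$, then any \lTG\ of $\allteri{1}$ is bisimilar to a \lTG\ of $\Unf{\allteri{1}}$, which by Theorem~\ref{thm:coinduction-principle}, $\text{(\ref{thm:coinduction-principle:item:i})}\Rightarrow\text{(\ref{thm:coinduction-principle:item:iii})}$, is bisimilar to a \lTG\ of $\Unf{\allteri{2}}$, which in turn is bisimilar to a \lTG\ of $\allteri{2}$; hence $\allteri{1}$ and $\allteri{2}$ have bisimilar \lambdatg{s}. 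For ``$\Leftarrow$'', bisimilarity of \lTG{s} of $\allteri{1}$ and $\allteri{2}$ transports, via the lemma, to bisimilarity of $\ltg{\Check{\astratplus_1}}{\Unf{\allteri{1}}}$ and $\ltg{\Check{\astratplus_2}}{\Unf{\allteri{2}}}$, and Theorem~\ref{thm:coinduction-principle}, $\text{(\ref{thm:coinduction-principle:item:iii})}\Rightarrow\text{(\ref{thm:coinduction-principle:item:i})}$, then yields $\Unf{\allteri{1}} = \Unf{\allteri{2}}$. As a byproduct this also shows that the \lTG\ of a productive \lambdaletrecterm\ is, up to bisimilarity, independent of the chosen \extscopedelimiting\ strategy, which is what makes Definition~\ref{def:ltg-of-a-term} well posed.

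The main obstacle is the non-productive case, and it is the reason the statement is phrased as a conjecture rather than a theorem: when $\allteri{1}$ or $\allteri{2}$ fails to unfold to an infinite \lambdaterm, Definition~\ref{def:ltg-of-a-term} assigns it no \lTG\ at all, so both sides of the biconditional must be reinterpreted. I would resolve this by passing to the complete, $\bot$\nb-producing unfolding $\funap{\spUnf}{} \funin \Ter{\lambdaletreccal}\to\Terbot{\inflambdacal}$, reading $\allteri{1}=\allteri{2}$ as $\funap{\spUnf}{\allteri{1}} = \funap{\spUnf}{\allteri{2}}$, and enlarging the class of \lambda-transition graphs by a $\bot$\nb-state reached by a silent \mbox{$\llunfCRS$\nb-root-active} loop. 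The remaining work is then to lift both the auxiliary lemma and Theorem~\ref{thm:coinduction-principle} to $\Terbot{\inflambdacal}$ and the extended notion of \lTG; the delicate point will be to show that root-active (productively stagnating) subterms are detected uniformly across all \extscopedelimiting\ strategies, so that a $\bot$\nb-state on one side is necessarily matched by a $\bot$\nb-state on the other. I expect the productive core above to go through routinely, and essentially all of the genuine difficulty to reside in this faithful treatment of $\bot$.
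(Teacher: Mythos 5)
The paper contains no proof of this statement: it is advanced as a conjecture and listed in Section~\ref{sec:conclusion} among the results ``within reach but not worked out yet'', so your attempt can only be measured against the machinery the paper supplies. Your overall route is surely the intended one: reading the equality as unfolding equivalence $\Unf{\allteri{1}} = \Unf{\allteri{2}}$ (forced, since Example~\ref{ex:equiv-letrec-terms} refutes the literal syntactic reading), repackaging Lemma~\ref{lem:proj:scdelstrat:letrec:lambda} as a bisimulation $\descsetexpnormalsize{\pair{\allter'}{\Unf{\allter'}}}{\allter' \text{ a reduct of } \allter}$ between $\SilentLTG{\astratplus}{\llunfCRS}{\allter}$ and $\ltg{\Check{\astratplus}}{\Unf{\allter}}$, and then invoking Theorem~\ref{thm:coinduction-principle}. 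The shapes match exactly (an observable transition of the silent-unfolding \lTG\ is a composite $\binrelcomp{\sstratmred{\sunf,\astratplus}}{\sstratred{\ldots,\astratplus}}$, which is precisely what the two implications of Lemma~\ref{lem:proj:scdelstrat:letrec:lambda} project and lift), and your ``$\Leftarrow$'' direction closes correctly by transitivity of $\sbisim$ and (iii)$\Rightarrow$(i) of Theorem~\ref{thm:coinduction-principle}.

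Your ``$\Rightarrow$'' direction, however, has a genuine gap, and your ``byproduct'' claim is its symptom. Because $\snlvarsucc$\nb-steps are \emph{observable} transitions, the \lTG{s} of one and the same term with respect to two \extscopedelimiting\ strategies are in general \emph{not} bisimilar: for $\labs{\avar}{\labs{\bvar}{\lapp{\lapp\avar\avar}\bvar}}$ the graphs for $\eagscdelstratstreg$ and $\lazyscdelstratstreg$ in Figure~\ref{fig:four-strategies} already disagree at the state reached after $\slabsdecomp\,\slabsdecomp\,\slappdecompi{0}$, where one offers only an $\snlvarsucc$\nb-transition and the other an $\slappdecompi{0}$/$\slappdecompi{1}$ pair. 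So the asserted strategy-independence of the \lTG\ of a productive \lambdaletrecterm\ up to bisimilarity is false, and for the same reason your chain does not close: writing $\aiter \defdby \Unf{\allteri{1}} = \Unf{\allteri{2}}$, your lemma gives $\SilentLTG{\astratplusi{i}}{\llunfCRS}{\allteri{i}} \bisim \ltg{\Check{\astratplus}_i}{\aiter}$, but Theorem~\ref{thm:coinduction-principle}, (i)$\Rightarrow$(iii), only yields \emph{some} pair of bisimilar \lambdatg{s} of $\aiter$ (trivially, the same graph twice); it does not relate the two \emph{specific} projected graphs $\ltg{\Check{\astratplus}_1}{\aiter}$ and $\ltg{\Check{\astratplus}_2}{\aiter}$, which, as the example shows, need not be bisimilar for arbitrary choices of $\astratplusi{1},\astratplusi{2}$. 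To repair this you must control the projections: either prove a lifting lemma (every \extscopedelimiting\ strategy on $\aiter$ arises, up to the induced transition graph, as the projection $\Check{\astratplus}$ of one on $\allteri{i}$), or prove that choosing both $\astratplusi{i}$ eager makes both projections behave as the eager strategy on $\aiter$ --- plausible via the free-variable correspondence for $\sunfreducered$\nb-normal forms remarked after Definition~\ref{def:scope:delimiting:strat:Regletrec:stRegletrec}, but established neither in the paper nor in your proposal.

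Finally, on the non-productive case: under the strict reading (both unfoldings defined and equal) it is vacuous rather than difficult --- a non-productive $\allteri{i}$ has no \lTG\ at all by Definition~\ref{def:ltg-of-a-term} and is unequal to every term, so both sides of the biconditional are false. Your $\spUnf$/$\bot$ programme is needed only if one insists on Kleene equality of unfoldings, and in that case you rightly flag it as entirely unproven (in particular the uniform detection of root-active subterms across strategies). I would state the chosen reading explicitly and, under the strict one, dispose of this case in a line, concentrating the remaining effort on the strategy-matching gap above.
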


\begin{remark}[only $\stRegARS$ defines nameless representations]\label{rem:nameless-repr}
  For the concept of `\lambdatg\ of a term' 
  that is defined, in analogy with Definition~\ref{def:ltg-of-a-term},
  as transition graph of a term with respect to \scopedelimiting\ strategies
  (instead of with respect to \extscopedelimiting\ strategies),
  a similar coinduction principle does not hold.
   
  To realise this, consider the sub-ARS $\eagscdelstratreg$ of $\RegARS$,
  induced by the eager \scopedelimiting\ strategy for the term $\labs{\avar\bvar}{\lapp{\lapp\avar\avar}\bvar}$,
  in Figure~\ref{fig:four-strategies}. The corresponding \LTS\ can be obtained
  as the \LTS\ corresponding to a sub-\ARS\ of $\RegARS$ induced by an appropriate \scopedelimiting\ strategy 
  for each of the following four \lambdaterms:
\[
\labs{\avar\bvar}{\lapp{\lapp\avar\avar}\avar}
\hspace{1cm}
\labs{\avar\bvar}{\lapp{\lapp\avar\avar}\bvar}
\hspace{1cm}
\labs{\avar\bvar}{\lapp{\lapp\bvar\bvar}\avar}
\hspace{1cm}
\labs{\avar\bvar}{\lapp{\lapp\bvar\bvar}\bvar}
\]
  For both of the terms in the middle the eager \scopedelimiting~strategy can be chosen,
  but not for the terms on the left and on the right.
\end{remark}

The understanding of \lambda-transition graphs as nameless representations of
an infinite \lambda-terms implies that from a such a graph the corresponding
\lambda-term can be extracted. We define a function for this purpose by means
of a CRS which implements the assembly of a \lambda-term from the infinite
unfolding of a \lambda-transition graph. The function is closely related to the
$\stParseCRS$ in the sense that $\stParseCRS$ does both destruct and
reconstruct the scrutinised term while $\sreadback$ only implements the reconstruction.

\begin{definition}[readback]
\begin{align*}
\sreadback \funin \iTer{\setexp{\nlvarfo, \snllabsfo, \snllappfo,
\snlvarsuccfo }} & {} \rightharpoonup \Ter{\inflambdacal}\\
\bnllter & {} \mapsto \readback{\bnllter} \defdby ~ \parbox[t]{130pt}{infinite normal form of $\readwritezero{\bnllter}$ 
                                                                      w.r.t.\ the following CRS:}
\end{align*}
\begin{align*}
  \readwriten{n}{\sametavari{1},\ldots,\sametavari{n}}{\nllabsfo{\bnllteri{0}}}
    & {} \;\red\;
  \labsCRS{\avar}{\readwriten{n+1}{\sametavari{1},\ldots,\sametavari{n},\avar}{\bnllteri{0}}}  
  \\
  \readwriten{n}{\vec{\sametavar}}{\nllappfo{\bnllteri{0}}{\bnllteri{1}}}
    & {} \;\red\;
  \lappCRS{\labsCRS{\avar}{\readwriten{n}{\vec{\sametavar}}{\bnllteri{0}}}}
          {\labsCRS{\avar}{\readwriten{n}{\vec{\sametavar}}{\bnllteri{1}}}}
  \\
  \readwriten{n+1}{\vec{\sametavar},\avar}{\nlvarsuccfo{\bnllteri{0}}}
    & {} \;\red\;
  \readwriten{n}{\vec{\sametavar}}{\bnllteri{0}}
  \\
  \readwriten{n}{\sametavari{1},\ldots,\sametavari{n}}{\nlvarfo}
    & {} \;\red\;
  \sametavari{n}
\end{align*}
The function is partial because $\sreadwriten n$ is unproductive for infinite
$\snlvarsucc$-chains. That restriction comes forth accordingly in the definition
of \lambda-transition graphs (Definition~\ref{def:ltg}). The function is thus
complete on the subset of $\iTer{\setexp{\nlvarfo, \snllabsfo, \snllappfo,
\snlvarsuccfo }}$ that is obtained from unfolding a \lambda-transition graph.
\end{definition}

\section{Conclusion and Outlook}\label{sec:conclusion}
In this work we have introduced a number of formalisms for relating
infinite \lambdaterms\ and finite terms in the \lambdacalculus\ with \sletrec\
to each other. 
In the following we recapitulate the most important concepts briefly.

We provide CRS signatures to define the set of infinite \lambda-terms and the
set of \lambdaletrec-terms, which we connect by the CRS $\llunfCRS$ for
unfolding \lambdaletrec-terms to their corresponding \lambda-term. To determine
which \lambdaletrec-terms have an infinite unfolding we identify
\emph{productive} \lambdaletrec-terms.

To characterise the set of \lambda-terms for which there exists a corresponding
\lambdaletrec-term (such that the former can be obtained from the latter via
unfolding) we establish a framework of formalisms for `observing'
\lambda-terms coinductively. Firstly we introduce prefixed \lambda-terms that
enrich \lambda-terms by an abstraction prefix. On the prefixed terms we define
the CRS $\stRegARS$ in which a rewrite sequence corresponds to a
deconstruction of a term along one of its paths. In that sense a prefixed term
$\flabs{\vec\avar}\ater$ can be understood as a `suspended decomposition' which
has not advanced into subterm $\ater$ yet. Such a decomposition describes a
path through the term by observations of the form $\slabsdecompred$,
$\lappdecompired0$, $\lappdecompired1$, $\scompressregred$, where the latter
delimits the eager scope-delimiting strategy \extscope{} of an abstraction.

Since there is some freedom as to where \extscope{}-delimiters can be placed, we
define \extscope{}-delimiting strategies to formalise specific possible choices
eliminating  that freedom and thereby making 
the observations deterministic except for the forking into the left or the right
subterm of an application. By means of \extscope-delimiting strategies we can
formulate two important concepts: strong regularity and \lambda-transition
graphs.

The intuitive understanding of strong regularity is the property of a
infinite \lambda-term $\ater$ that from $\ater$ every `sufficiently eager' \extscope-delimiting
strategy can only generate a finite number of terms. We then show that
\lambdaletrec-expressibility coincides with strong regularity.

Every \extscope-delimiting strategy defines a \lambda-transition graph of a
term which can be viewed as a nameless graphical representation very similar to
its term graph in de-Bruijn notation with the difference that $S$-nodes are
not restricted to occur only near leafs but can be shared by variables.
\todo{We
proof the coinduction principle for \lambda-transition graphs that states that
two infinite \lambda-terms are equal if and only if their \lambda-transition
graphs w.r.t. to a \extscope-delimiting strategy are bisimilar.}
The eager \extscope-delimiting strategy yields finite \lambda-transition graphs
for strongly regular \lambda-terms.

We adapt the concepts of the CRS for observing terms, \extscope-delimiting
strategies, and \lambda-transition graphs and apply them to \lambdaletreccal{}
proving similar results as for \inflambdacal.

We provide a proof system that is sound and complete for the notion of strong
regularity and which admits finite proofs for strongly regular \lambda-terms. We
define an annotated version of the proof system which not unlike an
attribute-grammar definition implements the extraction of a \lambdaletrec-term
$\allter$ from a proof for term $\ater$ in that system, such that $\allter$
unfolds to the $\ater$. We show that every \extscope-delimiting strategy
induces a proof and that from a proof a corresponding history-aware strategy
can be deduced, which suggests a similar correspondence between
\lambda-transition graphs and proofs.

The following results are within reach but not worked out yet:
\begin{itemize}
\item {\it coinduction principle for \lambdaletreccal}:
      For all $\allter,\bllter\in\Ter{\lambdaletreccal}$ it holds that $\allter
      = \bllter$ if and only if $\allter$ and $\bllter$ have bisimilar \lambda-term-graph{s}.
\item a proof system for unfolding equivalence of \lambdaletrec-terms
\item a thorough coinductive treatment of \lambda-transition graphs and finality results
\item finite representations of regular \lambda-terms as higher-order recursive program schemes
  (cf.\ Example~\ref{ex:entangled-infinite-path})
  and their extractions from formalised proofs of regularity
\item characterisation of \lambdaletrec-expressible preterms of infinite \lambdaterms\
  as those that can be generated, up to $\alpha$\nb-equivalence, by first-order recursive program schemes   
\item a terminating readback function to extract \lambdaletrec-terms from transition graphs
\end{itemize}
\vspace{1ex}


We feel that in gathering these results we have gained a new perspective on the
\lambdacalculus\ with \sletrec\ and we find that the concepts and formalisms
introduced here have potential to be practically relevant for the
implementation of functional programming languages. In \cite{grab:roch:2013a}
we study various higher-order and first-order term-graph representations of
cyclic \lambdaterms. Their definitions draw heavily on the decomposition
rewrite systems in this paper. That is, every term in \lambdaletreccal\ can be
translated into a finite first-order `\lambda-term-graph' by applying the
eager scope-delimiting strategy
$\eagscdelstratstreg$ to the expressed strongly regular, infinite \lambdaterm.
Thereby vertices with the labels $\slabs$, $\lappcomp$, $\snlvarsucc$ are
created according to the kind of $\sstregred$\nb-step observed (plus variable
occurrence vertices with label $\snlvar$). The degree of sharing exhibited by
\lambda-term-graphs\ can be analysed with functional bisimulation. We identify
a class of first-order representations with eager application of scope closure
that faithfully preserves and reflects the sharing order on higher-order term
graphs. For practical applications this can be exploited in order to obtain:
\begin{itemize}
\item efficient decision of unfolding equivalence of \lambdaletrec-terms by means of the
      (nearly linear) DFA-equivalence algorithm of Hopcroft and Karp \cite{hopc:karp:1971}. 
\item a partial order for the amount of subterm sharing in a \lambdaletrec-term leading to
      \begin{itemize}
      \item a notion of maximal sharing for \lambdaletrec{}
      \item an efficient mechanism to derive the maximally shared form of a \lambdaletrec-term
            which generalises common subexpression elimination
      \end{itemize}
\end{itemize}

Another aspect is that functional programming languages based on the
\lambdacalculus\ with \stxtletrec\ restrict the set of (in the unfolding semantics) expressible terms 
to the strongly regular infinite \lambdaterms. But members of the superclass of regular terms
are also finitely expressible via sets of equations or \CRS\nb-rules. Therefore the question
arises whether finite representations of regular terms afford new opportunities
in compiling functional programming languages.

\todo{Link zur Implementierung}


\def\sortunder#1{}
\bibliography{main}

\appendix
\section{Proof: Confluence of \sletrec-unfolding}\label{app:conf_proof}
\raggedbottom
\begin{proof}[Proof of Proposition~\ref{prop:unf-confluence} (on page~\pageref{prop:unf-confluence}).]
First of all, we cannot use Newman's Lemma to prove the theorem, because
$\llunfCRS$ is not terminating. To show confluence of $\llunfCRS$ we use
the method of `decreasing diagrams' \cite[Sec.\hspace*{2pt}2.3]{oost:1994} \cite[Sec.\hspace*{2pt}14.2]{terese:2003}. 
We use it
however not to prove confluence of the rewriting relation $\sredp\sunf$ induced
by $\llunfCRS$ directly, but of the abstract reduction system
$\aARS = (\Ter\lambdaletreccal, \{\sparredp{\wDepth\arulename{d}} ~|~ (d,\arulename)
\in \nats \times R\})$ with $R$ as the set of rules of $\llunfCRS$ where
$\sparredp{\wDepth\arulename{d}}$ denotes the parallel rewriting relation on
$\Ter\lambdaletreccal$ induced by rule $\arulename$ at $\sletrec$-depth $d$.
As a precedence order we consider the order induced by the $\sletrec$-depth:
\[ \sparredp{\wDepth\arulename{d}} \geq \sparredp{\wDepth\brulename{d'}}
~ \Longleftrightarrow ~ d \geq d'\]

The \sletrec-depth of a redex in \lambdaletrec-term denotes the number of
$\sletrec$-nodes passed on the path from the root of the term tree to the
corresponding position. We write $\sredp{\wDepth\arulename{d}}$ to denote the
relation induced by applying rule $\arulename$ contracting a redex at
\sletrec-depth $d$.

Let us denote the rewriting relation induced by $\aARS$ by $\sredp\aARS$:
\[ \sredp\aARS = \bigcup\,\{\sparredp{\wDepth\arulename{d}} ~|~ (d,\arulename)
\in \nats \times R\}\]

If $\sredp\aARS$ is confluent then $\sredp\sunf$ is confluent because it holds:
$\sredp\sunf \subseteq \sredp\aARS \subseteq \smredp\sunf$ or equivalently
$\smredp\aARS = \smredp\sunf$
(see also \cite[Lemma\hspace*{2pt}2.2.5]{oost:1994}).
  
%

We use parallel steps because the preceding attempt to prove confluence of
$\llunfCRS$-steps themselves by decreasing diagrams was unsuccessful. As a
precedent order we considered an ordering on the rules and lexicographic extensions
of such orderings with the \sletrec-depth of the contracted redex. We came to
the conclusion that no such order could ensure decreasingness of the elementary
diagrams of both the critical pairs as well as the strictly nested redexes.
This was due to redex duplication induced by the diverging steps, so that
joining the diagram required a multi-step that disrupted decreasingness. In
order to resolve this problem we considered parallel steps such that the
problematic multi-step would become a single parallel step. This led to more
intricate diagrams but turned out to be a viable solution.

We will show that two diverging parallel steps in $\llunfCRS$ can be joined in an
elementary diagram of the following form with $d \leq e$.

\begin{figure}[h]
\begin{tikzpicture}[>=stealth]
\matrix[row sep=0.8cm,column sep=1.3cm]{
\node(tl){};&&
\node(tr){};\\
&&\node(mr){};\\
\node(bl){};&
\node(bm){};&
\node(br){};\\
};
\draw[->       ](tl) to node{$||$} node[above]{$\wDepth\arulename{d}$}   (tr);
\draw[->       ](tl) to node{$=$ } node[left ]{$\wDepth\brulename{e}$}   (bl);
\draw[->,dotted](bl) to node{$||$} node[below]{$\wDepth\brulename{e-1}$} (bm);
\draw[->,dotted](bm) to node{$||$} node[below]{$\wDepth\arulename{d}$}   (br);
\draw[->,dotted](tr) to node{$= $} node[right]{$\wDepth\brulename{e}$}   (mr);
\draw[->,dotted](mr) to node{$= $} node[right]{$\wDepth\brulename{e-1}$} (br);
\end{tikzpicture}
\caption{\label{elem_dia}Elementary diagram}
\end{figure}

If we pick as the precedence order on the steps the order that is induced
by their \sletrec-depth, the diagram is decreasing. Note that in all the
diagrams we implicitly assume the reflexive closure for all arrows. The rest of
the proof is structured as follows. To justify the diagram we distinguish the
cases $d = e$ and $d < e$, for which we construct diagrams that are instances
of the diagram in Figure~\ref{elem_dia}.

\partitle{Case 1}
For $d = e$ we need to consider parallel diverging steps contracting redexes at
the same \sletrec-depth $d$. We construct the diagram below which is an
instance of the diagram above where the diverging parallel steps are in
sequentialised form.  
We write terms as fillings of a multihole context $\sacxt$ with all its holes
at $\sletrec$-depth $d$ such that the contracted $\wDepth\arulename{d}$- and
$\wDepth\brulename{d}$-redexes are filled into these holes. In this way we can make
explicit at which position a step takes place, i.e. at the root of the context
hole fillings. The topmost row and the leftmost column are respective
sequentialisations of the parallel diverging $\wDepth\arulename{d}$- and
$\wDepth\brulename{d}$-steps into single steps.

\begin{tikzpicture}[>=stealth]
\hspace{-2cm}
\matrix[row sep=1.2cm,column sep=0.5cm]{
\node(00){$\acxt{\allteri0^\TL, \dots, \allteri{n}^\TL}$};&
\node(10){$\acxt{\allteri0^\TR, \allteri1^\TL \dots, \allteri{n}^\TL}$};&
\node(20){$\acxt{\allteri0^\TR, \allteri1^\TR, \allteri2^\TL \dots, \allteri{n}^\TL}$};&
\node(i0){$\dots$};&
\node(n0){$\acxt{\allteri0^\TR, \dots, \allteri{n}^\TR}$};\\
\node(01){$\acxt{\allteri0^\BL, \allteri1^\TL, \dots, \allteri{n}^\TL}$};&
\node(11){$\acxt{\allteri0^\BR, \allteri1^\TL, \dots, \allteri{n}^\TL}$};&
\node(21){$\acxt{\allteri0^\BR, \allteri1^\TR, \allteri2^\TL, \dots, \allteri{n}^\TL}$};&
\node(i1){$\dots$};&
\node(n1){$\acxt{\allteri0^\BR, \allteri1^\TR, \dots, \allteri{n}^\TR}$};\\
\node(02){$\acxt{\allteri0^\BL, \allteri1^\BL, \allteri2^\TL, \dots, \allteri{n}^\TL}$};&
\node(12){$\acxt{\allteri0^\BR, \allteri1^\BL, \allteri2^\TL, \dots, \allteri{n}^\TL}$};&
\node(22){$\acxt{\allteri0^\BR, \allteri1^\BR, \allteri2^\TL, \dots, \allteri{n}^\TL}$};&
\node(i2){$\dots$};&
\node(n2){$\acxt{\allteri0^\BR, \allteri1^\BR, \allteri2^\TR, \dots, \allteri{n}^\TR}$};\\
\node(0i){$\vdots$};&
\node(1i){$\vdots$};&
\node(2i){$\vdots$};&
\node(ii){$\ddots$};&
\node(ni){$\vdots$};\\
\node(0n){$\acxt{\allteri0^\BL, \dots, \allteri{n}^\BL}$};&
\node(1n){$\acxt{\allteri0^\BR, \allteri1^\BL, \dots, \allteri{n}^\BL}$};&
\node(2n){$\acxt{\allteri0^\BR, \allteri1^\BR, \allteri2^\BL, \dots, \allteri{n}^\BL}$};&
\node(in){$\dots$};&
\node(nn){$\acxt{\allteri0^\BR, \dots, \allteri{n}^\BR}$};\\
};
\draw[->       ](00)  to            node[above]{$\wDepth\arulename{d}$}    (10);
\draw[->       ](10)  to            node[above]{$\wDepth\arulename{d}$}    (20);
\draw[->       ](20)  to            node[above]{$\wDepth\arulename{d}$}    (i0);
\draw[->       ](i0)  to            node[above]{$\wDepth\arulename{d}$}    (n0);
\draw[->,dotted](01)  to node{$||$} node[above]{$\wDepth\arulename{d}$}    (11);
\draw[->,dotted](11)  to            node[above]{$\wDepth\arulename{d}$}    (21);
\draw[->,dotted](21)  to            node[above]{$\wDepth\arulename{d}$}    (i1);
\draw[->,dotted](i1)  to            node[above]{$\wDepth\arulename{d}$}    (n1);
\draw[->,dotted](02)  to node{$||$} node[above]{$\wDepth\arulename{d}$}    (12);
\draw[->,dotted](12)  to node{$||$} node[above]{$\wDepth\arulename{d}$}    (22);
\draw[->,dotted](22)  to            node[above]{$\wDepth\arulename{d}$}    (i2);
\draw[->,dotted](i2)  to            node[above]{$\wDepth\arulename{d}$}    (n2);
\draw[->,dotted](0n)  to node{$||$} node[above]{$\wDepth\arulename{d}$}    (1n);
\draw[->,dotted](1n)  to node{$||$} node[above]{$\wDepth\arulename{d}$}    (2n);
\draw[->,dotted](2n)  to node{$||$} node[above]{$\wDepth\arulename{d}$}    (in);
\draw[->,dotted](in)  to node{$||$} node[above]{$\wDepth\arulename{d}$}    (nn);

\draw[->       ](00)  to            node[left ]{$\wDepth\brulename{d}$}    (01);
\draw[->,dotted](10)  to node{$=$}  node[left ]{$\wDepth\brulename{d}$}    (11);
\draw[->,dotted](20)  to node{$=$}  node[left ]{$\wDepth\brulename{d}$}    (21);
\draw[->,dotted](n0)  to node{$=$}  node[left ]{$\wDepth\brulename{d}$}    (n1);
\draw[->       ](01)  to            node[left ]{$\wDepth\brulename{d}$}    (02);
\draw[->,dotted](11)  to            node[left ]{$\wDepth\brulename{d}$}    (12);
\draw[->,dotted](21)  to node{$=$}  node[left ]{$\wDepth\brulename{d}$}    (22);
\draw[->,dotted](n1)  to node{$=$}  node[left ]{$\wDepth\brulename{d}$}    (n2);
\draw[->       ](02)  to            node[left ]{$\wDepth\brulename{d}$}    (0i);
\draw[->,dotted](12)  to            node[left ]{$\wDepth\brulename{d}$}    (1i);
\draw[->,dotted](22)  to            node[left ]{$\wDepth\brulename{d}$}    (2i);
\draw[->,dotted](n2)  to node{$=$}  node[left ]{$\wDepth\brulename{d}$}    (ni);
\draw[->       ](0i)  to            node[left ]{$\wDepth\brulename{d}$}    (0n);
\draw[->,dotted](1i)  to            node[left ]{$\wDepth\brulename{d}$}    (1n);
\draw[->,dotted](2i)  to            node[left ]{$\wDepth\brulename{d}$}    (2n);
\draw[->,dotted](ni)  to node{$=$}  node[left ]{$\wDepth\brulename{d}$}    (nn);
\end{tikzpicture}

Only the tiles on the diagonal require closer attention because for all other
tiles the vertical and horizontal steps take place in different holes of the
context, therefore they are disjoint and consequently commute.
In the tiles on the diagonal the diverging steps may be either due to a critical
pair or to identical steps. In the latter case the diagram is easily joined.
In case of a critical pair, since all steps take place at the same
$\sletrec$\nb-depth any such critical pair must arise from a root overlap.
Exhaustive scrutiny of all these critical pairs reveals that they
can be joined in a way that conforms to the tiles on the diagonal. Note that
the \sletrec-depths of the steps have to be increased by $d$ according to the
lifting into a context with its hole at \sletrec-depth $d$.

\newcommand\confDiaMatrix{\matrix[row sep=0.9cm,column sep=1.2cm]}

\begin{tikzpicture}[>=stealth]
\confDiaMatrix{
\node(tl){$\letrec{}{\labs\avar\allter}$};&
\node(tr){$\labs\avar{\letrec{}\allter}$};\\
\node(bl){$\labs\avar\allter$};&
\node(br){$\labs\avar\allter$};\\
};
\draw[->       ](tl) to node[above]{$\wDepth\sunflabs{0}$} (tr);
\draw[->       ](tl) to node[left ]{$\wDepth\sunfnil{0}$}  (bl);
\draw[->,dotted](tr) to node[right]{$\wDepth\sunfnil{0}$}  (br);
\draw[double   ](bl) to                                    (br);
\end{tikzpicture}
\begin{tikzpicture}[>=stealth]
\confDiaMatrix{
\node(tl){$\letrec\abindgroup{\labs\avar\allter}$};&
\node(tr){$\labs\avar{\letrec\abindgroup\allter}$};\\
\node(bl){$\letrec{\abindgroup'}{\labs\avar\allter}$};&
\node(br){$\labs\avar{\letrec{\abindgroup'}\allter}$};\\
};
\draw[->       ](tl) to node[above]{$\wDepth\sunflabs{0}$}   (tr);
\draw[->       ](tl) to node[left ]{$\wDepth\sunfreduce{0}$} (bl);
\draw[->,dotted](tr) to node[right]{$\wDepth\sunfreduce{0}$} (br);
\draw[->,dotted](bl) to node[below]{$\wDepth\sunflabs{0}$}   (br);
\end{tikzpicture}

\begin{tikzpicture}[>=stealth]
\confDiaMatrix{
\node(tl){$\letrec{}{\lapp\allter\bllter}$};&
\node(tr){$\lapp{(\letrec{}\allter)}{(\letrec{}\bllter)}$};\\
\node(bl){$\lapp\allter\bllter$};&
\node(br){$\lapp\allter\bllter$};\\
};
\draw[->       ](tl) to           node[above]{$\wDepth\sunflapp{0}$} (tr);
\draw[->       ](tl) to           node[left ]{$\wDepth\sunfnil{0}$}  (bl);
\draw[->,dotted](tr) to node{$=$} node[right]{$\wDepth\sunfnil{0}$}  (br);
\draw[double   ](bl) to                                              (br);
\end{tikzpicture}
\begin{tikzpicture}[>=stealth]
\confDiaMatrix{
\node(tl){$\letrec\abindgroup{\lapp\allter\bllter}$};&
\node(tr){$\lappbreak{\letrec\abindgroup\allter}{\letrec\abindgroup\bllter}$};\\
\node(bl){$\letrec{\abindgroup'}{\lapp\allter\bllter}$};&
\node(br){$\lappbreak{\letrec{\abindgroup'}\allter}{\letrec{\abindgroup'}\bllter}$};\\
};
\draw[->       ](tl) to           node[above]{$\wDepth\sunflapp{0}$}   (tr);
\draw[->       ](tl) to           node[left ]{$\wDepth\sunfreduce{0}$} (bl);
\draw[->,dotted](tr) to node{$=$} node[right]{$\wDepth\sunfreduce{0}$} (br);
\draw[->,dotted](bl) to           node[below]{$\wDepth\sunflapp{0}$}   (br);
\end{tikzpicture}

\begin{tikzpicture}[>=stealth]
\confDiaMatrix{
\node(tl){$\letrec\abindgroup{\arecvari{i}}$};&
\node(tr){$\letrec\abindgroup{\allteri{i}}$};\\
\node(bl){$\letrec{\abindgroup'}{\arecvari{i}}$};&
\node(br){$\letrec{\abindgroup'}{\allteri{i}}$};\\
};
\draw[->       ](tl) to            node[above]{$\wDepth\sunfrec{0}$}    (tr);
\draw[->       ](tl) to            node[left ]{$\wDepth\sunfreduce{0}$} (bl);
\draw[->,dotted](tr) to            node[right]{$\wDepth\sunfreduce{0}$} (br);
\draw[->,dotted](bl) to            node[below]{$\wDepth\sunfrec{0}$}    (br);
\end{tikzpicture}
\begin{tikzpicture}[>=stealth]
\confDiaMatrix{
\node(tl){$\letrec{}{\letrec\abindgroup\allter}$};&
\node(tr){$\letrec\abindgroup\allter$};\\
\node(bl){$\letrec\abindgroup\allter$};&
\node(br){$\letrec\abindgroup\allter$};\\
};
\draw[->    ](tl) to node[above]{$\wDepth\sunfletrec{0}$} (tr);
\draw[->    ](tl) to node[left ]{$\wDepth\sunfnil{0}$}    (bl);
\draw[double](tr) to                                      (br);
\draw[double](bl) to                                      (br);
\end{tikzpicture}

\begin{tikzpicture}[>=stealth]
\confDiaMatrix{
\node(tl){$\letrec\abindgroup{\letrec\bbindgroup\allter}$};&
\node(tr){$\letrec{\abindgroup~\bbindgroup}\allter$};\\
\node(bl){$\letrec{\abindgroup'}{\letrec\bbindgroup\allter}$};&
\node(br){$\letrec{\abindgroup'~\bbindgroup}\allter$};\\
};
\draw[->       ](tl) to node[above]{$\wDepth\sunfletrec{0}$}   (tr);
\draw[->       ](tl) to node[left ]{$\wDepth\sunfreduce{0}$} (bl);
\draw[->,dotted](tr) to node[right]{$\wDepth\sunfreduce{0}$}   (br);
\draw[->,dotted](bl) to node[below]{$\wDepth\sunfletrec{0}$}   (br);
\end{tikzpicture}
\begin{tikzpicture}[>=stealth]
\confDiaMatrix{
\node(tl){$\letrec\abindgroup\allter$};&
\node(tr){$\letrec{\abindgroup^\TR}\allter$};\\
\node(bl){$\letrec{\abindgroup^\BL}{\allter}$};&
\node(br){$\letrec{\abindgroup^\TR}\allter$};\\
};
\draw[->       ](tl) to node[above]{$\wDepth\sunfreduce{0}$}   (tr);
\draw[->       ](tl) to node[left ]{$\wDepth\sunfreduce{0}$} (bl);
\draw[->,dotted](tr) to node[right]{$\wDepth\sunfreduce{0}$} (br);
\draw[->,dotted](bl) to node[below]{$\wDepth\sunfreduce{0}$}   (br);
\end{tikzpicture}

\partitle{Case 2}
For $d < e$ we use the same approach as for $d = e$, the diagram is however
more involved. Again, we use a context $\sacxt$ with 
context holes at \sletrec-depth $d$. But since $e > d$, 
more than one $\wDepth\brulename{e}$-contraction may take place in one such hole.
Therefore a per-hole partitioning of the vertical steps requires a sequence of
parallel steps.

The diagram below fits the scheme of the elementary diagram (Figure~\ref{elem_dia})
when interleaving the $\wDepth\brulename{e}$-steps with the
$\wDepth\brulename{e-1}$-steps in the rightmost column such that steps at depth $e$
preceed those at depth $e-1$. Similarly for the bottommost row where the
$\wDepth\arulename{e-1}$-steps have to preceed the $\wDepth\brulename{d}$-steps.
These reorderings are possible since the segments
represent contractions within different holes of $\sacxt$. As in the previous
diagram the tiles which do not lie on the diagonal are unproblematic, which
leaves us to complete the proof by constructing the tiles on the diagonal.

\begin{tikzpicture}[>=stealth]
\hspace{-3cm}
\matrix[row sep=0.7cm,column sep=0.5cm]{
\node(00){$\acxt{\allteri0^\TL, \dots, \allteri{n}^\TL}$};&
&
\node(10){$\acxt{\allteri0^\TR, \allteri1^\TL \dots, \allteri{n}^\TL}$};&
&
\node(20){$\acxt{\allteri0^\TR, \allteri1^\TR, \allteri2^\TL \dots, \allteri{n}^\TL}$};&
\node(i0){$\dots$};&
\node(n0){$\acxt{\allteri0^\TR, \dots, \allteri{n}^\TR}$};\\
\node(00h){};&&
\node(10h){};&&
\node(20h){};&
\node(i0h){};&
\node(n0h){};\\
\node(01){$\acxt{\allteri0^\BL, \allteri1^\TL, \dots, \allteri{n}^\TL}$};&
\node(0h1){};&
\node(11){$\acxt{\allteri0^\BR, \allteri1^\TL, \dots, \allteri{n}^\TL}$};&
&
\node(21){$\acxt{\allteri0^\BR, \allteri1^\TR, \allteri2^\TL, \dots, \allteri{n}^\TL}$};&
\node(i1){$\dots$};&
\node(n1){$\acxt{\allteri0^\BR, \allteri1^\TR, \dots, \allteri{n}^\TR}$};\\
\node(01h){};&&
\node(11h){};&&
\node(21h){};&
\node(i1h){};&
\node(n1h){};\\
\node(02){$\acxt{\allteri0^\BL, \allteri1^\BL, \allteri2^\TL, \dots, \allteri{n}^\TL}$};&
\node(0h2){};&
\node(12){$\acxt{\allteri0^\BR, \allteri1^\BL, \allteri2^\TL, \dots, \allteri{n}^\TL}$};&
\node(1h2){};&
\node(22){$\acxt{\allteri0^\BR, \allteri1^\BR, \allteri2^\TL, \dots, \allteri{n}^\TL}$};&
\node(i2){$\dots$};&
\node(n2){$\acxt{\allteri0^\BR, \allteri1^\BR, \allteri2^\TR, \dots, \allteri{n}^\TR}$};\\
\node(02h){};&&
\node(12h){};&&
\node(22h){};&
\node(i2h){};&
\node(n2h){};\\
\node(0i){$\vdots$};&&
\node(1i){$\vdots$};&&
\node(2i){$\vdots$};&
\node(ii){$\ddots$};&
\node(ni){$\vdots$};\\
\node(0ih){};&&
\node(1ih){};&&
\node(2ih){};&
\node(iih){};&
\node(nih){};\\
\node(0n){$\acxt{\allteri0^\BL, \dots, \allteri{n}^\BL}$};&
\node(0hn){};&
\node(1n){$\acxt{\allteri0^\BR, \allteri1^\BL, \dots, \allteri{n}^\BL}$};&
\node(1hn){};&
\node(2n){$\acxt{\allteri0^\BR, \allteri1^\BR, \allteri2^\BL, \dots, \allteri{n}^\BL}$};&
\node(in){$\dots$};&
\node(nn){$\acxt{\allteri0^\BR, \dots, \allteri{n}^\BR}$};\\
};
\draw[->       ](00)  to            node[above]{$\wDepth\arulename{d}$}    (10);
\draw[->       ](10)  to            node[above]{$\wDepth\arulename{d}$}    (20);
\draw[->       ](20)  to            node[above]{$\wDepth\arulename{d}$}    (i0);
\draw[->       ](i0)  to            node[above]{$\wDepth\arulename{d}$}    (n0);
\draw[->,dotted](01)  to            node[above]{$\wDepth\brulename{e-1}$}    (0h1);
\draw[->,dotted](0h1) to node{$||$} node[above]{$\wDepth\arulename{d}$}    (11);
\draw[->,dotted](11)  to            node[above]{$\wDepth\arulename{d}$}    (21);
\draw[->,dotted](21)  to            node[above]{$\wDepth\arulename{d}$}    (i1);
\draw[->,dotted](i1)  to            node[above]{$\wDepth\arulename{d}$}    (n1);
\draw[->,dotted](02)  to            node[above]{$\wDepth\brulename{e-1}$}    (0h2);
\draw[->,dotted](0h2) to node{$||$} node[above]{$\wDepth\arulename{d}$}    (12);
\draw[->,dotted](12)  to            node[above]{$\wDepth\brulename{e-1}$}    (1h2);
\draw[->,dotted](1h2) to node{$||$} node[above]{$\wDepth\arulename{d}$}    (22);
\draw[->,dotted](22)  to            node[above]{$\wDepth\arulename{d}$}    (i2);
\draw[->,dotted](i2)  to            node[above]{$\wDepth\arulename{d}$}    (n2);
\draw[->,dotted](0n)  to            node[above]{$\wDepth\brulename{e-1}$}    (0hn);
\draw[->,dotted](0hn) to node{$||$} node[above]{$\wDepth\arulename{d}$}    (1n);
\draw[->,dotted](1n)  to            node[above]{$\wDepth\brulename{e-1}$}    (1hn);
\draw[->,dotted](1hn) to node{$||$} node[above]{$\wDepth\arulename{d}$}    (2n);

\draw[->       ](00)  to node{$=$}  node[left ]{$\wDepth\brulename{e}$}    (01);
\draw[->,dotted](10)  to node{$=$}  node[right]{$\wDepth\brulename{e}$}    (10h);
\draw[->,dotted](10h) to node{$=$}  node[right]{$\wDepth\brulename{e-1}$}  (11);
\draw[->,dotted](20)  to node{$=$}  node[right]{$\wDepth\brulename{e}$}    (20h);
\draw[->,dotted](20h) to node{$=$}  node[right]{$\wDepth\brulename{e-1}$}  (21);
\draw[->,dotted](n0)  to node{$=$}  node[right]{$\wDepth\brulename{e}$}    (n0h);
\draw[->,dotted](n0h) to node{$=$}  node[right]{$\wDepth\brulename{e-1}$}  (n1);

\draw[->       ](01)  to node{$=$}  node[left ]{$\wDepth\brulename{e}$}    (02);
\draw[->,dotted](11)  to node{$=$}  node[left ]{$\wDepth\brulename{e}$}    (12);
\draw[->,dotted](21)  to node{$=$}  node[right]{$\wDepth\brulename{e}$}    (21h);
\draw[->,dotted](21h) to node{$=$}  node[right]{$\wDepth\brulename{e-1}$}  (22);
\draw[->,dotted](n1)  to node{$=$}  node[right]{$\wDepth\brulename{e}$}    (n1h);
\draw[->,dotted](n1h) to node{$=$}  node[right]{$\wDepth\brulename{e-1}$}  (n2);

\draw[->       ](02)  to node{$=$}  node[left ]{$\wDepth\brulename{e}$}    (0i);
\draw[->,dotted](12)  to node{$=$}  node[left ]{$\wDepth\brulename{e}$}    (1i);
\draw[->,dotted](22)  to node{$=$}  node[left ]{$\wDepth\brulename{e}$}    (2i);
\draw[->,dotted](n2)  to node{$=$}  node[right]{$\wDepth\brulename{e}$}    (n2h);
\draw[->,dotted](n2h) to node{$=$}  node[right]{$\wDepth\brulename{e-1}$}  (ni);

\draw[->       ](0i)  to node{$=$}  node[left ]{$\wDepth\brulename{e}$}    (0n);
\draw[->,dotted](1i)  to node{$=$}  node[left ]{$\wDepth\brulename{e}$}    (1n);
\draw[->,dotted](2i)  to node{$=$}  node[left ]{$\wDepth\brulename{e}$}    (2n);
\draw[->,dotted](ni)  to node{$=$}  node[right]{$\wDepth\brulename{e}$}    (nih);
\draw[->,dotted](nih) to node{$=$}  node[right]{$\wDepth\brulename{e-1}$}  (nn);
\end{tikzpicture}

\vspace{2ex}
Every hole on the diagonal is filled with at most one $\wDepth\arulename{d}$-redex
(at the root of the context hole fillings) but because of $d < e$ with possibly
many $\wDepth\brulename{e}$-redexes (properly inside of the fillings). There may or
may not be an overlap between the $\wDepth\arulename{d}$-step and a
$\wDepth\brulename{e}$-step, but there can be at most one, which is due to the
rules of $\llunfCRS$.

Therefore $\wDepth\brulename{e}$ contracts either an overlap and a number of nested
redexes, or only nested redexes without an overlap. These constellations are
depicted on the figure below. There is one $\wDepth\arulename{d}$-redex and three
$\wDepth\brulename{e}$-redexes. On the left, one of the $\wDepth\brulename{e}$-redexes
overlaps with the $\wDepth\arulename{d}$-redex while on the right all
$\wDepth\brulename{e}$-redexes are strictly nested inside the
$\wDepth\arulename{d}$-redex.

\vspace{3ex}
\centered{\fig{overlap}}
\vspace{3ex}

For the critical pairs due to a non-root overlap, and for all situations with
nested redexes, we construct diagrams of the following shape, respectively:\\
\begin{tikzpicture}[>=stealth]
\matrix[row sep=1.2cm,column sep=1.3cm]{
\node(tl){};&&
\node(tr){};\\
\node(bl){};&
\node(bm){};&
\node(br){};\\
\\
};
\draw[->       ](tl) to            node[above]{$\wDepth\arulename{0}$}   (tr);
\draw[->       ](tl) to            node[left ]{$\wDepth\brulename{1}$}   (bl);
\draw[->,dotted](bl) to            node[below]{$\wDepth\brulename{0}$} (bm);
\draw[->,dotted](bm) to node{$||$} node[below]{$\wDepth\arulename{0}$}   (br);
\draw[->,dotted](tr) to            node[right]{$\wDepth\brulename{0}$} (br);
\end{tikzpicture}
\hspace{1cm}
\begin{tikzpicture}[>=stealth]
\matrix[row sep=1.2cm,column sep=2.6cm]{
\node(tl){};&
\node(tr){};\\
\node(bl){};&
\node(br){};\\
\\
};
\draw[->       ](tl) to            node[above]{$\wDepth\arulename{0}$}   (tr);
\draw[->       ](tl) to            node[left ]{$\wDepth\brulename{e}$}   (bl);
\draw[->,dotted](bl) to            node[below]{$\wDepth\arulename{0}$}   (br);
\draw[->,dotted](tr) to            node[right]{$\wDepth\brulename{e'} ~~~~~~ e' \in \{e,e-1\}$} (br);
\end{tikzpicture}

\vspace{-3ex}
When lifted into a context of \sletrec-depth $d$ both of the diagrams comply to
the shape necessary for the diagonal tiles, but we need to be able to handle
situations as on as on the left of the above figure, where both nested
redexes as well as the overlapping redex are contracted. Firstly, since all
$\brulename$-redexes occur at the same \sletrec-depth, it must hold that $d = 0$
and $e = 1$, which is due to the rules of $\llunfCRS$. Secondly, none of the
involved redex contractions affect any of the nested redexes except for
duplicating or erasing them, which means that the residuals of the
$\brulename$\nb-steps after these steps are part of a a parallel
$\wDepth\brulename{e'}$-step (mind that we assume the reflexive closure of all
steps). Or in a diagram:\\
\begin{tikzpicture}[>=stealth]
\matrix[row sep=0.5cm,column sep=1.4cm]{
\node(tl){};&&
\node(tr){};\\
&&
\node(ir){$\vdots$};\\
\node(ml){};&&
\node(mr){};\\
\\
\node(bl){};&
\node(bm){};&
\node(br){};\\
};
\draw[->       ](tl) to            node[above]{$\wDepth\arulename{0}$}   (tr);
\draw[->       ](tl) to node{$= $} node[left ]{$\wDepth\brulename{1}$}   (ml);
\draw[->       ](ml) to            node[left ]{$\wDepth\brulename{1}$}   (bl);
\draw[->,dotted](ml) to            node[above]{$\wDepth\arulename{0}$}   (mr);
\draw[->,dotted](mr) to            node[right]{$\wDepth\brulename{0}$}   (br);
\draw[->,dotted](bl) to            node[below]{$\wDepth\brulename{0}$}   (bm);
\draw[->,dotted](bm) to node{$||$} node[below]{$\wDepth\arulename{0}$}   (br);
\draw[->,dotted](tr) to node{$= $} node[right]{$\wDepth\brulename{e_1}$}   (ir);
\draw[->,dotted](ir) to node{$= $} node[right]{$\wDepth\brulename{e_n} ~~~~~ e_i \in \{0,1\}$} (mr);
\end{tikzpicture}\\
The diagram is composed from the previous two diagrams. A parallel version of
the right one constitutes the top part, while the bottom part is an exact
replica of the left one. The top part settles the portion arising from the
nested redexes, the bottom part settles the portion arising from the overlapping
redex.

At last in order to fit that diagram into the scheme of the diagonal tiles the
steps on the right have to be reordered such that $\wDepth\brulename{e_i}$-steps
with $e_i = 1$ preceed $\wDepth\brulename{e_i}$-steps with $e_i = 0$. The
reordering is viable because every $\wDepth\brulename{e_i}$-step takes place in its
own residual of the $\wDepth\brulename1$-step from the left.

We conclude the proof by a comprehensive analysis all critical pairs that arise
from non-root overlaps in $\llunfCRS$ as well as the diagrams for joining
nested redexes.

\partitle{Diagrams for joining critical pairs}\\
\begin{tikzpicture}[>=stealth]
\confDiaMatrix{
\node(tl){$\letrec\abindgroup{\letrec\bbindgroup{\labs\avar{\allter}}}$};&
\node(tr){$\letrec{\abindgroup~\bbindgroup}{\labs\avar{\allter}}$};\\
\node(ml){$\letrec\abindgroup{\labs\avar{\letrec\bbindgroup\allter}}$};\\
\node(bl){$\labs\avar{\letrec\abindgroup{\letrec\bbindgroup\allter}}$};&
\node(br){$\labs\avar{\letrec{\abindgroup~\bbindgroup}\allter}$};\\
};
\draw[->       ](tl) to node[above]{$\wDepth\sunfletrec{0}$} (tr);
\draw[->       ](tl) to node[left ]{$\wDepth\sunflabs{1}$} (ml);
\draw[->,dotted](tr) to node[right]{$\wDepth\sunflabs{0}$}   (br);
\draw[->,dotted](ml) to node[left ]{$\wDepth\sunflabs{0}$}   (bl);
\draw[->,dotted](bl) to node[below]{$\wDepth\sunfletrec{0}$} (br);
\end{tikzpicture}

\begin{tikzpicture}[>=stealth]
\confDiaMatrix{
\node(tl){$\letrec\abindgroup{\letrec\bbindgroup{\lapp\allter\bllter}}$};&
\node(tr){$\letrec{\abindgroup~\bbindgroup}{\lapp\allter\bllter}$};\\
\node(ml){$\letrec\abindgroup{\lapp{(\letrec\bbindgroup\allter)}{(\letrec\bbindgroup\bllter)}}$};\\
\node(bl){$\lappbreak{\letrec\abindgroup{\letrec\bbindgroup\allter}}{\letrec\abindgroup{\letrec\bbindgroup\bllter}}$};&
\node(br){$\lappbreak{\letrec{\abindgroup~\bbindgroup}\allter}{\letrec{\abindgroup~\bbindgroup}\bllter}$};\\
};
\draw[->       ](tl) to            node[above]{$\wDepth\sunfletrec{0}$} (tr);
\draw[->       ](tl) to            node[left ]{$\wDepth\sunflapp{1}$} (ml);
\draw[->,dotted](tr) to            node[right]{$\wDepth\sunflapp{0}$}   (br);
\draw[->,dotted](ml) to            node[left ]{$\wDepth\sunflapp{0}$}   (bl);
\draw[->,dotted](bl) to node{$||$} node[below]{$\wDepth\sunfletrec{0}$} (br);
\end{tikzpicture}

\begin{tikzpicture}[>=stealth]
\confDiaMatrix{
\node(tl){$\letrec\abindgroup{\letrec\bbindgroup{\arecvari{i}}}$};&
\node(tr){$\letrec{\abindgroup~\bbindgroup}{\arecvari{i}}$};\\
\node(bl){$\letrec\abindgroup{\letrec\bbindgroup{\allteri{i}}}$};&
\node(br){$\letrec{\abindgroup~\bbindgroup}{\allteri{i}}$};\\
};
\draw[->       ](tl) to node[above]{$\wDepth\sunfletrec{0}$} (tr);
\draw[->       ](tl) to node[left ]{$\wDepth\sunfrec{1}$}    (bl);
\draw[->,dotted](tr) to node[right]{$\wDepth\sunfrec{0}$}    (br);
\draw[->,dotted](bl) to node[below]{$\wDepth\sunfletrec{0}$} (br);
\end{tikzpicture}

\begin{tikzpicture}[>=stealth]
\confDiaMatrix{
\node(tl){$\letrec\abindgroup{\letrec\bbindgroup{\letrec\cbindgroup\allter}}$};&
\node(tr){$\letrec\abindgroup{\letrec{\bbindgroup~\cbindgroup}\allter}$};\\
\node(bl){$\letrec{\abindgroup~\bbindgroup}{\letrec\cbindgroup\allter}$};&
\node(br){$\letrec{\abindgroup~\bbindgroup~\cbindgroup}\allter$};\\
};
\draw[->       ](tl) to node[above]{$\wDepth\sunfletrec{0}$}   (tr);
\draw[->       ](tl) to node[left ]{$\wDepth\sunfletrec{1}$}   (bl);
\draw[->,dotted](tr) to node[right]{$\wDepth\sunfletrec{0}$}   (br);
\draw[->,dotted](bl) to node[below]{$\wDepth\sunfletrec{0}$}   (br);
\end{tikzpicture}

\begin{tikzpicture}[>=stealth]
\confDiaMatrix{
\node(tl){$\letrec\abindgroup{\letrec{}\allter}$};&
\node(tr){$\letrec\abindgroup\allter$};\\
\node(bl){$\letrec\abindgroup\allter$};&
\node(br){$\letrec\abindgroup\allter$};\\
};
\draw[->       ](tl) to node[above]{$\wDepth\sunfletrec{0}$} (tr);
\draw[->       ](tl) to node[left ]{$\wDepth\sunfnil{1}$}    (bl);
\draw[double   ](tr) to                                      (br);
\draw[double   ](bl) to                                      (br);
\end{tikzpicture}
\begin{tikzpicture}[>=stealth]
\confDiaMatrix{
\node(tl){$\letrec\abindgroup{\letrec\bbindgroup\allter}$};&
\node(tr){$\letrec{\abindgroup~\bbindgroup}\allter$};\\
\node(bl){$\letrec\abindgroup{\letrec{\bbindgroup'}\allter}$};&
\node(br){$\letrec{\abindgroup~\bbindgroup'}\allter$};\\
};
\draw[->       ](tl) to node[above]{$\wDepth\sunfletrec{0}$}   (tr);
\draw[->       ](tl) to node[left ]{$\wDepth\sunfreduce{1}$} (bl);
\draw[->,dotted](tr) to node[right]{$\wDepth\sunfreduce{0}$}   (br);
\draw[->,dotted](bl) to node[below]{$\wDepth\sunfletrec{0}$}   (br);
\end{tikzpicture}

\vspace{3ex}
\partitle{Diagrams for joining nested redexes}\\
\begin{tikzpicture}[>=stealth]
\confDiaMatrix{
\node(tl){$\letrec\abindgroup{\labs\avar\allter}$};&
\node(tr){$\labs\avar{\letrec\abindgroup\allter}$};\\
\node(bl){$\letrec{\abindgroup'}{\labs\avar{\allter'}}$};&
\node(br){$\labs\avar{\letrec{\abindgroup'}{\allter'}}$};\\
};
\draw[->       ](tl) to            node[above]{$\wDepth\sunflabs{0}$} (tr);
\draw[->       ](tl) to            node[left ]{$\wDepth\brulename{e}$}    (bl);
\draw[->,dotted](tr) to            node[right]{$\wDepth\brulename{e}$}    (br);
\draw[->,dotted](bl) to            node[below]{$\wDepth\sunflabs{0}$} (br);
\end{tikzpicture}
\begin{tikzpicture}[>=stealth]
\confDiaMatrix{
\node(tl){$\letrec\abindgroup{\lapp{\allteri0}{\allteri1}}$};&
\node(tr){$\lapp{(\letrec\abindgroup{\allteri0})}{(\letrec\abindgroup{\allteri1})}$};\\
\node(bl){$\letrec{\abindgroup'}{\lapp{\allteri0'}{\allteri1'}}$};&
\node(br){$\lapp{(\letrec{\abindgroup'}{\allteri0'})}{(\letrec{\abindgroup'}{\allteri1'})}$};\\
};
\draw[->       ](tl) to            node[above]{$\wDepth\sunflapp{0}$} (tr);
\draw[->       ](tl) to            node[left ]{$\wDepth\brulename{e}$}    (bl);
\draw[->,dotted](tr) to node{$= $} node[right]{$\wDepth\brulename{e}$}    (br);
\draw[->,dotted](bl) to            node[below]{$\wDepth\sunflapp{0}$} (br);
\end{tikzpicture}

\begin{tikzpicture}[>=stealth]
\confDiaMatrix{
\node(tl){$\letrec\abindgroup{\arecvar}$};&
\node(tr){$\letrec\abindgroup{\allter}$};\\&
\node(mr){$\letrec{\abindgroup'}{\allter}$};\\
\node(bl){$\letrec{\abindgroup'}{\arecvar}$};&
\node(br){$\letrec{\abindgroup'}{\allter'}$};\\
};
\draw[->       ](tl) to                          node[above]{$\wDepth\sunfrec{0}$} (tr);
\draw[->       ](tl) to                          node[left ]{$\wDepth\brulename{e}$} (bl);
\draw[->,dotted](tr) to                          node[right]{$\wDepth\brulename{e}$} (mr);
\draw[->,dotted](mr) to node[at end, right]{$=$} node[right]{$\wDepth\brulename{e}$} (br);
\draw[->,dotted](bl) to                          node[below]{$\wDepth\sunfrec{0}$} (br);
\end{tikzpicture}
\begin{tikzpicture}[>=stealth]
\confDiaMatrix{
\node(tl){$\letrec\abindgroup{\letrec\bbindgroup\allter}$};&
\node(tr){$\letrec{\abindgroup~\bbindgroup}\allter$};\\
\node(bl){$\letrec{\abindgroup'}{\letrec{\bbindgroup'}{\allter'}}$};&
\node(br){$\letrec{\abindgroup'~\bbindgroup'}{\allter'}$};\\
};
\draw[->       ](tl) to            node[above]{$\wDepth\sunfletrec{0}$} (tr);
\draw[->       ](tl) to            node[left ]{$\wDepth\brulename{e}$}      (bl);
\draw[->,dotted](tr) to            node[right]{$\wDepth\brulename{e' ~~~~~~ e' \in \{e-1,e\}}$}      (br);
\draw[->,dotted](bl) to            node[below]{$\wDepth\sunfletrec{0}$} (br);
\end{tikzpicture}

\begin{tikzpicture}[>=stealth]
\confDiaMatrix{
\node(tl){$\letrec{}\allter$};&
\node(tr){$\allter$};\\
\node(bl){$\letrec{}{\allter'}$};&
\node(br){$\allter'$};\\
};
\draw[->       ](tl) to            node[above]{$\wDepth\sunfnil{0}$} (tr);
\draw[->       ](tl) to            node[left ]{$\wDepth\brulename{e}$}   (bl);
\draw[->,dotted](tr) to            node[right]{$\wDepth\brulename{e-1}$}   (br);
\draw[->,dotted](bl) to            node[below]{$\wDepth\sunfnil{0}$} (br);
\end{tikzpicture}
\begin{tikzpicture}[>=stealth]
\confDiaMatrix{
\node(tl){$\letrec\abindgroup\allter$};&
\node(tr){$\letrec{\abindgroup^\TR}\allter$};\\
\node(bl){$\letrec{\abindgroup^\BL}{\allter'}$};&
\node(br){$\letrec{\abindgroup^\BR}{\allter'}$};\\
};
\draw[->       ](tl) to            node[above]{$\wDepth\sunfreduce{0}$} (tr);
\draw[->       ](tl) to            node[left ]{$\wDepth\brulename{e}$}      (bl);
\draw[->,dotted](tr) to node[at end,right]{$=$} node[right]{$\wDepth\brulename{e}$}      (br);
\draw[->,dotted](bl) to            node[below]{$\wDepth\sunfreduce{0}$} (br);
\end{tikzpicture}
\end{proof}


\end{document}